\newcommand{\argmin}{\operatornamewithlimits{argmin}}
\newcommand{\argmax}{\operatornamewithlimits{argmax}}
\newtheorem{remark}{\it Remark}[chapter]
\newtheorem{theorem}{Theorem}[chapter]
\newtheorem{lemma}{Lemma}[chapter]
\newtheorem{proposition}{Proposition}[chapter]
\newtheorem{definition}{Definition}[chapter]
\newcommand{\mb}[1]{\mathbf{#1}}
\newtheorem{prop}{Proposition}[chapter]
\newcommand{\beq}{\begin{equation}}
\newcommand{\eeq}{\end{equation}}
\newcommand{\beqn}{\begin{eqnarray}}
\newcommand{\eeqn}{\end{eqnarray}}
\newcommand{\beqno}{\begin{eqnarray*}}
\newcommand{\eeqno}{\end{eqnarray*}}
\newcommand{\bma}{\begin{displaymath}}
\newcommand{\ema}{\end{displaymath}}
\newcommand{\bnu}{\begin{enumerate}}
\newcommand{\enu}{\end{enumerate}}
\newcommand{\bce}{\begin{center}}
\newcommand{\ece}{\end{center}}
\newcommand{\btb}{\begin{tabular}}
\newcommand{\etb}{\end{tabular}}
\begin{document}
\shorthandoff{:} 
\shorthandoff{;} 
\shorthandoff{?} 
\shorthandoff{!} 

\pagesprelim
\PrenomNom{VU N. HA}
\titre{RADIO RESOURCE MANAGEMENT FOR HIGH-SPEED WIRELESS CELLULAR NETWORKS} 
\programme{t\'el\'ecommunications}
\grade{\textit{Doctorat en philosophie}, Ph.D.} 
\jury{Examinateur externe &  Prof. Fran\c{c}ois Gagnon\\
	& \textit{\'{E}cole de Technologie Sup\'{e}rieure} \\[0.2cm]
	& Prof. Fei Richard Yu \\
	& \textit{Carleton University} \\ [0.5cm]
	
	Examinateur interne & Prof. Andr\'e Girard \\
	& \textit{INRS-\'EMT}  \\[0.5cm]
	Directeur de recherche & Prof. Long Bao Le \\
	& \textit{INRS-\'EMT} \\[0.5cm]}
\annee{2016}

\maketitle

\cleardoublepage
\begin{flushright}
{~}\\[1in]
{\emph{To my Parents}} \\
{\emph{To my wife Nguy\~{\^{e}}n Ng\d{o}c Qu\`{y}nh Duy\^{e}n}} \\
{\emph{To my son H\`{a} Ki\'{\^{e}}n V\u{a}n}}
\end{flushright}

\chapter*{Acknowledgments}
\addcontentsline{toc}{chapter}{Acknowledgments} 

\large

I would like to gratefully acknowledge and express a sincere thank you to my supervisor, Professor Long Bao Le for giving me the opportunity to pursue doctoral study at INRS-\'{E}MT, University of Qu\'{e}bec. 
I am truly privileged to have learned from his remarkable technical knowledge and research enthusiasm. 
Since the very beginning, he has always pointed me in good research directions and encouraged me to pursue them to concrete results. 
His invaluable support and guidance during my study have certainly helped me complete this Ph.D. dissertation. 
I would like to express my gratitude to other members of my Ph.D. committee -- 
Professor Andr\'e Girard of INRS-\'{E}MT, University of Qu\'{e}bec who has regularly
reviewed and constructively commented on the progress of my doctoral study.
I would also like to thank Professor Fran\c{c}ois Gagnon of \'{E}cole de Technologie Sup\'{e}rieure and Professor Fei Richard Yu of Carleton University for serving as the external examiner to my Ph.D. dissertation.

I would like to express gratitude to all my colleagues for the wonderful and memorable time at the Networks and Cyber Physical Systems Lab (NECPHY-Lab), INRS-\'{E}MT, University of Qu\'{e}bec: 
Tan Le, Tuong Hoang, Hieu Nguyen, Dai Nguyen, Tam Tran, Thinh Tran, and Ti Nguyen. 
Additionally, many thanks to 
Dr. Quoc-Thai Ho and Mr. Badreddine Ben Nouma for helping me with the French translation
to my Ph.D. dissertation.

Finally, my deepest love and gratitude are devoted to all of my family members:
Mom and Dad, beloved wife and son, brother and sister-in-laws Quy\^{e}n, Parents-in-laws, sister-in-laws H\`{\u{a}}ng, brother-in-laws Khoa, who always support me in each and every endeavor in my life.
My Ph.D. study would not be finished without the constant and unconditional support from my family.
I thank you all and hope that I made you proud of my accomplishments. 
\chapter*{Abstract}
\addcontentsline{toc}{chapter}{Abstract} 

\renewcommand{\baselinestretch}{1.4}
The fifth-generation (5G) wireless cellular system, which would be deployed by 2020, is expected to deliver significantly higher
capacity and better network performance compared to those of the current fourth-generation (4G) system.
Specifically, it is predicted that tens of billions of wireless devices will be connected to the wireless network over next few years, which 
results in an exponential explosion of mobile data traffic. 
Therefore, more advanced wireless architecture, as well as radical and innovative access technologies,
must be proposed to meet this urgent increasing growth of mobile data and connectivity requirements in the coming years.
Toward this end, two important wireless cellular architectures, namely
wireless heterogeneous networks (HetNets) based on the dense deployment of small cells and the
cloud radio access networks (C-RANs) have been proposed and actively studied by
both academic and industry communities. 
Besides enabling a lot of advantages in increasing network coverage as well as end-to-end system throughput, these two novel network architectures have also raised some novel technical challenges and opened exciting research areas
for further research.

Motivated by the aforementioned technical challenges, the general
objective of this Ph.D. research is to develop efficient radio resource allocation and interference management
algorithms for the future high-speed wireless cellular networks. In particular, we 
have developed adaptive resource management techniques that can effectively control both critical interferences in wireless HetNets
and designed innovative access techniques for the C-RAN 
that efficiently exploit the radio, cloud computation resources, and fronthaul capacity.  
Our research has resulted in four major
research contributions, which are presented in four corresponding main chapters
of this dissertation.

First, we consider the joint base station association and power control design for single-carrier-based HetNets, which is presented in Chapter \ref{Ch3}. 
In particular, we have developed a generalized BSA and PC algorithm and proved its convergence if the underlying power update function satisfies the so-called two-sided-scalable property. In addition, we have proposed an hybrid power control adaptation algorithm that effectively adjusts key design parameters to maintain the SINR requirements of all users 
whenever possible while enhancing the system throughput. 

Second, we study fair resource allocation design with subcarrier assignment and power control for OFDMA-based HetNets, which is presented in Chapter \ref{Ch4}.
Specifically, we have presented a resource allocation formulation for the two-tier macrocell-femtocell network that aims to maximize the total minimum rate of all femtocell 
subject to QoS protection constraints for macrocell users. We have proposed a low-complexity distributed joint subchannel and power allocation algorithm where FBSs can make subchannel allocations for femto user equipments in the distributed manner.

Third, we design the joint cooperative transmission protocol for the downlink C-RAN, which is covered in Chapter \ref{Ch5}. 
The considered system captures the fact that fronthaul links connecting remote radio heads with cloud processing center have limited
capacity, which is translated into the new fronthaul capacity constraint involving a non-convex and discontinuous function. 
Then, we propose two low-complexity algorithms, so-called pricing-based algorithm, and iterative linear-relaxed algorithm, to duel with this difficult problem.

Finally, we consider the resource allocation for virtualized uplink C-RAN in which multiple OPs are assumed to 
share the C-RAN infrastructure and resources to serve their users under the limited fronthaul capacity and cloud computation.
The research outcomes of this study are presented in Chapter \ref{Ch8}.
In particular, we model the design into the upper-level and lower-level problems. 
The upper-level problem focuses on slicing the fronthaul capacity and cloud computing resources for all OPs. 
Then, the lower-level maximizes the operator's sum rate by optimizing users' transmission rates and quantization bit allocation 
for the compressed I/Q baseband signals. 
Then, a two-stage algorithmic framework is proposed to solve these problems.

We have developed various efficient resource allocation algorithms for reducing the transmission power 
and increasing the end-to-end network throughput for both HetNets and C-RANs.
Furthermore, extensive numerical
results are presented to gain further insights and to evaluate the performance of
our resource allocation designs. The numerical results confirm
that our proposed protocols can achieve efficient spectrum utilization, power saving, and
significant performance gains compared to existing and fast greedy designs.

\cleardoublepage
\phantomsection
\addcontentsline{toc}{chapter}{\contentsname} 
\tableofcontents

\cleardoublepage
\phantomsection
\renewcommand*\listfigurename{List of Figures} 
\addcontentsline{toc}{chapter}{\listfigurename} 
\listoffigures 

\cleardoublepage  
\phantomsection                                   
\addcontentsline{toc}{chapter}{\listtablename} 
\listoftables 

\cleardoublepage
\phantomsection
\renewcommand*\listalgorithmname{List of Algorithms} 
\addcontentsline{toc}{chapter}{\listalgorithmname} 
\listofalgorithms

\cleardoublepage 
\phantomsection
\renewcommand{\nomname}{List of Abbreviations}
\addcontentsline{toc}{chapter}{\nomname} 
\makenomenclature
\setlength\nomlabelwidth{2cm} 
\printnomenclature

\cleardoublepage
\corps
\renewcommand{\tablename}{Tableau} 

\chapter{Extended Summary}
\section{Background and Motivation}

The fifth-generation (5G) wireless cellular system, which would be deployed by 2020, is expected to deliver significantly higher
capacity and better network performance compared to those of the current fourth-generation (4G) system. This is required to meet various practical technical
and service challenges. 
Specifically, it is predicted that tens of billions of wireless devices will be connected to the wireless network over next few years.
Together with the increasing number of connections, the amount of mobile data traffic has been exploding at an exponential rate. 
Therefore, the 5G wireless mobile network should be able to support
the data traffic volume of an order of magnitude larger than that in the current wireless network \cite{Zander13,Boccardi14,Bhushan14,Le_EU15}.
\nomenclature{5G}{Fifth-Generation}
\nomenclature{4G}{Fourth-Generation}

Therefore, more advanced wireless architecture, as well as radical and innovative access technologies, must be proposed to 
meet the exponential growth of mobile data and connectivity requirements in the coming years \cite{Dottling09, 
Lopez-Perez11, 3GPP,cisco13,cisco16}. 
Toward this end, two important wireless cellular architectures, namely wireless heterogeneous networks (HetNets) based on
the dense deployment of small cells and the cloud radio access networks (Cloud-RAN or C-RAN) have been proposed and actively 
studied by both academic and industry communities. These two terms Cloud-RAN and C-RAN
 will be used changeably in the sequel. This doctoral dissertation focuses on the radio resource management 
designs for these two potential wireless network architectures.
\nomenclature{HetNets}{Wireless Heterogeneous Networks}
\nomenclature{C-RAN}{Cloud Radio Access Network}

The wireless HetNet is typically based on the dense deployment of small cells such as femtocells and picocells in coexistence
with existing macrocells \cite{Chandrasekhar08,Claussen08,Kim09,Zhang_bk_10} where small cells
have short communications range, low power, and low cost. 
Massive deployment of small cells can fundamentally improve the indoor throughput and coverage performance of
the wireless cellular network \cite{Andrews12,Le12}. Moreover,
certain small cells such as femtocells can be randomly deployed by end user equipments (UEs) and they operate on the same
frequency band with the existing macrocell to enhance the spectrum utilization.
\nomenclature{UE}{User Equipment}
 
In addition, indoor traffic supported by femtocells can be backhauled through the IP connections
such as DSL to reduce the traffic load of the macrocells. 
The macrocell tier can thus dedicate more radio resources to better service outdoor UE.
Moreover, femtocells, which can be deployed by end users in a plug-and-play manner, usually require low capital expenditure and operating cost.
However, dense deployment of femtocells on the same frequency band with the macrocells poses various technical challenges.
First, the strong cross-tier interferences between macrocells and femtocells may occur, which may severely impact the network performance
if not managed properly \cite{Yavuz09}. Second, macro UEs (MUEs) typically have higher priority
in accessing the radio spectrum compared to the femto UEs (FUEs); therefore, the QoS requirements of MUEs must be protected and maintained. 
Therefore, the cross-tier interference induced by FUEs to the macrocell tier must be appropriately controlled. 
Lastly, dynamic and intelligent access control strategies must be developed to efficiently manage the 
network interference and traffic load balancing \cite{roche10}.
\nomenclature{IP}{Internet Protocol}
\nomenclature{DSL}{Digital Subscriber Lines}
\nomenclature{MUE}{Macro User Equipment}
\nomenclature{FUE}{Femto User Equipment}

The C-RAN architecture aims at exploiting the cloud computing infrastructure to realize various network functions and protocols
\cite{chinamobile2011,NGMN2013,maketresearch2013}. The general C-RAN architecture consists of three main 
components, namely  (i) centralized processors or baseband unit (BBU) pool, (ii) the optical 
transport network (i.e., fronthaul (fronthaul) links), and  (iii) remote radio head (RRH) access units with antennas located at remote sites. 
The cloud processing center comprising a large number of BBUs is the heart of this architecture where
BBUs operate as virtual base stations (BSs) to process baseband signals for UEs and optimize the radio resource allocation. 
In general, RRHs can be relatively simple, which can be spatially distributed over the network for more energy and cost-efficient network deployment.
\nomenclature{BBU}{Baseband Unit}
\nomenclature{FH}{Fronthaul}
\nomenclature{BS}{Base Station}
\nomenclature{RRH}{Remote Radio Head}

Additionally, the centralized processing enables to implement sophisticated physical-layer and radio resource management designs
such as the coordinated multi-point (CoMP) transmission and reception techniques proposed in the LTE wireless standard,
efficient clustering design for RRHs to balance between the network capacity enhancement and design complexity \cite{Costa-Perez_CM13, Liang_CST15}.
Other benefits of C-RAN include reduced backhaul and network core traffic, deployment and operation costs,
enhanced service quality and differentiation to effectively support different wireless applications \cite{mob_rep_13,chinamobile2011}. 
Successful deployment of C-RAN, however, requires us to resolve several major technical challenges \cite{Costa-Perez_CM13,Liang_CST15}. 
In particular, efficient techniques and solutions for advanced signal processing and radio resource management must be developed
to achieve the potential network performance enhancement while accounting for constraints and efficient utilization of fronthaul 
capacity and cloud computing resources. 
\nomenclature{CoMP}{Coordinated Multi-Point Transmission or Reception}
\nomenclature{LTE}{Long-Term Evolution}

\section{Research Contributions}

Motivated by the aforementioned technical challenges, the general
objective of this Ph.D. research is to develop efficient radio resource allocation and interference management
algorithms for the future high-speed wireless cellular networks. In particular, we 
have developed adaptive resource management techniques that can effectively control both co-tier
and cross-tier interferences in wireless HetNets and designed innovative access techniques for the C-RAN 
that efficiently exploit the radio, cloud computation resources, and fronthaul capacity.  
The contributions of this Ph.D. research are described in the following.

\subsection{Base Station Association And Power Control for Single-carrier-based Wireless HetNets}

In this contribution, we develop the joint base station association (BSA) and power control (PC) techniques for the multi-tier single-carrier-based HetNets. 
In the HetNet setting, designs of the PC and BSA techniques are challenging since UEs in different network tiers (e.g., macrocell and femtocell) 
have distinct access priorities as well as QoS requirements.
There are several existing works that proposed advanced interference management solutions for HetNets
by using dynamic PC and BSA \cite{jo09,chand09,madan10,yun11,Le12a,yanzan12}. 
However, most of these works considered the closed access mode where MUEs are not allowed to connect with femto BSs (FBSs).
In addition, joint consideration of QoS guarantees and efficient utilization of network resources were not
thoroughly treated. Our current design aims at addressing these limitations where we make the following contributions.
\nomenclature{BSA}{Base Station Association}
\nomenclature{PC}{Power Control}
\nomenclature{QoS}{Quality of Service}
\nomenclature{FBS}{Femto Base Station}
\begin{itemize}
\item We develop a generalized BSA and PC algorithm for multi-tier HetNets and prove its convergence if the power update 
function satisfies the two-sided-scalable (2.s.s.) property. \nomenclature{2.s.s.}{Two-Sided-Scalable}
\item We propose an hybrid adaptation algorithm that adjusts the key design parameters to support the differentiated signal-to-interference-plus-noise ratio (SINR) requirements of all UEs whenever possible while enhancing the system throughput. \nomenclature{SINR}{Signal-to-Interference-plus-Noise Ratio}
\item This proposed algorithm is proved to achieve better performance than some existing state-of-the-art algorithms.
\item We describe how to extend the proposed framework to enable hybrid access design in two-tier macrocell-femtocell networks.
\end{itemize}

\subsubsection{System Model}
\label{section2_r3}

We consider uplink (UL) communications in a heterogeneous wireless cellular network where $K$ BSs serve $M$ UEs on the same spectrum by using CDMA.
\nomenclature{UL}{Uplink}
\nomenclature{CDMA}{Code-Division Multiple Access}
Assume that each UE $i$ communicates with only one BS at any time, which is denoted as $b_i$.
However, UEs can change their associated BSs over time. 
The SINR of UE $i$ at BS $b_i$ can be written as \cite{Alpcan08}
\begin{equation}
\label{eq:sinr_r3}
\Gamma_i(\mathrm{p})=\dfrac{G h_{b_i i} p_i}{\sum_{j \neq i}{h_{b_ij}p_j}+\eta_{b_i}}
=\dfrac{p_i}{I_i \left( \mathrm{p}, b_i\right) }
\end{equation}
where $h_{k i}$ denotes the channel gain between BS $k$ and UE $i$, $p_i$ is the transmission power of UE $i$, $I_i \left( \mathrm{p},k\right) \triangleq  \dfrac{\sum_{j\neq i}{h_{kj}p_j}+\eta_{b_i}}{G g_{kii}}$ is the effective interference, and $\mathrm{p}=[p_1,...,p_M]$.
We will sometimes write $I_i \left( \mathrm{p}\right)$ instead of $I_i \left( \mathrm{p},b_i\right)$.
We assume that UE $i$ requires the minimum QoS in terms of a target SINR $\bar{\gamma}_i$ as
\begin{equation}
\label{equ:SINR_cond_r3}
\Gamma_i(\mathrm{p}) \geq \bar{\gamma}_i, \:\: i \in \mathcal{M}.
\end{equation}
The objective of this contribution is to develop distributed BSA and PC algorithms that can maintain the SINR 
requirements in (\ref{equ:SINR_cond_r3}) (whenever possible) while 
exploiting the multiuser diversity gain to increase the system throughput.    
The proposed algorithms, therefore, aim to support both voice and high-speed data applications where the voice UEs would typically require some fixed target SINR $\bar{\gamma}_i$ while data UEs would
seek to achieve higher target SINR $\bar{\gamma}_i$ to support their broadband applications.

\subsubsection{Generalized Base Station Association and Power Control Algorithm}
\label{sec:gnrl_PC_BAS}
We first develop a general BSA and PC algorithm.
Specifically, we will focus on a general iterative PC algorithm where
each UE $i$ performs the following power update function (puf) $p_i^{(n+1)} := J_i(\mathrm{p}^{(n)})=J'_i(I_i^{(\mathit{n})}(\mathrm{p}^{(n)}))$ where
$n$ denotes the iteration index and $J_i(.)$, $J'_i(.)$ is the puf. \nomenclature{puf}{Power Update Function}
In fact, this kind of PC algorithm converges if we can prove 
that its corresponding puf is 2.s.s. \cite{sung05, sung06}.
In particular, we propose a joint BSA and PC algorithm as summarized in Algorithm~\ref{alg:gms2_r3}. 
Under this design, each UE chooses a BS that results in minimum effective interference and employs the p.u.f to update its power.
This algorithm ensures that each UE experiences low effective interference and therefore high throughput at convergence.
In general, the performance of a PC algorithm depends on how we design the corresponding 2.s.s. puf $\mathrm{J}(\mathrm{p})$.
\renewcommand{\baselinestretch}{0.9}
\small
\begin{algorithm}
\caption{\textsc{Minimum Effective Interference BS Association and Power Control Algorithm}}
\label{alg:gms2_r3}
\begin{algorithmic}[1]
\STATE Initialization:

- $p_i^{(0)}=0$ for all UE $i$, $i \in \mathcal{M}$.

- $b_i^{(0)}$ is set as the nearest BS of UE $i$.

\STATE Iteration $n$: Each UE $i$ ($ i \in \mathcal{M}$) performs the following:

- Calculate the effective interference at BS $\mathit{k} \in D_i$ as follows:

- Choose the BS $b_i^{(n)}$ with the minimum $I_i^{(n)} ( \mathrm{p}^{(n-1)},\mathit{k})$, i.e., 
$ b_i^{(n)}=\mathrm{argmin}_{\mathit{k} \in D_i} I_i^{(n)} ( \mathrm{p}^{(n-1)},\mathit{k})$. Then, we have $ I_i^{\sf o (\mathit{n})}(\mathrm{p}^{(n-1)}) = \mathrm{min}_{\mathit{k} \in D_i} I_i^{(n)} ( \mathrm{p}^{(n-1)},\mathit{k}) = I_i^{(n)} ( \mathrm{p}^{(n-1)},b_i^{(n)})$.

- Update the transmit power for the chosen BS as follows:
\begin{equation}
\label{p}
p_i^{(n)}=J'_i(I_i^{\sf o (\mathit{n})}(\mathrm{p}^{(n-1)})). 
\end{equation}
\vspace{-0.3cm}
\STATE Increase $n$ and go back to step 2 until convergence.
\end{algorithmic}
\end{algorithm}
\renewcommand{\baselinestretch}{1.4}
\normalsize

\subsubsection{Proposed HPC Algorithm}

The hybrid PC (HPC) algorithm will be now developed to take advantages of both target-SINR-tracking PC (TPC) \cite{foschini93, yates95} and opportunistic 
PC (OPC) \cite{sung05, sung06}.
\nomenclature{HPC}{Hybrid Power Control}
\nomenclature{TPC}{Target-SINR-Tracking Power Control}
\nomenclature{OPC}{Opportunistic Power Control}
Specifically, the HPC algorithm can be described by the following iterative power update
\begin{equation} \label{hpcrule}
p_i^{\left(n+1\right)} = J^{\mathsf{HPC}}_i \left( \mathrm{p}^{(n)}\right) = \mathrm{min}\left\lbrace P^{\mathsf{max}}_i, J_i\left( \mathrm{p}^{(n)}\right) \right\rbrace,
\end{equation}
where $J_i\left( \mathrm{p}\right) \triangleq \dfrac{\alpha_i\xi_i I_i\left( \mathrm{p}\right)^{-1}+ \bar{\gamma}_i I_i\left( \mathrm{p}\right)}{\alpha_i + 1}$, $P^{\mathsf{max}}_i$ is the transmission power budget of UE $i$, and $\xi_i=P_i^{\mathsf{max}2}/\bar{\gamma}_i$. 
Here, HPC algorithm is presented in a general form with parameters $\alpha_i$. 
Specifically, for $\alpha_i=0$ the HPC becomes the TPC algorithm where each UE $i$ aims to achieve its target SINR.
When $\alpha_i \rightarrow \infty$, each UE $i$ attempts to achieve higher SINR (if it is
in favorable condition) and HPC becomes OPC. 
The convergence of the proposed HPC algorithm can be proved by showing that its puf is 2.s.s. as stated in Theorem 2 of [J1].
 
\subsubsection{Two-Time-Scale Adaptive Algorithm}
\label{sec:Gnrl_Time_Scal_Sepa}
To complete our design objective using Algorithm~\ref{alg:gms2_r3} and HPC, we develop an adaptive algorithm as follows. 
First, we name UEs whose SINRs are greater than or equal to its target SINR as supported UEs, and the others as non-supported. 
Note also that UE $i$ is non-supported if $I_i \left( \mathrm{p}\right) > I^{\mathsf{thr}}_i=P^{\mathsf{max}}_i/\bar{\gamma}_i$, and vice versa. 
In addition, $\Gamma_i(\mathrm{p}) \geq \bar{\gamma}_i$ if $\alpha_i >0$ and $I_i(\mathrm{p}^*) < I^{\mathsf{thr}}_i$. 
We refer to such a supported UE as a potential one which can reduce its power by adjusting its parameter $\alpha_i$ to help 
non-supported UEs enhance their target SINRs. We exploit this fact to develop the HPC adaptation algorithm, which is described in Algorithm \ref{alg:gms3_r3}.

\renewcommand{\baselinestretch}{0.9}
\small
\begin{algorithm}
\caption{\textsc{HPC Adaptation Algorithm}}
\label{alg:gms3_r3}
\begin{algorithmic}[1]
\STATE Initialization: Set $\mathrm{p}^{(0)}=0$, $\Delta^{(0)}$ as $\alpha_i^{(0)}=0$ for voice UE and $\alpha_i^{(0)}=\alpha_0$ ($\alpha_0 \gg 1$) for data UE, $\overline{N}^{\ast}=|\overline{\mathcal{U}}^{(0)}|$ and $\Delta^{\ast}=\Delta^{(0)}$.

\STATE Iteration $l$: 

- Run HPC algorithm until convergence with $\Delta^{(l)}$.

- If $|\overline{\mathcal{U}}^{(l)}|>\overline{N}^{\ast}$, set $\overline{N}^{\ast}=|\overline{\mathcal{U}}^{(l)}|$ and $\Delta^{\ast}=\Delta^{(l)}$.

- If $\underline{\mathcal{U}}^{(l)} = \varnothing$ or $\overline{\mathcal{U}}^{(l)} = \varnothing$, then go to step 4.

- If $\underline{\mathcal{U}}^{(l)} \neq \varnothing$ and $\overline{\mathcal{U}}^{(l)} \neq \varnothing$, then run the \textit{``updating process''} as follows:

$\;$ $\;$ a: For UE $i \in \underline{\mathcal{U}}^{(l)}$, set $\alpha_i^{(l+1)}=\alpha_i^{(l)}$.

$\;$ $\;$ b: For UE $i \in \overline{\mathcal{U}}^{(l)}$, set $\alpha_i^{(l+1)}$ so that 

$\;$ $\;$ $\;$ i) $\alpha_i^{(l)} > \alpha_i^{(l+1)} \geq 0$ if $\alpha_i^{(l)}>0$.

$\;$ $\;$ $\;$ ii) $\alpha_i^{(l+1)} = \alpha_i^{(l)}$ if $\alpha_i^{(l)}=0$.

\STATE Increase $l$ and go back to step 2 until there is no update request for $\Delta^{(l)}$. 
\STATE Set $\Delta := \Delta^{\ast}$ and run the HPC algorithm until convergence.
\end{algorithmic}
\end{algorithm}
\renewcommand{\baselinestretch}{1.4}
\normalsize

Let $\overline{\mathcal{U}}^{(l)}$ and $\underline{\mathcal{U}}^{(l)}$ be the sets of supported and non-supported UEs in iteration $l$, respectively.
We use $\overline{N}^{\ast}$ to keep the number of supported UEs during the course of the algorithm.
The \textit{``updating process''} is designed in such a way that all parameters $\alpha_i$ of supported UEs tend to zero
if all non-supported UEs cannot be saved. The proposed HPC adaptation algorithm achieve the following desirable performance
as characterized in the following theorem. 

\begin{theorem}
\label{thm04}
Let $\overline{N}_{\sf HPC}$ and $\overline{N}_{\sf TPC}$ be the numbers of supported UEs due to the proposed HPC adaptation algorithm and the TPC algorithm, respectively. Then, we have

- $\overline{N}_{\sf HPC} \geq \overline{N}_{\sf TPC}$.

- If the network is feasible (i.e., all SINR requirements can be fulfilled by the TPC algorithm), then all UEs achieve their target SINRs by using HPC adaptation algorithm; and there exist feasible UEs who achieve SINRs higher than the target values under the HPC adaptation algorithm.
\end{theorem}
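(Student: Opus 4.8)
The strategy is to reduce the analysis to the behaviour of the effective interference $I_i(\cdot)$ at the HPC fixed point and then to compare the configuration $\Delta^{\ast}$ returned by Algorithm~\ref{alg:gms3_r3} with the all-zero (TPC) configuration. Since $J^{\sf HPC}$ is 2.s.s.\ (Theorem~2 of [J1]), for every parameter vector $\Delta=(\alpha_i)_{i\in\mathcal{M}}$ the iteration~(\ref{hpcrule}) converges to a unique fixed point $\mathrm{p}^{\ast}(\Delta)$, and the first task is to characterise, at such a point, when a UE is supported. Each UE $i$ lies in one of two regimes: either $p_i^{\ast}=P_i^{\sf max}$, so that $\Gamma_i(\mathrm{p}^{\ast})=P_i^{\sf max}/I_i(\mathrm{p}^{\ast})$; or $p_i^{\ast}=J_i(\mathrm{p}^{\ast})<P_i^{\sf max}$, in which case substituting $\xi_i=(P_i^{\sf max})^2/\bar{\gamma}_i$ yields $\Gamma_i(\mathrm{p}^{\ast})=\bigl(\alpha_i\xi_i I_i(\mathrm{p}^{\ast})^{-2}+\bar{\gamma}_i\bigr)/(\alpha_i+1)$. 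A short computation then shows that $\Gamma_i(\mathrm{p}^{\ast})\ge\bar{\gamma}_i$ if and only if $I_i(\mathrm{p}^{\ast})\le I_i^{\sf thr}$, and that the inequality is strict precisely when $\alpha_i>0$ and $I_i(\mathrm{p}^{\ast})<I_i^{\sf thr}$ --- the ``potential'' UEs. Specialising to $\alpha_i\equiv 0$ recovers that the TPC fixed point $\mathrm{q}^{\ast}$ has $\Gamma_i(\mathrm{q}^{\ast})=\bar{\gamma}_i$ for every supported UE and $p_i^{\ast}=P_i^{\sf max}$ (hence $I_i(\mathrm{q}^{\ast})\ge I_i^{\sf thr}$) for every non-supported UE.

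For the first assertion I would compare $\Delta^{\ast}$ with TPC by a fixed-point domination argument. Let $R\subseteq\mathcal{M}$ be the set of TPC-supported UEs, $|R|=\overline{N}_{\sf TPC}$, and let $\widehat{\Delta}$ be the configuration that keeps $\alpha_i=0$ on $R$ while switching the UEs of $\mathcal{M}\setminus R$ to OPC-type updates ($\alpha_i=\alpha_0$). The seed of the comparison is the observation above that at $\mathrm{q}^{\ast}$ every UE of $\mathcal{M}\setminus R$ has $I_i(\mathrm{q}^{\ast})\ge I_i^{\sf thr}$, so its HPC update at $\mathrm{q}^{\ast}$ returns a power no larger than its TPC power $P_i^{\sf max}$, whereas the UEs of $R$ keep the very same ($\alpha_i=0$) update; hence the HPC map for $\widehat{\Delta}$ sends $\mathrm{q}^{\ast}$ to a point $\le\mathrm{q}^{\ast}$ componentwise. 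Propagating this domination through the iteration (this is where 2.s.s.\ and the power cap are used; see below) gives $\mathrm{p}^{\ast}(\widehat{\Delta})\le\mathrm{q}^{\ast}$, so $I_i(\mathrm{p}^{\ast}(\widehat{\Delta}))\le I_i(\mathrm{q}^{\ast})\le I_i^{\sf thr}$ for every $i\in R$; thus $R$ is still supported and $\overline{N}(\widehat{\Delta})\ge\overline{N}_{\sf TPC}$. Finally, since the updating process only lowers the parameters of supported UEs, the configurations visited by Algorithm~\ref{alg:gms3_r3} drift towards configurations of the type $\widehat{\Delta}$ (UEs that remain non-supported keep $\alpha_0$, UEs that become supported have their $\alpha_i$ driven to $0$), and the algorithm records the largest supported count $\overline{N}^{\ast}=\overline{N}_{\sf HPC}$ over the visited configurations; together with the domination this yields $\overline{N}_{\sf HPC}\ge\overline{N}_{\sf TPC}$.

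For the second assertion, assume the network is feasible, so $I_i(\mathrm{q}^{\ast})\le I_i^{\sf thr}$ for every $i$. If Algorithm~\ref{alg:gms3_r3} ever reaches an iteration $l$ with $\underline{\mathcal{U}}^{(l)}=\varnothing$, it exits to Step~4 with $\Delta^{\ast}=\Delta^{(l)}$ and every UE is supported, so $\Gamma_i(\mathrm{p}^{\ast}(\Delta^{\ast}))\ge\bar{\gamma}_i$ for all $i$. I would rule out the opposite branch: were the updating process to terminate (or run indefinitely) with some UE permanently non-supported, the parameters of the supported UEs would be driven to $0$, and applying the domination argument of Part~1 to the limiting configuration would show that it supports at least the TPC-supported set, which here is all of $\mathcal{M}$ --- contradicting the existence of a permanently non-supported UE. Hence the adaptation stops with $\underline{\mathcal{U}}=\varnothing$, proving the first half. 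For the second half, consider the first iteration $l$ at which $\underline{\mathcal{U}}=\varnothing$: since $\underline{\mathcal{U}}^{(l-1)}\ne\varnothing$, at least one UE $i$ just moved from non-supported to supported; being previously non-supported it still carries $\alpha_i=\alpha_0>0$, and being now supported it has $I_i(\mathrm{p}^{\ast}(\Delta^{\ast}))\le I_i^{\sf thr}$. Apart from the boundary case $I_i(\mathrm{p}^{\ast}(\Delta^{\ast}))=I_i^{\sf thr}$ --- which, being incompatible with the strict slack guaranteed by a feasible network, cannot hold for all such UEs --- the first paragraph gives $\Gamma_i(\mathrm{p}^{\ast}(\Delta^{\ast}))>\bar{\gamma}_i$, i.e.\ some feasible UE achieves an SINR strictly above its target.

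The step I expect to be the main obstacle is the fixed-point domination invoked above --- the implication that, if the HPC map for a parameter vector $\Delta$ sends a point $\mathrm{a}$ to a point $\le\mathrm{a}$, then $\mathrm{p}^{\ast}(\Delta)\le\mathrm{a}$ --- because the pufs $J_i$ are \emph{not} monotone in their effective-interference argument (the $I_i^{-1}$ term makes $J_i$ decreasing in $I_i$ for small $I_i$), so Yates' standard-interference-function machinery does not apply verbatim. I would establish it by combining the 2.s.s.\ property with the cap $P_i^{\sf max}$: on the region where each relevant UE is either saturated at $P_i^{\sf max}$ or uses $\alpha_i=0$ the map $J^{\sf HPC}$ \emph{is} order preserving, and one shows that the iteration started from $\mathrm{a}$ never leaves this region, so the usual monotone-convergence argument goes through there; an alternative is to appeal to the comparison results for 2.s.s.\ updates in [J1] and in the work of Sung and Leung. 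The remaining ingredients --- the two-regime computation of the first paragraph and the bookkeeping of the sets $\overline{\mathcal{U}}^{(l)}$, $\underline{\mathcal{U}}^{(l)}$ and the parameters $\alpha_i^{(l)}$ along Algorithm~\ref{alg:gms3_r3} --- are routine.
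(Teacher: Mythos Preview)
Your first paragraph --- the two-regime characterisation of the HPC fixed point and the equivalence ``supported $\Leftrightarrow I_i\le I_i^{\sf thr}$, with strict SINR excess iff $\alpha_i>0$ and $I_i<I_i^{\sf thr}$'' --- is correct and is exactly what the paper establishes (its Lemma~3.3).

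The real gap is in Part~1, and you have put your finger on it: the ``fixed-point domination'' step for the HPC map does not go through as written. Your proposed fix (``the iteration started from $\mathrm{q}^{\ast}$ never leaves the region where every UE is either saturated or has $\alpha_i=0$'') fails for the very users that matter: the UEs in $\mathcal{M}\setminus R$ carry $\alpha_i=\alpha_0>0$ in $\widehat{\Delta}$, and your own first-step computation shows their HPC power drops \emph{below} $P_i^{\sf max}$, so after one step they are neither saturated nor $\alpha$-free; the non-monotone $I_i^{-1}$ term is then fully in play and nothing prevents overshoot. The 2.s.s.\ comparison results you allude to give uniqueness and convergence, not the order relation $J(\mathrm{a})\le \mathrm{a}\Rightarrow \mathrm{p}^\ast\le \mathrm{a}$ that you need here.

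The paper sidesteps this by reversing the roles of the two algorithms. Instead of running HPC from the TPC fixed point, it works with the \emph{terminal} configuration $\Delta^{s}$ of Algorithm~3 (where every HPC-supported UE has $\alpha_i=0$), takes the corresponding HPC fixed point $\mathrm{p}^{\ast}_{s}$, and runs \emph{TPC} started from $\mathrm{p}^{\ast}_{s}$. Because the $\Delta^{s}$-supported users have $\alpha_i=0$, their HPC power already equals $\hat\gamma_i R_i(\mathrm{p}^{\ast}_{s})$, so the first TPC step leaves them fixed; the non-supported users have $R_i(\mathrm{p}^{\ast}_{s})>I_i^{\sf thr}$, so the first TPC step caps them at $P_i^{\sf max}\ge p^{\ast}_{i,s}$. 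Hence $\mathrm{p}_{\sf TPC}^{(1)}\ge \mathrm{p}^{\ast}_{s}$, and since TPC \emph{is} a standard (monotone) interference function, the iterates increase to $\mathrm{p}^{\ast}_{\sf TPC}$, yielding $\mathrm{p}^{\ast}_{\sf TPC}\ge \mathrm{p}^{\ast}_{s}$ and therefore $R_i(\mathrm{p}^{\ast}_{\sf TPC})\ge R_i(\mathrm{p}^{\ast}_{s})$. Every TPC-supported user is then $\Delta^{s}$-supported, and the tracking $\overline N_{\sf HPC}\ge \overline N^{s}$ finishes the argument. Note also that the right reference configuration is $\Delta^{s}$, not your $\widehat{\Delta}$: the latter is defined from the TPC-supported set and you never actually show Algorithm~3 visits it, whereas $\Delta^{s}$ is by construction the terminal state.

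For Part~2 your outline is broadly right, but once Part~1 is fixed as above the contradiction step becomes clean. One minor correction: ``being previously non-supported it still carries $\alpha_i=\alpha_0$'' is not quite accurate --- a UE may have been supported earlier and had its $\alpha_i$ reduced to some positive value before becoming non-supported again --- but all you need is $\alpha_i>0$, which does hold for any data UE in that position.
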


An exemplifying design of the updating process is presented in Algorithm 4 of [J2] in which $\alpha_i$ is updated by both local and global 
processes, i.e., locally updating $\alpha_i$ within each cell and globally updating $\alpha_i$ when there are UEs not supported 
after performing the local updating process. 

\subsubsection{Numerical Results}

\begin{figure}[!t]
        \centering
                \includegraphics[width=0.7 \textwidth]{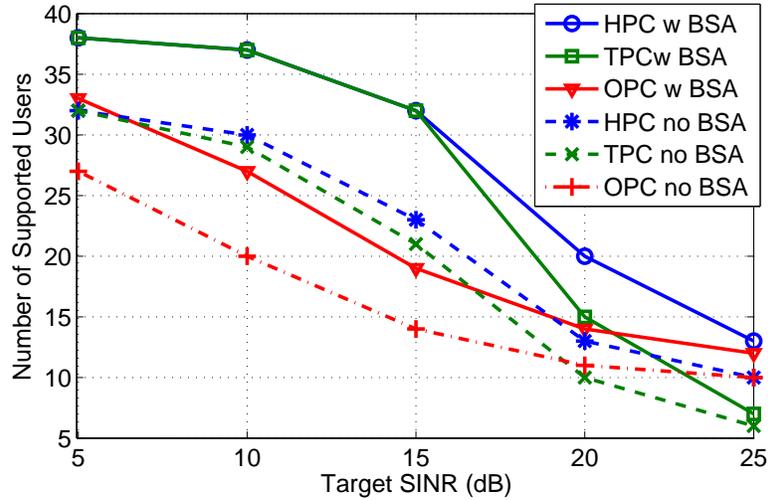}
                \caption{Number of supported UEs in HPC, TPC and OPC schemes versus target SINR.}
                \label{r3_fig1a}
\end{figure}

We now present illustrative numerical results to demonstrate the performance of the proposed algorithms. 
We study the setting where $10$ MUEs and FUEs ($1$ to $3$ for each cell) are randomly located inside circles of radii of $r_1 = 1000\:m$ and $r_2 =50\:m$, respectively. 
The channel power gain $h_{ij}$ is chosen according to the distance and Rayleigh fading. 
Other parameters are set as follows: $G=128$, $P_i^{\mathsf{max}} = 0.01 \: W$,$\eta_i=10^{-13} \: W$, $W_l=12\;dB$. Fig.~\ref{r3_fig1a} illustrates the number of supported UEs for different schemes with and without using BSA algorithm. 
It can be seen that our proposed HPC adaptation algorithm can maintain the SINR requirements for the larger number of UEs compared to TPC and OPC algorithms. In additional, the adaptive BSA algorithm helps increase the number of supported UEs when it is jointly integrated with these PC algorithms. 

\begin{figure}[!t]
        \centering        
                \includegraphics[width=0.7 \textwidth]{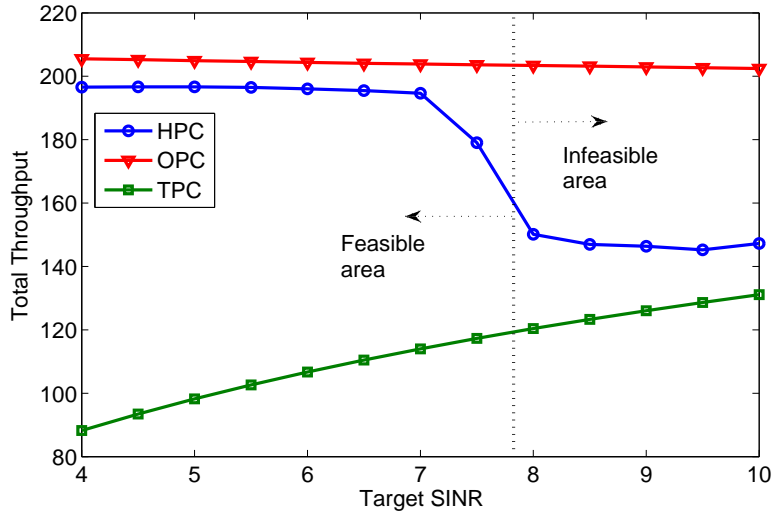}
                \caption{Average spectral efficiency vs. target SINR.}
 \label{r3_fig1b}
\end{figure}

Figs.~\ref{r3_fig1b} illustrates the average throughput achieved by different schemes versus the target SINR. 
Here, the throughput of UE $i$ is calculated as $\log_2(1+\Gamma_i)$ (b/s/Hz).
As can be observed, the average throughput achieved by our HPC adaptation algorithm is higher than that due to 
the TPC algorithm, but lower than the one due to OPC algorithm. 
Interestedly, when the network is lowly-loaded ($\bar{\gamma}_i < 8$, our proposed algorithm attains much higher 
throughput than TPC. However,the gap becomes smaller when the network load is higher ($\bar{\gamma}_i \geq 8$).
This is because Algorithm \ref{alg:gms3_r3} attempts to maintain the SINR requirements for the larger number of UEs.

\subsection{Fair Resource Allocation for OFDMA-based HetNets}

In this contribution, we consider the joint subchannel allocation and power control problem for orthogonal frequency division
 multiple access (OFDMA) femtocell networks. \nomenclature{OFDMA}{Orthogonal Frequency Division
 Multiple Access}
The design of efficient radio resource management for multi-tier OFDMA-based HetNets is an important research topic \cite{perez09}
and there have been some existing works in this area \cite{chuhan11,yanzan12,yushan12,Wangchi12,Long12,hoon11}. 
These existing works, however, do not consider PC in their resource allocation algorithms or do not provide QoS guarantees for UEs of both network tiers.
Our design aims to jointly design subchannel and power allocation for femtocells considering 
fairness for FUEs in each femtocell, QoS protection for MUEs, and the maximum power constraints. 
To the best of our knowledge, none of the existing works jointly consider all these design issues. 
In particular, we make the following contributions.

\begin{itemize}
\item We present a fair uplink resource allocation formulation for the two-tier HetNets which aims at maximizing the total minimum rate of 
all femtocells subject to different QoS and system constraints.

\item We present both optimal exhaustive search algorithm and low-complexity distributed subchannel allocation and power control algorithm.

\item We then describe how to extend the proposed design to different scenarios including downlink setting, adaptive multiple-rate transmission, 
and open access design.
\end{itemize}

\subsubsection{System Model}

We consider a system with full frequency reuse in which there are $M_{\sf f}$ FUEs served by $(K-1)$ FBSs, which 
are underlaid by one macrocell serving $M_{\sf m}$ MUEs over $N$ subchannels.
We describe the subchannel assignments (SAs) as matrix $\mathrm{\bf{A}} \in \mathfrak{R}^{M \times N}$ where
\nomenclature{SA}{Subchannel Assignment}
\begin{equation}
\label{eq:A}
\mathrm{\bf{A}}(i,n)=a^n_i=\left\lbrace \begin{array}{*{10}{l}}
1 & \text{if  subchannel $n$ is assigned for UE $i$}\\
0 & \text{otherwise}.
\end{array}
\right. 
\end{equation}
We assume that the M-QAM modulation is adopted for communications on each subchannel where the constellation size $s$. 
\nomenclature{QAM}{Quadrature Amplitude Modulation}
According to 
\cite{proakis01}, the target SINR for the $s\text{-QAM}$ modulation scheme can be calculated as
\beq \label{eq:tSINR}
\bar{\gamma}(s) = \dfrac{\left[ \mathbb{Q}^{-1}( \overline{P}_e /x_s)\right]^2}{y_s}, \: s \in \mathfrak{M},
\eeq
where $\mathbb{Q}(.)$ stands for the Q-function, $x_s=\frac{2(1-1/\sqrt{s})}{\log_2s}$, $y_s=\frac{3}{2(s-1)}$, and $\overline{P}_e$ is the target BER value. 
\nomenclature{BER}{Bit Error Rate}
Then, if $s\text{-QAM}$ modulation scheme is employed, the
spectral efficiency per one Hz of system bandwidth is ${\log_2s}/{N}$ (b/s/Hz).

Similar to the SAs, we define $\mathrm{\bf{P}}$ as an $M \times N$ power allocation (PA) matrix where $\mathrm{\bf{P}}(i,n)=p^n_i$. 
\nomenclature{PA}{Power Allocation}
Then, for a given SA and PA solution, i.e., given $\mathrm{\bf{A}}$ and $\mathrm{\bf{P}}$, the SINR achieved at
 BS $b_i$ due to the transmission of UE $i$ over subchannel $n$ can be written as
\begin{equation}
\label{sinr}
\Gamma_i^n(\mathrm{\bf{A}},\mathrm{\bf{P}})=\dfrac{a_i^n h_{b_i i}^n p_i^n}{\sum_{j \notin \mathcal{U}_{b_i}}{a_j^n h_{b_ij}^n p_j^n}+\eta_{b_i}^n}=\dfrac{a_i^np_i^n}{I_i^n(\mathrm{\bf{A}},\mathrm{\bf{P}})}.
\end{equation}

We assume that $\mathrm{\bf{A}}_1$ (SA for macrocell) is
fixed while we need to determine $\mathrm{\bf{A}}_k, \: 2 \leq k \leq K$ (SA for femtocells) and the corresponding PAs.
To protect the QoS of the licensed MUEs, we wish to maintain a predetermined target SINR $\bar{\gamma}_i^n$ for each of its 
assigned subchannel $n$. For FUEs, the spectral efficiency (bits/s/Hz) achieved by FUE $i$ on one subchannel can be written as 
\begin{equation}
\label{eq:rate_n}
r_i^n(\mathrm{\bf{A}},\mathrm{\bf{P}}) = \left\lbrace \begin{array}{*{5}{l}}
0, & \text{if } \Gamma_i^n(\mathrm{\bf{A}},\mathrm{\bf{P}}) < \bar{\gamma}_i^n,\\
r_{\sf f}, & \text{if } \Gamma_i^n(\mathrm{\bf{A}},\mathrm{\bf{P}}) \geq \bar{\gamma}_i^n,
\end{array} \right. 
\end{equation}
where $r_{\sf f}=(1/n)\log_2s^{\sf f}$ and $\bar{\gamma}_i^n = \bar{\gamma}(s^{\sf f})$.
Now, we can express the total spectral efficiency achieved by UE $i$ for given SA and PA matrices, $\mathrm{\bf{A}}$ and $\mathrm{\bf{P}}$, as $R_i(\mathrm{\bf{A}},\mathrm{\bf{P}})=\sum_{n=1}^N r_i^n(\mathrm{\bf{A}},\mathrm{\bf{P}})$.
To impose the max-min fairness for all FUEs associated with the same FBS, the uplink resource allocation problem for FUEs can be formulated as follows:
\begin{align}
\label{objfun_r4}
\mathop {\max} \limits_{(\mathrm{\bf{A}},\mathrm{\bf{P}}) } & \sum \limits_{2\leq k \leq K} \mathrm{R}^{(k)}(\mathrm{\bf{A}},\mathrm{\bf{P}}) =  \sum \limits_{2\leq k \leq K} \min_{i \in \mathcal{U}_k} R_i(\mathrm{\bf{A}},\mathrm{\bf{P}}) \\
\text{s. t.} & \sum_{i \in \mathcal{U}_k} a_i^n \leq 1, \:\:\: \forall k \in \mathcal{B} \text{ and } \forall n \in \mathcal{N}, \label{eq:c4} \\
{} & \sum_{n=1}^{N}p^n_i \leq P_i^{\texttt{max}}, \quad i \in \mathcal{U}, \label{powcon} \\
{} &  \Gamma_i^n(\mathrm{\bf{A}},\mathrm{\bf{P}}) \geq \bar{\gamma}_i^n,\:\: \text{if} \:\: a_i^n=1, \:\: \forall i \in \mathcal{U}_{\sf m}.\label{eq:rate_cond}
\end{align}
This resource allocation problem is a mixed integer program, which is, therefore, NP-hard. 
\nomenclature{NP-hard}{Non-deterministic Polynomial-time hard}

\subsubsection{Optimal Exhaustive Search Algorithm}

For each subchannel $n$, we need to maintain the SINR constraints $\Gamma^n_i(\mathrm{\bf{A}},\mathrm{\bf{P}}) \geq \bar{\gamma}_i^n$ 
for all UEs who are allocated with this subchannel. Hence, the feasibility of a particular SA can be verified by using the
Perron-Frobenius theorem as presented in Section~\ref{Ch2_sec_PC}.
Since the number of possible SAs is finite, the  optimal exhaustive
search algorithm can be developed as follows. 
For a fixed and feasible $\mathrm{\bf{A}}_1$, let $\Omega\{\mathrm{\bf{A}}\}$ be the list of all potential SA solutions
 that satisfy the SA constraints (\ref{eq:c4}) and the fairness condition: $\sum_{n \in \mathcal{N}} a_i^n=\sum_{n \in \mathcal{N}} a_j^n=\tau_k$ for all FUEs $i,j \in \mathcal{U}_k$. 
Then, we sort the list $\Omega\{\mathrm{\bf{A}}\}$ in the decreasing order of $\sum_{k=2}^K \tau_k$ and obtain the sorted list $\Omega^{\ast}\{\mathrm{\bf{A}}\}$. 
Among all feasible SA solutions, the feasible one achieving the highest value of the objective function (\ref{objfun_r4}) and its corresponding PA solution is the optimal solution.

\textit{Complexity Analysis:} By calculating the cardinality of $\Omega^{\ast}\{\mathrm{\bf{A}}\}$ and the complexity involved in the feasibility verification for each of them, the complexity of the exhaustive search algorithm can be expressed as 
$O\left(K^3 \times N \times (N!)^{(K-1)}\right)$, which is exponential in the numbers of subchannels and FBSs. This optimal exhaustive search algorithm will
be employed as a benchmark to evaluate the performance of the low-complexity algorithm presented in the following.

\subsubsection{Sub-Optimal and Distributed Algorithm}

Our sub-optimal algorithm aims to assign the maximum equal number of subchannels to FUEs
in each femtocell and to perform Pareto-optimal PA for FUEs and MUEs on each subchannel so that they meet the SINR constraints
in (\ref{eq:rate_n}) and (\ref{eq:rate_cond}). 
To achieve this design goal, we propose a novel resource allocation algorithm which is described in details in Algorithm~\ref{alg:gms1_r4}. 
The key operation in this 
algorithm is the iterative weight-based SA which is performed in parallel at all femtocells. 
The SA weight for each
subchannel and FUE pair is defined as the multiplication of the estimated transmission power and a scaling factor capturing the 
quality of the corresponding allocation.  
Specifically, each UE $i$ in cell $k$ estimates the transmission power on subchannel $n$ in each iteration $l$ of the algorithm
 by using the TPC algorithm as $p_i^{n,\texttt{min}}=  I_i^n(l) \bar{\gamma}_i^n$. 
Then, the assignment weight for FUE $i$ on subchannel $n$ in cell $k$ can be defined as $w_{i}^n = \chi_i^n p_i^{n,\texttt{min}}$
where the scaling factor $\chi_i^n$ is defined as follows: 
\begin{equation}
\label{eq:w2}
\chi_i^n  = \left\lbrace \begin{array}{*{5}{l}}
\alpha_i^{n}, & \text{if } p_i^{n,\texttt{min}} \leq \frac{P_i^{\texttt{max}}} {\tau_k}\\
\alpha_i^{n} \theta_i^{n}, & \text{if } \frac{P_i^{\texttt{max}}} {\tau_k} < p_i^{n,\texttt{min}} \leq P_i^{\texttt{max}}\\
\alpha_i^{n} \delta_i^{n}, & \text{if }  P_i^{\texttt{max}} < p_i^{n,\texttt{min}},
\end{array} \right. 
\end{equation}
where $\tau_k$ denotes the number of subchannels assigned for each FUE in femtocell $k$; 
$\alpha_i^{n} \geq 1$ is a factor helping MUEs maintain their target SINR (i.e.,
it is increased if $\mathrm{\bf{A}}(i,n)=1$ results in violation of the SINR constraint of MUE); 
$\theta_i^{n},\delta_i^{n} \geq 1$ are another factors which are set higher if $\mathrm{\bf{A}}(i,n)=1$ tends to require transmission
power larger than the average power per subchannel (i.e., $\frac{P_i^{\texttt{max}}} {\tau_k}$) and the maximum power budget (i.e., $P_i^{\texttt{max}}$), respectively.

Given the weights defined for each FUE $i$, femtocell $k$ finds its SA by using the standard Hungarian algorithm
(i.e., \textit{Algorithm 14.2.3} given in \cite{Jungnickel08}) to solve the following problem
\beq
\min \limits_{\mathrm{\bf{A}}_k} \sum \limits_{i \in \mathcal{U}_k}\sum \limits_{n \in \mathcal{N}} a_i^n w_i^n \;\; \text{s.t.} \;\; \sum_{n \in \mathcal{N}} a_i^n = \tau_k \:\: \forall i \in \mathcal{U}_k. \label{eq:lcl_prob}
\eeq

\renewcommand{\baselinestretch}{0.9}
\small
\begin{algorithm}
\caption{\textsc{Distributed Uplink Resource Allocation}}
\label{alg:gms1_r4}
\begin{algorithmic}[1]
\STATE Initialization: Set $p_i(0)=0$ $\forall i$, feasible $\mathrm{\bf{A}}_1$, $\tau_k=\lfloor \frac{N}{\vert \mathcal{U}_k \vert} \rfloor$ and $\varrho_k=0$ for all FBSs, and $\alpha_i^{n}=\theta_i^{n}=1$, $\delta_i^{n}=N$ for all FUEs and subchannels.

\STATE \textbf{For the macrocell:}
\STATE MBS estimates calculates $\left\lbrace p_i^{n,\texttt{min}}\right\rbrace $ and checks the power constraint (PCON).

\IF{PCON satisfies} \STATE Update power as $p_i^{n,\texttt{min}}$.
\ELSIF{PCON not satisfies} \STATE Update power as $p_i^{n,\texttt{min}}$ with a scale down factor.
\STATE Find SC using most power, and FUE generating most interference in this SC.
\STATE Increase $\alpha$ of this FUE over that SC.
\ENDIF

\STATE \textbf{For each femtocell $k \in \mathcal{B}_{\sf f}$:}

\STATE Each FBS $k$ estimates $\left\lbrace p_i^{n,\texttt{min}}\right\rbrace $.

\IF{$\varrho_k=1$}  \STATE Keep $\mathrm{\bf{A}}_k(l)=\mathrm{\bf{A}}_k(l-1)$ 
\ELSIF{$\varrho_k=0$}  \STATE Calculate subchannel assignment weights $\left\lbrace w_{i_u}^n\right\rbrace $ and run 
Hungarian algorithm to obtain $W_k(l)$ and $\mathrm{\bf{A}}_k(l)$.
\IF{$W_k(l) > V \sum _{i \in \mathcal{U}_k}P_i^{\texttt{max}}$} \STATE Set $\tau_k:=\tau_k-1$. 
\ENDIF
\ENDIF
\STATE Check the power constraint (PCON). 
\IF{PCON satisfies} \STATE Update power as $p_i^{n,\texttt{min}}$ and set $\varrho_{k,i}=1$. 
\ELSIF{PCON not satisfies} \STATE Update power as $p_i^{n,\texttt{min}}$ with a scale down factor.
\STATE Find SC and FUE which spent most power.
\STATE Increase $\theta$ of that FUE on that SC.
\ENDIF
\STATE Set $\varrho_{k}=\prod_{i \in \mathcal{U}_k}\varrho_{k,i}$.
\STATE Let $l:=l+1$, return to step 2 until convergence.
\end{algorithmic}
\end{algorithm}
\nomenclature{PCON}{Power Constraint}
\renewcommand{\baselinestretch}{1.4}
\normalsize

\textit{Complexity Analysis:} The complexity of our proposed algorithm is $O(K \times N^3)$ for each iteration due to the
 Hungarian SA process in step 16. However, the local SAs can be performed in parallel at all $(K-1)$ 
femtocells. Therefore, the run-time complexity of our algorithm is $O(N^3)$ multiplied by the number of required iterations.

\subsubsection{Numerical Results}

To conduct computer simulations, MUEs and FUEs are randomly located inside circles of radii of $r_1 = 1000\:m$ and $r_2 =30\:m$, respectively.  
The power channel gains $h_{ij}^n$ are generated by considering both 
Rayleigh fading and the distances. 
Other parameters: $\eta_i=10^{-13} \: W$, $W_l=12\;dB$. In Fig.~\ref{r4_fig1a}, 
we show the total minimum spectral efficiency of all femtocells (i.e., the optimal objective value of (\ref{objfun_r4})) versus
 the constellation size of FUEs ($s^{\sf f}$) for the
 small network due to both optimal and sup-optimal algorithms (Algorithm~\ref{alg:gms1_r4}).
As can be seen, for the low constellation sizes (i.e., low target SINRs), Algorithm~\ref{alg:gms1_r4} can achieve almost
 the same spectral efficiency as the optimal one while for higher values of $s^{\sf f}$, it results in just slightly lower spectral efficiency than that due to the optimal one. 
\begin{figure}[!t]
\centering
\includegraphics[width=0.7\textwidth]{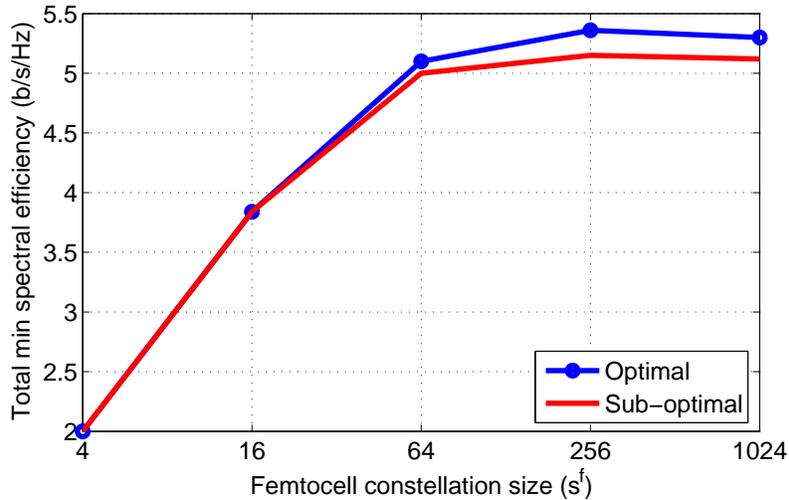}
\caption{Total minimum spectral efficiency under optimal and sup-optimal algorithms with $P_i^{\mathsf{max}} = 0.01 \: W$.}
\label{r4_fig1a}
\end{figure}

In Figs.~\ref{r4_fig:pf} and \ref{r4_fig:pm}, we plot the total femtocell minimum spectral efficiency versus the maximum power of FUEs ($P_{\sf f}^{\texttt{max}}$) and MUEs ($P_{\sf m}^{\texttt{max}}$), respectively, for different modulation levels of MUEs (the modulation scheme of FUEs is $256$-QAM). These figures show that the total 
minimum spectral efficiency  increases with the increases of maximum power budgets, $P_{\sf f}^{\texttt{max}}$ or $P_{\sf m}^{\texttt{max}}$. However, this value is
saturated as the maximum power budgets $P_{\sf f}^{\texttt{max}}$ or $P_{\sf m}^{\texttt{max}}$ become sufficiently large. In addition, as the number of MUEs increases, the total femtocell minimum spectral efficiency increases thanks to the better diversity gain offered by the macro tier.

\begin{figure}[!t]
\centering
\includegraphics[width=0.7\textwidth]{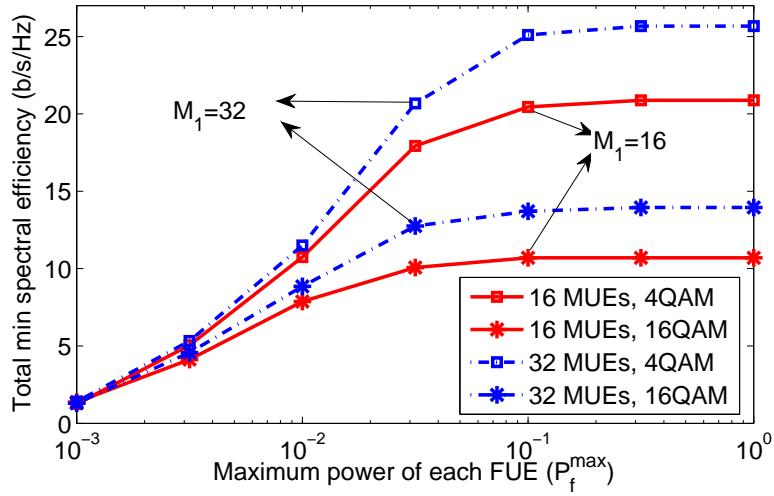}
                \caption{Total minimum spectral efficiency versus $P_{\sf f}^{\texttt{max}}$. }
                \label{r4_fig:pf}
\end{figure}

\begin{figure}[!t]
\centering
\includegraphics[width=0.7\textwidth]{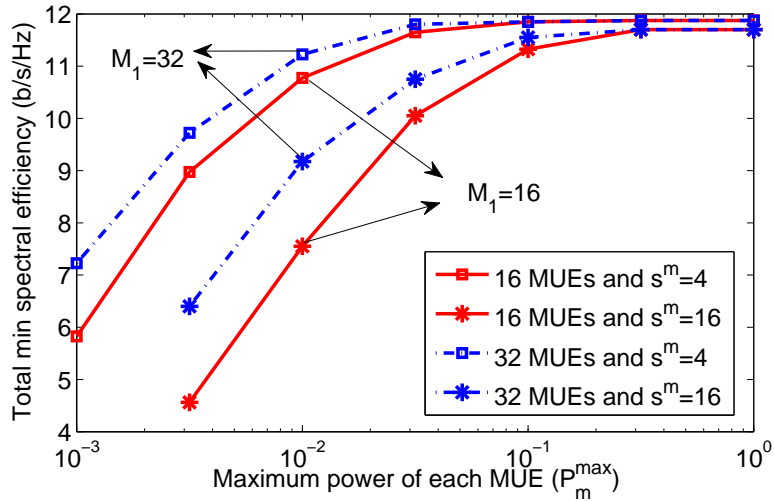}
                \caption{Total minimum spectral efficiency versus $P_{\sf m}^{\texttt{max}}$. }
                \label{r4_fig:pm}
\end{figure}

\subsection{Joint Transmission Design for C-RANs with Limited fronthaul Capacity}
In this contribution, we consider the CoMP joint transmission design for C-RAN 
that explicitly considers the fronthaul capacity as well as UEs' QoS constraints. 
In particular, we make the following contributions.

\begin{itemize}
\item
We formulate the joint transmission design problem for C-RAN which optimizes the set of RRHs serving each UE
together with their precoding and power allocation solutions to minimize the total transmission power considering
UEs' QoS and fronthaul capacity constraints. 
\item
We  develop two different low-complexity algorithms, namely \textit{Pricing-Based Algorithm} and \textit{Iterative Linear-Relaxed Algorithm},
to solve the underlying problem. 

\item 
We also study the extended settings where there are multiple individual FH capacity constraints and the multi-stream transmission scenario.
\end{itemize}

\subsubsection{System Model}

We consider the joint transmission design for CoMP downlink communications in the C-RAN with $K$ RRHs and $M$ UEs.
Let $\mathcal{K}$ and $\mathcal{U}$ be the sets of RRHs and UEs, respectively.
Suppose that RRH $k$ is equipped with $N_k$ antennas and each UE has a single antenna. 
Also, denote $\mathbf{v}_u^{k} \in \mathbb{C}^{N_k \times 1}$ as the precoding vector at RRH $k$ corresponding to the transmission to UE $u$. Then, the SINR achieved by UE $u$ can be described as
\beq \label{eq:SINR}
\Gamma_u =\dfrac{ \left|  \sum \limits_{k \in \mathcal{K}} \mathbf{h}_u^{kH} \mathbf{v}_u^{k}\right| ^2 }
{\sum \limits_{i =1, \neq u}^{M} \left| \sum \limits_{l \in \mathcal{K}}\mathbf{h}_u^{lH} \mathbf{v}_i^{l}\right|^2 + \sigma^2 }.
\eeq
Let $\mathbf{p}^k=[p^k_1 ... p^k_M]^T$ be the transmission power vector of RRH $k$, and $\mathbf{p}=[\mathbf{p}^{1 T} ... \mathbf{p}^{K T}]^T$. 
Note that $p^k_u=0$ implies that RRH $k$ does not serve UE $u$.
In contrast, if $p^k_u > 0$, the fronthaul link from RRH $k$ to UE $u$ is activated for carrying the baseband signal to serve UE $u$ 
at its required target SINR. 
Therefore, the total consumed capacity of the fronthaul links can be captured by the transmission power vector $\mathbf{p}$ and the target SINR of UEs, which can be 
written mathematically as
\beq \label{eq:Ck}
G(\mathbf{p})= \sum \limits_{k \in \mathcal{K}} \sum \limits_{u \in \mathcal{U}}  \delta( p_u^k ) R_u^{\sf{fh}}   
\eeq 
where $\delta(\cdot)$ denotes the step function, and $R_u^{\sf{fh}}$ represents the required capacity corresponding to UE $u$. In particular, $R_u^{\sf{fh}}$ can be expressed as \cite{dirk_14} $R_u^{\sf{fh}} = Q_{\sf{fh}} \log_2(1+\bar{\gamma}_u)$ where $Q_{\sf{fh}}$ denotes a multiplier factor.
Now, it is ready to state the FCPM problem as follows: \nomenclature{FCPM}{Fronthaul Constraint Power Minimization}
\begin{align}
(\mathcal{P}_{\sf{FCPM}}) \;\;\;\; \min \limits_{\lbrace \mathbf{v}_u^k \rbrace, \mathbf{p} } & \;\;\;\;\;\;\;\; \sum \limits_{k \in \mathcal{K}}\sum \limits_{u \in \mathcal{U}} \mathbf{v}_u^{kH}\mathbf{v}_u^{k}  \label{obj_1} \\
\;\;\;\;\;\;\;\; \text{s. t. } & \; \Gamma_u \geq \bar{\gamma}_u, \qquad \forall u \in \mathcal{U}, \label{eq:SINR_constraint} \\  
{} & \; \sum \limits_{u \in \mathcal{U}} p^k_u = \sum \limits_{u \in \mathcal{U}} \mathbf{v}_u^{kH}\mathbf{v}_u^{k} \leq P_k, \;\;\; \forall k \in \mathcal{K}, \label{eq:pwc} \\
{} & \; G(\mathbf{p})= \sum \limits_{k \in \mathcal{K}} \sum \limits_{u \in \mathcal{U}}  \delta( p_u^k ) R_u^{\sf{fh}}  \leq C, \label{eq:C_cons}
\end{align}
where $\bar{\gamma}_u$ denotes the target SINR of UE $u$, $P_k$ ($k \in \mathcal{K}$) denotes the maximum power of RRH $k$, and $C$ is denoted as the fronthaul capacity limit.

\subsubsection{Pricing-Based Algorithm}
The first low-complexity algorithm is developed by employing the penalty method to deal with the step-function fronthaul capacity constraint.
Specifically, we consider the so-called \textit{Pricing-Based fronthaul Capacity and Power Minimization} (PFCPM) problem, which is defined as
\nomenclature{PFCPM}{Pricing-Based fronthaul Capacity and Power Minimization}
\beq
(\mathcal{P}_{\sf{PFCPM}}) \;\;\;\; \min \limits_{\lbrace \mathbf{v}_u^k \rbrace, \mathbf{p} } \;\; \Vert \mathbf{p} \Vert_{\mathbf{1}} + q G(\mathbf{p})  \;\; \text{s. t. } \;\; \text{ constraints (\ref{eq:SINR_constraint}) and (\ref{eq:pwc}).} \label{obj_2}  
\eeq
In the following, we establish some theoretical results based on which we develop the mechanism to update the pricing parameter.
\begin{proposition} \label{R6_lm1} $G_{\sf{PFCPM}}(q)$ is a decreasing function of $q$ and lower bounded by $G_{\sf{PFCPM}}(\bar{q}) $ where $\bar{q} = \sum \limits_{k \in \mathcal{K}} P_k/\sigma_{\sf{min}}$, and $\sigma_{\sf{min}}$ is the smallest non-zero value of $\vert G(\mathfrak{a})-G(\mathfrak{a}^{\prime})\vert$ where ${\lbrace\mathfrak{a},\mathfrak{a}^{\prime}\rbrace \subset \mathcal{S}_\mathfrak{a}}$.
\end{proposition}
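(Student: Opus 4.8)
The plan is to exploit that the penalty term $G(\mathbf{p})$ takes only finitely many values. Write $\mathbf{p}^{\star}(q)$ for the power vector of an optimal solution of $(\mathcal{P}_{\sf{PFCPM}})$ at price $q\ge 0$, so that $G_{\sf{PFCPM}}(q)=G(\mathbf{p}^{\star}(q))$; such a solution exists whenever $(\mathcal{P}_{\sf{PFCPM}})$ is feasible, since the feasible set is compact in the precoder variables (by (\ref{eq:pwc})) and the objective is lower semicontinuous, the step function $\delta(\cdot)$ in $G$ being so and $R_u^{\sf{fh}}>0$. Two elementary facts will drive everything: (i) every feasible $\mathbf{p}$ obeys $0\le\|\mathbf{p}\|_{\mathbf{1}}=\sum_{k\in\mathcal{K}}\sum_{u\in\mathcal{U}}p_u^k\le\sum_{k\in\mathcal{K}}P_k$, obtained by summing the per-RRH constraints (\ref{eq:pwc}); and (ii) since $G(\mathbf{p})=\sum_{k,u}\delta(p_u^k)R_u^{\sf{fh}}$ depends on $\mathbf{p}$ only through the induced activation pattern $\mathfrak{a}\in\mathcal{S}_{\mathfrak{a}}$, any two feasible $\mathbf{p},\mathbf{p}'$ with $G(\mathbf{p})\ne G(\mathbf{p}')$ satisfy $|G(\mathbf{p})-G(\mathbf{p}')|\ge\sigma_{\sf{min}}$, directly from the definition of $\sigma_{\sf{min}}$.

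For monotonicity I would use the standard two-point exchange inequality. The feasible set of $(\mathcal{P}_{\sf{PFCPM}})$ does not depend on $q$, so for $q_1<q_2$ both $\mathbf{p}^{\star}(q_1)$ and $\mathbf{p}^{\star}(q_2)$ are feasible at either price; optimality gives $\|\mathbf{p}^{\star}(q_1)\|_{\mathbf{1}}+q_1 G(\mathbf{p}^{\star}(q_1))\le\|\mathbf{p}^{\star}(q_2)\|_{\mathbf{1}}+q_1 G(\mathbf{p}^{\star}(q_2))$ and the symmetric inequality with $q_2$. Adding the two and cancelling the $\|\cdot\|_{\mathbf{1}}$ terms yields $(q_1-q_2)\bigl(G(\mathbf{p}^{\star}(q_1))-G(\mathbf{p}^{\star}(q_2))\bigr)\le 0$, hence $G_{\sf{PFCPM}}(q_1)\ge G_{\sf{PFCPM}}(q_2)$; thus $G_{\sf{PFCPM}}(\cdot)$ is (weakly) decreasing.

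For the lower bound, put $G^{\star}=\min\{G(\mathbf{p}):\mathbf{p}\ \text{feasible}\}$, attained (finitely many values of $G$) by some feasible $\mathbf{p}_0$ with $\|\mathbf{p}_0\|_{\mathbf{1}}\le\sum_{k\in\mathcal{K}}P_k$ by (i). I claim $G_{\sf{PFCPM}}(q)=G^{\star}$ for every $q\ge\bar{q}=\sum_{k\in\mathcal{K}}P_k/\sigma_{\sf{min}}$. Indeed, if some feasible $\mathbf{p}$ had $G(\mathbf{p})>G^{\star}$, then (ii) gives $G(\mathbf{p})\ge G^{\star}+\sigma_{\sf{min}}$, whence, for $q\ge\bar{q}$,
\[
\|\mathbf{p}\|_{\mathbf{1}}+qG(\mathbf{p})\ \ge\ q\bigl(G^{\star}+\sigma_{\sf{min}}\bigr)\ =\ qG^{\star}+q\sigma_{\sf{min}}\ \ge\ qG^{\star}+\sum_{k\in\mathcal{K}}P_k\ \ge\ \|\mathbf{p}_0\|_{\mathbf{1}}+qG^{\star},
\]
so $\mathbf{p}$ is no better than $\mathbf{p}_0$; hence an optimal solution with $G=G^{\star}$ exists and $G_{\sf{PFCPM}}(q)=G^{\star}$. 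Evaluating at $q=\bar{q}$ gives $G_{\sf{PFCPM}}(\bar{q})=G^{\star}$, and for $0\le q<\bar{q}$ the monotonicity already proved gives $G_{\sf{PFCPM}}(q)\ge G_{\sf{PFCPM}}(\bar{q})$, which is exactly the asserted lower bound.

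I expect the only genuinely delicate point to be the handling of the discontinuous, step-function objective: establishing that an optimizer exists and that, for $q\ge\bar{q}$, one may select an optimizer attaining the minimal value $G^{\star}$. Once this is settled by the lower-semicontinuity observation (equivalently, by reducing $(\mathcal{P}_{\sf{PFCPM}})$ to a finite search over the patterns in $\mathcal{S}_{\mathfrak{a}}$ together with a fixed-support power minimization for each pattern), the proposition reduces to the two one-line inequalities above, with the threshold $\bar{q}=\sum_{k}P_k/\sigma_{\sf{min}}$ being precisely what makes the displayed chain valid.
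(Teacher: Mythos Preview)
Your proof is correct and follows essentially the same approach as the paper. The monotonicity argument via the two-point exchange is identical, and for the lower bound both you and the paper exploit the same two ingredients---the power bound $\|\mathbf{p}\|_{\mathbf{1}}\le\sum_k P_k$ and the discreteness gap $\sigma_{\sf{min}}$---to conclude that for $q\ge\bar q$ the optimal $G$ value coincides with the feasible minimum; your explicit discussion of existence (lower semicontinuity, reduction to finitely many patterns) is a point the paper leaves implicit.
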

These results form the foundation based on which we can develop an iterative algorithm presented in Algorithm \ref{R6_alg:gms2}.
In fact, we can employ the bisection search method to update the pricing parameter $q$ until $G_{\sf{PFCPM}}(\bar{q}) = C$.
\renewcommand{\baselinestretch}{0.9}
\small
\begin{algorithm}[t]
\caption{\textsc{Pricing-based Algorithm for FCPM Problem}}
\label{R6_alg:gms2}
\begin{algorithmic}[1]
\STATE Solve PFCPM problem using Alg.~\ref{R6_alg:gms3} with $q^{(0)}=\bar{q}$.
\IF{$G_{\sf{PFCPM}}(\bar{q}) > C$} \STATE Stop, the FCPM problem is infeasible.
\ELSIF{$G_{\sf{PFCPM}}(\bar{q}) = C$} \STATE Stop, the solution is achieved.
\ELSIF{$G_{\sf{PFCPM}}(\bar{q}) < C$} 
\STATE Set $l=0$, $q_{\sf{U}}^{(l)}=\bar{q}$ and $q_{\sf{L}}^{(l)}=0$.
\REPEAT 
\STATE Set $l=l+1$ and $q^{(l)}=\left(q_{\sf{U}}^{(l-1)} + q_{\sf{L}}^{(l-1)} \right)/2$.
\STATE Solve PFCPM problem using Alg.\ref{R6_alg:gms3} with $q^{(l)}$.
\IF{$G_{\sf{PFCPM}}(q^{(l)}) > C$} \STATE Set $q_{\sf{U}}^{(l)}=q_{\sf{U}}^{(l-1)}$ and $q_{\sf{L}}^{(l)}=q^{(l)}$.
\ELSIF{$G_{\sf{PFCPM}}(q^{(l)}) < C$} \STATE Set $q_{\sf{U}}^{(l)}=q^{(l)}$ and $q_{\sf{L}}^{(l)}=q_{\sf{L}}^{(l-1)}$.
\ENDIF
\UNTIL{$G_{\sf{PFCPM}}(q^{(l)}) = C$ or $q_{\sf{U}}^{(l)} - q_{\sf{L}}^{(l)}$ is too small}.
\ENDIF
\end{algorithmic}
\end{algorithm}
\renewcommand{\baselinestretch}{1.4}
\normalsize

We now develop an efficient algorithm to solve problem $\mathcal{P}_{\sf{PFCPM}}$  based on concave approximation of the step function.
Specifically, the step function $\delta(x)$ for $x \geq 0$ can be approximated by a suitable concave function.
Denote $f_{\mathsf{apx}}^{(k,u)}(p^k_u)$ as the concave penalty function that approximates the step function $\delta(p^k_u)$ corresponding to link $(k,u)$. Then,
problem  $\mathcal{P}_{\sf{PFCPM}}$ can be approximated by the following problem
\beq
 \min \limits_{\lbrace \mathbf{v}_u^k \rbrace,\mathbf{p}} \;\; \sum \limits_{k \in \mathcal{K}} \sum \limits_{u \in \mathcal{U}} p_u^k + q \sum \limits_{k \in \mathcal{K}} \sum \limits_{u \in \mathcal{U}} f_{\mathsf{apx}}^{(k,u)}\left( p_u^k \right) R_u^{\sf{fh}} \;\; 
 \text{s. t. } \;\; \text{ constraints (\ref{eq:SINR_constraint}) and (\ref{eq:pwc}).} \label{objfun2}
\eeq
By applying the gradient method, we can solve problem (\ref{objfun2}) by iteratively solving the following problem until convergence
\beq
 \min \limits_{\lbrace \mathbf{v}_u^k \rbrace} \;\;\; \sum \limits_{k \in \mathcal{K}} \sum \limits_{u \in \mathcal{U}} \alpha_u^{k(n)} \mathbf{v}_u^{kH}\mathbf{v}_u^{k} \;\; \text{s. t.} \;\; \text{ constraints (\ref{eq:SINR_constraint}) and (\ref{eq:pwc})} \label{objfun3} 
\eeq
where 
\beq \label{eq:alp}
\alpha_u^{k(n)}=  1 + q \nabla f_{\mathsf{apx}}^{(k,u)}\left( p_u^k \right)R_u^{\sf{fh}}.
\eeq
Problem (\ref{objfun3}) is a weighted sum-power 
minimization problem, which can be transformed into the convex semi-definite program (SDP) as presented in Section~\ref{Ch2_sec_PMP}.
Then, the algorithm to solve problem $\mathcal{P}_{\sf{PFCPM}}$ is presented in Algorithm~\ref{R6_alg:gms3}.
\nomenclature{SDP}{Semi-Definite Program}
Algorithm ~\ref{R6_alg:gms2}, which is proposed to solve problem $\mathcal{P}_{\sf{FCPM}}$, is based on the solution of problem $\mathcal{P}_{\sf{PFCPM}}$, which can be 
obtained by using Algorithm~\ref{R6_alg:gms3}.
In addition, the results stated in Proposition~\ref{R6_lm1} and standard properties of the gradient method guarantee the convergence of this algorithm.
\renewcommand{\baselinestretch}{0.9}
\small
\begin{algorithm}[!t]
\caption{\textsc{SDP-Based Algorithm for PFCPM Problem}}
\label{R6_alg:gms3}
\begin{algorithmic}[1]
\STATE Initialization: Set $n=0$, and $\alpha_u^{k(0)}=1$ for all RRH-UE links $(k,u)$.

\STATE Iteration $n$:
\begin{description}
\item[a.] Solve problem (\ref{objfun3}) with $\left\lbrace \alpha_u^{k(n-1)} \right\rbrace $ to obtain $(\mathbf{p}^{(n)},\lbrace \mathbf{v}_u^k \rbrace^{(n)})$.
\item[b.] Update $\left\lbrace \alpha_u^{k(n)} \right\rbrace $ as in (\ref{eq:alp}).
\end{description} 
\STATE Set $n:=n+1$ and go back to Step 2 until convergence.
\end{algorithmic}
\end{algorithm}
\renewcommand{\baselinestretch}{1.4}
\normalsize

\subsubsection{Iterative Linear-Relaxed Algorithm}

We now present the iterative linear-relaxed algorithm to directly deal with the step-function in the fronthaul capacity constraint.
First, problem $\mathcal{P}_{\sf{FCPM}}$ can be approximated by the following problem
\begin{align}
 \min \limits_{\lbrace \mathbf{v}_u^k \rbrace, \mathbf{p}} & \;\;\; \Vert \mathbf{p} \Vert_{\mathbf{1}}  \label{obj_4} \\
 \text{s. t. } & \; \text{ constraint (\ref{eq:SINR_constraint})}  \nonumber  \\
 {} & \; \sum \limits_{k \in \mathcal{K}} \sum \limits_{u \in \mathcal{U}} f_{\mathsf{apx}}^{(k,u)}( p_u^k ) R_u^{\sf{fh}} \leq C. \label{const4}
\end{align}
We next approximate $f_{\mathsf{apx}}^{(k,u)}(p_u^k)$ by a linear function, which is based on the duality properties of conjugate 
of convex functions \cite{rockafellar70} as follows:
\beq \label{eq:g2}
f_{\mathsf{apx}}^{(k,u)}( p_u^k )= \inf \limits_{z_u^k} \left[ z_u^k p_u^k - f_{\mathsf{apx}}^{(k,u)\ast}(z_u^k) \right]  
\eeq
where $f_{\mathsf{apx}}^{(k,u)\ast}(z)= \inf \limits_{w} \left[ z w -f_{\mathsf{apx}}^{(k,u)}(w) \right]$.
It can be verified that the optimization problem on the right-hand side of (\ref{eq:g2}) achieves its minimum at
\beq \label{eq:z}
\hat{z}_u^k=\nabla f_{\mathsf{apx}}^{(k,u)}( w )\vert_{w=p_u^k}.
\eeq
Hence, for a given value of $\left\lbrace \hat{z}_u^k \right\rbrace$, the problem (\ref{obj_4})-(\ref{const4}) can be reformulated to
\begin{align}
 {} & \min \limits_{\lbrace \mathbf{v}_u^k \rbrace} \:\:\: \sum \limits_{k \in \mathcal{K}} \sum \limits_{u \in \mathcal{U}} \mathbf{v}_u^{kH}\mathbf{v}_u^{k}  \label{obj_1sdp} \\
 \text{s. t. } & \; \text{ constraint (\ref{eq:SINR_constraint})}  \nonumber  \\
  {} & \!\!\!\!\! \sum \limits_{k \in \mathcal{K}} \sum \limits_{u \in \mathcal{U}} \! \hat{z}_u^k  R_u^{\sf{fh}}  \mathbf{v}_u^{kH}\mathbf{v}_u^{k} \leq   C \!  + \! \sum \limits_{k \in \mathcal{K}} \sum \limits_{u \in \mathcal{U}} \! R_u^{\sf{fh}} f_{\mathsf{apx}}^{(k,u)\ast}(\hat{z}_u^k).  \label{const5}
\end{align}
The problem (\ref{obj_1sdp})-(\ref{const5}) is indeed the well-known sum-power minimization problem, which can be solved by 
transforming it into the SDP as described in Section~\ref{Ch2_sec_PMP}.
In summary, we can fulfil our design objectives by updating $\left\lbrace \hat{z}_u^k \right\rbrace$ iteratively based 
on which we repeatedly solve the precoding optimization problem (\ref{obj_1sdp})-(\ref{const5}).
This whole algorithm is described in Algorithm~\ref{R6_alg:gms4}. 
In addition, the properties of conjugate function can guarantee the convergence of this algorithm.
\renewcommand{\baselinestretch}{0.9}
\small
\begin{algorithm}[!t]
\caption{\textsc{Iterative Linear-Relaxed Algorithm}}
\label{R6_alg:gms4}
\begin{algorithmic}[1]
\STATE Start with a feasible solution and set $l=0$.
\REPEAT 
\STATE Calculate $\left\lbrace \hat{z}_u^{k,(l)} \right\rbrace $ as in (\ref{eq:z}) for all $(k,u)$.
\STATE Solve problem (\ref{obj_1sdp})-(\ref{const5}) with $\left\lbrace \hat{z}_u^{k,(l)}\right\rbrace $.
\STATE Update $l=l+1$.
\UNTIL Convergence.
\end{algorithmic}
\end{algorithm}
\renewcommand{\baselinestretch}{1.4}
\normalsize

\subsubsection{Numerical Results}

We consider three RRHs in the simulation setting where the distances between their centers are equal to $500 \; m$. Moreover,
UEs are randomly placed inside a circle whose center is collocated with one of these three RRHs and the radius of each circle is $125 \; m$. 
The channel gains are generated by considering both Rayleigh fading and path loss. 
Other parameters are set as follows: $\sigma^2=10^{-13} \; W$, $\epsilon=10^{-6}$, $\tau=10^{-8}$, $P_k=3 \; W$, and $N_k=4$, for all $k \in \mathcal{K}$. 

\begin{figure}[!t]
\begin{center}
\includegraphics[width=0.7\textwidth]{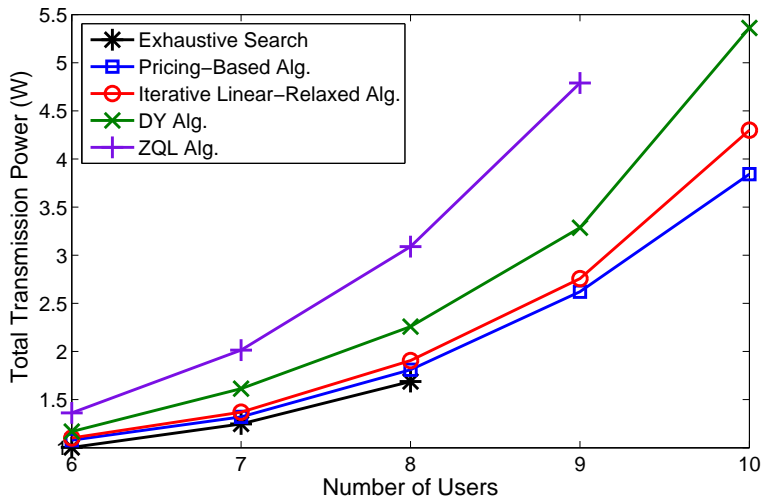}
\end{center}
\caption{Total power versus number of UEs in small network.}
\label{SMR_3BS_PvsM}
\end{figure}

\begin{figure}[!t]
\begin{center}
\includegraphics[width=0.7\textwidth]{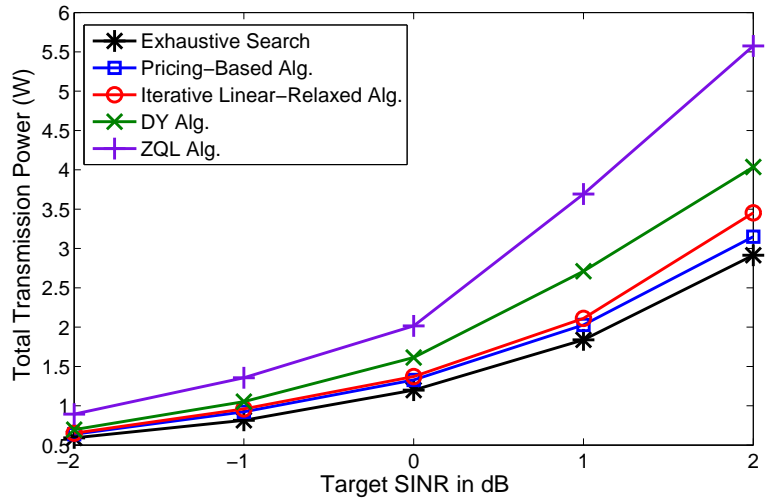}
\end{center}
\caption{Total power versus UE target SINR in small network.}
\label{SMR_3BS_PvsSINR}
\end{figure}

In Fig.~\ref{SMR_3BS_PvsM} and Fig.~\ref{SMR_3BS_PvsSINR}, we show total transmission powers of all RRHs achieved by the
exhaustive search method, our proposed algorithms, and two reference algorithms, versus the number of UEs and UE target SINR, respectively. 
As can be seen, our proposed algorithms result in lower total transmission power compared to the existing DY and ZQL algorithms.
Moreover, the pricing-based algorithm is slightly better than the iterative linear-relaxed algorithm, and 
both proposed algorithms require marginally higher total transmission power than that due to the optimal exhaustive search algorithm.
In addition, the DY algorithm outperforms the ZQL algorithm but these existing algorithms demand considerably higher powers compared to our algorithms.
Also, this figure shows that the total transmission power increases as the number of UEs becomes larger under all algorithms as expected.

\subsection{Resource Allocation for Wireless Virtualization of OFDMA-Based Cloud Radio Access Networks}

This contribution considers the resource allocation design for the OFDMA uplink C-RAN supporting multiple operators (OPs).
To our best knowledge, uplink C-RAN design considering constraints on limited fronthaul capacity and cloud computation resources has not
been studied. This line of work aims to fill this gap in the existing C-RAN literature where we make
the following contributions.
\nomenclature{OP}{Operator}
\begin{itemize}
\item
We consider the resource allocation for the virtualized OFDMA uplink C-RAN where multiple OPs share the C-RAN
infrastructure and resources owned by an infrastructure provider. The infrastructure provider and OPs are interested in
maximizing their profits, which are modelled by two different resource allocation
problems, namely upper-level and lower-level problems, respectively. The upper-level
problem focuses on slicing the fronthaul capacity and cloud computing resources for all OPs. Moreover,
the lower-level maximizes the OP's sum rate by optimizing users' transmission rates and quantization bit allocation 
for the compressed I/Q baseband signals, which must be transferred from RRHs to the cloud. Our design aims at
maximizing the weighted sum profit of both infrastructure provider and OPs considering constraints
on the fronthaul capacity and cloud computation limit.
\nomenclature{I/Q}{In-phase and Quadrature components}
\item
We develop a two-stage algorithmic framework to fulfil the design. 
In the first stage, we transform both upper-level and lower-level  problems into
the corresponding optimization problems by relaxing the underlying discrete variables 
to the continuous variables. We show that these relaxed problems are convex and therefore they can be solved optimally.
In the second stage, we propose two methods to round the solution obtained by solving the relaxed problems
to achieve a final feasible solution for the original problem.  
\end{itemize}

\subsubsection{System Model}

We consider the OFDMA uplink C-RAN system with $K$ RRHs ($K$ cells) supporting $O$ different OPs.
We assume that each cell utilizes the whole spectrum with $S$ physical resource blocks (PRBs). 
\nomenclature{PRB}{Physical Resource Block}
Denote $\mathcal{S}$ and $\mathcal{S}_k^o$ as the set of all PRBs and the set of PRBs assigned for OP $o$ in cell $k$, respectively. 
Let $x_{k}^{(s)}$ be the baseband signal transmitted on PRB $s$ in cell $k$.
Then, the signal received at RRH $k$ on PRB $s$ can be written as
\beq
y^{(s)}_{k} = \sum_{j \in \mathcal{K}} h_{k,j}^{(s)} \sqrt{p_{j}^{(s)}} x_{j}^{(s)} + \eta_k^{(s)},
\eeq
where $p_{j}^{(s)}$ denotes the transmission power corresponding to $x_{j}^{(s)}$, 
$h_{k,j}^{(s)}$ is the channel gain from UE assigned PRB $s$ in cell $j$ to RRH $k$, and $\eta_k^{(s)} \sim \mathcal{CN}\left( 0,\sigma_k^{(s)2}\right)$ denotes the Gaussian noise. Assume that RRH $k$ uses $b_{k}^{(s)}$ bits to quantize each of the I/Q parts of $y^{(s)}_{k}$ before forwarding them to cloud.
Then, the quantization noise power and the SINR of the quantized signal can be approximated as
\beqn 
q^{(s)}_{k}(b^{(s)}_{k}) \simeq   \dfrac{\sqrt{3} \pi}{2^{2 b_{k}^{(s)} +1}}  Y^{(s)}_{k}, \label{ch1_eq:q_kmn} \\
\gamma^{(s)}_{k}(b^{(s)}_{k}) = \dfrac{D_k^{(s)}}{I_k^{(s)}+ q^{(s)}_{k}(b^{(s)}_{k})} \simeq \dfrac{D_k^{(s)}}{I_k^{(s)} + \frac{\sqrt{3} 
\pi Y^{(s)}_{k}}{2^{2 b_{k}^{(s)} +1}} }, \label{ch1_eq:SINR}
\eeqn
where $Y^{(s)}_{k}$ represents the power of the received signal $y^{(s)}_{k}$, i.e., $ Y^{(s)}_{k} = \sum \limits_{j \in \mathcal{K}} \vert h_{k,j}^{(s)} \vert ^2 p_{j}^{(s)} + \sigma_k^{(s)2} = D_k^{(s)} + I_k^{(s)}$. Moreover, $I_k^{(s)}$ denotes the multi-cell interference on PRB $s$.
We assume that a quantization process satisfies $q^{(s)}_{k}(b^{(s)}_{k}) \leq \sqrt{Y^{(s)}_{k} I^{(s)}_{k}}$, which can
 guarantee the quality of the quantized signal as $\gamma^{(s)}_{k}(b^{(s)}_{k}) \geq \underline{\gamma}^{(s)}_{k} = \sqrt{\bar{\gamma}^{(s)}_{k} + 1} - 1$.
This design requirement is equivalent to $b^{(s)}_{k} \geq \lceil \underline{b}^{(s)}_{k} \rceil = \dfrac{1}{2}\left( \log_2 \left( \sqrt{3}\pi \sqrt{\dfrac{ Y^{(s)}_{k}}{I^{(s)}_{k}}} \right) -1\right)$.
Then, the total number of quantization bits corresponding to OP $o$ in cell $k$ can be expressed as
\beq
B_k^o = (2N_{\sf{RE}}) \sum_{s \in \mathcal{S}_k^o} b_{k}^{(s)},
\eeq 
where $N_{\sf{RE}}$ is the number of resource elements (REs) in one second where RE corresponds to one data symbol on each subcarrier.
We assume that the data rate $r^{(s)}_{k}$ (in \textit{``bits per channel use (bits pcu)''}, or ``bits per RE'')  for transmission over PRB $s$ 
is chosen from a pre-determined set of rates $\mathcal{R} = \lbrace R_1, R_2, ..., R_{M_R} \rbrace$. 
\nomenclature{RE}{Resource Element}
Then, the computation complexity required to successfully decode the information bits can be expressed as 
\beq \label{ch1_eq:cpt}
C^{(s)}_{k}  = \chi^{(s)}_{k}(r^{(s)}_{k},b^{(s)}_{k})  =  A r^{(s)}_{k} \!\! \left[B - 2 \log_2 \! \left( \log_2 \! 
\left( \! 1 +\gamma^{(s)}_{k}(b^{(s)}_{k}) \! \right) \! - r^{(s)}_{k} \! \right) \! \right].
\eeq
Let $\mb{r}^o$ and $\mb{b}^o$ denote the vectors representing all UEs' rates and numbers of quantization bits corresponding to OP $o$.
Then, the total computation complexity required by OP $o$ (in \textit{``bit-iteration per second (bips)''}) can be written as 
\beq
C_{\sf{tt}}^o(\mb{r}^o,\mb{b}^o) = N_{\sf{RE}} \sum_{k \in \mathcal{K}} \sum_{s \in \mathcal{S}^o_k} C^{(s)}_{k}.
\eeq

We are now ready to present the problem formulation. Let $C^o$ and $B^o_k$ be the computation complexity and fronthaul capacity corresponding to OP $o$ in cell $k$, and their prices are denoted as $\psi^o$ (\textcent$/bips$) and $\beta^o_k$ (\textcent$/bps$), respectively.
Then, OP $o$ must pay  an amount of money equal to $G^{\sf{InP}}_o=\psi^o C^o + \sum 
\limits_{k \in \mathcal{K}} \beta^o_k B^o_k$ to utilize the allocated resource slice.
Moreover, let $G^{\sf{InP}} = \sum \limits_{o \in \mathcal{O}} G^{\sf{InP}}_o$ denote the revenue obtained by the infrastructure provider.
Ignoring the operation cost of the infrastructure provider, this revenue $G^{\sf{InP}}$ can be considered as the profit of this infrastructure provider.
Note that we do not consider the optimization for the revenue or cost related to the radio spectrum, which is assumed to be fixed terms.

For notational convenience, let us introduce the vector $\mb{B}^o = [B^o_1, ..., B^o_K]$. 
Moreover, we denote $R_o(C^o,\mb{B}^o)$ as the maximum sum-rate achieved by all UEs from OP $o$,
 which is the outcome of the following lower-level problem $(\mathcal{P}^o)$ corresponding to OP $o$:
\begin{subequations} \label{ch1_obj_11}
\begin{align}
 R_o(C^o,\mb{B}^o) = \max \limits_{\mb{r}^o, \mb{b}^o} & \;\;\;\;\;  \sum_{k \in \mathcal{K}} \sum_{s \in \mathcal{S}^o_k} r^{(s)}_{k}  \label{ch1_obj_1a} \\
\text{s. t. } & \; \sum_{k \in \mathcal{K}} \sum_{s \in \mathcal{S}^o_k} C^{(s)}_{k} \leq C^o/N_{\sf{RE}}, \label{ch1_obj_1b} \\
 {} & \;  r^{(s)}_{k} \leq \log_2 \left(1 + \gamma^{(s)}_{k}(b^{(s)}_{k}) \right), \forall k \in \mathcal{K}, \forall s \in \mathcal{S}^o_k, \label{ch1_obj_1c}\\
 {} & \; \sum_{s \in \mathcal{S}^o_k} b_{k}^{(s)} \leq B^o_k/(2N_{\sf{RE}}), \forall k \in \mathcal{K}, \label{ch1_obj_1d} \\
  {} & \; b^{(s)}_{k} \geq \lceil \underline{b}^{(s)}_{k} \rceil, \forall k \in \mathcal{K}, \forall s \in \mathcal{S}^o_k, \label{ch1_obj_1e} \\
 {} & \; \text{$b^{(s)}_{k}$ is integer and } r^{(s)}_{k} \in \mathcal{M}_R, \forall k \in \mathcal{K}, \forall s \in \mathcal{S}^o_k. \label{ch1_obj_1f} 
\end{align}
\end{subequations}
Assume that the price per rate unit of OP $o$ is $\rho^o$ (\textcent$/bps$).
Then, the profit achieved by OP $o$ can be expressed as $G^{\sf{OP}}_o= \rho^o N_{\sf{RE}} R_o(C^o,\mb{B}^o) - G^{\sf{InP}}_o$.
The upper-level problem aims to maximize the weighted sum of the profits of the infrastructure provider and the OPs,
which can be stated as
\begin{subequations} \label{ch1_obj_U}
\begin{align}
 \max \limits_{\lbrace C^o, \mb{B}^o \rbrace} & \;\;\;\;\; \upsilon^{\sf{InP}} G^{\sf{InP}} + \sum \limits_{o \in \mathcal{O}} \upsilon^o G^{\sf{OP}}_o, \label{ch1_obj_Ua} \\
\text{s. t. } & \;\;\; \sum_{o \in \mathcal{O}} C^{o} \leq \bar{C}_{\sf{cloud}}, \label{ch1_obj_Ub} \\
{} & \;\;\; \sum_{o \in \mathcal{O}} B^{o}_k \leq \bar{B}_k, \forall k \in \mathcal{K}. \label{ch1_obj_Uc} 
\end{align}
\end{subequations}

\subsubsection{Problem Relaxation and Convexity}
We first approximate the lower-level problems by relaxing their discrete optimization variables into the continuous ones
and obtain the following relaxed lower-level (RLL) problem \nomenclature{RLL}{Relaxed Lower-Level}
\begin{align}
 \max \limits_{\mb{r}^o, \mb{b}^o} \;\;\;  \sum_{k \in \mathcal{K}} \sum_{s \in \mathcal{S}} r^{(s)}_{k} \;\;\; \text{s. t. } & \text{ (\ref{ch1_obj_1b})-(\ref{ch1_obj_1e}),} \label{ch1_ctb4_obj_2} \\
 {} & R_{\sf{min}} \leq r^{(s)}_{k} \leq R_{\sf{max}}, \forall k \in \mathcal{K}, \forall s \in \mathcal{S}, \label{ch1_obj_1f2}
\end{align}
where $R_{\sf{min}}$ and $R_{\sf{max}}$ are the lowest and highest rate in the rate set $\mathcal{M}_R$, respectively.
Let $\bar{R}_o(C^o,\mb{B}^o)$ be the optimal solution of this RLL problem. 
Then, the upper-level problem can be relaxed to the following relaxed upper-level problem (RUL) \nomenclature{RUL}{Relaxed Upper-Level}
\begin{align}
 \max \limits_{\lbrace C^o, \mb{B}^o \rbrace} \;\;  \sum \limits_{o \in \mathcal{O}} \Psi^o(C^o, \mb{B}^o) \;\; \text{s. t.} \;\;\; \text{(\ref{ch1_obj_Ub}), (\ref{ch1_obj_Uc})}, \label{ch1_obj_U2}
\end{align}
where $\Psi^o(C^o, \mb{B}^o)=(\upsilon^{\sf{InP}}-\upsilon^o) G^{\sf{InP}}_o + \upsilon^o \rho^o N_{\sf{RE}} \bar{R}_o(C^o,\mb{B}^o)$. 
We state some important results based on which we can develop algorithms to resolve our design problem in the following propositions and
theorems. 
\begin{theorem} \label{ch1_R8_thr1} $\chi^{(s)}_{k,u}(r^{(s)}_{k},b^{(s)}_{k})$ is a jointly convex function with respect to variables $(r^{(s)}_{k},b^{(s)}_{k})$ if $q^{(s)}_{k}(b^{(s)}_{k}) \leq \sqrt{Y^{(s)}_{k} I^{(s)}_{k}}$.
\end{theorem}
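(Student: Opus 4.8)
The plan is to peel off the affine part of $\chi^{(s)}_{k}$ and reduce the claim to checking that a $2\times 2$ Hessian is positive semidefinite. Dropping the $(s),k$ superscripts and writing $g(b):=\log_2\!\big(1+\gamma^{(s)}_{k}(b)\big)$ and $w:=g(b)-r$, formula (\ref{ch1_eq:cpt}) reads $\chi = ABr - 2Ar\log_2 w$. Since $A>0$ and $ABr$ is affine, joint convexity of $\chi$ in $(r,b)$ is equivalent to joint convexity of $-r\log_2 w$; moreover, on the feasible set the rate constraint (\ref{ch1_obj_1c}) gives $w>0$, so everything is well defined. So I would simply compute $\nabla^2\chi$ and argue it is positive semidefinite on the feasible region.

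The single point where the hypothesis $q^{(s)}_{k}(b)\le\sqrt{Y^{(s)}_{k}I^{(s)}_{k}}$ is used is to make $g$ concave in $b$. Using (\ref{ch1_eq:q_kmn}) I would write $q(b)=\tfrac{\sqrt{3}\,\pi}{2}\,Y\,2^{-2b}$, which is positive, convex and decreasing, and then use $1+\gamma(b)=\tfrac{Y+q(b)}{I+q(b)}$ with $D=Y-I$ to get $g(b)=\log_2(Y+q(b))-\log_2(I+q(b))$. Differentiating twice in $b$ (chain rule through $q'(b)=-2(\ln 2)q(b)$) gives, after simplification,
\[
g'(b)=\frac{2\,q(b)\,D}{(Y+q(b))(I+q(b))}>0, \qquad
g''(b)=\frac{4(\ln 2)\,q(b)\,D\,\big(q(b)^2-YI\big)}{\big[(Y+q(b))(I+q(b))\big]^{2}}.
\]
Hence $g''(b)\le 0$ exactly when $q(b)^2\le YI$, i.e. the hypothesis is precisely the concavity condition for $g$ (equivalently, it is the quantization constraint $b\ge\underline{b}^{(s)}_{k}$ itself); in particular $w(r,b)=g(b)-r$ is then jointly concave in $(r,b)$.

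To finish I would plug $g,g',g''$ into the Hessian of $\chi$. A direct computation gives the diagonal entries $\chi_{rr}=\tfrac{2A}{\ln 2}\,\tfrac{2w+r}{w^{2}}>0$ and $\chi_{bb}=\tfrac{2Ar}{\ln 2}\,\tfrac{(g')^{2}-g''w}{w^{2}}\ge 0$ (the latter nonnegative because $g''\le0$ and $w>0$), together with the off-diagonal entry $\chi_{rb}=-\tfrac{2Ag'}{\ln 2}\,\tfrac{w+r}{w^{2}}$; a short cancellation then collapses the determinant to
\[
\det\nabla^{2}\chi=\frac{-4A^{2}}{(\ln 2)^{2}\,w^{3}}\Big[(g')^{2}w+r(2w+r)g''\Big].
\]
So the remaining task is the inequality $r(2w+r)\,(-g'')\ge (g')^{2}w$; substituting the closed forms above and cancelling the common factor $4q(b)D/[(Y+q(b))(I+q(b))]^{2}$ reduces it to $(\ln 2)\,r(2w+r)\big(YI-q(b)^{2}\big)\ge q(b)\,D\,w$. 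I expect this last inequality — which has to combine the hypothesis $q(b)^2\le YI$, the positivity $w=g(b)-r>0$, and the admissible range of the rate $r$ — to be the main obstacle and the place where the argument actually uses the structure of the rate set; the reductions in the first two paragraphs are essentially bookkeeping plus one sign‑of‑second‑derivative check. An alternative to the Hessian route would be to write $-2r\log_2 w=2r\log_2(r/w)-2r\log_2 r$ and recognize $2r\log_2(r/w)$ as the jointly convex perspective‑type map $(r,w)\mapsto r\log_2(r/w)$ composed with the affine map $r$ and the concave map $(r,b)\mapsto w$, so that the same residual term carries the entire difficulty.
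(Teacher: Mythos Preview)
Your approach is exactly the paper's: compute the $2\times 2$ Hessian of $\chi$ and check positive semidefiniteness. Your Hessian entries are correct. The paper's proof, however, contains arithmetic errors in $H_{11}$ and $H_{12}$: it records
\[
H_{11}=\frac{2Ar}{(\ln 2)(Z-r)^2},\qquad
H_{12}=-\frac{2\sqrt{3}\pi ADYr}{(\ln 2)\,2^{2b}X(X+D)(Z-r)^2},
\]
whereas the correct values (which you obtained) carry $2(Z-r)+r$ and $(Z-r)+r=Z$ in the respective numerators in place of the lone $r$. These two mistakes happen to cancel when the paper forms $H_{11}H_{22}-H_{12}^2$, producing the tidy expression $\det\mathbf{H}=\tfrac{8\sqrt{3}\pi A^2 DYr^2(IY-q^2)}{(\ln 2)\,2^{2b}X^2(X+D)^2(Z-r)^3}$, which is manifestly nonnegative once $q^2\le IY$. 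So the paper's argument \emph{appears} to close with no residual inequality --- but only because of the miscomputation.

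Your determinant formula is the correct one, and the residual inequality you isolate, $(\ln 2)\,r(2w+r)(YI-q^2)\ge qDw$, is genuine rather than an artifact of bookkeeping. In fact it does \emph{not} follow from the stated hypothesis alone: at the boundary $q^2=YI$ one has $g''=0$, and your determinant collapses to $-\tfrac{4A^2(g')^2}{(\ln 2)^2 w^2}<0$, so the Hessian is indefinite there; by continuity the same failure occurs for $q^2$ slightly below $YI$, and more generally for $r$ small at any interior point (the left side is $O(r)$ while the right side stays bounded away from zero). Your instinct that this is ``the main obstacle'' is therefore exactly right: the statement as written needs an additional restriction --- e.g.\ the lower bound $r\ge R_{\min}$ in the relaxed feasible set, combined with a quantitative margin in $YI-q^2$ --- for joint convexity to hold on the whole region. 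The paper's clean conclusion is an artifact of its errors in $H_{11}$ and $H_{12}$, not a property of the function.
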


\begin{proposition} \label{ch1_R8_prt2}
The optimization problem RLL is convex.
\end{proposition}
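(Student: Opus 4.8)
The plan is to cast RLL as the maximization of a linear (hence concave) objective over an intersection of convex sets, and then check the constraints one at a time. Since $\sum_{k}\sum_{s} r^{(s)}_{k}$ is affine, everything reduces to showing that the feasible region defined by (\ref{ch1_obj_1b})--(\ref{ch1_obj_1e}) and (\ref{ch1_obj_1f2}) is convex, for which it suffices to handle each constraint separately. Constraints (\ref{ch1_obj_1d}), (\ref{ch1_obj_1e}) and (\ref{ch1_obj_1f2}) are affine in $(\mb{r}^o,\mb{b}^o)$ once we note that, for a fixed power allocation, $\lceil \underline{b}^{(s)}_{k}\rceil$, $B^o_k/(2N_{\sf{RE}})$, $R_{\sf{min}}$ and $R_{\sf{max}}$ are constants, so each defines a half-space or a box. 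For (\ref{ch1_obj_1b}), I would invoke Theorem~\ref{ch1_R8_thr1}: on the feasible set, (\ref{ch1_obj_1e}) enforces $b^{(s)}_{k}\ge \underline{b}^{(s)}_{k}$, which is exactly the hypothesis $q^{(s)}_{k}(b^{(s)}_{k})\le \sqrt{Y^{(s)}_{k}I^{(s)}_{k}}$ of that theorem, so each $C^{(s)}_{k}=\chi^{(s)}_{k}(r^{(s)}_{k},b^{(s)}_{k})$ is jointly convex there; a finite sum of jointly convex functions is jointly convex, so (\ref{ch1_obj_1b}) is a sublevel set of a convex function and therefore convex.

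The one constraint that requires real work is (\ref{ch1_obj_1c}), $r^{(s)}_{k}\le \log_2(1+\gamma^{(s)}_{k}(b^{(s)}_{k}))$: I need the right-hand side to be concave in $b^{(s)}_{k}$, so that $r^{(s)}_{k}-\log_2(1+\gamma^{(s)}_{k}(b^{(s)}_{k}))$ is jointly convex and the constraint is a convex sublevel set. To show this I would write $q=q^{(s)}_{k}(b^{(s)}_{k})=\tfrac{\sqrt3\pi}{2}Y^{(s)}_{k}\,2^{-2b^{(s)}_{k}}$ and use $Y^{(s)}_{k}=D^{(s)}_{k}+I^{(s)}_{k}$ to obtain $1+\gamma^{(s)}_{k}(b^{(s)}_{k})=(Y^{(s)}_{k}+q)/(I^{(s)}_{k}+q)$, so the quantity equals $\log_2(Y^{(s)}_{k}+q)-\log_2(I^{(s)}_{k}+q)$ seen as a function of $b^{(s)}_{k}$ through $q$. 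Differentiating twice in $b^{(s)}_{k}$, the cross terms collapse and the second derivative comes out in the form
\[
\frac{d^{2}}{d(b^{(s)}_{k})^{2}}\log_2\!\big(1+\gamma^{(s)}_{k}(b^{(s)}_{k})\big) = -\,c\,\frac{2^{-2b^{(s)}_{k}}\big(Y^{(s)}_{k}I^{(s)}_{k}-q^{2}\big)}{(Y^{(s)}_{k}+q)^{2}(I^{(s)}_{k}+q)^{2}},
\]
with a constant $c>0$ depending only on $D^{(s)}_{k}$ and $Y^{(s)}_{k}$; hence the function is concave exactly on $\{q\le\sqrt{Y^{(s)}_{k}I^{(s)}_{k}}\}$, which is precisely the region enforced on the RLL feasible set by (\ref{ch1_obj_1e}) (one checks directly that $q(\underline{b}^{(s)}_{k})=\sqrt{Y^{(s)}_{k}I^{(s)}_{k}}$). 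With all five constraint sets convex, their intersection convex, and the objective concave, RLL is a convex program.

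I expect the main obstacle to be exactly this concavity of $\log_2(1+\gamma^{(s)}_{k}(\cdot))$ in the quantization-bit variable: it is genuinely false for unrestricted $b^{(s)}_{k}$ (the function flattens out as $b^{(s)}_{k}\to\pm\infty$, so it must have an inflection point), and the crux is that the sign of its second derivative is governed precisely by $Y^{(s)}_{k}I^{(s)}_{k}-q^{2}$, which the model's quantization-quality assumption — imposed operationally through the lower bound (\ref{ch1_obj_1e}) — keeps nonnegative. Everything else is routine bookkeeping: a linear objective, affine constraints, one appeal to Theorem~\ref{ch1_R8_thr1}, and closure of convexity under finite intersection.
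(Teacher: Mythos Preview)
Your argument is correct and follows essentially the same route as the paper's own proof: linear objective, affine constraints (\ref{ch1_obj_1d}), (\ref{ch1_obj_1e}), (\ref{ch1_obj_1f2}), Theorem~\ref{ch1_R8_thr1} for the computation constraint (\ref{ch1_obj_1b}), and concavity of $\log_2(1+\gamma^{(s)}_{k}(b^{(s)}_{k}))$ in $b^{(s)}_{k}$ for (\ref{ch1_obj_1c}). The paper simply asserts this last concavity without justification, whereas you actually compute the second derivative and identify that its sign is governed by $Y^{(s)}_{k}I^{(s)}_{k}-q^2$, tying it back to the quantization-quality bound (\ref{ch1_obj_1e}); this is a genuine value-add over the paper's treatment.
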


\begin{theorem} \label{ch1_R8_thr2}
$\bar{R}_o(C^o,\mb{B}^o)$ is a concave function with respect to $C^o$ and $\mb{B}^o$.
\end{theorem}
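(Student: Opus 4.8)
The plan is to recognize $\bar{R}_o(C^o,\mb{B}^o)$ as the optimal-value (perturbation) function of a parametric convex program in which the parameters $C^o$ and $\mb{B}^o$ enter only affinely, and only on the right-hand sides of the constraints; such a function is concave in its parameters by a standard convex-combination argument.

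First I would isolate how $(C^o,\mb{B}^o)$ enters the RLL problem. In the formulation in (\ref{ch1_obj_11}), $C^o$ appears only through the cloud-computation budget $\sum_{k\in\mathcal{K}}\sum_{s\in\mathcal{S}^o_k} C^{(s)}_{k} \leq C^o/N_{\sf{RE}}$ of (\ref{ch1_obj_1b}), each $B^o_k$ appears only through the fronthaul-bit budget $\sum_{s\in\mathcal{S}^o_k} b^{(s)}_{k} \leq B^o_k/(2N_{\sf{RE}})$ of (\ref{ch1_obj_1d}), and all remaining constraints --- (\ref{ch1_obj_1c}), (\ref{ch1_obj_1e}) and the box constraints $R_{\sf{min}}\leq r^{(s)}_{k}\leq R_{\sf{max}}$ --- do not involve $(C^o,\mb{B}^o)$. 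By Theorem \ref{ch1_R8_thr1}, each $C^{(s)}_{k}=\chi^{(s)}_{k}(r^{(s)}_{k},b^{(s)}_{k})$ is jointly convex in $(r^{(s)}_{k},b^{(s)}_{k})$, so the left-hand side of (\ref{ch1_obj_1b}) is a convex function of $(\mb{r}^o,\mb{b}^o)$; the left-hand side of (\ref{ch1_obj_1d}) is affine; and by Proposition \ref{ch1_R8_prt2} the entire feasible set of RLL is convex for every fixed $(C^o,\mb{B}^o)$. The objective $\sum_{k}\sum_{s} r^{(s)}_{k}$ is linear, hence concave, and is bounded above by $R_{\sf{max}}\,|\mathcal{K}|\,|\mathcal{S}|$, so $\bar{R}_o$ is finite wherever RLL is feasible.

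Next I would carry out the perturbation argument. Pick two parameter points $(C^o_1,\mb{B}^o_1)$ and $(C^o_2,\mb{B}^o_2)$ for which RLL is feasible, with optimal solutions $(\mb{r}^o_1,\mb{b}^o_1)$ and $(\mb{r}^o_2,\mb{b}^o_2)$ (optimizers exist because the box constraints make the feasible set compact), and fix $\lambda\in[0,1]$. Set $(\mb{r}^o_\lambda,\mb{b}^o_\lambda)=\lambda(\mb{r}^o_1,\mb{b}^o_1)+(1-\lambda)(\mb{r}^o_2,\mb{b}^o_2)$. Convexity of the left-hand side of (\ref{ch1_obj_1b}) gives $\sum_{k}\sum_{s} C^{(s)}_{k}(\mb{r}^o_\lambda,\mb{b}^o_\lambda)\leq \bigl(\lambda C^o_1+(1-\lambda)C^o_2\bigr)/N_{\sf{RE}}$, since the parameter sits affinely on the right-hand side; the affine constraint (\ref{ch1_obj_1d}) is preserved with $B^o_{k}$ replaced by $\lambda B^o_{k,1}+(1-\lambda)B^o_{k,2}$; and the parameter-free convex constraints (\ref{ch1_obj_1c}), (\ref{ch1_obj_1e}) and the box constraints are preserved by the convexity guaranteed in Proposition \ref{ch1_R8_prt2}. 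Hence $(\mb{r}^o_\lambda,\mb{b}^o_\lambda)$ is feasible for the parameter $\bigl(\lambda C^o_1+(1-\lambda)C^o_2,\ \lambda\mb{B}^o_1+(1-\lambda)\mb{B}^o_2\bigr)$, and evaluating the linear objective yields $\bar{R}_o\bigl(\lambda C^o_1+(1-\lambda)C^o_2,\lambda\mb{B}^o_1+(1-\lambda)\mb{B}^o_2\bigr)\geq \lambda\bar{R}_o(C^o_1,\mb{B}^o_1)+(1-\lambda)\bar{R}_o(C^o_2,\mb{B}^o_2)$. The same midpoint construction shows the set of parameters for which RLL is feasible is convex, so this inequality is precisely joint concavity of $\bar{R}_o$ over its (convex) domain.

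I expect the only delicate points to be bookkeeping rather than analysis: confirming that $C^o$ and $\mb{B}^o$ appear only affinely and only on constraint right-hand sides (which (\ref{ch1_obj_11}) makes explicit), and disposing cleanly of trivial cases (if RLL is infeasible at an endpoint there is nothing to prove). The genuine analytic content --- joint convexity of the computation-complexity function $\chi^{(s)}_{k}$, and hence convexity of the RLL feasible region --- has already been established in Theorem \ref{ch1_R8_thr1} and Proposition \ref{ch1_R8_prt2}, so no further estimates are required here.
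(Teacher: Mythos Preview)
Your proposal is correct and follows essentially the same approach as the paper: both arguments take two parameter points with optimal solutions, form the convex combination of those solutions, invoke the convexity of the RLL feasible set (via Theorem~\ref{ch1_R8_thr1} and Proposition~\ref{ch1_R8_prt2}) to verify feasibility at the interpolated parameter, and then use linearity of the objective to obtain the concavity inequality. Your write-up is somewhat more careful about side issues (boundedness, existence of optimizers, convexity of the parameter domain), but the substance is identical.
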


\begin{proposition} \label{ch1_R8_prt5}
The optimization problem RUL is convex.
\end{proposition}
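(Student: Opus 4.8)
The plan is to read ``RUL is convex'' in the usual sense for a maximization program: the feasible region must be a convex set and the objective $\sum_{o \in \mathcal{O}} \Psi^o(C^o,\mathbf{B}^o)$ must be concave. The feasibility part is immediate: the coupling constraints (\ref{ch1_obj_Ub}) and (\ref{ch1_obj_Uc}), together with the implicit nonnegativity of the resource variables $C^o \ge 0$ and $B^o_k \ge 0$, are all linear inequalities, so the feasible set is a polyhedron and hence convex. So the whole task reduces to establishing concavity of the objective.

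Since a sum of concave functions is concave, I would show that each summand $\Psi^o(C^o,\mathbf{B}^o)=(\upsilon^{\sf{InP}}-\upsilon^o) G^{\sf{InP}}_o + \upsilon^o \rho^o N_{\sf{RE}} \bar{R}_o(C^o,\mathbf{B}^o)$ is concave in $(C^o,\mathbf{B}^o)$, and then add them up over $o \in \mathcal{O}$. The first term is a real scalar multiple of $G^{\sf{InP}}_o=\psi^o C^o + \sum_{k \in \mathcal{K}} \beta^o_k B^o_k$, which is affine in $(C^o,\mathbf{B}^o)$; an affine function scaled by any real constant is still affine, hence concave, so the fact that $\upsilon^{\sf{InP}}-\upsilon^o$ may be negative is harmless. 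For the second term I would invoke Theorem \ref{ch1_R8_thr2}, which gives that $\bar{R}_o$ is concave in $(C^o,\mathbf{B}^o)$, and note that the multiplier $\upsilon^o \rho^o N_{\sf{RE}}$ is nonnegative (weights, prices, and the number of resource elements are all nonnegative), so the product remains concave. Adding the affine term and the concave term yields concavity of $\Psi^o$, and summing over $o$ gives concavity of the full objective; combined with the convex feasible set this proves RUL is a convex program.

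The only genuinely substantive ingredient is Theorem \ref{ch1_R8_thr2} (resting in turn on Theorem \ref{ch1_R8_thr1} and Proposition \ref{ch1_R8_prt2}); once the concavity of $\bar{R}_o$ is granted, this proposition is essentially bookkeeping in the calculus of concave functions. The two small points to state carefully are the sign conventions on the coefficients $\upsilon^o$, $\rho^o$, $N_{\sf{RE}}$, $\upsilon^{\sf{InP}}$, and the observation that the term carrying the possibly-negative coefficient $\upsilon^{\sf{InP}}-\upsilon^o$ is affine rather than merely concave, so its sign does not matter. I do not expect any real obstacle beyond citing the already-established concavity result.
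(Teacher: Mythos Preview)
Your proposal is correct and takes essentially the same approach as the paper: both invoke Theorem~\ref{ch1_R8_thr2} for the concavity of $\bar{R}_o$ and observe that the constraints are linear, with your version spelling out more carefully the sign considerations on the coefficients and the affine nature of the $G^{\sf{InP}}_o$ term. The paper's proof is a terse two-sentence version of exactly this argument.
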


\subsubsection{Proposed Algorithms}

Based on the results in Propositions~\ref{ch1_R8_prt2}-\ref{ch1_R8_prt5}, we can develop an algorithm to solve the RLL problem optimally as follows. 
We first express the dual function $g^o(\lambda)$ of RLL problem corresponding to OP $o$ by relaxing the cloud computation constraints as follows:
\begin{align}
g^o(\lambda^o) \! = \! \max \limits_{\mb{r}^o, \mb{b}^o} \Phi^o \! ( \! \lambda^o \! ,\mb{r}^o \!, \mb{b}^o \! ) \;\;\; \text{ s. t. (\ref{ch1_obj_1c})-(\ref{ch1_obj_1e}) and (\ref{ch1_obj_1f2}),} \label{ch1_obj_3}
\end{align}
where $\lambda_o$ denotes the Lagrange multiplier corresponding to the cloud computation constraint, and
\begin{align}
\Phi^o(\lambda^o,\mb{r}^o, \mb{b}^o) \! = \! \sum_{k \in \mathcal{K}} \! \sum_{s \in \mathcal{S}^o_k} \! r^{(s)}_{k} \! - \! \lambda^o \!\! \left( \! \sum_{k \in \mathcal{K}} \! \sum_{s \in \mathcal{S}^o_k} \! C^{(s)}_{k} \! - \! \dfrac{C^o}{N_{\sf{RE}}} \! \right).
\end{align}
Then, the dual problem can be described as $\min \limits_{\lambda^o \geq 0} g^o(\lambda^o)$. 
Since the dual problem is convex, it can be solved by employing the standard sub-gradient algorithm as follows:
\beq \label{ch1_eq:updld}
\lambda^o_{(l+1)} = \left[\lambda^o_{(l)} + \delta^o_{(l)}\left( \sum_{k \in \mathcal{K}} \sum_{s \in \mathcal{S}^o_k} C^{(s)}_{k} - \dfrac{C^o}{N_{\sf{RE}}} \right)  \right]^{+},
\eeq
where $l$ denotes the iteration index and $\delta^o_{(l)}$ represents the step size. 

The next step is to determine the optimal solutions of $\mb{r}^o$ and $\mb{b}^o$ for a given dual point $\lambda^o$.
According to the results in Proposition~\ref{ch1_R8_prt2}, problem \ref{ch1_obj_3} is also convex. In fact, it is possible to 
determine the optimal solution for one of the two variables $\mb{r}^o_k$ and $\mb{b}^o_k$ while keeping the other fixed. 
Specifically, the optimal solution of $\mb{r}^o_k$ for a given $\mb{b}^o_k$ can be expressed as
\beq \label{ch1_eq:r_opt}
r^{(s)\star}_{k} \!\! = \max \! \left[ \! R_{\sf{min}},  \min \! \left( \! t(b^{(s)}_{k}),R_{\sf{max}}, r\vert_{\frac{\partial w(r)}{\partial r} 
= - E^{(s)}_{k}} \! \right) \! \right]
\eeq
where $w(r)= 2\lambda^o A r \log_2 \left( 1 - r/{t(b^{(s)}_{k})} \right)$. 
Moreover, the optimal solution of $\mb{b}^o_k$ for a given $\mb{r}^o_k$ can be expressed as
\beq \label{ch1_eq:b_opt}
b^{(s)\star}_k =  \max \left(\lceil \underline{b}^{(s)}_{k} \rceil, t^{-1}\left( r^{(s)}_{k}\right), b\vert_{\frac{\partial z(b)}{\partial b} = \mu} \right), 
\eeq
where $\mu$ must be determined to satisfy the following constraint $\sum_{s \in \mathcal{S}^o_k} b_{k}^{(s)\star} = B^o_k/(2N_{\sf{RE}})$.
The algorithm to solve the RLL problem is summarized in Algorithm~\ref{ch1_R8_alg:gms1}.
\renewcommand{\baselinestretch}{0.9}
\small
\begin{algorithm}[!t]
\caption{\textsc{Algorithm to Solve RLL Problem}}
\label{ch1_R8_alg:gms1}
\begin{algorithmic}[1]
\STATE Initialization: Set $r_k^{(s)} = R_{\sf{min}}$ for all $(k,s) \in \mathcal{K} \times \mathcal{S}$, $\lambda^o_{(0)}=0$ and $l=0$. Choose a tolerance parameter $\varepsilon$ for convergence.
\REPEAT
\FOR{$k \in \mathcal{K}$} 
\REPEAT
\STATE Fix $\mb{r}^o_k$ and update $\mb{b}^o_k$ as in (\ref{ch1_eq:b_opt}) with $\lambda^o_{(l)}$.
\STATE Fix $\mb{b}^o_k$ and update $\mb{r}^o_k$ as in (\ref{ch1_eq:r_opt}) with $\lambda^o_{(l)}$.
\UNTIL Convergence.
\ENDFOR 
\STATE Calculate all $C_k^{(s)}$ with current values of $\mb{r}^o_k$ and $\mb{b}^o_k$.
\STATE Update $\lambda^o_{(l+1)}$ as in (\ref{ch1_eq:updld}).
\STATE Set $l=l+1$.
\UNTIL $\vert \lambda^o_{(l)} - \lambda^o_{(l-1)} \vert < \varepsilon$.
\end{algorithmic}
\end{algorithm}
\renewcommand{\baselinestretch}{1.4}
\normalsize

We now present a sub-gradient based algorithm to solve the RUL problem.
Specifically, the sub-gradient algorithm to iteratively update the $C^o,\mb{B}^o$ can be described as follows:
\beq \label{ch1_eq:updcb}
[C^o,\mb{B}^o]_{(n+1)} = \mathcal{P} \left[[C^o,\mb{B}^o]_{(n)} + \tau^o_{(n)} \nabla \Psi^o(C^o,\mb{B}^o)  \right]
\eeq
where $\tau^o_{(n)}$ is the determined step size, $\mathcal{P}\left[*\right]$ denotes the projection operation to the feasible region
of the RUL problem, and
\beq
\nabla \Psi^o(C^o,\mb{B}^o)=\left[ \dfrac{\partial\Psi^o(C^o,\mb{B}^o)}{\partial C^o}, \dfrac{\partial \Psi^o(C^o,\mb{B}^o)}{\partial B^o_1}, \ldots, 
 \dfrac{\partial \Psi^o(C^o,\mb{B}^o)}{\partial B^o_K} \right] ^T.
\eeq

We can indeed determine $\nabla \Psi^o(C^o,\mb{B}^o)$ by employing Algorithm~\ref{ch1_R8_alg:gms1}.
We then summarize the procedure to update $[C^o,\mb{B}^o]$'s in Algorithm~\ref{ch1_R8_alg:gms2},
which solves the RUL problem.
\renewcommand{\baselinestretch}{0.9}
\small
\begin{algorithm}
\caption{\textsc{Algorithm to Solve RUL Problem}}
\label{ch1_R8_alg:gms2}
\begin{algorithmic}[1]
\STATE Initialization: Set $C^o_{(0)} = \bar{C}_{\sf{cloud}}/O$, and $B^o_{k,(0)} = \bar{B}_k/O$ for all $(o,k) \in \mathcal{O} \times \mathcal{K}$, $\nu^o_{(0)}=0$ for all $o \in \mathcal{O}$, and $n=0$. 
\REPEAT
\STATE Run Algorithm~\ref{ch1_R8_alg:gms1} to obtain $\lbrace \bar{R}_o(C^o,\mb{B}^o)\rbrace $ for given $\lbrace [C^o,\mb{B}^o]_{(n)}\rbrace$ $\forall o \in \mathcal{O}$
obtained from the previous iteration, which are used to calculate $\nabla \Psi^o(C^o,\mb{B}^o)$.
\STATE Update $[C^o,\mb{B}^o]_{(n+1)}$ for all $o \in \mathcal{O}$ as in (\ref{ch1_eq:updcb}).
\STATE Set $n=n+1$.
\UNTIL Convergence.
\end{algorithmic}
\end{algorithm}
\renewcommand{\baselinestretch}{1.4}
\normalsize

The proposed iterative algorithm converges if the step sizes $\delta^o_{(l)}$ and $\tau^o_{(n)}$ are chosen appropriately,
e.g., $\delta^o_{(l)} = 1 /\sqrt{l}$ and $\tau^o_{(n)} = 1 /\sqrt{n}$ \cite{Bertsekas99}.
After running Algorithm \ref{ch1_R8_alg:gms2}, we can obtain a feasible solution $\lbrace C^o,\mb{B}^o \rbrace$ and the
 corresponding $\left\lbrace r^{(s) \bigstar}_k \right\rbrace $ and $\left\lbrace b^{(s) \bigstar}_k \right\rbrace $ of the 
relaxed problems. 

Note that $\left\lbrace r^{(s) \bigstar}_k \right\rbrace $ and $\left\lbrace b^{(s) \bigstar}_k \right\rbrace $
are real numbers while the original variables take only discrete values.
To address this issue, we propose two rounding methods, namely \textit{Iterative Rounding (IR) Method} and \textit{One-time Rounding and Adjusting (RA) Method},
which round the continuous variables achieved from Algorithm \ref{ch1_R8_alg:gms2} to attain the corresponding discrete and feasible solutions for the original
design problem. 
\nomenclature{IR}{Iterative Rounding}
\nomenclature{RA}{Rounding and Adjusting}
\subsubsection{Numerical Results}

We consider the network setting with seven cells and three OPs ($O=3$). 
The channel gains are generated by considering both Rayleigh fading and path loss. 
Other parameters are set as follows: $\sigma^2=10^{-13} \; W$ and $p^{(s)}_k=0.1 \;W$, $T'=0.2$, $\zeta=6$, and $\epsilon_{\sf{ch}}=10\%$.
To demonstrate the impacts of the cloud computation limit and fronthaul capacity on the system performance, we show the achievable system 
sum rate, which is the outcome of Algorithm~\ref{ch1_R8_alg:gms2}) solving the RLL problem, versus the cloud computation limit
 ($\bar{C}_{\sf{cloud}}$) and the fronthaul capacity $\bar{B}_k$ in Fig.~\ref{ch1_Fig03}. 
It is evident that higher cloud computation limit and larger fronthaul capacity result in the greater system sum rate as expected.
In addition, the sum rate becomes saturated as the cloud computation limit or fronthaul capacity becomes sufficiently large.

\begin{figure}[!t]
\begin{center}
\includegraphics[width=0.7\textwidth]{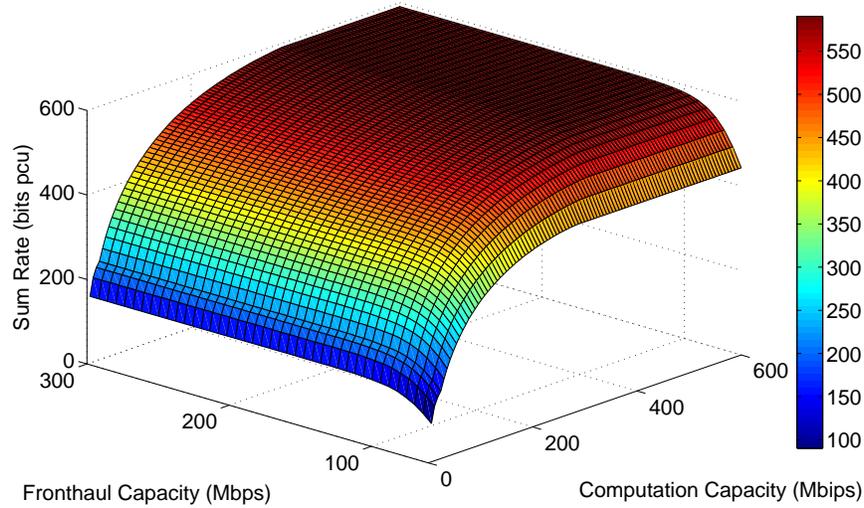}
\end{center}
\caption{Total rate versus $\bar{C}_{\sf{cloud}}$ and $\bar{B}_k$.}
\label{ch1_Fig03}
\end{figure}

In Fig.~\ref{ch1_Fig04}, we illustrate the variations of system sum rate obtained by the our proposed algorithms without rounding (Relaxed Prop. Alg.), with 
IR-rounding method (Prop. Alg. with IR), RA-rounding method (Prop. Alg. with RA), and the Fast Greedy Algorithm (Greedy Alg.), versus the number of PRBs 
in each cell. To obtain these numerical results, we sequentially add one more PRB for each OP in each cell to obtain different points on each curve.
As can been observed, our proposed algorithms significantly outperform the greedy algorithm in the studied scenarios.
In addition, the IR-rounding method results in better performance than that achieved by the RA-rounding method.
Interestingly, the system sum rate first increases then decreases as number of available PRBs in each cell becomes larger.

\begin{figure}[!t]
\begin{center}
\includegraphics[width=0.7\textwidth]{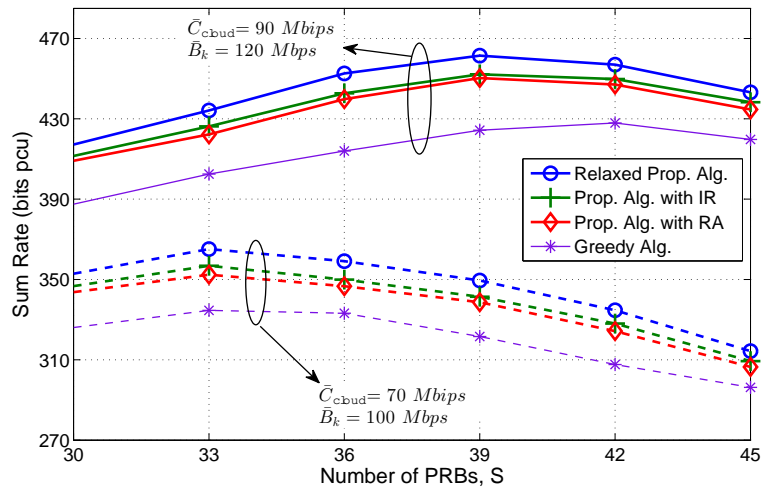}
\end{center}
\caption{Sum rate versus the number of PRBs ($S$).}
\label{ch1_Fig04}
\end{figure}

\section{Concluding Remarks}

In this doctoral dissertation, we have proposed various
novel resource management techniques and algorithms for wireless HetNets and C-RANs,
which are two potential network architectures for the future high-speed wireless cellular networks.
In particular, we have made four important research contributions.
First, we have proposed the general joint PC and BAS for the multi-tier wireless HetNets.
The developed HPC solution can efficiently balance between enhancing network throughput and achieving the
required QoSs for the maximum number of UEs.
Second, we have proposed distributed resource allocation algorithms that aim at maximizing
the total minimum spectral efficiency of the femtocell tier while ensuring fairness among FUEs
and QoS protection for all MUEs in the two-tier macrocell-femtocell wireless HetNets.

Third, we have developed efficient and low-complexity algorithms to solve the downlink joint transmission
problem in the C-RAN that aims to minimize the total transmission power subject to the constraints 
on transmission powers, fronthaul capacity, and UEs' QoS.
Finally, we have devised a novel algorithmic framework for the virtualized uplink C-RAN supporting 
multiple OPs, which captures the relevant interests and interactions between the infrastructure provider
and the OPs. Specifically, the proposed design, which accounts for the practical
 constraints on the fronthaul capacity and cloud computation limit, aims at maximizing the
profits of both infrastructure provider and the OPs through optimization the rate allocation for UEs and quantization bits allocation for the uplink baseband signals.
The research in our doctoral study has resulted in four journal publications \cite{VuHa_TVT14_BSA_PC,VuHa_TVT14_PC_SA,VuHa_TVT16,Tri_Access15}, one journal paper under submission \cite{VuHa_sTWC16} as well as nine papers on prestigious conferences \cite{VuHa_VTC12,VuHa_WCNC13,VuHa_GBCWS13,vuha_ciss_2014,VuHa_WCNC14,vuha_globecom_2014,VuHa_WCNC15,VuHa_WCNC16,VuHa_ICC16}.
\chapter{Résumé Long}
\section{Contexte et motivation}

Les systèmes sans fil 5G (cinquième génération), qui seraient déployés d’ici 2020, doivent offrir des performances nettement plus élevées par rapport aux systèmes 4G (quatrième génération) présentement sur le marché en termes de capacité en nombre d'utilisateurs et de gestion de réseaux. 
En effet, il est prévu que des dizaines de milliards de dispositifs sans fil seront connectés aux réseaux sans fil au cours des prochaines années. Avec le nombre croissant de connexions, la quantité de trafic des données mobiles a explosé à un rythme exponentiel. Par conséquent, le réseau mobile sans fil 5G devrait être en mesure de soutenir le volume de trafic de données d'un ordre de grandeur supérieur à celui des réseaux sans fil actuels \cite{Zander13,Boccardi14,Bhushan14,Le_EU15}.

Par conséquent, l'architecture sans fil plus avancée, ainsi que les technologies d'accès radicales et novatrices, doivent être proposées pour répondre à la croissance exponentielle des données mobiles et aux exigences de connectivité dans les années à venir \cite{Dottling09,Lopez-Perez11, 3GPP,cisco13,cisco16}.
À cette fin, deux architectures cellulaires sans fil importantes, ont été proposées et étudiées activement aux milieux académiques ainsi qu’industriels: la première est l’architecture  hétérogène (HetNets) qui est basée sur le déploiement dense de petites cellules et la deuxième est définie par les réseaux d’accès radio qui sont basés sur l’infonuagique (Cloud Radio Access Networks, Cloud-RAN ou C-RAN). Ces deux termes Cloud-RAN et C-RAN seront utilisés de manière changeable dans la suite. Cette thèse de doctorat porte sur les modèles de gestion des ressources radio pour ces deux architectures potentielles de réseau sans fil.

Le HetNet sans fil est généralement basé sur le déploiement dense de petites cellules telles que les «femtocells» et «picocells» dans la coexistence avec les  macrocells existants \cite{Chandrasekhar08,Claussen08,Kim09,Zhang_bk_10} où les petites cellules sont à courte distance de communication, à faible puissance et à faible coût. 
Le déploiement massif de petites cellules peut fondamentalement améliorer le débit des données de communication à l'intérieur des locaux fermés et les performances de couverture du réseau cellulaire sans fil \cite{Andrews12,Le12}. 
En outre, certaines petites cellules, les femtocells, peuvent être déployées de façon aléatoire par les utilisateurs finaux et elles fonctionnent sur la même bande de fréquences que le macrocell existant afin d'améliorer l'utilisation du spectre.

En outre, le trafic intérieur effectué par femtocell peut être acheminé sur les connexions IP telles que xDSL (Symmetric/Asymmetric Digital Subscriber Line) pour réduire la charge de trafic des macrocells. Le  macrocell peut ainsi consacrer davantage de ressources radio pour assurer un service meilleur a l'extérieur des locaux. 
En outre, les femtocells, qui peuvent être déployés de manière qu’elles puissent s’ajouter facilement aux réseaux comme utilisateurs finaux ordinaires, nécessitent habituellement des dépenses en capital et à faible coût d'exploitation. Cependant, le déploiement dense des femtocells sur la même bande de fréquence avec les macrocells pose différents défis techniques. 
Tout d'abord, les fortes perturbations inter-niveaux entre macrocells et femtocells peuvent se produire et sérieusement affecter la performance du réseau si l’accès inter niveau n’est pas géré correctement \cite{Yavuz09}. 
Deuxièmement, les macro-UEs (MUEs) ont généralement une plus grande priorité à l'accès au spectre radio par rapport aux femto-UEs (FUEs). 
Ainsi, les exigences de qualité de service de MUEs doivent être protégées.
Par conséquent, l'interférence inter niveau introduite par FUEs au macrocell doit être contrôlée adéquatement. 
Enfin, les stratégies dynamiques et intelligentes de contrôle d'accès doivent être développées pour gérer efficacement l'interférence de réseau et l'équilibrage de charge trafic \cite{roche10}.

L'architecture C-RAN vise à exploiter l'infrastructure de l’informatique en nuage pour réaliser diverses fonctions et protocoles de réseau \cite{chinamobile2011,NGMN2013,maketresearch2013}. 
L'architecture générale C-RAN se compose de trois éléments principaux: (i) des processeurs centralisés ou une poule des unités de bande de base (BBU – Baseband Unit), (ii) d’un réseau de transport optique (lien Fronthaul), et (iii)  des RRH (Remote Radio Head) des unités d'accès avec des antennes situées sur des sites à distance. 
Le centre de traitement informatique comprenant un grand nombre de BBUs est le cœur de cette architecture où les BBUs fonctionnent comme des stations de base (BSs) virtuelles pour traiter les signaux en bande de base pour les UEs et optimiser l'allocation des ressources radio. 
En général, les RRHs peuvent être relativement simples, et peuvent donc être spatialement distribués sur le réseau pour plus pour réduire le coût et l'énergie du réseau.

En outre, le traitement centralisé permet de mettre en œuvre de manière sophistiquée la couche physique et les conceptions de gestion des ressources radio telles que la transmission à multipoint coordonné (CoMP – Coordinated Multipoint) et les techniques de réception proposées dans la norme sans fil LTE, la conception efficace en grappe pour RRH pour équilibrer l'amélioration de la capacité du réseau et la complexité de conception \cite{Costa-Perez_CM13, Liang_CST15}. 
D’autres avantages du C-RAN comprennent une liaison terrestre réduite et le trafic de cœur de réseau, des coûts réduits de déploiement et d'exploitation, une meilleure qualité de service et la différenciation pour soutenir efficacement les différentes applications sans fil \cite{mob_rep_13,chinamobile2011}. 
La réussite du déploiement de C-RAN, cependant, nous oblige à résoudre plusieurs problèmes techniques majeurs \cite{Costa-Perez_CM13,Liang_CST15}. 
En particulier, des techniques et des solutions pour le traitement avancé des signaux et la gestion des ressources radio efficaces doivent être développées pour atteindre l'amélioration de la performance du réseau potentiel tout en tenant compte des contraintes et de l'utilisation efficace de la capacité du fronthaul et des ressources infonuagiques.

\section{Contributions à la recherche}
L'objectif général de cette thèse doctorale consiste à développer les algorithmes d'allocation des ressources radio et de gestion d'interférences pour les réseaux cellulaires futures à haute vitesse de données.
En effet, nous avons développé (i) des techniques de gestion des ressources adaptatives capables à contrôler efficacement des interférences à la fois du même niveau (co-tier) ou d'inter niveau (cross-tier) sur des réseaux sans fil HetNets; et (ii) de nouvelles techniques d'accès conçues pour le C-RAN capables d'exploiter efficacement la radio, les ressources d'infonuagique et la capacité de fronthaul. Ces contributions de recherche seront décrites dans la section suivante. 

\subsection{Association de station de base et contrôle de l'alimentation pour monocanal HetNets sans fil}
Dans cette contribution, nous avons développé les techniques conjointes d'affectation des stations de base (BSA - Base Station Association) et de contrôle de puissance (PC – Power Control) pour les monocanal HetNets sans fil à multi-niveau. Dans le cadre de HetNets, la conception des techniques PC-et-BSA est difficile parce que les UEs de différents niveaux du réseau (par exemple, macrocell et femtocells) ont des priorités d'accès distinctes ainsi que les exigences différentes de qualité de service. 
Il existe plusieurs ouvrages proposant des solutions de gestion d'interférence avancées pour HetNets en utilisant PC et BSA dynamique \cite{jo09,chand09,madan10,yun11,Le12a,yanzan12}.
Cependant, la plupart de ces travaux ont examiné le mode d'accès fermé où MUEs ne sont pas autorisés à se connecter avec les stations de base (FBS - femto BS). En outre, l'examen conjoint des garanties de qualité de service et des utilisations efficaces des ressources du réseau n'a pas été traité à fond. Notre conception actuelle vise donc à répondre à ces limitations.
\begin{itemize}
\item Nous avons développé un algorithme généralisé de BSA et PC pour HetNets multi-niveau et prouvé sa convergence si la fonction de mise à jour de puissance satisfait la propriété de 2-sided-scalable (2.s.s).
\item Nous avons proposé un algorithme d'adaptation hybride qui ajuste les paramètres de conception clés pour soutenir les exigences de SINR différenciées de tous les UEs chaque fois que possible, tout en améliorant le débit du système.
\item Cet algorithme proposé est peut aboutir à des performances plus meilleures que d'autres algorithmes existants.
\item Nous décrivons la façon d'étendre la méthode proposée pour permettre la conception d'accès hybride dans les réseaux à deux niveaux macrocell-femtocell.
\end{itemize}

\subsubsection{Modèle de système}
\label{res_section2_r3}
Nous considérons les communications montantes dans un réseau sans fil hétérogène où $K$ BSs servent $M$ UEs sur le même spectre fréquentiel en utilisant CDMA. Supposons que chaque UE $i$ communique avec une seul BS à tout moment, qui est désigné comme $b_i$. Cependant, les UEs peuvent changer leur BSs associée au fil du temps. Le SINR de l’UE $i$ à la BS $b_i$ peut être écrit \cite{Alpcan08}:
\begin{equation}
\label{res_eq:sinr_r3}
\Gamma_i(\mathrm{p})=\dfrac{G h_{b_i i} p_i}{\sum_{j \neq i}{h_{b_ij}p_j}+\eta_{b_i}}
=\dfrac{p_i}{I_i \left( \mathrm{p}, b_i\right) }
\end{equation}
où $h_{k i}$ désigne le gain de canal entre BS $k$ et UE $i$, $p_i$ est la puissance d'émission d'UE $i$, $I_i \left( \mathrm{p},k\right) \triangleq  \dfrac{\sum_{j\neq i}{h_{kj}p_j}+\eta_{b_i}}{G g_{kii}}$ est l'interférence effective, et $\mathrm{p} = [p_1, ..., P_M]$. Nous allons parfois écrire $I_i \left( \mathrm{p}\right)$ au lieu de $I_i \left( \mathrm{p},b_i\right)$. 
Nous supposons que l’UE $i$ exige une qualité de service minimum en termes d'une SINR cible $\bar{\gamma}_i$:
\begin{equation}
\label{res_equ:SINR_cond_r3}
\Gamma_i(\mathrm{p}) \geq \bar{\gamma}_i, \:\: i \in \mathcal{M}.
\end{equation}
L'objectif de cette contribution consiste à développer les algorithmes distribués BSA et PC capable de maintenir autant que possible les exigences de SINR dans (\ref{res_equ:SINR_cond_r3}) tout en exploitant le gain de diversité multi-utilisateur pour augmenter le débit du système. Les algorithmes proposés visent donc à soutenir les applications de voix et de données à grande vitesse où les UEs de voix nécessitent généralement une cible fixée SINR $\bar{\gamma}_i$ tandis que les UEs de données cherchent à obtenir une SINR cible plus élevée que $\bar{\gamma}_i$ pour soutenir leurs applications à large bande.

\subsubsection{Algorithme généralisé de l'association de station de base et de contrôle de l'alimentation}
\label{res_sec:gnrl_PC_BAS}
Nous développons d'abord un algorithme généralisé BSA et PC. Plus précisément, nous allons nous concentrer sur un algorithme de PC itératif générale où chaque UE $i$ exécute la fonction de mise à jour de puissance suivante (puf – power update function) $p_i^{(n+1)} := J_i(\mathrm{p}^{(n)})=J'_i(I_i^{(\mathit{n})}(\mathrm{p}^{(n)}))$ où $n$ indique l'indice d'itération et $J_i (.) $, $J'_i (.) $ est le puf. En fait, ce genre d'algorithme de PC converge si nous pouvons prouver que sa puf correspondante est 2.s.s \cite{sung05, sung06}. 
En particulier, nous proposons un algorithme conjoint BSA et PC tel que résumé dans l'algorithm~\ref{res_alg:gms2_r3}. 
En vertu de cette conception, chaque UE choisit une BS qui se traduit par un minimum d'interférence effective et utilise la puf pour mettre à jour sa puissance. Cet algorithme garantit que chaque UE subit de faible interférence effective et donc un haut débit à la convergence. En général, les performances d'un algorithme de PC dépend de la façon dont nous concevons la fonction $\mathrm{J}(\mathrm{p})$.
\renewcommand{\baselinestretch}{0.9}
\small
\begin{algorithm}
\caption{\textsc{L'algorithme de minimum d'interférence effective BSA et PC}}
\label{res_alg:gms2_r3}
\begin{algorithmic}[1]
\STATE Initialisation:

- $p_i^{(0)}=0$ pour toutes UE $i$, $i \in \mathcal{M}$.

- $b_i^{(0)}$ est défini comme le BS le plus proche de IE $i$.

\STATE Itération $n$: Chaque UE $i$ ($ i \in \mathcal{M}$) effectue les opérations suivantes:

- Calculer l'interférence effective à BS $\mathit{k} \in D_i$ comme suit:

- Choisissez le BS $b_i^{(n)}$ avec le minimum $I_i^{(n)} ( \mathrm{p}^{(n-1)},\mathit{k})$, c-à-d, 
$ b_i^{(n)}=\mathrm{argmin}_{\mathit{k} \in D_i} I_i^{(n)} ( \mathrm{p}^{(n-1)},\mathit{k})$. Ensuite, nous avons $ I_i^{\sf o (\mathit{n})}(\mathrm{p}^{(n-1)}) = \mathrm{min}_{\mathit{k} \in D_i} I_i^{(n)} ( \mathrm{p}^{(n-1)},\mathit{k}) = I_i^{(n)} ( \mathrm{p}^{(n-1)},b_i^{(n)})$.

- Mise à jour la puissance d'émission pour la station de base choisie comme suit::
\begin{equation}
\label{res_p}
p_i^{(n)}=J'_i(I_i^{\sf o (\mathit{n})}(\mathrm{p}^{(n-1)})). 
\end{equation}
\vspace{-0.3cm}
\STATE Augmenter $n$ et revenir à l'étape 2 jusqu'à convergence.
\end{algorithmic}
\end{algorithm}
\renewcommand{\baselinestretch}{1.4}
\normalsize

\subsubsection{Algorithme HPC proposé}
L'algorithme hybride de contrôle de puissance (HPC - Hybrid Power Control) sera désormais développé pour tirer avantages des deux cibles TPC \cite{foschini93, yates95} et OPC \cite{sung05, sung06}. Plus précisément, l'algorithme HPC peut être décrit par la mise à jour itérative de puissance comme la suivante:
\begin{equation} \label{res_hpcrule}
p_i^{\left(n+1\right)} = J^{\mathsf{HPC}}_i \left( \mathrm{p}^{(n)}\right) = \mathrm{min}\left\lbrace P^{\mathsf{max}}_i, J_i\left( \mathrm{p}^{(n)}\right) \right\rbrace,
\end{equation}
où $J_i\left( \mathrm{p}\right) \triangleq \dfrac{\alpha_i\xi_i I_i\left( \mathrm{p}\right)^{-1}+ \bar{\gamma}_i I_i\left( \mathrm{p}\right)}{\alpha_i + 1}$, $P^{\mathsf{max}}_i$ est le budget de puissance de transmission de l’UE $i$ et $\xi_i=P_i^{\mathsf{max}2}/\bar{\gamma}_i$. 
Ici, l'algorithme de HPC est présenté sous une forme générale avec des paramètres $\alpha_i$. 
Plus précisément, pour $\alpha_i=0$, l'algorithme HPC devient l'algorithme TPC où chaque UE $i$ vise à atteindre son SINR ciblée. Lorsque $\alpha_i \rightarrow \infty$, chaque UE $i$ tentatives pour atteindre SINR plus (si elle est dans un état favorable) et HPC devient OPC. 
La convergence de l'algorithme HPC proposé peut être prouvée en montrant que sa puf est 2.s.s. comme indiqué dans le Théorème 2 de [J1].

\subsubsection{Algorithme adaptatif à double échelle de temps}
\label{res_sec:Gnrl_Time_Scal_Sepa}
Pour compléter notre objectif de conception en utilisant l'algorithme~\ref{res_alg:gms2_r3} et HPC, nous développons un algorithme adaptatif comme suit. 
Tout d'abord, nous dénotons les UEs dont les ratios SINR sont supérieurs ou égaux à son SINR ciblé comme UEs supportés, et les autres comme non supportés. Il est à noter également que UE $i$ est non-supporté si $I_i \left( \mathrm{p}\right) > I^{\mathsf{thr}}_i=P^{\mathsf{max}}_i/\bar{\gamma}_i$, et vice versa. En outre, $\Gamma_i(\mathrm{p}) \geq \bar{\gamma}_i$ si $\alpha_i > 0 $ et $I_i(\mathrm{p}^*) < I^{\mathsf{thr}}_i$. Un tel UE supporté peut réduire sa puissance en réglant son paramètre $\alpha_i$ pour aider les UEs non supportés à améliorer leurs SINR. Nous exploitons ce fait pour développer l'algorithme d'adaptation HPC, qui est décrit dans l'algorithme \ref{res_alg:gms3_r3}.
\renewcommand{\baselinestretch}{0.9}
\small
\begin{algorithm}
\caption{\textsc{L'algorithme d'Adaptation HPC}}
\label{res_alg:gms3_r3}
\begin{algorithmic}[1]
\STATE Initialisation: Mise à jour $\mathrm{p}^{(0)}=0$, $\Delta^{(0)}$ comme $\alpha_i^{(0)}=0$ pour les UEs de voix et $\alpha_i^{(0)}=\alpha_0$ ($\alpha_0 \gg 1$) pour les UEs de données, $\overline{N}^{\ast}=|\overline{\mathcal{U}}^{(0)}|$ et $\Delta^{\ast}=\Delta^{(0)}$.

\STATE Iteration $l$: 

- Exécuter l'algorithme HPC jusqu'à convergence avec $\Delta^{(l)}$.

- Si $|\overline{\mathcal{U}}^{(l)}|>\overline{N}^{\ast}$, mise à jour $\overline{N}^{\ast}=|\overline{\mathcal{U}}^{(l)}|$ et $\Delta^{\ast}=\Delta^{(l)}$.

- Si $\underline{\mathcal{U}}^{(l)} = \varnothing$ ou $\overline{\mathcal{U}}^{(l)} = \varnothing$, puis passez à l'étape 4.

- Si $\underline{\mathcal{U}}^{(l)} \neq \varnothing$ et $\overline{\mathcal{U}}^{(l)} \neq \varnothing$, puis exécuter le \textit{``processus de réactualisation''} comme suit:

$\;$ $\;$ a: Pour UE $i \in \underline{\mathcal{U}}^{(l)}$, mise à jour $\alpha_i^{(l+1)}=\alpha_i^{(l)}$.

$\;$ $\;$ b: Pour UE $i \in \overline{\mathcal{U}}^{(l)}$, mise à jour $\alpha_i^{(l+1)}$ pour que

$\;$ $\;$ $\;$ i) $\alpha_i^{(l)} > \alpha_i^{(l+1)} \geq 0$ si $\alpha_i^{(l)}>0$.

$\;$ $\;$ $\;$ ii) $\alpha_i^{(l+1)} = \alpha_i^{(l)}$ si $\alpha_i^{(l)}=0$.

\STATE Augmenter $l$ et revenir à l'étape 2 jusqu'à ce qu'il n'y a pas de demande de réactualisation pour $\Delta^{(l)}$.
\STATE Mise à jour $\Delta := \Delta^{\ast}$ et exécuter l'algorithme de HPC jusqu'à convergence.
\end{algorithmic}
\end{algorithm}
\renewcommand{\baselinestretch}{1.4}
\normalsize

Soit $\overline{\mathcal{U}}^{(l)}$ et $\underline{\mathcal{U}}^{(l)}$ respectivement les ensembles des UEs supportés et non supportés et à l'itération $l$. Nous utilisons $\overline{N}^{\ast}$ pour maintenir le nombre d’UEs supportés au cours de l'algorithme. Le \textit{``processus de réactualisation''} (\textit{``updating process''}) est conçu de manière  que tous les paramètres $\alpha_i$ des UEs supportés tendent vers à zéro si tous les UEs non supportés ne peuvent pas être aidés. L'algorithme d'adaptation HPC proposé atteint les performances souhaitables suivant es comme caractérisé dans le théorème suivant.
\begin{theorem}
\label{res_thm04}
Soit $\overline{N}_{\sf HPC}$ et $\overline{N}_{\sf TPC}$ le nombre d’UEs supportés dû à l'algorithme d'adaptation HPC proposé et l'algorithme de TPC, respectivement. Nous avons:

- $\overline{N}_{\sf HPC} \geq \overline{N}_{\sf TPC}$.

- Si toutes les exigences de SINR peuvent être remplies par l'algorithme de TPC alors tous les UEs peuvent atteindre leurs SINR cibles en utilisant l'algorithme HPC. En plus il existe des UEs qui peuvent atteindre SINRs plus élevés que les SINR cibles contrôlées par l'algorithme HPC.
\end{theorem}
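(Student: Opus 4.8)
The plan is to work throughout with fixed points of the HPC power update, which exist and are unique because the puf $J^{\mathsf{HPC}}$ is 2.s.s.; the basic dictionary is the one recalled just above the theorem: at an HPC fixed point $\mathrm{p}^{\ast}$, UE $i$ is supported iff $I_i(\mathrm{p}^{\ast})\le I^{\mathsf{thr}}_i=P^{\mathsf{max}}_i/\bar{\gamma}_i$. Using $\xi_i=P_i^{\mathsf{max}2}/\bar{\gamma}_i=\bar{\gamma}_i(I^{\mathsf{thr}}_i)^{2}$, a one-line computation gives the key identity
\[
J_i(\mathrm{p})-\bar{\gamma}_iI_i(\mathrm{p})=\frac{\alpha_i\bar{\gamma}_i}{(\alpha_i+1)\,I_i(\mathrm{p})}\bigl(I^{\mathsf{thr}}_i-I_i(\mathrm{p})\bigr)\bigl(I^{\mathsf{thr}}_i+I_i(\mathrm{p})\bigr),
\]
from which: (a) $J_i$ is nondecreasing in $\alpha_i$ exactly on $\{I_i(\mathrm{p})\le I^{\mathsf{thr}}_i\}$, and nonincreasing for $I_i(\mathrm{p})>I^{\mathsf{thr}}_i$; and (b) a supported UE with $\alpha_i>0$ and $I_i(\mathrm{p}^{\ast})<I^{\mathsf{thr}}_i$ has $\Gamma_i(\mathrm{p}^{\ast})>\bar{\gamma}_i$ strictly (either $p_i^{\ast}=J_i(\mathrm{p}^{\ast})>\bar{\gamma}_iI_i(\mathrm{p}^{\ast})$ or $p_i^{\ast}=P^{\mathsf{max}}_i>\bar{\gamma}_iI_i(\mathrm{p}^{\ast})$), whereas any supported UE with $\alpha_i=0$ ends up exactly at $\bar{\gamma}_i$. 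To get $\overline{N}_{\sf HPC}\ge\overline{N}_{\sf TPC}$, note that the updating process only ever lowers the $\alpha_i$ of currently supported UEs, so the loop of Algorithm~\ref{res_alg:gms3_r3} can stop only when every UE is supported (then $\overline{N}_{\sf HPC}=M\ge\overline{N}_{\sf TPC}$) or in a record configuration $\Delta^{\ast}$ in which every supported UE has $\alpha_i=0$. In the latter case let $\mathrm{p}^{\ast}$ be the corresponding HPC fixed point and $T_i(\mathrm{p})=\min\{P^{\mathsf{max}}_i,\bar{\gamma}_iI_i(\mathrm{p})\}$ the TPC map -- a standard interference function (Yates \cite{yates95}) that is everywhere bounded by $P^{\mathsf{max}}_i$ and so has a unique fixed point $\mathrm{p}^{\mathsf{TPC}}$ defining $\overline{N}_{\sf TPC}$. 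On the supported UEs $\alpha_i=0$ gives $p_i^{\ast}=\bar{\gamma}_iI_i(\mathrm{p}^{\ast})=T_i(\mathrm{p}^{\ast})$, while on the non-supported UEs the identity gives $J_i(\mathrm{p}^{\ast})<\bar{\gamma}_iI_i(\mathrm{p}^{\ast})$, hence $p_i^{\ast}\le T_i(\mathrm{p}^{\ast})$; thus $\mathrm{p}^{\ast}\le T(\mathrm{p}^{\ast})$, i.e. $\mathrm{p}^{\ast}$ is a sub-solution of the standard map $T$, so $\mathrm{p}^{\ast}\le\mathrm{p}^{\mathsf{TPC}}$ and $I_i(\mathrm{p}^{\ast})\le I_i(\mathrm{p}^{\mathsf{TPC}})$ for all $i$. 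Hence every UE supported under TPC is supported at $\Delta^{\ast}$, and since Step~4 reruns HPC with $\Delta^{\ast}$, $\overline{N}_{\sf HPC}\ge\overline{N}_{\sf TPC}$.

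For the second claim, a feasible network has $\overline{N}_{\sf TPC}=M$, so the first claim already forces $\overline{N}_{\sf HPC}=M$. For the strict-excess statement I would combine three facts. (i) Because the updating process freezes the parameter of every non-supported UE and the algorithm exits to Step~4 as soon as the support set becomes full, the initialisation $\alpha_i=\alpha_0\gg1$ on the data UEs makes $\Delta^{\ast}$ retain $\alpha_i>0$ on at least one data UE; here I would also use the monotone-support property -- one updating step lowers $\alpha_i$ only where $I_i(\mathrm{p}^{(l),\ast})\le I^{\mathsf{thr}}_i$, so by (a) $\mathrm{p}^{(l),\ast}$ is a super-solution of the new puf, whence $\mathrm{p}^{(l+1),\ast}\le\mathrm{p}^{(l),\ast}$ and $\overline{\mathcal{U}}^{(l)}\subseteq\overline{\mathcal{U}}^{(l+1)}$ -- to see that the support set climbs to $M$ before the decrements can exhaust all data-UE parameters. (ii) If the adaptation ended with \emph{every} UE exactly at its target, $\mathrm{p}^{\ast}$ would solve the linear system $\{p_i=\bar{\gamma}_iI_i(\mathrm{p})\}_i$, whose unique positive solution is $\mathrm{p}^{\mathsf{TPC}}$, and strict feasibility gives $p^{\mathsf{TPC}}_i<P^{\mathsf{max}}_i$ for all $i$. (iii) A data UE $i$ with $\alpha_i>0$ cannot be at once below its budget and exactly at target, since then the identity forces $I_i(\mathrm{p}^{\ast})=I^{\mathsf{thr}}_i$, i.e. $p_i^{\ast}=P^{\mathsf{max}}_i$. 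Facts (i)--(iii) are incompatible, so some UE strictly exceeds its target, and by (b) such a UE is necessarily a data UE with $\alpha_i>0$ and $I_i(\mathrm{p}^{\ast})<I^{\mathsf{thr}}_i$.

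The step I expect to be hardest is the super-solution comparison inside the monotone-support property: a 2.s.s. puf is not monotone, so ``super-solution $\Rightarrow$ lies above the fixed point'' cannot be read off from Yates' monotone-iteration argument and must instead be drawn from the two-sided-scalability framework of \cite{sung05,sung06}; by contrast the analogous comparison against $T$ in the first claim is routine because $T$ genuinely is standard. A second, more delicate point is the bookkeeping in (i): for the abstract updating process of Algorithm~\ref{res_alg:gms3_r3} a positive parameter on a data UE need not survive to $\Delta^{\ast}$, so the strict-excess conclusion really relies on the concrete instantiation of the updating process (Algorithm~4 of [J2]) and on the feasibility being strict -- this is precisely where those hypotheses are used.
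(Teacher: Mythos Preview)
Your argument for the first statement is essentially the paper's: compare the HPC fixed point at a configuration where every supported UE has $\alpha_i=0$ with the TPC fixed point, show $\mathrm{p}^{\ast}\le T(\mathrm{p}^{\ast})$, and conclude $\mathrm{p}^{\ast}\le\mathrm{p}^{\mathsf{TPC}}$ so that supported sets are nested. The paper does the last step by running TPC from $\mathrm{p}^{\ast}$ and showing the iterates increase monotonically; your appeal to the sub-solution property of a standard function is the same thing in abstract dress. Your factored identity for $J_i-\bar{\gamma}_iI_i$ is a clean replacement for the case analysis the paper carries out in its Lemma~\ref{Ch3_lemma3.3}.

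There is one genuine slip. You run the comparison at the \emph{record} configuration $\Delta^{\ast}$, asserting that every supported UE there has $\alpha_i=0$. That need not be true: $\Delta^{\ast}$ is whichever iterate achieved the largest support set, and it may well be an early iterate where supported data UEs still carry $\alpha_i>0$; in that case your key identity gives $p_i^{\ast}>T_i(\mathrm{p}^{\ast})$ on those UEs and the sub-solution inequality fails. The paper avoids this by working at the \emph{terminal} configuration $\Delta^{(s)}$ (where, by the design hypothesis on the updating process, all supported $\alpha_i$ have been driven to zero), proving $\overline{N}^{(s)}\ge\overline{N}_{\sf TPC}$ there, and then invoking the trivial bookkeeping $\overline{N}_{\sf HPC}=\overline{N}^{\ast}\ge\overline{N}^{(s)}$. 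Your monotone-support claim $\overline{\mathcal{U}}^{(l)}\subseteq\overline{\mathcal{U}}^{(l+1)}$ would identify $\Delta^{\ast}$ with $\Delta^{(s)}$, but as you yourself note, ``super-solution $\Rightarrow$ above the fixed point'' is \emph{not} a consequence of two-sided scalability, so that route is not available. The fix is simply to imitate the paper: compare at $\Delta^{(s)}$, not $\Delta^{\ast}$.

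For the second statement the paper's own proof is in fact more informal than yours: it asserts only that ``supposing all potential users decrease their $\alpha_i$ slowly, some of them will achieve SINR greater than the target values if $\alpha_i>0$ at convergence,'' without proving that any $\alpha_i$ remains positive. Your contradiction via (ii)--(iii) is correct once you have (i), and your explicit flagging of (i) and of the implicit strict-feasibility hypothesis is accurate; neither the paper nor the abstract Algorithm~\ref{res_alg:gms3_r3} secures (i) without appealing to the concrete update rule of Algorithm~\ref{Ch3_alg:gms4}.
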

Un exemple de conception illustrant le processus de mise à jour est présenté dans l'algorithme 4 de [J2] dans laque $\alpha_i$ est mis à jour par les deux processus local et global, c-à-d, la mise à jour localement $\alpha_i$ dans chaque cellule et mise à jour au niveau global $\alpha_i$ lorsqu’il y a UEs pas supportés après l'exécution du processus de mise à jour local.

\subsubsection{Résultats numériques}
\begin{figure}[!t]
        \centering
                \includegraphics[width=0.7 \textwidth]{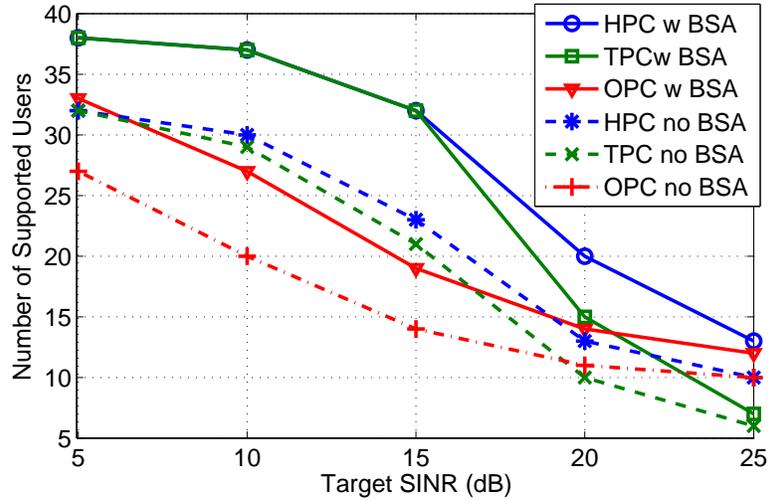}
                \caption{Le nombre d'UEs supportés pour les différents systèmes avec et sans l'aide de l'algorithme BSA.}
                \label{res_r3_fig1a}
\end{figure}
Nous présentons maintenant les résultats numériques montrant les performances des algorithmes proposés. 
Nous plaçons $10$ MUEs et FUEs (de $1$ à $3$ pour chaque cellule) à l'intérieur des cercles de rayons de $r_1=1000 \: m$ et $r_2=50 \: m$, respectivement.
Le gain de puissance du canal $h_ {ij}$ est choisi en fonction de la distance et le canal est supposé d’avoir un évanouissement de type Rayleigh. D'autres paramètres sont définis comme suit: $G=128$, $P_i^{\mathsf{max}}= 0.01 \: W $, $\eta_i = 10^{-13} \: W $, $ W_l = 12 \; dB $. La Fig.~\ref{res_r3_fig1a} montre le nombre d'UEs supportés pour les différents systèmes avec et sans l'aide de l'algorithme BSA. On peut voir que notre algorithme d'adaptation HPC  proposé peut maintenir les conditions de SINR pour le plus grand nombre d'équipements UE par rapport aux algorithmes de TPC et OPC. 
En addition, l'algorithme adaptatif de BSA permet d'augmenter le nombre d'équipements d'utilisateur supportés lorsqu'il est intégré conjointement avec ces algorithmes de PC.
\begin{figure}[!t]
        \centering        
                \includegraphics[width=0.7 \textwidth]{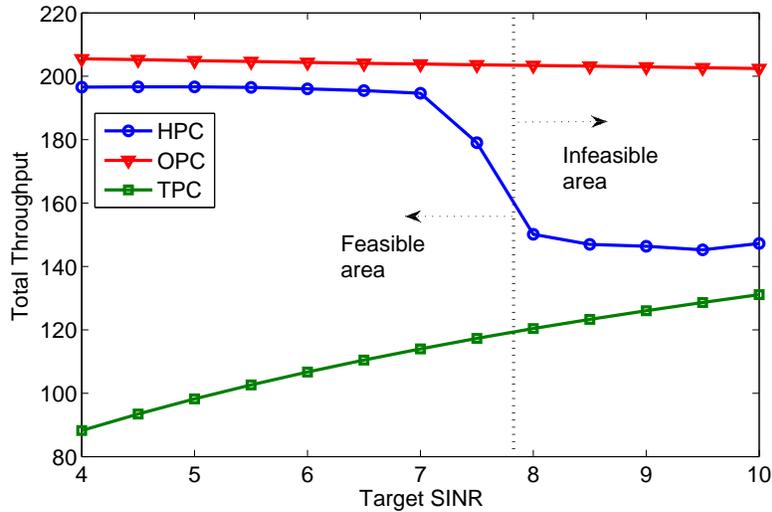}
                \caption{Le débit moyen obtenu par les différents régimes en fonction du rapport au SINR cible.}
 \label{res_r3_fig1b}
\end{figure}

La Figs.~\ref{res_r3_fig1b} illustre le débit moyen obtenu par les différents régimes par rapport à la SINR cible. Ici, le débit de UE $i$ est calculé comme $\log_2(1+ \Gamma_i)$ (b/s/Hz). Comme on peut le constater, le débit moyen obtenu par l'algorithme d'adaptation HPC est supérieur à celui de l'algorithme de TPC, mais inférieur à celui de l'algorithme OPC. En particulier, lorsque le réseau est faiblement chargé ($\bar{\gamma}_i <8$), notre algorithme atteint un débit beaucoup plus élevé que TPC. Cependant, l'écart devient plus petit lorsque la charge du réseau est plus élevée ($\bar{\gamma}_i \geq 8$). En effet, l'algorithme~\ref{res_alg:gms3_r3} tente de maintenir les exigences de SINR pour le plus grand nombre d’UEs.

\subsection{Allocation équitable de ressources pour HetNets à base d’OFDMA}

Dans cette partie, nous considérons de problème conjoint de l'attribution de sous-canaux et de contrôle puissance l’accès multiple à division de fréquence (OFDMA - Orthogonal Frequency Division Multiple Access) pour les réseaux femtocell. 
La conception d'une gestion efficace des ressources radio pour HetNets OFDMA multi-niveau est un important sujet de recherche \cite{perez09} et il y a eu quelques travaux existants dans ce domaine \cite{chuhan11,yanzan12,yushan12,Wangchi12,Long12,hoon11}.
Ceux-ci, cependant, ne considèrent pas PC dans leurs algorithmes d'allocation des ressources ou ne fournissent pas des garanties de qualité de service pour les UEs des deux niveaux du réseau. 
Notre conception vise à concevoir l’allocation conjointe des sous-canaux et de la puissance pour femtocells considérant l'équité pour les FUEs dans chaque femtocell, la protection des qualités de services (QoS) pour les MUEs, et les contraintes maximales de puissance. Au meilleur de notre connaissance, aucun des ouvrages existants considèrent conjointement tous ces problèmes de conception. En particulier, nous faisons les contributions suivantes.
\begin{itemize}
\item Nous présentons une formulation d'allocation des ressources de liaison montante équitable pour les HetNets à deux niveaux, qui vise à maximiser le debit minimal total de tous les femtocells soumis à différentes QoS et les contraintes du système.
\item Nous présentons deux algorithmes: (i) algorithme de recherche exhaustive, (ii) algorithme d'allocation des canaux et de contrôle de puissance distribué.
\item Nous décrivons ensuite comment étendre la conception proposée à différents scénarios y compris l'établissement de liaison descendante, la transmission adaptative à taux multiples et la conception de l'accès ouvert.
\end{itemize}
\subsubsection{Modèle de système}
Nous considérons un système en pleine réutilisation de fréquence dans laquelle il y a $M_{\sf f}$ FUEs desservi par $(K-1)$ FBS, qui sont sous-tendue par un macrocell servant $M_{\sf m}$ MUEs sur $N$ sous-canaux. Nous décrivons les allocations de sous-canal (SAs - subchannel assignments) par une matrice $\mathrm{\bf{A}} \in \mathfrak{R}^{M \times N}$ où
\begin{equation}
\label{res_eq:A}
\mathrm{\bf{A}}(i,n)=a^n_i=\left\lbrace \begin{array}{*{10}{l}}
1 & \text{si sous-canal $n$ est assigné à l'UE $i$}\\
0 & \text{si non}.
\end{array}
\right. 
\end{equation}
Nous supposons que la modulation M-QAM est adoptée pour les communications sur chaque sous-canal où la taille de la constellation est $s$. Selon \cite{proakis01}, le SINR cible pour le régime $s\text{-QAM}$ de modulation peut être calculée comme
\beq \label{res_eq:tSINR}
\bar{\gamma}(s) = \dfrac{\left[ \mathbb{Q}^{-1}( \overline{P}_e /x_s)\right]^2}{y_s}, \: s \in \mathfrak{M},
\eeq
où $\mathbb{Q}(.)$ représente la Q-fonction, $x_s=\frac{2(1-1/\sqrt{s})}{\log_2s}$, $y_s=\frac{3}{2(s-1)}$, et $\overline{P}_e$ est la valeur cible de BER. Ensuite, si schéma de modulation $s\text{-QAM}$ est utilisé, l'efficacité spectrale est ${\log_2s}/{N}$ (b/s/Hz).

Comme pour à la SA, nous définissons $\mathrm{\bf{P}}$ comme une matrice d'attribution de puissance (PA - power allocation) où $\mathrm{\bf{P}}(i,n)=p^n_i$. Ensuite, pour une solution SA et PA donnée, c-à-d, $\mathrm{\bf{A}}$ et $\mathrm{\bf{P}}$, le SINR réalisé à la BS $b_i$ en raison de la transmission d’UE $i$ sur le sous-canal $n$ peut être écrite comme
\begin{equation}
\label{res_sinr}
\Gamma_i^n(\mathrm{\bf{A}},\mathrm{\bf{P}})=\dfrac{a_i^n h_{b_i i}^n p_i^n}{\sum_{j \notin \mathcal{U}_{b_i}}{a_j^n h_{b_ij}^n p_j^n}+\eta_{b_i}^n}=\dfrac{a_i^np_i^n}{I_i^n(\mathrm{\bf{A}},\mathrm{\bf{P}})}.
\end{equation}

Nous supposons que $\mathrm{\bf{A}}_1$ (SA pour macrocell) est fixé alors que nous devons déterminer $\mathrm{\bf{A}}_k, \: 2 \leq k \leq K$ (SA pour femtocells) et l'AP correspondant. Pour protéger la qualité de service des MUEs, nous souhaitons maintenir une cible prédéterminée SINR $\bar {\gamma}_i^n$ pour chacun de ses sous-canaux attribués $n$. Pour FUEs, l'efficacité spectrale (bits/s/Hz) obtenue par FUE $i$ sur un sous-canal peut être écrite comme
\begin{equation}
\label{res_eq:rate_n}
r_i^n(\mathrm{\bf{A}},\mathrm{\bf{P}}) = \left\lbrace \begin{array}{*{5}{l}}
0, & \text{if } \Gamma_i^n(\mathrm{\bf{A}},\mathrm{\bf{P}}) < \bar{\gamma}_i^n,\\
r_{\sf f}, & \text{if } \Gamma_i^n(\mathrm{\bf{A}},\mathrm{\bf{P}}) \geq \bar{\gamma}_i^n,
\end{array} \right. 
\end{equation}
où $r_{\sf f}=(1/n)\log_2s^{\sf f}$ et $\bar{\gamma}_i^n = \bar{\gamma}(s^{\sf f})$.
Maintenant, nous pouvons exprimer l'efficacité spectrale totale obtenue par l'UE $i$ pour les matrices SA et PA donnés, $\mathrm{\bf{A}}$ et $\mathrm{\bf{P}}$, $R_i(\mathrm{\bf{A}},\mathrm{\bf{P}})=\sum_{n=1}^N r_i^n(\mathrm{\bf{A}},\mathrm{\bf{P}})$.
Pour imposer l'équité max-min pour tous les FUEs associés aux mêmes FBS, le problème d'allocation de ressource de liaison montante pour FUEs peut être formulé comme suit:
\begin{align}
\label{res_objfun_r4}
\mathop {\max} \limits_{(\mathrm{\bf{A}},\mathrm{\bf{P}}) } & \sum \limits_{2\leq k \leq K} \mathrm{R}^{(k)}(\mathrm{\bf{A}},\mathrm{\bf{P}}) =  \sum \limits_{2\leq k \leq K} \min_{i \in \mathcal{U}_k} R_i(\mathrm{\bf{A}},\mathrm{\bf{P}}) \\
\text{s. t.} & \sum_{i \in \mathcal{U}_k} a_i^n \leq 1, \:\:\: \forall k \in \mathcal{B} \text{ and } \forall n \in \mathcal{N}, \label{res_eq:c4} \\
{} & \sum_{n=1}^{N}p^n_i \leq P_i^{\texttt{max}}, \quad i \in \mathcal{U}, \label{res_powcon} \\
{} &  \Gamma_i^n(\mathrm{\bf{A}},\mathrm{\bf{P}}) \geq \bar{\gamma}_i^n,\:\: \text{if} \:\: a_i^n=1, \:\: \forall i \in \mathcal{U}_{\sf m}.\label{res_eq:rate_cond}
\end{align}
Ce problème d'allocation des ressources est un programme mixte en nombres entiers, ce qui est, par conséquent, NP-difficile qui est difficile pour résoudre.

\subsubsection{Algorithme de recherche exhaustive optimale}
Pour chaque sous-canal $n$, nous avons besoin de maintenir les contraintes de SINR $\Gamma^n_i(\mathrm{\bf{A}},\mathrm{\bf{P}}) \geq \bar{\gamma}_i^n$ pour tous les UEs qui sont attribués à ce sous-canal.
Par conséquent, la faisabilité d'une SA particulière peut être vérifiée en utilisant le théorème de Perron-Frobenius tel que présenté dans la Section~\ref{Ch2_sec_PC}. 
Comme le nombre de SA possibles est fini, l'algorithme de recherche exhaustive optimale peut être développé comme suit. 
Pour $\mathrm{\bf{A}}_1$ fixe et réalisable, soit $\Omega\{\mathrm{\bf{A}}\}$ la liste de toutes les solutions possibles SA qui satisfont aux contraintes SA (\ref{res_eq:c4}) et la condition de l'équité: $\sum_{n \in \mathcal{N}} a_i^n=\sum_{n \in \mathcal{N}} a_j^n=\tau_k$ pour tous les FUEs $i,j \in \mathcal{U}_k$.
Ensuite, nous trions la liste $\Omega\{\mathrm{\bf{A}}\}$ dans l'ordre décroissant de $\sum_{k=2}^K \tau_k$ et obtenons la liste triée $\Omega^{\ast}\{\mathrm{\bf{A}}\}$. Parmi toutes les solutions possibles SA, une possible réalisation de la valeur la plus élevée de la fonction d’objectif (\ref{res_objfun_r4}) et sa solution PA correspondante est la solution optimale.

\textit{Analyse de la complexité:} En calculant la cardinalité de $\Omega^{\ast}\{\mathrm{\bf{A}}\}$ et la complexité de la vérification de faisabilité pour chacun d'eux, la complexité de l'algorithme de recherche exhaustive peut être exprimée comme $O\left(K^3 \times N \times (N!)^{(K-1)}\right)$, qui est exponentielle du nombre dénombre de sous-canaux et de FBSs. Cet algorithme de recherche exhaustive optimale sera utilisé comme référence pour évaluer la performance de l'algorithme de faible complexité présenté ci-après.

\subsubsection{Algorithme sous-optimal et distribué}
Notre algorithme sous-optimal vise à attribuer le même nombre maximal de sous-canaux à FUEs dans chaque femtocell et effectuer l’optimisation Pareto PA pour FUEs et MUEs sur chaque sous-canal afin qu'ils répondent aux contraintes de SINR dans (\ref{res_eq:rate_n}) et (\ref{res_eq:rate_cond}).
Pour atteindre cet objectif de conception, nous proposons un nouvel algorithme d'allocation des ressources qui est décrit en détail dans l'algorithme~\ref{res_alg:gms1_r4}. 
L'opération clé dans cet algorithme est la SA en fonction du poids itératif qui est effectuée en parallèle à tous les femtocells. 
Le poids SA pour chaque paire de sous-canal et FUE est défini comme la multiplication de la puissance de transmission estimée et un facteur d'échelle capturant la qualité de l'allocation correspondante.
Plus précisément, chaque UE $i$ dans la cellule estimes la puissance de transmission $k$ sur le sous-canal $n$ à chaque itération $l$ de l'algorithme en utilisant l'algorithme TPC $p_i^{n,\texttt{min}}=  I_i^n(l) \bar{\gamma}_i^n$.
Ensuite, le poids d'affectation pour un FUE $i$ sur le sous-canal $n$ dans la cellule $k$ peut être défini comme $w_{i}^n = \chi_i^n p_i^{n,\texttt{min}}$ lorsque le facteur d'échelle $\chi_i^n$ est défini comme suit:
\begin{equation}
\label{res_eq:w2}
\chi_i^n  = \left\lbrace \begin{array}{*{5}{l}}
\alpha_i^{n}, & \text{if } p_i^{n,\texttt{min}} \leq \frac{P_i^{\texttt{max}}} {\tau_k}\\
\alpha_i^{n} \theta_i^{n}, & \text{if } \frac{P_i^{\texttt{max}}} {\tau_k} < p_i^{n,\texttt{min}} \leq P_i^{\texttt{max}}\\
\alpha_i^{n} \delta_i^{n}, & \text{if }  P_i^{\texttt{max}} < p_i^{n,\texttt{min}},
\end{array} \right. 
\end{equation}
où $\tau_k$ désigne le nombre de sous-canaux attribués pour chaque FUE en femtocell $k$; $\alpha_i^n \geq 1$ est un facteur aidant les MUEs à maintenir leur SINR cible (c-à-d, elle est augmentée si $\mathrm{\bf{A}}(i,n)=1$ si la contrainte SINR d’un MEU n’est pas respectée); $\theta_i^n,\delta_i^n \geq 1$ sont encore des facteurs qui sont augmentés si $\mathrm{\bf{A}}(i,n)=1$ tend à exiger une puissance de transmission plus grande que la puissance moyenne par sous-canal (ie, $\frac{P_i^{\texttt{max}}} {\tau_k}$) et le budget de puissance maximale (c.-à-, $P_i^{\texttt{max}}$).

Compte tenu des poids définis pour chaque FUE $i$, le femtocell $k$ trouve sa SA en utilisant l'algorithme hongrois standard (c-à-d, \textit{l'algorithme 14.2.3} donnée dans \cite{Jungnickel08}) pour résoudre le problème suivant
\beq
\min \limits_{\mathrm{\bf{A}}_k} \sum \limits_{i \in \mathcal{U}_k}\sum \limits_{n \in \mathcal{N}} a_i^n w_i^n \;\; \text{s.t.} \;\; \sum_{n \in \mathcal{N}} a_i^n = \tau_k \:\: \forall i \in \mathcal{U}_k. \label{res_eq:lcl_prob}
\eeq

\renewcommand{\baselinestretch}{0.9}
\small
\begin{algorithm}
\caption{\textsc{L'allocation des ressources de liaison montante distribuée}}
\label{res_alg:gms1_r4}
\begin{algorithmic}[1]
\STATE Initialisation: Mise à jour $p_i(0)=0$ $\forall i$, $\mathrm{\bf{A}}_1$ faisable, $\tau_k=\lfloor \frac{N}{\vert \mathcal{U}_k \vert} \rfloor$ et $\varrho_k=0$ pour tous FBSs, et $\alpha_i^{n}=\theta_i^{n}=1$, $\delta_i^{n}=N$ pour tous FUEs et sous-canaux.

\STATE \textbf{Pour le macrocell:}
\STATE MBS estime $\left\lbrace p_i^{n,\texttt{min}}\right\rbrace $ et vérifie la contrainte de puissance (PCON).

\IF{PCON satisfait} \STATE Mettre à jour puissance $p_i^{n,\texttt{min}}$.
\ELSIF{PCON non satisfait} \STATE Mise à jour puissance $p_i^{n,\texttt{min}}$ avec un facteur d'échelle vers le bas.
\STATE Trouver SC en utilisant la puissance maximale, et FUE générer la plus haute ingérence sur ce SC.
\STATE Augmenter $\alpha$ de cette FUE sur que SC.
\ENDIF
\STATE \textbf{Pour chaque femtocell $k \in \mathcal{B}_{\sf f}$:}
\STATE Chaque FBS $k$ estime $\left\lbrace p_i^{n,\texttt{min}}\right\rbrace $.
\IF{$\varrho_k=1$}  \STATE Fixer $\mathrm{\bf{A}}_k(l)=\mathrm{\bf{A}}_k(l-1)$ 
\ELSIF{$\varrho_k=0$}  \STATE Calculer le poids d'affectation des sous-canaux $\left\lbrace w_{i_u}^n\right\rbrace $ et exécuter l'algorithme Hungarian pour obtenir $W_k(l)$ et $\mathrm{\bf{A}}_k(l)$.
\IF{$W_k(l) > V \sum _{i \in \mathcal{U}_k}P_i^{\texttt{max}}$} \STATE Mise à jour $\tau_k:=\tau_k-1$. 
\ENDIF
\ENDIF
\STATE Vérifie la contrainte de puissance (PCON). 
\IF{PCON satisfait} \STATE Mise à jour puissance $p_i^{n,\texttt{min}}$ et $\varrho_{k,i}=1$. 
\ELSIF{PCON non satisfait} \STATE Mise à jour puissance comme $p_i^{n,\texttt{min}}$ avec un facteur d'échelle vers le bas.
\STATE Trouver SC et FUE, qui a passé la plus grande puissance.
\STATE Augmenter $\theta$ de ce FUE sur ce SC.
\ENDIF
\STATE Mise à jour $\varrho_{k}=\prod_{i \in \mathcal{U}_k}\varrho_{k,i}$.
\STATE Mise à jour $l:=l+1$, revenir à l'étape 2 jusqu'à ce que la convergence.
\end{algorithmic}
\end{algorithm}
\renewcommand{\baselinestretch}{1.4}
\normalsize

\textit{Analyse de la complexité:} La complexité de notre algorithme proposé est $O(K \times N^3)$ pour chaque itération en raison du processus hongrois SA à l'étape 16. Cependant, la SA local peut être réalisé en parallèle à tous $(K-1)$ femtocells. Par conséquent, la complexité d'exécution de notre algorithme est $O(N^3)$ multiplié par le nombre d'itérations nécessaires.

\subsubsection{Résultats numériques}

Pour effectuer les simulations, les MUEs et FUEs sont placés au hasard dans les cercles de rayons de $ r_1 = 1000m $ et $ r_2 = 30m $. Les gains d'alimentation de canal $h_{ij}^n$ sont générés en tenant compte à la fois la décoloration Rayleigh et les distances. Autres paramètres: $\eta_i = 10^{-13}\: W $, $W_l=12 \; dB $. 
Dans la Fig.~\ref{res_r4_fig1a}, nous montrons l'efficacité spectrale minimale totale de tous les femtocells (c-à-d la valeur objective optimale de (\ref{res_objfun_r4})) par rapport à la taille de la constellation de FUEs ($s^{\sf f}$) pour le petit réseau en raison de deux algorithmes l’un optimal et l’autre sous-optimal (l'algorithme~\ref{res_alg:gms1_r4}). Comme on le voit, pour les tailles faibles de la constellation (c-à-d, SINR cibles faibles), l'algorithme~\ref{res_alg:gms1_r4} peut atteindre presque la même efficacité spectrale que celle de l’algorithme optimal tandis que pour des valeurs plus élevées de $s^{\sf f}$, il en résulte l'efficacité spectrale légèrement inférieure à celle de l’algorithme optimal.
\begin{figure}[!t]
\centering
\includegraphics[width=0.7\textwidth]{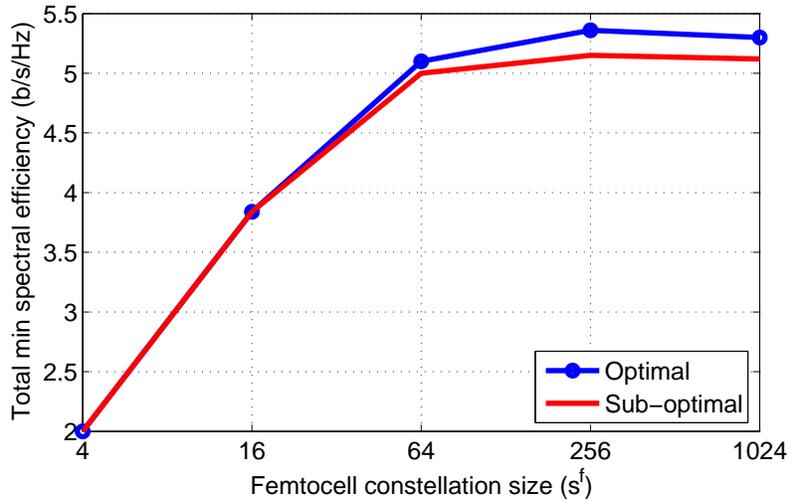}
\caption{L'efficacité spectrale minimale totale de tous les femtocells par rapport à la taille de la constellation de FUEs avec $P_i^{\mathsf{max}} = 0.01 \: W$.}
\label{res_r4_fig1a}
\end{figure}

Dans les figures. \ref{res_r4_fig:pf} et \ref{res_r4_fig:pm}, nous traçons l'efficacité femtocell minimum spectral totale par rapport à la puissance maximale de FUEs ($P_{\sf f}^{\texttt{max}}$) et MUEs ($P_{\sf m}^{\texttt{max}}$), respectivement, pour les différents niveaux de MUEs de modulation (le schéma de modulation des FUEs est $256$-QAM). Ces chiffres montrent que le total minimum spectrale efficacité augmente avec l'augmentation des budgets de puissance maximum, $P_{\sf f}^{\texttt{max}}$ ou $P_{\sf m}^{\texttt{max}}$. Toutefois, cette valeur est saturée quand les budgets de puissance maximum $P_{\sf f}^{\texttt{max}}$ ou $P_{\sf m}^{\texttt{max}}$ deviennent suffisamment grand. En outre, comme le nombre de MUEs augmente, l‘efficacité spectrale minimale totale des femtocell augmente grâce un meilleur gain de diversité offerte par le niveau macro.
\begin{figure}[!t]
\centering
\includegraphics[width=0.7\textwidth]{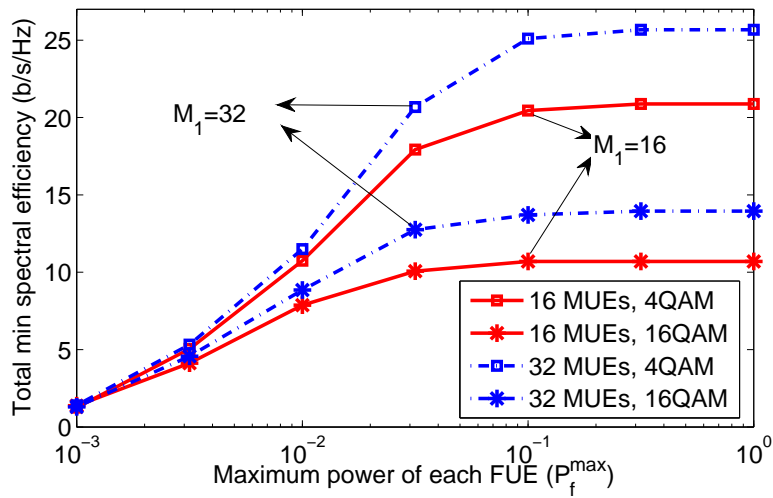}
                \caption{L'efficacité spectrale minimale totale de toutes en fonction du $P_{\sf f}^{\texttt{max}}$. }
                \label{res_r4_fig:pf}
\end{figure}

\begin{figure}[!t]
\centering
\includegraphics[width=0.7\textwidth]{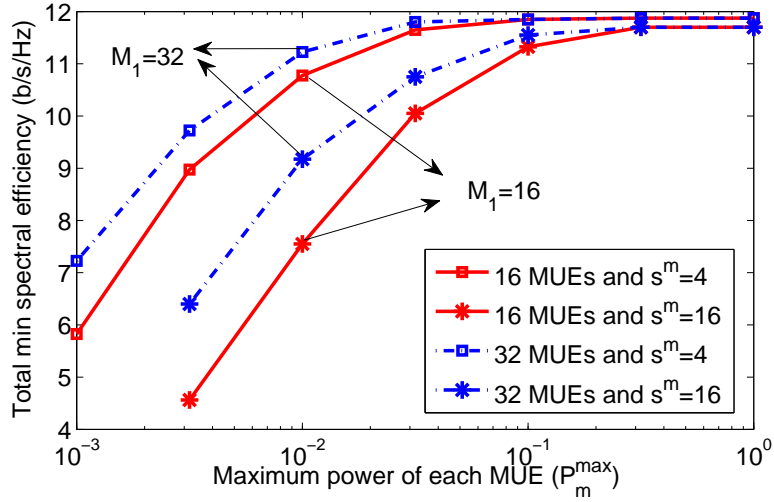}
                \caption{L'efficacité spectrale minimale totale de toutes en fonction du $P_{\sf m}^{\texttt{max}}$. }
                \label{res_r4_fig:pm}
\end{figure}

\subsection{Conception de transmission conjointe pour C-RANs avec la capacité limitée de fronthaul}

Dans cette contribution, nous considérons la conception de la transmission conjointe CoMP pour C-RAN qui considère explicitement la capacité du fronthaul ainsi que des contraintes de qualité de service d'UEs. En particulier, nous faisons les contributions suivantes.
\begin{itemize}
\item Nous formulons le problème conjoint de conception de transmission pour C-RAN qui optimise l'ensemble des RRH desservant chaque UE ainsi que leurs solutions de précodage et de répartition de puissance pour réduire au minimum la puissance totale de transmission compte tenu de la qualité de service des UES et contraintes de capacité du fronthaul.
\item Nous développons deux algorithmes différents de faible complexité pour résoudre le problème sous-jacent: l’algorithme fondé sur de la tarification et l’algorithme de relaxation itérative linéaire.
\item Nous étudions également les configurations étendues où il y a plusieurs contraintes individuelles de capacité du fronthaul et le scénario de transmission multi-flux.
\end{itemize}

\subsubsection{Modèle de système}
Nous considérons la conception de la transmission conjointe pour les communications CoMP de liaison descendante dans le C-RAN avec $K$ RRH et $M$ UEs. Soit $\mathcal{K}$ et $\mathcal{U}$ les ensembles des RRHs et UEs. 
Supposons que le RRH $k$ est équipé $N_k$ antennes et que chaque UE a une seule antenne. 
En outre, on note $\mathbf{v}_u^{k} \in \mathbb{C}^{N_k \times 1}$ comme vecteur de précodage à RRH $k$ correspondant à la transmission à l'UE $u$. 
Le SINR atteint par l'UE $u$ peut être décrit comme
\beq \label{res_eq:SINR}
\Gamma_u =\dfrac{ \left|  \sum \limits_{k \in \mathcal{K}} \mathbf{h}_u^{kH} \mathbf{v}_u^{k}\right| ^2 }
{\sum \limits_{i =1, \neq u}^{M} \left| \sum \limits_{l \in \mathcal{K}}\mathbf{h}_u^{lH} \mathbf{v}_i^{l}\right|^2 + \sigma^2 }.
\eeq
Soit $\mathbf{p}^k=[p^k_1 ... p^k_M]^T$ le vecteur de puissance de transmission de RRH $k$, et $\mathbf{p}=[\mathbf{p}^{1 T} ... \mathbf{p}^{K T}]^T$. 
Notez que $p^k_u=0$ implique que RRH $k$ ne sert pas l'UE $u$. 
En revanche, si $p^k_u>0$, le lien fronthaul du RRH $k$ à l'UE $u$ est activé pour transporter le signal de bande de base pour servir UE $u$ à sa SINR cible requise. 
Par conséquent, la capacité consommée totale des liens fronthaul peut être capturés par le vecteur $\mathbf{p}$ de puissance de transmission et le SINR cible de UEs, qui peut être écrit mathématiquement:
\beq \label{res_eq:Ck}
G(\mathbf{p})= \sum \limits_{k \in \mathcal{K}} \sum \limits_{u \in \mathcal{U}}  \delta( p_u^k ) R_u^{\sf{fh}}   
\eeq 
où $\delta(\cdot)$ désigne la fonction en escalier, et $R_u^{\sf{fh}}$ représente la capacité requise correspondante à UE $u$. 
En particulier, $R_u^{\sf{fh}}$ peut être exprimée comme \cite{dirk_14} $R_u^{\sf{fh}} = Q_{\sf{fh}} \log_2(1+\bar{\gamma}_u)$ où $Q_{\sf{fh}}$ représente un facteur multiplicateur. 
On peut poser le problème FCPM comme suit:
\begin{align}
(\mathcal{P}_{\sf{FCPM}}) \;\;\;\; \min \limits_{\lbrace \mathbf{v}_u^k \rbrace, \mathbf{p} } & \;\;\;\;\;\;\;\; \sum \limits_{k \in \mathcal{K}}\sum \limits_{u \in \mathcal{U}} \mathbf{v}_u^{kH}\mathbf{v}_u^{k}  \label{res_obj_1} \\
\;\;\;\;\;\;\;\; \text{s. t. } & \; \Gamma_u \geq \bar{\gamma}_u, \qquad \forall u \in \mathcal{U}, \label{res_eq:SINR_constraint} \\  
{} & \; \sum \limits_{u \in \mathcal{U}} p^k_u = \sum \limits_{u \in \mathcal{U}} \mathbf{v}_u^{kH}\mathbf{v}_u^{k} \leq P_k, \;\;\; \forall k \in \mathcal{K}, \label{res_eq:pwc} \\
{} & \; G(\mathbf{p})= \sum \limits_{k \in \mathcal{K}} \sum \limits_{u \in \mathcal{U}}  \delta( p_u^k ) R_u^{\sf{fh}}  \leq C, \label{res_eq:C_cons}
\end{align}
où $\bar{\gamma}_u$ désigne le SINR cible du UE $u$, $P_k$ ($k \in \mathcal{K}$) désigne la puissance maximale du RRH $k$, et $C$ est la limite de capacité du fronthaul.

\subsubsection{Algorithme fondé sur de la tarification}
Le premier algorithme de faible complexité a été développé en utilisant la méthode de pénalité pour faire face à la contrainte (\ref{res_eq:C_cons}). Plus précisément, nous considérons la soi-disant la capacité du fronthaul et le problème de minimisation de la puissance (PFCPM - Pricing-Based fronthaul Capacity and Power Minimization) fondée sur de la tarification, qui est défini comme l’équation suivante:
\beq
(\mathcal{P}_{\sf{PFCPM}}) \;\;\;\; \min \limits_{\lbrace \mathbf{v}_u^k \rbrace, \mathbf{p} } \;\; \Vert \mathbf{p} \Vert_{\mathbf{1}} + q G(\mathbf{p})  \;\; \text{s. t. } \;\; \text{ contraintes (\ref{res_eq:SINR_constraint}) et (\ref{res_eq:pwc}).} \label{res_obj_2}  
\eeq
Dans ce qui suit, nous établissons les résultats théoriques sur lesquelles nous développons le mécanisme pour mettre à jour le paramètre de prix $q$.
\begin{proposition} \label{res_R6_lm1} $G_{\sf{PFCPM}}(q)$ est une fonction décroissante de $q$ et sa borne inférieure est $G_{\sf{PFCPM}}(\bar{q}) $ où $\bar{q} = \sum \limits_{k \in \mathcal{K}} P_k/\sigma_{\sf{min}}$, et $\sigma_{\sf{min}}$ est la plus petite valeur non nulle de $\vert G(\mathfrak{a})-G(\mathfrak{a}^{\prime})\vert$ où ${\lbrace\mathfrak{a},\mathfrak{a}^{\prime}\rbrace \subset \mathcal{S}_\mathfrak{a}}$.
\end{proposition}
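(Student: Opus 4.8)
The plan is to separate the two assertions---that $G_{\sf{PFCPM}}(\cdot)$ is non-increasing and that it is bottomed out at $q=\bar q$---and to treat both by viewing $\mathcal{P}_{\sf{PFCPM}}$ as a power-minimization problem perturbed by the penalty $q\,G(\mathbf{p})$. First I would establish monotonicity by a two-point comparison. Take $q_1 < q_2$ and let $\mathbf{p}_j$ (together with its precoders) be optimal for $q=q_j$; writing $P_j=\Vert \mathbf{p}_j\Vert_{\mathbf{1}}$ and $G_j=G(\mathbf{p}_j)$, optimality gives $P_1+q_1 G_1 \le P_2+q_1 G_2$ and $P_2+q_2 G_2 \le P_1+q_2 G_1$, and adding these yields $(q_2-q_1)(G_1-G_2)\ge 0$, hence $G_{\sf{PFCPM}}(q_1)=G_1\ge G_2=G_{\sf{PFCPM}}(q_2)$.

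Next I would record the discreteness that produces the lower bound. Since $G(\mathbf{p})$ depends on $\mathbf{p}$ only through the activation pattern $(\delta(p_u^k))_{k\in\mathcal{K},u\in\mathcal{U}}\in\mathcal{S}_\mathfrak{a}$ and $\mathcal{S}_\mathfrak{a}$ is finite, $G$ takes only finitely many values and any two distinct ones differ by at least $\sigma_{\sf{min}}$. Let $G_{\sf{min}}$ be the smallest value of $G(\mathbf{p})$ over all $(\{\mathbf{v}_u^k\},\mathbf{p})$ meeting (\ref{res_eq:SINR_constraint}) and (\ref{res_eq:pwc}); this minimum is attained because one can enumerate the patterns in $\mathcal{S}_\mathfrak{a}$ and, for each SINR-feasible one, solve the associated convex (SDP) power-minimization problem. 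Since the optimizer of $\mathcal{P}_{\sf{PFCPM}}$ is feasible, $G_{\sf{PFCPM}}(q)\ge G_{\sf{min}}$ for every $q$, so it suffices to prove $G_{\sf{PFCPM}}(\bar q)=G_{\sf{min}}$; combined with the monotonicity of the first step this gives $G_{\sf{PFCPM}}(q)\ge G_{\sf{PFCPM}}(\bar q)$ for all $q$.

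To prove that identity I would argue by contradiction. Let $\mathbf{p}^\star$ be optimal for $q=\bar q$ and let $\tilde{\mathbf{p}}$ achieve $G_{\sf{min}}$. Comparing them in the objective of $\mathcal{P}_{\sf{PFCPM}}$ gives $\bar q\,(G(\mathbf{p}^\star)-G_{\sf{min}}) \le \Vert \tilde{\mathbf{p}}\Vert_{\mathbf{1}}-\Vert \mathbf{p}^\star\Vert_{\mathbf{1}}$, and the per-RRH constraints (\ref{res_eq:pwc}) together with $\mathbf{p}\ge 0$ force the right-hand side to be at most $\sum_{k\in\mathcal{K}}P_k = \bar q\,\sigma_{\sf{min}}$. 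If $G(\mathbf{p}^\star)>G_{\sf{min}}$, then the left-hand side is at least $\bar q\,\sigma_{\sf{min}}$, so the whole chain collapses to equalities; in particular $\Vert \mathbf{p}^\star\Vert_{\mathbf{1}}=0$, which makes every link inactive and forces $G(\mathbf{p}^\star)=0<G_{\sf{min}}$, a contradiction. Hence $G(\mathbf{p}^\star)=G_{\sf{min}}$, and the proposition follows.

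I expect this last step to be the real content: the point is that raising $q$ only rebalances a bounded ``power'' term against a ``fronthaul'' term confined to a ladder of spacing $\sigma_{\sf{min}}$, so that once $q$ exceeds the ratio $\bar q$ of the aggregate power budget $\sum_{k\in\mathcal{K}}P_k$ to that spacing, keeping any extra link active can never be worthwhile; everything else---existence of minimizers and the monotonicity comparison---is routine once the finiteness of $\mathcal{S}_\mathfrak{a}$ and the bound in (\ref{res_eq:pwc}) are in hand.
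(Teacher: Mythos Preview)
Your proposal is correct and follows essentially the same approach as the paper. The monotonicity argument via the two optimality inequalities is identical, and for the lower bound both you and the paper compare the PFCPM optimizer against an arbitrary feasible point, bound the power gap by $\sum_{k}P_k$, and use the $\sigma_{\sf{min}}$ granularity of $G$; the only difference is that you package the last step as an explicit contradiction (forcing $\Vert\mathbf{p}^\star\Vert_1=0$), whereas the paper asserts a strict inequality directly---your version is in fact a bit more careful at that boundary.
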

Ces résultats constituent la base à partir de laquelle nous pouvons développer un algorithme itératif présenté dans l'algorithme~\ref{res_R6_alg:gms2}. En fait, nous pouvons employer la méthode de recherche de bissection pour mettre à jour le paramètre de prix $q$ jusqu'à $G_{\sf{PFCPM}}(\bar{q}) = C$.
\renewcommand{\baselinestretch}{0.9}
\small
\begin{algorithm}[t]
\caption{\textsc{Algorithme Fondé sur de la Tarification pour le problème FCPM}}
\label{res_R6_alg:gms2}
\begin{algorithmic}[1]
\STATE Résoudre problème PFCPM utilisant Alg.~\ref{res_R6_alg:gms3} avec $q^{(0)}=\bar{q}$.
\IF{$G_{\sf{PFCPM}}(\bar{q}) > C$} \STATE Arrêt, le problème de FCPM est infaisable.
\ELSIF{$G_{\sf{PFCPM}}(\bar{q}) = C$} \STATE Arrêt, la solution est obtenue.
\ELSIF{$G_{\sf{PFCPM}}(\bar{q}) < C$} 
\STATE Mise à jour $l=0$, $q_{\sf{U}}^{(l)}=\bar{q}$ et $q_{\sf{L}}^{(l)}=0$.
\REPEAT 
\STATE Mise à jour $l=l+1$ et $q^{(l)}=\left(q_{\sf{U}}^{(l-1)} + q_{\sf{L}}^{(l-1)} \right)/2$.
\STATE Résoudre problème PFCPM utilisant Alg.\ref{res_R6_alg:gms3} avec $q^{(l)}$.
\IF{$G_{\sf{PFCPM}}(q^{(l)}) > C$} \STATE Mise à jour $q_{\sf{U}}^{(l)}=q_{\sf{U}}^{(l-1)}$ et $q_{\sf{L}}^{(l)}=q^{(l)}$.
\ELSIF{$G_{\sf{PFCPM}}(q^{(l)}) < C$} \STATE Mise à jour $q_{\sf{U}}^{(l)}=q^{(l)}$ et $q_{\sf{L}}^{(l)}=q_{\sf{L}}^{(l-1)}$.
\ENDIF
\UNTIL{$G_{\sf{PFCPM}}(q^{(l)}) = C$ or $q_{\sf{U}}^{(l)} - q_{\sf{L}}^{(l)}$ est trop petit}.
\ENDIF
\end{algorithmic}
\end{algorithm}
\renewcommand{\baselinestretch}{1.4}
\normalsize

Nous développons maintenant un algorithme efficace pour résoudre le problème $\mathcal{P}_{\sf{PFCPM}}$ basé sur l'approximation concave de la fonction en escalier. Plus précisément, la fonction $\delta(x)$ pour $x \geq 0$ peut être approcher par une fonction concave appropriée. Désignons $f_{\mathsf{apx}}^{(k,u)}(p^k_u)$ comme la fonction de pénalité concave qui se rapproche de la fonction $\delta(p^k_u)$ correspondant à un lien $(k,u)$. Ensuite, le problème $\mathcal{P}_{\sf{PFCPM}}$ peut alors être approché par le problème suivant:
\beq
 \min \limits_{\lbrace \mathbf{v}_u^k \rbrace,\mathbf{p}} \;\; \sum \limits_{k \in \mathcal{K}} \sum \limits_{u \in \mathcal{U}} p_u^k + q \sum \limits_{k \in \mathcal{K}} \sum \limits_{u \in \mathcal{U}} f_{\mathsf{apx}}^{(k,u)}\left( p_u^k \right) R_u^{\sf{fh}} \;\; 
 \text{s. t. } \;\; \text{ contraintes (\ref{res_eq:SINR_constraint}) et (\ref{res_eq:pwc}).} \label{res_objfun2}
\eeq
En appliquant la méthode du gradient, on peut résoudre le problème (1.22) en résolvant itérativement le problème suivant jusqu'à sa convergence.
\beq
 \min \limits_{\lbrace \mathbf{v}_u^k \rbrace} \;\;\; \sum \limits_{k \in \mathcal{K}} \sum \limits_{u \in \mathcal{U}} \alpha_u^{k(n)} \mathbf{v}_u^{kH}\mathbf{v}_u^{k} \;\; \text{s. t.} \;\; \text{ contraintes (\ref{res_eq:SINR_constraint}) and (\ref{res_eq:pwc})} \label{res_objfun3} 
\eeq
où
\beq \label{res_eq:alp}
\alpha_u^{k(n)}=  1 + q \nabla f_{\mathsf{apx}}^{(k,u)}\left( p_u^k \right)R_u^{\sf{fh}}.
\eeq
Le problème (\ref{res_objfun3}) est un problème de minimisation des sommes pondérées de puissance, qui peut être transformé dans le programme semi-défini convexe (SDP) tel que présenté dans la Section~\ref{Ch2_sec_PMP}. Ensuite, l'algorithme pour résoudre le problème $\mathcal{P}_{\sf{PFCPM}}$ est présenté dans l'algorithme \ref{res_R6_alg:gms3}. L’algorithme~\ref{res_R6_alg:gms2}, qui est proposé pour résoudre le problème $\mathcal{P}_{\sf{FCPM}}$, est basé sur la solution du problème de $\mathcal{P}_{\sf{PFCPM}}$, qui peut être obtenu en utilisant l'algorithme~\ref{res_R6_alg:gms3}. En outre, les résultats indiqués dans la proposition~\ref{res_R6_lm1} et les propriétés standards de la méthode du gradient garantissent la convergence de cet algorithme.
\renewcommand{\baselinestretch}{0.9}
\small
\begin{algorithm}[!t]
\caption{\textsc{Algorithme Fondé sur de la SDP pour le problème PFCPM}}
\label{res_R6_alg:gms3}
\begin{algorithmic}[1]
\STATE Initialisation: Mise à jour $n=0$, et $\alpha_u^{k(0)}=1$ pour tout liens RRH-UE $(k,u)$.

\STATE Itération $n$:
\begin{description}
\item[a.] Résoudre le problème (\ref{res_objfun3}) avec $\left\lbrace \alpha_u^{k(n-1)} \right\rbrace $ pour obtenir $(\mathbf{p}^{(n)},\lbrace \mathbf{v}_u^k \rbrace^{(n)})$.
\item[b.] Réactualiser $\left\lbrace \alpha_u^{k(n)} \right\rbrace $ comme dans (\ref{res_eq:alp}).
\end{description} 
\STATE Définir $n:=n+1$ et revenir à l'étape 2 jusqu'à ce que la convergence.
\end{algorithmic}
\end{algorithm}
\renewcommand{\baselinestretch}{1.4}
\normalsize

\subsubsection{Algorithme de relaxation itérative linéaire}
Nous présentons maintenant l'algorithme de ralaxation itérative linéaire pour traiter directement avec la fonction en escalier dans la contrainte (\ref{res_eq:C_cons}). Tout d'abord, le problème de $\mathcal{P}_{\sf{FCPM}}$ peut être approché par le problème suivant:
\begin{align}
 \min \limits_{\lbrace \mathbf{v}_u^k \rbrace, \mathbf{p}} & \;\;\; \Vert \mathbf{p} \Vert_{\mathbf{1}}  \label{res_obj_4} \\
 \text{s. t. } & \; \text{ contraintes (\ref{res_eq:SINR_constraint})}  \nonumber  \\
 {} & \; \sum \limits_{k \in \mathcal{K}} \sum \limits_{u \in \mathcal{U}} f_{\mathsf{apx}}^{(k,u)}( p_u^k ) R_u^{\sf{fh}} \leq C. \label{res_const4}
\end{align}
Nous approchons la fonction $f_{\mathsf{apx}}^{(k,u)}(p_u^k)$ par une fonction linéaire, qui est basée sur les propriétés de la dualité de conjugué des fonctions convexes \cite{rockafellar70} comme suit:
\beq \label{res_eq:g2}
f_{\mathsf{apx}}^{(k,u)}( p_u^k )= \inf \limits_{z_u^k} \left[ z_u^k p_u^k - f_{\mathsf{apx}}^{(k,u)\ast}(z_u^k) \right]  
\eeq
où $f_{\mathsf{apx}}^{(k,u)\ast}(z)= \inf \limits_{w} \left[ z w -f_{\mathsf{apx}}^{(k,u)}(w) \right]$. On peut vérifier que le problème d'optimisation sur le côté droit de (\ref{res_eq:g2}) atteint son minimum à
\beq \label{res_eq:z}
\hat{z}_u^k=\nabla f_{\mathsf{apx}}^{(k,u)}( w )\vert_{w=p_u^k}.
\eeq
Par conséquent, pour une valeur donnée de $\hat {z} $, le problème (1.25) - (1.26) peut être reformulé par:
\begin{align}
 {} & \min \limits_{\lbrace \mathbf{v}_u^k \rbrace} \:\:\: \sum \limits_{k \in \mathcal{K}} \sum \limits_{u \in \mathcal{U}} \mathbf{v}_u^{kH}\mathbf{v}_u^{k}  \label{res_obj_1sdp} \\
 \text{s. t. } & \; \text{ contraintes (\ref{res_eq:SINR_constraint})}  \nonumber  \\
  {} & \!\!\!\!\! \sum \limits_{k \in \mathcal{K}} \sum \limits_{u \in \mathcal{U}} \! \hat{z}_u^k  R_u^{\sf{fh}}  \mathbf{v}_u^{kH}\mathbf{v}_u^{k} \leq   C \!  + \! \sum \limits_{k \in \mathcal{K}} \sum \limits_{u \in \mathcal{U}} \! R_u^{\sf{fh}} f_{\mathsf{apx}}^{(k,u)\ast}(\hat{z}_u^k).  \label{res_const5}
\end{align}
Le problème décrit par les équations (\ref{res_obj_1sdp})-(\ref{res_const5}) est en effet le problème bien connu de minimisation des sommes de puissance, qui peut être résolu en le transformant en le SDP comme décrit dans la Section~\ref{Ch2_sec_PMP}. En résumé, nous pouvons atteindre nos objectifs de conception en mettant à jour $\left\lbrace \hat{z}_u^k \right\rbrace$ itérativement. En faisant cette mise à jour, nous pouvons résoudre à plusieurs reprises le problème d'optimisation de précodage (\ref{res_obj_1sdp})-(\ref{res_const5}). L’algorithme est décrit dans l'algorithme~\ref{res_R6_alg:gms4}. En outre, les propriétés de la fonction conjuguée garantissent la convergence de cet algorithme.
\renewcommand{\baselinestretch}{0.9}
\small
\begin{algorithm}[!t]
\caption{\textsc{Algorithme de Relaxation Itérative Linéaire}}
\label{res_R6_alg:gms4}
\begin{algorithmic}[1]
\STATE Commencer par une solution et définir $l=0$.
\REPEAT 
\STATE Calculer $\left\lbrace \hat{z}_u^{k,(l)} \right\rbrace $ comme dans (\ref{res_eq:z}) pour tout $(k,u)$.
\STATE Résoudre le problème (\ref{res_obj_1sdp})-(\ref{res_const5}) avec $\left\lbrace \hat{z}_u^{k,(l)}\right\rbrace $.
\STATE Définir $l=l+1$.
\UNTIL Convergence.
\end{algorithmic}
\end{algorithm}
\renewcommand{\baselinestretch}{1.4}
\normalsize

\subsubsection{Résultats numériques}
Nous considérons trois RRH dans le cadre de simulation où les distances entre leurs centres sont égales à $500 \; m$. De plus, UEs sont placés au hasard dans un cercle dont le centre est colocalisé avec un de ces trois RRH et le rayon de chaque cercle est de $125 \; m$. Les gains de canal sont générés en considérant à la fois l’évanouissement Rayleigh et la perte de puissance de transmission dans le canal. D'autres paramètres sont définis comme suit: $\sigma^2=10^{-13} \; W$, $\sigma^2=10^{-13} \; W$, $\tau=10^{-8}$, $P_k=3 \; W$, et $N_k=4$, pour tout $k \in \mathcal{K}$.
\begin{figure}[!t]
\begin{center}
\includegraphics[width=0.7\textwidth]{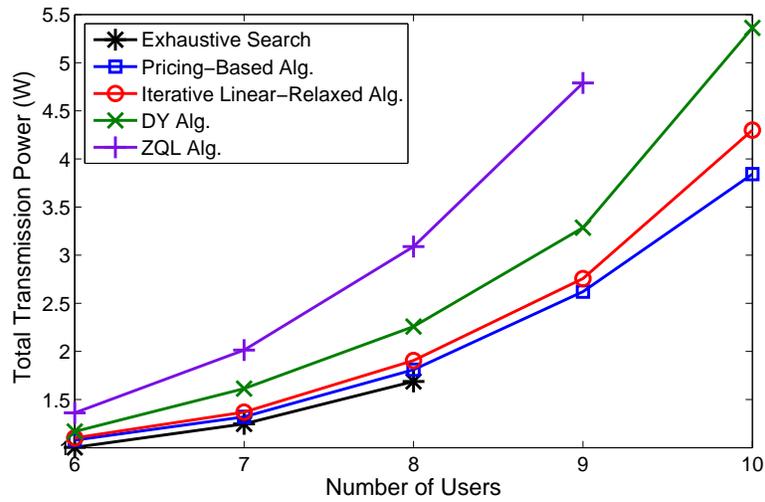}
\end{center}
\caption{Les puissances totales de transmission en fonction du nombre d'UEs.}
\label{res_SMR_3BS_PvsM}
\end{figure}

\begin{figure}[!t]
\begin{center}
\includegraphics[width=0.7\textwidth]{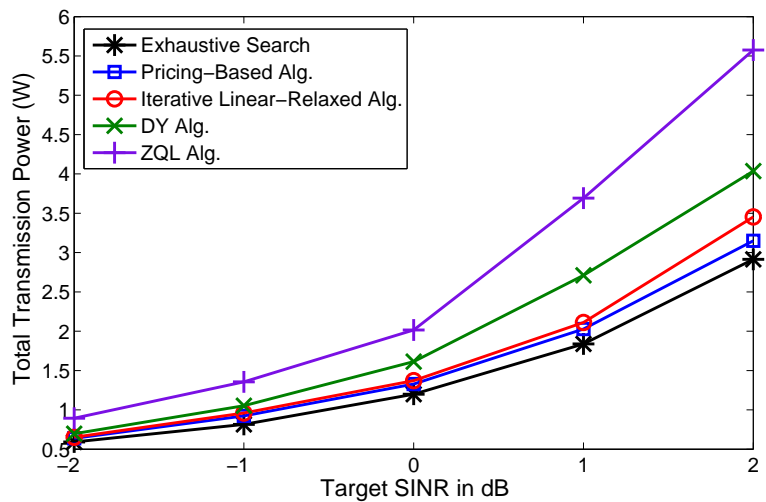}
\end{center}
\caption{Les puissances totales de transmission en fonction des SINR ciblées des UEs.}
\label{res_SMR_3BS_PvsSINR}
\end{figure}

Dans les figures \ref{res_SMR_3BS_PvsM} et \ref{res_SMR_3BS_PvsSINR} , nous montrons les puissances totales de transmission de tous RRH obtenues par la méthode de recherche exhaustive, les algorithmes proposés, ainsi que deux algorithmes de référence, en fonction du nombre d'UEs et des SINR ciblées des UEs, respectivement. 
Comme on le voit, les algorithmes proposés conduisent à la puissance de transmission totale plus faible par rapport aux algorithmes Dy et ZQL. De plus, l’algorithme fondé sur de la tarification est légèrement meilleur que l'algorithme de relaxation itérative linéaire, et les deux algorithmes proposés exigent la puissance d'émission totale légèrement supérieure à celle de l'algorithme de recherche exhaustive optimale.
En outre, l'algorithme de DY surpasse l'algorithme de ZQL mais ces algorithmes exigent des puissances considérablement plus élevées par rapport à nos algorithmes. 
Ainsi, ces figures montrent que la puissance de transmission total augmente lorsque le nombre des UEs augmente dans le cas de tout algorithm.

\subsection{Allocation de ressources pour la virtualisation sans fil OFDMA-basé C-RAN}
Cette contribution considère la conception de l'allocation des ressources pour la liaison montante OFDMA C-RAN soutenant plusieurs opérateurs (OPs). À notre connaissance, la conception de liaison montante C-RAN compte tenu des contraintes sur la capacité du fronthaul et des ressources limitées de l’infoniagoque n'a pas été étudiée auparavant. Cette ligne de travail vise à combler cette lacune dans la littérature C-RAN existante où nous faisons les contributions suivantes.
\begin{itemize}
\item Nous considérons l'allocation des ressources pour la liaison montante OFDMA C-RAN virtualisée où plusieurs OPs partagent l'infrastructure C-RAN et des ressources appartenant à un fournisseur d'infrastructure. Le fournisseur de l'infrastructure et OPs sont intéressés à maximiser leurs profits, qui sont modélisés par deux problèmes d'allocation de ressources différent, de niveau supérieur et inférieur. Le problème de niveau supérieur se concentre sur la tranchement de capacité de trancher fronthaul et des ressources d’informatique en nuage pour tous les OPs. Le niveau inférieur maximise la somme de taux de l'OP en optimisant les taux de transmission des utilisateurs et de l'allocation de bits de quantification pour les signaux en bande de base I/Q comprimés, qui doivent être transférés de RRH vers le nuage. Notre conception vise à maximiser le bénéfice de la somme pondérée à la fois des fournisseurs d'infrastructure et les OPs tenant compte des contraintes sur la limite de calcul de la capacité et de nuages fronthaul.
\item Nous développons un cadre algorithmique en deux étapes. Dans la première étape, nous transformons à la fois au niveau supérieur et niveau inférieur des problèmes dans les problèmes relaxés d'optimisation correspondants en relaxant les variables discrètes sous-jacentes en variables continues. Nous montrons que ces problèmes sont convexes et donc ils peuvent être résolus de façon optimale. Dans la deuxième étape, nous proposons deux méthodes pour utiliser la solution obtenue en résolvant les problèmes relaxés pour parvenir à une solution réalisable finale pour le problème d'origine. 
\end{itemize}

\subsubsection{Modèle de système}
Nous considérons le système OFDMA de liaison montante C-RAN avec $K$ RRHs ($K$ cellules) supportant $O$ différents OPs. Nous supposons que chaque cellule utilise l'ensemble du spectre avec $S$ blocs de ressources physiques (PRBs). Désignons $\mathcal{S}$ et $\mathcal{S}_k^o$ comme l'ensemble de tous les PRBs et l'ensemble des PRBs attribués pour à l'OP $o$ dans la cellule $k$. 
Soit $x_{k}^{(s)}$ le signal de bande de base transmis sur PRB $s$ dans la cellule $k$. 
Le signal reçu à RRH $k$ sur PRB $s$ peut être écrit comme:
\beq
y^{(s)}_{k} = \sum_{j \in \mathcal{K}} h_{k,j}^{(s)} \sqrt{p_{j}^{(s)}} x_{j}^{(s)} + \eta_k^{(s)},
\eeq
où $p_{j}^{(s)}$ désigne la puissance de transmission correspondante à $x_{j}^{(s)}$, $h_{k,j}^{(s)}$ est le gain de canal du UE affecté au PRB $s$ dans la cellule $j$ à RRH $k$, et $\eta_k$ désigne le bruit gaussien.
Supposons que RRH $k$ utilise $b_{k}^{(s)}$ bits pour quantifier chacune des parties I/Q de $y^{(s)}_{k}$ avant de les transmettre au nuage. 
La puissance de bruit de quantification et le SINR du signal quantifié peut être formulée par:
\beqn 
q^{(s)}_{k}(b^{(s)}_{k}) \simeq   \dfrac{\sqrt{3} \pi}{2^{2 b_{k}^{(s)} +1}}  Y^{(s)}_{k}, \label{res_eq:q_kmn} \\
\gamma^{(s)}_{k}(b^{(s)}_{k}) = \dfrac{D_k^{(s)}}{I_k^{(s)}+ q^{(s)}_{k}(b^{(s)}_{k})} \simeq \dfrac{D_k^{(s)}}{I_k^{(s)} + \frac{\sqrt{3} 
\pi Y^{(s)}_{k}}{2^{2 b_{k}^{(s)} +1}} }, \label{res_eq:SINR}
\eeqn
où $Y^{(s)}_{k}$ représente la puissance du signal reçu $y^{(s)}_{k}$, soit $ Y^{(s)}_{k} = \sum \limits_{j \in \mathcal{K}} \vert h_{k,j}^{(s)} \vert ^2 p_{j}^{(s)} + \sigma_k^{(s)2} = D_k^{(s)} + I_k^{(s)}$. 
De plus, $I_k^{(s)}$ représente l'interférence multi-cellules sur le PRB $s$. 
Nous supposons que le processus de quantification satisfait $q^{(s)}_{k}(b^{(s)}_{k}) \leq \sqrt{Y^{(s)}_{k} I^{(s)}_{k}}$, ce qui peut garantir la qualité du signal quantifiée que $\gamma^{(s)}_{k}(b^{(s)}_{k}) \geq \underline{\gamma}^{(s)}_{k} = \sqrt{\bar{\gamma}^{(s)}_{k} + 1} - 1$.
Cette exigence de conception est équivalente à $b^{(s)}_{k} \geq \lceil \underline{b}^{(s)}_{k} \rceil = \dfrac{1}{2}\left( \log_2 \left( \sqrt{3}\pi \sqrt{\dfrac{ Y^{(s)}_{k}}{I^{(s)}_{k}}} \right) -1\right)$.
Le nombre total de bits de quantification du OP $o$ dans la cellule $k$ peut être exprimé comme:
\beq
B_k^o = (2N_{\sf{RE}}) \sum_{s \in \mathcal{S}_k^o} b_{k}^{(s)},
\eeq 
où $N_{\sf{RE}}$ est le nombre d'éléments de ressources (RE – Ressource Elements) en une seconde où RE correspond à un symbole de données sur chaque sous-porteuse. Nous supposons que le débit de données $r^{(s)}_{k}$ (en \textit{`` bits par utilisation de canal (bits pcu)''}, ou \textit{``bits par RE''}) pour la transmission sur PRB $s$ est choisi parmi un ensemble prédéterminé de taux $\mathcal{R} = \lbrace R_1, R_2, ..., R_{M_R} \rbrace$.
La complexité des calculs nécessaires pour décoder correctement les bits d'information peut être exprimée par:
\beq \label{res_eq:cpt}
C^{(s)}_{k}  = \chi^{(s)}_{k}(r^{(s)}_{k},b^{(s)}_{k})  =  A r^{(s)}_{k} \!\! \left[B - 2 \log_2 \! \left( \log_2 \! 
\left( \! 1 +\gamma^{(s)}_{k}(b^{(s)}_{k}) \! \right) \! - r^{(s)}_{k} \! \right) \! \right].
\eeq
Soit $\mb{r}^o$ et $\mb{b}^o$ les vecteurs représentant les taux et le nombre de bits de quantification correspondant à OP $o$. La complexité de calcul totale requise par OP $o$ (en \textit{``bit itération par seconde (bips)''}) peut être écrite comme
\beq
C_{\sf{tt}}^o(\mb{r}^o,\mb{b}^o) = N_{\sf{RE}} \sum_{k \in \mathcal{K}} \sum_{s \in \mathcal{S}^o_k} C^{(s)}_{k}.
\eeq

Nous sommes maintenant prêts à présenter la formulation du problème. 
Soit $C^o$ et $B^o_k$ la complexité de calcul et la capacité du fronthaul correspondante à OP $o$ dans la cellule $k$, et leurs prix sont appelés $\psi^o$ (\textcent$/bips$) et $\beta^o_k$ (\textcent$/bps$), respectivement. L'OP $o$ doit payer une somme de $G^{\sf{InP}}_o=\psi^o C^o + \sum \limits_{k \in \mathcal{K}} \beta^o_k B^o_k$ pour utiliser la tranche de ressource allouée. 
De plus, soit $G^{\sf{InP}} = \sum \limits_{o \in \mathcal{O}} G^{\sf{InP}}_o$ les revenus obtenus par le fournisseur d'infrastructure. Ignorant le coût de fonctionnement du fournisseur d'infrastructure, ce revenu $G^{\sf{InP}}$ peut être considéré comme le bénéfice de ce fournisseur d'infrastructure. Notez que nous ne considérons pas l'optimisation pour les revenus ou les coûts liés au spectre fréquentiel, qui sont supposéd fixes.

Pour plus de commodité de notation, nous allons introduire le vecteur $\mb{B}^o = [B^o_1,...,B^o_K]$. De plus, on note $R_o(C^o,\mb{B}^o)$ la somme de taux maximale atteinte par tous les UEs d’OP $o$. $R_o(C^o,\mb{B}^o)$ représente aussi le résultat des problème de niveau inférieur $(\mathcal{P}^o)$ correspond à OP $o$:
\begin{subequations} \label{res_obj_11}
\begin{align}
 R_o(C^o,\mb{B}^o) = \max \limits_{\mb{r}^o, \mb{b}^o} & \;\;\;\;\;  \sum_{k \in \mathcal{K}} \sum_{s \in \mathcal{S}^o_k} r^{(s)}_{k}  \label{res_obj_1a} \\
\text{s. t. } & \; \sum_{k \in \mathcal{K}} \sum_{s \in \mathcal{S}^o_k} C^{(s)}_{k} \leq C^o/N_{\sf{RE}}, \label{res_obj_1b} \\
 {} & \;  r^{(s)}_{k} \leq \log_2 \left(1 + \gamma^{(s)}_{k}(b^{(s)}_{k}) \right), \forall k \in \mathcal{K}, \forall s \in \mathcal{S}^o_k, \label{res_obj_1c}\\
 {} & \; \sum_{s \in \mathcal{S}^o_k} b_{k}^{(s)} \leq B^o_k/(2N_{\sf{RE}}), \forall k \in \mathcal{K}, \label{res_obj_1d} \\
  {} & \; b^{(s)}_{k} \geq \lceil \underline{b}^{(s)}_{k} \rceil, \forall k \in \mathcal{K}, \forall s \in \mathcal{S}^o_k, \label{res_obj_1e} \\
 {} & \; \text{$b^{(s)}_{k}$ est un entier et } r^{(s)}_{k} \in \mathcal{M}_R, \forall k \in \mathcal{K}, \forall s \in \mathcal{S}^o_k. \label{res_obj_1f} 
\end{align}
\end{subequations}
Supposons que le prix par unité de taux d’OP $o$ est $\rho^o$ (\textcent$/bps$). Ensuite, le bénéfice atteint OP $o$ peut être exprimé en $G^{\sf{OP}}_o= \rho^o N_{\sf{RE}} R_o(C^o,\mb{B}^o) - G^{\sf{InP}}_o$. Le problème de niveau supérieur vise à maximiser la somme pondérée des bénéfices du fournisseur d'infrastructure et les OPs, qui peuvent être déclarés comme
\begin{subequations} \label{res_obj_U}
\begin{align}
 \max \limits_{\lbrace C^o, \mb{B}^o \rbrace} & \;\;\;\;\; \upsilon^{\sf{InP}} G^{\sf{InP}} + \sum \limits_{o \in \mathcal{O}} \upsilon^o G^{\sf{OP}}_o, \label{res_obj_Ua} \\
\text{s. t. } & \;\;\; \sum_{o \in \mathcal{O}} C^{o} \leq \bar{C}_{\sf{cloud}}, \label{res_obj_Ub} \\
{} & \;\;\; \sum_{o \in \mathcal{O}} B^{o}_k \leq \bar{B}_k, \forall k \in \mathcal{K}. \label{res_obj_Uc} 
\end{align}
\end{subequations}

\subsubsection{Problème de relaxation et de convexité}
Nous approchons d'abord les problèmes de niveau inférieur en relaxant leurs variables de discrètes on obtient alors le problème de relaxation au niveau inférieur (RLL – Relaxed Lower Level):
\begin{align}
 \max \limits_{\mb{r}^o, \mb{b}^o} \;\;\;  \sum_{k \in \mathcal{K}} \sum_{s \in \mathcal{S}} r^{(s)}_{k} \;\;\; \text{s. t. } & \text{ (\ref{res_obj_1b})-(\ref{res_obj_1e}),} \label{res_ctb4_obj_2} \\
 {} & R_{\sf{min}} \leq r^{(s)}_{k} \leq R_{\sf{max}}, \forall k \in \mathcal{K}, \forall s \in \mathcal{S}, \label{res_obj_1f2}
\end{align}
où $R_{\sf{min}}$ et $R_{\sf{max}}$ sont le taux plus bas, le plus élevé et le taux fixé $\mathcal{M}_R$. Soit $\bar{R}_o(C^o,\mb{B}^o)$ la solution optimale de ce problème de RLL. Le problème de niveau supérieur peut être relaxé en un problème de relaxation au niveau supérieur (RUL – Relaxed Upper Level):
\begin{align}
 \max \limits_{\lbrace C^o, \mb{B}^o \rbrace} \;\;  \sum \limits_{o \in \mathcal{O}} \Psi^o(C^o, \mb{B}^o) \;\; \text{s. t.} \;\;\; \text{(\ref{res_obj_Ub}), (\ref{res_obj_Uc})}, \label{res_obj_U2}
\end{align}
où $\Psi^o(C^o, \mb{B}^o)=(\upsilon^{\sf{InP}}-\upsilon^o) G^{\sf{InP}}_o + \upsilon^o \rho^o N_{\sf{RE}} \bar{R}_o(C^o,\mb{B}^o)$.  Nous indiquons quelques résultats importants sur lesquels nous pouvons développer des algorithmes pour résoudre notre problème de conception.
\begin{theorem} \label{res_R8_thr1} $\chi^{(s)}_{k,u}(r^{(s)}_{k},b^{(s)}_{k})$ est une fonction conjointement convexe par rapport aux variables $(r^{(s)}_{k},b^{(s)}_{k})$ si $q^{(s)}_{k}(b^{(s)}_{k}) \leq \sqrt{Y^{(s)}_{k} I^{(s)}_{k}}$.
\end{theorem}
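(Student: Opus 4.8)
The plan is to peel off the affine part of $\chi^{(s)}_k$ and reduce the claim to a one‑variable concavity statement about the capacity term, then finish with a Hessian check. Abbreviating $r=r^{(s)}_k$, $b=b^{(s)}_k$ and writing $t(b):=\log_2\!\big(1+\gamma^{(s)}_k(b)\big)$, we have $\chi^{(s)}_k(r,b)=ABr+\tfrac{2A}{\ln 2}\,\psi(r,b)$ with $\psi(r,b):=-r\ln\!\big(t(b)-r\big)$; since $ABr$ is affine and $\tfrac{2A}{\ln 2}>0$, joint convexity of $\chi^{(s)}_k$ is equivalent to joint convexity of $\psi$ on the relevant region, namely $r<t(b)$ (exactly where the inner logarithm is defined) together with $b\ge\lceil\underline{b}^{(s)}_k\rceil$.

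The first real step is a single‑variable lemma: $t(b)$ is concave in $b$ if and only if $q^{(s)}_k(b)\le\sqrt{Y^{(s)}_k I^{(s)}_k}$. Using $Y^{(s)}_k=D^{(s)}_k+I^{(s)}_k$ and $q^{(s)}_k(b)=\tfrac{\sqrt3\,\pi\,Y^{(s)}_k}{2^{2b+1}}$, a short computation gives
\beq
1+\gamma^{(s)}_k(b)=\frac{Y^{(s)}_k\big(2\cdot 4^{b}+\sqrt3\,\pi\big)}{2I^{(s)}_k\cdot 4^{b}+\sqrt3\,\pi\,Y^{(s)}_k},
\eeq
so $t(b)$ equals a constant plus $\log_2(2\cdot4^{b}+\sqrt3\,\pi)-\log_2(2I^{(s)}_k\cdot4^{b}+\sqrt3\,\pi\,Y^{(s)}_k)$. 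Differentiating twice in $b$ (with $x=4^{b}$, $dx/db=x\ln 4$) and collecting terms, the sign of $t''(b)$ turns out to be the sign of $-D^{(s)}_k\big(4 I^{(s)}_k x^{2}-3\pi^{2} Y^{(s)}_k\big)$; since $D^{(s)}_k>0$ this yields $t''(b)\le 0\iff 4 I^{(s)}_k\,4^{2b}\ge 3\pi^{2} Y^{(s)}_k\iff \tfrac{\sqrt3\,\pi\,Y^{(s)}_k}{2^{2b+1}}\le\sqrt{Y^{(s)}_k I^{(s)}_k}$, which is precisely the hypothesis $q^{(s)}_k(b)\le\sqrt{Y^{(s)}_k I^{(s)}_k}$ (equivalently, the constraint $b\ge\lceil\underline{b}^{(s)}_k\rceil$).

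It then remains to upgrade concavity of $t$ to joint convexity of $\psi$ by showing its $2\times2$ Hessian is positive semidefinite on the feasible set. Writing $u=t(b)-r>0$, one computes $\partial^{2}_{rr}\psi=(2u+r)/u^{2}>0$ (so $\psi$ is convex in $r$ for each fixed $b$), $\partial^{2}_{rb}\psi=-t'(b)(u+r)/u^{2}$, $\partial^{2}_{bb}\psi=r\big((t'(b))^{2}-u\,t''(b)\big)/u^{2}$, and after simplification
\beq
\det\nabla^{2}\psi=-\frac{1}{u^{3}}\Big(u\,(t'(b))^{2}+r\,(2u+r)\,t''(b)\Big).
\eeq
Since $\partial^{2}_{rr}\psi>0$, it is enough to show the bracket is $\le 0$; here one uses $t''(b)\le 0$ from the lemma together with the feasibility relation $r\le t(b)$ (i.e.\ $u\ge 0$) and the admissible range of the rate $r$ to dominate the competing positive term $u(t'(b))^{2}$, which then gives $\nabla^{2}\psi\succeq 0$ and hence the assertion.

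The main obstacle is exactly this last point. Separate convexity of $\psi$ in $r$ and concavity of $t$ in $b$ are routine, but the cross‑curvature term $\partial^{2}_{rb}\psi$ does not vanish, so joint convexity is not automatic; the work is in bounding $u(t'(b))^{2}+r(2u+r)t''(b)$ above by $0$, and this is where the quantization‑quality hypothesis $q^{(s)}_k(b)\le\sqrt{Y^{(s)}_k I^{(s)}_k}$ is used essentially — via the lemma it fixes the sign of $t''(b)$, and one must then check that, on the feasible region cut out by the rate constraint $r\le t(b)$, this curvature dominates the contribution of the gradient $t'(b)$.
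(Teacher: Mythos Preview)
Your overall strategy coincides with the paper's: both are Hessian checks, and your decomposition $\chi = ABr + \tfrac{2A}{\ln 2}\psi$ is correct but cosmetic (the paper works directly with the Hessian of $\chi$). Your preliminary lemma that $t(b)$ is concave precisely when $q\le\sqrt{YI}$ is also correct --- indeed $t''(b)=-\dfrac{4\ln 2\,qD(IY-q^{2})}{X^{2}(X+D)^{2}}$ with $X=I+q$ --- but it is \emph{not} the device by which the paper closes the argument.

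The genuine gap is exactly where you flag it. You correctly reach $\det\nabla^{2}\psi=-u^{-3}\bigl(u\,(t')^{2}+r(2u+r)\,t''\bigr)$ and then propose to finish using only the \emph{sign} of $t''$ together with the feasibility condition $r\le t(b)$. That cannot work: the term $u\,(t')^{2}$ is strictly positive and its magnitude is in no way controlled by the mere concavity of $t$, so no combination of ``$t''\le 0$'' and ``$u\ge 0$'' forces the bracket to be nonpositive (for small $r$ the second summand is small while the first is not). The paper does not argue abstractly through concavity of $t$; instead it substitutes the explicit expressions coming from the particular form of $\gamma(b)$ --- in your notation $t'=\dfrac{2qD}{X(X+D)}$ together with the formula for $t''$ above --- directly into all four Hessian entries, simplifies via the identity $X(X+D)-q(2X+D)=IY-q^{2}$, and obtains a determinant that is reported as a positive multiple of $(IY-q^{2})/u^{3}$. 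In other words, the closing step exploits the precise algebraic relation between $(t')^{2}$ and $t''$ produced by the exponential shape $q(b)\propto 2^{-2b}$, not just the sign of $t''$. To align with the paper, abandon the abstract bound and carry the explicit $t'$ and $t''$ through the determinant computation.
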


\begin{proposition} \label{res_R8_prt2}
Le problème d'optimisation RLL est convexe.
\end{proposition}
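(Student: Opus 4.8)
The plan is to recognize RLL, given by (\ref{res_ctb4_obj_2})--(\ref{res_obj_1f2}), as the maximization of the linear (hence concave) objective $\sum_{k \in \mathcal{K}}\sum_{s \in \mathcal{S}} r^{(s)}_{k}$ over a feasible set that is an intersection of convex regions; it therefore suffices to check convexity of each constraint separately. The constraints (\ref{res_obj_1d}), (\ref{res_obj_1e}) and (\ref{res_obj_1f2}) are affine in $(\mb{r}^o,\mb{b}^o)$ (the resource split enters only through the fixed right-hand sides), so they are trivially convex and I would dispatch them first.

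For the cloud-computation constraint (\ref{res_obj_1b}), I would invoke Theorem~\ref{res_R8_thr1}. On the feasible set the quantization-quality inequality $q^{(s)}_{k}(b^{(s)}_{k}) \le \sqrt{Y^{(s)}_{k} I^{(s)}_{k}}$ is automatically satisfied, since it is exactly the (relaxed) box constraint (\ref{res_obj_1e}); hence each $C^{(s)}_{k} = \chi^{(s)}_{k}(r^{(s)}_{k},b^{(s)}_{k})$ is jointly convex in $(r^{(s)}_{k},b^{(s)}_{k})$, a finite sum of such functions is convex, and the sublevel set $\{\, \sum_{k}\sum_{s} C^{(s)}_{k} \le C^o/N_{\sf{RE}} \,\}$ is convex.

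The crux --- and the step I expect to be the main obstacle --- is the rate constraint (\ref{res_obj_1c}), for which I must show that $b \mapsto \log_{2}\!\big(1 + \gamma^{(s)}_{k}(b^{(s)}_{k})\big)$ is concave, so that $r^{(s)}_{k} - \log_{2}(1 + \gamma^{(s)}_{k}(b^{(s)}_{k})) \le 0$ is a convex constraint. Using $Y^{(s)}_{k} = D^{(s)}_{k} + I^{(s)}_{k}$ together with (\ref{res_eq:q_kmn}), I would first rewrite $1 + \gamma^{(s)}_{k}(b) = \dfrac{Y^{(s)}_{k} + c\, 2^{-2b}}{I^{(s)}_{k} + c\, 2^{-2b}}$ with $c = \sqrt{3}\,\pi Y^{(s)}_{k}/2$, substitute $x = 2^{-2b}$, and differentiate twice in $b$; one finds that $f''(b)$ has the same sign as $\dfrac{d}{dx}\big(x\,H'(x)\big)$ where $H(x) = \ln(Y^{(s)}_{k} + c x) - \ln(I^{(s)}_{k} + c x)$, and this derivative equals $\dfrac{c Y^{(s)}_{k}}{(Y^{(s)}_{k} + c x)^{2}} - \dfrac{c I^{(s)}_{k}}{(I^{(s)}_{k} + c x)^{2}}$. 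A short manipulation, using $D^{(s)}_{k} = Y^{(s)}_{k} - I^{(s)}_{k} > 0$, shows this is $\le 0$ if and only if $c^{2} x^{2} \le Y^{(s)}_{k} I^{(s)}_{k}$, i.e. $q^{(s)}_{k}(b) \le \sqrt{Y^{(s)}_{k} I^{(s)}_{k}}$ --- precisely the condition that holds on the feasible region by (\ref{res_obj_1e}). Hence $\log_{2}(1 + \gamma^{(s)}_{k})$ is concave there and (\ref{res_obj_1c}) is convex. Collecting everything, RLL maximizes a concave objective over an intersection of convex sets, so it is a convex program (and thus admits a global optimum), which proves the proposition. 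The only subtlety worth flagging is that both the application of Theorem~\ref{res_R8_thr1} and the concavity of the rate function hinge on the quantization-quality inequality, which need not be imposed separately because it coincides with the already-present constraint (\ref{res_obj_1e}).
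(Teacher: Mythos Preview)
Your proof is correct and follows essentially the same approach as the paper: identify the objective as linear, dispatch the affine constraints, invoke Theorem~\ref{res_R8_thr1} for the cloud-computation constraint (\ref{res_obj_1b}) via (\ref{res_obj_1e}), and argue concavity of $b \mapsto \log_2(1+\gamma^{(s)}_k(b))$ for (\ref{res_obj_1c}). The paper merely asserts this last concavity as a fact, whereas you actually compute it and discover that it holds precisely when $q^{(s)}_k(b) \le \sqrt{Y^{(s)}_k I^{(s)}_k}$, the same quantization-quality condition underpinning Theorem~\ref{res_R8_thr1}; this extra step is a genuine strengthening of the paper's argument and nicely ties the two nontrivial constraints to a single hypothesis.
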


\begin{theorem} \label{res_R8_thr2}
$\bar{R}_o(C^o,\mb{B}^o)$ est une fonction concave par rapport à $C^o$ et $\mb{B}^o$.
\end{theorem}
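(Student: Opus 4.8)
The plan is to prove Theorem~\ref{res_R8_thr2} by the standard perturbation argument of convex analysis, using the convexity facts already in hand: Theorem~\ref{res_R8_thr1} (joint convexity of each $\chi^{(s)}_{k}$) and Proposition~\ref{res_R8_prt2} (convexity of RLL). The key structural observation is that the parameters enter RLL only through the right-hand sides of inequality constraints, and affinely at that: $C^o$ appears only in $\sum_{k,s} C^{(s)}_{k}\le C^o/N_{\sf RE}$ and $B^o_k$ only in $\sum_{s\in\mathcal{S}^o_k} b^{(s)}_{k}\le B^o_k/(2N_{\sf RE})$, while the constraints $r^{(s)}_{k}\le\log_2(1+\gamma^{(s)}_{k}(b^{(s)}_{k}))$, $b^{(s)}_{k}\ge\lceil\underline{b}^{(s)}_{k}\rceil$, and $R_{\sf min}\le r^{(s)}_{k}\le R_{\sf max}$ involve only $(\mb{r}^o,\mb{b}^o)$, with $\lceil\underline{b}^{(s)}_{k}\rceil$ a fixed constant depending only on the fixed quantities $Y^{(s)}_{k},I^{(s)}_{k}$. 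Using Theorem~\ref{res_R8_thr1} (valid on the region $b^{(s)}_{k}\ge\lceil\underline{b}^{(s)}_{k}\rceil$, where the quantization condition $q^{(s)}_{k}\le\sqrt{Y^{(s)}_{k}I^{(s)}_{k}}$ holds) together with Proposition~\ref{res_R8_prt2}, every defining inequality of RLL is a sublevel set of a function that is convex in $(\mb{r}^o,\mb{b}^o)$ and affine in the parameters, hence jointly convex; therefore the set $\mathcal{F}=\{(\mb{r}^o,\mb{b}^o,C^o,\mb{B}^o): \text{all RLL constraints hold}\}$ is convex, and so is the effective domain of $\bar{R}_o$ (its projection onto the parameter coordinates).

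The core step is the feasibility-mixing argument. I would fix two parameter points $(C^o_1,\mb{B}^o_1)$ and $(C^o_2,\mb{B}^o_2)$ in this domain, with RLL optimizers $(\mb{r}^o_1,\mb{b}^o_1)$ and $(\mb{r}^o_2,\mb{b}^o_2)$ (which exist since the feasible sets are compact: $r^{(s)}_{k}$ is boxed and $b^{(s)}_{k}$ is sandwiched between $\lceil\underline{b}^{(s)}_{k}\rceil$ and the bound induced by the $B^o_k$ constraint). For $\lambda\in[0,1]$, convexity of $\mathcal{F}$ implies that $\lambda(\mb{r}^o_1,\mb{b}^o_1)+(1-\lambda)(\mb{r}^o_2,\mb{b}^o_2)$ is feasible for the parameter point $\lambda(C^o_1,\mb{B}^o_1)+(1-\lambda)(C^o_2,\mb{B}^o_2)$. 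Since the objective $\sum_{k,s} r^{(s)}_{k}$ is linear, evaluating it at this feasible point gives $\bar{R}_o\big(\lambda(C^o_1,\mb{B}^o_1)+(1-\lambda)(C^o_2,\mb{B}^o_2)\big)\ge\lambda\bar{R}_o(C^o_1,\mb{B}^o_1)+(1-\lambda)\bar{R}_o(C^o_2,\mb{B}^o_2)$, i.e.\ concavity. Equivalently, $\bar{R}_o$ is the partial maximization over $(\mb{r}^o,\mb{b}^o)$ of the jointly concave extended-real-valued function that equals $\sum_{k,s} r^{(s)}_{k}$ on $\mathcal{F}$ and $-\infty$ elsewhere, and partial maximization preserves concavity.

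The obstacles are essentially bookkeeping rather than conceptual. The one point needing care is confirming that $r^{(s)}_{k}\le\log_2(1+\gamma^{(s)}_{k}(b^{(s)}_{k}))$ really does carve out a convex set in $(r^{(s)}_{k},b^{(s)}_{k})$; this is subsumed by Proposition~\ref{res_R8_prt2}, but if one wants a self-contained check it follows from concavity of $b\mapsto\log_2(1+\gamma^{(s)}_{k}(b))$ after the substitution $t=2^{-2b}$. The other is to read the statement correctly: $\bar{R}_o$ is only defined where RLL is feasible, so the asserted concavity is concavity over that (convex) effective domain. No separate treatment of the upper level is required here; the convexity of RUL (Proposition~\ref{res_R8_prt5}) will then follow from this theorem together with the convexity of $G^{\sf InP}_o$ in $(C^o,\mb{B}^o)$.
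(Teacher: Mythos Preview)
Your proposal is correct and follows essentially the same approach as the paper: take two parameter points with their optimizers, form the convex combination, use joint convexity of the constraint functions (Proposition~\ref{res_R8_prt2}/Theorem~\ref{res_R8_thr1}) to verify feasibility at the combined parameter, and exploit linearity of the objective to obtain the concavity inequality. Your framing via the jointly convex set $\mathcal{F}$ and the partial-maximization viewpoint is a slight elaboration, but the underlying feasibility-mixing argument is identical to the paper's.
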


\begin{proposition} \label{res_R8_prt5}
Le problème d'optimisation RUL est convexe.
\end{proposition}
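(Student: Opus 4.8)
The plan is to reduce Proposition~\ref{res_R8_prt5} to Theorem~\ref{res_R8_thr2} together with the elementary calculus of concave functions. Writing RUL in the standard form ``maximize an objective over a feasible set,'' it suffices to establish two facts: (a) the feasible set determined by (\ref{res_obj_Ub})--(\ref{res_obj_Uc}), together with the implicit nonnegativity requirements $C^o \geq 0$, $B^o_k \geq 0$ and the requirement that each $(C^o,\mb{B}^o)$ lie in the effective domain of $\bar{R}_o$, is convex; and (b) the objective $\sum_{o \in \mathcal{O}} \Psi^o(C^o,\mb{B}^o)$ is a concave function of the stacked variable $\{(C^o,\mb{B}^o)\}_{o \in \mathcal{O}}$. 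A concave objective over a convex set is a convex program.

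For (a), the constraints (\ref{res_obj_Ub})--(\ref{res_obj_Uc}) and the nonnegativity bounds are linear inequalities in $\{(C^o,\mb{B}^o)\}$, hence define a polyhedron; and the effective domain of each $\bar{R}_o$ is the projection onto $(C^o,\mb{B}^o)$ of the RLL feasible set, which is convex by Proposition~\ref{res_R8_prt2}, so the domain is convex as well. An intersection of convex sets is convex. For (b), I would argue summand by summand, since $\Psi^o(C^o,\mb{B}^o) = (\upsilon^{\sf{InP}} - \upsilon^o)\,G^{\sf{InP}}_o + \upsilon^o \rho^o N_{\sf{RE}}\, \bar{R}_o(C^o,\mb{B}^o)$ depends only on the block $(C^o,\mb{B}^o)$. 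The first term is affine, because $G^{\sf{InP}}_o = \psi^o C^o + \sum_{k \in \mathcal{K}} \beta^o_k B^o_k$ is affine in $(C^o,\mb{B}^o)$; in particular it is concave regardless of the sign of $\upsilon^{\sf{InP}} - \upsilon^o$. The second term is the product of the nonnegative constant $\upsilon^o \rho^o N_{\sf{RE}}$ (a weight times a price times a symbol count) with $\bar{R}_o(C^o,\mb{B}^o)$, which is concave by Theorem~\ref{res_R8_thr2}, and nonnegative scaling preserves concavity. A sum of concave functions is concave, so each $\Psi^o$ is concave and hence so is $\sum_{o \in \mathcal{O}} \Psi^o$. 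Combining (a) and (b) identifies RUL as a convex optimization problem.

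The only point that needs a bit of care is the interaction with the effective domain of $\bar{R}_o$: one should check that $\bar{R}_o$ is finite (RLL feasible and bounded) throughout the polyhedron cut out by (\ref{res_obj_Ub})--(\ref{res_obj_Uc}) after the per-operator split, so that each $\Psi^o$ is a genuine (proper) concave function there rather than taking the value $-\infty$; this is immediate because equally splitting the budgets, $C^o = \bar{C}_{\sf{cloud}}/O$ and $B^o_k = \bar{B}_k/O$, yields an RLL-feasible allocation (every PRB can be served at rate $R_{\sf{min}}$). Beyond this bookkeeping there is no real obstacle: all of the analytic difficulty, namely the joint convexity of the complexity function and the concavity of $\bar{R}_o$, has already been discharged in Theorems~\ref{res_R8_thr1} and~\ref{res_R8_thr2}.
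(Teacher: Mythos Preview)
Your proof is correct and follows essentially the same approach as the paper: invoke Theorem~\ref{res_R8_thr2} for the concavity of $\bar{R}_o$, observe that the remaining piece $G^{\sf{InP}}_o$ is affine and the constraints (\ref{res_obj_Ub})--(\ref{res_obj_Uc}) are linear, and conclude that RUL is a convex program. Your version is in fact more careful than the paper's three-sentence proof, since you make explicit the nonnegativity of the scalar $\upsilon^o \rho^o N_{\sf{RE}}$ needed to preserve concavity and you address the effective-domain bookkeeping; the paper simply asserts the conclusion.
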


\subsubsection{Algorithmes proposés}
En se basant sur les propositions~\ref{res_R8_prt2}-\ref{res_R8_prt5}, nous pouvons développer un algorithme pour résoudre le problème RLL de manière optimale. Nous exprimons d'abord la double fonction $g^o(\lambda)$ du problème RLL correspondant à OP $o$ en assouplissant les contraintes du calcul infonuagique comme suit:
\begin{align}
g^o(\lambda^o) \! = \! \max \limits_{\mb{r}^o, \mb{b}^o} \Phi^o \! ( \! \lambda^o \! ,\mb{r}^o \!, \mb{b}^o \! ) \;\;\; \text{ s. t. (\ref{res_obj_1c})-(\ref{res_obj_1e}) et (\ref{res_obj_1f2}),} \label{res_obj_3}
\end{align}
où $\lambda^o$ représente le multiplicateur de Lagrange correspondant à la contrainte de calcul de nuage, et
\begin{align}
\Phi^o(\lambda^o,\mb{r}^o, \mb{b}^o) \! = \! \sum_{k \in \mathcal{K}} \! \sum_{s \in \mathcal{S}^o_k} \! r^{(s)}_{k} \! - \! \lambda^o \!\! \left( \! \sum_{k \in \mathcal{K}} \! \sum_{s \in \mathcal{S}^o_k} \! C^{(s)}_{k} \! - \! \dfrac{C^o}{N_{\sf{RE}}} \! \right).
\end{align}
Ensuite, le double problème peut être décrit comme $\min \limits_{\lambda^o \geq 0} g^o(\lambda^o)$. Étant donné que le double problème est convexe, il peut être résolu en utilisant l'algorithme de sous-gradient standard comme suit:
\beq \label{res_eq:updld}
\lambda^o_{(l+1)} = \left[\lambda^o_{(l)} + \delta^o_{(l)}\left( \sum_{k \in \mathcal{K}} \sum_{s \in \mathcal{S}^o_k} C^{(s)}_{k} - \dfrac{C^o}{N_{\sf{RE}}} \right)  \right]^{+},
\eeq
où $l$ désigne l'indice d'itération et $\delta^o_{(l)}$ représente le pas de mise à jour.

L'étape suivante consiste à déterminer les solutions optimales de $\mb{r}^o$ et $\mb{b}^o$ pour une valeur de $\lambda^o$ donnée. Selon les résultats de la proposition~\ref{res_R8_prt2}, le problème \ref{res_obj_3} est également convexe. En fait, il est possible de déterminer la solution optimale pour l'une des deux variables $\mb{r}^o_k$ et $\mb{b}^o_k$ tout en maintenant l'autre fixe. Plus précisément, la solution optimale de $\mb{r}^o_k$ pour un donné $\mb{b}^o_k$ peut être exprimée comme
\beq \label{res_eq:r_opt}
r^{(s)\star}_{k} \!\! = \max \! \left[ \! R_{\sf{min}},  \min \! \left( \! t(b^{(s)}_{k}),R_{\sf{max}}, r\vert_{\frac{\partial w(r)}{\partial r} 
= - E^{(s)}_{k}} \! \right) \! \right]
\eeq
où $w(r)= 2\lambda^o A r \log_2 \left( 1 - r/{t(b^{(s)}_{k})} \right)$. De même, la solution optimale de $\mb{b}^o_k $ pour une donnée $\mb{r}^o_k$ peut être exprimée comme
\beq \label{res_eq:b_opt}
b^{(s)\star}_k =  \max \left(\lceil \underline{b}^{(s)}_{k} \rceil, t^{-1}\left( r^{(s)}_{k}\right), b\vert_{\frac{\partial z(b)}{\partial b} = \mu} \right), 
\eeq
où $\mu$ doit être déterminée pour satisfaire $\sum_{s \in \mathcal{S}^o_k} b_{k}^{(s)\star} = B^o_k/(2N_{\sf{RE}})$. L'algorithme pour résoudre le problème de RLL est résumé dans l'algorithme~\ref{res_R8_alg:gms1}.
\renewcommand{\baselinestretch}{0.9}
\small
\begin{algorithm}[!t]
\caption{\textsc{Algorithme pour Résoudre le Problème RLL}}
\label{res_R8_alg:gms1}
\begin{algorithmic}[1]
\STATE Initialisation: Mise à jour $r_k^{(s)} = R_{\sf{min}}$ pour tous $(k,s) \in \mathcal{K} \times \mathcal{S}$, $\lambda^o_{(0)}=0$ and $l=0$. Choisissez un paramètre de tolérance $\varepsilon$ pour la convergence.
\REPEAT
\FOR{$k \in \mathcal{K}$} 
\REPEAT
\STATE Fixer $\mb{r}^o_k$ et mettre à jour $\mb{b}^o_k$ comme dans (\ref{res_eq:b_opt}) avec $\lambda^o_{(l)}$.
\STATE Fixer $\mb{b}^o_k$ et mettre à jour $\mb{r}^o_k$ comme dans (\ref{res_eq:r_opt}) avec $\lambda^o_{(l)}$.
\UNTIL $\mb{r}^o_k$ et $\mb{b}^o_k$ soient constantes.
\ENDFOR 
\STATE Calculer tous les $C_k^{(s)}$avec les valeurs actuelles de  $\mb{r}^o_k$ et $\mb{b}^o_k$.
\STATE Mettre à jour $\lambda^o_{(l+1)}$ comme dans (\ref{res_eq:updld}).
\STATE Mise à jour $l=l+1$.
\UNTIL $\vert \lambda^o_{(l)} - \lambda^o_{(l-1)} \vert < \varepsilon$.
\end{algorithmic}
\end{algorithm}
\renewcommand{\baselinestretch}{1.4}
\normalsize

Nous présentons maintenant un algorithme basé sur le sous-gradient pour résoudre le problème de la RUL. Plus précisément, l'algorithme de mise à jour itérative de sous-gradient du $C^o,\mb{B}^o$ peut être décrit comme suit:
\beq \label{res_eq:updcb}
[C^o,\mb{B}^o]_{(n+1)} = \mathcal{P} \left[[C^o,\mb{B}^o]_{(n)} + \tau^o_{(n)} \nabla \Psi^o(C^o,\mb{B}^o)  \right]
\eeq
où $\tau^o_{(n)}$ est le pas de mise à jour, $\mathcal{P}\left[*\right]$ désigne l'opération de projection dans la région faisable du problème de la RUL, et
\beq
\nabla \Psi^o(C^o,\mb{B}^o)=\left[ \dfrac{\partial\Psi^o(C^o,\mb{B}^o)}{\partial C^o}, \dfrac{\partial \Psi^o(C^o,\mb{B}^o)}{\partial B^o_1}, \ldots, 
 \dfrac{\partial \Psi^o(C^o,\mb{B}^o)}{\partial B^o_K} \right] ^T.
\eeq

On peut en effet déterminer $\nabla \Psi^o(C^o,\mb{B}^o)$ en utilisant l'algorithme~\ref{res_R8_alg:gms1}. Nous résumons ensuite la procédure de mise à jour de $[C^o,\mb{B}^o]$'s dans l'algorithme~\ref{res_R8_alg:gms2}, qui résout le problème RUL.
\renewcommand{\baselinestretch}{0.9}
\small
\begin{algorithm}
\caption{\textsc{Algorithme pour Résoudre le Problème RUL}}
\label{res_R8_alg:gms2}
\begin{algorithmic}[1]
\STATE Initialisation: Mise à jour $C^o_{(0)} = \bar{C}_{\sf{cloud}}/O$, et $B^o_{k,(0)} = \bar{B}_k/O$ pour tous $(o,k) \in \mathcal{O} \times \mathcal{K}$, $\nu^o_{(0)}=0$ pour tous $o \in \mathcal{O}$, et $n=0$. 
\REPEAT
\STATE Exécuter Algorithm~\ref{res_R8_alg:gms1} pour obtenir $\lbrace \bar{R}_o(C^o,\mb{B}^o)\rbrace $ pour certaines valeurs des $\lbrace [C^o,\mb{B}^o]_{(n)}\rbrace$ $\forall o \in \mathcal{O}$
obtenues à partir de l'itération précédente, qui sont utilisées pour calculer $\nabla \Psi^o(C^o,\mb{B}^o)$.
\STATE Mettre à jour $[C^o,\mb{B}^o]_{(n+1)}$ pour tous $o \in \mathcal{O}$ d'après (\ref{res_eq:updcb}).
\STATE Mise à jour $n=n+1$.
\UNTIL Convergence.
\end{algorithmic}
\end{algorithm}
\renewcommand{\baselinestretch}{1.4}
\normalsize

L'algorithme itératif proposé converge si $\delta$ et $\tau$ sont choisis de manière appropriée, par exemple, $\delta^o_{(l)} = 1 /\sqrt{l}$ et $\tau^o_{(n)} = 1 /\sqrt{n}$ \cite{Bertsekas99}. Après l'exécution de l'algorithme \ref{res_R8_alg:gms2}, on peut obtenir une solution réalisable ${C^o,B^o}$ et les valeurs optimales $\left\lbrace r^{(s) \bigstar}_k \right\rbrace $ et $\left\lbrace b^{(s) \bigstar}_k \right\rbrace $ des problèmes relaxés.

Notez que $\left\lbrace r^{(s) \bigstar}_k \right\rbrace $ et $\left\lbrace b^{(s) \bigstar}_k \right\rbrace $ sont des nombres réels, tandis que les variables d'origine ne prennent que des valeurs discrètes. Pour résoudre ce problème, nous proposons deux méthodes d'arrondies, c-à-d, la méthode d'arrondissement itérative (\textit{IR – Iterative Rounding}) et la méthode d’arrondi et d’ajustement (\textit{RA – one-time Rounding and Adjustment}), qui complètent les variables continues réalisées à partir de l'algorithme \ref{res_R8_alg:gms2} pour atteindre les solutions discrètes et réalisables correspondantes pour le problème initial.

\subsubsection{Résultats numériques}
Nous considérons un réseau avec sept cellules et trois OPs ($O=3$). Les gains de canal sont générés en considérant à la fois l’évanouissement Rayleigh et la diminution de puissance transmise dans le canal. 
D'autres paramètres sont définis comme suit: $\sigma^2=10^{-13} \; W$ and $p^{(s)}_k=0.1 \;W$, $T'=0.2$, $\zeta=6$, et $\epsilon_{\sf{ch}}=10\%$.
Pour démontrer les impacts de la limite du calcul infonuagique et la capacité du fronthaul sur la performance du système, nous montrons la somme de taux du système réalisable, qui est le résultat de l'algorithme~\ref{res_R8_alg:gms2} fixant le problème RLL, par rapport à la limite du calcul infonuagique ($\bar{C}_{\sf{cloud}}$) et la fronthaul capacité $\bar{B}_k$ dans la Fig.~\ref{res_Fig03}. Il est évident que plus la limite de l’infonuagique est grande, plus le résultat de la capacité du fronthaul dans la somme de taux du système est élevé comme prévu. En outre, la somme de taux devient saturé quand la limite de l’infonuagique ou de capacité du fronthaul devient suffisamment grande.
\begin{figure}[!t]
\begin{center}
\includegraphics[width=0.7\textwidth]{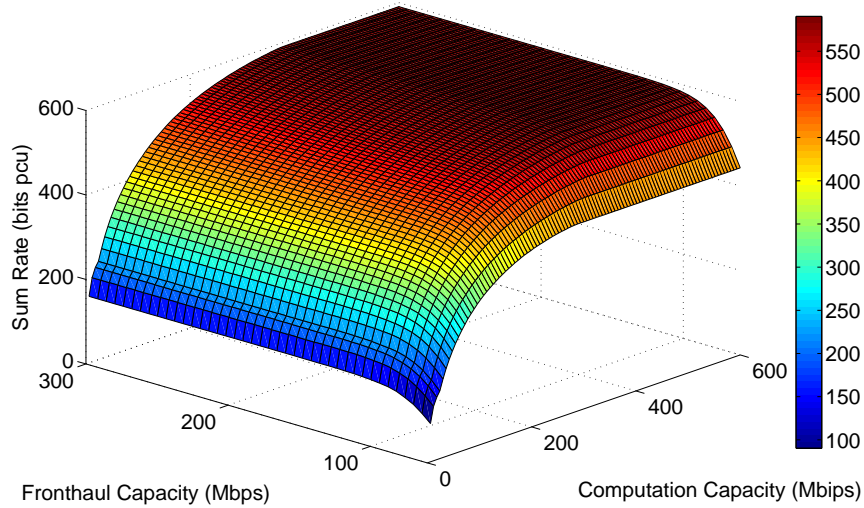}
\end{center}
\caption{La somme de taux du système en fonction des $\bar{C}_{\sf{cloud}}$ et $\bar{B}_k$.}
\label{res_Fig03}
\end{figure}

Dans la figure. \ref{res_Fig04}, nous illustrons les variations de taux de somme du système obtenu par
 nos algorithmes proposés sans arrondir (Prop. Relaxed. Alg.), méthode d'arrondi IR (Prop. Alg. with IR), RA- méthode d'arrondi RA (Prop. Alg. with RA), et l'algorithme glouton rapide (Greedy Alg.), par rapport au nombre de PRBs dans chaque cellule. Pour obtenir ces résultats numériques, on ajoute séquentiellement un plus PRB pour chaque OP dans chaque cellule pour obtenir différents points sur chaque courbe. Comme on peut le remarquer, nos algorithmes proposés surpassent considérablement l'algorithme glouton dans les scénarios étudiés. En outre, la méthode IR conduit à un meilleur rendement que celui obtenu par le procédé RA. Fait intéressant, le débit total du système augmente d'abord, puis diminue à mesure que le nombre de PRBs disponible dans chaque cellule devient plus grand.
 
\begin{figure}[!t]
\begin{center}
\includegraphics[width=0.7\textwidth]{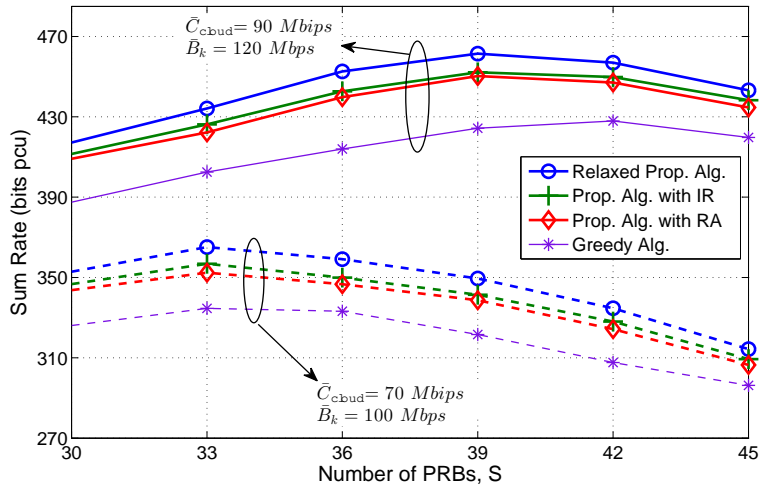}
\end{center}
\caption{La somme de taux du système en fonction du nombre des PRBs ($S$).}
\label{res_Fig04}
\end{figure}

\section{Remarques finales}

Dans cette thèse, différents techniques et algorithmes ont été proposés pour la gestion des nouvelles ressources pour les HetNets sans fil et C-RAN. Ces derniers sont deux candidats d'architecture importants pour les réseaux sans fil à grande vitesse de la 5G. En particulier, ce travail de recherche a permis quatre contributions importantes.
Premièrement, un modèle conjoint PC et BAS a été proposé pour les HetNets sans fil à multi-niveau. La solution HPC proposée et développée est capable à équilibrer efficacement l'amélioration du débit du réseau  et la réalisation des QoS requises pour le nombre maximum des UEs.
Deuxièmement, les algorithmes distribués d'allocation des ressources ont été proposés afin de maximiser l'efficacité spectrale minimale totale au niveau des femtocell tout en assurant l'équité entre les FEUs et la protection QoS pour tous les MUEs dans les réseaux sans fil HetNets à deux niveaux femtocell et macrocell. 

Troisièmement, nous avons développé les algorithmes efficaces à faible complexité pour répondre aux besoins des C-RAN en liaison descendante visant à réduire au minimum la puissance d'émission totale tout en respectant des contraintes de la transmission, de la capacité du fronthaul, et de la QoS des UEs. 
Enfin, nous avons mis au point un cadre algorithmique  qui est capable de capturer des intérêts et des interactions pertinents entre le fournisseur d'infrastructure et les OPs pour la liaison montante dans un C-RAN virtualisée supportant plusieurs OP. En particulier, la conception proposée tient compte des contraintes pratiques sur la limite de capacité du fronthaul et d'infonuagique a pour but de maximiser le profit des fournisseurs d'infrastructure ainsi que des OPs. Ceci est possible par l'optimisation de la répartition des tarifs pour les UEs et le nombre de bits de quantification des signaux en bande de base en liaison montante des utilisateurs. Ce travail de recherche a produit cinq articles de revues (quatre publications \cite{VuHa_TVT14_BSA_PC,VuHa_TVT14_PC_SA,VuHa_TVT16,Tri_Access15} et un en soumission \cite{VuHa_sTWC16}) et neuf articles publiés dans les conférences prestigieuses en communications sans fil \cite{VuHa_VTC12,VuHa_WCNC13,VuHa_GBCWS13,vuha_ciss_2014,VuHa_WCNC14,vuha_globecom_2014,VuHa_WCNC15,VuHa_WCNC16,VuHa_ICC16}. 


\chapter{Introduction}
\label{chapter1}

The fifth-generation (5G) wireless cellular system, which would be deployed by 2020, is expected to deliver significantly higher
capacity and better network performance compared to those of the current fourth-generation (4G) system. This is required to meet various practical technical
and service challenges. 
Specifically, it is predicted that tens of billions of wireless devices will be connected to the wireless network over next few years.
Together with the increasing number of connections, the amount of mobile data traffic has been exploding at an exponential rate. 
Therefore, the 5G wireless mobile network should be able to support
the data traffic volume of an order of magnitude larger than that in the current wireless network \cite{Zander13,Boccardi14,Bhushan14,Le_EU15}.

Therefore, more advanced wireless architecture, as well as radical and innovative access technologies, must be proposed to 
meet the exponential growth of mobile data and connectivity requirements in the coming years \cite{Dottling09, 
Lopez-Perez11, 3GPP,cisco13,cisco16}. 
Toward this end, two important wireless cellular architectures, namely wireless heterogeneous networks (HetNets) based on
the dense deployment of small cells and the cloud radio access networks (Cloud-RAN or C-RAN) have been proposed and actively 
studied by both academic and industry communities. These two terms Cloud-RAN and C-RAN
 will be used changeably in the sequel. This doctoral dissertation focuses on the radio resource management 
designs for wireless systems exploiting these two potential network architectures.


\section{5G Wireless System and Enabling Technologies}

\begin{figure}[!t]
\begin{center}
\includegraphics[width=0.7\textwidth]{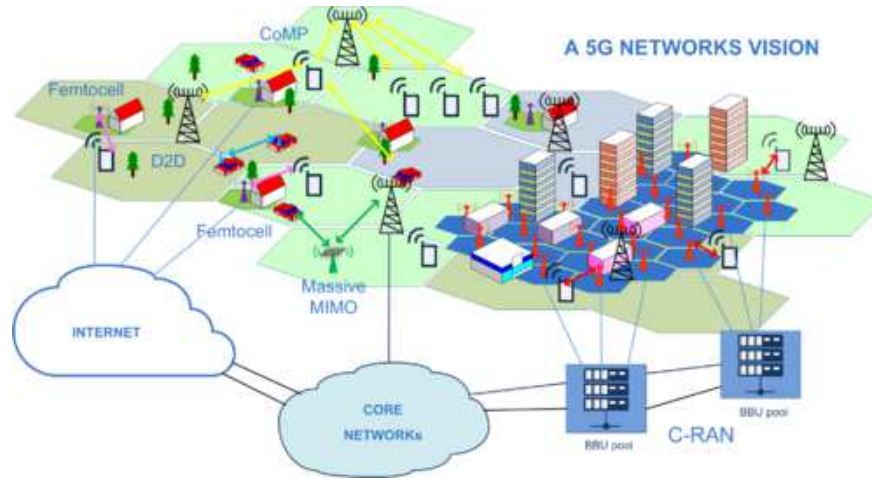}
\end{center}
\caption{Next-generation Wireless Cellular Networks}
\label{Ch1_fig:5G}
\end{figure}

An illustration of the next-generation wireless cellular network is given in Fig.~\ref{Ch1_fig:5G}, which integrates various enabling technologies and
novel network architectures for scalable and efficient support of various wireless applications
including existing human-type and emerging machine-type ones with diverse quality-of-service (QoS) \cite{Bhushan14,Osseiran14,Chin14}.
This is to realize the vision of the hyper-connected world with billions of wireless
connections supporting the so-called internet of things \cite{Le_EU15, Hellemans15}. 
Toward this end, different key 5G enabling technologies have been actively developed
by both academia and wireless industry in recent years \cite{Costa-Perez_CM13,Liang_CST15},
some of which are illustrated in Fig.~\ref{Ch1_fig:5G_sol}.

\begin{figure}[!t]
\begin{center}
\includegraphics[width=0.7\textwidth]{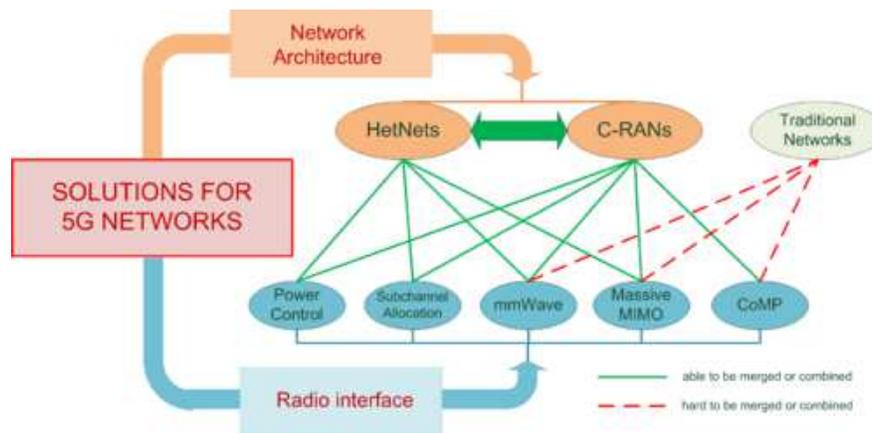}
\end{center}
\caption{Enabling 5G Wireless Technologies}
\label{Ch1_fig:5G_sol}
\end{figure}

Various advanced 5G wireless technologies have been researched actively over the past years.
Specifically, there is growing interest in exploiting the millimeter wave (mmWave) for communications
to meet the aggressive network capacity requirements from emerging Gigabit broadband applications \cite{Pi11,Rappaport13}.
This is especially significant because most microwave frequency bands below $6$ GHz have been very crowded \cite{Pi11,Rappaport13}. 
Another important technology, namely massive MIMO, which employs the large number of antennas 
to fundamentally increase the spectrum and energy efficiency has also been considered by both academic and industry communities.
Moreover, massive MIMO can enable us to achieve significant performance enhancements while avoiding
complicated and costly network coordination strategies such as the multipoint transmission or reception (CoMP) techniques \cite{Le_EU15}.

Furthermore, wireless virtualization and software defined networking technologies allow more efficient and flexible sharing
of wireless infrastructure and radio resources for better quality of service and improved cost efficiency.
Finally, the 5G wireless system should also integrate existing wireless solutions and techniques proposed for 3G/4G and 
other wireless access technologies such as dynamic power control (PC), adaptive subcarrier allocation (SA), multiple
input multiple output (MIMO), and coordinated multipoint transmission or reception (CoMP) techniques  \cite{Dottling09, 
Lopez-Perez11, 3GPP}, \cite{Valenzuela06, Gesbert10, Marsch11,Americas12,cheng13,Yang-WC13}.
\nomenclature{MIMO}{multiple input multiple output}

Design of innovative network architectures plays a very important role in meeting the stringent requirements of the next-generation wireless system \cite{cisco16}.
Two such potential wireless cellular architectures have been proposed for 5G networks, namely wireless HetNets with dense deployment of small cells
\cite{Chandrasekhar08,Claussen08,Kim09,Zhang_bk_10} and the Cloud Radio Access Networks (C-RAN) \cite{chinamobile2011,NGMN2013,maketresearch2013}.
Massive deployment of small cells in coexistence with existing macrocells in the HetNets can fundamentally enhance
 network capacity, energy efficiency, and coverage performance \cite{Andrews12,Le12}, \cite{Damnjanovic11}.
Moreover, thanks to the powerful cloud processing capability, C-RAN enables us to realize computationally extensive communication schemes 
such as CoMP in a centralized manner \cite{tanno2010}.
Moreover, C-RAN and HetNets can be jointly deployed by exploiting residential internet broadband links for backhauling \cite{Liu_infocom_13}, to provide
 short-range communication for mmWave and massive MIMO \cite{Boccardi14,Bhushan14,Le_EU15,Pi11,Rappaport13}.
More discussions of the benefits and technical challenges of these network architectures are described in more details in the following sections.
\nomenclature{RF}{Radio Frequency}  
 
\section{Emerging Wireless Network Architectures}

\subsection{Heterogeneous Networks with Small Cells}

The wireless HetNet is typically based on the dense deployment of small cells such as femtocells and picocells in coexistence
with existing macrocells \cite{Chandrasekhar08,Claussen08,Kim09,Zhang_bk_10,zhang12} where small cells
have short communications range, low power, and low cost as illustrated in Fig.~\ref{Ch1_fig:HetNets}. Moreover,
certain small cells such as femtocells can be randomly deployed by end user equipments (UEs) and they operate on the same
frequency band with the existing macrocell to enhance the spectrum utilization, network capacity, and energy efficiency \cite{Andrews12, Calin10,Gambini13}.
Furthermore, indoor traffic supported by femtocells can be backhauled through the IP connections
such as DSL to reduce the traffic load of the macrocells through adaptive load balancing \cite{Gambini13,Aijaz13}. 
The macrocell tier can thus dedicate more radio resources to better service outdoor UEs.
Femtocells, which can be deployed by end users in a plug-and-play manner, usually require low capital expenditure and operating cost.

\begin{figure}[!t]
\begin{center}
\includegraphics[width=0.5\textwidth]{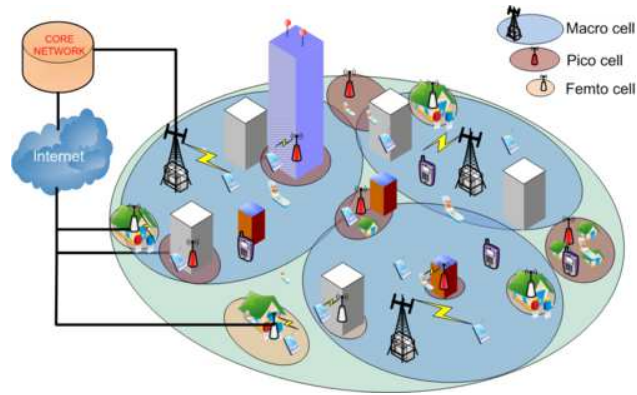}
\end{center}
\caption{Heterogeneous wireless networks with multiple tiers of cells}
\label{Ch1_fig:HetNets}
\end{figure}

However, dense deployment of femtocells on the same frequency band with the macrocells poses various technical challenges.
First, the strong cross-tier interferences between macrocells and femtocells may occur, which may severely impact the network performance
if not managed properly \cite{Andrews12,Lopez-Perez11,Yavuz09}. Second, macro UEs (MUEs) typically have higher priority
in accessing the radio spectrum compared to the femto UEs (FUEs); therefore, the QoS requirements of MUEs must be protected and maintained. 
Therefore, the cross-tier interference induced by FUEs to the macrocell tier considering fairness and access priority must be appropriately controlled. 
Lastly, dynamic and intelligent access control strategies must be developed to efficiently manage the 
network interference and traffic load balancing \cite{Claussen08, Lopez-Perez11, Lu13, roche10}.

\subsection{Cloud Radio Access Networks (C-RAN)}

\begin{figure}[!t]
\begin{center}
\includegraphics[width=0.6\textwidth]{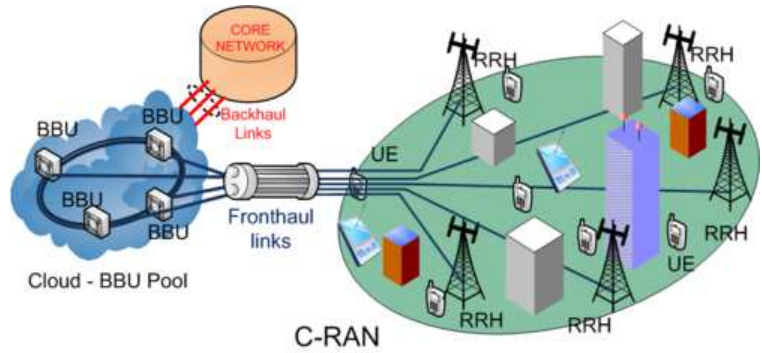}
\end{center}
\caption{C-RAN architecture}
\label{Ch1_fig:CRANs}
\end{figure}

The C-RAN architecture aims at exploiting the cloud computing infrastructure to realize various network functions and protocols
\cite{chinamobile2011,NGMN2013,maketresearch2013}. The general C-RAN architecture consists of three main 
components, namely  (i) centralized processors or baseband unit (BBU) pool, (ii) the optical 
transport network (i.e., fronthaul (FH) links), and  (iii) remote radio head (RRH) access units with antennas located at remote sites
as illustrated in Fig.~\ref{Ch1_fig:CRANs}  \cite{chinamobile2011,Demestichas13,Checko15}. 
The cloud processing center comprising a large number of BBUs is the heart of this architecture where
BBUs operate as virtual base stations (BSs) to perform different tasks such as processing of
 baseband signals for UEs and optimizing the radio resource allocation. 
RRHs can be relatively simple, which can be spatially distributed over the network for more energy and 
cost-efficient network deployment. In general, different levels of centralization can be realized in C-RAN
where the full centralization design requires heavy signal processing in the cloud while the partial centralization
design requires certain processing functions to be performed at the RRHs \cite{chinamobile2011,Rost_com_mag_14,Wubben14}.

Additionally, the centralized processing enables to implement sophisticated physical-layer and radio resource management designs
such as the coordinated multi-point (CoMP) transmission and reception techniques proposed in the LTE wireless standard,
efficient clustering design for RRHs to balance between the network capacity enhancement and design complexity 
\cite{mob_rep_13,chinamobile2011, Costa-Perez_CM13, Liang_CST15}.
Other benefits of C-RAN include reduced backhaul and network core traffic, deployment and operation costs,
enhanced network capacity, service quality and differentiation to effectively support different wireless applications \cite{mob_rep_13,chinamobile2011}. 
Centralized C-RAN is considered as an eco-friendly infrastructure where the network resources (radio spectrum, computation power, fronthaul links) 
can be sliced for multiple operators to realize the wireless network virtualization (WNV), which results in enhanced
QoS for mobile users and better supports the deployment of new wireless technologies and services \cite{Costa-Perez_CM13,Liang_CST15}. 

Successful deployment of C-RAN, however, requires us to resolve several major technical challenges \cite{Costa-Perez_CM13,Liang_CST15}. 
In particular, efficient techniques and solutions for advanced signal processing and radio resource management must be developed
to achieve the potential network performance enhancement while accounting for constraints and efficient utilization of fronthaul
capacity and cloud computing resources. This is especially important for wireless systems with multiple antennas
where heavy MIMO processing complexity is usually required to leverage the multiplexing and diversity gains.
Moreover, C-RAN design must account for practical network constraints such as those
on limited cloud computation power and finite capacity of the fronthaul transport network. 

\section{Research Challenges and Motivations}

To realize these benefits of HetNets and C-RAN in next-generation wireless systems, one has to address many technical challenges 
spanning different layers of the protocol stack \cite{Andrews12,Lopez-Perez11,Checko15,Liang_CST15}. This doctoral dissertation
limits its research scope on resource allocation for these two potential network architectures. In particular, we consider
both single-carrier and multi-carrier resource allocation design issues for the wireless HetNet while we study the MIMO-based
precoding techniques and wireless virtualization design for the C-RAN setting.


In single-carrier HetNets, small-cells usually operate on the same licensed frequency band with the existing macrocells, which can create 
severe cross-tier interference for MUEs \cite{Le12}. Thus, adaptive power control (PC) is required to mitigate the co-channel
interference \cite{foschini93, yates95, Yates95a, hanly95, farrokhi97, teng10, sung05, sung06}.
There have been several existing works on PC and interference management for macrocell-femtocell HetNets
\cite{jo09,chand09,yun11,Le12a,yanzan12,madan10}. However, most of 
these works did not consider the scenarios where users of different network tiers have
differentiated QoS requirements (e.g., voice versus data users).

Other important issues are joint dynamic BS association (BSA) and interference management for
efficient load balancing and access mode control (i.e., the closed, open, and hybrid access).
Most existing works on these topics, however, only consider the closed access mode \cite{jo09,chand09,yun11,Le12a,yanzan12};
therefore, fixed BSA solutions of different users to their BSs are assumed. 
The open and hybrid access, however, allows flexible BSA to mitigate cross-tier interference than the closed access
 \cite{roche10}. Hence, an efficient hybrid access scheme with dynamic BSA, is an important research issue,
which is addressed in this doctoral dissertation.


Multi-carrier wireless HetNets based on orthogonal frequency division multiple access (OFDMA) provide
more flexibility in radio resource management and robustness against the multi-path fading \cite{guoqing06,Pischella10}.
Radio resource management of OFDMA-based HetNets for appropriate protection 
of existing macrocells and interference mitigation through  subchannel allocation (SA) and PC is a challenging 
research issue \cite{perez09, Le12,Zahir13}, which have received great research interests \cite{chuhan11,yanzan12,
yushan12,Wangchi12,Long12,hoon11,renchao12,lu12}.

These works in \cite{chuhan11,yanzan12,yushan12,Wangchi12,Long12}, however, only optimize the SA without considering the   
PC while the authors in \cite{hoon11} only study PC for the given SA of all users.
The joint SA and PC design have been studied in \cite{renchao12,lu12}; however, 
these works do not provide QoS guarantees for users of both network tiers.
Moreover, the fairness among FUEs has not been addressed well in these existing works. Therefore, an 
efficient joint SA and PC design which provides QoS guarantees for users of different tiers and
fairness among FUEs is a critical research topic in OFDMA-based HetNets which is tackled in our dissertation. 


In the C-RAN, cloud centralized processing facility can be leveraged
to realize the CoMP techniques for efficient interference management and network performance enhancement.
However, C-RANs typically has limited cloud processing resource and fronthaul capacity. 
For joint transmission design in the CoMP C-RAN, optimization of the set of remote radio heads (RRHs) serving
each user as well as the precoding and transmission powers considering these practical constraints
is a difficult problem. 

There have been some research efforts in the CoMP design for efficient exploitation of the fronthaul capacity
in C-RAN \cite{letaief14,luo_arxiv,Quek2013,cheng13,BBDai_14}.
In \cite{letaief14,luo_arxiv,Quek2013,cheng13}, it is shown how such problems can be formulated as
a sparse beamforming  problem, which can be solved by applying the compressive sensing (CS) techniques \cite{donoho06,tao06,bach12}.
However, these existing designs have not considered the limited fronthaul capacity constraint.
Our dissertation makes important contributions in this direction where we address the CoMP design for C-RAN
considering such fronthaul capacity constraint.


In general, appropriate combination of C-RAN and WNV techniques can result in significant benefits 
including reduction of the CAPEX and OPEX, enhancement of the QoS and resource utilization, and 
fast deployment of the new products and services\cite{Costa-Perez_CM13,Liang_CST15,Le_EU15}.
Resource allocation and slicing design \cite{Checko15, Liang_CST15}, which decides how to share
the centralized computation resource \cite{Wubben14, Suryaprakash15}, fronthaul capacity \cite{Peng14}, radio resources in C-RAN \cite{Checko15}
among different operators while efficient support all mobile users is a challenging research topic \cite{Liang_CST15}.
However, there have been very few research works considering the resource allocation to realize the  WNV in the C-RAN \cite{Wang_JSAC,Kim2016,Yang_OE2016}. 
Moreover, such design considering both limited fronthaul capacity and cloud computation resource has not been studied. 
Our dissertation has made some significant contributions along this line.

\section{Research Contributions and Organization of the Dissertation}

The overall objective of this Ph.D. research is to
develop radio resource allocation and interference management algorithms that directly resolve
important technical issues of the future high-speed wireless cellular networks. Particularly, 
we have made the following main contributions.

\begin{enumerate}
\item \textit{Joint Base Station Association and Power Control Design for HetNets \cite{VuHa_TVT14_BSA_PC,VuHa_VTC12}:} 
We propose a universal joint Base Station Association and Power Control (PC) algorithm 
for HetNets. Specifically, the proposed algorithm iteratively updates
the BSA solution then the transmit power of each user. We prove the convergence of this algorithm when the power update function of the PC strategy satisfies the so-called ``two-sided scalable function''
 property. Then, we develop a novel hybrid PC scheme based on which an adaptation
mechanism for the HPC algorithm is further proposed to support the signal-to-interference-plus-noise ratio (SINR)
 requirements of all users whenever possible while exploiting the multiuser diversity
to improve the system throughput. We show that the proposed HPC adaptation algorithm outperforms the
well-known PC algorithm in both feasible and infeasible systems.

\item \textit{Fair Subchannel Allocation and Power Control Design for OFDMA HetNets \cite{VuHa_TVT14_PC_SA,VuHa_WCNC13}:}
We are interested in the fair resource sharing solution for users in each femtocell that maximizes the
total minimum spectral efficiency of all femtocells subject to protection constraints for the prioritized macro users.
First, we formulate the optimization problem for the fairness resource allocation design, and propose an optimal
exhaustive search algorithm. Then, we develop a distributed and low-complexity algorithm to find an efficient solution for the problem.  We prove that the proposed algorithm converges and analyze its complexity. 
Then, we extend the proposed algorithm in three different directions,
namely downlink context; resource allocation with rate adaptation for femto users; and consideration of a hybrid access strategy 
where some macro users are allowed to connect with nearby femto base stations to improve the performance of the femto tier.

\item \textit{Coordinated Transmission Design for C-RANs with Limited Fronthaul Capacity \cite{VuHa_TVT16,VuHa_WCNC14,vuha_ciss_2014,vuha_globecom_2014,VuHa_WCNC15}:} 
Our design aims to optimize the set of remote radio heads (RRHs) serving each user as well as the precoding and transmission powers to
minimize the total transmission power while maintaining the fronthaul capacity and users' QoS constraints. The 
 fronthaul capacity constraint involves a non-convex and discontinuous function which renders the optimal exhaustive
search method unaffordable for large networks. To address this challenge, we propose two low-complexity algorithms.
The first pricing-based algorithm solves the underlying problem through iteratively tackling a related pricing
problem while appropriately updating the pricing parameter. In the second iterative linear-relaxed algorithm, we directly address the fronthaul constraint function by iteratively approximating it with a suitable linear form using a conjugate function and solving the corresponding convex problem.

\item \textit{Resource Allocation for Wireless Virtualization of OFDMA-Based Cloud Radio Access Networks \cite{VuHa_sTWC16,VuHa_ICC16}:}
We consider the resource allocation for the virtualized OFDMA uplink C-RAN where multiple OPs share the C-RAN
infrastructure and resources owned by an infrastructure provider. 
In particular, our design aims at
maximizing the weighted sum profit of both infrastructure provider and OPs considering constraints
on the fronthaul capacity and cloud computation limit.
This design is modelled by two different resource allocation
problems, namely upper-level and lower-level problems, respectively. 
The upper-level problem focuses on optimizing the slices of the fronthaul capacity and cloud computing resources for all OPs to maximize the weighted sum profits. 
While the lower-level maximizes the OP's sum rate by optimizing users' transmission rates and quantization bit allocation 
for the compressed I/Q baseband signals. 
We then develop a two-stage algorithmic framework to fulfil the design. 
In the first stage, by relaxing the underlying discrete variables 
to the continuous variables, we transform both upper-level and lower-level problems into
the corresponding optimization problems which are convex and can be solved optimally.
In the second stage, we propose two methods to round the solution obtained by solving the relaxed problems
to achieve a final feasible solution for the original problem.
\end{enumerate}

The remaining of this dissertation is organized as follows. 
Chapter \ref{Ch2} reviews some fundamental resource allocation techniques and the literature on resource allocation
 and interference management for HetNets and C-RANs, respectively.
In Chapter \ref{Ch3}, we present the joint BSA and PC design for CDMA-based wireless HetNets.
Then, we discuss the fair SA and PC design for OFDMA-based HetNets in the Chapter \ref{Ch4}.
The advanced CoMP transmission designs for C-RANs with limited fronthaul capacity are proposed in Chapter \ref{Ch5}.
In Chapter \ref{Ch8}, we describe the resource allocation design for wireless virtualization of the uplink C-RAN considering limited computation
 and fronthaul capacity. Chapter \ref{Ch9} summarizes the main contributions of the dissertation and makes some recommendations for future research directions.


\chapter{Background and Literature Review}
\label{Ch2}

This chapter provides the necessary background and survey the related works in resource allocation for wireless cellular networks.
Specifically, some fundamental resource allocation techniques in cellular networks will be briefly discussed. 
Then, review of existing works for resource allocation in wireless HetNets and C-RANs, will be presented.

\section{Fundamental Resource Allocation Techniques}

\subsection{Adaptive Power Control}

Consider single-carrier multicell wireless cellular network as illustrated in Fig.~\ref{Ch2_fig:multicell_networks}.
This can be applied to the downlink or uplink communications scenarios.
We refer to the (downlink or uplink) transmission between UE $i$ and its BS as link $i$.
Let $p_i \geq 0$ be the transmit power corresponding to link $i$ and $\eta_i$ be the power of the additive white Gaussian noise (AWGN) at the receiver. 
Denote the channel gain from the transmitter of link $i$ to its receiver as $h_{i,i}$, 
and the channel gain from the transmitter of link $j$ to the receiver of link $i$ as $h_{i,j}$. 
Note that for downlink communications the transmitter of link $i$ is BS $i$ serving UE $i$ while
for the uplink case the transmitter is UE $i$. For convenience, 
we arrange the transmit powers of all users in a power vector denoted as $\mathrm{p} = \left( p_1,p_2,...,p_M \right)$ where $M$ 
represents the number of UEs in the network. The received signal to interference plus noise ratio (SINR) of UE $i$ can be 
expressed as
\beq \label{Ch2_eq:SINRi}
\Gamma_i\left( \mathrm{p} \right)=\dfrac{h_{i,i}p_i}{\sum_{j \neq i} h_{i,j}p_j+\eta_i}=\dfrac{p_i}{R_i\left( \mathrm{p} \right)},
\eeq
where $R_i \left( \mathrm{p} \right)$ is called the effective interference at the receiver of link $i$, which is defined as
\beq \label{Ch2_eq:Rp}
 R_i \left( \mathrm{p} \right) \triangleq  \dfrac{\sum_{j\neq i}{h_{i,j}p_j}+\eta_{b_i}}{h_{i,i}}.
\eeq
\begin{figure}[!t] 
        \centering
        \begin{subfigure}[c]{0.4\textwidth}
                \begin{center}
\includegraphics[height=50mm]{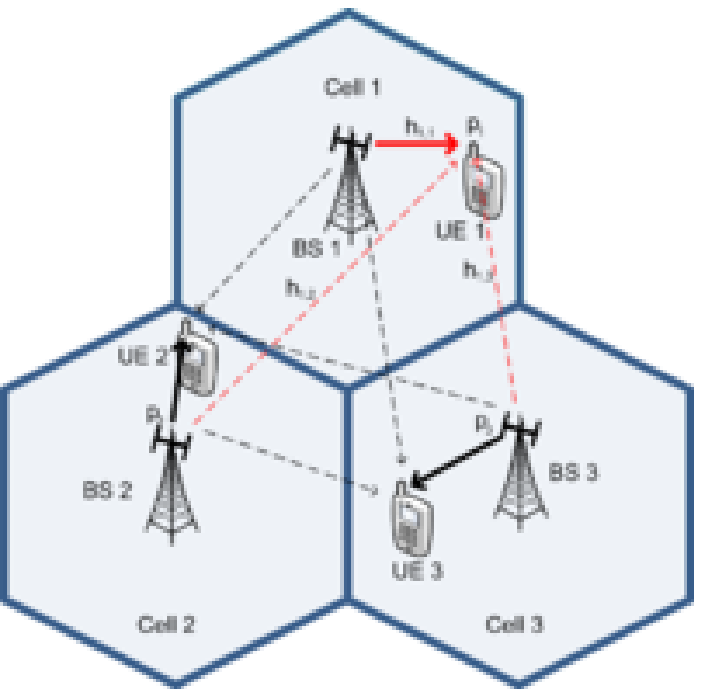}
\end{center}
\caption{Downlink} \label{Ch2_fig:MC_DL}
        \end{subfigure}%
        \qquad  
        ~ 
        \begin{subfigure}[c]{0.4\textwidth}
\begin{center}
\includegraphics[height=50mm]{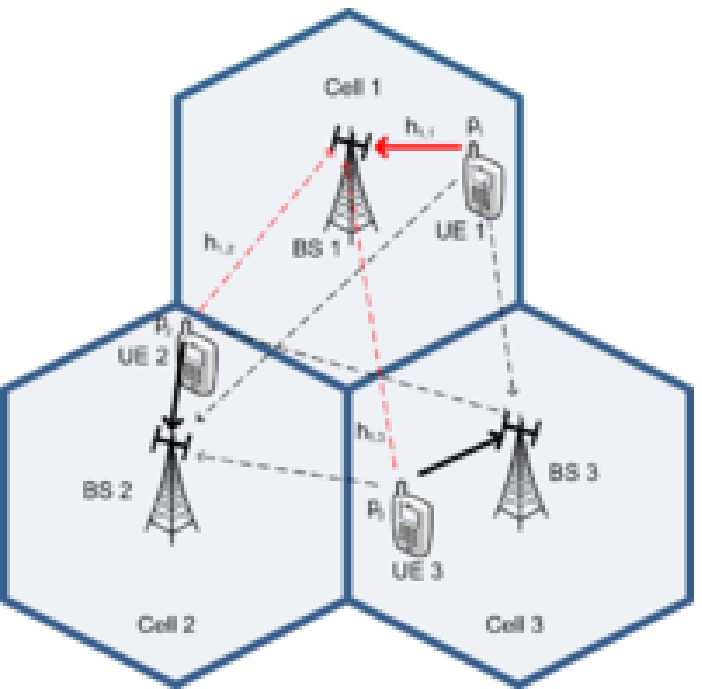}
\end{center}
\caption{Uplink} \label{Ch2_fig:MC_UL}
        \end{subfigure}
         \caption{Co-channel Interference in Wireless Cellular Networks}
         \label{Ch2_fig:multicell_networks}
\end{figure}

It can be observed from (\ref{Ch2_eq:SINRi}) that the co-channel interference $\sum_{j \neq i} h_{i,j}p_j$ may significantly decrease
the SINR of UE $i$, which negatively impacts the reliability of the underlying communication. 
Fig.~\ref{Ch2_fig:multicell_networks} indeed illustrates typical interference scenarios in multicell uplink and downlink communications. 
In the downlink, UE 1 in cell 1 receives not only the desired signal from its serving BS 1 but also the interfering signals from BSs 2 and 3. 
Similarly, the signal transmitted by UE 1 to its BS 1 in the uplink scenario is interfered by those from UEs 2 and 3 in the adjacent
cells.

To mitigate the negative impacts of co-channel interference, adaptive power control (PC) is typically required.
In practice, the transmit power is updated iteratively to cope with the time-varying nature of the wireless environment.
The transmit power update rules of many popular iterative PC algorithms such as that proposed by Foschini and Miljanic in \cite{foschini93} 
can be expressed in the general form as follows:
\beq \label{Ch2_eq:PCgnr}
p_i^{(n+1)} = J_i \left( \mathrm{p}^{(n)} \right),
\eeq
where $n$ denotes the iteration index and $J_i(\cdot)$ is called the power update function (p.u.f.). 
In general, the p.u.f. can be designed so that the iterative power update converges
to the desired equilibrium that fulfills certain design goal. Moreover, the p.u.f.
must be appropriately designed to guarantee the convergence. 
One important result along this line was derived by Sung and Leung in \cite{sung05, sung06} 
where it was proved that this kind of iterative PC algorithm converges to an equilibrium if the 
corresponding p.u.f. is a \textit{two-sided scalable function} (2.s.s.) whose definition is given as follows. 
\begin{definition}
\label{Ch2_def_2s_scal}
A p.u.f. $J( \mathrm{p})$ is a 2.s.s. function with respect to $\mathrm{p}$ if for any parameter $a >1$ and two
 power vectors $\mathrm{p}$ and $\mathrm{p}^{\prime}$ that satisfy $\frac{1}{a}\mathrm{p} \leq \mathrm{p}^{\prime} \leq a \mathrm{p}$, we 
must have $(1/a) J \left( \mathrm{p}\right) < J(\mathrm{p} ^{\prime} ) < a J(\mathrm{p})$.
\end{definition}
This result can be employed to design a suitable p.u.f. for a desired PC algorithm. We will describe
several such popular PC algorithms in the following.

\subsubsection{Important Power Control Algorithms} \label{Ch2_sec_PC}

We now describe two popular PC algorithms, namely opportunistic PC (OPC) algorithm \cite{sung05, sung06} and target-SINR-tracking PC (TPC) algorithm
 \cite{foschini93}. Specifically,
the OPC algorithm proposed by Sung and Leung in \cite{sung05, sung06} aims to exploit the multi-user 
diversity to improve the system throughput where the transmit power of link $i$ is updated iteratively as 
\beq \label{Ch2_eq:OPC}
p_i^{(n+1)} = J^{\sf{OPC}}_i \left( \mathrm{p}^{(n)} \right) = \dfrac{\xi_i}{R_i \left( \mathrm{p}^{(n)} \right)},
\eeq
where $\xi_i$ is a predetermined coefficient corresponding to link $i$.
This algorithm indeed allocates more power to stronger wireless link $i$, which observes the smaller effective 
interference $R_i \left( \mathrm{p}^{(n)} \right)$. Therefore, this PC algorithm converges to an equilibrium
which efficiently exploits the different channel conditions of different links to enhance the network throughput. 

The distributed TPC algorithm, which was proposed by Foschini and Miljianic \cite{foschini93}, enables users to 
reach their desired target SINRs with minimum powers. The TPC algorithm iteratively updates the
 transmit power of link $i$ as 
\beq \label{Ch2_eq:TPC}
p_i^{(n+1)} = J^{\sf{TPC}}_i \left( \mathrm{p}^{(n)} \right) = \dfrac{\bar{\gamma}_i}{\Gamma_i\left( \mathrm{p}^{(n)} \right)} p_i^{(n)}=\bar{\gamma}_i R_i \left( \mathrm{p}^{(n)} \right),
\eeq
where $\bar{\gamma}_i$ denotes the target SINR of UE $i$ while $p_i^{(n)}$, $\Gamma_i\left( \mathrm{p}^{(n)} \right)$, and $R_i \left( \mathrm{p}^{(n)} \right)$ 
represent the transmit power, measured SINR, and effective interference at the receiver of link $i$ in the iteration $n$, respectively. 

It can be observed that the TPC algorithm attempts to attain the target SINR by updating its power based on the ratio between
the target SINR and current SINR in each iteration.
It was shown in \cite{sung05, sung06, foschini93} that both p.u.fs. $J^{\sf{OPC}}_i$ and $J^{\sf{TPC}}_i$
of the OPC and TPC PC algorithms satisfy the properties of a 2.s.s. function given in Definition~\ref{Ch2_def_2s_scal}.
Therefore, both PC algorithms are guaranteed to converge.

Recall, however, that the design goal of the TPC algorithm is to ensure all UEs to achieve their required SINR, i.e., 
\beq \label{Ch2_eq:SINRcnt}
\Gamma_i(\mathrm{p}) \geq \bar{\gamma}_i, \: i = 1,...,M,
\eeq
which is clearly not always feasible for any wireless systems. 
In particular, the system is called feasible if it is possible for all UEs to achieve their target SINRs.
To further study the feasibility, we first rewrite the constraints (\ref{Ch2_eq:SINRcnt}) for all UEs in matrix form as follows:
\begin{equation}
\label{Ch2_eq:cond_mtx1}
(\mathrm{\bf{I}}-\mathrm{\bf{G}}\mathrm{\bf{H}})\mathrm{p}^n \geq \mathrm{g},
\end{equation}
where $\mathrm{g}=\left[ g_{1},...,g_{M} \right]^T$ with $g_i=\frac{\eta_{b_i}^n \bar{\gamma}_i^n}{h_{i,i}}$, $\mathrm{\bf{I}}^n$ is $M \times M$ identity matrix, $\mathrm{\bf{G}}=\text{diag}\{\bar{\gamma}_{1},...,\bar{\gamma}_{M}\}$, and $\mathrm{\bf{H}}^n$ is a $M \times M$ matrix defined as
\begin{equation}
\label{Ch2_eq:H}
\left[ \mathrm{\bf{H}}_{i,j}\right]=\left\lbrace \begin{array}{*{5}{l}}
0, & \text{if } j=i,\\
\frac{h_{i,j}}{h_{i,j}}, & \text{if } j \neq i.
\end{array} \right.
\end{equation}
Then, the feasibility of constraints (\ref{Ch2_eq:SINRcnt}) can be studied through investigating the inequality (\ref{Ch2_eq:cond_mtx1}) 
by applying the Perron-Frobenius theorem \cite{Horn-85,seneta73,gant71}.
Toward this end, let $\lambda_i$ denote the $i^{\sf th}$ eigenvalue of matrix $\mathrm{\bf{G}}\mathrm{\bf{H}}$ and 
$\rho(\mathrm{\bf{G}}\mathrm{\bf{H}})={\sf max}_{ i} | \lambda_i |$ represent the maximum value of the
modulus of all eigenvalues (i.e., the spectral radius of $\mathrm{\bf{G}}\mathrm{\bf{H}}$). 
The Perron-Frobenius theorem \cite{Horn-85,seneta73,gant71}, which can be applied for non-negative matrix $\mathrm{\bf{G}}\mathrm{\bf{H}}$,
is relevant for the model under investigation where this theorem implies the following fact \cite{bambos00}. 

\noindent
Fact: If $\mathrm{\bf{G}}\mathrm{\bf{H}}$ is a matrix with non-negative elements, the following statements are equivalent:
\begin{enumerate}
\item There exists a non-negative power vector $\mathrm{p}$ such that $(\mathrm{\bf{I}}-\mathrm{\bf{G}}\mathrm{\bf{H}})\mathrm{p} \geq \mathrm{g}$.
\item $\rho(\mathrm{\bf{G}}\mathrm{\bf{H}}) \leq 1$.
\item $\left( \mathrm{\bf{I}}-\mathrm{\bf{G}}\mathrm{\bf{H}}\right)^{-1}=\sum_{k=0}^{\infty} (\mathrm{\bf{G}}\mathrm{\bf{H}})^k$ exists and is positive element-wise.
\end{enumerate}

According to these results, if $\rho(\mathrm{\bf{G}}\mathrm{\bf{H}}) \leq 1$ then we can determine
 a solution of $(\mathrm{\bf{I}}-\mathrm{\bf{G}}\mathrm{\bf{H}})\mathrm{p} = \mathrm{g}$ (i.e., the underlying inequality is met with equality) as
\beq \label{Ch2_eq:pw_chk}
\mathrm{p}^{\star}=\left( \mathrm{\bf{I}}- \mathrm{\bf{G}}\mathrm{\bf{H}} \right)^{-1}\mathrm{g},
\eeq
which is also the Pareto-optimal solution of $(\mathrm{\bf{I}}-\mathrm{\bf{G}}\mathrm{\bf{H}})\mathrm{p} \geq \mathrm{g}$ where Pareto-optimality means that
any feasible solution $\mathrm{p}$ for the inequality  $(\mathrm{\bf{I}}-\mathrm{\bf{G}}\mathrm{\bf{H}})\mathrm{p} \geq \mathrm{g}$ is not smaller
than $\mathrm{p}^{\star}$ element-wise (i.e., there not exist a feasible solution $\mathrm{p}$ for $(\mathrm{\bf{I}}-\mathrm{\bf{G}}\mathrm{\bf{H}})\mathrm{p} \geq \mathrm{g}$ so that t
$p_i \leq p_i^{\star}, \forall i$ and $p_j < p_j^{\star}$ for some $j$).
Interestingly, this Pareto-optimal power vector can be achieved at the equilibrium by the distributed TPC algorithm \cite{bambos00}.

\subsection{Resource Allocation in Multi-Carrier Wireless Networks}

We now present some fundamentals of radio resource allocation for the multi-carrier-based wireless network, which is based on the 
orthogonal frequency division multiple access (OFDMA). Resource allocation for OFDMA-based wireless networks involves the
allocation of spectral resources (subchannels or finer spectrum resource units such as LTE physical resource blocks (PRBs))
and transmit powers, which is a difficult mixed integer program. Moreover, resource allocation can be performed
in the single-cell or multi-cell settings. Also, appropriate inter-cell interference management must be addressed appropriately
 if aggressive spectrum reuse is employed. In the following, we discuss some typical resource allocation problems
for the OFDMA-based wireless system in the multi-cell setting.

Consider a multicell and multiuser wireless cellular network with $K$ cells and $M$ UEs served by $K$ BSs.
Moreover, the BS $k$ in cell $k$ serves $M_k$ UEs. Let $\mathcal{U}_k$ denote the set of UEs in the $k$-th cell
and let $\mathcal{U} \triangleq \cup_{k=1}^{K} \mathcal{U}_k= \left\lbrace 1,...,M \right\rbrace$ denote the set of all UEs,
and $\mathcal{B}$ represent the set of all BSs. We denote $\mathcal{N}=\left\lbrace {1,2,...,N} \right\rbrace$ as
 the set of available orthogonal subchannels, which will be allocated for UEs in all cells. 
We assume that there is no interference among transmissions on different subchannels.

To represent the subchannel assignment (SA) decisions, we introduce SA variables $a^n_i$
and arrange them into a SA matrix for all $M$ UEs over $N$ subchannels denoted as  $\mathrm{\bf{A}} \in \mathfrak{R}^{M \times N}$,
which are defined as follows:
\begin{equation}
\label{Ch2_eq:A}
\mathrm{\bf{A}}(i,n)=a^n_i=\left\lbrace \begin{array}{*{10}{l}}
1 & \text{if  subchannel $n$ is assigned for UE $i$}\\
0 & \text{otherwise}.
\end{array}
\right. 
\end{equation}

There are specific constraints on the SA matrix depending on the SA and frequency reuse strategy. 
In the following, we study a wireless system with full frequency reuse where all $N$ subchannels are 
reused over the cells. We further assume that each subchannel can be allocated to at most one UE in any cell 
 to avoid the strong intra-cell co-channel interference. These SA allocation constraints can be expressed as
\begin{equation}
\label{Ch2_eq:c4}
\sum_{i \in \mathcal{U}_k} a_i^n \leq 1, \:\:\: \forall k \in \mathcal{B} \text{ and } \forall n \in \mathcal{N}.
\end{equation}

For power allocation, let $p^n_i$ represent the transmission power for UE $i$ on subchannel $n$ where $p^n_i \geq 0$.
The following constraints on total transmission powers must be imposed
\beqn \label{Ch2_powcon}
\sum_{n=1}^{N} p^n_i \leq P_i^{\texttt{max}}, \quad i \in \mathcal{U}, \text{ for UL,}\\
\sum_{i \in \mathcal{U}_k} \sum_{n=1}^{N} p^n_i \leq P_{\sf{BS},k}^{\texttt{max}}, \quad i \in \mathcal{U}, \text{ for DL,}
\eeqn
where $P_i^{\texttt{max}}$ denotes the maximum transmission power of UE $i$ and $P_{\sf{BS},k}^{\texttt{max}}$ represents the maximum 
transmission power of BS $k$. Similar to the SAs, we define the power allocation (PA) matrix $\mathrm{\bf{P}}$ as an $M \times N$  
whose elements are denoted as $\mathrm{\bf{P}}(i,n)=p^n_i$. 

For convenience, we also introduce $\mathrm{\bf{A}}_k, \mathrm{\bf{P}}_k \in \mathfrak{R}^{\vert \mathcal{U}_k \vert \times N}$
to represent the SA and PA matrices for UEs in cell $k$ over $N$ subchannels, respectively. 
Let $h_{i,j}^n$ denote the channel gain from the transmitter of link $j$ to the receiver of link $i$ and $\eta_i^n$ be the noise power at the 
receiver of link $i$ on subchannel $n$. For a given SA and PA solution (i.e., given $\mathrm{\bf{A}}$ and $\mathrm{\bf{P}}$), the SINR achieved
by the transmission of link $i$ on subchannel $n$ can be written as
\begin{equation}
\label{Ch4_sinr}
\Gamma_i^n(\mathrm{\bf{A}},\mathrm{\bf{P}})=\dfrac{a_i^n h_{i,i}^n p_i^n}{\sum_{j \neq i}{a_j^n h_{i,j}^n p_j^n}+\eta_{i}^n}.
\end{equation}

In general, resource allocation requires to jointly optimize some SA and PC optimization problems, which
are specified by specific design objectives. In the following, we present some popular resource allocation problems. 
\begin{itemize}
\item Weighted sum rate maximization problem:

Resource allocation decisions in this case can be obtained by solving the following optimization problem
\begin{eqnarray}
\label{Ch2_eq:max_SR}
\mathop {\max} \limits_{(\mathrm{\bf{A}},\mathrm{\bf{P}}) } \sum \limits_{1\leq i \leq M} \sum \limits_{1\leq n \leq M} \mathrm{R}_i^{(n)}(\mathrm{\bf{A}},\mathrm{\bf{P}}) \nonumber \\
\text{s. t. constraints} \quad (\ref{Ch2_eq:c4}), (\ref{Ch2_powcon}),  
\end{eqnarray}
where $\mathrm{R}_i^{(n)}(\mathrm{\bf{A}},\mathrm{\bf{P}})=\log(1+\Gamma_i^n(\mathrm{\bf{A}},\mathrm{\bf{P}}))$ represents
the rate  achieved by the link $i$ over subchannel $n$.

\item Maximization of the sum of minimum of weighted min rates

In this case, we have solve the following optimization problem
\begin{eqnarray}
\label{Ch2_eq:max_min}
\mathop {\max} \limits_{(\mathrm{\bf{A}},\mathrm{\bf{P}}) } \sum \limits_{1 \leq k \leq K} \omega_k \min_{i \in \mathcal{U}_k} \mathrm{R}_i(\mathrm{\bf{A}},\mathrm{\bf{P}}) \nonumber \\
\text{s. t. constraints} \quad (\ref{Ch2_eq:c4}), (\ref{Ch2_powcon}),
\end{eqnarray}
where $\mathrm{R}_i(\mathrm{\bf{A}},\mathrm{\bf{P}})=\sum \limits_{1\leq n \leq M} a_i^n \mathrm{R}_i^{(n)}(\mathrm{\bf{A}},\mathrm{\bf{P}})$
describes the sum rate achieved by UE $i$ and $\omega_k$ denotes the weight of cell $k$.
\end{itemize}

These resource allocation problems have been studied in several works \cite{ZhuHan07,Kwon_ICC_06,guoqing06,HZhang11,K_Yang_09,Koutsopoulos_06,K_Yang11,Ksairi10,
Pischella10,Venturino09} where mostly heuristic and suboptimal algorithms have been proposed to tackle them.

\subsection{MIMO Precoding Designs in Multicell Wireless Systems} \label{C2-CoMP-sect}

This section presents some recent advances in MIMO signal processing and precoding design for the MIMO multicell system.
In particular, we will briefly discuss the coordinated multipoint (CoMP) transmission and reception techniques
and the precoding design for power minimization.

\subsubsection{Coordinated Multipoint (CoMP) Transmission and Reception Techniques}

There are different CoMP schemes, which can be different in the levels of data/information sharing among cells, 
including uplink interference-aware detection, joint multicell scheduling, multicell link adaptation, joint multicell signal processing
 and downlink coordinated scheduling/beamforming, joint processing.
In this dissertation, we only consider the downlink joint  transmission CoMP design for a multicell system which consists of $K$ BSs serving $M$ UEs where 
data signals related of each UE can be simultaneously transmitted from multiple BSs as illustrated in Fig.~\ref{Ch2_fig:CoMP}. 

\begin{figure}[!t]
\begin{center}
\includegraphics[height=50mm]{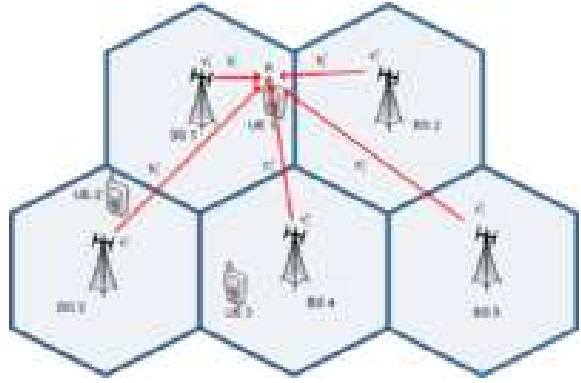}
\end{center}
\caption{CoMP Based MIMO Signal Processing in Multicell Wireless Systems }
\label{Ch2_fig:CoMP}
\end{figure}

Let $\mathcal{K}$ and $\mathcal{U}$ denote the sets of BSs and UEs in the system, respectively.
Suppose that BS $k$ is equipped with $N_k$ antennas ($k \in \mathcal{K}$) and each UE is equipped with a single antenna. 
In the CoMP joint transmission strategy, each UE is served by a specific set of BSs, which simultaneously transmit data signals to the underlying UE
in the downlink. Specifically, suppose that the data intended for UE $u$ can be made available at the set of BSs serving UE $u$
and we denote the set of serving BSs for UE $u$ as $\mathcal{R}_u$. Then, we have $\mathcal{R}_u \subset \mathcal{K}$.
Note that each BS can serve several UEs and we denote the set of UEs served by BS $k$ as $\mathcal{U}_k$.

Denote the precoding vector applied to the signal transmitted to UE $u$ from BS $k$ as $\mathbf{v}_u^{k} \in \mathbb{C}^{N_k \times 1}$.
The transmission power used by RRH $k$ for transmission to user $u$ can be expressed as
\beq \label{Ch5_eq:pv}
p^k_u=\mathbf{v}_u^{kH}\mathbf{v}_u^{k}.
\eeq 
Suppose that $x_u \in \mathbb{C}$ denotes the signal symbols of unit power, which are transmitted by BSs in the set $\mathcal{R}_u$ to UE $u$.
Then, the signal $y_u$ received at UE $u$ can be written as
\beq
\label{Ch5_eq:rx_sig}
y_u= \underbrace{\sum \limits_{k \in \mathcal{R}_u} \mathbf{h}_u^{kH} \mathbf{v}_u^{k} x_u}_\text{desired signal} 
 + \underbrace{ \sum \limits_{i =1, \neq u}^{M} \sum \limits_{l \in \mathcal{R}_i} \mathbf{h}_u^{lH} \mathbf{v}_i^{l} x_{i}}_\text{interference} + \eta_u,
\eeq
where $\mathbf{h}_u^k \in \mathbb{C}^{N_k \times 1}$ denotes the channel vector between BS $k$ and UE $u$, and $\eta_u$ represents
 the noise at UE $u$. 
Then, the SINR achieved by UE $u$ can be expressed as
\beq \label{Ch2_eq:SINR}
\Gamma_u =\dfrac{ \left|  \sum \limits_{k \in \mathcal{R}_u} \mathbf{h}_u^{kH} \mathbf{v}_u^{k}\right| ^2 }
{\sum \limits_{i =1, \neq u}^{M} \left| \sum \limits_{l \in \mathcal{R}_i}\mathbf{h}_u^{lH} \mathbf{v}_i^{l}\right|^2 + \sigma^2 },
\eeq
where $\sigma^2$ represents the noise power.

\subsubsection{Precoding Design for Power Minimization} \label{Ch2_sec_PMP}

We now describe an interesting precoding design for the CoMP-based MIMO multicell system, which 
aims to minimize the power consumption. In particular, we assume that the precoding design 
must satisfy UEs' QoS in terms of minimum SINR. This QoS constraint for each UE $u$ can be expressed as
\beq \label{Ch2_eq:SINR_ct}
\Gamma_u \geq \bar{\gamma}_u, \qquad \forall u \in \mathcal{U},
\eeq
where $\bar{\gamma}_u$ denotes the target SINR of UE $u$.
In addition to the SINR constraints in (\ref{Ch2_eq:SINR_ct}), we also impose the  total transmission power constraints for individual BSs as
\beq \label{Ch2_eq:pwc}
\sum \limits_{u \in \mathcal{U}} p^k_u = \sum \limits_{u \in \mathcal{U}_k} \mathbf{v}_u^{kH}\mathbf{v}_u^{k} \leq P_k, \qquad \forall k \in \mathcal{K},
\eeq 
where $P_k$ ($k \in \mathcal{K}$) denotes the maximum power of BS $k$. The power minimization problem (PMP), which optimizes
the precoding vectors subject to QoS and power constraints, can be stated as follows:
\begin{align}
\min \limits_{\lbrace \mathbf{v}_u^k \rbrace} & \;\;\; \sum_{k \in \mathcal{K}} \sum_{u \in \mathcal{U}} p^k_u \nonumber \\
\text{s. t. } & \; \text{ constraints (\ref{Ch2_eq:SINR_ct}) and (\ref{Ch2_eq:pwc})}.  \label{Ch2_PMP}
\end{align}

We now describe how to transform this problem into an appropriate form based on which we can determine its optimal solution.
Let $\mathbf{v}_u$ denote the precoding solution over all RRHs in $\mathcal{R}_u$,
which is defined as $\mathbf{v}_u=[\mathbf{v}_u^{u_1T}, \mathbf{v}_u^{u_2T} ... \mathbf{v}_u^{u_{a_u}T}]^T$ where $\{u_1,...,u_{a_u}\} = \mathcal{R}_u$ and $a_u=|\mathcal{R}_u|$.
So, we have $\mathbf{v}_u \in \mathbb{C}^{N_u \times 1}$ where $N_u=\sum_{k \in \mathcal{R}_u} N_k$.
Let us define $\mathbf{W}_u =\mathbf{v}_u\mathbf{v}_u^H$ then we have $\mathbf{W}_u \in \mathbb{C}^{N_u \times N_u}$. 
Note that $\mathbf{W}_u$ is positive semi-definite ($\mathbf{W}_u \succeq \mathbf{0}$) and has rank one because it is generated from vector $\mathbf{v}_u$.

We also define the channel vector $\mathbf{h}_{u,i}=[\mathbf{h}_u^{i_1T}, \mathbf{h}_u^{i_2T}, ... \mathbf{h}_u^{i_{a_i}T}]^T$ and $\mathbf{H}_{u,i}=\mathbf{h}_{u,i}\mathbf{h}_{u,i}^{H}$ for all ${u,i} \in \mathcal{U}$. Then, the SINR constraint for UE $u$ in (\ref{Ch2_eq:SINR_ct}), and the power constraints (\ref{Ch2_eq:pwc})
 can be rewritten as 
\beqn
\mathsf{Tr}(\mathbf{H}_{u,u}\mathbf{W}_u) - \bar{\gamma}_u \sum \limits_{i \in \mathcal{U}/u} \mathsf{Tr}(\mathbf{H}_{u,i}\mathbf{W}_i) \geq \bar{\gamma}_u \sigma^2, \;\; \forall u \in \mathcal{U}, \label{Ch2_eq:SINR_mtx} \\
\sum \limits_{u \in \mathcal{U}} \mathsf{Tr}(\mathbf{E}_u^k \mathbf{W}_u) \leq P_k, \;\; \forall k \in \mathcal{K}, \label{Ch2_eq:pw_mtx}
\eeqn
where $\mathbf{E}_u^k=\mathsf{diag}(\mathbf{0}_{N_{u_1} \times 1},..., \mathbf{1}_{N_{u_i} \times 1},...,\mathbf{0}_{N_{u_{a_u}} \times 1})$ if $u_i=k$.
Therefore, PMP can be transformed into the following semidefinite program (SDP)
\begin{align}
 \min \limits_{\left\lbrace \mathbf{W}_u\right\rbrace_{u=1}^{M} } & \sum \limits_{u=1}^M \mathsf{Tr}(\mathbf{W}_u) \label{Ch2_SDP} \\
 \text{s.t. } & \text{ constraints (\ref{Ch2_eq:SINR_mtx}), (\ref{Ch2_eq:pw_mtx}),} \nonumber\\
 {} & \mathbf{W}_u \succeq \mathbf{0}, \mathsf{rank}(\mathbf{W}_u)=1, \; \forall u. \label{Ch2_W_rk1}
\end{align}
This transformation reveals a special structure of the precoding design problem. Specifically, if we remove the rank-one constraints in (\ref{Ch2_W_rk1}) 
from this problem then the resulting optimization problem is convex. In fact, this relaxed problem is the convex SDP, which can be, therefore, solved 
 by using  standard optimization tools such as CVX solver \cite{Boyd2009}.
As stated in \textit{Theorem 3.1} of \cite{Bengtsson99} and \textit{Lemma 2} of \cite{Ottersten01}, if this transformed problem is feasible, then it has at 
least one solution with $\mathsf{rank}(\mathbf{W}_u)=1$, for all $u \in \mathcal{U}$.

Because the relaxed SDP problem is convex, $\mathbf{v}_u$ can be calculated as the eigenvector of $\mathbf{W}_u$ 
if the optimum solution is unique. Such unique optimum solution satisfies the rank-one constraint. 
As discussed in \cite{Bengtsson99,Ottersten01}, the relaxed SDP may have more than one optimum solution, which
means that the CVX solver cannot ensure to return the rank-one solution in general.
However, this situation almost surely never happens in practice, except for cases where the channels from two groups are exactly symmetric. 
Nevertheless, if the algorithm does produce one of such solutions where $\mathbf{W}_u$ does not have rank one, we can still obtain a 
rank-one optimum solution from that solution by using the method described in \textit{Lemma 5} of \cite{Ottersten01}, or by 
 solving the rank constrained problem presented in \textit{Algorithm~2} of
 \cite{Huang10}, or by utilizing the best rank-one approximation solution
 based on the largest eigenvalue and the corresponding eigenvector as discussed in Section~II of \cite{lou10}.

\subsection{C-RAN Fronthaul Utilization} \label{C2-FH-sect}

C-RAN can be considered as a two-hop relay architecture consisting of a wireless access hop between UEs and RRHs, and a wireless/fiber fronthaul
 hop between RRHs and BBU pool in the cloud. Transmissions over the wireless access hop are impacted by the physical 
wireless fading channels between RRHs and UEs and the co-channel interference. Moreover, transmissions over the wireless/fiber fronthaul
hop between RRHs and the cloud is affected by the finite-capacity fronthaul links and the employed signal compression/quantization.
Hence, resource allocation design in C-RAN must account for new practical constraints and focus on efficient utilization of capacity-limited fronthaul links.

\subsubsection{C-RAN Fronthaul Utilization for Uplink Communications }

In uplink transmission, the baseband symbols received by RRHs must be transferred via the fronthaul links
to the cloud for further processing. To accomplish this, ``quantize-and-forward'' pre-processing is required
 where each RRH first quantizes its received baseband signals and then sends the corresponding quantized codewords 
to the cloud through the capacity-limited fronthaul links.
These compressed signals will be further processed by the central BBU pool which
 performs joint decoding for all UEs. In this uplink scenario, the actual fronthaul capacity utilization
 depends on the quality of the received signal and the quantization techniques employed by the RRHs.

\subsubsection{C-RAN  Fronthaul Utilization for Downlink Communications}

In the downlink transmission, the BBU pool first processes the baseband signals and optimizes the precoding vectors, and then compresses the 
pre-coded signals before transmitting them to the UEs through the RRHs. As described in \cite{ParkICC13,shamai15}, there 
are two potential strategies to realize such design, namely ``Compression-After-Precoding'' (CAP) and ``Compression-Before-Precoding'' (CBP).
Moreover, these two strategies can decide which information will be transferred via the capacity-limited fronthaul links.
For the CAP strategy, which is considered in \cite{shamai13a,shamai13b}, the cloud precodes the data streams of users 
using the corresponding precoding vectors, then compresses the resulting signals and forwards them to the corresponding RRHs over 
the fronthaul transport network. 
Specifically, each baseband signal corresponding to each antenna is compressed and forwarded according to certain ``quantize-and-forward'' scheme. 
Hence, the quantization technique employed at cloud will decide the consumed fronthaul capacity in this case.

For the CBP strategy, the cloud directly compresses the precoding vectors and forwards the compressed precoding vectors as well as users' data streams 
to the corresponding RRHs \cite{ParkICC13}.
Then, the RRHs precode the signals, up-convert them to form RF signals for transmissions to users.
In this scenario, information streams and precoding vectors are compressed and transmitted from the cloud to RRHs separately. 
Therefore, the required fronthaul capacity to support the communication link between user $u$ to RRH $k$ can be calculated as
\beq \label{Ch2:R_fh_DL}
R_u^{k,\sf{fh}} = R_u^{k,\sf{pr}} + R_u^{\sf{dt}}(\bar{\gamma}_u),
\eeq
where $R_u^{\sf{dt}}(\bar{\gamma}_u)$ and $R_u^{k,\sf{pr}}$
correspond to the fronthaul capacity consumptions for transferring the information stream and precoding vector $\mathbf{v}_u^k$ for user $u$ as given in 
equation (3) of \cite{ParkICC13}, respectively. 
In fact, $R_u^{\sf{dt}}(\bar{\gamma}_u)$ is a function of the target SINR and it can be calculated as 
\beq
R_u^{\sf{dt}}(\bar{\gamma}_u) = \log_2(1+\bar{\gamma}_u).
\eeq
Moreover, $R_u^{k,\sf{pr}}$ in (\ref{Ch2:R_fh_DL}) can be predetermined based on the desirable quantization quality of the precoding vectors.
When the precoding vectors are quantized, and then employed at RHHs, there are the quantization errors.
In general, if the corresponding quantization noise is sufficiently small compared to the interference and noise at receivers, the quantization 
errors can be omitted. 

\subsubsection{Quantize-and-Forward Processing} \label{Ch2_QaF}

We assume that all baseband symbols $y^k_u$, which corresponds to the transmission link between RRH $k$ and UEs $u$, must be transferred 
via the fronthaul links to the cloud from RRH for the uplink communication or from the cloud to the RRHs in the downlink communications. 
To accomplish this, certain ``quantize-and-forward'' pre-processing strategy must be employed where $y^k_u$ is first quantized, then the 
corresponding quantized codewords will be transmitted over the fronthaul link.
Assume that the same $b^k_u$ bits are utilized to quantize the I/Q parts of the received symbol $y^k_u$.
Then, according to the results in \cite{Bucklew79}, the quantization noise power can be approximated as
\beq
q^k_u(b^k_u) \simeq  {2 Q(y^k_u)}/{2^{b^k_u}},
\eeq
where $ Q(y) = {\left( \int_{-\infty}^{\infty} f(y)^{1/3} dy \right)^3 }/{12}$, and $f(y^k_u)$ is the probability density function of
 the I/Q parts of $y^k_u$. 
Assuming a Gaussian distribution of the signal to be quantized, we have \cite{Baracca14}
\beq \label{eq:q_kmn}
q^k_u(b^k_u) \simeq   \dfrac{\sqrt{3} \pi}{2^{2 b^k_u +1}}  Y^k_u,
\eeq
where $Y^k_u$ is the power of baseband symbol $y^k_u$.
Then, the total number of quantization bits which are transferred via the fronthaul link between RRH $k$ and cloud can be expressed as
\beq
B_k = \sum_{u \in \mathcal{U}_k} 2b^k_u,
\eeq 
where $\mathcal{U}_k$ denotes the set of UEs which are served by RRH $k$.
Let $\mb{b}$ denote the vector that represents the numbers of quantization bits for all users in the network.
For a given $\mb{b}$, the quantized symbol of $y^{(s)}_{k}$ can be then written as 
\beq
\tilde{y}^k_u = y^k_u + e^k_u,
\eeq
where $e^k_u$ represents the quantization error for $y^k_u$, which has zero mean and variance $q^k_u(b^k_u)$. 

\subsection{C-RAN Cloud Computation Complexity} \label{Ch2_CCE}

Another important issue in C-RAN design is to provision and optimize the cloud computation resources.
In fact, the BBU pool in cloud performs many critical computation and processing tasks in the C-RAN.
In particular, the BBU pool performs various operations of
 MAC protocols, decodes or encodes the users' information, processes modulation/mapping to generate the baseband signals.
It may also optimize the precoding/decoding vectors for MIMO transmissions and the resource allocation for interference mitigation.

These processes will require ample computational resources which are dependent directly on the parametrization
and operating regime of C-RAN \cite{Suryaprakash15}. 
If the cloud is built with two limited computational resources, computation overload or outage may occur and users' signals
 may not be decoded successfully even though the received signals' quality is satisfactory.

In general, the processing requirement for the UE's data, or computation complexity (in Giga Operations Per Second - GOPS), required by the
 physical layer can be determined based on the \textit{frequency domain processing} (FDP) and \textit{coding processing} (CP) \cite{Werthmann13}.
While the CC of FDP scales with the number of RF channels corresponding to the transmission of a particular UE, e.g., number of antennas serving the UE, the 
computation complexity of CP depends on the coding rate, the type of coding, and encoding/decoding sides.
Hence, this CP requirements can be separated into those for the uplink and downlink communications.
In the uplink, the BS has to perform decoding and if the turbo code is employed, then the computation complexity of CP corresponding to UE $u$'s signal can be calculated as \cite{Rost_TWC15}
\beq \label{Ch2_eq:cpt}
C_u = A r_u \left[B - 2 \log_2 \left( \log_2 \left( 1 +\Gamma_u \right) - r_u \right) \right],
\eeq
where $r_u$ is the data rate chosen for UE $u$, $\Gamma_u$ is the SINR value corresponding to that user's transmission, $A = 1/\log_2(\zeta - 1)$, $B = \log_2\left( {(\zeta -2)}/{(\zeta T(\epsilon_{\sf{ch}}))}\right)$, $\zeta$ is a parameter 
related to the connectivity of the decoder, $T(\epsilon_{\sf{ch}})  =  -{T'}/{\log_{10}(\epsilon_{\sf{ch}})}$, $T'$ is another model parameter,
 and $\epsilon_{\sf{ch}}$ is the target channel outage probability. 
The set $\left\lbrace T', \zeta \right\rbrace $ can be selected by calibrating (\ref{Ch2_eq:cpt}) with an actual turbo-decoder 
implementation or a message-passing decoder.

In the downlink, the processing requirement for CP scales with the number of bits at a coding rate $C$ \cite{Werthmann13}.  
Generally, the computation complexity required for a specific UE is a function of number of utilized antennas, the modulation bits, coding rate,
 number of data streams,  and number of allocated resource blocks \cite{sabella14}. 
Specifically, the computation complexity required for each UE $u$ can be expressed in the following general form 
\beq
\label{eq:xu}
X_u = f(N_u,b_u),
\eeq
where $f(\cdot,\cdot)$ is a non-linear increasing function with its variables, $N_u$ denotes the total number of antennas serving UE $u$, and $b_u$ is the modulation bits per symbol in the data stream $x_u$.
Here, the number of antennas corresponding to transmission of UE $u$ is the total antennas of all RRHs serving that UE, which can be written as
\beq
N_u = \sum_{k \in \mathcal{R}_u} N_k,
\eeq
where $\mathcal{R}_u$ is the set of RRHs serving UE $u$, and $N_k$ is the number of antennas equipped at RRH $k$.
One particular model for $f(\cdot,\cdot)$ proposed in \cite{Werthmann13} is given as 
\beq \label{CE_func_ex}
f_{\sf{ex}}(N_u,b_u) = \left( 27 N_u + 9 N^2_u + 3 C b_u \right)/5,
\eeq
where $C$ is the coding rate and $b_u$ is the modulation bits for each transmission symbol.

\section{Literature Review}
\subsection{Power Control Design for Single-Carrier Wireless HetNets}

As introduced above, PC is an important technique to mitigate the interference among UEs, which has been investigated extensively in conventional wireless networks \cite{foschini93, yates95, Yates95a, hanly95, farrokhi97, teng10, sung05, sung06}. 
Recently, various PC and interference management algorithms have been proposed in some existing works for femtocell networks 
\cite{jo09,chand09,yun11,Le12a}. In particular, to protect the existing
Macrocell User Equipments (MUEs) while enabling a scalable femtocell deployment, Jo \textit{et al.} \cite{jo09} proposed two PC strategies, which aim to set the transmit powers of 
FUEs to maintain the cross-tier interference below a fixed threshold at MBSs. 
However, the devised solution is neither distributed nor Pareto-optimal.
By employing the game theory approaching methods, \cite{chand09,yun11,Le12a} have proposed various PC strategies for both macrocells and femtocells. 

Interestingly, PC has shown more advantages in providing efficient mechanisms to mitigate interference or enhancing the networks' performance when it is integrated with other allocation techniques such as admission control (users removal) and BS association (BSA) \cite{rasti11a, rasti11b, alpcan02, han05, 
lee05, Altman06, Saraydar01, Xiao03}.
In HetNets, the BSA algorithm design depends on the underlying access mode implemented at femtocells.
Hence, the joint design of BSA and PC in HetNets has also attracted much attention.
Specifically, femtocells allow MUEs to connect with their FBSs in the open access while they do not MUEs to do so in the closed access \cite{roche10, Le12, golaup09}.
However, most existing works on interference management and PC for femtocell networks assume
the closed access \cite{jo09,chand09,yun11,Le12a,yanzan12}, which are not flexible enough to support both voice and data UEs in different network tiers. 
In contrast, the open access is more effective in mitigating cross-tier interference compared to the closed access;
however, it may result in uncontrollable performance degradation of FUEs \cite{roche10,madan10}. 
Design of an efficient hybrid access scheme that can balance between advantages and 
disadvantages of the other two access modes is, therefore, an interesting and important research topic.

\subsection{Resource Allocation for Multi-Carrier Wireless HetNets}

There have been also several works studying the resource allocation for multi-carrier-based HetNets \cite{chuhan11,yanzan12,yushan12,Wangchi12,Long12,hoon11,renchao12,lu12}. In particular, \cite{chuhan11} has proposed a spectrum sharing and access control strategy for OFDMA femtocell networks by using the water-filling algorithm and game theory technique. 
Moreover, the authors in \cite{yanzan12} have proposed two methods to mitigate the uplink interference for OFDMA femtocell networks. 
In the first method, Femtocell User Equipments (FUEs) that produce strong interference to MUEs are only allowed to use dedicated subchannels
 while the remaining FUEs can utilize all subchannels assigned for the femto tier. In the second method, an auction-based algorithm is devised to optimize the channel assignment for both tiers to mitigate the co-tier interference. 
In addition, by employing a heuristic approach, \cite{yushan12} developed a greedy algorithm to enable self-organization for OFDMA femtocells to avoid the cross-interference. 

In \cite{Wangchi12}, Cheung  et al. study the  network performance where the macrocells and femtocells utilize separate sets of subchannels or share the whole
 spectrum under both open and closed access strategies. The QoS-aware admission control design for OFDMA femtocells
 is conducted in \cite{Long12}. 
All these mentioned works, however, do not consider PC 
 in their proposed resource allocation algorithms.
In \cite{hoon11}, the authors have developed an adaptive femtocell interference management algorithm comprising three control loops that run 
continuously and separately at MBS and FBSs to determine initial femto maximum power, target SINRs for FUEs, and to control the transmission power, respectively. However, an efficient SA design
is not investigated and the maximum power constraint for each UE is not considered in this work. 
The joint SA and PC optimization has been investigated in \cite{renchao12,lu12}. 
In particular, \cite{renchao12} aims to enhance the energy 
efficiency and while the main design objective of \cite{lu12} is to achieve user-level fairness for cognitive femtocells. 
However, the algorithms developed in these works have not provided QoS guarantees for users of both network tiers.

\subsection{CoMP Transmission Design for C-RANs}

There are some recent works that address the coordinated transmission issues in C-RANs. 
In \cite{fan_arxiv}, authors have proposed an efficient clustering algorithm to reduce the number of computations in the 
centralized cloud and ease the transmission design when the number of RRHs is very large. 
Liu and Lau in \cite{vincent_lau_14} attempt to maximize the average weighted sum-rate for uplink C-RANs by studying the joint optimization of antenna selection, regularization, and power allocation. 
Then, they employ the so-called random matrix theory method to decouple the corresponding non-convex problem into some sub-problems which can be tackled more easily.
In addition, the works \cite{letaief14,luo_arxiv} minimize the total network power consumption by optimizing the precoding vectors for all RRHs where
 \cite{letaief14} investigates the downlink case while \cite{luo_arxiv} addresses both the downlink and uplink scenarios. 

In these works, total power consumption is calculated by accounting for the powers due to radio transmission and fronthaul-link operations. 
The underlying problem is then formulated as the sparse beamforming problem which could be solved by employing the compressed 
sensing techniques \cite{donoho06,tao06,bach12}. 
The work in \cite{YShi15} proposes a generic stochastic coordinated transmission design for downlink C-RANs where the CSI is uncertain. A novel 
stochastic DC (difference-of-convex) programming algorithm is then proposed, which can allow to obtain the optimal solution.
However, the coordinated transmission design in C-RANs considering practical constraints such as limited cloud computation power
 and fronthaul capacity deserves further research.

\subsection{Fronthaul Utilization and Cloud Computation in C-RAN}

We now briefly discuss the literature on fronthaul utilization and cloud computation complexity in C-RANs. 
Early works in these directions are presented in \cite{shamai13a,shamai13b} where they propose
 compression techniques to minimize the amount of data transmitted over the fronthaul transport network.
The compression and quantization techniques for efficient transmissions over fronthaul links have recently been 
considered for uplink C-RANs in \cite{YZhou14,LLiu15} where \cite{YZhou14} aims to maximize the network weighted sum rate
 while \cite{LLiu15} focuses on maximizing the minimum SINR of all UEs.
Furthermore, the work in \cite{XRao15} considers a distributed fronthaul compression scheme at the 
RRHs to reduce the fronthaul loading in C-RAN. This work also attempts to develop a joint recovery strategy at the BBUs by 
employing the compressive sensing techniques for the end-to-end recovery of the transmitted signals from the users. 

The works in \cite{Werthmann13,sabella14,Rost_TWC15,Rost_GBC15} have studied on the optimization of computational resources of C-RANs.
Specifically, \cite{Werthmann13} has modeled the computation complexity  for downlink communication based on the operations 
of electrical circuits for processing the users' baseband signals. 
The work in \cite{sabella14} then applies this computation complexity model to estimate the energy efficiency benefits of C-RAN.
In addition, the authors in \cite{Rost_TWC15} have modeled the complexity model in uplink C-RAN assuming the turbo coding.
These authors have also attempted to optimize the rate allocation for each uplink transmission to maximize the system sum rate
considering the limitation on the cloud computation capacity \cite{Rost_GBC15}.
This work, however, assumes unlimited fronthaul capacity and does not optimize the resource allocation.
Interestingly, combining the wireless virtualization and C-RANs has been considered in \cite{XTran15} where
sharing the computing-resource in C-RAN for multiple virtual BSs and designing a joint
dynamic radio clustering and cooperative beamforming scheme that maximizes the downlink weighted sum-rate system utility have been studied. 
However, the computation complexity model in this paper is very simple.

\section{Concluding Remarks}

In this chapter, we have discussed fundamental resource allocation techniques and the literature on radio resource
allocation and interference management for wireless HetNets and C-RANs. 
Particularly, we have presented the PC and SA techniques, the CoMP as well the fronthaul utilization and cloud computation complexity in C-RANs.
Then, the literature survey has summarized the existing state-of-the-art approaches and pointed our their limitations in resolving
 technical challenges pertaining to the design of next-generation high-speed wireless cellular networks. 
These limitations of the existing literature motivate us to develop more efficient and practical solutions, which
will be presented in Chapters~\ref{Ch3}–\ref{Ch8} of this dissertation.
\chapter{Distributed Base Station Association and Power Control for Heterogeneous Cellular Networks}
\renewcommand{\rightmark}{Chapter 5.  Distributed BSA and PC for HetNets}
\label{Ch3}
The content of this chapter was published in IEEE Transactions on Vehicular Technology in the following paper:

Vu N. Ha and Long B. Le, ``Distributed Base Station Association and Power Control for Heterogeneous Cellular Networks,'' {\em IEEE Trans. Veh. Tech.,} vol. 63, no. 1, pp. 282--296, Jan. 2014. 

\section{Abstract}
\label{Ch3_Abs}
In this paper, we propose a universal joint base station (BS) association (BSA) and power control (PC) algorithm for
heterogeneous cellular networks. Specifically, the proposed algorithm iteratively updates the BSA solution and the transmit power
of each user. Here, the new transmit power level is expressed as a
function of the power in the previous iteration, and this function is
called the power update function (puf).
We prove the convergence
of this algorithm when the puf of the PC strategy satisfies the
so-called ``two-sided scalable (2.s.s.) function'' property.
Then, we develop a novel hybrid PC (HPC) scheme by using noncooperative game theory and prove that its corresponding puf is 2.s.s.
Therefore, this HPC scheme can be employed in the proposed joint
BSA and PC algorithm.
We then devise an adaptation mechanism for the HPC algorithm so that it can support the signal-to-interference-plus-noise ratio (SINR) requirements of all users
whenever possible while exploiting multiuser diversity to improve
the system throughput.
We show that the proposed HPC adaptation algorithm outperforms the well-known Foschini–Miljianic PC
algorithm in both feasible and infeasible systems. 
In addition, we present the application of the developed framework to design a
hybrid access scheme for two-tier macrocell–femtocell networks.
Numerical results are then presented to illustrate the convergence
of the proposed algorithms and their superior performance, compared with existing algorithms in the literature.

\section{Introduction}
\label{Ch3_Introduction}
Femtocells have recently emerged as a potential solution to enhance indoor capacity and coverage \cite{Yeh08}.
Efficient deployment and operation of femtocells, however, demand to resolve various technical challenges beyond those existing in
the traditional cellular network, including network architecture design, interference management, and synchronization issues.
In fact, interference management for the macrocell–femtocell network is one of the most critical issues, which needs to be resolved to achieve enhanced network capacity and support users' quality of service (QoS) in different network tiers \cite{Yeh08,Claussen08,roche10,Le12,Panti12}.
Since the emerging femtocells utilize the same licensed spectrum with existing macrocells, both cotier interference within
each network tier (i.e., macro and femto tiers) and cross-tier interference among different network tiers need to be
properly managed. For code-division multiple-access (CDMA) networks, power control (PC) and base station association
(BSA) algorithms provide efficient mechanisms to mitigate interference and maximize the system throughput
\cite{Yates95a,hanly95,farrokhi97,teng10}.

A good BSA algorithm typically provides an efficient mechanism to associate mobile users with one or several serving
base stations (BSs) so that certain performance metrics of interest are optimized.
In general, BSA can be decided based on different factors and metrics, such as achievable rates, transmit
powers, geographical locations, and cell load \cite{kyuho09}.
In addition, the BSA can be jointly designed with PC \cite{Yates95a,hanly95,farrokhi97,teng10}.
In the two-tier macrocell–femtocell network, the design of a BSA
algorithm also depends on the underlying access mode implemented at femtocells.
Specifically, femtocells allow macro user equipment (MUE) devices to connect with their femto base
stations (FBSs) in the open access, whereas they do not allow MUE devices to do so in the closed access \cite{roche10,Le12,golaup09}.
In general, the open access is more effective in mitigating cross-tier interference compared with the closed access; however, it
may result in uncontrollable performance degradation of femto users \cite{roche10}.
Design of an efficient hybrid access scheme that can balance between advantages and disadvantages of the other
two access modes is, therefore, an interesting and important research topic.

PC is an important research topic that has been extensively investigated in the literature 
\cite{foschini93,yates95,Yates95a,hanly95,farrokhi97,teng10,sung05,sung06,rasti11a,rasti11b,alpcan02,han05,lee05,Altman06,Xiao03}.
Among existing PC algorithms, there are two popular PC algorithms, namely, the
target-signal-to-interference-plus-noise ratio (SINR)-tracking PC (TPC) algorithm \cite{foschini93}
and the opportunistic PC (OPC) algorithm \cite{sung05,sung06}.
Specifically, Foschini and Miljianic were the first to propose the distributed TPC algorithm that can
support predetermined target SINRs for all users with minimum powers (i.e., achieving Pareto optimality) \cite{foschini93}.
This PC algorithm was extended to consider maximum power constraints and distributed BS association in \cite{yates95,Yates95a,hanly95,farrokhi97}.
However, the TPC algorithm aims to support fixed target SINRs, which is, therefore, more applicable to the voice application. 
In contrast, the OPC algorithm, which was proposed by Sung and Leung in \cite{sung05,sung06}, 
aims at exploiting multiuser diversity to improve the system throughput.
In these papers, Sung and Leung introduced the so-called ``two-sided scalable (2.s.s.) function'', which was
the extended version of the “standard function” notion proposed by Yates \cite{yates95}, to prove the convergence of the OPC algorithm.
Unfortunately, the OPC algorithm cannot provide any QoS support for users in the network.
There have been also some more recent works that investigated the joint user removal and
PC algorithms \cite{rasti11a,rasti11b}.
In addition, various PC algorithms have been developed by employing the game theory approach \cite{alpcan02,han05,lee05,Altman06,Xiao03}.
In most cases, proposed PC algorithms are proved to converge to the Nash equilibrium (NE) of the underlying
game. In particular, a pricing-based approach has been taken to design PC algorithms in \cite{alpcan02}
that strike to achieve an efficient throughput and a power tradeoff.
In addition, sufficient conditions for convergence and having the largest number of
users attain their target SINR have been presented.
\nomenclature{NE}{Nash Equilibrium}

There are some existing works that consider PC and interference management problems for femtocell networks \cite{jo09,chand09,yun11,Le12a,yanzan12,madan10}.
In particular, Jo \textit{et al.} \cite{jo09} proposed two PC strategies, which aim to set the transmit powers of 
femto user equipment (FUE) devices to maintain the cross-tier interference below a fixed threshold at macrocell BSs (MBSs). 
Various PC strategies for both macrocells and femtocells were also developed by using the game theory approach in \cite{chand09,yun11,Le12a}. In particular, 
we previously proposed distributed PC algorithms for closed-accessed macro-femto networks in \cite{Le12a}.

Interference management methods with dynamic subcarrier allocation and cell association were proposed in \cite{yanzan12} and \cite{madan10}, respectively.
Most existing works on interference management and PC for femtocell networks, however, assume
the closed access \cite{jo09,chand09,yun11,Le12a,yanzan12}, which are not flexible enough to support both voice and data users in different network tiers.
This paper aims to resolve these limitations.
Specifically, we make the following contributions.

\begin{itemize}
\item We develop a generalized BSA and PC algorithm and prove its convergence for any 2.s.s. power update function (puf). 
We then propose a hybrid PC (HPC) scheme that can be used in this general algorithm.
In addition, we describe two alternative designs, namely, decomposed and joint BSA and PC design.
For the decomposed design, we develop a simple load-aware BSA algorithm.
These designs focus on the heterogeneous cellular network with different
types of cells (e.g., macrocells, microcells, and femtocells) and users with different QoS
requirements (e.g., voice and data users). 

\item We propose an HPC adaptation algorithm that adjusts parameters of the HPC scheme
to support the differentiated SINR requirements of all users whenever possible while enhancing the system throughput. 
We then present the application of the proposed framework to the two-tier macrocell-femtocell networks.

\item We present extensive numerical results to validate the developed theoretical
results and to demonstrate the efficacy of our proposed algorithms.
\end{itemize}

The remainder of this paper is organized as follows: We describe the system model in Section~\ref{Ch3_section2}. In Section~\ref{Ch3_section3}, we 
present the generalized BSA and PC algorithm and the HPC scheme. 
We propose an HPC adaptation algorithm and present its application to macrocell-femtocell networks in Section~\ref{Ch3_sec:Time_Scal_Sepa}. 
Numerical results are presented in Section~\ref{Ch3_result}, followed by conclusion in Section~\ref{Ch3_ccls}.
 
\section{System Model}
\label{Ch3_section2}

We consider uplink communications in a heterogeneous wireless cellular network. 
We assume that there are $M$ users, which are labeled $1, 2, ..., M$, transmitting information to $K$ BSs, which are labeled ${1, 2, ..., K}$, on the same spectrum by using CDMA. 
Each of these BSs can belong to one of the available cell types (e.g., macrocells, microcells, picocells, and femtocells) in a heterogeneous cellular network.
Let $\mathcal{M}$ and $\mathcal{K}$ be the sets of all users and BSs, respectively, i.e., $\mathcal{M}=\left\lbrace 1, 2, ..., M \right\rbrace $ and $\mathcal{K}=\left\lbrace {1,2,...,K} \right\rbrace $. Assume that each user $i$ communicates with only one BS at any time, which is denoted as $b_i \in \mathcal{K}$. 
However, users can change their associated BSs over time. 
Note that if $b_i \equiv b_j$, then users $i$ and $j$ are associated with same BS.
Let $D_i$ be the set of BSs that user $i$ can be associated with, which are nearby BSs of user $i$ in practice.
Note that we have $D_i\subseteq \mathcal{K}$ and $b_i \in D_i$.
We are interested in developing a BSA strategy which determines how each user $i$ will choose one BS $b_i \in D_i$ to communicate with based on its observed channel state information and interference. 

Let the transmit power of user $i$ be $p_i$, whose maximum value is $\bar{p}_i$, i.e., $0 \leq p_i \leq \bar{p}_i$.
We arrange transmit powers of all users in a vector, which is denoted by $\mathrm{p} = \left( p_1,p_2,...,p_M \right)$.
Let $h_{ij}$ be the channel power gain from user $j$ to BS $\mathit{i}$, and $\eta_i$ be the noise power at BS $\mathit{i}$.
Then, the SINR of user $i$ at BS $b_i$ can be written as \cite{Alpcan08}
\begin{equation}
\label{Ch3_sinr}
\Gamma_i(\mathrm{p})=\dfrac{G h_{b_i i} p_i}{\sum_{j \neq i}{h_{b_ij}p_j}+\eta_{b_i}}=\dfrac{p_i}{R_i \left( \mathrm{p}, b_i\right) }
\end{equation}
where $G$ is the processing gain, which is defined as the ratio of spreading bandwidth to the symbol rate; $R_i \left( \mathrm{p}, b_i \right) $ is the effective interference to user $i$, which is defined as
\begin{equation}
\label{Ch3_Rp}
 R_i \left( \mathrm{p}, b_i \right) \triangleq  \dfrac{\sum_{j\neq i}{h_{b_ij}p_j}+\eta_{b_i}}{G h_{b_ii}}
\end{equation}
where $R_i \left( \mathrm{p},k\right)$ is the effective interference experienced by user $i$ at BS $\mathit{k}$.
We will sometimes write $R_i \left( \mathrm{p}\right)$ instead of  $R_i \left( \mathrm{p},b_i\right)$ when there is no confusion.
We assume that each user $i$ requires the minimum QoS in terms of a target SINR $\hat{\gamma}_i$ $\forall i \in \mathcal{M}$. 
In particular, these QoS requirements can be written as follows:
\begin{equation}
\label{Ch3_equ:SINR_cond}
\Gamma_i(\mathrm{p}) \geq \hat{\gamma}_i, \:\: i \in \mathcal{M}.
\end{equation}

The objective of this paper is to develop distributed BSA and PC algorithms that can maintain 
the SINR requirements in (\ref{Ch3_equ:SINR_cond}) (whenever possible) while exploiting the multiuser diversity gain to increase the system throughput. 
The proposed algorithms, therefore, aim to support both voice and high-speed data applications.
Moreover, we aim to achieve these design objectives for a heterogeneous wireless environment where there are different 
kinds of users with differentiated QoS targets (e.g., voice and data users) and potentially different cell types (e.g., macrocells, microcells, picocells, and femtocells).
In particular, voice users would typically require some fixed target SINR $\hat{\gamma}_i$, whereas data users would
seek to achieve higher target SINR $\hat{\gamma}_i$ to support their broadband applications (e.g., video and Internet browsing).

\section{Base Station Association and Power Control}
\label{Ch3_section3}
\subsection{Generalized BSA and PC Algorithm}
\label{Ch3_sec:gnrl_PC_BAS}
We develop a general minimum effective interference BSA and PC algorithm and prove its convergence.
Specifically, we will focus on a general iterative PC algorithm where
each user $i$ in the network performs the following power update $p_i^{(n+1)} := J_i(\mathrm{p}^{(n)})$, where
$n$ denotes the iteration index, and $J_i(.)$ is the puf\footnote{It can be verified that the pufs of the TPC
and OPC schemes are 2.s.s.}. 
In fact, this kind of PC algorithm converges if we can prove 
that its corresponding puf is a \textit{2.s.s. function} according to
Sung and Leung \cite{sung05, sung06}.
The challenges involved in designing such a PC algorithm are that we have to ensure its puf is \textit{2.s.s.}, 
it fulfills our design objectives for the heterogeneous cellular network, and it can be
implemented in a distributed manner.
In addition, we seek to design a PC algorithm
that can be integrated with an efficient BSA mechanism. 
Toward this end, we give the definition of \textit{2.s.s. function}
in the following.

\begin{definition}
\label{Ch3_def_2s_scal}
A puf $\mathrm{J( p ) }=\left[ J_1(\mathrm{p}), \right. $ $ \left. ...,J_M(\mathrm{p})\right]^{\mathrm{T}} $ is \textit{2.s.s.} with
 respect to $\mathrm{p}$ if for all $a >1$ and any power vector $\mathrm{p}^{\prime}$ satisfying $\frac{1}{a}\mathrm{p} \leq \mathrm{p}^{\prime} \leq a \mathrm{p}$, we have
\begin{equation}
\label{Ch3_eq_def_2s_scal}
(1/a) J_i \left( \mathrm{p}\right) < J_i(\mathrm{p} ^{\prime} ) < a J_i(\mathrm{p}) \; \forall i \in \mathcal{M}.
\end{equation}
\end{definition}

We will consider a joint BSA and PC algorithm where the puf of the PC scheme satisfies the \textit{2.s.s.} property stated in
Definition~\ref{Ch3_def_2s_scal}. Under this design, each user chooses its ``best'' BS and updates its transmit power in the distributed manner.
Specifically, the proposed algorithm is described in  Algorithm \ref{Ch3_alg:gms2} where each user chooses a BS that results in minimum effective interference
and updates its power by using any \textit{2.s.s.} puf $\mathrm{J}(\mathrm{p})=\mathrm{J}'(\mathrm{R}(\mathrm{p}))=[J'_1(R_1(\mathrm{p})),...,J'_M(R_M(\mathrm{p}))]$, where $\mathrm{R}(\mathrm{p})=[R_1(\mathrm{p}),...,R_M(\mathrm{p})]$. The proposed algorithm can be implemented distributively with any distributed PC algorithm.
To realize the BSA, each user needs to estimate the effective interference levels for different nearby BSs of interest.
Then, each user $i$ can choose one BS $k$ in $D_i$ with a minimum value of $R_i( \mathrm{p},k)$.
This algorithm ensures that each user experiences low effective interference and, therefore, high throughput at convergence.

User $i$ can estimate $R_i( \mathrm{p},k)$ if it has information about the total interference and noise power [i.e., the denominator of (\ref{Ch3_sinr})]
and the channel power gain $h_{ki}$ according to SINR expression in (\ref{Ch3_sinr}) (since the current transmit power level $p_i$ is readily available).
In addition, the channel power gains $h_{ki}$ can be estimated by BS $k$ and sent back to each user by using the pilot signal and any standard channel estimation technique \cite{steiner94,Amico03,Shakya09}.
Moreover, the total interference and noise power for each user can estimated as follows.
Each BS estimates the total received power then broadcasts this value to its connected users.
Each user $i$ can calculate the total interference and noise power by subtracting its received
signal power (i.e., $h_{ki}p_i$) from the total receiving power broadcast by the BS.
Therefore, calculation of the effective
interference only requires the standard channel estimation of $h_{ki}$ and estimation of the total receiving power at the BS.
In addition, signaling is only involved in sending these values from the BS to its connected users, which is relatively mild and can be conducted over the air. 
More importantly, estimation of the effective interference $R_i\left( \mathrm{p}\right)$ and, therefore, the proposed PC algorithm given in (\ref{Ch3_hpcrule}) can be implemented in a distributed fashion. 

\begin{algorithm}[!t]
\caption{\textsc{Minimum Effective Interference BSA and PC Algorithm}}
\label{Ch3_alg:gms2}
\begin{algorithmic}[1]
\STATE Initialization:
\begin{itemize}
\item $p_i^{(0)}=0$ for all user $i$, $i \in \mathcal{M}$.
\item $b_i^{(0)}$ is set as the nearest BS of user $i$.
\end{itemize} 
\STATE Iteration $n$: Each user $i$ ($ i \in \mathcal{M}$) performs the following:
\begin{itemize}
\item Calculate the effective interference at BS $\mathit{k} \in D_i$ as follows:
\begin{equation}
\label{Ch3_Rp2}
R_i^{(n)}( \mathrm{p}^{(n-1)},\mathit{k})=\dfrac{\sum_{j\neq i}{h_{\mathit{k}j}p_j^{(n-1)}}+\eta_k}{G h_{\mathit{k}i}}.
\end{equation}
\item Choose the BS $b_i^{(n)}$ with the minimum effective interference, i.e., 
\begin{align}
b_i^{(n)}&=&\mathrm{argmin}_{\mathit{k} \in D_i} R_i^{(n)} ( \mathrm{p}^{(n-1)},\mathit{k}). \label{Ch3_bi} \\
R_i^{\sf o (\mathit{n})}(\mathrm{p}^{(n-1)})& =&\mathrm{min}_{\mathit{k} \in D_i} R_i^{(n)} ( \mathrm{p}^{(n-1)},\mathit{k}) \nonumber \\
& = & R_i^{(n)} ( \mathrm{p}^{(n-1)},b_i^{(n)}). \label{Ch3_Rmin}
\end{align}
\item Update the transmit power for the chosen BS as follows:
\begin{equation}
\label{Ch3_p}
p_i^{(n)}=J'_i(R_i^{\sf o (\mathit{n})}(\mathrm{p}^{(n-1)})). 
\end{equation}
\end{itemize} 
where $J'_i(R_i(\mathrm{p}))$ is the puf with respect to $R_i(\mathrm{p})$.
\STATE Increase $n$ and go back to step 2 until convergence.
\end{algorithmic}
\end{algorithm}

Note that we have expressed the pufs with respect to both $\mathrm{p}$ and $\mathrm{R}(\mathrm{p})$.
For the expression with respect to $R(\mathrm{p})$, the $i$th element of the puf is denoted by $J'_i(R_i(\mathrm{p}))$, where $R_i(\mathrm{p})$ is the effective interference experienced by user $i$ given in (\ref{Ch3_Rp}).
Therefore, we have $\mathrm{J}(\mathrm{p}) = \mathrm{J}'(\mathrm{R}(\mathrm{p}))$,
which depends on both the transmit powers of all users and the BS with which each user $i$ is associated with in general.
We establish the convergence of Algorithm~\ref{Ch3_alg:gms2} in the following by utilizing the \textit{2.s.s. function} approach.
Toward this end, we recall the convergence result for any PC algorithm that employs a bounded \textit{2.s.s.} puf in the following lemma \cite{sung05}.

\begin{lemma}
\label{Ch3_lemma3.1}
Assume that $\mathrm{J}(\mathrm{p})$ is a \textit{2.s.s.} function, whose elements ${J}_i(\mathrm{p})$ are bounded by zero
and $\bar{p}_i$, i.e., $0 \leq {J}_i(\mathrm{p}) \leq \bar{p}_i$.
Consider the corresponding power update $p_i^{(n+1)} := J_i(\mathrm{p}^{(n)}), \forall i$,
where $n$ denotes the iteration index. Then, we have the following results.
\begin{enumerate} 
\item The puf $\mathrm{J}(\mathrm{p})$ has a unique fixed point corresponding to a transmit power vector $\mathrm{p}^{\ast}$ that satisfies $\mathrm{p}^{\ast}=\mathrm{J}(\mathrm{p}^{\ast})$.
\item Given an arbitrary initial power vector $\mathrm{p}^{(0)}$, the PC algorithm based on puf $\mathrm{J}(\mathrm{p})$  converges 
to that unique fixed point $\mathrm{p}^{\ast}$.
\end{enumerate}
\end{lemma}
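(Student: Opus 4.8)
\textbf{Proof proposal for Lemma~\ref{Ch3_lemma3.1}.}

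The plan is to establish the two claims in sequence, since the convergence result (part 2) will also deliver existence of the fixed point, while uniqueness (part 1) requires a separate monotonicity-type argument built directly on the \textit{2.s.s.} property. First I would prove \emph{uniqueness}. Suppose $\mathrm{p}^{\ast}$ and $\mathrm{q}^{\ast}$ are both fixed points, i.e. $\mathrm{p}^{\ast}=\mathrm{J}(\mathrm{p}^{\ast})$ and $\mathrm{q}^{\ast}=\mathrm{J}(\mathrm{q}^{\ast})$, with all entries strictly positive (the degenerate zero entries are handled separately using boundedness by $\bar p_i$ and the structure of the p.u.f.). Define $a^{\ast}=\max_i \max\{ p_i^{\ast}/q_i^{\ast},\, q_i^{\ast}/p_i^{\ast}\}$, the smallest $a\geq 1$ for which $\tfrac{1}{a}\mathrm{q}^{\ast}\leq\mathrm{p}^{\ast}\leq a\,\mathrm{q}^{\ast}$ holds componentwise. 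If $a^{\ast}>1$, the \textit{2.s.s.} inequality in Definition~\ref{Ch3_def_2s_scal} gives, for every $i$, $(1/a^{\ast}) J_i(\mathrm{q}^{\ast}) < J_i(\mathrm{p}^{\ast}) < a^{\ast} J_i(\mathrm{q}^{\ast})$, i.e. $(1/a^{\ast}) q_i^{\ast} < p_i^{\ast} < a^{\ast} q_i^{\ast}$ with \emph{strict} inequalities; taking the max over $i$ contradicts the definition of $a^{\ast}$ as the tightest such constant. Hence $a^{\ast}=1$ and $\mathrm{p}^{\ast}=\mathrm{q}^{\ast}$.

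Next I would prove \emph{convergence} to the (now unique, once existence is in hand) fixed point $\mathrm{p}^{\ast}$ from an arbitrary $\mathrm{p}^{(0)}$. The natural route is to track the scalar $a^{(n)} = \max_i \max\{ p_i^{(n)}/p_i^{\ast},\, p_i^{\ast}/p_i^{(n)}\}$, which measures the multiplicative distance between the iterate and a fixed point. I would first argue existence of a fixed point: since $0\leq J_i(\mathrm{p})\leq\bar p_i$, the map $\mathrm{J}$ sends the compact box $\prod_i[0,\bar p_i]$ into itself, and one can either invoke a Brouwer-type fixed point argument or, more in the spirit of Sung--Leung, show the iterates from $\mathrm{p}^{(0)}=\bar{\mathrm{p}}$ and $\mathrm{p}^{(0)}=\mathbf{0}$ are monotone and sandwich a common limit that must be a fixed point. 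With $\mathrm{p}^{\ast}$ in hand, the \textit{2.s.s.} property applied with $a=a^{(n)}$ (when $a^{(n)}>1$) to the pair $(\mathrm{p}^{(n)},\mathrm{p}^{\ast})$ — noting $\tfrac{1}{a^{(n)}}\mathrm{p}^{\ast}\leq\mathrm{p}^{(n)}\leq a^{(n)}\mathrm{p}^{\ast}$ — yields $(1/a^{(n)}) p_i^{\ast} < p_i^{(n+1)} < a^{(n)} p_i^{\ast}$ for all $i$, hence $a^{(n+1)} < a^{(n)}$: the sequence $\{a^{(n)}\}$ is strictly decreasing and bounded below by $1$, so it converges to some $a^{\infty}\geq 1$. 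The remaining step is to rule out $a^{\infty}>1$, which is where the argument needs the most care.

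The main obstacle is precisely this last step: strict decrease of $\{a^{(n)}\}$ alone does not force the limit to be $1$, because the \textit{2.s.s.} inequalities are strict but not uniformly contractive — the ``gap'' by which $a^{(n+1)}$ beats $a^{(n)}$ can shrink to zero. The standard resolution (following Sung--Leung \cite{sung05,sung06}) is a compactness/continuity argument: pass to a convergent subsequence $\mathrm{p}^{(n_k)}\to\tilde{\mathrm p}$ in the compact box, note $a^{(n_k)}\to a^{\infty}$ so that $\tfrac{1}{a^{\infty}}\mathrm{p}^{\ast}\leq\tilde{\mathrm p}\leq a^{\infty}\mathrm{p}^{\ast}$ with equality attained in at least one coordinate, and use continuity of $\mathrm{J}$ together with the strict \textit{2.s.s.} bound at the pair $(\tilde{\mathrm p},\mathrm{p}^{\ast})$ to get $a^{\infty} = \lim a^{(n_k+1)} < a^{\infty}$ if $a^{\infty}>1$ — a contradiction. (Continuity of $\mathrm{J}$ is implicit in the \textit{2.s.s.} definition: sandwiching forces $J_i(\mathrm{p}')\to J_i(\mathrm{p})$ as $\mathrm{p}'\to\mathrm{p}$.) Therefore $a^{\infty}=1$, i.e. $\mathrm{p}^{(n)}\to\mathrm{p}^{\ast}$, completing the proof. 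I would keep the treatment of coordinates where $p_i^{\ast}=0$ or $p_i^{(n)}=0$ as a minor separate case, handled by the boundedness hypothesis $0\leq J_i(\mathrm{p})\leq\bar p_i$ and the explicit form of the p.u.f.\ being used in Algorithm~\ref{Ch3_alg:gms2}.
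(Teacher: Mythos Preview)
Your sketch is correct and in fact goes well beyond what the paper does: the paper's ``proof'' of this lemma is simply the one-line remark that these results were already established for \textit{2.s.s.} functions in \cite{sung05}, with no argument reproduced. The route you outline --- uniqueness via the tightest sandwiching constant $a^{\ast}$ and strictness of the \textit{2.s.s.} inequality, then convergence by showing the ratio sequence $a^{(n)}$ is strictly decreasing and using compactness of the box $\prod_i[0,\bar p_i]$ plus continuity of $\mathrm{J}$ to force the limit to equal $1$ --- is exactly the Sung--Leung argument the paper is citing, so there is nothing to contrast.
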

\begin{proof}
The results stated in this lemma have been established for the \textit{2.s.s.} puf in \cite{sung05}. 
\end{proof}

We are now ready to state one important result for the joint BSA and PC operations described in Algorithm~\ref{Ch3_alg:gms2} in the following theorem.

\begin{theorem}
\label{Ch3_thm3.1}
Assume $\mathrm{J}(\mathrm{p})$ and $\mathrm{J}'(\mathrm{R}(\mathrm{p}))$ are arbitrary 2.s.s. pufs with respect to $\mathrm{p}$ and $\mathrm{R}(\mathrm{p})$, respectively. Then, Algorithm~\ref{Ch3_alg:gms2} converges to an equilibrium. 
\end{theorem}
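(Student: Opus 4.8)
The plan is to observe that although Algorithm~\ref{Ch3_alg:gms2} re-optimizes the base-station association at every iteration, the resulting power iteration is still generated by a single, time-invariant power update function, and then to show that this composite function is \emph{2.s.s.}, so that Lemma~\ref{Ch3_lemma3.1} applies essentially verbatim. Concretely, for each user $i$ set
\begin{equation}
\label{plan:Ro}
R_i^{\sf o}(\mathrm{p}) := \min_{k \in D_i} R_i(\mathrm{p},k), \qquad
\hat{J}_i(\mathrm{p}) := J'_i\!\big(R_i^{\sf o}(\mathrm{p})\big),
\end{equation}
and $\hat{\mathrm{J}} = (\hat J_1,\dots,\hat J_M)$. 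Because $R_i^{\sf o}(\mathrm{p}^{(n-1)})$ and the minimizer $b_i^{(n)}$ in (\ref{Ch3_Rmin}) depend only on the current power vector $\mathrm{p}^{(n-1)}$ and not on the past association $b_i^{(n-1)}$, the update rules (\ref{Ch3_Rmin})--(\ref{Ch3_p}) are exactly $p_i^{(n)} = \hat J_i(\mathrm{p}^{(n-1)})$; that is, Algorithm~\ref{Ch3_alg:gms2} is nothing but the power iteration driven by $\hat{\mathrm{J}}$. So it suffices to prove $\hat{\mathrm{J}}$ is a bounded 2.s.s. puf.

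Next I would verify two elementary scaling facts. First, $\mathrm{p}\mapsto R_i(\mathrm{p},k)$ is affine in $\mathrm{p}$ with nonnegative coefficients and a strictly positive constant term $\eta_k/(G h_{ki})$; hence for any $a>1$ and any $\mathrm{p}'$ with $\tfrac{1}{a}\mathrm{p}\le\mathrm{p}'\le a\mathrm{p}$ one gets the \emph{strict} bounds $\tfrac{1}{a} R_i(\mathrm{p},k) < R_i(\mathrm{p}',k) < a R_i(\mathrm{p},k)$, the strictness coming precisely from $\eta_k>0$ together with $a>1$. Second, a pointwise minimum over the finite set $D_i$ preserves such strict bounds: if $f_k(\mathrm{p}') < a f_k(\mathrm{p})$ for every $k$ then $\min_k f_k(\mathrm{p}') < a\min_k f_k(\mathrm{p})$, and if $f_k(\mathrm{p}') > \tfrac{1}{a} f_k(\mathrm{p}) \ge \tfrac{1}{a}\min_l f_l(\mathrm{p})$ for every $k$ then $\min_k f_k(\mathrm{p}') > \tfrac{1}{a}\min_l f_l(\mathrm{p})$. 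Combining the two, for $\tfrac{1}{a}\mathrm{p}\le\mathrm{p}'\le a\mathrm{p}$ we obtain $\tfrac{1}{a} R_i^{\sf o}(\mathrm{p}) < R_i^{\sf o}(\mathrm{p}') < a R_i^{\sf o}(\mathrm{p})$ for all $i$ (and $R_i^{\sf o}(\mathrm{p})>0$ always).

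With this in hand, the 2.s.s. property of $\hat{\mathrm{J}}$ follows from that of $\mathrm{J}'$: given $a>1$ and $\tfrac{1}{a}\mathrm{p}\le\mathrm{p}'\le a\mathrm{p}$, the previous step yields $\tfrac{1}{a}\mathrm{R}^{\sf o}(\mathrm{p})\le\mathrm{R}^{\sf o}(\mathrm{p}')\le a\mathrm{R}^{\sf o}(\mathrm{p})$ componentwise, and since $\mathrm{J}'$ is 2.s.s. with respect to its interference argument, $\tfrac{1}{a}\hat J_i(\mathrm{p}) = \tfrac{1}{a} J'_i(\mathrm{R}^{\sf o}(\mathrm{p})) < J'_i(\mathrm{R}^{\sf o}(\mathrm{p}')) = \hat J_i(\mathrm{p}') < a J'_i(\mathrm{R}^{\sf o}(\mathrm{p})) = a\hat J_i(\mathrm{p})$ for all $i$, which is exactly Definition~\ref{Ch3_def_2s_scal} for $\hat{\mathrm{J}}$. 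Moreover $\hat{\mathrm{J}}$ takes values in $[0,\bar p_i]$ because the power update functions under consideration do (the power-budget constraint), as required by Lemma~\ref{Ch3_lemma3.1}. That lemma, applied to $\hat{\mathrm{J}}$, then gives a unique fixed point $\mathrm{p}^\ast=\hat{\mathrm{J}}(\mathrm{p}^\ast)$ to which the iterates converge from any $\mathrm{p}^{(0)}$; at $\mathrm{p}^\ast$ the associations $b_i^{(n)}=\mathrm{argmin}_{k\in D_i}R_i(\mathrm{p}^\ast,k)$ also settle, so Algorithm~\ref{Ch3_alg:gms2} converges to an equilibrium.

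The delicate part is the bookkeeping of strict versus non-strict inequalities across the two compositions — extracting strictness from the affine effective-interference map (this is exactly where $\eta_k>0$ is used), checking that it survives the $\min$ over $D_i$, and then feeding the bounds into the 2.s.s. definition of $\mathrm{J}'$ without losing anything. The real conceptual step, however, is the first reduction: recognizing that the per-iteration re-association does not destroy the fixed-function structure of the iteration, so that the existing convergence machinery (Lemma~\ref{Ch3_lemma3.1}) can be invoked without modification.
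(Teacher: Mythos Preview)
Your proposal is correct and follows essentially the same approach as the paper: define the composite update $\hat J_i(\mathrm{p})=J'_i(R_i^{\sf o}(\mathrm{p}))$, show that $R_i^{\sf o}$ inherits the two-sided scaling bounds from the affine maps $R_i(\cdot,k)$ through the finite minimum, and then invoke the 2.s.s.\ property of $\mathrm{J}'$ together with Lemma~\ref{Ch3_lemma3.1}. The only cosmetic difference is that the paper establishes the (non-strict) bound $\tfrac{1}{a}R_i^{\sf o}(\mathrm{p})\le R_i^{\sf o}(\mathrm{p}')\le aR_i^{\sf o}(\mathrm{p})$ by contradiction---comparing the possibly different minimizers $b_i$ and $b_i'$---whereas you argue directly that the pointwise minimum preserves the strict per-$k$ inequalities; both are valid and the non-strict version already suffices since the strictness in Definition~\ref{Ch3_def_2s_scal} is supplied by $\mathrm{J}'$.
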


\begin{proof} The proof is given in Appendix \ref{Ch3_apdx.thm3.1}.
\end{proof}

To prove that Algorithm~\ref{Ch3_alg:gms2} converges, we first show that the puf $\mathrm{J}^{\sf o}(\mathrm{p})=[J_1^{\sf o}(\mathrm{p}),...,J_M^{\sf o}(\mathrm{p})]$ is a \textit{2.s.s.} function with respect to $\mathrm{p}$, where $J_i^{\sf o}(\mathrm{p})=J'_i(R_i^{\sf o}(\mathrm{p}))$, and $R_i^{\sf o}(\mathrm{p}) =\mathrm{min}_{\mathit{k} \in D_i} R_i \left( \mathrm{p},\mathit{k}\right), \forall i \in \mathcal{M}$. Then, the convergence can be established by applying the results in Lemma~\ref{Ch3_lemma3.1}.
The result in this theorem implies if pufs $\mathrm{J}(\mathrm{p})$ and $\mathrm{J}'(\mathrm{R}(\mathrm{p}))$ are \textit{2.s.s.}, then the BSA strategy described 
in (\ref{Ch3_bi}) and (\ref{Ch3_Rmin}) results in a composite 2.s.s. puf, which corresponds to the joint BSA and PC operation.
In other words, the proposed BSA scheme proposed in Algorithm~\ref{Ch3_alg:gms2} preserves the \textit{2.s.s.} property of the employed puf $\mathrm{J}(\mathrm{p})$.

\subsection{HPC Algorithm}
\label{Ch3_hpcscheme}
To complete the design presented in Algorithm~\ref{Ch3_alg:gms2}, we need to develop a distributed PC strategy, which is employed in (\ref{Ch3_p}).
In general, the performance of a PC algorithm depends on how we design the corresponding \textit{2.s.s.} puf $\mathrm{J}(\mathrm{p})$. 
We will propose a new puf, which is denoted by $\mathrm{I}^H(\mathrm{p})$, and the corresponding PC algorithm in the following. 

\subsubsection{Game-Theoretic Formulation}
\label{Ch3_sec:PC_game}
We develop the distributed PC algorithm by using the noncooperative game theory approach.
In particular, we define a PC game as follows.
\begin{itemize}
\item Players: This is the set of mobile users $\mathcal{M}$.
\item Strategies: Each user $i$ chooses transmit power in set $[0,\bar{p}_i]$.
\item Payoffs: User $i$ is interested in maximizing the following payoff function:
\begin{equation} 
\label{Ch3_payoff}
U_i(\mathrm{p}) \triangleq - \alpha_i ( p_i  - \xi_i R_i(\mathrm{p})^{\frac{x}{x-1}})^2 - (p_i-\widehat{\gamma}_i R_i(\mathrm{p}))^2
\end{equation}
where $\widehat{\gamma}_i$ denotes the target SINR for user $i$; $x$ is a special parameter whose desirable value will
be revealed in Theorem~\ref{Ch3_thm3.2}, and $\alpha_i$ and $\xi_i$ are nonnegative control
parameters, i.e., $\alpha_i, \xi_i \geq 0$, which will be adaptively adjusted to achieve our design objectives.
\end{itemize}

This game-theoretic formulation arises quite naturally in autonomous spectrum access scenarios, such
as heterogeneous wireless networks where mobile users tend to be selfish and are only interested in maximizing
their own benefits. Using this formulation, we will develop an iterative PC algorithm in which
each user maximizes its own payoff in each iteration given the chosen power levels from other users in the
previous iteration (i.e., each user plays the \textit{best response} strategy).
To devise such an algorithm,
each user $i$ chooses the power level, which is obtained by setting the first derivative of the underlying
user's payoff function to zero. 

In fact, by maximizing the payoff function given in (\ref{Ch3_payoff}) each user $i$ strikes to balance between
achieving the SINR target $\widehat{\gamma}_i$ and exploiting its potential favorable channel condition to increase
its SINR. While it is quite intuitive that maximizing $- (p_i-\widehat{\gamma}_i R_i(\mathrm{p}))^2$ enables user $i$ to reach its target 
SINR $\widehat{\gamma}_i$, the design intuition in optimizing the first term $-\alpha_i \left( p_i  - \xi_i R_i(\mathrm{p})^{\frac{x}{x-1}}\right)^2$
may not be very straightforward.
We state one result, which reveals the engineering intuition behind this in the following lemma.

\begin{lemma}
\label{Ch3_lemma3.2} 
Consider the game formulation described above with infinite power budget (i.e., $\bar{p}_i = \infty \forall i$). Then, best responses due to the following two payoff functions are the same:
\begin{align} 
U_i^{(1)}(\mathrm{p}) \triangleq \Gamma_i^x - \lambda_i p_i \label{Ch3_opc1} \\ 
U_i^{(2)}(\mathrm{p}) \triangleq - \left( p_i  - \xi_i R_i(\mathrm{p})^{\frac{x}{x-1}}\right)^2 \label{Ch3_opc2}
\end{align}
if $\xi_i=\left( \lambda_i/x\right)^{\frac{1}{x-1}}$ and $0<x<1$, where $\lambda_i$ represents the pricing coefficient of user $i$.
\end{lemma}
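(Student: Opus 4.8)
The plan is to verify Lemma~\ref{Ch3_lemma3.2} by directly computing the best response (i.e.\ the unconstrained maximizer) of each payoff function with respect to $p_i$, treating $R_i(\mathrm{p})$ as a constant from the perspective of user $i$ (since in the best-response step user $i$ optimizes against the fixed power levels of the other players, and $R_i(\mathrm{p})$ defined in (\ref{Ch3_Rp}) does not depend on $p_i$). Because the power budget is infinite, the maximizer is interior and can be found by setting $\partial U_i / \partial p_i = 0$.

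First I would handle $U_i^{(2)}$, which is the easy case: $U_i^{(2)}(\mathrm{p}) = -\left(p_i - \xi_i R_i(\mathrm{p})^{\frac{x}{x-1}}\right)^2$ is a downward parabola in $p_i$ whose unique maximizer is obviously
\[
p_i^{\star} = \xi_i R_i(\mathrm{p})^{\frac{x}{x-1}}.
\]
Next I would compute the best response for $U_i^{(1)}$. Using the SINR expression $\Gamma_i(\mathrm{p}) = p_i / R_i(\mathrm{p})$ from (\ref{Ch3_sinr}), we have $U_i^{(1)}(\mathrm{p}) = \left(p_i / R_i(\mathrm{p})\right)^x - \lambda_i p_i = R_i(\mathrm{p})^{-x} p_i^x - \lambda_i p_i$. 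Differentiating with respect to $p_i$ and setting to zero gives $x R_i(\mathrm{p})^{-x} p_i^{x-1} = \lambda_i$, hence $p_i^{x-1} = (\lambda_i/x) R_i(\mathrm{p})^{x}$, and therefore
\[
p_i^{\star\star} = \left(\frac{\lambda_i}{x}\right)^{\frac{1}{x-1}} R_i(\mathrm{p})^{\frac{x}{x-1}}.
\]
Comparing the two expressions, they coincide precisely when $\xi_i = (\lambda_i/x)^{\frac{1}{x-1}}$, which is the stated hypothesis. The role of the condition $0 < x < 1$ is twofold: it guarantees that the exponent $x-1$ is negative so that the relation $p_i^{x-1} = (\lambda_i/x) R_i(\mathrm{p})^x$ has a well-defined positive solution, and — more importantly — it makes $U_i^{(1)}$ concave in $p_i$ on $(0,\infty)$ (since $p_i^x$ is concave for $0<x<1$ and $-\lambda_i p_i$ is linear), so the stationary point is indeed the global maximizer rather than a minimizer or saddle; I would note this concavity check explicitly since without it the first-order condition alone would not identify the best response.

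The main obstacle here is essentially bookkeeping rather than anything deep: one must be careful about the sign of $x-1$ when taking roots (raising both sides to the power $1/(x-1)$ reverses nothing monotonically dangerous here since all quantities are positive, but the manipulation should be stated cleanly), and one must confirm that the interior stationary point is attained within the (infinite) feasible set $[0,\infty)$ so that no boundary analysis is needed — this is exactly why the lemma assumes $\bar{p}_i = \infty$. I would close by remarking that this lemma is the engineering justification for the first term in the payoff (\ref{Ch3_payoff}): its best-response effect is identical to that of a pricing-based opportunistic term $\Gamma_i^x - \lambda_i p_i$, so the full payoff (\ref{Ch3_payoff}) interpolates between SINR-target tracking and opportunistic (throughput-exploiting) behavior, which sets up the subsequent derivation of the HPC update rule and the verification that its puf is $2.s.s.$
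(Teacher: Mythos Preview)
Your proposal is correct and follows essentially the same approach as the paper: compute the stationary point of $U_i^{(1)}$ in $p_i$, verify concavity via $0<x<1$, and observe that the resulting maximizer $p_i = (\lambda_i/x)^{1/(x-1)} R_i(\mathrm{p})^{x/(x-1)}$ coincides with the obvious maximizer of the quadratic $U_i^{(2)}$ under the stated choice of $\xi_i$. The only cosmetic difference is that you treat $U_i^{(2)}$ first, whereas the paper starts with $U_i^{(1)}$ and explicitly writes out the second derivative to confirm concavity.
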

\begin{proof} The proof is given in Appendix~\ref{Ch3_apdx.lemma3.2}
\end{proof}

\begin{remark}
\label{Ch3_rk:HPC_OPC_TPC}
It has been shown that OPC algorithm can be achieved if users iteratively play 
their corresponding best-response strategies with payoff function $U_i^{(1)}(\mathrm{p})$ for $x=1/2$ \cite{sung05,sung06}. 
Moreover, if each user $i$ plays the best response strategies using $U_i(\mathrm{p})$ with $\alpha_i=0$ then we can obtain the well-known TPC algorithm. Therefore, our chosen payoff function in (\ref{Ch3_payoff}) can be used to design an HPC strategy that exploits the advantages of both OPC and TPC algorithms. 
\end{remark}

\subsubsection{Proposed HPC Algorithm}
We are now ready to develop an HPC algorithm corresponding to the payoff function in (\ref{Ch3_payoff}).
Specifically, we can derive the power update rule for the HPC algorithm according to the best response strategy
of the underlying payoff function. After some manipulations, we can obtain the following best response under the chosen payoff function (\ref{Ch3_payoff}):
\begin{equation} \label{Ch3_bpupd}
p_i = I_i\left( \mathrm{p}\right) \triangleq \dfrac{\alpha_i\xi_i R_i\left( \mathrm{p}\right)^{\frac{x}{x-1}}+ \hat{\gamma}_i R_i\left( \mathrm{p}\right)}{\alpha_i + 1}.
\end{equation}
Considering the maximum power constraints, the HPC algorithm employs the following iterative power update:
\begin{equation} \label{Ch3_hpcrule}
p_i^{\left(n+1\right)} = I_i^H \left( \mathrm{p}^{(n)}\right) = \mathrm{min}\left\lbrace \bar{p}_i, I_i\left( \mathrm{p}^{(n)}\right) \right\rbrace 
\end{equation}
where $n$ denotes the iteration index, and $I_i\left( \mathrm{p}\right)$ is given in (\ref{Ch3_bpupd}).
Here, parameters $\alpha_i$ can be used to control the desirable performance of the proposed HPC algorithm.
Specifically, by setting $\alpha_i=0$, user $i$ actually employs the standard Foschini-Milijanic TPC algorithm to 
achieve its target SINR $\hat{\gamma}_i$, whereas if $\alpha_i \rightarrow \infty$, user $i$ attempts to achieve higher SINR (if it is
in favorable condition). 
It can be observed that each user $i$ only needs to calculate or estimate the effective interference $R_i\left( \mathrm{p}\right)$ to update its power in the proposed HPC algorithm.  

\subsubsection{Convergence of HPC Algorithm}
\label{Ch3_convergence_HPC}
Here, we establish the convergence condition for the proposed HPC algorithm
by using the \textit{2.s.s function} approach given in Definition~\ref{Ch3_def_2s_scal}.
We state a sufficient condition under which puf $\mathrm{I}^H\left( \mathrm{p}\right)$ in (\ref{Ch3_hpcrule})
is \textit{2.s.s.} as in Theorem~\ref{Ch3_thm3.2}.

\begin{theorem}
\label{Ch3_thm3.2}
If the parameter $x$ of function $I_i\left( \mathrm{p}\right)$ given in (\ref{Ch3_bpupd}) satisfies $0 < x \leq 1/2$, then
puf $I_i^H \left( \mathrm{p}\right)$ given in (\ref{Ch3_hpcrule}) is \textit{2.s.s.}
In addition, the proposed HPC algorithm in (\ref{Ch3_hpcrule}) converges to the NE of the underlying PC game. 
\end{theorem}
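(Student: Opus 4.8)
The plan is to establish two things in sequence: first, that the power update function $I_i^H(\mathrm{p})$ defined in (\ref{Ch3_hpcrule}) is a two-sided scalable function with respect to $\mathrm{p}$ whenever $0 < x \leq 1/2$; and second, that the fixed point to which the resulting iteration converges (by Lemma~\ref{Ch3_lemma3.1}) coincides with the Nash equilibrium of the power control game defined in Section~\ref{Ch3_sec:PC_game}. The first part is the real work; the second is essentially a bookkeeping consequence of how the update rule was derived as a best response.

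For the 2.s.s.\ property, I would work from Definition~\ref{Ch3_def_2s_scal}. Fix $a > 1$ and power vectors $\mathrm{p}, \mathrm{p}'$ with $\tfrac{1}{a}\mathrm{p} \leq \mathrm{p}' \leq a\mathrm{p}$. The key observation is that the effective interference $R_i(\mathrm{p})$ given in (\ref{Ch3_Rp}) is affine and increasing in the powers of the other users, so it inherits a scaling sandwich: $\tfrac{1}{a} R_i(\mathrm{p}) < R_i(\mathrm{p}') < a R_i(\mathrm{p})$ (strict because of the positive noise term $\eta_{b_i}$). Next I would analyze the unconstrained best response $I_i(\mathrm{p})$ in (\ref{Ch3_bpupd}), which is a convex combination (weights $\tfrac{\alpha_i}{\alpha_i+1}$ and $\tfrac{1}{\alpha_i+1}$) of two monomials in $R_i(\mathrm{p})$: a term $\propto R_i(\mathrm{p})^{x/(x-1)}$ and a term $\propto R_i(\mathrm{p})$. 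The exponent on the first term is $x/(x-1)$, which for $0 < x \leq 1/2$ lies in the interval $[-1, 0)$. The crux of the argument is the elementary fact that for a positive scalar $t$ and an exponent $\theta \in [-1,1]$, the map $t \mapsto t^\theta$ satisfies $a^{-1} t^\theta < (at)^\theta$-type bounds, i.e.\ scaling $t$ by a factor in $[a^{-1}, a]$ scales $t^\theta$ by a factor in $[a^{-|\theta|}, a^{|\theta|}] \subseteq [a^{-1}, a]$. Applying this with $\theta = 1$ to the second term and $\theta = x/(x-1) \in [-1,0)$ to the first term, and then combining via the convex combination (which preserves such two-sided bounds), yields $\tfrac{1}{a} I_i(\mathrm{p}) < I_i(\mathrm{p}') < a I_i(\mathrm{p})$. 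Finally, the minimum with the constant $\bar{p}_i$ in (\ref{Ch3_hpcrule}) preserves the 2.s.s.\ property: one checks case by case that $\mathrm{min}\{\bar p_i, I_i(\mathrm{p}')\}$ stays strictly between $\tfrac{1}{a}\mathrm{min}\{\bar p_i, I_i(\mathrm{p})\}$ and $a\,\mathrm{min}\{\bar p_i, I_i(\mathrm{p})\}$, using $a > 1$ so that $\tfrac{1}{a}\bar p_i < \bar p_i < a \bar p_i$. The boundedness $0 \le I_i^H(\mathrm{p}) \le \bar p_i$ is immediate from the $\min$ operation, so Lemma~\ref{Ch3_lemma3.1} applies and the HPC iteration converges to a unique fixed point $\mathrm{p}^*$.

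The hardest part is pinning down exactly why $0 < x \le 1/2$ is the right range and not something larger: the exponent $x/(x-1)$ must have absolute value at most $1$ for the scaling-exponent lemma to give a factor in $[a^{-1}, a]$ rather than a larger interval, and $|x/(x-1)| \le 1$ together with $x < 1$ forces $x \le 1/2$ (while $x > 0$ is needed to keep the exponent negative, i.e.\ to retain the opportunistic flavor of allocating more power when $R_i$ is small). I would make this threshold computation explicit, since it is the one place where the precise hypothesis is used.

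For the Nash equilibrium claim, I would argue as follows. By construction in Section~\ref{Ch3_sec:PC_game}, the update $p_i = I_i(\mathrm{p})$ is obtained by setting $\partial U_i / \partial p_i = 0$, and the payoff $U_i(\mathrm{p})$ in (\ref{Ch3_payoff}) is strictly concave in $p_i$ (it is a negative sum of squares in $p_i$ with the cross-user dependence entering only through $R_i(\mathrm{p})$, which does not depend on $p_i$); hence $I_i(\mathrm{p})$ truncated to $[0,\bar p_i]$, namely $I_i^H(\mathrm{p})$, is exactly the best response of user $i$ to the others' powers. Therefore a power vector $\mathrm{p}^*$ is a fixed point of $\mathrm{I}^H$ if and only if every user is simultaneously best-responding, which is precisely the definition of a Nash equilibrium of the game. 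Since the iteration converges to the unique fixed point $\mathrm{p}^*$ by the first part, it converges to the NE, completing the proof.
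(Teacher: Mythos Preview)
Your proof is correct and shares the same core idea as the paper's: the key computation is that the exponent $x/(x-1)$ has absolute value at most $1$ precisely when $0 < x \le 1/2$, so scaling $R_i$ by a factor in $[a^{-1},a]$ scales each monomial in $I_i(\mathrm p)$ by a factor in the same interval, and the convex combination then inherits the 2.s.s.\ bound.

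Where you differ from the paper is in organization. The paper first observes that $I'_i(R_i)$ is convex in $R_i$, locates the (at most two) points $R_i^l, R_i^u$ where $I'_i(R_i)=\bar p_i$, and then runs a four-way case analysis on whether $R_i(\mathrm p)$ and $R_i(\mathrm p')$ lie inside or outside $[R_i^l,R_i^u]$; the exponent bound appears only in the ``both inside'' case, and strictness is argued by noting that $r$ and $1/r$ cannot simultaneously equal $a$. Your decomposition is more modular: you first establish that the unconstrained $I_i$ is 2.s.s.\ (via the scaling-exponent lemma applied term by term), and then separately verify that taking $\min\{\bar p_i,\cdot\}$ preserves the 2.s.s.\ property by a direct four-case check on which of $I_i(\mathrm p), I_i(\mathrm p')$ exceeds $\bar p_i$. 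You also extract strictness earlier, from the positive noise term $\eta_{b_i}$ in $R_i(\mathrm p)$, rather than from the two-term structure at the end. Both routes are valid; yours avoids the convexity observation and the intersection-point geometry, at the cost of a slightly less explicit picture of where the truncation bites.
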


\begin{proof}
The convergence of the proposed HPC algorithm immediately follows from the results of Lemma~\ref{Ch3_lemma3.1} if we can prove that its puf $\mathrm{I}^H\left( \mathrm{p}\right)=\left[ I_1^H(\mathrm{p}),...,I_M^H(\mathrm{p})\right]$ 
 in (\ref{Ch3_hpcrule}) is \textit{2.s.s.} with respect to $\mathrm{p}$. 
In addition, the resulting equilibrium (i.e., the power vector at convergence) is the NE of the PC game defined
 in Section \ref{Ch3_sec:PC_game} since users play the best response strategy.
We will prove that $\mathrm{I}^H\left( \mathrm{p}\right)$ is \textit{2.s.s.} in Appendix~\ref{Ch3_apdx.thm3.2}.
\end{proof}

The proposed HPC scheme will be employed in Algorithms~\ref{Ch3_alg:gms2} and \ref{Ch3_alg:gms3} presented in this paper. Note, however, that this proposed HPC scheme
can be used as a standalone PC algorithm in general.

\subsection{Alternative BSA and PC Designs}
We describe two alternative designs for the BSA and PC operations in the following.

\subsubsection{Decomposed BSA and PC Design}
We can design the BSA and PC algorithms separately and implement them over different time scales.
In particular, a BSA solution
can be obtained and fixed based on the average long-term channel state information.
Then, PC is applied to the
 BSA solution to achieve the design objectives.
The advantage of this decomposed design is that BSA is only updated as the long-term channel state information significantly changes. Therefore, this design requires infrequent updates of the BSA solution. This would lead to reduced computation and signaling complexity.

\begin{algorithm}[!t]
\caption{\textsc{Load-Aware BSA Algorithm}}
\label{Ch3_alg:gms1}
\begin{algorithmic}[1]
\STATE Each BS estimates the average channel power gains from nearby users to itself.
\STATE By assuming all users transmit with their maximum powers, each BS estimates/calculates SINRs achieved by nearby users;
it transmits these estimated SINRs to them.
\STATE Upon receiving estimated SINRs from all potential BSs, each user will associate with the BS achieving
the largest estimated SINR. 
\end{algorithmic}
\end{algorithm}

We propose a load-aware BSA algorithm for this design in Algorithm~\ref{Ch3_alg:gms1}. In this algorithm, each user is associated with the BS that results
in the highest SINR by using the average channel power gains and assuming maximum transmit powers. To implement this algorithm, each BS needs to estimate
the SINRs for nearby users and broadcasts these estimated SINRs to them, and based on this, each user chooses the associated BS.  
Given the BSA solution, users run the HPC strategy to settle the power levels. This decomposed design would be more applicable to the fast-fading environment where the joint dynamic BSA and power control algorithm may not work efficiently.

\subsubsection{Joint BSA and PC Design}
\label{Ch3_sec:joint_BAS_PC}
In general, the network throughput can be improved by performing joint BSA and PC algorithm presented in Algorithm~\ref{Ch3_alg:gms2}.
In particular, the BSA solution is dynamically updated jointly with PC, which employs the proposed HPC scheme presented
Section~\ref{Ch3_hpcscheme}. Here, each user $i$ chooses one BS that achieves the minimum effective interference $R_i(\mathrm{p},k)$
and updates its power level accordingly. When a particular user is in the common neighborhood of several BSs, the fluctuation of
its channel power gains toward these BSs may result in varying BSA decisions even for fixed users. This means
that each user may transmit data to different BSs  over a short time interval. In addition,
 to facilitate the dynamic BSA, each user needs to estimate the effective interference frequently. This requires 
more signaling overhead in the control channel compared with the decomposed BSA and PC design. 

\section{Hybrid Power Control Adaptation Algorithm}
\label{Ch3_sec:Time_Scal_Sepa}

The equilibrium achieved by the proposed HPC scheme at convergence for either decomposed or joint
BSA and PC design depends on its parameters, namely $\alpha_i$ and $\xi_i$ for $i \in \mathcal{M}$.
Here, we develop decentralized mechanisms to adjust these parameters so that target SINRs of all users can be achieved whenever possible
while enhancing the system throughput. The proposed adaptive mechanisms comprise two updating operations in two different time scales, i.e., running HPC algorithm (and the appropriate BSA strategy) to achieve the NE point in the small time scale and updating $\alpha_i$ and $\xi_i$ for all users to achieve desirable NE in the large time scale. Moreover, we discuss the application of the proposed framework to the two-tier macrocell-femtocell networks. Let $\Delta= \left\lbrace \alpha_i | i \in \mathcal{M}\right\rbrace $ and $\Xi= \left\lbrace \xi_i | i \in \mathcal{M}\right\rbrace $ be the set of $\alpha$ and $\xi$ parameters of all users in the puf of the HPC scheme, respectively.

\subsection{Two Time-Scale Adaptive Algorithm}
\label{Ch3_sec:Gnrl_Time_Scal_Sepa}

\begin{figure}[!t]
        \centering
\includegraphics[width=0.9 \textwidth]{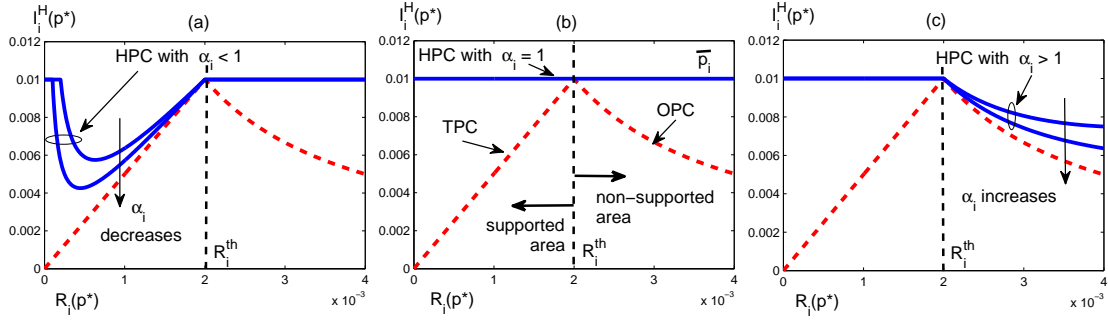}
\caption{Relationship between HPC puf $I_i^{'H}(R_i(\mathrm{p}))$ and  $R_i(\mathrm{p})$ for different values of $\alpha_i$ ($\bar{p}_i=0.01\:W$), where $R_i(\mathrm{p}^{\ast}) \leq R_i^{th}$ for supported users and $R_i(\mathrm{p}^{\ast}) > R_i^{th}$ for nonsupported users .}
\label{Ch3_fig:HPCalpha}
\end{figure}

As discussed in Remark \ref{Ch3_rk:HPC_OPC_TPC}, the TPC scheme is a special case of the proposed HPC scheme with $\alpha_i=0$ for $i \in \mathcal{M}$.
It is known that the TPC scheme is able to support all users' target SINRs expressed in (\ref{Ch3_equ:SINR_cond}) as long as the system is feasible [i.e., the
SINR requirements in (\ref{Ch3_equ:SINR_cond}) can be supported] \cite{rasti11a}. However, the TPC scheme fails to achieve high system throughput when the system is feasible and lowly-loaded. In addition,
the TPC scheme may not be able to support the largest possible number of users when the system is infeasible. Our objective is to develop
an adaptive strategy for the proposed HPC algorithm to overcome these limitations of the TPC scheme.

Toward this end, if user $i$ is a voice user who is only interested in maintaining its target SINR $\hat{\gamma}_i$, then
we can simply set $\alpha_i=0$ in (\ref{Ch3_payoff}).
For each data user $i$, we will fix $\xi_i$ while adaptively updating power and $\alpha_i$ in two different time scales to achieve the 
design objectives. Specifically, each data user $i$ will run the HPC and BSA algorithm for a given  $\alpha_i$ until convergence
(i.e., in the small time scale); then it updates  $\alpha_i$ accordingly (i.e., in the large time scale).
To set the value for $\xi_i$, suppose that data user $i$ would need to use its maximum power $\bar{p}_i$ to reach the target
SINR $\hat{\gamma}_i$. Then, the value of $\xi_i$ can be found using the result in (\ref{Ch3_power_opc}), shown in Appendix~\ref{Ch3_apdx.lemma3.2}, as 
\begin{equation}
\label{Ch3_xi2}
\xi_i=\left(\bar{p}_i/\hat{\gamma}_i^x \right) ^{\frac{1}{1-x}}
\end{equation}
where we have substituted $\Gamma_i = \hat{\gamma}_i$ to (\ref{Ch3_power_opc}). 
We now describe how to update $\alpha_i$ in $\Delta$.
In the HPC scheme, puf $I_i^{\prime H}(R_i(\mathrm{p}))=I_i^H(\mathrm{p})$ depends on the effective interference $R_i(\mathrm{p})$ and
the value of $\alpha_i$. Let $\mathrm{p}^*$ denote the power vector at convergence, which is obtained by running the proposed HPC algorithm for a particular vector $\alpha$. 
We illustrate the relationship between $I_i^{\prime H}(R_i(\mathrm{p}^*))$ and $R_i(\mathrm{p}^*)$ in Fig.~\ref{Ch3_fig:HPCalpha} 
where $R^{\sf th}_i = \bar{p}_i/\hat{\gamma}_i$ is a threshold effective interference of user $i$. We reveal the relationship
between $\Gamma_i(\mathrm{p}^*)$, $R_i(\mathrm{p}^*)$, and $\alpha_i$ in the following lemma.

\begin{lemma}
\label{Ch3_lemma3.3} 
Assume that $\xi_i=\left(\bar{p}_i/\hat{\gamma}_i^x \right) ^{\frac{1}{1-x}}$ and $0< x \leq 1/2$. Then, we have the following.
\begin{enumerate}
\item If $R_i(\mathrm{p}^*) > R_i^{\sf th}$, then $\Gamma_i(\mathrm{p}^*) < \hat{\gamma}_i$, $\forall \: \alpha_i \geq 0$.
\item If $R_i(\mathrm{p}^*) \leq R_i^{\sf th}$, then $\Gamma_i(\mathrm{p}^*) \geq \hat{\gamma}_i$, $\forall \: \alpha_i \geq 0$.
\item If $R_i(\mathrm{p}^*) < R_i^{\sf th}$, then we have:
\begin{itemize}
\item $\Gamma_i(\mathrm{p}^*) = \hat{\gamma}_i$ iff $\alpha_i = 0$.
\item $I_i(\mathrm{p}^*)$ decreases if $\alpha_i$ decreases.
\end{itemize} 
\end{enumerate}
\end{lemma}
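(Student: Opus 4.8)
\textbf{Proof proposal for Lemma~\ref{Ch3_lemma3.3}.}

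The plan is to analyze the HPC power update function $I_i^H(\mathrm{p}) = \min\{\bar p_i, I_i(\mathrm{p})\}$ at the convergence point $\mathrm{p}^*$, exploiting the fact that $\mathrm{p}^*$ is a fixed point, i.e.\ $p_i^* = I_i^H(\mathrm{p}^*)$ for every $i$. The key observation is that the three claims concern the sign of $\Gamma_i(\mathrm{p}^*) - \hat\gamma_i$, and since $\Gamma_i(\mathrm{p}^*) = p_i^*/R_i(\mathrm{p}^*)$, this sign is governed entirely by comparing $p_i^*$ with $\hat\gamma_i R_i(\mathrm{p}^*)$. So everything reduces to understanding when the unconstrained best response $I_i(\mathrm{p}^*)$ exceeds, equals, or is clipped by $\bar p_i$, and how $I_i$ compares with the pure TPC update $\hat\gamma_i R_i(\mathrm{p}^*)$.

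First I would substitute $\xi_i = (\bar p_i/\hat\gamma_i^x)^{1/(1-x)}$ into the definition
\[
I_i(\mathrm{p}) = \frac{\alpha_i \xi_i R_i(\mathrm{p})^{\frac{x}{x-1}} + \hat\gamma_i R_i(\mathrm{p})}{\alpha_i + 1},
\]
and rewrite the first term using the threshold $R_i^{\sf th} = \bar p_i/\hat\gamma_i$. A short computation shows $\xi_i (R_i^{\sf th})^{\frac{x}{x-1}} = \bar p_i = \hat\gamma_i R_i^{\sf th}$, so at $R_i = R_i^{\sf th}$ both the ``opportunistic'' term $\xi_i R_i^{\frac{x}{x-1}}$ and the ``tracking'' term $\hat\gamma_i R_i$ coincide with $\bar p_i$, hence $I_i = \bar p_i$ there regardless of $\alpha_i$. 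Since $\tfrac{x}{x-1} < 0$ for $0 < x \le 1/2$, the function $R \mapsto \xi_i R^{\frac{x}{x-1}}$ is strictly decreasing while $R \mapsto \hat\gamma_i R$ is strictly increasing; therefore for $R_i < R_i^{\sf th}$ we have $\xi_i R_i^{\frac{x}{x-1}} > \bar p_i > \hat\gamma_i R_i$, and for $R_i > R_i^{\sf th}$ the reverse. This gives, for $R_i < R_i^{\sf th}$, the chain $\hat\gamma_i R_i < I_i(\mathrm{p}) < \xi_i R_i^{\frac{x}{x-1}}$ (a convex-combination sandwich), and for $R_i > R_i^{\sf th}$ the chain $\xi_i R_i^{\frac{x}{x-1}} < I_i(\mathrm{p}) < \hat\gamma_i R_i$; at $R_i = R_i^{\sf th}$ equality throughout. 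I would also note $I_i(\mathrm{p}) \le \bar p_i$ whenever $R_i(\mathrm{p}) \ge R_i^{\sf th}$, because then $\hat\gamma_i R_i \ge \bar p_i$... wait — that direction needs care; actually for $R_i \ge R_i^{\sf th}$ one has $\xi_i R_i^{\frac{x}{x-1}} \le \bar p_i$ and $I_i$ is a convex combination of that with $\hat\gamma_i R_i$, so $I_i$ can exceed $\bar p_i$. The clean monotonicity fact to establish instead is: $I_i(\mathrm{p}) \ge \bar p_i \iff R_i(\mathrm{p}) \le R_i^{\sf th}$, which follows because $I_i - \bar p_i$ is, up to the positive factor $1/(\alpha_i+1)$, equal to $\alpha_i(\xi_i R_i^{\frac{x}{x-1}} - \bar p_i) + (\hat\gamma_i R_i - \bar p_i)$, and for $R_i < R_i^{\sf th}$ the first parenthesis is positive while... the second is negative. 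This is the delicate point and I would handle it by checking that $I_i$ itself is a monotone decreasing function of $R_i$ on the relevant range (differentiate: $\tfrac{dI_i}{dR_i} \propto \alpha_i \xi_i \tfrac{x}{x-1} R_i^{\frac{x}{x-1}-1} + \hat\gamma_i$, whose sign over the full range of $R_i$ is exactly the subtlety, and this is where the assumption $0 < x \le 1/2$ versus merely $x<1$ is used — it is the same computation that makes the puf 2.s.s.\ in Theorem~\ref{Ch3_thm3.2}).

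With these monotonicity facts in hand the three parts follow quickly. For part~1, $R_i(\mathrm{p}^*) > R_i^{\sf th}$ gives $I_i(\mathrm{p}^*) < \hat\gamma_i R_i(\mathrm{p}^*)$, and since $p_i^* = \min\{\bar p_i, I_i(\mathrm{p}^*)\} \le I_i(\mathrm{p}^*) < \hat\gamma_i R_i(\mathrm{p}^*)$ we get $\Gamma_i(\mathrm{p}^*) = p_i^*/R_i(\mathrm{p}^*) < \hat\gamma_i$, for any $\alpha_i \ge 0$. For part~2, $R_i(\mathrm{p}^*) \le R_i^{\sf th}$ gives $I_i(\mathrm{p}^*) \ge \bar p_i \ge \hat\gamma_i R_i(\mathrm{p}^*)$; then $p_i^* = \min\{\bar p_i, I_i(\mathrm{p}^*)\} = \bar p_i$ when $R_i(\mathrm{p}^*) < R_i^{\sf th}$ (strict), or $p_i^* = \bar p_i = I_i(\mathrm{p}^*)$ when $R_i(\mathrm{p}^*) = R_i^{\sf th}$; in either case $p_i^* \ge \hat\gamma_i R_i(\mathrm{p}^*)$, so $\Gamma_i(\mathrm{p}^*) \ge \hat\gamma_i$. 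For part~3, $R_i(\mathrm{p}^*) < R_i^{\sf th}$ forces (by the clipping argument just given) $p_i^* = \bar p_i$, so $\Gamma_i(\mathrm{p}^*) = \bar p_i/R_i(\mathrm{p}^*)$; the equality $\Gamma_i(\mathrm{p}^*) = \hat\gamma_i$ would require $R_i(\mathrm{p}^*) = R_i^{\sf th}$, contradicting strictness unless we are in the TPC regime — here one checks directly that $I_i(\mathrm{p}^*) = \hat\gamma_i R_i(\mathrm{p}^*) < \bar p_i$ only when $\alpha_i = 0$ (because for $\alpha_i > 0$ and $R_i < R_i^{\sf th}$ we showed $I_i > \hat\gamma_i R_i$ strictly), giving the ``iff''. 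Finally, ``$I_i(\mathrm{p}^*)$ decreases if $\alpha_i$ decreases'' follows by differentiating $I_i$ with respect to $\alpha_i$: $\tfrac{\partial I_i}{\partial \alpha_i} = \tfrac{\xi_i R_i^{\frac{x}{x-1}} - \hat\gamma_i R_i}{(\alpha_i+1)^2} > 0$ precisely because $R_i < R_i^{\sf th}$. The main obstacle, as flagged, is nailing down the global monotonicity of $I_i$ in $R_i$ (equivalently the clipping characterization $I_i \ge \bar p_i \iff R_i \le R_i^{\sf th}$); once that is pinned using $0<x\le 1/2$, the rest is bookkeeping with the fixed-point relation.
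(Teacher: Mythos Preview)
Your overall strategy (use the fixed-point relation $p_i^*=\min\{\bar p_i,I_i(\mathrm p^*)\}$ and compare $p_i^*$ with $\hat\gamma_i R_i(\mathrm p^*)$) is the right one and matches the paper, but the specific step you flag as ``the main obstacle'' is not just delicate --- it is false, and pursuing it will not close the argument. The biconditional $I_i(\mathrm p^*)\ge\bar p_i \iff R_i(\mathrm p^*)\le R_i^{\sf th}$ does \emph{not} hold: for $R_i<R_i^{\sf th}$ and $\alpha_i=0$ (or any sufficiently small $\alpha_i>0$) one has $I_i=\hat\gamma_i R_i<\bar p_i$. Likewise your fallback, global monotonicity of $I_i$ in $R_i$, fails: $I_i$ is a convex combination of a strictly decreasing term and a strictly increasing term, hence is convex (first decreasing, then increasing) in $R_i$ --- this is exactly what the paper exploits in its proof of Theorem~\ref{Ch3_thm3.2}. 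So Part~2 as written (``$R_i\le R_i^{\sf th}$ gives $I_i\ge\bar p_i$'') and the opening line of your Part~3 (``$R_i<R_i^{\sf th}$ forces $p_i^*=\bar p_i$'') are both incorrect.

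The fix is easy and does not need the biconditional at all. You already observed the convex-combination sandwich: for $R_i\le R_i^{\sf th}$ one has $I_i\ge\hat\gamma_i R_i$ (with strict inequality when $\alpha_i>0$ and $R_i<R_i^{\sf th}$). Now split on the two values of the $\min$: if $p_i^*=\bar p_i$ then $\Gamma_i=\bar p_i/R_i=\hat\gamma_i R_i^{\sf th}/R_i\ge\hat\gamma_i$; if $p_i^*=I_i$ then $\Gamma_i=I_i/R_i\ge\hat\gamma_i$ by the sandwich. This gives Part~2, and the strict version of the sandwich gives the ``only if'' direction of Part~3. The paper packages this same case split into the single formula $\Gamma_i=\hat\gamma_i\cdot\frac{\alpha_i(R_i^{\sf th}/R_i)^{1/(1-x)}+1}{\alpha_i+1}$ in the unclipped case, from which everything is read off directly. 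Your $\partial I_i/\partial\alpha_i$ computation for the last bullet of Part~3 is fine and coincides with the paper's.
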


\begin{proof} The proof is given in Appendix~\ref{Ch3_apdx.lemma3.3}. 
\end{proof}

It is shown in Fig. \ref{Ch3_fig:HPCalpha} that if $\alpha_i<1$, HPC curves become closer to the TPC curve as $\alpha_i$ decreases,
whereas if $\alpha_i>1$, HPC curves become closer to the OPC curve as $\alpha_i$ increases.
This is because $I_i(\mathrm{p})$ in the puf of the proposed HPC scheme is a weighted sum of those in the TPC and OPC schemes.
Recall that our design objectives are to maintain the SINR requirements for all users expressed in (\ref{Ch3_equ:SINR_cond}) (whenever possible) while
enhancing the system throughput. Let user $i$ be a supported user (non-supported user) if its SINR greater (less) than its target SINR,
which occurs as its effective interference $R_i(\mathrm{p}^*)$ is less (greater) than the threshold $R_i^{\sf th}$, respectively. 
Note that a supported user $i$ will have its SINR greater than the target SINR if $\alpha_i >0$. We refer to such a supported user as
a potential user in the following. In fact, any potential user $i$ can reduce its transmit power to enhance the SINRs of other users (since this reduces
the effective interference experienced by other users) as being implied by the results in Lemma~\ref{Ch3_lemma3.3}. 
Therefore, by adjusting $\alpha_i$, each user $i$ can vary its SINR and assist other users in improving their SINRs.

\begin{algorithm}[!t]
\caption{\textsc{HPC Adaptation Algorithm}}
\label{Ch3_alg:gms3}
\begin{algorithmic}[1]
\STATE Initialization:
 \begin{itemize}
\item Set $\mathrm{p}^{(0)}=0$, i.e., $p_i^{(0)}=0, \: \forall i \in \mathcal{M}$.
\item Set $\Delta^{(0)}$ as $\alpha_i^{(0)}=0$ for voice user and $\alpha_i^{(0)}=\alpha_0$ ($\alpha_0 \gg 1$) for data user.
\item Set $\overline{N}^{\ast}=|\overline{\mathcal{M}}^{(0)}|$ and $\Delta^{\ast}=\Delta^{(0)}$.
\end{itemize} 
\STATE Iteration $l$: 
\begin{itemize}
\item Run HPC algorithm until convergence with $\Delta^{(l)}$.
\item If $|\overline{\mathcal{M}}^{(l)}|>\overline{N}^{\ast}$, set $\overline{N}^{\ast}=|\overline{\mathcal{M}}^{(l)}|$ and $\Delta^{\ast}=\Delta^{(l)}$.
\item If $\underline{\mathcal{M}}^{(l)} = \varnothing$ or $\overline{\mathcal{M}}^{(l)} = \varnothing$, then go to step 4.
\item If $\underline{\mathcal{M}}^{(l)} \neq \varnothing$ and $\overline{\mathcal{M}}^{(l)} \neq \varnothing$, then run the \textit{``updating process''} as follows:\\[-0.15cm]
\hrulefill \\[-0.1cm]
\textit{$\%$ Start of updating process $\%$}
\begin{description}
\item[a:] For user $i \in \underline{\mathcal{M}}^{(l)}$, set $\alpha_i^{(l+1)}=\alpha_i^{(l)}$.
\item[b:] For user $i \in \overline{\mathcal{M}}^{(l)}$, set $\alpha_i^{(l+1)}$ so that 

i) $\alpha_i^{(l)} > \alpha_i^{(l+1)} \geq 0$ if $\alpha_i^{(l)}>0$.

ii) $\alpha_i^{(l+1)} = \alpha_i^{(l)}$ if $\alpha_i^{(l)}=0$.
\end{description}
\textit{$\%$ End of updating process $\%$}\\[-0.25cm]
\hrulefill
\end{itemize} 
\STATE Increase $l$ and go back to step 2 until there is no update request for $\Delta^{(l)}$. 
\STATE Set $\Delta := \Delta^{\ast}$ and run the HPC algorithm until convergence.
\end{algorithmic}
\end{algorithm}

We exploit this fact to develop the HPC adaptation algorithm, which is described in Algorithm~\ref{Ch3_alg:gms3}.
Let $\overline{\mathcal{M}}^{(l)}$ and $\underline{\mathcal{M}}^{(l)}$ be the sets of supported and nonsupported users in iteration $l$, respectively.
We use $\overline{N}^{\ast}$ to keep the number of supported users during the course of the algorithm.
In each iteration, 
each user $i$ will run the proposed HPC algorithm and slowly update $\alpha_i$ based on the achieved equilibrium. 
All data users $i$ initially set $\alpha_i$ to be a sufficiently large value so that we reach the first equilibrium 
that favors strong users.
For each nonsupported user $i$, we maintain its parameter $\alpha_i$ to keep its power updating process stable. 
In addition, user $i$ can save its parameter $\alpha_i$ at the instant when the number of nonsupported users decreases and reloads this value later as the global update process terminates (in step 4).
This helps to prevent potential users from reducing their transmit powers unnecessarily. 
Discussion about the general \textit{``updating process''} is given in Algorithm~\ref{Ch3_alg:gms3} whose specific design will be 
presented in Algorithm~\ref{Ch3_alg:gms4}. In fact, detailed design of the \textit{``updating process''} can be done to meet specific design objectives. However, if 
the \textit{``updating process''} is designed in such a way that all parameters $\alpha_i$ of supported users tend to zero
if all nonsupported users cannot be saved, then the proposed HPC adaptation algorithm can achieve the following desirable performance. 

\begin{theorem}
\label{Ch3_thm3.3}
Let $\overline{N}_{\sf HPC}$ and $\overline{N}_{\sf TPC}$ be the numbers of supported users due to the proposed HPC adaptation algorithm (i.e., Algorithm~\ref{Ch3_alg:gms3}) and the TPC algorithm, respectively. Then, we have the following.
\begin{enumerate}
\item $\overline{N}_{\sf HPC} \geq \overline{N}_{\sf TPC}$.
\item If the network is feasible (i.e., all SINR requirements can be fulfilled by the TPC algorithm), then all users achieve their target SINRs by 
using HPC adaptation algorithm, and there exist feasible users who achieve SINRs higher than the target values under the HPC adaptation algorithm.
\end{enumerate} 
\end{theorem}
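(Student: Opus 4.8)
The plan is to establish claim~1 first and then derive claim~2 almost immediately from it together with the classical feasibility property of the Foschini--Miljanic TPC scheme. Two structural observations drive everything. First, the pure TPC scheme is exactly the HPC scheme of Section~\ref{Ch3_hpcscheme} with $\alpha_i=0$ for every user (Remark~\ref{Ch3_rk:HPC_OPC_TPC}). Second, the ``updating process'' inside Algorithm~\ref{Ch3_alg:gms3} only ever \emph{decreases} the parameter $\alpha_i$ of a currently supported user and never alters $\alpha_i$ of a nonsupported one; hence along the large-time-scale iterations $l=0,1,\dots$ each $\alpha_i$ is nonincreasing, and by the hypothesis on the updating process all $\alpha_i$ of supported users tend to $0$ whenever the remaining nonsupported users cannot be saved.

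For claim~1 the core is a \textbf{monotone-improvement lemma}: along the sequence $\Delta^{(0)},\Delta^{(1)},\dots$ the equilibrium effective-interference vector $\mathrm{R}(\mathrm{p}^{\ast(l)})$ is coordinatewise nonincreasing, hence the supported set is nondecreasing, $\overline{\mathcal{M}}^{(l)}\subseteq\overline{\mathcal{M}}^{(l+1)}$. Its ingredients are: (i) Lemma~\ref{Ch3_lemma3.3}, which shows that decreasing $\alpha_i$ of a supported user does not increase its transmit power $I_i(\mathrm{p}^{\ast})$ (and strictly decreases it if that user is strictly below its threshold) while keeping it supported, and that a user with $R_i(\mathrm{p}^{\ast})>R_i^{\sf th}$ stays below target for every $\alpha_i\ge 0$; (ii) Theorem~\ref{Ch3_thm3.2} together with Lemma~\ref{Ch3_lemma3.1}, which guarantee that for each fixed $\Delta^{(l)}$ the HPC iteration, started from the previous equilibrium $\mathrm{p}^{\ast(l-1)}$, converges to the unique fixed point $\mathrm{p}^{\ast(l)}$; and (iii) a one-sided comparison that propagates the pointwise decrease of the power-update map to its fixed point. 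I expect (iii) to be the technical crux, because, unlike a standard (monotone) interference function, the HPC puf is only \emph{2.s.s.} and is \emph{not} monotone in $\mathrm{p}$ — the opportunistic term $\xi_i R_i(\mathrm{p})^{x/(x-1)}$ is \emph{decreasing} in the interference — so the usual ``smaller map $\Rightarrow$ smaller fixed point'' implication must be reworked. I would do this by passing to the effective-interference representation $\mathrm{J}'(\mathrm{R}(\mathrm{p}))$ of Algorithm~\ref{Ch3_alg:gms2}, using that $I_i^{\prime H}(R_i)$ is monotone in $R_i$ separately on each of the two regimes delimited by the threshold $R_i^{\sf th}$, invoking Lemma~\ref{Ch3_lemma3.3} to pin which regime each user occupies at equilibrium, and then sandwiching the two 2.s.s. iterations and letting the contraction do the rest.

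Granting the lemma, claim~1 follows. If the run ever reaches a state with $\underline{\mathcal{M}}^{(l)}=\varnothing$ then $\overline{N}^{\ast}=M\ge\overline{N}_{\sf TPC}$ trivially; otherwise the updating process keeps acting and, by its defining property together with the monotonicity just proved, the configuration converges to a limiting one in which every eventually-supported user runs pure TPC ($\alpha_i=0$) while every other user transmits at a power capped by $\bar p_i$. By the choice $\xi_i=(\bar p_i/\hat{\gamma}_i^x)^{1/(1-x)}$ one checks that for such a below-target user the HPC power-update is dominated pointwise by the pure-TPC power-update precisely in the regime $R_i(\mathrm{p})\ge R_i^{\sf th}$ it occupies at equilibrium; consequently the limiting power vector is dominated by the pure-TPC equilibrium $\mathrm{p}^{\sf TPC}$, so every TPC-supported user (those with $R_i(\mathrm{p}^{\sf TPC})\le R_i^{\sf th}$) is also supported in the limiting configuration, and the number of supported users there is at least $\overline{N}_{\sf TPC}$. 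Since Algorithm~\ref{Ch3_alg:gms3} records $\overline{N}^{\ast}$ as the largest number of supported users observed over the whole run and in step~4 re-runs HPC with the corresponding $\Delta^{\ast}$, which by Theorem~\ref{Ch3_thm3.2} reproduces that very equilibrium, we obtain $\overline{N}_{\sf HPC}=\overline{N}^{\ast}\ge\overline{N}_{\sf TPC}$.

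For claim~2, assume the network is feasible. Then the TPC equilibrium meets every target SINR, i.e. $\overline{N}_{\sf TPC}=M$ \cite{foschini93,rasti11a}, and claim~1 gives $\overline{N}_{\sf HPC}\ge M$, hence $\overline{N}_{\sf HPC}=M$: all users attain their targets under the HPC adaptation algorithm. For the strict part, trace the run from $l=0$, where every data user has $\alpha_i^{(0)}=\alpha_0\gg1$. Because the system is feasible, the monotone-improvement lemma forces $\overline{\mathcal{M}}^{(l)}$ to grow up to all of $\mathcal{M}$ before the updating process can drive any $\alpha_i$ all the way to $0$ — if it did, the configuration would approach pure TPC, which by feasibility already supports everyone, contradicting the monotonicity once the lemma is in hand — so at the iteration where $\underline{\mathcal{M}}^{(l)}=\varnothing$ the algorithm jumps to step~4 with $\Delta^{\ast}=\Delta^{(l)}$ still containing strictly positive $\alpha_i$'s. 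For any data user $i$ with $\alpha_i>0$ and $R_i(\mathrm{p}^{\ast})<R_i^{\sf th}$ — which exists whenever the feasible system is not exactly on its feasibility boundary, in particular in the lowly-loaded regime the design targets — part~3 of Lemma~\ref{Ch3_lemma3.3} yields $\Gamma_i(\mathrm{p}^{\ast})>\hat{\gamma}_i$, i.e. that user's SINR is strictly above its target, which completes the proof.
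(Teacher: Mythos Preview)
Your proposal rests on the ``monotone-improvement lemma'' --- that the equilibrium effective-interference vector is coordinatewise nonincreasing along the outer iterations $l$ --- which you correctly flag as the crux but leave unproven. The sketched route (regime analysis on $I_i^{\prime H}(R_i)$ plus 2.s.s.\ sandwiching) is not convincing: $I_i^{\prime H}(R_i)$ is the minimum of $\bar{p}_i$ and a convex function diverging at both ends, its shape depends on each user's current $\alpha_i$, and when several $\alpha_i$'s are lowered simultaneously the coupled fixed-point comparison does not reduce to any standard-function argument. This is a genuine gap, and for claim~1 it is also \emph{unnecessary}.

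The paper's proof of claim~1 looks only at the \emph{terminal} outer iteration $s$, where by hypothesis every supported user already has $\alpha_i^{(s)}=0$ (so HPC and TPC coincide for them). The step you compress into ``consequently the limiting power vector is dominated by the pure-TPC equilibrium'' is made precise by \emph{initializing TPC at the HPC fixed point} $\mathrm{p}_s^{\ast}$: one TPC step leaves each supported user fixed and lifts each nonsupported user to $\bar{p}_i\ge p_{i,s}^{\ast}$ (since $R_i(\mathrm{p}_s^{\ast})>R_i^{\sf th}$ forces the capped TPC update to $\bar{p}_i$), giving $\mathrm{p}_{\sf TPC}^{(1)}\ge\mathrm{p}_s^{\ast}$. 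Now \emph{TPC's} standard-function monotonicity --- not any property of HPC --- yields that the TPC iterates increase to $\mathrm{p}_{\sf TPC}^{\ast}\ge\mathrm{p}_s^{\ast}$, hence $R_i(\mathrm{p}_{\sf TPC}^{\ast})\ge R_i(\mathrm{p}_s^{\ast})$ for every $i$, so every TPC-supported user is supported under $\Delta^{(s)}$ and $\overline{N}_{\sf HPC}\ge\overline{N}^{(s)}\ge\overline{N}_{\sf TPC}$. The trick is to borrow monotonicity from TPC rather than manufacture it for HPC along the whole outer-loop trajectory.

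For claim~2, your ``contradiction'' argument is muddled: if all $\alpha_i$ reached zero before full support, the limiting configuration would be pure TPC, which by feasibility supports everyone --- that is \emph{consistent} with a nondecreasing supported set, not a contradiction of it. The paper is also brief here: it simply appeals to the $\alpha_i$ being decreased ``slowly'' so that the exit $\underline{\mathcal{M}}^{(l)}=\varnothing$ in step~2 fires while some $\alpha_i>0$ remain, after which Lemma~\ref{Ch3_lemma3.3}(3) gives $\Gamma_i(\mathrm{p}^{\ast})>\hat{\gamma}_i$ for those users.
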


\begin{proof}
The proof is given in Appendix~\ref{Ch3_apdx.thm3.3}.
\end{proof}

This theorem implies that the proposed HPC adaptation algorithm achieves better performance than the traditional TPC algorithm for both
infeasible and feasible systems. Specifically, the HPC adaptation algorithm can support at least the same
number of users and achieve higher SINRs and, therefore, higher total throughput compared with those due to the TPC
algorithm.

\subsection{Practical Algorithm Design}
\label{Ch3_sec:Prac_Agrm}
We propose a practical design of the \textit{``updating process''} for HPC adaptation presented in Algorithm \ref{Ch3_alg:gms3}. In fact, this process attempts to convert nonsupported users into as many supported
users as possible by decreasing effective interference levels of
nonsupported users below threshold value $R_i^{\sf th}$.
Note that the interference received by a user is contributed by all other users
in the network where intracell interference forms a significant
portion of the total interference.
Exploiting this fact, we propose to update $\alpha_i$ in $\Delta$ in a process consisting of both local and global updates, i.e., locally varying $\alpha_i$ within each cell and globally varying $\alpha_i$ when there are cells whose users are not supported after performing the local updating process. 

Let $\mathcal{M}_k$ be the set of users associated with BS $\mathit{k}$ at the equilibrium, i.e., $\mathcal{M}_k= \left\lbrace j | b_j \equiv k \right\rbrace$. 
In addition, let $\overline{\mathcal{M}}_k$ and $\underline{\mathcal{M}}_k$ be the sets of supported and nonsupported users in cell $k$, respectively. Therefore, we have $\overline{\mathcal{M}}_k \cup \underline{\mathcal{M}}_k = \mathcal{M}_k$.
Let $\Delta_k= \left\lbrace \alpha_i | i \in \mathcal{M}_k \right\rbrace $. For given values  of $\alpha_i$ in $\Delta_k$, if the SINRs of users in $\mathcal{M}_k$ are all satisfied or unsatisfied, then we do not change $\alpha_j$ ($j \in \mathcal{M}_k$) further in the local updating process. If there are both the supported and nonsupported users in a particular cell, we propose to reduce the power of the potential users to improve the SINRs of nonsupported users. In particular, we can calculate the reduction ratio of effective interference required for the weak user $i$ ($i \in \underline{\mathcal{M}}_k$) as follows:
\begin{equation}
\label{Ch3_equ:reduce_ratio}
\beta_i=\dfrac{R_i(\mathrm{p})-R_i^{\sf th}}{R_i(\mathrm{p})}, \; i \in \underline{\mathcal{M}}_k.
\end{equation}

Consultation of $R_i(\mathrm{p})$ in (\ref{Ch3_Rp}) suggests that to assist user $i$ ($i \in \underline{S}_k$) in reducing $R_i(\mathrm{p})$ by a factor $\beta_i$, all potential users must reduce their transmit power by a factor at least $\beta_i$. In addition, we need to assist the weakest user who has the highest effective interference in achieving its target SINR. The reduction ratio of effective interference for this weakest user can be expressed as
\begin{equation}
\label{Ch3_equ:betamax}
\beta_{\sf max}^k=\mathrm{max}_{i \in \underline{\mathcal{M}}_k} \beta_i.
\end{equation}
On the other hand, each potential user $j$ must maintain its SINR requirement, i.e., its new SINR must not be less than its target SINR. Hence, its
 new transmit power level must be not less than $R_j(\mathrm{p}) \hat{\gamma}_j$. Therefore, the expected transmit power of potential user $j$ 
 in cell $k$ can be written as
\begin{equation}
\label{Ch3_equ:exp_p}
p_j^{\sf exp}=\mathrm{max} \left\lbrace p_j(1-\beta_{\sf max}^k), R_j(\mathrm{p}) \hat{\gamma}_j \right\rbrace ,\; j \in \overline{\mathcal{M}}_k.
\end{equation}
Using this result, we can calculate the parameter $\alpha_j$ for potential user $j$ after using (\ref{Ch3_hpcrule}) and performing some manipulations as 
\begin{equation}
\label{Ch3_equ:alpha_strg}
\alpha_j = g(p_j^{\sf exp})= \dfrac{p_j^{\sf exp}-R_j(\mathrm{p}) \hat{\gamma}_j}{\xi_j R_j(\mathrm{p})^{\frac{x}{x-1}}-p_j^{\sf exp}}, \: j \in \overline{\mathcal{M}}_k.
\end{equation}
This local updating process for each cell k is run until all users meet their SINR requirements or there is no potential
user in this cell. A cell is called unsatisfied if it still contains nonsupported users (and there is no potential user). If
there exist unsatisfied cells, the BS of each unsatisfied cell will send a ``warning message'' through the backhaul to seek
global assistance. Then, the global updating process will be performed as soon as a ``warning message'' is broadcast. In this
global updating process, all potential users $i$ will 
update their parameters $\alpha_i$ to assist the unsatisfied cells.
The global updating process can be run in parallel with the local updating process, i.e., 
a potential user $i$ may reduce $\alpha_i$ twice in one updating step. 

We propose a practical \textit{``updating process,''} which is described in Algorithm \ref{Ch3_alg:gms4}.
Let $\overline{\mathcal{M}}_k^{(l)}$ and $\underline{\mathcal{M}}_k^{(l)}$ be the sets of supported and non-supported users in cell $k$ and 
iteration $l$, respectively. In this proposed algorithm, each user attempts to assist other users in achieving their SINR targets while achieving high SINR for itself. After obtaining the equilibrium in each iteration, each user $i$ whose SINR is lower than
the target $\hat{\gamma}_i$ at the equilibrium will maintain its parameter $\alpha_i$. 
The supported or potential users who have their $\alpha_i$ greater than zero will update their $\alpha_i$ according to (\ref{Ch3_equ:alpha_strg}) or scale down their $\alpha_i$ further by a scaling factor of $\varsigma$ ($\varsigma > 1$) if they receive ``warning messages.'' A ``warning message'' from a particular cell includes the number of nonsupported users of the cell. 
Hence, each user knows whether its reduction of $\alpha_i$ is helpful or not.
Each user stores its parameter $\alpha_i$ at the instant when the number of nonsupported users decreases and reloads this value after the global update process terminates. The global process will terminate as all users attain their target SINRs or all the supported users reduce their parameters $\alpha_i$ to zero. 

\begin{algorithm}[!t]
\caption{\textsc{Exemplified Design of Updating Process}}
\label{Ch3_alg:gms4}
\begin{algorithmic}[1]
\STATE Run the following \textit{``updating process''} at each cell $\mathit{k}$ (${k=1,2,...,K}$) in iteration $l$ (only for data users):
\begin{itemize}
\item If $\underline{\mathcal{M}}_k^{(l)} \neq \varnothing$, then consider the following cases: 
\begin{description}
\item[a:] For $i \in \underline{\mathcal{M}}_k^{(l)}$, set $\alpha_i^{(l+1)}=\alpha_i^{(l)}$.
\item[b:] For $i \in \overline{\mathcal{M}}_k^{(l)}$, perform the following:
\begin{list}{}{}
\item[b1:] Set $\alpha_i^{(l+1)}=g(p_j^{\sf exp})$ if BS $k$ does not receive any ``warning massage.''
\item[b2:] Set $\alpha_i^{(l+1)}=g(p_j^{\sf exp})/\varsigma$ if BS $k$ receives a ``warning massage.''
\end{list}
\item[c:] If $\alpha_i^{(l)}=0,\:\: \forall i \in \overline{\mathcal{M}}_k^{(l)}$ then BS $k$ sends a ``warning message'' to other BSs in the network.
\end{description}
\item If $\underline{\mathcal{M}}_k^{(l)} = \varnothing$, consider the following cases: 
\begin{description}
\item[a:] If BS $k$ receives a ``warning message,'' then set $\alpha_i^{(l+1)}=\alpha_i^{(l)}/\varsigma$ for $i \in \overline{\mathcal{M}}_k^{(l)}$.
\item[b:] If BS $k$ does not receive any ``warning message,'' then set $\alpha_i^{(l+1)}=\alpha_i^{(l)}$ for $i \in \overline{\mathcal{M}}_k^{(l)}$.
\end{description}
\end{itemize} 
\end{algorithmic}
\end{algorithm}
\subsection{Application to Two-Tier Macrocell-Femtocell Networks}
\label{Ch3_sec:Aplc_FC}
\subsubsection{Two-tier Macrocell-Femtocell Networks}
\label{Ch3_subsec:femto model}
We consider a two-tier network where $M_f$ FUE devices served by $K$ FBSs
are underlaid with one macrocell serving $M_m$ MUE devices. 
We denote the sets of MUE and FUE devices by $\mathcal{M}_m \triangleq \left\lbrace 1,...,M_m\right\rbrace $ and $\mathcal{M}_f \triangleq \left\lbrace M_m+1,...,M_m+M_f \right\rbrace $, respectively, and the set of all users is $\mathcal{M} = \mathcal{M}_m \cup \mathcal{M}_f$. 
We assume that FUE devices are only served by the BS of its cell.
However, MUE devices are allowed to connect to any BS in the network.
All users in this heterogeneous network are assumed to employ the proposed HPC scheme.
In addition, by allowing users to adjust appropriate PC parameters, we can
provide useful mechanisms for FBSs to control the spectrum access from associated MUE devices, which enables the hybrid
access design for the underlying two-tier network. 

\subsubsection{Control MUE devices' Access at Femtocells}
\label{Ch3_subsec:ctrlacc}
Recall that
MUE devices are allowed to connect to any FBS to enhance
their performance and to reduce the interference to other users
in the network. However, MUE devices' connections to a particular femtocell without appropriate control may significantly
degrade the throughput of FUE devices in the underlying femtocell, which is not desirable from the FUE devices' viewpoint.
To resolve this issue, we propose a hybrid access strategy to
control the access of MUE devices. Toward this end, we assume
that each FUE $i$ has two target SINRs, e.g., $\hat{\gamma}_i^{[1]}$ and $\hat{\gamma}_i^{[2]}$ where $\hat{\gamma}_i^{[1]}<\hat{\gamma}_i^{[2]}$. Then, FUE devices can attempt to reach the higher target SINRs as long as the network
condition allows. Otherwise, FUE devices seek to maintain at least the lower target SINRs when it is possible.

Each FUE $i$ can adaptively track the higher target SINR by taking the following procedure. It employs 
the proposed HPC adaptation algorithm to achieve the lower target SINR $\hat{\gamma}_i^{[1]}$ in the first phase.
If this is successfully fulfilled,
it attempts to achieve the higher target SINR $\hat{\gamma}_i^{[2]}$ in the second phase.
Toward this end,  
FUE $i$ simply sets parameters $\xi_i$ and $R_i^{\sf th}$ by using the higher target SINR.
Then, it runs the proposed HPC adaptation algorithm again. 
As FUE devices at a particular femtocell attempt to achieve higher target SINR, MUE devices associated with the underlying femtocell have to decrease their transmit
powers and, therefore, achieve lower SINRs.
Hence, the proposed HPC adaptation algorithm enables us to attain flexible spectrum sharing between FUE and MUE devices
at each femtocell. 

\section{Numerical Results}
\label{Ch3_result}

\begin{figure}[!t]
        \centering
\includegraphics[width=0.5 \textwidth]{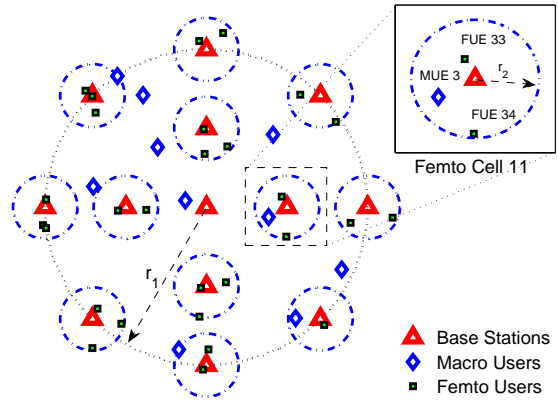}
\caption{Simulated two-tier macrocell-femtocell network.}
\label{Ch3_fig:system}
\end{figure}

We present illustrative numerical results to demonstrate the performance of the proposed algorithms. The network setting and user placement for our simulations are illustrated in Fig. \ref{Ch3_fig:system}, where MUE and FUE devices are randomly located inside circles of radii of $r_1 = 1000\:m$ and $r_2 =50\:m$, respectively. 
Assume that there are heavy walls at the boundaries of the femtocells.
We fix $M_m=10$ except for the results given in Fig.~\ref{Ch3_fig:BAS} and randomly choose the number of FUE devices in each femtocell from 1 to 3. Then, eight users of either tier are set as voice users randomly. The channel power gain $h_{ij}$ is chosen according to the path loss $L_{ij}=A_i \mathrm{log}_{10}(d_{ij})+B_i+C\mathrm{log}_{10}(\frac{f_c}{5})+ W_l n_{ij}$, where $d_{ij}$ is the distance from user $j$ to BS $i$; $(A_i,B_i)$ are set as $(36,40)$ and $(35,35)$ for MBS and FBSs, respectively; $C=20$, $f_c=2.5$ GHz; $W_l$ is the wall-loss value; $n_{ij}$ is the number of walls between BS $i$ and user $j$. Other parameters are set as follows (unless stated otherwise): $x=0.5$,  processing gain $G=128$, maximum power $\bar{p}_j = 0.01 \: W, \: \forall j \in \mathcal{M},$ noise power $\eta_i=10^{-13} \: W$, $\forall \: i \in \mathcal{K}$. We set $n_{ij}$ equal to the number of cell boundaries that the corresponding signal traverses and the
wall-loss value $W_l=12$dB (except for Fig.~\ref{Ch3_fig:ASE_WL}). The SINR presented in each figure is either in linear or decibel scale, which is
stated below each relevant figure in this section.

In Algorithm~\ref{Ch3_alg:gms3}, we have to run the HPC power updates given in (\ref{Ch3_xi2}) in the inner loop before updating parameters $\alpha_i^{(l)}$ in each iteration of the outer loop 
if there exists any users who cannot achieve their target SINRs (i.e., the set $\underline{\mathcal{M}}^{(l)} \neq \emptyset$). 
We can limit the number of HPC power updates (iterations) to smaller than a predetermined limit instead
of waiting for HPC power updates to fully converge to improve the convergence time of Algorithm~\ref{Ch3_alg:gms3}. 
Moreover, we can also improve the convergence time of Algorithm~\ref{Ch3_alg:gms3} by appropriately setting scaling factor $\varsigma$ in Algorithm~\ref{Ch3_alg:gms4}.
In fact, using larger scaling factor $\varsigma$ enables us to relieve the network congestion more quickly but
may degrade the throughput performance. To study the impacts of these possible parameter settings on the convergence time
 of the proposed algorithm, we show the number of required
iterations versus different values of $\varsigma$ in Fig.~\ref{Ch3_fig:num_irt_scf}, where the number of HPC power updates (iterations)
are limited to three, four, five, and ``no limit'', which are indicated as ``limit inner loop'' in this figure. The average total throughput
remains almost unchanged for the considered values of $\varsigma$ and number of HPC power updates.
We do not present the throughput variation here due to
the space constraint. Fig.~\ref{Ch3_fig:num_irt_scf} confirms that we can significantly improve
the convergence time of Algorithm~\ref{Ch3_alg:gms3} by choosing $\varsigma$ sufficiently large (e.g., $\varsigma>10$) and a small number of
HPC power updates (e.g., three or five HPC power updates are sufficient).
Learning from these studies, we set $\varsigma=16$ and the number of 
 HPC power updates equal to five to obtain all other results presented in the following.

\begin{figure}[!t]
        \centering
\includegraphics[width=0.7 \textwidth]{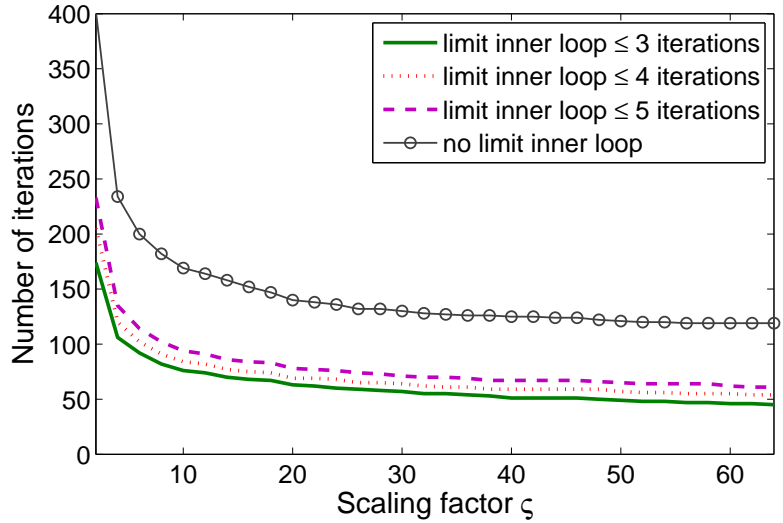}
\caption{Number of iterations versus scaling factor $\varsigma$.}
\label{Ch3_fig:num_irt_scf}
\end{figure}

\begin{figure}[!t]
        \centering
\includegraphics[width=0.7 \textwidth]{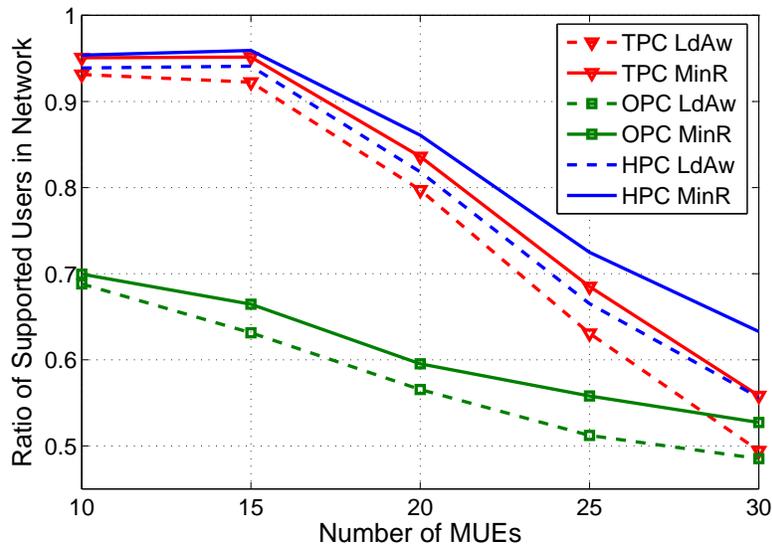}
\caption{Ratio between the number of supported users and the number of all users under different BSA schemes for target SINR equal $8$ (in linear scale).}
\label{Ch3_fig:BAS}
\end{figure}

In Fig.~\ref{Ch3_fig:BAS}, we show ratios between the number of supported users and the total number of users for the following schemes: dynamic BSA and fixed load-aware BSA strategies (i.e., BSA in Algorithm \ref{Ch3_alg:gms2} and Algorithm \ref{Ch3_alg:gms1}, respectively) with different PC strategies, namely, TPC, OPC, and 
 HPC adaptation algorithm (i.e., Algorithms \ref{Ch3_alg:gms3} with the updating process given in Algorithm \ref{Ch3_alg:gms4}). In this figure, we increase the number of MUE devices from 10 to 30.
 For the fixed BSA (i.e., in the decomposed BSA and PC design), we use the average channel power gains based on the path loss only. For dynamic BSA 
(i.e., in the joint BSA and PC design), we determine BSA results based on instantaneous channel power gains comprising both path loss and Rayleigh fading, which is 
represented by an exponentially distributed random variable with the mean value of 1.
Results corresponding to fixed
and dynamic BSAs are indicated by ``LdAw'' and ``MinR'' together with the corresponding PC schemes (i.e., TPC, OPC, and HPC) in this figure, respectively.
As shown in this figure, the proposed HPC scheme outperforms existing TPC and OPC schemes under both joint and decomposed BSA and PC designs.
Moreover, the dynamic BSA proposed in Algorithm \ref{Ch3_alg:gms2} achieves better performance that the fixed BSA for all three PC strategies. 
In addition, the performance gap between joint and decomposed BSA and PC designs becomes quite significant when the number of MUE devices is sufficiently large.
Therefore, the joint BSA and PC proposed in Algorithm \ref{Ch3_alg:gms2} will be employed to obtain other results in the section.

\begin{figure}[!t]
        \centering
\includegraphics[width=0.7 \textwidth]{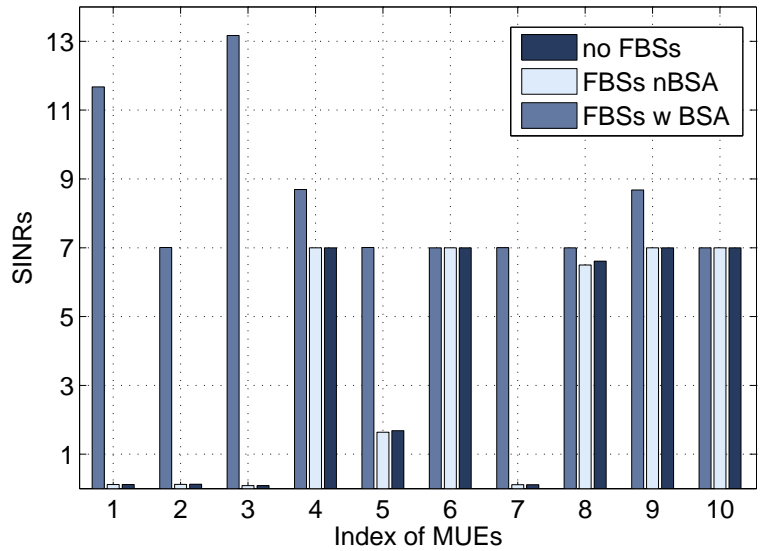}
\caption{SINRs of MUE devices with and without FBS association (closed versus hybrid access) for target SINRs of 7 (in linear scale).
The ``index of MUE devices'' in the horizontal axis indicates individual MUE devices in the simulation.}
\label{Ch3_fig:MSINR}
\end{figure}

In Fig. \ref{Ch3_fig:MSINR}, we present the SINRs of MUE devices for three cases: 1) There is no femtocell; 2) there are femtocells, but MUE devices are not allowed to connect with FBSs (i.e., closed access); 3) there are femtocells, and MUE devices are allowed to connect with FBSs (i.e., hybrid access). Results for these three cases are indicated as ``no FBSs'', ``FBSs nBSA'', and ``FBSs wBSA'' in this figure, respectively. The target SINRs of all users are set to 7. It can be observed that the SINRs of MUE devices do not decrease much with the deployment of femtocells. This is due to the fact that the large wall loss prevents femtocells from creating large cross-tier interference to MUE devices. Moreover, by allowing MUE devices to connect with nearby FBSs, we can significantly increase their SINRs.  Moreover, as shown in Fig.~\ref{Ch3_fig:MSINR}, the average SINRs per
one FUE for the cases with and without FBS association are approximately 1267 and 39, respectively. These results confirm
the great benefits of the hybrid access over the closed access in the two-tier macrocell–femtocell network in improving the
SINRs of both MUE and FUE devices.

The operating point achieved by Algorithm~\ref{Ch3_alg:gms3} may not lie on the Pareto-optimal boundary of the capacity region.
Therefore, there can exist a different point that strictly dominates this operating point. To understand the throughput gap between 
our solution and a ``throughput optimal'' solution, we have developed an exhaustive search algorithm that aims to reach a more 
throughput-efficient point on the Pareto-optimal boundary of the capacity region.  
This exhaustive search algorithm performs two key steps in each iteration, which are described in the following. \textit{In the first step}, we consider all
possible pairs of users $(i,j)$ and scale their SINRs as $\Gamma_i^{(k)}=\Gamma_i^{(k)}/\theta$ and $\Gamma_j^{(k)}=\theta\Gamma_j^{(k)}$,
where scaling factor $\theta>0$ is searched so that the maximum throughput (i.e., spectral efficiency) can be achieved and 
all supported users are still supported after the SINR updates for the users $(i,j)$.
The operation in this step aims to increase the total throughput while maintaining the set of supported users. 
\textit{In the second step}, we find exactly one user where the total throughput can be increased the most as we
increase the SINR of this user while keeping infeasibility of the resulting set of SINRs. Then, the SINR of the chosen user
is set to the maximum value. These two steps are performed in each iteration of the search algorithm until convergence.

In Fig.~\ref{Ch3_fig:max_utility}, we compare the throughput achieved by Algorithm~\ref{Ch3_alg:gms3} and the exhaustive search algorithm.
As shown in this figure, the exhaustive search algorithm achieves higher throughput than that achieved by Algorithm~\ref{Ch3_alg:gms3}.
Note, however, that the exhaustive search algorithm relies on the operating point
achieved by Algorithm~\ref{Ch3_alg:gms3} for initialization.
In addition, it is the centralized algorithm and is based on exhaustive search.
Fig.~\ref{Ch3_fig:max_utility} also reveals that the throughput gap between the two algorithms is pretty small for low target SINR values where the
QoS constraints of all users can be supported. This confirms the efficacy of our proposed low-complexity algorithm.

\begin{figure}[!t]
        \centering
\includegraphics[width=0.7 \textwidth]{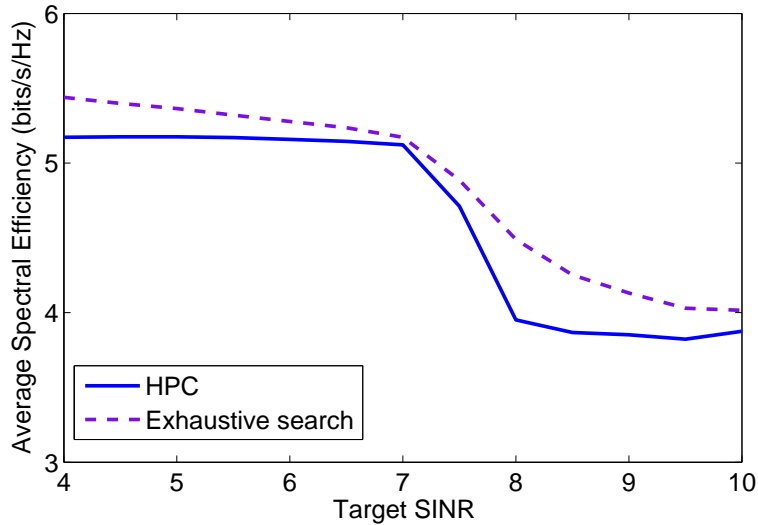}
\caption{Comparison of the throughput achieved by Algorithm~\ref{Ch3_alg:gms3} and the exhaustive search algorithm. (The target SINR is in linear scale.)}
\label{Ch3_fig:max_utility}
\end{figure}

\begin{figure}[!t]
        \centering
\includegraphics[width=0.7 \textwidth]{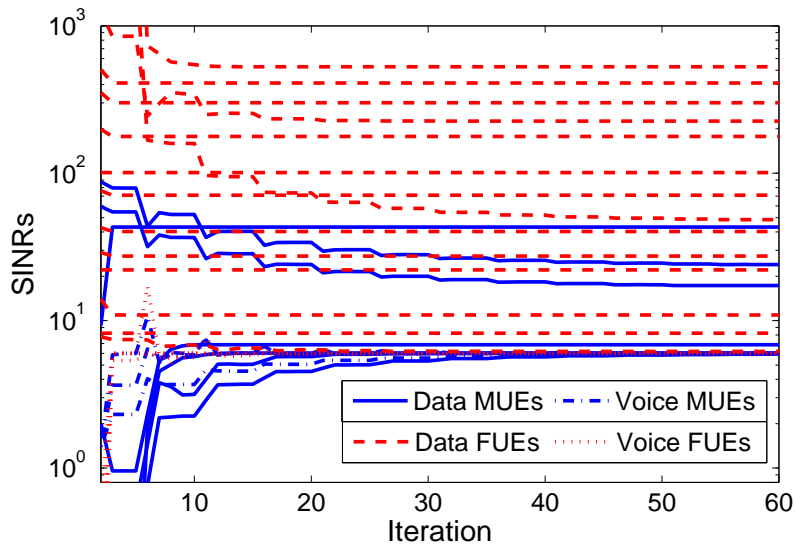}
\caption{SINRs of MUE and FUE devices for target SINRs of 6 (in linear scale).}
\label{Ch3_fig:SINR_fsbl}
\end{figure}

\begin{figure}[!t]
        \centering
\includegraphics[width=0.7 \textwidth]{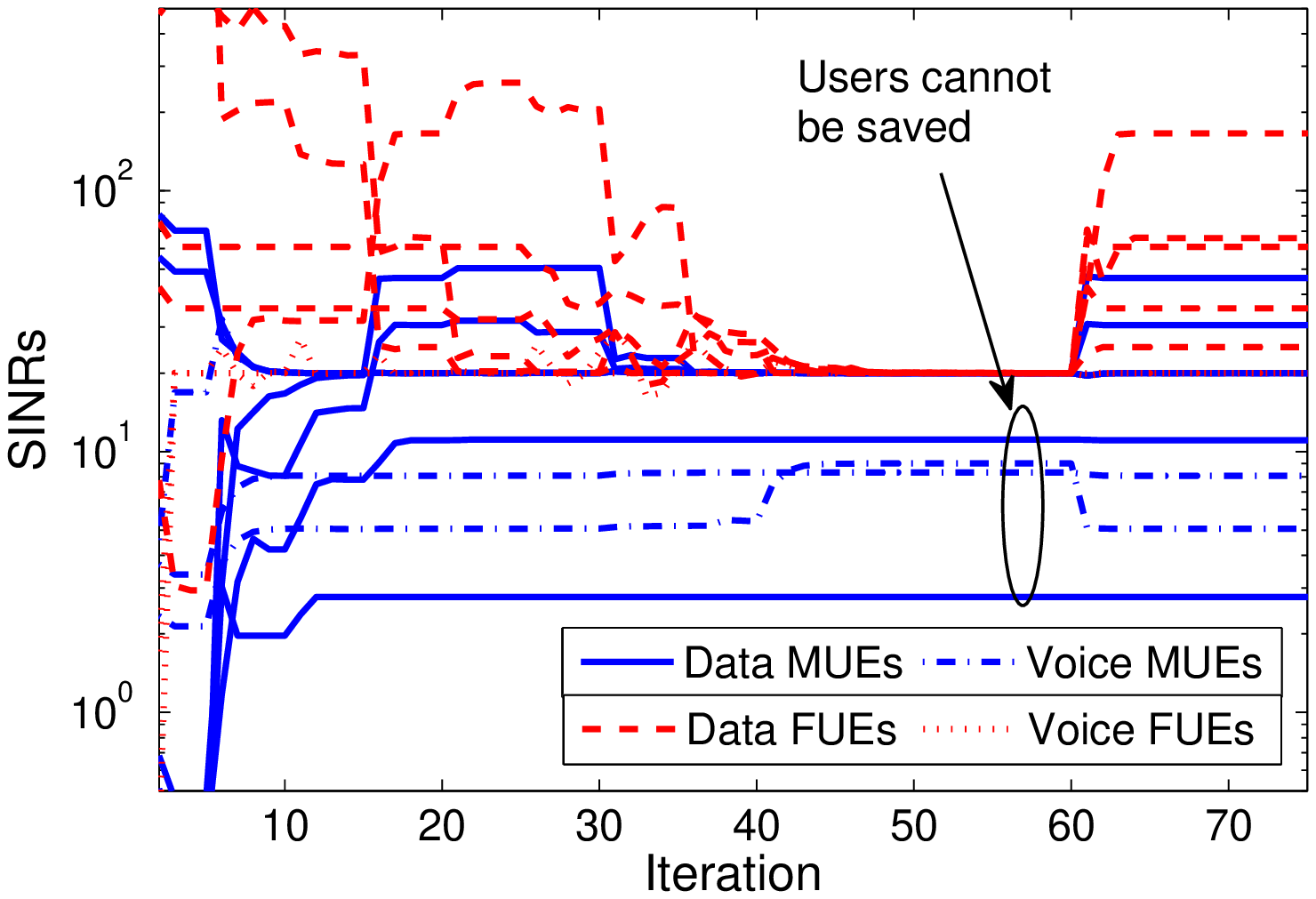}
\caption{SINRs of all users for target SINRs of 20 (in linear scale).}
\label{Ch3_fig:SINR_infsbl}
\end{figure}

Figs.~\ref{Ch3_fig:SINR_fsbl} and \ref{Ch3_fig:SINR_infsbl} show the convergence of the HPC adaptation algorithm when we set $\hat{\gamma}_i=6$ and $20$ for all users, respectively. Fig.~\ref{Ch3_fig:SINR_fsbl} confirms that the HPC adaptation algorithm converges to an equilibrium for which some users exactly achieve their target
SINRs while others attain SINRs higher than their target values.
This set of results corresponds to the feasible system where all SINR requirements can be supported. This figure shows
that the proposed HPC adaptation algorithm not only attains all SINR requirements but enables strong users to achieve
high throughput performance as well.
The results in Fig.~\ref{Ch3_fig:SINR_infsbl} correspond to an infeasible system where we cannot attain all SINR requirements.
The dynamics of the HPC adaptation algorithm demonstrated in this figure can be interpreted as follows.
Some potential users $i$ attempt to reduce their parameters $\alpha_i$ to zero to assist nonsupported users to achieve their target SINRs initially. 
As soon as potential users recognize that there are still nonsupported users, they reload the last $\alpha^{\ast}$ values (i.e., in step 4 of Algorithm~\ref{Ch3_alg:gms3}),
 which result in improvements for their SINRs.

\begin{figure}[!t]
        \centering
\includegraphics[width=0.7 \textwidth]{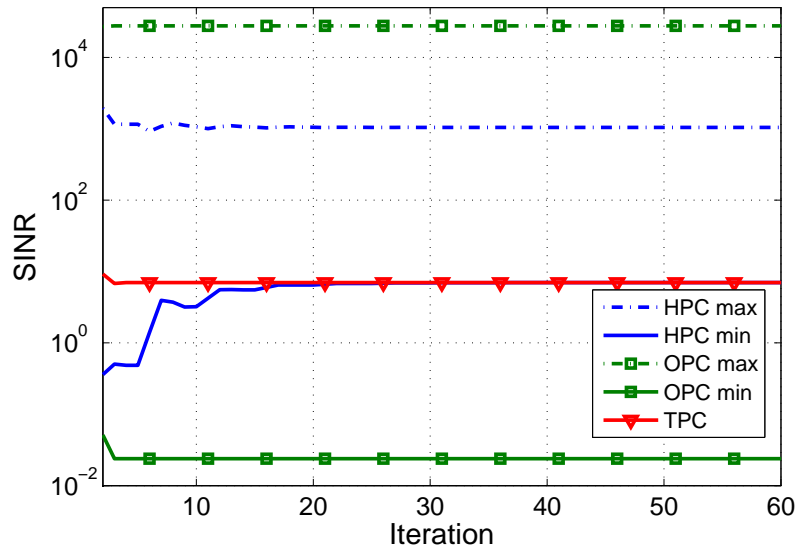}
\caption{Maximum and minimum SINRs achieved by OPC, TPC and HPC adaptation schemes for target SINRs of 7 (in linear scale).}
\label{Ch3_fig:SINRcomp}
\end{figure}

In Fig.~\ref{Ch3_fig:SINRcomp}, we show the fairness achieved by the OPC, TPC and HPC adaptation algorithms by showing the highest and lowest SINRs of all users for different strategies (indicated as X min and X max in this figure, where X represents the corresponding PC strategy). 
In the TPC scheme, all users reach the same target SINR.
In contrast, the OPC scheme results in very different SINRs at the equilibrium where the strongest
user attains a very high SINR, whereas the weak users achieve SINRs that are very close to zero.
Therefore, the OPC scheme tends to allocate more resources to strong users to achieve
high overall throughput at the cost of significantly penalizing weak users.
This figure shows that our proposed HPC adaptation algorithm allows the weak users to achieve their target
SINRs while enabling strong users to settle at higher SINRs.
Therefore, our proposed algorithm strikes to balance between achieving required QoS while exploiting multiuser diversity to
enhance the system throughput, which would be very desirable for data-driven wireless systems.

\begin{figure}[!t]
        \centering
\includegraphics[width=0.7 \textwidth]{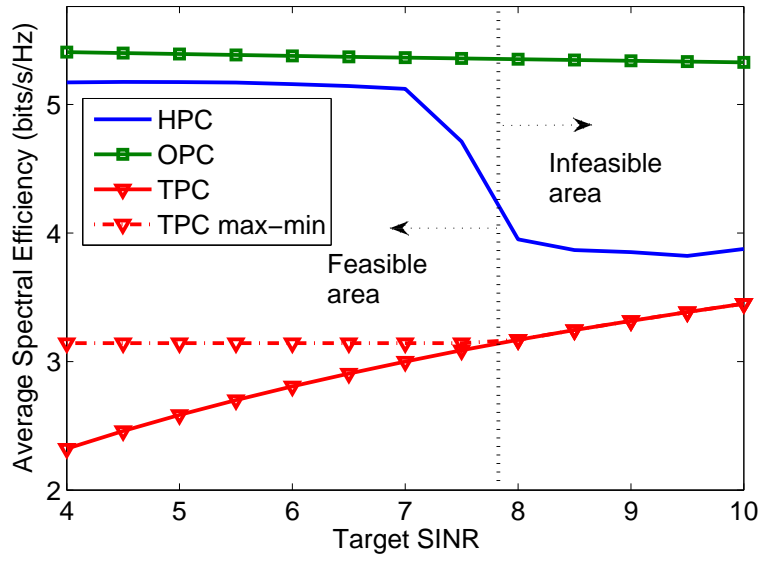}
\caption{Average spectral efficiency versus target SINR (in linear scale).}
\label{Ch3_fig:thrvSINR}
\end{figure}

\begin{figure}[!t]
        \centering
\includegraphics[width=0.7\textwidth]{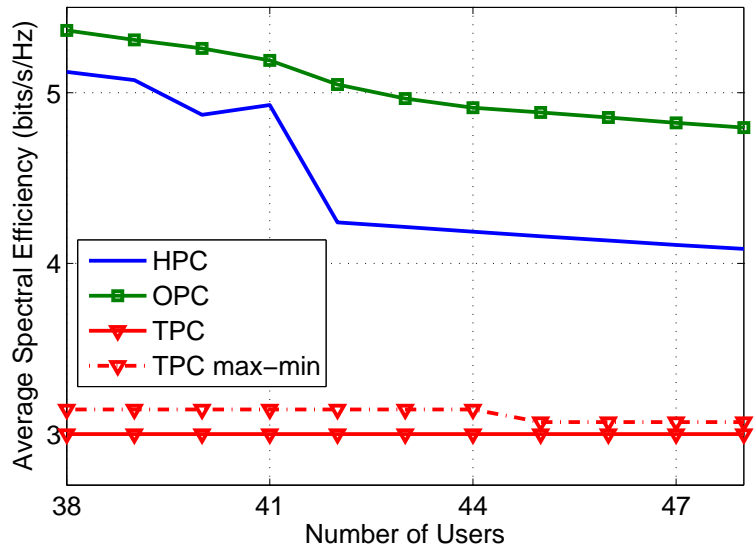}
\caption{Average spectral efficiency versus number of all users for target SINRs of 7 (in linear scale).}
\label{Ch3_fig:thrvFUEs}
\end{figure}

Figs.~\ref{Ch3_fig:thrvSINR} and \ref{Ch3_fig:thrvFUEs} show the average spectral efficiency (i.e.,
throughput) achieved by different schemes versus the target SINR and the number of all users, respectively. 
Here, the spectral efficiency of user $i$ is calculated as $\log_2(1+\Gamma_i)$ (in bits per second per hertz).
To obtain the results in these two figures, we randomly generate user locations and obtain results for this fixed topology.
For the results in Fig.~\ref{Ch3_fig:thrvFUEs}, we sequentially add more FUE devices to obtain different points on each curve. For
reference, we also present the average spectral efficiency of a TPC max–min scheme in which we slowly increase the target
SINR for all users as long as the system is still feasible under the TPC scheme. (This scheme is denoted as TPC max–min
in these two figures.) As shown in these figures, our HPC adaptation algorithm attains higher average spectral efficiency
than both traditional TPC and TPC max–min schemes but lower average spectral efficiency than the OPC algorithm. 
In particular, when the network is lowly loaded ($\hat{\gamma}_i < 8$ in Fig.~\ref{Ch3_fig:thrvSINR}), our proposed algorithm attains much higher 
spectral efficiency than the traditional TPC and TPC max-min schemes. Moreover, when the network load is higher ($\hat{\gamma}_i \geq 8$), the gaps between our proposed algorithm and the two TPC schemes become smaller. This is because Algorithm \ref{Ch3_alg:gms3} attempts to maintain the target SINRs for all users as the TPC schemes. 

\begin{figure}[!t]
        \centering
\includegraphics[width=0.7 \textwidth]{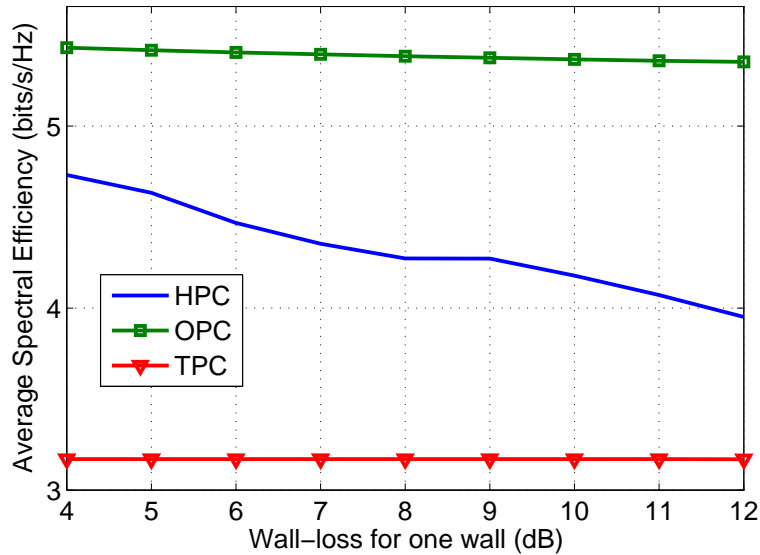}
\caption{Average spectral efficiency versus the wall-loss value for target SINRs of 8 (in linear scale).}
\label{Ch3_fig:ASE_WL}
\end{figure}

\begin{figure}[!t]
        \centering
\includegraphics[width=0.7 \textwidth]{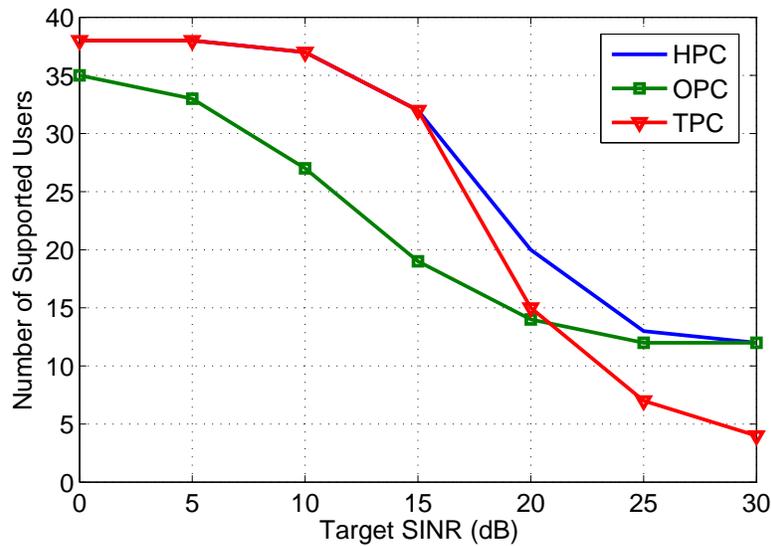}
\caption{Number of supported users in HPC, TPC and OPC schemes versus target SINR (in decibel scale).}
\label{Ch3_fig:struvSINR}
\end{figure}

In Fig.~\ref{Ch3_fig:ASE_WL}, we study the impact of wall-loss value $W_l$ (therefore, the path loss $L_{ij}$) on the average spectral efficiency of different schemes. This figure shows that the average spectral efficiency of the TPC scheme remains unchanged, whereas
that of the OPC scheme slightly decreases with the increasing wall-loss value $W_l$. This is because the system load under consideration is light; hence, the number of users that can be supported by the TPC scheme remains the same for different values of $W_l$. Moreover, the OPC scheme allocates most capacity to a few strong users who achieve very high SINRs and
contribute a significant fraction of the total throughput. As a result, the throughput contribution from these strong users only
slightly decreases with increasing wall-loss value $W_l$ due to the concavity characteristic of the throughput formula. Finally, the spectral efficiency
of proposed HPC scheme decreases with $W_l$; however, it consistently achieves better performance than the TPC scheme.

Fig.~\ref{Ch3_fig:struvSINR} shows the number of supported users for different
schemes. It is shown that our proposed HPC adaptation algorithm can maintain the SINR requirements for the larger
number of users compared with the TPC and OPC algorithms.
In particular, our proposed algorithm can support roughly the same number of users as the TPC algorithm for low values of target SINRs. However, our proposed algorithm performs better than the TPC scheme for higher target SINRs.
In fact, almost all users utilize the maximum power under the TPC scheme at high target SINRs, which prevent all of
them from achieving their target SINRs. On the other hand, in our proposed scheme, nonsupported users tend to use as
small transmit power as possible (i.e., high values of $\alpha_i$).
This enables more users to attain their target SINRs. The figure also shows that the proposed HPC adaptation algorithm performs better than the OPC scheme for all values of the target SINR.

\begin{figure}[!t]
        \centering
\includegraphics[width=0.7 \textwidth]{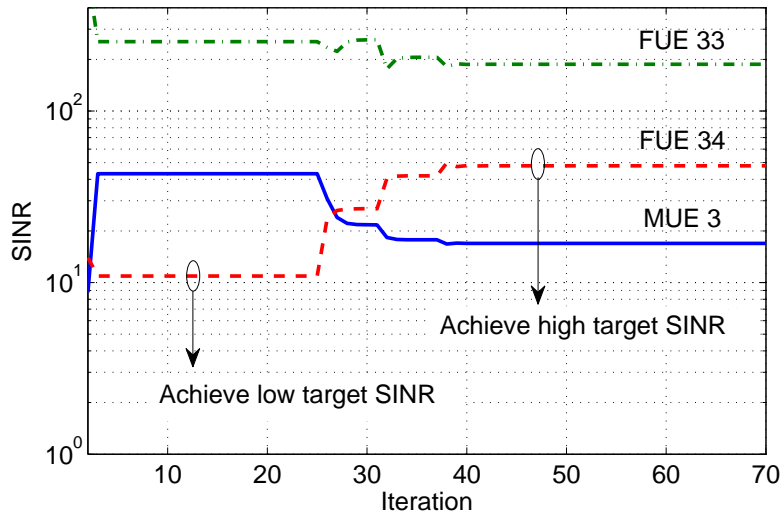}
\caption{SINR of users in femtocell 11 as FUE 34 attempts to achieve the higher target SINR (in linear scale).}
\label{Ch3_fig:hybrid}
\end{figure}

In Fig.~\ref{Ch3_fig:hybrid}, we illustrate the SINR dynamics in femtocell 11 under the association of one MUE (MUE 3).
This femtocell and the corresponding users are indicated in Fig.~\ref{Ch3_fig:system}.
Specifically, FUE devices attempt to achieve the low target SINR initially, which is equal to 5. As this SINR requirement is fulfilled, FUE 34 is not ``happy'' and attempts to achieve a higher target SINR of 50.
This figure shows that the proposed HPC adaptation algorithm allows FUE 34 to achieve this QoS goal where it gradually
increases its power to reach the new target SINR. 
This, in turn, results in the decrease in SINR of MUE 3 and the other FUE (FUE 33).
This figure, therefore, confirms that our proposed algorithm allows FUE devices to achieve flexible spectrum
sharing with associated MUE devices, which is a desirable feature to realize the hybrid access mode at femtocells.
FUE devices 3, 33, and 34 are indicated in Fig.~\ref{Ch3_fig:system}.

\section{Conclusion}
\label{Ch3_ccls}
We have developed a general distributed BSA and PC algorithm and demonstrated how it can be utilized to design an efficient hybrid spectrum access strategy for two-tier macrocell–femtocell networks.
The BSA mechanism integrated
in this algorithm aims to minimize user effective interference.
In addition, a novel HPC scheme has been proposed, which can
be used in this algorithm. We have proved the convergence of
the proposed algorithm by using the 2.s.s. function approach.
We have also developed an efficient adaptation mechanism for
the proposed algorithm so that users can achieve their SINR
requirements whenever possible while we can exploit multiuser
diversity to increase the system throughput. Numerical results
were presented to demonstrate the desirable performance of the
proposed algorithms.

\section{Appendices}

\subsection{Proof of Theorem~\ref{Ch3_thm3.1}}
\label{Ch3_apdx.thm3.1}
According to Lemma~\ref{Ch3_lemma3.1}, a PC algorithm converges if the corresponding puf is 2.s.s.
Hence, we can prove Theorem~\ref{Ch3_thm3.1} by 
showing that the puf $J^{\sf o}(\mathrm{p})=[J_1^{\sf o}(\mathrm{p}),...,J_M^{\sf o}(\mathrm{p})]$ is a 2.s.s. function with respect to $\mathrm{p}$, where $J_i^{\sf o}(\mathrm{p})=J'_i(R_i^{\sf o}(\mathrm{p}))$, and $R_i^{\sf o}(\mathrm{p}) =\mathrm{min}_{\mathit{k} \in D_i} R_i \left( \mathrm{p},\mathit{k}\right), \forall i \in \mathcal{M}$. We will prove $J_i^{'}(R_i^{\sf o}(\mathrm{p}))$ in the following.

Recall that we have assumed $\mathrm{J}'(\mathrm{R}(\mathrm{p}))$ is 2.s.s. with respect to $\mathrm{R}(\mathrm{p})$, where $\mathrm{J}(\mathrm{p})=\mathrm{J}'(\mathrm{R}(\mathrm{p}))$. Therefore, $\mathrm{J}'(\mathrm{R}^{\sf o }(\mathrm{p}))$ is a 2.s.s. function with respect to $\mathrm{R}^{\sf o }(\mathrm{p})$. Consequently, the proof is equivalent to showing the following statement: If $\frac{1}{a}\mathrm{p} \leq \mathrm{p}^{\prime} \leq a \mathrm{p}$, then we have
\begin{equation}
\label{Ch3_eqn:Ri2ss}
\frac{1}{a}R_i^{\sf o }(\mathrm{p}) \leq R_i^{\sf o }(\mathrm{p}^{\prime}) \leq a R_i^{\sf o }(\mathrm{p})\; \forall i \in \mathcal{M}.
\end{equation} 
We will prove (\ref{Ch3_eqn:Ri2ss}) by using contradiction. Let $b_i$ and $b_i^{\prime}$ be the base stations chosen by user $i$ corresponding to $R_i^{\sf o }(\mathrm{p})$ and $R_i^{\sf o }(\mathrm{p}^{\prime})$, respectively. From our assumption $\frac{1}{a}\mathrm{p} \leq \mathrm{p}^{\prime} \leq a \mathrm{p}$ for $a >1$, we have $\frac{1}{a}p_j \leq p_j^{\prime} \leq a p_j, \: \forall j \in \mathcal{M}$. Performing some simple manipulations, we can obtain
\begin{equation}
\label{Ch3_eqn:Ribi}
\begin{array}{c}
(1/a)R_i \left( \mathrm{p},b_i^{\prime}\right) \leq R_i^{\sf o}(\mathrm{p} ^{\prime} ) \leq a R_i(\mathrm{p},b_i^{\prime}), \\
(1/a)R_i^{\sf o} \left( \mathrm{p}\right) \leq R_i(\mathrm{p} ^{\prime},b_i ) \leq a R_i^{\sf o}(\mathrm{p}).
\end{array}
\end{equation}

Now suppose that (\ref{Ch3_eqn:Ri2ss}) is not satisfied if $\frac{1}{a}\mathrm{p} \leq \mathrm{p}^{\prime} \leq a \mathrm{p}$. 
Then, we must have $\frac{1}{a}R_i^{\sf o }(\mathrm{p}) > R_i^{\sf o }(\mathrm{p}^{\prime})$ or $R_i^{\sf o }(\mathrm{p}^{\prime}) > a R_i^{\sf o }(\mathrm{p})$. Let us consider these
possible cases in the following.
\begin{itemize}
\item If $\frac{1}{a}R_i^{\sf o }(\mathrm{p}) > R_i^{\sf o }(\mathrm{p}^{\prime})$, then using the results in (\ref{Ch3_eqn:Ribi}) yields
\begin{equation}
\label{Ch3_eqn:cntr01}
R_i^{\sf o }(\mathrm{p}) > R_i \left( \mathrm{p},b_i^{\prime}\right).
\end{equation}
\item Similarly, if $R_i^{\sf o }(\mathrm{p}^{\prime}) > a R_i^{\sf o }(\mathrm{p})$, then using the results in (\ref{Ch3_eqn:Ribi}) yields 
\begin{equation}
\label{Ch3_eqn:cntr02}
R_i^{\sf o }(\mathrm{p}^{\prime}) >R_i(\mathrm{p}^{\prime},b_i).
\end{equation}
\end{itemize}

Both (\ref{Ch3_eqn:cntr01}) and (\ref{Ch3_eqn:cntr02}) indeed result in contradiction to the definition of $R_i^{\sf o }(\mathrm{p})$ and $R_i^{\sf o }(\mathrm{p}^{\prime})$, which is given in (\ref{Ch3_Rmin}). Hence, we have $\frac{1}{a}R_i^{\sf o }(\mathrm{p}) \leq R_i^{\sf o }(\mathrm{p}^{\prime}) \leq a R_i^{\sf o }(\mathrm{p})$ if $\frac{1}{a}\mathrm{p} \leq \mathrm{p}^{\prime} \leq a \mathrm{p}$, which completes the proof of the theorem.

\subsection{Proof of Lemma~\ref{Ch3_lemma3.2}}
\label{Ch3_apdx.lemma3.2}

We can rewrite the payoff function in (\ref{Ch3_opc1}) as
\begin{equation}
\label{Ch3_mopc1}
U_i^{(1)}(\mathrm{p})=\Gamma_i^x-\lambda_i p_i=\left( \frac{p_i}{R_i\left( \mathrm{p}\right) } \right)^x -\lambda_i p_i.
\end{equation}
Taking the first and second derivatives of this payoff function with respect to $p_i$ yields
\begin{eqnarray}
\frac{\partial U_i^{(1)}}{\partial p_i}&=&\frac{x}{R_i\left( \mathrm{p}\right)}\left( \frac{p_i}{R_i\left( \mathrm{p}\right) }\right)^{x-1}-\lambda_i \label{Ch3_fdev}\\
\frac{\partial^2 U_i^{(1)}}{\partial p_i^2} &=&\frac{x(x-1)}{R_i\left( \mathrm{p}\right)^2}\left( \frac{p_i}{R_i\left( \mathrm{p}\right) }\right)^{x-2}. \label{Ch3_sdev}
\end{eqnarray}
From (\ref{Ch3_sdev}), it can be easily verified that $\frac{\partial^2 U_i^{(1)}}{\partial p_i^2}<0$ for $0<x<1$, which implies that
$U_i^{(1)}(\mathrm{p})$ is a concave function. Therefore, the best response can be obtained by setting the first derivative in (\ref{Ch3_fdev}) to zero.
After some manipulations, we can obtain the best response corresponding to the payoff function $U_i^{(1)}(\mathrm{p})$ as follows:
\begin{equation}
\label{Ch3_power_opc}
p_i = R_i\left( \mathrm{p}\right)^{\frac{x}{x-1}}\left( \lambda_i / x\right)^{\frac{1}{x-1}}=\xi_i R_i\left( \mathrm{p}\right)^{\frac{x}{x-1}} 
\end{equation}
where the second relationship in (\ref{Ch3_power_opc}) holds for $\xi_i=\left( \lambda_i/x\right)^{\frac{1}{x-1}}$.
Moreover, it can be also verified that the best response obtained in (\ref{Ch3_power_opc}) is exactly that corresponding to payoff function $U_i^{(2)}(\mathrm{p})$. Therefore, we have proved the lemma.

\subsection{Proof of Theorem~\ref{Ch3_thm3.2}}
\label{Ch3_apdx.thm3.2}
Here, we will show that puf $\mathrm{I}^H\left( \mathrm{p}\right)=\left[ I_1^H(\mathrm{p}),...,\right.$ $\left. I_M^H(\mathrm{p})\right] $ 
 in (\ref{Ch3_hpcrule}) is 2.s.s. with respect to $\mathrm{p}$. Note that if $\frac{1}{a}\mathrm{p} \leq \mathrm{p}^{\prime} \leq a \mathrm{p}$ for a given $a >1$, then from the definition of $R_i\left( \mathrm{p}\right) $ in (\ref{Ch3_Rp}), we have 
\begin{equation}
\label{Ch3_ineq_Rb}
(1/a)R_i \left( \mathrm{p}\right) \leq R_i(\mathrm{p} ^{\prime} ) \leq a R_i(\mathrm{p}), \; \forall i \in \mathcal{M}.
\end{equation}
Let $I_i^{\prime H} (R_i( \mathrm{p})) = \mathrm{min} \left\lbrace \bar{p}_i, I'_i(R_i(\mathrm{p})) \right\rbrace$, which is a function of $R_i(\mathrm{p})$.
Note that we have $I_i^H\left( \mathrm{p}\right)=I_i^{\prime H} (R_i( \mathrm{p}))$.
To prove that $\mathrm{I}^H(\mathrm{p})$ is 2.s.s. with respect to $\mathrm{p}$, we can equivalently show that $I_i^{\prime H} (R_i( \mathrm{p}))$ is 2.s.s. with respect to $R_i(\mathrm{p})$. 

Since $\frac{x}{x-1}<0$ for $0<x\leq 1/2$, it can be verified that $I'_i(R_i(\mathrm{p}))$ is convex with respect to $R_i(\mathrm{p})>0$ and $\lim_{R_i(\mathrm{p})\rightarrow \left\lbrace 0,\infty\right\rbrace }I'_i(R_i(\mathrm{p}))=\infty$. Hence, there are, at most, two values of $R_i(\mathrm{p})$ that satisfy $I'_i(R_i(\mathrm{p}))=\bar{p}_i$.

If there is no or only one such intersection point, we have $I'_i(R_i(\mathrm{p}))\geq \bar{p}_i$ or $I_i^{\prime H}(R_i(\mathrm{p}))= \bar{p}_i$. For both cases, we have $(1/a) I_i^{\prime H}(R_i(\mathrm{p})) < I_i^{\prime H}(R_i(\mathrm{p}^{\prime} )) < a I_i^{\prime H}(R_i(\mathrm{p}))$ since $a>1$.

If there are two intersection points, let $R_i^l$ and $R_i^u$ be the values of $R_i(\mathrm{p})$ at these two points, where $R_i^l<R_i^u$. 
Moreover, the puf of HPC algorithm in these cases must satisfy
\begin{equation}
\label{Ch3_proof_thm01_2}
I_i^{\prime H}(R_i(\mathrm{p})) = \left\lbrace {\begin{array}{*{20}{ll}}
I'_i(R_i(\mathrm{p})), & \text{if} \: R_i^l \leq R_i( \mathrm{p} ) \leq R_i^u\\
\bar{p}_i, & \text{otherwise}
\end{array}} \right.
\end{equation} 
Let us consider all possible cases in the following.
\begin{itemize}
\item If $\left\lbrace R_i( \mathrm{p} ),R_i( \mathrm{p}^{\prime} )\right\rbrace \not\subset \left[ R_i^l,R_i^u\right] $ then we have $I_i^{\prime H}(R_i(\mathrm{p}))=I_i^{\prime H}(R_i(\mathrm{p}^{\prime}))=\bar{p}_i$. Therefore, it is easy to obtain that $(1/a) I_i^{\prime H}(R_i(\mathrm{p})) < I_i^{\prime H}(R_i(\mathrm{p}^{\prime} )) < a I_i^{\prime H}(R_i(\mathrm{p}))$.
\item If $\left\lbrace R_i( \mathrm{p} ),R_i( \mathrm{p}^{\prime} )\right\rbrace \subset \left[ R_i^l,R_i^u\right]$, then we have $I_i^{\prime H}(R_i(\mathrm{p}))=I'_i(R_i(\mathrm{p}))$ and $I_i^{\prime H}(R_i(\mathrm{p}^{\prime}))=I'_i(R_i(\mathrm{p}^{\prime}))$. Let $r=\frac{R_i( \mathrm{p} )}{R_i( \mathrm{p} ^{\prime})}$, we have 
\begin{equation}
\label{Ch3_r02}
I'_i(R_i(\mathrm{p}))=\dfrac{(1/r)^{\frac{x}{1-x}} \alpha_i \xi_i R_i( \mathrm{p}^{\prime})^{\frac{x}{x-1}}+ r\hat{\gamma}_i R_i( \mathrm{p}^{\prime})}{\alpha_i + 1}.
\end{equation}
In addition, using the results in (\ref{Ch3_ineq_Rb}) yields
$\left\lbrace 1/r,r,(1/r)^{\frac{x}{1-x}}\right\rbrace \subset \left[ 1/a,a\right]$ 
since $0<x \leq 1/2$. Therefore, we have
$(1/a)I'_i(R_i(\mathrm{p}^{\prime})) \leq I'_i(R_i(\mathrm{p})) \leq a I'_i(R_i(\mathrm{p}^{\prime})).$
Moreover, these inequalities hold if $x=1/2$ and both $1/r$ and $r$ are equal to $a$ or $1/a$. Evidently, this cannot be satisfied since $a>1$. Thus, if $0<x \leq 1/2$, we must have
\begin{equation}
\label{Ch3_r05}
(1/a)I_i^{\prime H}(R_i(\mathrm{p})) < I_i^{\prime H}(R_i(\mathrm{p}^{\prime} )) < a I_i^{\prime H}(R_i(\mathrm{p})).
\end{equation}
\item Finally, we have remaining cases where $R_i( \mathrm{p})$ or $R_i( \mathrm{p}^{\prime})$ $\in \left[ R_i^l,R_i^u\right] $. We will only consider the case $R_i( \mathrm{p}) \leq R_i^l \leq R_i( \mathrm{p}^{\prime})\leq R_i^u$ $(\ast)$ since the proofs for other cases can be similarly done. 
Using the results in $(\ast)$ yields $I_i^{\prime H}(R_i(\mathrm{p}))=\bar{p}_i$, $I_i^{\prime H}(R_i^l)=\bar{p}_i=I'_i(R_i^l)$, and $I_i^{\prime H}(R_i(\mathrm{p}^{\prime}))=I'_i(R_i(\mathrm{p}^{\prime}))$. After performing some simple manipulations, we obtain
\begin{equation}
\label{Ch3_Ril}
(1/a)R_i^l \leq R_i(\mathrm{p} ^{\prime} ) \leq a R_i^l.
\end{equation}
By applying the results in the previous case, we have $\frac{1}{a}I_i^{\prime H}(R_i^l) < I_i^{\prime H}(R_i(\mathrm{p}^{\prime})) < a I_i^{\prime H}(R_i^l) $. Since $I_i^{\prime H}(R_i^l)=I_i^{\prime H}(R_i(\mathrm{p}))=\bar{p}_i$, the relationship (\ref{Ch3_r05}) holds, which completes the proof for this case.
\end{itemize}

In summary, we have proved that our proposed puf is 2.s.s. for $0<x \leq 1/2$. In fact, when $x>1/2$, $(1/r)^{\frac{x}{1-x}}$ may not be in $\left[ 1/a,a\right]$. Hence, the p.u.f. might not be 2.s.s.

\subsection{Proof of Lemma~\ref{Ch3_lemma3.3}}
\label{Ch3_apdx.lemma3.3}
According to the power update rule of the HPC algorithm given in (\ref{Ch3_hpcrule}), the SINR of user $i$ at convergence
can be expressed as
\begin{equation}
\label{Ch3_eqn:gamma_HPC}
\Gamma_i(\mathrm{p}^*) =\dfrac{p_i^*}{R_i(\mathrm{p}^*)}=\dfrac{\mathrm{min}\left\lbrace \bar{p}_i,I_i(\mathrm{p}^*)\right\rbrace }{R_i(\mathrm{p}^*)}.
\end{equation}
Recall that $R^{\sf th}_i = \bar{p}_i/\hat{\gamma}_i$. Using the relationship in (\ref{Ch3_xi2}) yields
\begin{eqnarray}
\label{Ch3_equ:gam_appC1}
\Gamma_i(\mathrm{p}^*) =\left\lbrace \begin{array}{*{10}{l}}
\hat{\gamma}_i \dfrac{R_i^{\sf th}}{R_i(\mathrm{p}^*)}, & p_i^*=\bar{p}_i,\\
\hat{\gamma}_i\dfrac{\alpha_i [R_i^{\sf th}/R_i( \mathrm{p}^*)]^{\frac{1}{1-x}}+1 }{\alpha_i + 1}, &p_i^* < \bar{p}_i.
\end{array}
\right. 
\end{eqnarray}

We can now utilize this result to prove the lemma. Specifically, let us consider the following cases. 
\begin{enumerate}
\item If $R_i(\mathrm{p}^*) > R_i^{\sf th}$, then we have $R_i^{\sf th}/R_i( \mathrm{p}^*)$ and $[R_i^{\sf th}/R_i( \mathrm{p}^*)]^{\frac{1}{1-x}}$ strictly less than $1$ 
since $0<x \leq 1/2$. Hence, we have $\Gamma_i(\mathrm{p}^*) < \hat{\gamma}_i$ if $p_i^*=\bar{p}_i$ in (\ref{Ch3_equ:gam_appC1}). We now prove 
statement 1 of the lemma for the remaining case.
For $\alpha_i>0$, we have $\Gamma_i(\mathrm{p}^*) < \hat{\gamma}_i$ according to (\ref{Ch3_equ:gam_appC1}). In addition, for $\alpha_i=0$, we have $I_i(\mathrm{p}^*)=\hat{\gamma}_iR_i(\mathrm{p}^*)>\hat{\gamma}_iR_i^{\sf th}=\bar{p}_i$. Hence, the HPC power update in (\ref{Ch3_hpcrule}) results in
 $p_i^*=\bar{p}_i$, which leads to $\Gamma_i(\mathrm{p}^*)=\hat{\gamma}_i \dfrac{R_i^{\sf th}}{R_i(\mathrm{p}^*)} < \hat{\gamma}_i$. Therefore, we have completed
 the proof for the first statement of the lemma. 

\item If $R_i(\mathrm{p}^*) \leq R_i^{\sf th}$, then we have $R_i^{\sf th}/R_i(\mathrm{p}^*) \geq 1$. Hence, using the results in (\ref{Ch3_equ:gam_appC1}) yields $\Gamma_i(\mathrm{p}^*) \geq \hat{\gamma}_i$, $\forall \: \alpha_i \geq 0$, which completes proof for the second statement of the lemma.
 
\item If $R_i(\mathrm{p}^*)<R_i^{\sf th}$ and $\alpha_i=0$, then we have $I_i(\mathrm{p}^*)=\hat{\gamma}_iR_i(\mathrm{p}^*)<\hat{\gamma}_iR_i^{\sf th}=\bar{p}_i$. As a result, we have $p_i^*=I_i(\mathrm{p}^*)$ due to the HPC power update in (\ref{Ch3_hpcrule}), which results in $\Gamma_i(\mathrm{p}^*)=\hat{\gamma}_i$. 
Inversely, if $\Gamma_i(\mathrm{p}^*) = \hat{\gamma}_i$ and $R_i(\mathrm{p}^*)<R_i^{\sf th}$, then we must have $p_i^* < \bar{p}_i$ [i.e., the second
case in (\ref{Ch3_equ:gam_appC1})]. Therefore, we have $p_i^*=I_i(\mathrm{p}^*)$ and $\alpha_i=0$, 
which completes the proof for the following statement of the lemma: 
 If $R_i(\mathrm{p}^*) < R_i^{\sf th}$, then $\Gamma_i(\mathrm{p}^*) = \hat{\gamma}_i$ iff $\alpha_i = 0$. 
Note that we have $\Gamma_i(\mathrm{p}^*) > \hat{\gamma}_i$ if and only if $\alpha_i>0$ and $R_i(\mathrm{p}^*)<R_i^{\sf th}$. 
We now prove the last statement of the lemma. Let us take the derivative of $I_i(\mathrm{p}^*)$ with respect to $\alpha_i$, then
 we have
\begin{equation} \label{Ch3_difGam}
\dfrac{\partial I_i(\mathrm{p}^*)}{\partial \alpha_i}=R_i(\mathrm{p}^*)\hat{\gamma}_i\dfrac{[R_i^{\sf th}/R_i(\mathrm{p}^*)]^{\frac{1}{1-x}}-1}{(1+\alpha_i)^2}.
\end{equation}
This implies that if we have $R_i(\mathrm{p}^*) < R_i^{\sf th}(\mathrm{p})$, then $I_i(\mathrm{p}^*)$ decreases as $\alpha_i$ decreases. 
\end{enumerate}

\subsection{Proof of Theorem~\ref{Ch3_thm3.3}}
\label{Ch3_apdx.thm3.3}

\subsubsection{Proof of Statement 1 in Theorem~\ref{Ch3_thm3.3}}
If $\overline{N}_{\sf HPC}=|\mathcal{M}|$, we obviously have $\overline{N}_{\sf HPC} \geq \overline{N}_{\sf TPC}$.
Therefore, we only need to consider the scenarios where there are one or several non-supported users after running Algorithm \ref{Ch3_alg:gms3}. In these cases, as
Algorithm~\ref{Ch3_alg:gms3} terminates (i.e., the end of step 3 in Algorithm~\ref{Ch3_alg:gms3}), we have $\alpha_i^{(s)}=0, \: \forall i \in \overline{\mathcal{M}}^{(s)}$, where $s$ denotes the index of the last iteration.
Let $\Delta^{s}$ is the set storing the values of $\alpha_i^{(s)}, \: i \in \mathcal{M}$, i.e., $\alpha_i^{(s)}$ is the value of $\alpha_i$ for user $i$ in the last iteration (before we set them equal to $\alpha_i^{\ast} \in \Delta^{\ast}$ in step 4 of Algorithm~\ref{Ch3_alg:gms3}). Since each potential user $i$ reduce $\alpha_i$
until it reaches zero in Algorithm \ref{Ch3_alg:gms3}, we have $\alpha_i^{(s)}=0, \: \forall i \in \overline{\mathcal{M}}^{(s)}$ and $\overline{N}_{\sf HPC} \geq \overline{N}^{s}$, where $\overline{N}^{s}$ is the number of supported users after running HPC puf using $\Delta^{s}$. Hence, $\overline{N}_{\sf HPC} \geq \overline{N}_{\sf TPC}$ holds if we can prove that 
\begin{equation}
\label{Ch3_equ:Ns_Ntpc}
\overline{N}^{s} \geq \overline{N}_{\sf TPC}.
\end{equation}
We will prove this in the following.
Now, let $\mathrm{p}_s^{\ast}$ and $\mathrm{p}_{\sf TPC}^{\ast}$ denote the power vectors at convergence, which are obtained by running the HPC algorithm with $\Delta^{s}$ and the TPC algorithm, respectively. Note that $\mathrm{p}_{\sf TPC}^{\ast}$ is also the power vector at convergence if we run TPC algorithm with initial powers set as $\mathrm{p}_{\sf TPC}^{(0)}=\mathrm{p}_s^{\ast}$. Let us consider this initial setting for the TPC scheme and investigate the following
 possible scenarios for a particular user $i$. 

\begin{itemize}
\item If user $i$ is supported, then we have $R_i(\mathrm{p}_{\sf TPC}^{(0)})=R_i(\mathrm{p}_s^{\ast}) \leq R_i^{\sf th}$, and $p_{i, \sf TPC}^{(0)}=p_{i,s}^{\ast}=\hat{\gamma}_i R_i(\mathrm{p}_{\sf TPC}^{(0)})$ since $\alpha_i^{(s)}=0$.
\item If user $i$ is nonsupported, then we have $R_i(\mathrm{p}_{\sf TPC}^{(0)})=R_i(\mathrm{p}_s^{\ast}) > R_i^{\sf th}$ and $p_{i,\sf TPC}^{(0)} \leq \bar{p}_i$. 
\end{itemize}

Let us analyze the dynamics of the power updates due to the TPC algorithm.
In particular, the power of each user is updated in the next iteration as follows.
\begin{itemize}
\item $p_{i, \sf TPC}^{(1)}=\hat{\gamma}_i R_i(\mathrm{p}_{\sf TPC}^{(0)})=p_{i, \sf TPC}^{(0)}$ if $R_i(\mathrm{p}_{\sf TPC}^{(0)}) \leq R_i^{\sf th}$.
\item $p_{i, \sf TPC}^{(1)}=\bar{p}_i \geq p_{i,\sf  TPC}^{(0)}$ if $R_i(\mathrm{p}_{\sf TPC}^{(0)}) > R_i^{\sf th}$.
\end{itemize}

Therefore, we always have $\mathrm{p}_{\sf TPC}^{(1)} \geq \mathrm{p}_{\sf TPC}^{(0)}$. Moreover, it can be verified that $R_i(\mathrm{p}_{\sf TPC}^{(n)}) \geq R_i(\mathrm{p}_{\sf TPC}^{(n-1)}), \: \forall i \in \mathcal{M}$ if $\mathrm{p}_{\sf TPC}^{(n)} \geq \mathrm{p}_{\sf TPC}^{(n-1)}$. Using this result and consulting the
TPC puf, we can show that $\mathrm{p}_{\sf TPC}^{(n+1)} \geq \mathrm{p}_{\sf TPC}^{(n)}$ if $\mathrm{p}_{\sf TPC}^{(n)} \geq \mathrm{p}_{\sf TPC}^{(n-1)}$. Hence, under the TPC algorithm, $\mathrm{p}_{\sf TPC}^{(n)}$ will increase and converge to $\mathrm{p}_{\sf TPC}^{\ast}$ if $\mathrm{p}_{\sf TPC}^{(1)} \geq \mathrm{p}_{\sf TPC}^{(0)}$. Therefore, we have
\begin{equation}
\label{Ch3_equ:ps_ptpc}
\mathrm{p}_{\sf TPC}^{\ast} \geq \mathrm{p}_{\sf TPC}^{(0)} = \mathrm{p}_s^{\ast}.
\end{equation}

From this, we can achieve $R_i(\mathrm{p}_{\sf TPC}^{\ast}) \geq R_i(\mathrm{p}_s^{\ast}), \: \forall i \in \mathcal{M}$. Hence, if user $i$ is a supported user under TPC algorithm, $R_i(\mathrm{p}_{\sf TPC}^{\ast})$ must not be greater than $R_i^{\sf th}$. As a result, $R_i(\mathrm{p}_{s}^{\ast})$ is also not greater than $R_i^{\sf th}$. This implies that user $i$ is also a supported user under HPC algorithm with $\Delta^{s}$. Therefore, any users that can be supported by TPC algorithm must also be supported by the HPC algorithm. Hence, (\ref{Ch3_equ:Ns_Ntpc}) holds. This completes the proof that $\overline{N}_{\sf HPC} \geq \overline{N}_{\sf TPC}$.

\subsubsection{Proof of Statement 2 in Theorem~\ref{Ch3_thm3.3}}
For a feasible system, all users can achieve their target SINRs under the TPC algorithm, i.e., $\overline{N}_{\sf TPC}=|\mathcal{M}|$. Using the argument in the proof of the first statement of Theorem~\ref{Ch3_thm3.3}, we conclude that $\overline{N}_{\sf HPC}$ is also equal to $|\mathcal{M}|$. In addition, under the HPC adaptation algorithm, we have $\Gamma_i(\mathrm{p}_{\sf HPC}^{\ast}) \geq \hat{\gamma}_i$, $\forall i \in \mathcal{M}$. Therefore, supposing all potential users $i$ decrease their $\alpha_i$ slowly, some of them will achieve SINR greater than the target values if $\alpha_i>0$ at convergence. This completes the proof for the second 
statement of Theorem~\ref{Ch3_thm3.3}.
\chapter{Fair Resource Allocation for OFDMA Femtocell Networks with Macrocell Protection}
\renewcommand{\rightmark}{Chapter 6.  Fair Resource Allocation for OFDMA Femtocells with Macrocell Protection}
\label{Ch4}
The content of this chapter was published in IEEE Transactions on Vehicular Technology in the following paper:

Vu N. Ha and Long B. Le, ``Fair resource allocation for OFDMA femtocell networks with macrocell protection,'' {\em IEEE Trans. on Veh. Tech.,} vol. 63, no. 3, pp. 1388--1401, Mar. 2014.

\section{Abstract}
\label{Ch4_Abs}
We consider the joint subchannel allocation and power control problem for orthogonal frequency-division multiple-access (OFDMA) femtocell networks in this paper. 
Specifically, we are interested in the fair resource-sharing solution for users
in each femtocell that maximizes the total minimum spectral efficiency of all femtocells subject to protection constraints for
the prioritized macro users. Toward this end, we present the
mathematical formulation for the uplink resource-allocation
problem and propose an optimal exhaustive search algorithm.
Given the exponential complexity of the optimal algorithm, we
develop a distributed and low-complexity algorithm to find an
efficient solution for the problem. We prove that the proposed
algorithm converges and we analyze its complexity. Then, we
extend the proposed algorithm in three different directions,
namely, downlink context, resource allocation with rate adaptation
for femto users, and consideration of a hybrid access strategy
where some macro users are allowed to connect with nearby femto
base stations (FBSs) to improve the performance of the femto
tier. Finally, numerical results are presented to demonstrate the
desirable performance of the proposed algorithms.

\section{Introduction}
\label{Ch4_Introduction}
Massive deployment of small cells such as femtocells provides an important solution to fundamentally improve the indoor throughput and coverage performance of wireless cellular networks \cite{Andrews12}. 
Moreover, fourth-generation
(4G) and beyond cellular networks are based on orthogonal
frequency-division multiple access (OFDMA) that provides
flexibility in radio resource management and robustness against
adverse effects of multipath fading \cite{perez09}. Hence, the design of an
OFDMA femtocell tier so that it can coexist efficiently with
the existing macrocell tier is an important research topic. One
particular challenge in realizing this objective stems from the
fact that femtocells are randomly deployed and they operate on
the same frequency band with the macrocell tier, which can create strong cross-tier interference to the macrocells. Therefore,
to successfully deploy the femtocell network, it is required to
resolve many technical challenges, which range from resource
allocation and synchronization \cite{guoqing06} to protection of existing
macrocells against cross-tier interference from femtocells \cite{Le12}.

There have been some existing works studying the resource allocation for OFDMA-based femtocell networks \cite{chuhan11,yanzan12,yushan12,Wangchi12,Long12,hoon11,renchao12,lu12}. 
In \cite{chuhan11}, spectrum sharing and access control strategies are proposed for femtocell networks by using the water-filling algorithm and game theory technique. 
The authors in \cite{yanzan12} propose two
methods to mitigate the uplink interference for OFDMA femtocell networks. In the first method, femtocell user equipments
(FUEs) that produce strong interference to macrocell user
equipments (MUEs) are only allowed to use dedicated subchannels, whereas the remaining FUEs can utilize all subchannels
assigned for the femto tier. In the second method, an auction-based algorithm is devised to optimize the channel assignment
for both tiers to mitigate the co-tier interference. Interference avoidance and resource allocation issues for OFDMA femtocell networks are also studied in \cite{yushan12} 
where a greedy algorithm is developed to enable self-organization of femtocells. 
In \cite{Wangchi12}, Cheung  \textit{et al.} study the network performance where the macrocells and femtocells utilize separate sets of subchannels or share the whole spectrum under both open- and closed-access strategies.
The quality-of-service (QoS)-aware admission control design for OFDMA femtocells is conducted in \cite{Long12}. All these
existing works, however, do not consider power control in their
proposed resource-allocation algorithms.

In \cite{hoon11}, an adaptive femtocell interference management algorithm comprising three control loops that run continuously and
separately at macro and femto base stations (MBS and FBS) to determine initial femto maximum power, to decide target
signal-to-interference-plus-noise ratios (SINRs) for FUEs, and to control the transmission power, respectively, is developed.
However, subchannel assignment (SA) is not studied, and the maximum power constraint for each user is not considered in
this paper. 
Spectrum sharing and power allocation (PA) are investigated in \cite{renchao12,lu12}.
In particular, \cite{renchao12} aims to enhance the energy efficiency, whereas the main design objective of \cite{lu12} is to achieve user-level fairness for cognitive femtocells. However, the algorithms
developed in \cite{renchao12} and \cite{lu12} do not provide QoS guarantees for users of both network tiers.
\nomenclature{MBS}{Macro Base Station}

In this paper, we study the joint subchannel and PA problem for femtocell networks considering fairness for FUEs in
each femtocell, protection of MUEs, and maximum power constraints.
To the best of our knowledge, none of the existing works on the OFDMA-based femtocell network jointly
consider all these design issues.
In particular, we make the following contributions.
\begin{itemize}
\item
We formulate the uplink fair subchannel and PA problem,
study its optimal structure, and present an algorithm
to find its optimal solution. To overcome the exponential complexity of the centralized optimal exhaustive
search algorithm, we then develop a distributed and low-complexity resource-allocation algorithm. The proposed
algorithm iteratively updates the SAs at each femtocell
based on carefully designed assignment weights and
transmission power values for the subchannels accordingly. We prove the convergence of the proposed algorithm and analyze its complexity.

\item 
We extend the proposed solution in three important directions. Specifically, we show how the proposed resource-allocation algorithm can be adapted for the downlink context. Moreover, we develop a distributed algorithm
for the scenario where FUEs in different femtocells are allowed to choose different target rates per subchannel
to better adapt to the network interference.
Finally, we extend the proposed algorithm to implement the hybrid
spectrum access where some MUEs are permitted to connect to nearby FBSs.

\item 
Numerical results are presented to demonstrate the efficacy of the proposed algorithms and their relative
performance compared with the optimal algorithm. In particular, we demonstrate the impacts of target rates
per subchannel (i.e., modulation schemes), maximum power values, rate adaptation, and hybrid access on the
performance of the femtocell network.
\end{itemize}

The remainder of this paper is organized as follows. 
We describe the system model and the uplink problem formulation in Section~\ref{Ch4_section2}.
In Section~\ref{Ch4_sec:OSOA}, we present both optimal exhaustive
search and suboptimal resource-allocation algorithms. Extensions for the downlink scenario, adaptive-rate transmission, and
hybrid access control are described in Section~\ref{Ch4_sec:extend}. 
Numerical results are presented in Section~\ref{Ch4_result} followed by conclusion in Section~\ref{Ch4_ccls}. Key notations used in this paper are summarized in Table~\ref{Ch4_keynotation1}.
 
\section{System Model}
\label{Ch4_section2}
\subsection{System Model}

We consider the OFDMA-based two-tier macrocell-femtocell network where users of both tiers share the spectrum comprising $N$ subchannels.
We assume that there are $M_{\sf f}$ FUEs served by $(K-1)$ FBSs, which 
are underlaid by one macrocell serving $M_{\sf m}$ MUEs.
Let $\mathcal{U}_k$ be the set of user equipments (UEs) in the $k$-th cell, i.e., they are served by the base station (BS) $k$ of the corresponding tier.
For convenience, let $\mathcal{U}_{\sf m} \triangleq \mathcal{U}_1$ represent the set of MUEs and $\mathcal{U}_{\sf f} \triangleq \cup_{k=2}^{K} \mathcal{U}_k= \left\lbrace M_{\sf m}+1,...,M_{\sf m}+M_{\sf f} \right\rbrace$ denote the set of all FUEs. In addition, let $\mathcal{U}$ and $\mathcal{B}$ be the sets of all UEs and BSs, respectively.
Then, we have $\mathcal{U}= \mathcal{U}_{\sf m} \cup \mathcal{U}_{\sf f} = \left\lbrace 1, 2, ..., M \right\rbrace $ and $\mathcal{B}=\left\lbrace {1,2,...,K} \right\rbrace $ where BS $1$ is assumed to be the MBS and $\mathcal{B}_{\sf f}=\left\lbrace {2,...,K} \right\rbrace$ denotes the set of all FBSs.

We assume the fixed BS association for all UEs of both tiers in the network (i.e., each UE is served by a fixed BS in the
corresponding tier) to present the problem formulation here.
Relaxation of this assumption under the hybrid access design will be considered in Section~\ref{Ch4_sec:extend}.
Now, let $b_i \in \mathcal{B}$ denote the BS serving UE $i$, and
let $\mathcal{N}=\left\lbrace {1,2,...,N} \right\rbrace$ be the set of all orthogonal subchannels.
We assume there is no interference among transmissions on
 different subchannels. We consider a system with full frequency reuse where all $N$ subchannels are allocated for
UEs in all cells of either tier. To describe the SAs, 
let $\mathrm{\bf{A}} \in \mathfrak{R}^{M \times N}$ be the SA matrix for all $M$ UEs over $N$ subchannels where
\begin{equation}
\label{Ch4_eq:A}
\mathrm{\bf{A}}(i,n)=a^n_i=\left\lbrace \begin{array}{*{10}{l}}
1 & \text{if  subchannel $n$ is assigned for UE $i$}\\
0 & \text{otherwise}.
\end{array}
\right. 
\end{equation}
We assume that a subchannel can be allocated to, at most, one
UE in any cell. Then, we have
\begin{equation}
\label{Ch4_eq:c4}
\sum_{i \in \mathcal{U}_k} a_i^n \leq 1, \:\:\: \forall k \in \mathcal{B} \text{ and } \forall n \in \mathcal{N}.
\end{equation}
These constraints will be applied to all resource-allocation problems studied in this paper.

\subsection{Physical-layer Model}
We assume that $M$-ary quadrature amplitude modulation ($M\text{-QAM}$) is adopted for communications on each subchannel where the constellation size $s$ for $s\text{-QAM}$ is chosen from a predetermined set $\mathfrak{M}$.
To guarantee acceptable performance when a constellation size $s$
is adopted on a particular subchannel, we need to ensure that the SINR on that subchannel is not smaller than a corresponding
target SINR value $\bar{\gamma}(s)$. To determine $\bar{\gamma}(s)$, let $f\left( \gamma(s) \right)$ be the expression of 
bit error rate (BER) for SINR $\gamma(s)$ when the constellation size $s$ is adopted. Suppose we want to maintain
the BER to be not greater than a target value $\overline{P}_e$. 
Then, the SINR must satisfy the following
for the constellation size $s$:
\beqn \label{Ch4_tsnr}
\gamma(s) \geq  \bar{\gamma}(s) = f^{-1}\left( \overline{P}_e \right),
\eeqn
where $f^{-1}(.)$ denotes the inverse function of the BER function $f(.)$.
To demonstrate this calculation, suppose that $M\text{-QAM}$ modulation with square signal constellations (i.e., $s\text{-QAM}$ where $s \in \mathfrak{M} \triangleq \{4,16,64,256, \ldots, s_{\sf max}\} = \{2^{2\sigma}\vert$ $\sigma \text{ is a positive integer and } \sigma \leq 1/2 \log_2 s_{\sf max} \}$) is employed. According to \cite{proakis01}, the BER of the $s\text{-QAM}$ scheme with Gray encoding can be approximated as
\beq \label{Ch4_eq:ber}
f\left( \gamma(s) \right) \approx x_s\mathbb{Q}(\sqrt{y_s\gamma(s)}), \: s \in \mathfrak{M},
\eeq
where $\mathbb{Q}(.)$ stands for the \textit{Q}-function, $x_s=\frac{2(1-1/\sqrt{s})}{\log_2s}$, $y_s=\frac{3}{2(s-1)}$, and $\gamma(s)$ is the SINR. Using the
result in (\ref{Ch4_tsnr}), the target SINR for the $s\text{-QAM}$ modulation scheme can be calculated as
\beq \label{Ch4_eq:tSINR}
\bar{\gamma}(s) = \dfrac{\left[ \mathbb{Q}^{-1}( \overline{P}_e /x_s)\right]^2}{y_s}, \: s \in \mathfrak{M},
\eeq
where $\overline{P}_e$ is the target BER value. We assume that ideal Nyquist
data pulses are used where the symbol rate on each subchannel
is equal to $BW/N$ where $BW$ is the total bandwidth of $N$ subchannels. Then, if $s\text{-QAM}$ modulation scheme is employed, the
spectral efficiency per 1 Hz of system bandwidth is $\frac{\log_2s}{N}$ (in bits per second per hertz).

\subsection{Uplink Resource Allocation Problem}
Let $p^n_i$ represent the transmission power of UE $i$ over subchannel $n$ in the uplink where $p^n_i \geq 0$.
We impose the following constraints on the total transmission power values:
\begin{equation} \label{Ch4_powcon}
\sum_{n=1}^{N}p^n_i \leq P_i^{\texttt{max}}, \quad i \in \mathcal{U},
\end{equation}
where $P_i^{\texttt{max}}$ is the maximum transmission power of UE $i$.
Similar to the SAs, we define $\mathrm{\bf{P}}$ as an $M \times N$ PA matrix where $\mathrm{\bf{P}}(i,n)=p^n_i$. 
For convenience, we also define partitions of SA and PA matrices $\mathrm{\bf{A}}$ and $\mathrm{\bf{P}}$ as follows.
In particular, let $\mathrm{\bf{A}}_k, \mathrm{\bf{P}}_k \in \mathfrak{R}^{\vert \mathcal{U}_k \vert \times N}$ represent the SA and PA matrices 
for UEs in cell $k$ over $N$ subchannels, respectively. 

Let $h_{ij}^n$ and $\eta_i^n$ be the channel power gain from UE $j$ to BS $\mathit{i}$ and the noise power at BS $\mathit{i}$ over subchannel $n$, respectively. 
Then, for a given SA and PA solution, i.e., given $\mathrm{\bf{A}}$ and $\mathrm{\bf{P}}$, the SINR achieved at
 BS $b_i$ due to the transmission of UE $i$ over subchannel $n$  can be written as
\begin{equation}
\label{Ch4_sinr}
\Gamma_i^n(\mathrm{\bf{A}},\mathrm{\bf{P}})=\dfrac{a_i^n h_{b_i i}^n p_i^n}{\sum_{j \notin \mathcal{U}_{b_i}}{a_j^n h_{b_ij}^n p_j^n}+\eta_{b_i}^n}=\dfrac{a_i^np_i^n}{I_i^n(\mathrm{\bf{A}},\mathrm{\bf{P}})},
\end{equation}
where $I_i^n(\mathrm{\bf{A}},\mathrm{\bf{P}})$ is the effective interference corresponding to UE $i$ on subchannel $n$, which is defined as
\begin{equation}
\label{Ch4_eq:I}
 I_i^n (\mathrm{\bf{A}},\mathrm{\bf{P}}) \triangleq  \dfrac{\sum_{j \notin \mathcal{U}_{b_i}}{a_j^n h_{b_ij}^n p_j^n}+\eta_{b_i}^n}{h_{b_ii}^n}.
\end{equation}
We assume that SAs for MUEs have been determined by a
certain mechanism and fixed, whereas PAs over the corresponding subchannels for MUEs are updated to cope with the cross-tier interference due to transmissions of FUEs. This means that $\mathrm{\bf{A}}_1$ is
fixed while we need to determine $\mathrm{\bf{A}}_k, \: 2 \leq k \leq K$ and the corresponding PAs.
This assumption is quite natural
since the macro tier has been deployed before the introduction
of the femto tier; therefore, it would be more reasonable to keep
the existing subchannel allocation mechanism and solution in
the macro tier unchanged.

To protect the QoS of the licensed MUEs, we wish to maintain a predetermined target SINR $\bar{\gamma}_i^n$ for each of its 
assigned subchannel $n$. Specifically, we have the following constraints for the MUEs:
\begin{equation}
\label{Ch4_eq:rate_cond}
\Gamma_i^n(\mathrm{\bf{A}},\mathrm{\bf{P}}) \geq \bar{\gamma}_i^n,\:\: \text{if} \:\: a_i^n=1, \:\: \forall i \in \mathcal{U}_{\sf m},
\end{equation}
where $\bar{\gamma}_i^n = \bar{\gamma}(s^{\sf m})$, which is calculated as in (\ref{Ch4_tsnr}) if MUE $i$ employs $s^{\sf m}\text{-QAM}$
modulation scheme.

For FUEs, we assume that they all employ $s^{\sf f}\text{-QAM}$ for a predetermined $s^{\sf f}$.
Then, the spectral efficiency (in bits per second per hertz) achieved by FUE $i$ on one subchannel can be written as 
\begin{equation}
\label{Ch4_eq:rate_n}
r_i^n(\mathrm{\bf{A}},\mathrm{\bf{P}}) = \left\lbrace \begin{array}{*{5}{l}}
0, & \text{if } \Gamma_i^n(\mathrm{\bf{A}},\mathrm{\bf{P}}) < \bar{\gamma}_i^n,\\
r_{\sf f}, & \text{if } \Gamma_i^n(\mathrm{\bf{A}},\mathrm{\bf{P}}) \geq \bar{\gamma}_i^n,
\end{array} \right. 
\end{equation}
where $r_{\sf f}=(1/n)\log_2s^{\sf f}$, and $\bar{\gamma}_i^n = \bar{\gamma}(s^{\sf f})$, which is calculated as in (\ref{Ch4_tsnr}).
Note that we have assumed that FUEs in all femtocells employ the same constellation size $s^{\sf f}$ for ease of exposition. We will 
discuss the more general case in Section~\ref{Ch4_sec:extend}. 
In practice, we typically choose $s^{\sf f} \geq s^{\sf m}$ since FUEs can achieve higher
transmission rates on each subchannel than MUEs due to their
short distance to the associated FBS. This is indeed equivalent to $\bar{\gamma}(s^{\sf f}) \geq \bar{\gamma}(s^{\sf m})$. In the analysis, we mostly use the same notation
 $\bar{\gamma}_i^n$ to refer to the required target SINR for FUEs or MUEs where $\bar{\gamma}_i^n=\bar{\gamma}(s^{\sf f})$ for $\forall i \in \mathcal{U}_{\sf f}$ and $\bar{\gamma}_i^n=\bar{\gamma}(s^{\sf m})$ for $\forall i \in \mathcal{U}_{\sf m}$. 
Now, we can express the total spectral efficiency achieved by UE $i$ for given SA and PA matrices $\mathrm{\bf{A}}$ and $\mathrm{\bf{P}}$ as
\begin{eqnarray}
\label{Ch4_eq:rate}
R_i(\mathrm{\bf{A}},\mathrm{\bf{P}})=\sum_{n=1}^N r_i^n(\mathrm{\bf{A}},\mathrm{\bf{P}}).
\end{eqnarray}
To impose the max–min fairness for all FUEs associated with
the same FBS, we define the minimum spectral efficiency of
femtocell $k$ as the smallest value of spectral efficiency values
of all FUEs in that femtocell, e.g.,
\begin{equation}
\label{Ch4_eq:rate_fcell}
\mathrm{R}^{(k)}(\mathrm{\bf{A}},\mathrm{\bf{P}})=\min_{i \in \mathcal{U}_k} R_i(\mathrm{\bf{A}},\mathrm{\bf{P}}).
\end{equation}
The uplink resource-allocation problem for FUEs can be formulated as follows:
\begin{eqnarray}
\label{Ch4_objfun}
\mathop {\max} \limits_{(\mathrm{\bf{A}},\mathrm{\bf{P}}) } \sum \limits_{2\leq k \leq K} \mathrm{R}^{(k)}(\mathrm{\bf{A}},\mathrm{\bf{P}}) =  \sum \limits_{2\leq k \leq K} \min_{i \in \mathcal{U}_k} R_i(\mathrm{\bf{A}},\mathrm{\bf{P}}) \\
\text{s. t. constraints} \quad (\ref{Ch4_eq:c4}), (\ref{Ch4_powcon}),  (\ref{Ch4_eq:rate_cond}), \label{confun}
\end{eqnarray}

Therefore, the objective of this uplink resource-allocation problem aims to balance between attaining an equal rate for
FUEs in each femtocell (max-min fairness \cite{luo08}) and high total femtocell spectral efficiency
subject to SA constraints, FUE power constraints, and protection constraints for MUEs. Note that the SINR constraints for FUEs
are embedded in the calculation of FUE spectral efficiency in (\ref{Ch4_eq:rate_n}). The resource-allocation problem (\ref{Ch4_objfun})-(\ref{confun}) 
can be transformed into a standard mixed integer program, which is, therefore, NP-hard.

\section{Optimal and Supoptimal Algorithms}
\label{Ch4_sec:OSOA}

Here, we study the feasibility of a particular SA solution,
which reveals the optimal structure of the optimization problem (\ref{Ch4_objfun})-(\ref{confun})
based on which we develop optimal and suboptimal algorithms.

\subsection{Feasibility of an SA Solution}
\label{Ch4_sec:fsb_chk}

The resource-allocation problem (\ref{Ch4_objfun}) and (\ref{confun}) involves finding a joint SA and PA solution.
For a certain SA solution represented by matrix $\mathrm{\bf{A}}$ satisfying the SA constraints (\ref{Ch4_eq:c4}), we can indeed find its ``best'' PA and verify
its feasibility with respect to the constraints in (\ref{confun}). In particular, we wish to maintain the SINR constraints
for FUEs and MUEs in (\ref{Ch4_eq:rate_n}) and (\ref{Ch4_eq:rate_cond}), respectively.
Specifically, for each subchannel $n$ we need to maintain SINRs constraints $\Gamma^n_i(\mathrm{\bf{A}},\mathrm{\bf{P}}) \geq \bar{\gamma}_i^n$ 
for both MUEs and FUEs that are allocated with this subchannel.

Now, let $\mathcal{U}^n=\{i \in \mathcal{U}|a_i^n \neq 0\}=\{n_1,...,n_{c^n}\}$ denote the set of UEs of both tiers that are assigned
subchannel $n$ and $c^n=|\mathcal{U}^n|$ be the number of elements in set $\mathcal{U}^n$. Then, the SINR constraints for UEs in set $\mathcal{U}^n$, i.e.,
$\Gamma^n_i(\mathrm{\bf{A}},\mathrm{\bf{P}}) \geq \bar{\gamma}_i^n, \: i \in \mathcal{U}^n$, can be
rewritten in matrix form as 
\begin{equation}
\label{Ch4_eq:cond_mtx1}
(\mathrm{\bf{I}}^n-\mathrm{\bf{G}}^n\mathrm{\bf{H}}^n)\mathrm{\bf{p}}^n \geq \mathrm{\bf{g}}^n,
\end{equation}
where $\mathrm{\bf{g}}^n=\left[ g_{n_1}^n,...,g_{n_{c^n}}^n \right]^T$ with $g_i^n=\frac{\eta_{b_i}^n \bar{\gamma}_i^n}{h_{b_ii}^n}$, $\mathrm{\bf{I}}^n$ is $c^n \times c^n$ identity matrix, $\mathrm{\bf{G}}^n=\text{diag}\{\bar{\gamma}_{n_1}^n,...,\bar{\gamma}_{n_{c^n}}^n\}$, $\mathrm{\bf{p}}^n=[p_{n_1}^n,...,p_{n_{c^n}}^n]^T$, and $\mathrm{\bf{H}}^n$ is a $c^n \times c^n$ matrix defined as
\begin{equation}
\label{Ch4_eq:H}
\left[ \mathrm{\bf{H}}^n_{i,j}\right]=\left\lbrace \begin{array}{*{5}{l}}
0, & \text{if } j=i,\\
\frac{h_{b_{n_i}n_j}^n}{h_{b_{n_i}n_i}^n}, & \text{if } j \neq i.
\end{array} \right.
\end{equation}

To study feasible solutions for inequality (\ref{Ch4_eq:cond_mtx1}), we recall some standard results due to the Perron-Frobenius theorem \cite{Horn-85,seneta73,gant71} and their application to the power control problem \cite{bambos00}.
Let $\lambda_i$ be the $i^{\sf th}$ eigenvalue of matrix $\mathrm{\bf{G}}^n\mathrm{\bf{H}}^n$ and $\rho(\mathrm{\bf{G}}^n\mathrm{\bf{H}}^n)={\sf max}_{ i} | \lambda_i |$ be the maximum value of the
modulus of all eigenvalues (i.e., the spectral radius of $\mathrm{\bf{G}}^n\mathrm{\bf{H}}^n$). The results in the Perron-Frobenius theorem can be applied for nonnegative matrix $\mathrm{\bf{G}}^n\mathrm{\bf{H}}^n$,
which is the case for our model since all power channel gains, noise power values, and target SINRs are real and nonnegative numbers. Specifically, the Perron-Frobenius theorem implies the following fact, which has been clearly stated
in \cite{bambos00}. 

Fact: If $\mathrm{\bf{G}}^n\mathrm{\bf{H}}^n$ has non-negative elements, the following statements are equivalent:
\begin{enumerate}
\item There exists a nonnegative power vector $\mathrm{\bf{p}}$ such that $(\mathrm{\bf{I}}^n-\mathrm{\bf{G}}^n\mathrm{\bf{H}}^n)\mathrm{\bf{p}}^n \geq \mathrm{\bf{g}}^n$.
\item $\rho(\mathrm{\bf{G}}^n\mathrm{\bf{H}}^n) \leq 1$.
\item $\left( \mathrm{\bf{I}}-\mathrm{\bf{G}}^n\mathrm{\bf{H}}^n\right)^{-1}=\sum_{k=0}^{\infty} (\mathrm{\bf{G}}^n\mathrm{\bf{H}}^n)^k$ exists and is positive elementwise.
\end{enumerate}
From these results, if $\rho(\mathrm{\bf{G}}^n\mathrm{\bf{H}}^n) \leq 1$, then we can find a solution of $(\mathrm{\bf{I}}^n-\mathrm{\bf{G}}^n\mathrm{\bf{H}}^n)\mathrm{\bf{p}}^n = \mathrm{\bf{g}}^n$ (i.e., the
equality case of the considered inequality) as
\beq \label{Ch4_eq:pw_chk}
\mathrm{\bf{p}}^{n\star}=\left( \mathrm{\bf{I}}- \mathrm{\bf{G}}^n\mathrm{\bf{H}}^n \right)^{-1}\mathrm{\bf{g}}^n,
\eeq
which is also the Pareto-optimal solution of $(\mathrm{\bf{I}}^n-\mathrm{\bf{G}}^n\mathrm{\bf{H}}^n)\mathrm{\bf{p}}^n \geq \mathrm{\bf{g}}^n$ where Pareto-optimality means that
any feasible solution $\mathrm{\bf{p}}^n$ for the inequality  $(\mathrm{\bf{I}}^n-\mathrm{\bf{G}}^n\mathrm{\bf{H}}^n)\mathrm{\bf{p}}^n \geq \mathrm{\bf{g}}^n$ is not smaller
than $\mathrm{\bf{p}}^{n\star}$ elementwise (i.e., there exists no feasible solution $\mathrm{\bf{p}}^n$ for $(\mathrm{\bf{I}}^n-\mathrm{\bf{G}}^n\mathrm{\bf{H}}^n)\mathrm{\bf{p}}^n \geq \mathrm{\bf{g}}^n$ so that 
$p_i^n \leq p_i^{n\star}, \forall i$ and $p_j^n < p_j^{n\star}$ for some $j$). These results has been used in \cite{bambos00} and several references therein.

In addition, this Pareto-optimal power vector can be achieved at the equilibrium by employing the following well-known distributed Foschini-Miljanic power updates \cite{foschini93,bambos00}:
\beq \label{Ch4_dispow}
p_i^n(l+1) := p_i^n(l) \frac{\bar{\gamma}_i^n}{\Gamma^n_i(l)} = I_i^n(l)\bar{\gamma}_i^n,
\eeq
where $\Gamma^n_i(l)$ and $I_i^n(l)$ are the SINR and effective interference achieved by UE $i$ in iteration $l$, respectively. 
If there is no feasible solution for (\ref{Ch4_eq:cond_mtx1}) (when $\rho(\mathrm{\bf{G}}^n\mathrm{\bf{H}}^n) > 1$), the transmission power values due to (\ref{Ch4_dispow}) will increase to infinity \cite{foschini93,bambos00}.
Now, we are ready to state one important result for a given SA solution $\mathrm{\bf{A}}$ that satisfies the SA constraints (\ref{Ch4_eq:c4}) in the
 following proposition.

\begin{prop}
\label{Ch4_thm02}
Suppose that we can find a finite Pareto-optimal PA solution $\mathrm{\bf{p}}^n$ on the subchannel $n$ for the considered SA $\mathrm{\bf{A}}$,
which is given in (\ref{Ch4_eq:pw_chk}) (i.e., the spectrum radius $\rho(\mathrm{\bf{G}}^n\mathrm{\bf{H}}^n)<1, \: \forall n$).
Then, the underlying SA $\mathrm{\bf{A}}$ is feasible if these Pareto-optimal PA vectors $\mathrm{\bf{p}}^n$ satisfy the
power constraints in (\ref{Ch4_powcon}).
\end{prop}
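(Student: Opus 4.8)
The plan is to establish feasibility of the SA $\mathbf{A}$ by an explicit construction: I would assemble a power allocation matrix $\mathbf{P}$ out of the per-subchannel Pareto-optimal vectors $\mathbf{p}^{n\star}$ given by (\ref{Ch4_eq:pw_chk}) and then verify that this single $\mathbf{P}$ meets every constraint of (\ref{Ch4_objfun})--(\ref{confun}) while driving each FUE onto the ``positive-rate'' branch of (\ref{Ch4_eq:rate_n}).

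The first step is the decoupling observation. Since transmissions on distinct subchannels do not interfere, the SINR $\Gamma_i^n(\mathbf{A},\mathbf{P})$ in (\ref{Ch4_sinr}) depends only on the $n$-th column of $\mathbf{P}$, namely on $\mathbf{p}^n=(p_j^n)_j$. Hence, for fixed $\mathbf{A}$, the collection of SINR requirements $\Gamma_i^n(\mathbf{A},\mathbf{P})\geq\bar{\gamma}_i^n$ for $i\in\mathcal{U}^n$ is exactly the linear system (\ref{Ch4_eq:cond_mtx1}) in the sub-vector $(p_i^n)_{i\in\mathcal{U}^n}$, and it involves only powers on subchannel $n$. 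One also checks here that (\ref{Ch4_eq:c4}) forces $\mathcal{U}^n$ to contain at most one UE per cell, so any two distinct members of $\mathcal{U}^n$ lie in different cells and therefore interfere with each other; this is what makes $\mathbf{H}^n$ in (\ref{Ch4_eq:H}) the correct interference matrix. By hypothesis $\rho(\mathbf{G}^n\mathbf{H}^n)<1$ for every $n$, so the Perron-Frobenius fact recalled above guarantees that $\mathbf{p}^{n\star}=(\mathbf{I}-\mathbf{G}^n\mathbf{H}^n)^{-1}\mathbf{g}^n$ is well defined, positive component-wise, and solves (\ref{Ch4_eq:cond_mtx1}) with equality.

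The second step is the construction and verification. I would define $\mathbf{P}$ by $\mathbf{P}(i,n)=p_i^{n\star}$ when $a_i^n=1$ and $\mathbf{P}(i,n)=0$ when $a_i^n=0$, and then check three things. (i) The SA constraints (\ref{Ch4_eq:c4}) are a property of $\mathbf{A}$ alone and hold by assumption. (ii) By the decoupling step, for every $n$ and every $i$ with $a_i^n=1$ we get $\Gamma_i^n(\mathbf{A},\mathbf{P})=\bar{\gamma}_i^n$; in particular the macrocell-protection constraints (\ref{Ch4_eq:rate_cond}) hold, and each FUE meets its target SINR on each assigned subchannel, so by (\ref{Ch4_eq:rate_n}) it attains the full per-subchannel rate $r_{\sf f}$. (iii) For each UE $i$, $\sum_{n=1}^N\mathbf{P}(i,n)=\sum_{n:\,a_i^n=1}p_i^{n\star}$, which is precisely the quantity assumed to satisfy $\leq P_i^{\texttt{max}}$, so the power budgets (\ref{Ch4_powcon}) hold. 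This exhibits a PA realizing all constraints with every FUE on the positive-rate branch, which is the meaning of the SA $\mathbf{A}$ being feasible.

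The step requiring the most care is the decoupling observation, since it is what allows the vectors $\mathbf{p}^{n\star}$ to be chosen independently on each subchannel without the SINRs on one subchannel being disturbed by power choices on another; only afterwards do the subchannels re-couple, and they do so only through the single aggregate constraint (\ref{Ch4_powcon}). I would also remark in passing that the Pareto-optimality of $\mathbf{p}^{n\star}$ (its component-wise minimality among solutions of (\ref{Ch4_eq:cond_mtx1})) makes the stated sufficient condition the sharpest one available from the decoupled viewpoint: if even these minimal vectors violated (\ref{Ch4_powcon}), no PA for this SA could satisfy both the SINR and the power constraints. That converse is not required for the proposition as stated, so I would not dwell on it.
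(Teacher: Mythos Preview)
Your proposal is correct and follows essentially the same approach as the paper: both argue that the per-subchannel Pareto-optimal vectors $\mathbf{p}^{n\star}$ meet all SINR requirements with minimum power, so feasibility of $\mathbf{A}$ reduces to whether these vectors satisfy the aggregate power budgets (\ref{Ch4_powcon}). Your version is considerably more explicit---you spell out the decoupling across subchannels, construct $\mathbf{P}$, and verify each constraint in turn---whereas the paper's proof compresses all of this into two sentences and even asserts the ``if and only if'' direction that you correctly flag as a side remark not required by the proposition as stated.
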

\begin{proof}
The Pareto-optimal PA solution $\mathrm{\bf{p}}^n$ on each subchannel $n$ requires the minimum power values in the elementwise sense for all
UEs, which are assigned subchannel $n$, to meet their SINR requirements. Therefore, the considered SA solution $\mathrm{\bf{A}}$ is feasible
if and only if the corresponding Pareto-optimal PA solutions $\mathrm{\bf{p}}^n$ on all $N$ subchannels satisfy
the power constraints in (\ref{Ch4_powcon}). Therefore, we have completed the proof of the proposition.
\end{proof}

Since the number of possible SAs is finite, the result in this proposition pays the way to develop optimal exhaustive
search algorithm, which is presented in the following.

\subsection{Optimal Algorithm}
\label{Ch4_sec:OA}
\subsubsection{Exhaustive Search Algorithm}
Based on the results in Section~\ref{Ch4_sec:fsb_chk}, we can find the optimal solution for the 
resource-allocation problem (\ref{Ch4_objfun}) and (\ref{confun}) by performing exhaustive
search as follows. For a fixed and feasible $\mathrm{\bf{A}}_1$,\footnote{The SA solution for MUEs corresponding to $\mathrm{\bf{A}}_1$ is feasible if there exists a PA solution that satisfies the constraints in (\ref{Ch4_eq:rate_cond}) when there is no femto tier.} let $\Omega\{\mathrm{\bf{A}}\}$ be the list of all potential SA solutions
 that satisfy the SA constraints (\ref{Ch4_eq:c4}) and the fairness condition $\sum_{n \in \mathcal{N}} a_i^n=\sum_{n \in \mathcal{N}} a_j^n=\tau_k$ for all FUEs $i,j \in \mathcal{U}_k$. Then, we sort the list $\Omega\{\mathrm{\bf{A}}\}$ in the decreasing order of $\sum_{k=2}^K \tau_k$ to obtain the sorted list $\Omega^{\ast}\{\mathrm{\bf{A}}\}$. Then, the feasibility of each SA solution in the list $\Omega^{\ast}\{\mathrm{\bf{A}}\}$ can be verified, as presented in Section~\ref{Ch4_sec:fsb_chk}. 
Among all feasible SA solutions, the feasible one achieving the highest value of the objective function (\ref{Ch4_objfun}) and its corresponding
PA solution given in (\ref{Ch4_eq:pw_chk}) is the optimal solution of the optimization problem (\ref{Ch4_objfun}) and (\ref{confun}).

\subsubsection{Complexity Analysis}
The complexity of the exhaustive search algorithm can be determined by calculating the number of the elements in the list $\Omega^{\ast}\{\mathrm{\bf{A}}\}$ and the complexity involved in the feasibility verification for each of them.
The number of the elements in $\Omega^{\ast}\{\mathrm{\bf{A}}\}$, i.e., the number of potential SAs, is the product of the number of potential
SAs for all femtocells, which satisfy the ``fairness condition.'' Therefore, the number of potential SAs can be calculated as
\begin{equation}
\label{Ch4_eq:num_trial}
\begin{array}{*{2}{l}}
T\!\!\!\! &=\!\!\prod \limits_{k=2}^K \sum \limits_{\tau_k=0}^{\eta_k} \prod \limits_{i=0}^{M_k-1} \mathrm{C}_{N-i\tau_k}^{\tau_k}\\
{}\!\!\!\!&=\!\!\prod \limits_{k=2}^K \sum \limits_{\tau_k=0}^{\eta_k}\dfrac{N!}{(\tau_k!)^{M_k}(N-M_k\tau_k)!}\approx O\left(\!(N!)^{(\!K-1\!)}\!\right),
\end{array}
\end{equation}
where $\eta_k=\lfloor N/M_k\rfloor$ represents for the largest integer less than or equal to $ N/M_k$ and $\mathrm{C}_m^n=\frac{m!}{n!(m-n)!}$ denotes the \textit{``m-choose-n''} operation. According to Section~\ref{Ch4_sec:fsb_chk}, the complexity of the feasibility verification for each potential SA mainly 
depends on the eigenvalue calculation of the corresponding matrix and solving the linear system to find the PA solution for each subchannel. It requires $O(K^3)$ to calculate the eigenvalues of $\mathrm{\bf{G}}^n\mathrm{\bf{H}}^n$ \cite{Pan99} and $O(K^3)$ to obtain $\mathrm{\bf{p}}^n$ by solving a  system of linear equations \cite{Bojan84}. Therefore, the complexity of the optimal exhaustive search algorithm is $O\left(K^3 \times N \times (N!)^{(K-1)}\right)$, which
is exponential in the numbers of subchannels and FBSs. This optimal exhaustive search algorithm will serve as a benchmark to evaluate the low-complexity
algorithm developed in the following.

\subsection{Sub-Optimal and Distributed Algorithm}
\label{Ch4_sec:SOA}
To resolve the exponential complexity of the centralized optimal algorithm presented in Section~\ref{Ch4_sec:OA}, we 
develop a low-complexity and distributed resource-allocation algorithm. 
To achieve max-min fairness as required by the objective function (\ref{Ch4_objfun}), our algorithm
aims to assign the maximum equal number of subchannels to FUEs
in each femtocell and to perform Pareto-optimal PA for FUEs and MUEs on each subchannel so that they meet the SINR constraints
in (\ref{Ch4_eq:rate_n}) and (\ref{Ch4_eq:rate_cond}). To achieve this design goal, we propose a novel subchannel allocation weight metric so 
that high weights are assigned for ``bad allocations'' requiring large transmission power values and \textit{vice versa}.
 
The proposed resource-allocation algorithm is described in details in Algorithm~\ref{Ch4_alg:gms1}. The key operation in this algorithm is the iterative weight-based SA that is performed in parallel at all
femtocells. The SA weight for each subchannel and FUE pair
is defined as the multiplication of the estimated transmission
power and a scaling factor capturing the quality of the corresponding allocation. 
Specifically, each UE $i$ in cell $k$ estimates the transmission power on subchannel $n$ in each iteration $l$ of the algorithm
 by using the Foschini-Miljanic power update given in (\ref{Ch4_dispow}) as follows:
\begin{equation}
\label{Ch4_eq:minP}
p_i^{n,\texttt{min}}=  I_i^n(l) \bar{\gamma}_i^n.  
\end{equation}
The effective interference $I_i^n(l)$ given in (\ref{Ch4_eq:I}) can be estimated by UE $i$ as follows.
The serving BS of UE $i$ estimates the power channel gain $h^n_{b_i i}$ on subchannel $n$ from this user to itself  
and transmits this channel state information to UE $i$. In addition, 
the BS of UE $i$ measures the total received power on subchannel $n$ due to both desired and interfering signals, as well as the Gaussian noise,
 which is then sent back to UE $i$. UE $i$ can calculate $I_i^n(l)$ easily by using
the gathered information (i.e., its transmission power, power channel gain to the BS, and the total received power at the BS) according to (\ref{Ch4_eq:I}). 
We propose the following assignment weight for FUE $i$ on subchannel $n$ in cell $k$ as 
\begin{equation}
w_{i}^n = \chi_i^n p_i^{n,\texttt{min}},
\end{equation}
where the scaling factor $\chi_i^n$ is defined as follows: 
\begin{equation}
\label{Ch4_eq:w2}
\chi_i^n  = \left\lbrace \begin{array}{*{5}{l}}
\alpha_i^{n}, & \text{if } p_i^{n,\texttt{min}} \leq \frac{P_i^{\texttt{max}}} {\tau_k}\\
\alpha_i^{n} \theta_i^{n}, & \text{if } \frac{P_i^{\texttt{max}}} {\tau_k} < p_i^{n,\texttt{min}} \leq P_i^{\texttt{max}}\\
\alpha_i^{n}\delta_i^{n}, & \text{if }  P_i^{\texttt{max}} < p_i^{n,\texttt{min}},
\end{array} \right. 
\end{equation}
where $\tau_k$ denotes the number of subchannels assigned for each FUE in femtocell $k$; $\alpha_i^{n} \geq 1$ is a factor, which helps maintain SINR constraints of MUEs (i.e.,
it is increased if assigning subchannel $n$ to FUE $i$ tends to result in violation of the SINR constraint of the corresponding MUE); and $\theta_i^{n},\delta_i^{n} \geq 1$ are another factors that are set higher if the assignment of subchannel $n$ for FUE $i$ tends to require transmission
power larger than the average power per subchannel (i.e., ${P_i^{\texttt{max}}}/{\tau_k}$) and the maximum power budget (i.e., $P_i^{\texttt{max}}$), respectively.
We will set $\delta_i^{n}$ as $\delta_i^{n} =\mu_i \theta_i^{n}$ where  $\mu_i=N$ in Algorithm~\ref{Ch4_alg:gms1}.
Given the weights defined for each FUE $i$, femtocell $k$ finds the SA for its FUEs by
solving the following problem:
\begin{equation}
\label{Ch4_eq:lcl_prob}
\begin{array}{*{5}{c}}
{}&\min \limits_{\mathrm{\bf{A}}_k} \sum \limits_{i \in \mathcal{U}_k}\sum \limits_{n \in \mathcal{N}} a_i^n w_i^n\\
{\text{s.t.}} & \sum_{n \in \mathcal{N}} a_i^n = \tau_k, \:\: \forall i \in \mathcal{U}_k.
\end{array}
\end{equation}
This problem aims to find an assignment matrix $\bf{A}_{\textit{k}}$ for which each FUE in femtocell $k$ is assigned $\tau_k$ subchannels
with minimum total weight (i.e., to attain low-power SAs).
The optimization problem (\ref{Ch4_eq:lcl_prob}) can be transformed into the standard matching problem (for example, between ``jobs'' and ``employees'') as follows. 
Suppose we create $\tau_k$ virtual ``employees'' for each FUE in femtocell $k$, then
we can consider the matching problem between $\tau_k M_k$ virtual ``employees'' (virtual FUEs) and $N$ ``jobs'' (subchannels). In particular, FUE $i$ is equivalent to $\tau_k$ virtual FUEs $\{i_1,...,i_{\tau_k}\}$. Let the edge $v_{i_u}^n$ ($u \in \{1,...,\tau_k\}$) between subchannel $n$ and virtual 
FUE $i_u$ represent the assignment of that subchannel to the corresponding FUE. Then, the weight $w_{i_u}^n$ of the edge $v_{i_u}^n$ is equal to $w_{i}^n$.
After performing this transformation, the SA solution of the problem (\ref{Ch4_eq:lcl_prob}) can be found by using the standard Hungarian algorithm
(i.e., Algorithm 14.2.3 given in \cite{Jungnickel08}). After running the Hungarian algorithm, if there exits a virtual FUE $i_u$, $u \in \{1,...,\tau_k\}$, being matched with subchannel $n$, then we have $a^n_i=1$; otherwise, we set $a^n_i=0$.
For further interpretation of Algorithm~\ref{Ch4_alg:gms1}, let $W_k(l)$ denote the total minimum weight
due to the optimal solution of (\ref{Ch4_eq:lcl_prob}). The main operations of Algorithm~\ref{Ch4_alg:gms1} can be summarized as follows. 

\begin{itemize}

\item 
In steps 2–10, the MBS needs to estimate the effective
interference on all subchannels and calculates the transmission power values for the corresponding MUEs by
using (\ref{Ch4_eq:minP}). Then, the MBS checks the power constraints (\ref{Ch4_powcon}) for its associated MUEs. For MUEs that can maintain the power constraints (\ref{Ch4_powcon}), the MBS will send the
newly calculated transmission power values to them so that MUEs can update their power values accordingly
(step 5). Otherwise, if the power constraint of any MUE is violated, the MBS will scale down the transmission
power values to maintain the power constraints (step 7) and send the corresponding transmission power values to
its MUEs. 
In addition, the MBS will inform all FBSs, which will find the FUE creating the largest interference
to the victim MUE on the subchannel with the largest transmission power; then, we increase the parameter $\alpha_{m_i^{\ast}}^{n_i^{\ast}}$ corresponding to this FUE and subchannel pair by a factor
of 2 (steps 8 and 9). This can be realized by allowing the
MBS to request nearby FBSs to report the transmission power on the victim subchannel based on which the MBS
can identify and request the most interfering FUE $m_i^{\ast}$ to update its parameter $\alpha_{m_i^{\ast}}^{n_i^{\ast}}$. The signaling required by these operations can be accommodated by the wired backhaul
links [e.g., digital subscriber line (DSL) links].

\item 
In steps 15-20, each femtocell which has any FUEs's power constraints being violated in the previous iteration solves the SA
problem (\ref{Ch4_eq:lcl_prob}) with the current weight values to update its SA solution. In addition, FBS $k$ decreases the target number of assigned
 subchannels $\tau_k$ by one if the optimal total weight $W_k(l)$ of the problem (\ref{Ch4_eq:lcl_prob}) satisfies the condition in step 17. Each FBS can complete 
all required tasks in these steps since it knows current transmission power values and scaling factors $\chi_i^n$  of its FUEs based on which it can calculate
all SA weights. 

\item 
In step 12, each FBS estimates the effective interference on all subchannels and calculates the transmission power
values for its FUEs by using (20). Then, each FBS will check the power constraints of its FUEs using the newly
calculated transmission power values. For FUEs that have their power constraints satisfied, the corresponding FBS
will send the newly calculated transmission power values to them so that they can update power values accordingly (steps 22 and 23). For any FUE that has its power constraint violated, its FBS scales down the transmission
power values to meet the power constraints (step 25) and increase the $\theta_i^{n_i^{\ast}}$ parameter for the most interfering
subchannel by a factor of 2 (steps 26 and 27). In both cases, each FBS must send the transmission power levels
on all subchannels to the corresponding FUEs by using available control channels.
\end{itemize}

It can be observed that Algorithm~\ref{Ch4_alg:gms1} can be implemented distributively. In fact, the signaling information
required in steps 8 and 9 must be sent over the wired backhaul link.
In addition, each BS of either tier only needs to collaborate with its associated UEs to conduct all required tasks in other steps
where the required signaling is sent over the air by using available control channels.

\begin{algorithm}[!t]
\caption{\textsc{Distributed Uplink Resource Allocation}}
\label{Ch4_alg:gms1}
\begin{algorithmic}[1]
\STATE Initialization
\begin{itemize}
\item Set $p_i(0)=0$ for all UE $i$, $i \in \mathcal{U}_{\sf f}$, feasible $\mathrm{\bf{A}}_1$.
\item Set $\tau_k=\lfloor \frac{N}{\vert \mathcal{U}_k \vert} \rfloor$ and $\varrho_k=0$ for all $k \in \mathcal{B}_{\sf f}$.
\item Set $\alpha_i^{n}=\theta_i^{n}=1$, $\mu_i=N$, $\forall i \in \mathcal{U}_{\sf f}$, $n \in \mathcal{N}$.
\end{itemize}
\STATE \textbf{For the macrocell:}
\STATE MBS estimates  $I_i^n(l)$ and calculates $p_i^{n,\texttt{min}}$ for each MUE $i$ as in (\ref{Ch4_eq:minP}). Let 
$\beta_i={\sum_{n \in \mathcal{N}} a_i^n(l)p_i^{n,\texttt{min}}}/{P_i^{\texttt{max}}}$.

\IF{$\beta_i \leq 1$} \STATE Set $p_i^n(l+1)=a_i^n(l) p_i^{n,\texttt{min}}$, $\forall n \in \mathcal{N}$.
\ELSIF{$\beta_i > 1$} \STATE Set 
$p_i^n(l+1)\!=\!\frac{a_i^n(l)\! p_i^{n,\texttt{min}}}{\beta_i}\!,\! \forall n\! \in\! \mathcal{N}$
\STATE Find $n_i^{\ast}= \argmax \limits_{n \in \mathcal{N},c_n>1}a_i^n(l)$ $p_i^{n,\texttt{min}}$;
$m_i^{\ast} = \argmax \limits_{m \in \mathcal{U}_{\sf f}} a_m^{n_i^{\ast}}(l)p_m^{n_i^{\ast}}(l)h_{1m}^{\ast}$
\STATE Set  
$\alpha_{m_i^{\ast}}^{n_i^{\ast}}=2\alpha_{m_i^{\ast}}^{n_i^{\ast}}$ and $\varrho_{b_{m^{\ast}_i}}=0$.
\ENDIF

\STATE \textbf{For each femtocell $k \in \mathcal{B}_{\sf f}$:}

\STATE Each FBS $k$ estimates $I_i^n(l)$ and calculates $p_i^{n,\texttt{min}}$ for each FUE $i$ as in (\ref{Ch4_eq:minP}).

\IF{$\varrho_k=1$}  \STATE Keep $\mathrm{\bf{A}}_k(l)=\mathrm{\bf{A}}_k(l-1)$ 
\ELSIF{$\varrho_k=0$}  \STATE Calculate subchannel assignment weights $w_{i_u}^n$ between $\mathcal{N}$ and $\cup_{i \in \mathcal{U}_k}\{i_1,...,i_{\tau_k}\}$, as in (\ref{Ch4_eq:w2}) and run 
Hungarian algorithm with $\left\lbrace w_{i_u}^n \right\rbrace $ to obtain $W_k(l)$ and $\mathrm{\bf{A}}_k(l)$.
\IF{$W_k(l) > V \sum _{i \in \mathcal{U}_k}P_i^{\texttt{max}}$} \STATE Set $\tau_k:=\tau_k-1$. 
\ENDIF

\ENDIF

\STATE Let $\beta_i={\sum_{n \in \mathcal{N}} a_i^n(l)p_i^{n,\texttt{min}}}/{P_i^{\texttt{max}}}$. 

\IF{$\beta_i \leq 1$} \STATE Set $p_i^n(l+1)=a_i^n(l) p_i^{n,\texttt{min}}$, $\forall n \in \mathcal{N}$ and $\varrho_{k,i}=1$. 
\ELSIF{$\beta_i > 1$} \STATE Set $p_i^n(l+1)\!=\!\frac{a_i^n(l)\! p_i^{n,\texttt{min}}}{\beta_i}\!,\! \forall n\! \in\! \mathcal{N}$
\STATE Find $n_i^{\ast}\!=\! \argmax \limits_{n \in \mathcal{N}}a_i^n(l)$ $p_i^{n,\texttt{min}}$
 \STATE Set $\theta_i^{n_i^{\ast}}=2\theta_i^{n_i^{\ast}}$ and $\varrho_{k,i}=0$.
\ENDIF
\STATE Set $\varrho_{k}=\prod_{i \in \mathcal{U}_k}\varrho_{k,i}$.
\STATE Let $l:=l+1$, return to step 2 until convergence.
\end{algorithmic}
\end{algorithm}

\subsection{Convergence and Complexity Analysis of Algorithm \ref{Ch4_alg:gms1}}
\label{Ch4_sec:cvrg_cplx}
\subsubsection{Convergence Analysis}
\label{Ch4_sec:cvrg}
The convergence of Algorithm~\ref{Ch4_alg:gms1} is stated in the following.
\begin{prop}
\label{Ch4_thm03}
Algorithm \ref{Ch4_alg:gms1} converges to a feasible solution $(\mathrm{\bf{A}},\mathrm{\bf{P}})$ of the optimization problem (\ref{Ch4_objfun}) and (\ref{confun}).
\end{prop}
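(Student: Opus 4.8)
The plan is to show that Algorithm~\ref{Ch4_alg:gms1} performs only finitely many ``corrective'' actions, after which the power updates settle to a fixed point that is feasible for (\ref{Ch4_objfun})--(\ref{confun}). First I would argue that the integer quantities $\tau_k$ are monotonically non-increasing across iterations: $\tau_k$ is decreased by one (step~17--18) only when the optimal Hungarian weight $W_k(l)$ exceeds $V\sum_{i\in\mathcal{U}_k}P_i^{\texttt{max}}$, and since each $\tau_k$ starts at $\lfloor N/|\mathcal{U}_k|\rfloor$ and is bounded below by $0$, this can happen at most $\lfloor N/|\mathcal{U}_k|\rfloor$ times per femtocell. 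Hence after finitely many iterations all $\tau_k$ are constant.

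Next I would control the scaling factors $\alpha_i^n$ and $\theta_i^n$. These are doubled only when a power constraint is violated (steps~9--10 for the MBS side, steps~26--27 for a femtocell). The key observation is that for a fixed set of SA matrices $\{\mathrm{\bf{A}}_k\}$ and fixed $\tau_k$, each subchannel $n$ carries a fixed set of users $\mathcal{U}^n$, and by Proposition~\ref{Ch4_thm02} together with the Perron--Frobenius results of Section~\ref{Ch4_sec:fsb_chk}, the Foschini--Miljanic updates (\ref{Ch4_dispow}) converge to the finite Pareto-optimal vector $\mathrm{\bf{p}}^{n\star}$ whenever $\rho(\mathrm{\bf{G}}^n\mathrm{\bf{H}}^n)<1$. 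Doubling a weight $\alpha_{m_i^\ast}^{n_i^\ast}$ or $\theta_i^{n_i^\ast}$ strictly increases the cost in (\ref{Ch4_eq:lcl_prob}) of any SA that keeps the offending FUE on the offending subchannel, so repeated doublings eventually force the Hungarian algorithm to move that FUE off that subchannel (or, failing that, trigger a $\tau_k$ decrement via step~17). Because the number of subchannels, FUEs, and possible SA matrices is finite, and $\tau_k$ can only decrease finitely often, the pair $(\{\mathrm{\bf{A}}_k\},\{\tau_k\})$ can change only finitely many times; thereafter the weights stabilize and the algorithm reduces to running the fixed-point power iteration (\ref{Ch4_dispow}) on each subchannel, which converges.

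I would then verify feasibility of the limit point. Once $(\{\mathrm{\bf{A}}_k\},\{\tau_k\})$ and the weights are frozen, no FBS sets $\varrho_{k,i}=0$ and the MBS does not request any $\alpha$-update, which by the branching in steps~4--10 and 21--27 means every power constraint (\ref{Ch4_powcon}) holds with $\beta_i\le 1$; moreover the powers equal the Pareto-optimal values $a_i^n p_i^{n,\texttt{min}}=a_i^n I_i^n\bar\gamma_i^n$, so the SINR constraints (\ref{Ch4_eq:rate_cond}) for MUEs and the implicit SINR targets behind (\ref{Ch4_eq:rate_n}) for FUEs are met, and the SA constraints (\ref{Ch4_eq:c4}) hold by construction of the Hungarian assignment with $\tau_k\le\lfloor N/|\mathcal{U}_k|\rfloor$. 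Hence $(\mathrm{\bf{A}},\mathrm{\bf{P}})$ is feasible for (\ref{Ch4_objfun})--(\ref{confun}).

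The main obstacle I anticipate is making rigorous the claim that ``enough doublings eventually dislodge a bad FUE--subchannel pair'': in principle the Hungarian solver could keep an offending assignment if every alternative is even worse, and one must show that in that case the $\tau_k$-decrement mechanism of step~17 is triggered instead (since $W_k(l)$ grows without bound under repeated doubling), so that the joint process on the finite state space $(\{\mathrm{\bf{A}}_k\},\{\tau_k\})$ genuinely terminates. Tying the unbounded growth of $W_k(l)$ to the threshold test $W_k(l)>V\sum_{i}P_i^{\texttt{max}}$, and arguing that between consecutive $\tau_k$-changes only finitely many SA matrices and finitely many relevant weight magnitudes are visited, is the delicate part of the argument; the rest is routine bookkeeping plus the already-established Perron--Frobenius convergence of (\ref{Ch4_dispow}).
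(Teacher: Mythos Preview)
Your approach is essentially the same as the paper's: either the scaling factors eventually stabilize, in which case the SA freezes and the Foschini--Miljanic iteration converges, or they keep growing, in which case $W_k(l)$ must eventually exceed the threshold and force a $\tau_k$ decrement, a process that terminates since $\tau_k$ is a nonnegative integer. The ``obstacle'' you flag is exactly the step the paper handles directly, via the observation that $p_i^{n,\texttt{min}}$ is uniformly bounded below by $\bar\gamma_i^n\eta_{b_i}^n/h_{b_ii}^n$, so unbounded doubling of the scaling factors forces $W_k(l)\to\infty$ and hence triggers step~17; your attempt to instead bound the number of SA matrices and weight magnitudes visited between $\tau_k$-changes is unnecessary and slightly off (the weights themselves can take infinitely many values), but the core mechanism you identify is the right one.
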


\begin{proof}
To prove this proposition, we will consider two possible scenarios.
For the first scenario, there exists an iteration after which the scaling
 factors $\chi_i^n$ for SA weights given in (\ref{Ch4_eq:w2})
do not change. According to Algorithm~\ref{Ch4_alg:gms1}, after this iteration, the SA solution will remain unchanged.
In addition, Algorithm~\ref{Ch4_alg:gms1} simply 
updates transmission power values $p_i^{n,\texttt{min}}$ for all UEs on their assigned subchannels by using the Foschini-Miljanic power updates. The authors in \cite{foschini93} have shown that these power updates indeed converge, which implies the convergence of Algorithm~\ref{Ch4_alg:gms1}.

Now suppose that the scaling factors $\chi_i^n$ are still changed over iterations, we will prove that the 
system will ultimately evolve into the first scenario previously discussed.
First, it can be verified that the power $p_i^{n,\texttt{min}}$ given in (\ref{Ch4_eq:minP}) is lower bounded by $\bar{\gamma}_i^n \eta_{b_i}^n/h_{b_ii}^n$.
Therefore, if the scaling factors $\chi_i^n$ keep increasing over iterations, then the total weight $W_k(l)$ returned by the assignment
problem (\ref{Ch4_eq:lcl_prob}) will increase over iterations as well. Therefore, there exist some femtocells $k$ that
decrease their number of assigned subchannels $\tau_k$ over iterations (steps 17--19 of Algorithm~\ref{Ch4_alg:gms1}). Since initial values of all $\tau_k$ are 
finite, this process will terminate after a finite number of iterations. Then, the system will be in the first scenario discussed above; therefore, 
Algorithm~\ref{Ch4_alg:gms1} converges. Therefore, we have completed the proof of the proposition.
\end{proof}

\subsubsection{Complexity Analysis and Comparison with Existing Algorithms}
\label{Ch4_sec:cplx}
It is observed that the major complexity of Algorithm~\ref{Ch4_alg:gms1} is involved in solving the SA by using the Hungarian method in step 16. 
According to \cite{Jungnickel08}, the complexity of the Hungarian algorithm is $O(N^3)$. Therefore, the complexity of our proposed
algorithm is $O(K \times N^3)$ for each iteration. However, the local SAs can be performed in parallel at all $(K-1)$ 
femtocells. Therefore, the runtime complexity of our algorithm is $O(N^3)$ multiplied by the number of required iterations, which is 
quite moderate according to our simulation results (i.e., tens of iterations).

For comparison purposes, we summarize how existing resource allocation formulations and algorithms for
two-tier macro-femto networks cover different design aspects in Table~\ref{Ch4_tb:notation}
where we write NA for the convergence and complexity aspects if they are not analyzed in these existing works. As we can
see, the existing works consider different optimization objectives and cover some design issues although none of them
accounts for all aspects. Our proposed algorithm captures all
design aspects except the downlink scenario.
Moreover, only Zhang \textit{et al.} \cite{zhang12} and we consider the power constraints in our
resource allocation. However, Zhang \textit{et al.} aim to maximize the
total throughput of the femtocell network while constraining
the cross-tier interference from FUEs to the MBS. In their
work, the co-tier interference among femtocells is assumed to
be part of the noise power, which is not explicitly managed.
Given their comparable design and our proposed algorithm, we
will conduct performance comparison for the two algorithms in
terms of throughput and fairness in Section~\ref{Ch4_result}.

\begin{table*}[!t] 
\footnotesize
\caption{Summary of Existing Algorithms}
\centering
\begin{tabular}{| c | p{30mm} | p{8mm} | p{8mm} | p{8mm} | p{10mm} | p{10mm} | c | c |} 
\hline 
Papers & Objective function & UL/ DL & SA/ PA & Prt. MUE & Pw. Const. & CoTI/ CrTI & Conv. & Complexity \\ 
\hline \hline
Ref. \cite{chuhan11} & Max-min number of assigned subchannels & Both & SA & Yes & No & No & NA & NA \\
\hline
Ref. \cite{yanzan12} & Max sum rate & UL & SA & Yes & No & Both & NA & NA \\
\hline
Ref. \cite{yushan12} & Max sum PRB & Both & SA & No & No & NA & NA & $O(KN \log N)$\\
\hline
Ref. \cite{SMCheng12} & NA (Analysis) & DL & SA & No & No & Both & NA & NA \\
\hline
Ref. \cite{hoon11} & Min error between interference and predetermined thresholds at MUEs & UL & Both & Yes & No & Both & NA & NA \\
\hline
Ref. \cite{renchao12} & Max energy efficiency & Both & Both & No & No & No & Yes & NA \\
\hline
Ref. \cite{Lu13} & Max-min effective rate & DL & PA & No & No & CoTI & NA & NA \\
\hline
Zhang's paper \cite{zhang12} & Max sum rate & UL & Both & Yes & Yes & CrTI & NA & NA \\
\hline
Our paper & Max-min rate & UL & Both & Yes & Yes & Both & Yes & $O(KN^3)$ \\ 
\hline \hline
\multicolumn{4}{|l|}{UL: Uplink. \; SA: Subchannel assignment.} & \multicolumn{5}{l|}{DL: Downlink. \; PA: Power allocation.}  \\ 
\multicolumn{4}{|l|}{Prt. MUE: Protect the QoS of MUEs} & \multicolumn{5}{l|}{Pw. Const.: Power constraint}  \\
\multicolumn{4}{|l|}{CoTI: Co-tier interference among femtocells } & \multicolumn{5}{l|}{CrTI: Cross-tier interference}  \\
\multicolumn{4}{|l|}{Conv.: Convergence guarantee} & \multicolumn{5}{l|}{PRB: Physical resource block}  \\
\hline
\end{tabular}
\label{Ch4_tb:notation}
\end{table*}
\nomenclature{DL}{Downlink}

\section{Further Extensions}
\label{Ch4_sec:extend}
\subsection{Downlink Resource Allocation} 
\label{Ch4_sec:dwlk}
For the downlink context, we can use the same notations as in the uplink system. 
However, $h_{ij}^n$ and $\eta_i^n$ are the power channel gain from  BS $\mathit{j}$
UE $i$ and noise power at UE $i$, respectively. 
In the downlink system, we need to impose the power constraints for the BSs instead of UEs as follows:
\beq \label{Ch4_eq:pwcon_dlk}
\sum_{i \in \mathcal{U}_k}\sum_{n \in \mathcal{N}}p^n_i \leq P_{\sf BS, \textit{k}}^{\texttt{max}}, \quad k \in \mathcal{B},
\eeq
where $P_{\sf{BS},k}^{\texttt{max}}$ is the maximum power of BS $k$. Then, the downlink resource-allocation problem can be formulated as
\begin{eqnarray}
\label{Ch4_dlk_objfun1}
\mathop {\max} \limits_{(\mathrm{\bf{A}},\mathrm{\bf{P}}) } \sum \limits_{2\leq k \leq K} \mathrm{R}^{(k)}(\mathrm{\bf{A}},\mathrm{\bf{P}}) \\
\text{s. t. constraints} \quad  (\ref{Ch4_eq:c4}), (\ref{Ch4_eq:rate_cond}), (\ref{Ch4_eq:pwcon_dlk}). \label{Ch4_confun1}
\end{eqnarray}

We can employ Algorithm~\ref{Ch4_alg:gms1} to find a feasible solution of this problem with only some changes in the scaling factor $\chi_i^n$
of the assignment weight $w_{i}^n = \chi_i^n p_i^{n,\texttt{min}}$ as follows:
\begin{equation}
\label{Ch4_eq:w3}
\chi_i^n  = \left\lbrace \begin{array}{*{5}{l}}
\alpha_i^{n}, & \text{if } p_i^{n,\texttt{min}} \leq P_{\sf BS, {\it b_i}}^{\texttt{max}}\\
\infty, & \text{if } p_i^{n,\texttt{min}} > P_{\sf BS, {\it b_i}}^{\texttt{max}}.
\end{array} \right. 
\end{equation}

In fact, the structure of power constraints in the downlink
context is less complicated than those in the uplink setting since
we only need to maintain the total power constraint for each BS
instead of all UEs connected with each BS. This explains the
simpler structure of (\ref{Ch4_eq:w3}) compared with that in (\ref{Ch4_eq:w2}). 

The signaling requirements for the downlink case are a bit different from those in the uplink one, which is described
in the following. To complete the tasks in steps 2–10, each MUE needs to estimate the current effective interference on
their allocated subchannels based on which the MUE calculates the transmission power values on the corresponding
subchannels. All MUEs then send the updated transmission power values to the MBS. The MBS updates the transmission
power values on all subchannels accordingly if the power constraint (\ref{Ch4_eq:pwcon_dlk}) is satisfied (step 5). 
Otherwise, if the power constraint of the MBS is violated, the MBS scales
down transmission power values on all subchannels by a factor $\beta^{1}={\sum_{n \in \mathcal{N}} \sum_{i \in \mathcal{U}_1} a_i^n(l)p_i^{n,\texttt{min}}}/{P^{\sf max}_{\sf BS, \textrm{1}}}$ to maintain the power constrailt. In addition, the MBS 
 informs all FBSs, which will find the FUE creating the largest interference to 
the victim MUE and increase the parameter $\alpha_{m_i^{\ast}}^{n_i^{\ast}}$ by a factor of 2 (see steps 8 and 9).
Here, the victim MUE $i$ is
the one that is allocated subchannels $n_i^{\ast}$ satisfying 
 $n_i^{\ast} = \argmax_{n \in \mathcal{N},c_n>1}a_i^n(l)$ $p_i^{n,\texttt{min}}$.

The signaling in these tasks can be realized as follows.
MUEs report the new transmission power values of their subchannels to the MBS via a control channel. In addition, the
MBS requests nearby FBSs to report the transmission power values
 on subchannel $n_i^{\ast}$, and based on this, the MBS can identify and request the most interfering
FUE $m_i^{\ast}$ on subchannel $n_i^{\ast}$ to update its parameter $\alpha_{m_i^{\ast}}^{n_i^{\ast}}$ by using the wired backhaul links (e.g., DSL links). 
In steps 15–20, each FBS can complete all required tasks in these steps by requiring its
FUEs to report newly calculated transmission power values
based on which it can calculate all SA weights. 
Note that
each FBS knows the current values of scaling factor $\chi_i^n$ of the SA weights. 
The required signaling between FUEs and their corresponding FBSs can be accommodated by using available
wireless control channels. Finally, each FBS that has its power constraint satisfied updates its transmission power (see steps 22
and 23). In addition, any FBS $k$ that has its power constraint 
violated scale downs transmission power values by a factor $\beta^{k}={\sum_{n \in \mathcal{N}} \sum_{i \in \mathcal{U}_k} a_i^n(l)p_i^{n,\texttt{min}}}/{{P^{\sf max}_{\sf BS, \textit{k}}}}$
and increases the $\theta_i^{n_i^{\ast}}$ parameter for the subchannel with largest transmission power
by a factor of 2 (see steps 26 and 27).

\subsection{Adaptive-Rate Resource Allocation} \label{Ch4_sec:mulrate}
We can relax the fixed-rate assumption and enable FUEs in each femtocell $k$ to choose any constellation size (modulation scheme) $s_k \in \mathfrak{M}$ 
for transmissions on their assigned subchannels. The target SINRs for different modulation schemes $s_k\text{-QAM}$ $\bar{\gamma}(s_k)$ can be calculated 
as in (\ref{Ch4_tsnr}). Hence, for a given $(s_k,\tau_k)$, the minimum spectral efficiency achieved by each FUE in femtocell $k$ is
\beq \label{Ch4_eq:Rate_rate}
\mathrm{R}^{(k)}(s_k,\tau_k)= r_k \tau_k,
\eeq
where $\tau_k$ is the number of subchannels assigned for each FUE in femtocell $k$ and $r_k={\log_2 s_k}/{N}$.
The resource-allocation problem for this adaptive-rate context is 
the same with (\ref{Ch4_objfun}) and (\ref{confun}) except that different femtocells $k$ can choose different constellation size $s_k \in \mathfrak{M}$.
Hence, we have following SINR constraints for each FUE $i$ connected with FBS $b_i$:
\begin{eqnarray}
\label{Ch4_objfun3}
 \Gamma_i^n(\mathrm{\bf{A}},\mathrm{\bf{P}}) \geq \bar{\gamma}(s_{b_i}), \forall i \in \mathcal{U}_{\sf f}, s_{b_i} \in \mathfrak{M}.
\end{eqnarray}

The equal number of subchannels assigned to each FUE in femtocell $k$ is at most $\lfloor{N}/{M_k}\rfloor$, and the maximum spectral efficiency of each subchannel is  $\bar{r}={\log_2 s_{\sf max}}/{N}$, where $s_{\sf max}$ is the maximum allowable constellation size in $\mathfrak{M}$.
Hence, the maximum spectral efficiency achieved by any FUE is $\bar{r} \lfloor{N}/{M_k}\rfloor$.

Resource-allocation algorithm for this adaptive-rate setting can be performed by slightly adapting Algorithm~\ref{Ch4_alg:gms1}, which
is described in Algorithm~\ref{Ch4_alg:gms2}. 
We list and sort all possible couples $(s_k,\tau_k)$ in the decreasing order of their $\mathrm{R}^{(k)}(s_k,\tau_k)=r_k \tau_k$ to obtain 
a sorted list $\Theta_k$. Then, the weight-based SAs can be performed by iteratively  
updating $(s_k,\tau_k)$ for all femtocells in the same order of the sorted list. In particular, if the current pair $(s_k,\tau_k)$
of any femtocell $k$ results in violation of the condition, as specified in step 10 of Algorithm~\ref{Ch4_alg:gms2}, then we update the
pair $(s_k,\tau_k)$ by the next one in the list $\Theta_k$ (i.e., the element $(t_k+1)$ in the sorted list $\Theta_k$ is chosen), which has 
lower value of $\mathrm{R}^{(k)}(s_k,\tau_k)$. 
These updates are intuitive since the search space for this adaptive-rate setting comprises both the number of subchannels assigned
for each FUE ($\tau_k$) and the modulation scheme represented by $s_k$.
These operations are repeated until convergence.

\begin{algorithm}[!t]
\caption{\textsc{Adaptive-Rate Resource Allocation}}
\label{Ch4_alg:gms2}
\begin{algorithmic}[1]
\STATE Initialization
\begin{itemize}
\item Set $p_i^{(0)}=0$ for all UE $i$, $i \in \mathcal{U}_{\sf f}$, feasible $\mathrm{\bf{A}}_1$.
\item Set $t_k=1$ and $\varrho_k=0$ for all $k \in \mathcal{B}_{\sf f}$.
\item Set $\mu_i=2$, $\alpha_i^{n}=1$, $\forall i \in \mathcal{U}_{\sf f}$, $n \in \mathcal{N}$.
\end{itemize}
\STATE \textbf{For the macrocell:}
\STATE Update power values for MUEs and $\alpha_i^n$ factors as in steps 3--9 of Algorithm \ref{Ch4_alg:gms1}.
\STATE \textbf{For each femtocell $k \in \mathcal{B}_{\sf f}$:}
\STATE FBS $k$ estimates $I_i^n(l)$ and calculates $p_i^{n,\texttt{min}}$ for each FUE $i$ using (\ref{Ch4_eq:minP}) based on the target SINR $\bar{\gamma}(s_k^{(t_k)})$.
\IF{$\varrho_k=1$}  \STATE Keep $\mathrm{\bf{A}}_k(l)=\mathrm{\bf{A}}_k(l-1)$ 
\ELSIF{$\varrho_k=0$} 
 \STATE Calculate subchannel assignment weights $w_{i_u}^n$ between $\mathcal{N}$ and $\cup_{i \in \mathcal{U}_k}\{i_1,...,i_{\tau_k^{(t_k)}}\}$, as in (\ref{Ch4_eq:w2}) and run Hungarian algorithm with $\left\lbrace w_{i_u}^n \right\rbrace $ to obtain $W_k(l)$ and $\mathrm{\bf{A}}_k(l)$.
\IF{$W_k(l) > V\sum _{i \in \mathcal{U}_k}P_i^{\texttt{max}}$} \STATE Set $t_k:=t_k+1$, $\theta_i^{n}=1$, $\forall i \in \mathcal{U}_k$, $n \in \mathcal{N}$. 
\ENDIF
\ENDIF
\STATE Update power values for FUEs and factors as steps 21--29 of Algorithm \ref{Ch4_alg:gms1}.
\STATE Set $l:=l+1$, return to step 2 until convergence.
\end{algorithmic}
\end{algorithm}

\subsection{Resource Allocation with Hybrid Access} \label{sec:hybdaccess}
We can extend the considered problem to implement a hybrid access strategy, which can maintain the required QoS of MUEs
while enhancing the performance of the femtocell tier. It can be observed that in the proposed algorithms
(Algorithms \ref{Ch4_alg:gms2}, \ref{Ch4_alg:gms2}), there may exist femtocells that do not utilize all available
subchannels since doing so may prevent them from maintaining the target SINRs for all FUEs and MUEs. Therefore, it
may be beneficial if we allow MUEs that potentially create and/or suffer from strong cross-tier interference from nearby
femtocells to change their connections from the MBS to nearby FBS. However, this can only be performed if the subchannels
assigned to the underlying MUE are not utilized by the target femtocell.

We present the resource allocation algorithm with rate adaptation under such hybrid access in Algorithm~\ref{Ch4_alg:gms3}.
To interpret the operations of this algorithm, we first give definitions of involved quantities and parameters in the following.
Let $b_i(l)$ and $\mathcal{N}_i(l)$ be the BS of UE $i$ and the set of its assigned subchannels in iteration $l$, respectively. 
Note that we have assumed that each MUE has a fixed set of assigned subchannels and
FUEs have fixed associated FBSs. Therefore, we have $b_i(l)=b_i$ for each FUE $i$ and $\mathcal{N}_i(l)=\mathcal{N}_i$ for each MUE $i$ as specified by $\mathrm{\bf{A}}_1$. In addition,
we can estimate the total required transmission power over the assigned subchannels of MUE $i$ that is connected with BS $k$ as 
\beq \label{Ch4_eq:ttl_rq_pw}
P_{i,k}=\bar{\gamma}(s^{\sf m})\sum \limits_{n \in \mathcal{N}_i} I_{i,k}^n(l)
\eeq
where $I_{i,k}^n(l)$ is the effective interference corresponding to MUE $i$ for its connection with BS $k$ on subchannel $n$, which can be calculated
as in (\ref{Ch4_eq:I}). MUE $i$ can determine $P_{i,k}$ for each potential FBS $k$ by estimating $I_{i,k}^n(l)$ on each assigned subchannel $n$.
This can be realized as we have discussed in the
paragraph below (\ref{Ch4_eq:minP}).
For the SA in any iteration $l$, we define the set of potential FBSs for MUE $i$ as follows:
\beq \label{Ch4_eq:bi}
\begin{array}{*{2}{l}}
\mathcal{B}_i(l)=&\left\lbrace k| k \in \mathcal{B}_{\sf f}, \mathcal{N}_i \subseteq \mathcal{N} / \cup_{j \in \mathcal{U}_k} \mathcal{N}_j(l-1), \right.\\ 
{}& \left. P_{i,k} < \min(P_{i,1}, P_i^{\texttt{max}}), |\mathcal{N}_i| \leq Q_k \right\rbrace 
\end{array}
\eeq
where $Q_k$ is the number of unused subchannels at FBS $k$ in the underlying iteration.
Here, $\mathcal{B}_i(l)$ is the set of FBSs whose set of unused subchannels contains the set $\mathcal{N}_i$
of MUE $i$; connection between MUE $i$ with the underlying FBS requires less power than connection between MUE $i$ and the MBS.
We also define the set $\mathcal{U}_{m,k}(l)=\{i \in \mathcal{U}_{\sf m}| b_i(l)=k\}$, which is the set of MUEs connecting with BS $k$ in iteration $l$.
 
In Algorithm~\ref{Ch4_alg:gms3}, we do not change existing associations between any FBS $k$ and its corresponding MUEs until the chosen pair 
$(s_k,\tau_k)$ in this femtocell
 is updated (i.e., the next pair $(s_k,\tau_k)$ in the sorted list $\Theta_k$ is chosen).
Specifically, if $t_k$ is updated (i.e., in step 13) then all MUEs currently connected with FBS $k$ are forced to change
 their connections back to the MBS (see step 14); otherwise, we maintain the BS associations of all MUEs as they are (see steps 15 and 16).
In addition, only MUEs currently connecting with the MBS (i.e., any MUE $i$ with $b_i(l)$ equal $1$) are allowed to change their BS association in iteration $l$.
In particular, the couple of FBS and MUE requiring the smallest transmission power will be chosen for association in each iteration (see step 7). 
Other operations corresponding to SA, PA, and updates of various parameters are performed as in Algorithm~\ref{Ch4_alg:gms2}.

\begin{algorithm}[!t]
\caption{\textsc{Adaptive-Rate Resource Allocation with Hybrid Access}}
\label{Ch4_alg:gms3}
\begin{algorithmic}[1]
\STATE Initialization: Set parameters as step 1 of Algorithm \ref{Ch4_alg:gms2} and $Q_k=0$, $\forall k \in \mathcal{B}_{\sf f}$.
\STATE \textbf{For the macrocell:}
\STATE Estimate $P_{i,k}$, update $\mathcal{B}_i(l)$, set $\mathcal{B}=\cup_{i \in \mathcal{U}_{m,1}(l)} \mathcal{B}_i(l)$.
\IF{$\mathcal{B} = \varnothing$} 
\STATE Go to step 10.
\ELSIF{$\mathcal{B} \neq \varnothing$} 
\STATE Find $(k^{\star},i^{\star})=\argmin_{ i \in \mathcal{U}_{m,1}(l) } \min_{k \in \mathcal{B}_i } P_{i,k}$.
\STATE Set $b_{i^{\star}}\!(l)\!=\!k^{\star}$, $ Q_{k^{\star}}\!=\!Q_{k^{\star}}\!-\!|\!\mathcal{N}_{i^{\star}}|$, $\varrho_{k^{\star}}\!=\!0$, and go
back to step 3.
\ENDIF
\STATE Update power values and all parameters as in steps 3--10 of Algorithm \ref{Ch4_alg:gms1}.

\STATE \textbf{For each femtocell $k \in \mathcal{B}_{\sf f}$:}
\STATE Update the subchannel, PA, and other parameters for all FUEs in cell $k$ as steps 5--14 of Algorithm~\ref{Ch4_alg:gms2} where
the Hungarian algorithm is run for the set of subchannels  $\mathcal{N}/\cup_{i \in \mathcal{U}_{m,k}(l)}\mathcal{N}_i$ and the set
 of virtual UEs $\cup_{i \in \mathcal{U}_k}\{i_1,...,i_{\tau_k^{(t_k)}}\}$. 
\IF {$t_k$ changes} 
\STATE Set $Q_k=N-\tau_k^{(t_k)}M_k$, $b_i(l+1)=1$, $\forall i \in \mathcal{U}_{m,k}(l)$.
\ELSIF {$t_k$ does not change} 
\STATE Set $Q_k\!=\!N-\!\sum \limits_{i, b_i(l)=k}\!\!|\mathcal{N}_i(l)|$, $b_i(l+1)\!\!=\!k$, $\forall i\! \in \! \mathcal{U}_{m,k}(l)$.
\ENDIF
\STATE Let $l:=l+1$, return to step 2 until convergence.
\end{algorithmic}
\end{algorithm}

Algorithm~\ref{Ch4_alg:gms3} can be implemented as follows. To complete the tasks in steps 3--9, each MUE $i$
upon estimating $P_{i,k}$ for all potential FBSs $k \in \mathcal{B}_i$ will send a connection request to the FBS that requires the minimum power
(i.e., $\min_{k \in \mathcal{B}_i } P_{i,k}$) together with the value of $P_{i,k}$. If a particular FBS $k$ receives several connection requests then the FBS will
grant the connection to only one MUE with the smallest value of $P_{i,k}$. Signaling information needed to exchange the connection request and granting messages
can be accommodated by available control channels. In steps 13 and 14, if the next pair $(s_k,\tau_k)$ with index $t_k$ in the sorted list $\Theta_k$ 
in femtocell $k$ is chosen (i.e., $t_k$ is increased by one), then FBS $k$ simply requests all associated MUEs to change their connections back to
the MBS. Otherwise, if $t_k$ remains unchanged, then FBS $k$ simply 
updates parameter $Q_k$ based on the new SA assignment decisions in step 12. 

\section{Numerical Results}
\label{Ch4_result}

\begin{figure}[!t]
\begin{center}
\includegraphics[width=0.7 \textwidth]{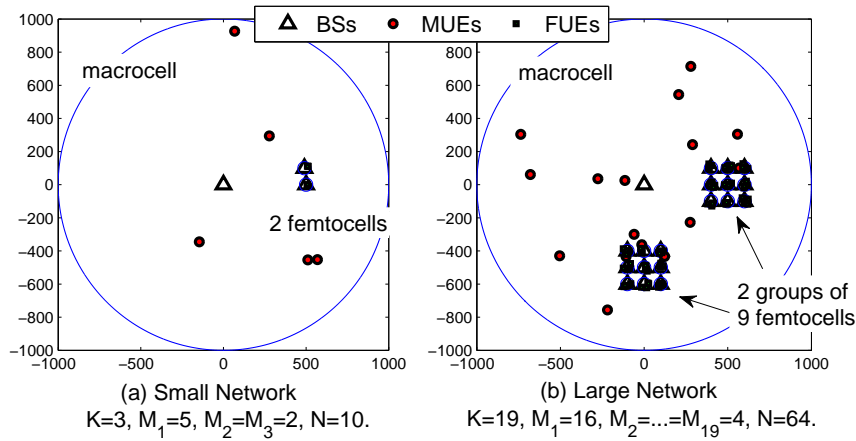}
\end{center}
\caption{Macrocell-femtocell networks used in simulation}
\label{Ch4_fig:system}
\end{figure}

We obtain numerical results for two different networks (with a small and large number of femtocells and UEs) to evaluate the efficacy of our proposed algorithms. The network setting and UE placement for our simulations 
are illustrated in Fig.~\ref{Ch4_fig:system}, where MUEs and FUEs are randomly located inside circles of radii of $r_1 = 1000\:m$ and $r_2 =30\:m$, respectively. The power channel gains $h_{ij}^n$ are generated by considering both 
Rayleigh fading, which is represented by an exponentially distributed random variable with the mean value of one, and the path-loss $L_{ij}=A_i \mathrm{log}_{10}(d_{ij})+B_i+C\mathrm{log}_{10}(\frac{f_c}{5})+ WL \times n_{ij}$, where 
$d_{ij}$ is the distance from UE $j$ to BS $i$; $(A_i,B_i)$ are set as $(36,40)$ and $(25,45)$ for MBS and FBSs, respectively; $C=20$, $f_c=2.5\:GHz$; $WL$ is the wall-loss parameter, $n_{ij}$ is the number of walls between BS $i$ and UE $j$.
This path-loss model is chosen according to the path-loss formula in \cite{winner07}, which is suggested by the WINNER II channel modeling project.
 The noise power is set as $\eta_i=10^{-13} \: W$, $\forall i \in \mathcal{B}$. 

\begin{table}[!t]
\caption{Target SINRs for different constellation sizes with target BER $\overline{P}_e=10^{-3}$.}
\centering
\begin{tabular}{|c| c| c |c |c| c|}
\hline 
$s\text{-QAM}$ & $4$ & $16$ & $64$ & $256$ & $1024$ \\ \hline 
bits/symbol & $2$ & $4$ & $6$ & $8$ & $10$ \\ \hline
$\bar{\gamma}(s)$ &  $9.55$ &  $45.11$& $179.85$ & $694.17$  & $2667.32$ \\ 
\hline 
\end{tabular}
\label{Ch4_tb:tSINR}
\end{table}

To obtain simulation results, we use two modulation schemes ($4\text{-}$ and $16\text{-QAM}$) for MUEs and five modulation schemes ($4\text{-}$, $16\text{-}$, $64\text{-}$, $256\text{-}$ and $1024\text{-QAM}$) for FUEs. Moreover, we choose the target BER $\overline{P}_e=10^{-3}$ whose target SINRs corresponding to 
different modulation schemes, which can be calculated by using (\ref{Ch4_eq:tSINR}), are given in Table~\ref{Ch4_tb:tSINR}. Each simulation result is obtained by taking 
the average over 20 different runs where, for each run, UEs are randomly located and a feasible SA matrix for MUEs $\mathrm{\bf{A}}_1$ is chosen so that each MUE 
is assigned ${N}/{M_1}$ subchannels.

\subsection{Performance of Proposed Algorithms}

\begin{figure}[!t]
\begin{center}
\includegraphics[width=0.7 \textwidth]{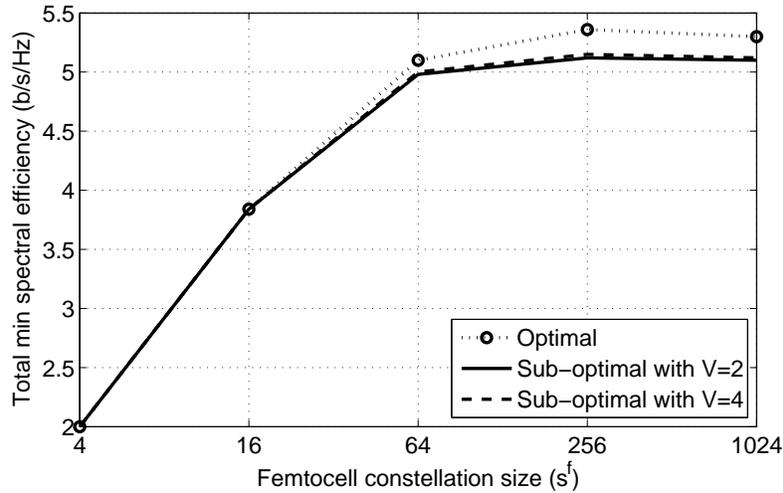}
\end{center}
\caption{Total minimum spectral efficiency for small network under optimal and supoptimal algorithms for $s^{\sf{m}}=4$, $WL=5\:dB$, $P^{\texttt{max}}_{\sf{m}}=P^{\texttt{max}}_{\sf{f}}=0.01\:W$.}
\label{Ch4_fig:compare}
\end{figure}

\begin{figure}[!t]
\begin{center}
\includegraphics[width=0.7 \textwidth]{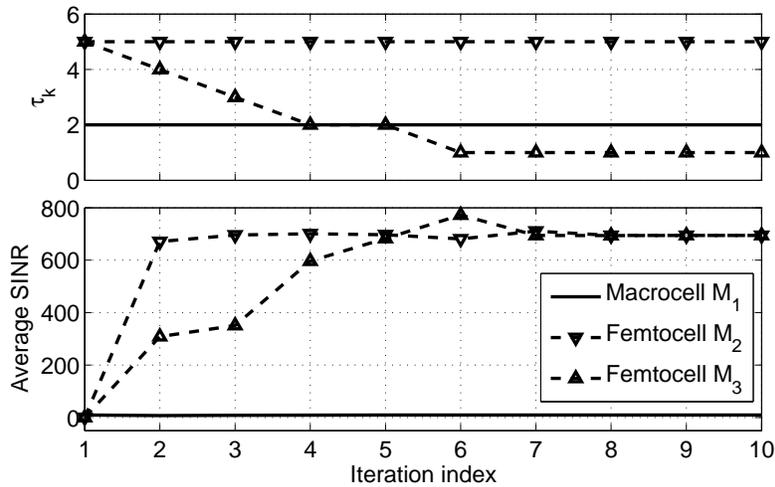}
\end{center}
\caption{Minimum number of subchannels assigned for FUEs and average SINRs versus iteration index where $s^{\sf{m}}=4$, $s^{\sf{f}}=256$, $WL=5\:dB$, 
$P^{\texttt{max}}_{\sf{m}}=P^{\texttt{max}}_{\sf{f}}=0.01\:W$.}
\label{Ch4_fig:cnvg}
\end{figure}

In Fig.~\ref{Ch4_fig:compare}, we show the total minimum spectral efficiency of all femtocells [i.e., the optimal objective value of (\ref{Ch4_objfun})] versus
 the constellation size of FUEs ($s^{\sf{f}}$) for the
 small network due to both optimal and supoptimal algorithms (see Algorithm~\ref{Ch4_alg:gms1}), which are presented in Section~\ref{Ch4_sec:OSOA}. As shown, for the
  low constellation sizes
(i.e., low target SINRs), Algorithm~\ref{Ch4_alg:gms1} can achieve almost
 the same spectral efficiency as the optimal algorithm, whereas for higher values of $s^{\sf{f}}$, Algorithm~\ref{Ch4_alg:gms1} results in just slightly lower spectral efficiency 
 than that due to the optimal algorithm.
Moreover, when we increase the value of parameter $V$, a slightly better performance can be achieved.
We then plot the minimum number of subchannels assigned for FUEs and the average SINRs over assigned subchannels versus iteration index for 
the small network in Fig.~\ref{Ch4_fig:cnvg}. This figure shows the convergence of Algorithm~\ref{Ch4_alg:gms1} in terms of both user SINR and 
number of assigned subchannels per FUE.

\begin{figure}[!t]
\begin{center}
\includegraphics[width=0.7 \textwidth]{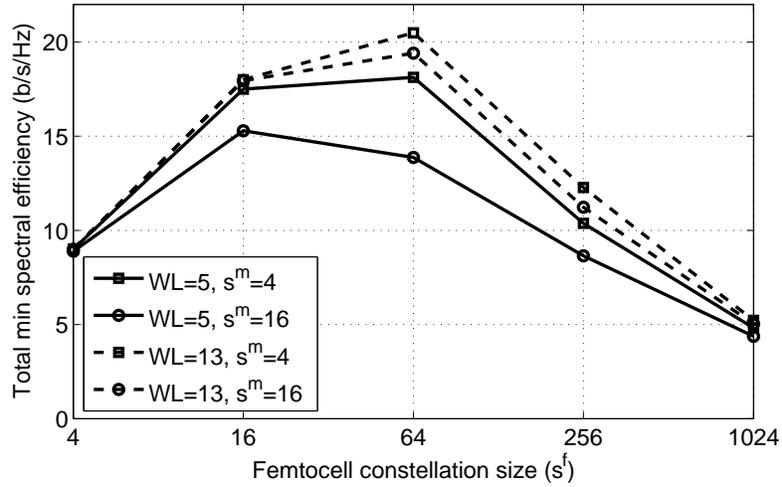}
\end{center}
\caption{Total minimum spectral efficiency versus $QAM$ constellation size of femtocells with $P^{\texttt{max}}_{\sf{m}}=P^{\texttt{max}}_{\sf{f}}=0.01\:W$.}
\label{Ch4_fig:wl5}
\end{figure}

Fig.~\ref{Ch4_fig:wl5} shows the total femtocell minimum spectral efficiency versus the $QAM$ constellation size, which is obtained by running Algorithm~\ref{Ch4_alg:gms1}
 for the large network. This figure shows that the total minimum spectral efficiency increases and then decreases as femtocell constellation size increases; it decreases as the constellation size of MUEs increases. These results can be interpreted as follows. Higher constellation sizes have
 higher target SINRs, which require higher transmission power and produce more interference to other users in the network.
Therefore, the power constraints of FUEs and MUEs are more likely to be violated for higher constellation sizes (e.g., 254
or 1024), which limits the number of subchannels allocated for each FUE. Moreover, for low modulation levels (therefore,
low target SINRs of FUEs), the spectral efficiency on each assigned subchannel increases with the increasing modulation
level, whereas the number of subchannels assigned for each FUE does not decrease too much. Moreover, the total minimum
spectral efficiency of femtocells increases with the increasing wall-loss value $W_l$. This is because the higher wall loss
reduces both co-tier and cross-tier interference to users of both tiers.

\begin{figure}[!t]
\begin{center}
\includegraphics[width=0.7 \textwidth]{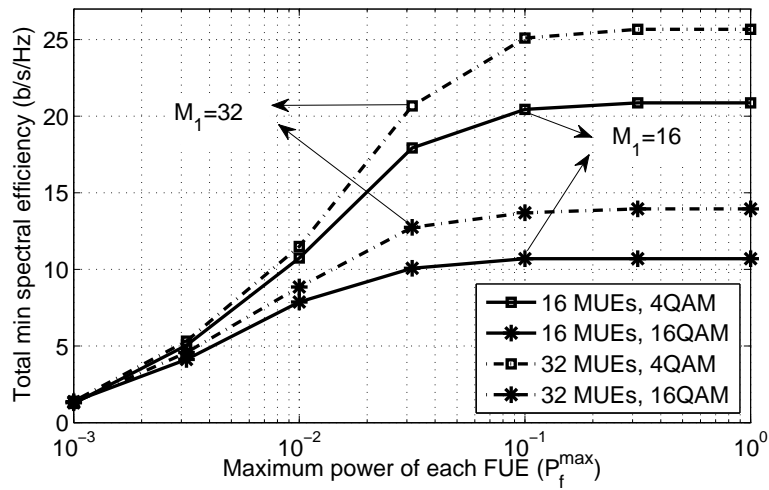}
\end{center}
\caption{Total minimum spectral efficiency versus $P_{\sf{f}}^{\texttt{max}}$ for the large network with $s^{\sf{f}}=256$, $WL=5\:dB$, $P^{\texttt{max}}_{\sf m}=0.01\:W$.}
\label{Ch4_fig:pf}
\end{figure}

In Figs.~\ref{Ch4_fig:pf} and \ref{Ch4_fig:pm}, we plot the total femtocell minimum spectral efficiency versus the maximum power of FUEs ($P_{\sf{f}}^{\texttt{max}}$) and MUEs ($P_{\sf{m}}^{\texttt{max}}$), respectively, for different modulation levels of MUEs (the modulation scheme of FUEs is $256-QAM$). These figures show that the total 
minimum spectral efficiency increases with the increase of maximum power budgets $P_{\sf{f}}^{\texttt{max}}$ or $P_{\sf{m}}^{\texttt{max}}$.
However, this value is
saturated as the maximum power budgets $P_{\sf f}^{\texttt{max}}$ or $P_{\sf m}^{\texttt{max}}$ become sufficiently large.
In addition, as the number of MUEs increases, the total femtocell minimum spectral efficiency increases due to the better diversity gain offered by the macro tier.

\begin{figure}[!t]
\begin{center}
\includegraphics[width=0.7 \textwidth]{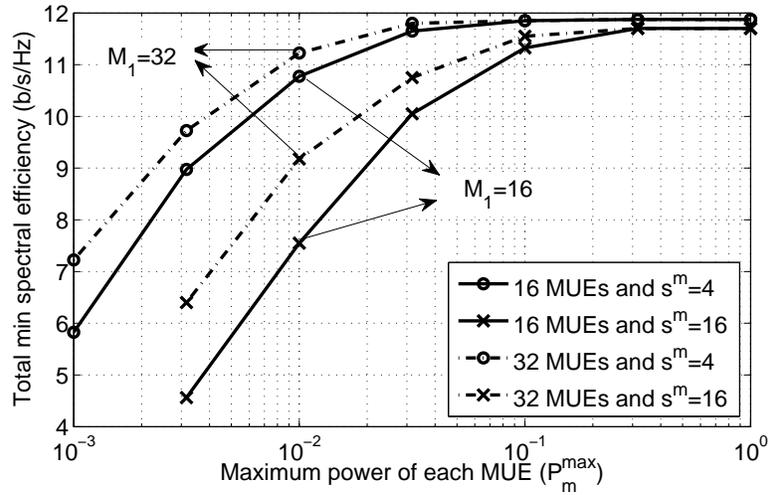}
\end{center}
\caption{Total minimum spectral efficiency versus $P_{\sf m}^{\texttt{max}}$ for the large network with $s^{\sf f}=256$, $WL=5\:dB$, $P^{\texttt{max}}_{\sf f}=0.01\:W$.}
\label{Ch4_fig:pm}
\end{figure}

\begin{figure}[!t]
\begin{center}
\includegraphics[width=0.7 \textwidth]{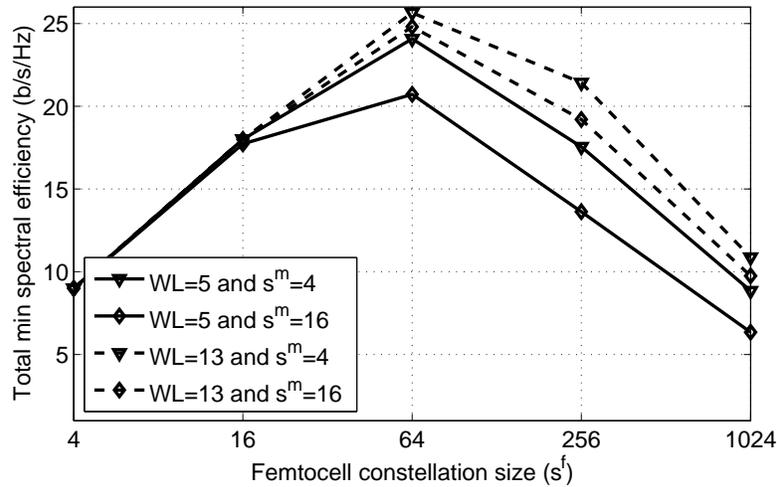}
\end{center}
\caption{Total minimum spectral efficiency of femtocells in the downlink with $WL=5\:dB, P^{\texttt{max}}_{\sf{BS},\it{m}}=0.2\:W, P^{\texttt{max}}_{\sf{BS},\it{f}}=0.05\:W$.}
\label{Ch4_fig:dlk}
\end{figure}

In Fig.~\ref{Ch4_fig:dlk}, we show the total minimum spectral efficiency of
femtocells versus the QAM constellation sizes for the downlink
by running the resource-allocation algorithm with the power
constraints (\ref{Ch4_eq:pwcon_dlk}) and SA weights (\ref{Ch4_eq:w3}).This figure shows that the
proposed algorithm works well for the downlink system where
these results are similar to those for the uplink case presented
in Fig.~\ref{Ch4_fig:wl5}.

Fig.~\ref{Ch4_fig:mRate_WL} presents the total minimum spectral efficiency of
femtocells versus the wall-loss parameter $WL$ achieved by the fixed-rate algorithm (i.e., Algorithm~\ref{Ch4_alg:gms1})
 and the adaptive-rate algorithm
(i.e., Algorithm~\ref{Ch4_alg:gms2}) whose results are indicated by ``Max-Fixed Rate'' and ``Multi-Rate'' in this figure, respectively.
In Algorithm~\ref{Ch4_alg:gms2}, each femtocell can choose one best modulation scheme among five schemes ($s^{\sf f}=4,16,64,256,1024$)
whereas Algorithm~\ref{Ch4_alg:gms1} employs the maximum modulation level for all femtocells.
This figure demonstrates that the great performance gain can be achieved by exploiting the adaptive-rate
feature in performing resource allocation for FUEs.

\begin{figure}[!t]
\begin{center}
\includegraphics[width=0.7 \textwidth]{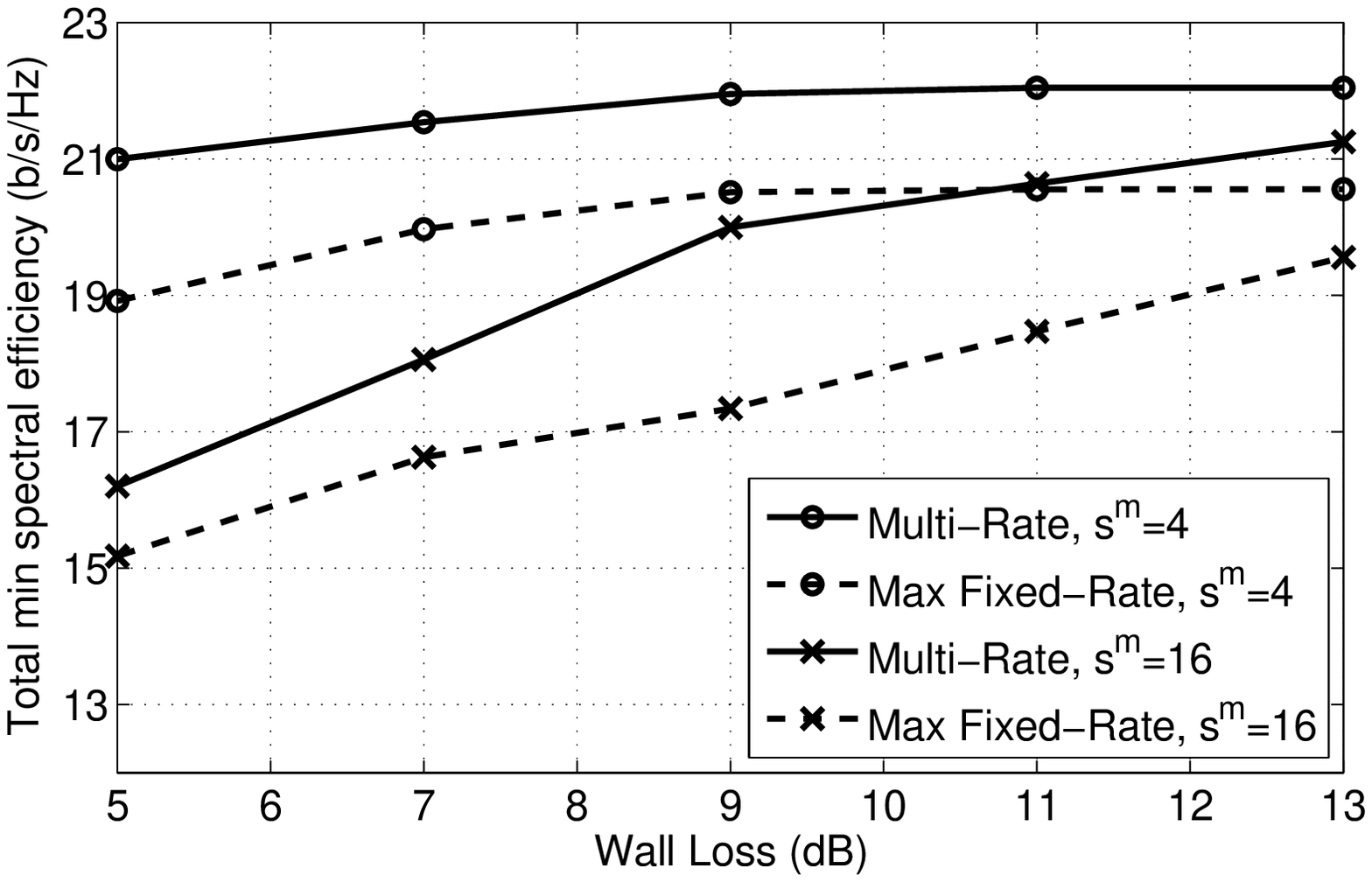}
\end{center}
\caption{Total minimum spectral efficiency versus wall loss for Algorithm~\ref{Ch4_alg:gms2} with $P_{\sf{m}}^{\texttt{max}}=P_{\sf{f}}^{\texttt{max}}=0.01\:W$.}
\label{Ch4_fig:mRate_WL}
\end{figure}

\begin{figure}[!t]
\begin{center}
\includegraphics[width=0.7 \textwidth]{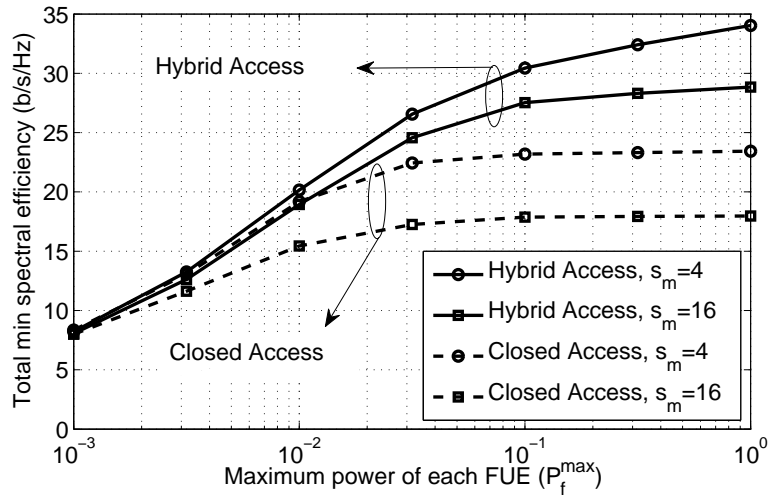}
\end{center}
\caption{Total minimum spectral efficiency versus $P_{\sf f}^{\sf max}$ under the adaptive rate and hybrid access strategy for $P_{\sf m}^{\texttt{max}}=0.01\:W$, $M_1=32$ and $WL=5\:dB$.}
\label{Ch4_fig:Hbdacc}
\end{figure}

In Fig.~\ref{Ch4_fig:Hbdacc}, we show the total minimum spectral efficiency of all femtocells versus $P_{\sf{f}}^{\sf max}$ under the
 hybrid and closed-access strategies, which are obtained by Algorithm~\ref{Ch4_alg:gms3} and Algorithm~\ref{Ch4_alg:gms2}, respectively. These results correspond to the large network with $32$ MUEs and $W_l=5\:dB$. 
 As shown, the hybrid access strategy achieves higher performance
 than that under closed access scheme. 
 Moreover, the performance gap between two strategies becomes larger with increasing maximum power budget of FUEs. 
Under the closed access, the total spectral efficiency of femtocells is indeed limited by MUEs that are located far away from the MBS and close to some
 FBSs. The hybrid access strategy enables these victim MUEs to connect with nearby FBSs, which mitigates this problem and increases the spectral
 efficiency of the femto tier.

\subsection{Performance Comparison with Zhang's Algorithm}

\begin{figure}[!t]
\begin{center}
\includegraphics[width=0.7 \textwidth]{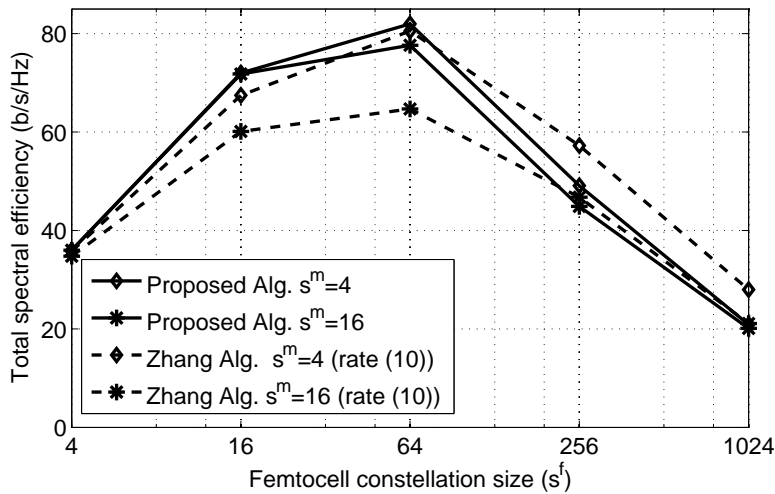}
\end{center}
\caption{Total spectral efficiency versus QAM constellation size of FUEs with $P^{\texttt{max}}_{\sf m}=P^{\texttt{max}}_{\sf f}=0.01\:W$.}
\label{Ch4_fig:c_zh_fQAM}
\end{figure}

Performance comparison between our proposed algorithm and that in \cite{zhang12} is presented in the following.
To obtain the simulation results in Fig.~\ref{Ch4_fig:c_zh_fQAM}, the spectral efficiency of each FUE is obtained by running Zhang's algorithm and 
calculating the spectral efficiency by using the fixed-rate formula (\ref{Ch4_eq:rate_n}).
In Fig.~\ref{Ch4_fig:c_zh_fQAM}, we show the total spectral efficiency of all FUEs in all femtocells due to Zhang's algorithm \cite{zhang12} and Algorithm~\ref{Ch4_alg:gms1} as we vary the QAM constellation size of each FUE $s^{\sf{f}}$.
This figure shows that, for a fixed constellation size employed by MUEs, the total spectral efficiency achieved by Zhang's algorithm is lower than ours for low values of $s^{\sf{f}}$,
whereas Zhang's algorithm results in slightly higher spectral efficiency than our algorithm for larger values of $s^{\sf{f}}$. The slightly better performance of Zhang's algorithm
comes at the cost of severely unfair resource sharing among FUEs, as we will show in Fig.~\ref{Ch4_fig:c_zh_fair}. 
It is worth recalling that Zhang \textit{et al.} aim to maximize the total throughput
without considering user fairness, while our formulation focuses on achieving an equal rate for FUEs in each femtocell (i.e., max-min fairness).

\begin{figure}[!t]
\begin{center}
\includegraphics[width=0.7 \textwidth]{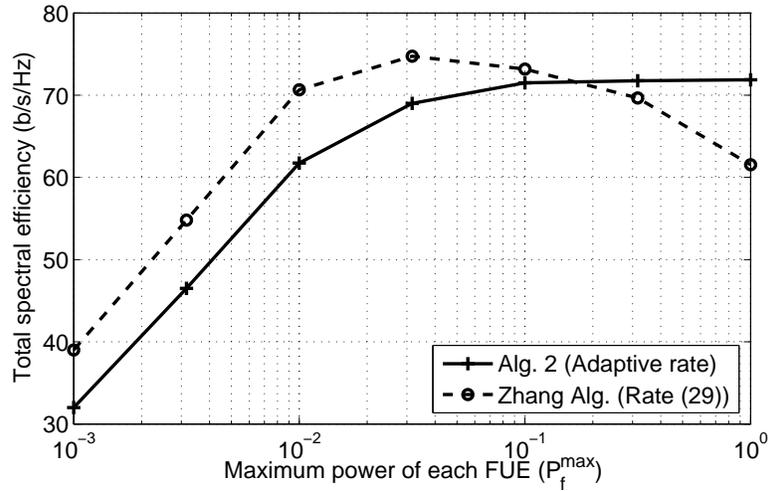}
\end{center}
\caption{Total spectral efficiency versus maximum power of each FUE with $P^{\texttt{max}}_{\sf m}=0.01\:W$.}
\label{Ch4_fig:c_zh_pfmax}
\end{figure}

Fig.~\ref{Ch4_fig:c_zh_pfmax} shows the total spectral efficiency of Zhang's algorithm and our proposed algorithms with rate adaptation (i. e., Algorithm~\ref{Ch4_alg:gms2}) as
we vary the maximum power of each FUE $P^{\texttt{max}}_{\sf{f}}$. Here, the spectral efficiency of FUEs under Zhang's algorithm is calculated by using the adaptive rate formula (\ref{Ch4_objfun3}). 
As evident, the total spectral efficiency due to Zhang's algorithm is higher than ours for low values of $P^{\texttt{max}}_{\sf{f}}$, but Zhang's
algorithm performs worse compared with our algorithm if $P^{\texttt{max}}_{\sf f}$ becomes sufficiently large. The superior performance of
Zhang's algorithm in the low-power regime is again due to
its unfair resource-sharing nature. In fact, Zhang's algorithm
does not account for the co-tier interference among femtocells, which explains why it performs worse compared with
our algorithm in the high-power regime. This is because the transmission power values of FUEs increase with the larger
power budget, which results in higher femtocell co-tier interference. The high co-tier interference degrades the performance
of Zhang's algorithm since this type of interference is not
managed in their work.

\begin{figure}[!t]
\begin{center}
\includegraphics[width=0.7 \textwidth]{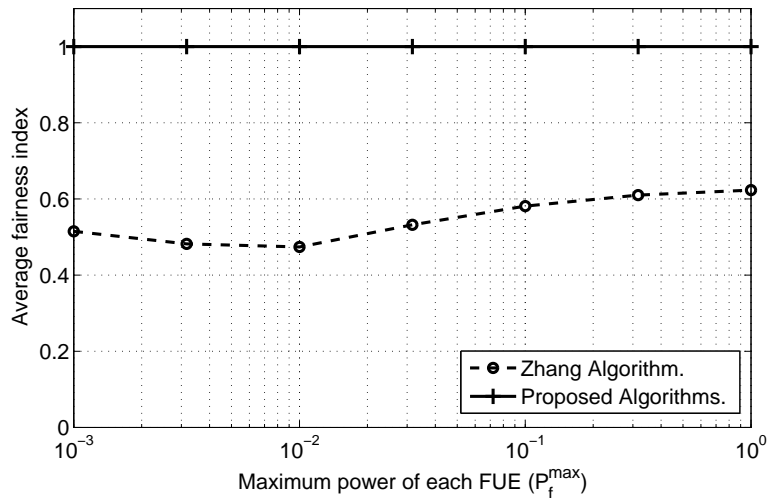}
\end{center}
\caption{Fairness index versus maximum power of each FUE with $P^{\texttt{max}}_{\sf m}=0.01\:W$.}
\label{Ch4_fig:c_zh_fair}
\end{figure}

To compare the fairness of Zhang's and our proposed algorithms, we present the average fairness index achieved by FUEs in each femtocell, which is calculated as \cite{jain99}
\beq \label{eq:fair_index}
FI_k= {\left( \sum_{i=1}^{M_k} R_i \right)^2} \bigg / {M_k \left( \sum_{i=1}^{M_k} R_i^2 \right)}. \eeq 
This fairness index is widely employed in the literature to evaluate the level of fairness achieved by resource allocation algorithms where
an algorithm is fairer if its fairness index is higher and close to the maximum value of one and vice versa. 
Fig.~\ref{Ch4_fig:c_zh_fair} shows the average fairness index achieved by Zhang's and our proposed algorithms.
 As evident, our algorithm provides maximum fairness for FUEs in each femtocell (since the rates of FUEs in each femtocell are equal), while the average fairness index 
of Zhang's algorithm is around $0.5$--$0.6$, which is quite low. This implies that Zhang's algorithm may result in big differences in the rates achieved FUEs in each femtocell, which is 
not very desirable. Considering that the total spectral efficiency due to Zhang's and our proposed algorithms is not much different while Zhang's algorithm is highly unfair,
we would conclude that our proposed algorithms better balance between the throughput and fairness compared to Zhang's algorithm.

\section{Conclusion}
\label{Ch4_ccls}
We have proposed centralized optimal and distributed
resource-allocation algorithms that maximize the total minimum spectral efficiency of the femtocell network while ensuring fairness among FUEs and QoS protection for all MUEs.
Moreover, the proposed algorithm has been extended for three
scenarios, namely, downlink context, adaptive-rate systems
where FUEs in each femtocell can adaptively choose one in
a predetermined set of modulation schemes, and hybrid access
design where MUEs can associate with nearby FBSs to improve
the performance of the femto tier. Extensive numerical results
have been presented to demonstrate the impacts of different
system parameters on the network performance and the efficacy
of our proposed resource-allocation algorithms.

\begin{table} 
\footnotesize
\caption{Summary of Key Notations}
\centering
\begin{tabular}{|l | p{140mm} |}
\hline 
Notations & Description \\ 
\hline
$a^n_i$ & Subchannel assignment variable for subchannel $n$ and UE $i$  \\
$\mathrm{\bf{A}}$ & Subchannel assignment matrix for all $M$ UEs over $N$ subchannels\\
$\mathrm{\bf{A}}_k$ & Subchannel assignment matrix for UEs in cell $k$ over $N$ subchannels\\
$A_i, B_i, C$ & Coefficients in path loss formula\\
$b_i$ & Base station serving UE $i$\\
$BW$ & Total bandwidth of all $N$ subchannels\\
$\mathcal{B}$ & Set of all base stations\\
$\mathcal{B}_{\sf f}$ & Set of all FBSs\\
$\mathcal{B}_i$ & Set of potential FBSs for MUE $i$\\
$c^n=|\mathcal{U}^n|$ & Number of elements in set $\mathcal{U}^n$\\
$d_{ij}$ & Distance from UE $j$ to BS $i$\\
$f_c$ & Frequency\\
$f\left( \gamma(s) \right)$ & Bit error rate (BER) function for SINR $\gamma(s)$ as the constellation size $s$ is adopted \\
$h_{ij}^n$ & power channel gain from user $j$ to BS $\mathit{i}$ over subchannel $n$ \\
$I_{i,k}^n$ & Effective interference corresponding to MUE $i$ for its connection with BS $k$ on subchannel $n$ \\
$I_i^n(\mathrm{\bf{A}},\mathrm{\bf{P}})$ & Effective interference corresponding to UE $i$ on subchannel $n$ \\
$K$ &  Number of cells \\
$L_{ij}$ & Path loss\\
$M_k$ & Number of UEs served by base station $k$ \\
$M_{\sf m}, M_{\sf f}$ & Numbers of MUEs and FUEs, respectively \\
$\mathfrak{M}$ & Set of constellation sizes \\
$\mathcal{N}$ & Set of all subchannels \\
$\mathcal{N}_i$ & Set of subchannels assigned to UE $i$ \\
$N$ &  Number of subchannels \\
$n_{ij}$ & Number of walls between BS $i$ and UE $j$  \\
$p_i^{n,\texttt{min}}$ & Required transmission power of UE $i$ over its assigned subchannel $n$\\
$p^n_i$ & Transmission power of UE $i$ over subchannel $n$ \\
$P_{i,k}$ & Estimated required power for MUE $i$ connecting with BS $k$\\
$P_i^{\texttt{max}}$ & Maximum transmission power of UE $i$ \\
$P_{\sf BS, \textit{k}}^{\texttt{max}}$ & Maximum power of BS $k$ in the downlink \\
$\mathrm{\bf{P}}$ & Power allocation matrix whose elements are $p^n_i$\\
$\mathrm{\bf{P}}_k$ & Power allocation matrix for UEs over $N$ subchannels in cell $k$ \\
$\overline{P}_e$ & Target BER \\
$Q_k$ & Number of unused subchannels at FBS $k$ \\
$\mathbb{Q}(.)$ & Q function \\
$r_i^n(\mathrm{\bf{A}},\mathrm{\bf{P}})$ & Spectral efficiency (bits/s/Hz) achieved by FUE $i$ on subchannel $n$\\
$R_i(\mathrm{\bf{A}},\mathrm{\bf{P}})$ & Spectral efficiency (bits/s/Hz) achieved by FUE $i$\\
$s$ & Constellation size corresponding to the $s\text{-QAM}$ modulation scheme\\
$s^k$ & Constellation size of the $\text{MQAM}$ modulation utilized in cell $k$ \\
$s_{\sf max}$ & Maximum allowable constellation size in the set $\mathfrak{M}$ \\
$s^{\sf m}, s^{\sf f}$ & Constellation size of the $\text{MQAM}$ modulation employed in macrocell and femtocells, respectively\\
$\mathcal{U}_k$ & Set of all UEs served by base station $k$ \\
$\mathcal{U}_{m,k}$ & Set of MUEs connecting with BS $k$ \\
$\mathcal{U},\mathcal{U}_{\sf m}, \mathcal{U}_{\sf f}$ & Set of all UEs, MUEs and FUEs, respectively \\
$\mathcal{U}^n$ & Set of UEs in both tiers that are assigned subchannel $n$ \\
$w_i^n$ & Assignment weight of UE $i$ over subchannel $n$ \\
$W_k$ & Total minimum weight of subchannel assignments in cell $k$\\
$WL$ & Wall-loss parameter \\
$\alpha_i^{n}, \theta_i^{n},\delta_i^{n}, \delta_i^{n}$ & Other factors for calculating scaling factor $\chi_i^n$ \\
$\bar{\gamma}(s)$ & Target SINR corresponding to constellation size $s$ \\
$\bar{\gamma}_i^n$ & Target SINR of UE $i$ over subchannel $n$ \\
$\Gamma_i^n(\mathrm{\bf{A}},\mathrm{\bf{P}})$ & SINR of UE $i$ over subchannel $n$\\
$\eta_i^n$ & Noise power at BS $\mathit{i}$ on subchannel $n$ \\
$\tau_k$ & Number of subchannels assigned for each UE in cell $k$\\
$\chi_i^n$ & Scaling factor in SA weight \\
$\Omega\{\mathrm{\bf{A}}\}$ & List of all potential SA solutions \\
$\Omega^{\ast}\{\mathrm{\bf{A}}\}$ & Sorted list of $\Omega\{\mathrm{\bf{A}}\}$ in decreasing order of $\sum_{k=2}^K \tau_k$ \\
\hline 
\end{tabular}
\label{Ch4_keynotation1}
\end{table}
\chapter{Coordinated Multipoint Transmission Design for Cloud-RANs with Limited Fronthaul Capacity Constraints} 
\renewcommand{\rightmark}{Chapter 7.  CoMP Transmission Design for C-RAN with Limited FH Constraint}
\label{Ch5}
The content of this chapter was published in IEEE Transactions on Vehicular Technology in the following paper:

Vu N. Ha, Long B. Le, , and Ng\d{o}c-D\~{u}ng \DH\`{a}o, ``Coordinated multipoint transmission design for Cloud-RANs considering limited fronthaul capacity constraints,'' {\em IEEE Trans. Veh. Tech.,} vol. PP, no. 99, pp. 1--16, 2016.

\section{Abstract}
\label{Ch5_Abs}
In this paper, we consider the coordinated multipoint (CoMP) transmission design for the downlink cloud radio access
network (Cloud-RAN). Our design aims to optimize the set of remote radio heads (RRHs) serving each user and the precoding
and transmission power to minimize the total transmission power while maintaining the fronthaul capacity and users' quality-ofservice
(QoS) constraints. The fronthaul capacity constraint involves a nonconvex and discontinuous function that renders the
optimal exhaustive search method unaffordable for large networks.
To address this challenge, we propose two low-complexity algorithms. The first pricing-based algorithm solves the underlying
problem through iteratively tackling a related pricing problem while appropriately updating the pricing parameter. In the
second iterative linear-relaxed algorithm, we directly address the fronthaul constraint function by iteratively approximating it with
a suitable linear form using a conjugate function and solving the corresponding convex problem. For performance evaluation, we
also compare our proposed algorithms with two existing algorithms in the literature. Finally, extensive numerical results are
presented, which illustrate the convergence of our proposed algorithms and confirm that our algorithms significantly outperform
the state-of-the-art existing algorithms.

\section{Introduction}
\label{Ch5_Introduction}
The next-generation wireless cellular network is expected to provide significantly enhanced capacity to support the
exponential growth of mobile data traffic \cite{cisco14, mob_rep_13, Le-Hossain13, Le12, Le_EU15}.
Toward this end, coordinated multipoint (CoMP) transmission/reception techniques provide promising solutions, which have been
adopted in the Long-Term Evolution Advanced standard \cite{Americas12}.
In fact, CoMP employs different forms of base-station (BS) coordination with dynamic sharing of channel state information
(CSI) and/or data information among BSs, as well as efficient transmission, precoding, and resource-allocation algorithms \cite{tanno2010}.
However, implementation of CoMP in wireless cellular networks typically requires costly high-speed backhauls connecting
different BSs for CSI and information exchanges and distributed computation \cite{irmer2011,Fettweis11}.
\nomenclature{CSI}{Channel State Information}

Cloud-RAN has been recently proposed as an alternative network architecture that can achieve the performance gains of
the CoMP techniques while effectively exploiting the computation power of the cloud computing technology. 
In a Cloud-RAN, the cloud comprising a pool of baseband processing units (BBUs) performs most baseband processing tasks while transmission functions are realized by simple remote radio heads (RRHs) using the processed baseband signals received from the
cloud through a fronthaul transport network \cite{chinamobile2011,NGMN2013,checko13,wu12}.
With a centralized cloud processing center (CPC), complex resource-allocation optimization, such as precoding, power control, user
scheduling, interference management algorithms, can be realized, which can translate into significant network performance
improvements. 
In addition, Cloud-RAN with simple RRHs enables us to reduce both capital and operational expenditures of
the network \cite{checko13}. Moreover, this emerging network architecture also allows to deploy dense small cells efficiently \cite{Liu_infocom_13}. 
\nomenclature{CPC}{Cloud Processing Center}

Despite these benefits, there are various technical challenges one must resolve in designing and deploying the Cloud-RAN
architecture. 
In particular, suitable mechanisms that efficiently utilize computing resources in the cloud, fronthaul capacity to
realize advanced communications, baseband signal processing, and resource-allocation schemes must be developed.
Some recent works have addressed some of these issues, as can be described in the following.
In \cite{fan_arxiv}, an efficient clustering algorithm was proposed to reduce the number of computations in
the centralized pool of BBUs where the number of RRHs is very large.
In \cite{shamai13a} and \cite{shamai13b}, compression techniques to minimize the amount of data transmitted over the fronthaul transport network were developed.

Several other papers focus on precoding/decoding design for CoMP considering different design aspects of Cloud-RAN. 
In particular, in \cite{vincent_lau_14}, the joint optimization of antenna selection,
regularization, and power allocation was studied to maximize the average weighted sum rate.
The random matrix theory was utilized to decompose the considered nonconvex
problem into subproblems that can be tackled more easily.
In \cite{letaief14} and \cite{luo_arxiv}, the precoding vectors were optimized for all
RRHs to minimize the total network power consumption; the downlink was considered in \cite{letaief14}, whereas
whereas both downlink and uplink communications were addressed in \cite{luo_arxiv}.
Total power to support radio transmission and operations of fronthaul links and RRHs is accounted for in these research works,
where the authors show how to transform the underlying problems into the sparse beamforming problems. 
Then, these papers combine
the solution techniques employed to address the traditional power minimization problem in \cite{Bengtsson99} and \cite{TomLuo06} and the fronthaul capacity minimization problem in \cite{Quek2013}
to tackle the transformed problem.
This solution technique is also closely related to that employed in \cite{cheng13}. 
In fact, the design problems considered in \cite{Quek2013} and \cite{cheng13} for CoMP and in \cite{letaief14} and \cite{luo_arxiv} for Cloud-RAN with standard convex constraints
do not explicitly model the fronthaul capacity constraints. 
 Consequently, they can be solved directly by employing
 the compressed sensing techniques \cite{donoho06,tao06,bach12}.
 We would like to emphasize that the related work \cite{Quek2013} aims 
at minimizing the number of active links between base stations (RRHs) and users.
Therefore,
the limited fronthaul capacity is not explicitly imposed as constraints in \cite{Quek2013}, but it is rather
considered in the design objective. Note that the greedy principle proposed in \cite{Quek2013} can be employed to solve
our problem (even though our problem is not the same with the problem in \cite{Quek2013}); however, such greedy approach may not achieve very good performance.

This current work aims to study the CoMP joint transmission design problem for Cloud-RAN 
that explicitly considers the fronthaul capacity and users' quality-of-service (QoS) constraints.
In particular, we make the following contributions.

\begin{itemize}
\item
We formulate the joint transmission design problem for Cloud-RAN, which optimizes the set of RRHs serving each user
together with their precoding and power allocation solutions to minimize the total transmission power considering
users' QoS and fronthaul capacity constraints.
In particular, we assume that certain fronthaul capacity, which depends on the user's required QoS, is consumed 
to transfer the processed baseband signal from the cloud to one particular RRH serving that user and that the fronthaul transport network has limited capacity.
The considered problem is indeed nonconvex, and it comprises the discontinuous fronthaul capacity constraint and other traditional power and users' QoS constraints.
Therefore, addressing this problem through exhaustive search requires exponential computation complexity, which is not affordable for practically large networks.

\item
To tackle the considered problem, we then develop two different low-complexity algorithms. 
For the first algorithm, which is called \textit{pricing-based algorithm}, we
consider the related pricing problem and devise a novel mechanism to iteratively update the pricing parameter
to obtain a good feasible solution.
The second algorithm, which is called \textit{iterative linear-relaxed algorithm}, is 
developed by regularizing the fronthaul constraint function into an approximated linear form and
iteratively solving the approximated optimization problem. 

\item 
For performance comparison, we also describe two existing algorithms proposed in \cite{Quek2013} and \cite{dai14}
(with appropriate modifications due to different considered settings), which can also solve our considered problem.
We then study the extended setting where there are multiple individual fronthaul capacity constraints.
We show that it is possible to adapt the proposed low-complexity algorithms to solve the corresponding
problem in this setting. In addition, we indicate the need to introduce multiple pricing parameters, each
of which is required to deal with one corresponding fronthaul capacity constraint in the extended \textit{iterative linear-relaxed algorithm}.

\item 
Numerical results are presented to demonstrate the efficacy of the proposed algorithms and their relative performance 
compared to the optimal algorithm and existing algorithms. We show the convergence and the impacts of number of users, number of antennas, 
users' required QoS on the network performance. Specifically, while both two proposed algorithms perform remarkably well
in most cases, the pricing-based algorithm achieves slightly better performance than the iterative linear-relaxed algorithm at the cost of longer convergence time. 
In addition, both proposed algorithms significantly outperform the existing algorithms in \cite{Quek2013} and \cite{dai14} for all considered simulation scenarios.
\end{itemize}

The remaining of this paper is organized as follows. We describe the system model and problem formulation in Section~\ref{Ch5_sysmod}. In Section~\ref{Ch5_solving}, we present two low-complexity algorithms and
two existing algorithms. 
Next, we extend our proposed algorithms to the setting with multiple individual fronthaul capacity
constraints and the multiple-input–multiple-output (MIMO) systems where multiple data streams can be transmitted to each
user in Section~\ref{Ch5_sec:furex}.
Numerical results are presented in Section~\ref{Ch5_results} followed by conclusion in Section~\ref{Ch5_sec:cls}. Some preliminary results of this paper
have been published in \cite{vuha_ciss_2014} and \cite{vuha_globecom_2014}.

For notation, we use $\mathbf{X}^T$, $\mathbf{X}^H$, $\mathsf{Tr}(\mathbf{X})$, and $\mathsf{rank}(\mathbf{X})$ to denote transpose, Hermitian transpose, trace, and
 rank of matrix $\mathbf{X}$, respectively. 
$\mathbf{1}_{x \times y}$ and $\mathbf{0}_{x \times y}$ denote the matrix of ones and the matrix of zeros whose dimension are $x \times y$, respectively.
$|\mathcal{S}|$ denotes the cardinality of set $\mathcal{S}$, and $\mathsf{diag}(\mathbf{x})$ is the diagonal matrix constructed from the elements of vector $\mathbf{x}$.
 
\section{System Model and Problem Formulation}
\label{Ch5_sysmod}

\subsection{System Model}
The general architecture of Cloud-RAN under consideration, which is shown in Fig.~\ref{Ch5_fig:CRAN}, consists of three main components: 1) CPC consisting BBUs pool; 2) the optical fronthaul transport network (i.e., fronthaul links); and 3) RRH access units with antennas located at remote sites.
Specifically, the CPC comprising a certain number of BBUs is the heart of this architecture where BBUs act as virtual base stations that process baseband signals for users and optimize the network-resource-allocation tasks.
The fronthaul transport network connecting the CPC and distributed RRHs is usually deployed by using optical fibers.
In addition, RRHs communicate with users in the downlink via RF signals, which are formed by using the baseband signals and the precoding vectors received from the BBU pool.

\begin{figure}[!t]
\begin{center}
\includegraphics[width=100mm]{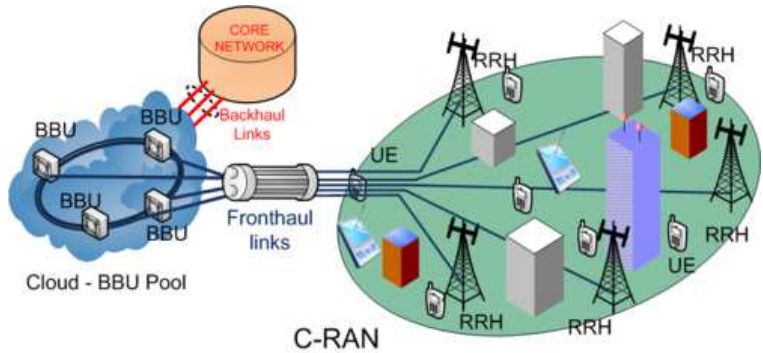}
\end{center}
\caption{Cloud-RAN architecture.}
\label{Ch5_fig:CRAN}
\end{figure}

In this paper, we consider the joint transmission design for CoMP downlink communications in the Cloud-RAN with $K$ RRHs and $M$ users.
Let $\mathcal{K}$ and $\mathcal{U}$ be the sets of RRHs and users in the network, respectively.
Suppose that RRH $k$ is equipped with $N_k$ antennas ($k \in \mathcal{K}$) and each user has a single antenna. 
We assume that each user is served by a specific group of assigned RRHs, and one RRH can serve a number of users.
When RRH $k$ is assigned to serve user $u$, this RRH receives the user's processed baseband signal from the cloud. 
Then, the RRH converts and transmits the corresponding RF signal using a suitably designed precoding vector. 
Moreover, denote $\mathbf{v}_u^{k} \in \mathbb{C}^{N_k \times 1}$ as
 the precoding vector at RRH $k$ corresponding to the signal transmitted to user $u$.
Then, the transmission power used by RRH $k$ 
to serve user $u$ can be expressed as
\beq \label{Ch5_eq:pv}
p^k_u=\mathbf{v}_u^{kH}\mathbf{v}_u^{k}.
\eeq 
Let $(k,u)$ represent the communication link between RRH $k$ and user $u$.
For simplicity, we assume that each user is virtually served
by all RRHs in all expressions and problem formulations.
However, a user is actually served by one particular RRH if the
corresponding transmission power is strictly larger than zero.
Let $x_u \in \mathbb{C}$ denote the signal symbols with unit power for user $u$, which are transmitted by RRHs in set $\mathcal{K}$ upon receiving the processed baseband signals from the cloud.
Then, the baseband signal $y_u$ received at user $u$ can be written as
\beq
\label{Ch5_eq:rx_sig}
y_u= \underbrace{\sum \limits_{k \in \mathcal{K}} \mathbf{h}_u^{kH} \mathbf{v}_u^{k} x_u}_\text{desired signal} 
 + \underbrace{ \sum \limits_{i =1, \neq u}^{M} \sum \limits_{l \in \mathcal{K}} \mathbf{h}_u^{lH} \mathbf{v}_i^{l} x_{i}}_\text{interference} + \eta_u,
\eeq
where $\mathbf{h}_u^k \in \mathbb{C}^{N_k \times 1}$ denotes the channel vector between RRH $k$ and user $u$, and $\eta_u$ describes the noise at user $u$. 
Then, the signal-to-interference-plus-noise ratio (SINR) achieved by $u$ can be described as
\beq \label{Ch5_eq:SINR}
\Gamma_u =\dfrac{ \left|  \sum \limits_{k \in \mathcal{K}} \mathbf{h}_u^{kH} \mathbf{v}_u^{k}\right| ^2 }
{\sum \limits_{i =1, \neq u}^{M} \left| \sum \limits_{l \in \mathcal{K}}\mathbf{h}_u^{lH} \mathbf{v}_i^{l}\right|^2 + \sigma^2 },
\eeq
where $\sigma^2$ is the noise power.
In this paper, we are interested in design the precoding in order to satisfy users' QoS. This QoS constraint of user $u$ is expressed as
\beq \label{Ch5_eq:SINR_constraint}
\Gamma_u \geq \bar{\gamma}_u, \qquad \forall u \in \mathcal{U},
\eeq
where $\bar{\gamma}_u$ denotes the target SINR of user $u$.

To manage the computation complexity in large networks, we can impose additional constraints
where each user $u$ can only be served by set of near RRHs $\mathcal{R}_u$ (e.g., clustering constraints).
This set of RRHs can be determined based on the distance or channel gain from them to each user.\footnote{For example, each user is potentially associated
 with $r$ nearest RRHs as $r$th-order Voronoi tessellation \cite{RWHeath15} or RRHs with strong channel gain as \cite{gong11}.}
In the following, we assume that only RRH-user communications links in predetermined set $\mathcal{L}$ are allowed, where $\mathcal{L} = \lbrace (k,u)\vert u \in \mathcal{U}, k \in \mathcal{R}_u \rbrace$.
When no such constraint exists, set $\mathcal{L}$ comprises all possible $MK$ links between $K$ RRHs
and $M$ users. We can express the relation between transmission power and communication link $(k,u)$ as follows: 
\beq \label{Ch5_cnt:noL}
p^k_u = \mathbf{v}_u^{kH}\mathbf{v}_u^{k} = 0 \text{ if } (k,u) \notin \mathcal{L}.
\eeq

\begin{remark}
In this paper, we assume that perfect CSI between all RRHs and users is available at the CPC.
In practice, the CSI can be estimated by the corresponding users and RRHs and then transferred to the CPC through the fronthaul 
network.\footnote{Detailed realization of CSI channel estimation varies depending if the time-division duplex (TDD) or frequency-division duplex (FDD) strategy is employed.
In particular, in the TDD system, RRHs perform CSI estimation; then, they transfer the estimated CSI to the CPC.}
In general, the transfer of CSI to the CPC over the fronthaul network consumes much smaller fraction of fronthaul capacity compared with the transfer of users' data (i.e., I/Q data
sequences) from the cloud to the RRHs. Moreover, the transfer of CSI can be performed periodically since the CSI would change slowly over time, whereas the data transfer
over the fronthaul network occurs continuously.  
Therefore, we do not explicitly consider the fronthaul capacity consumption due to CSI transfer in this paper, which is left for our future works.   
\end{remark}
\nomenclature{TDD}{Time-Division Duplex}
\nomenclature{FDD}{Frequency-Division Duplex}

\subsubsection{Fronthaul Capacity}
Let $\mathbf{p}^k=[p^k_1 ... p^k_M]^T$ be the transmission power vector of RRH $k$ whose elements $p^k_u \geq 0$ represent transmission power $p^k_u$ given in (\ref{Ch5_eq:pv}).
We also define the vector $\mathbf{p}=[\mathbf{p}^{1 T} ... \mathbf{p}^{K T}]^T$ to describe the transmission powers of all RRHs for all users.
Recall that $p^k_u=0$ implies that RRH $k$ does not serve user $u$.
In contrast, if $p^k_u > 0$, the fronthaul link from RRH $k$ to user $u$ is activated for carrying the baseband signal to serve user $u$ 
at its required target SINR. 
Therefore, the total capacity of the fronthaul links can be indicated by the value of vector $\mathbf{p}$ and the target SINR of users, which can be written mathematically as
\beq \label{Ch5_eq:Ck}
G(\mathbf{p})= \sum \limits_{k \in \mathcal{K}} \sum \limits_{u \in \mathcal{U}}  \delta( p_u^k ) R_u^{k,\sf{fh}},
\eeq 
where $\delta(\cdot) $ denotes the step function, and $R_u^{k,\sf{fh}}$ represents the required capacity for transferring the signal of user $u$ over
the fronthaul network to the RRH $k$. 
The required fronthaul capacity $R_u^{k,\sf{fh}}$ depends on the actual required data rate corresponding to the target SINR $\bar{\gamma}_u$, the specific design of Cloud-RAN, and the specific
quantization technique employed to process the baseband signals, which are discussed in the following remark.

\begin{remark}
In the Cloud-RAN downlink system, the CPC processes the baseband signals and optimizes the precoding vectors. As described in \cite{ParkICC13} and \cite{shamai15}, there 
are two strategies to realize such design, namely ``compression-after-precoding'' (CAP) and ``compression-before-precoding'' (CBP).
In the CAP, which is employed in \cite{shamai13a} and \cite{shamai13b}, 
 the CPC precodes the data streams of users with their corresponding precoding vectors, then compresses the resulting signals, and forwards them to the corresponding RRHs over 
the fronthaul network. 
For the CBP, the CPC directly compresses the precoding vectors and forwards the compressed vectors as well as users' data streams to the corresponding RRHs \cite{ParkICC13}.
Then, the RRHs precode the signals and transmit to users.
\nomenclature{CAP}{Compression-After-Precoding}
\nomenclature{CBP}{Compression-Before-Precoding}

In this paper, we assume that the CBP strategy is employed, which is motivated by its advantages for the Cloud-RAN system with preliminary clustering \cite{shamai15}.
In this strategy, information streams and precoding vectors are compressed and transmitted from the cloud to RRHs separately. 
The fronthaul capacity consumed by each RRH depends on the group of users it serves which is given in (\ref{Ch5_eq:Ck}).
In particular, the required fronthaul capacity to support the communication link between user $u$ to RRH $k$  can be calculated as
\beq \label{Ch5_R_rmk2}
R_u^{k,\sf{fh}} = R_u^{k,\sf{pr}} + R_u^{\sf{dt}}(\bar{\gamma}_u),
\eeq
where $R_u^{\sf{dt}}(\bar{\gamma}_u)$ and $R_u^{k,\sf{pr}}$
correspond to the fronthaul capacity consumption for transferring the information stream and precoding vector $\mathbf{v}_u^k$ for user $u$, as given in Eq.~3 of \cite{ParkICC13}, respectively. 
In fact, $R_u^{\sf{dt}}(\bar{\gamma}_u)$ is a function of the target SINR (e.g., it can be expressed $R_u^{\sf{dt}}(\bar{\gamma}_u) = \log_2(1+\bar{\gamma}_u)$ if
the capacity can be achieved), and $R_u^{k,\sf{pr}}$ can be predetermined based on the desirable quantization quality of precoding vectors.
When the precoding vectors are quantized, and then employed at RHHs, there are the quantization errors.
However, if the corresponding quantization noise is sufficiently small compared with the interference and noise at receivers, the quantization errors can be omitted, which is what to be assumed in this paper. 
\end{remark}

Note that our model allows to capture the scenario where users demand services with different rate requirement (e.g., voice and video services).
We are now ready to describe the considered fronthaul-constrained power minimization (FCPM) problem 
in the following.

\subsection{Problem Formulation}
We are interested in determining the set of RRHs serving each user and the corresponding precoding vectors 
 (i.e.,  $\lbrace \mathbf{v}_u^k \rbrace$) 
to minimize the total transmission power considering the constraints on fronthaul capacity limit,
transmission power, and users' QoS.  
In addition to the SINR constraints in (\ref{Ch5_eq:SINR_constraint}), we also impose the constraint on total transmission power for each RRH $k$ as
\beq \label{Ch5_eq:pwc}
\sum \limits_{u \in \mathcal{U}} p^k_u = \Vert \mathbf{p}^k \Vert_{\mathbf{1}}=\sum \limits_{u \in \mathcal{U}} \mathbf{v}_u^{kH}\mathbf{v}_u^{k} \leq P_k, \;\;\; \forall k \in \mathcal{K},
\eeq 
where $P_k$ ($k \in \mathcal{K}$) denotes the maximum power of RRH $k$, $\Vert \mathbf{x} \Vert_{\mathbf{1}}$ represents
 the $\ell1$-norm of vector $\mathbf{x}$. Furthermore, we assume that the capacity of fronthaul transport network between the 
cloud and all RRHs is limited, and we denote $C$ as the fronthaul capacity limit. Then, we have to impose the following fronthaul capacity constraint:
\beq \label{Ch5_eq:C_cons}
G(\mathbf{p})= \sum \limits_{k \in \mathcal{K}} \sum \limits_{u \in \mathcal{U}}  \delta( p_u^k ) R_u^{k,\sf{fh}}  \leq C.
\eeq

This indeed represents the sum fronthaul capacity constraint. We will discuss the extended setting with multiple
 individual fronthaul capacity constraints in Section~\ref{Ch5_sec:furex}. Now, it is ready to state the FCPM problem as follows:
\begin{align}
(\mathcal{P}_{\sf{FCPM}}) \;\;\;\; \min \limits_{\lbrace \mathbf{v}_u^k \rbrace, \mathbf{p} } & \;\;\;\;\;\;\;\; \Vert \mathbf{p} \Vert_{\mathbf{1}}  \label{Ch5_obj_1} \\
\;\;\;\;\;\;\;\; \text{s. t. } & \; \text{ constraints (\ref{Ch5_eq:pv}), (\ref{Ch5_eq:SINR_constraint}), (\ref{Ch5_cnt:noL}), (\ref{Ch5_eq:pwc}), (\ref{Ch5_eq:C_cons})}. \nonumber  
\end{align}
We will describe how to transform this problem into an appropriate form, based on which we can determine its optimal solution in the following.

\subsubsection{Optimal Exhaustive Search Algorithm}
The main challenge involved in solving problem $\mathcal{P}_{\sf{FCPM}}$ comes from the fronthaul capacity
 constraint (\ref{Ch5_eq:C_cons}). Let us now define variables $a_u^k \in \left\{0, 1\right\}$, where
$a_u^k=1$ if RRH $k$ serves user $u$ and $a_u^k=0$ otherwise. 
Denote $\mathfrak{a}$ as the allocation vector $[a_1^1,...,a_M^K]$ and $\mathcal{S}_{\mathfrak{a}}$ as the set of all possible $\mathfrak{a}$. Then, the consumed fronthaul capacity
 corresponding to a given vector $\mathfrak{a}$ can be written as
\beq
G(\mathfrak{a})= \sum \limits_{k \in \mathcal{K}} \sum \limits_{u \in \mathcal{U}}  a_u^k  R_u^{k,\sf{fh}}.
\eeq
Then, we have to optimize over
 both integer variables $\mathfrak{a}$ in $\mathcal{S}_{\mathfrak{a}}$ and continuous variables $\lbrace \mathbf{v}_u^k \rbrace$ to find the optimal solution
of the considered problem so that $G(\mathfrak{a}) \leq C$. Note, however that the number of elements in set $\mathcal{S}_{\mathfrak{a}}$ is finite. Moreover,
suppose that the values of $\mathfrak{a}$ which satisfy constraint 
(\ref{Ch5_eq:C_cons}) are given, then we only need to solve following precoding optimization problem 
\begin{align} 
(\mathcal{P}_{\mathfrak{a}}) \;\;\;\; \min \limits_{\lbrace \mathbf{v}_u^k \rbrace} & \sum \limits_{k \in \mathcal{K}}  \sum \limits_{u \in \mathcal{U}} \mathbf{v}_u^{kH}\mathbf{v}_u^{k} \nonumber \\
\;\;\;\;\;\;\;\; \text{s. t. } & \; \text{ constraints  (\ref{Ch5_eq:SINR_constraint}), (\ref{Ch5_eq:pwc})}, \nonumber \\
{} & \;  \mathbf{v}_u^{kH}\mathbf{v}_u^{k} = 0 \text{ if } a_u^k =0, \;\; \forall (u,k).
\end{align}

This problem is indeed the sum-power minimization problem (SPMP), which can be transformed
into a solvable convex semi-definite program (SDP). This transformation is described in Appendix~\ref{Ch5_sdp_app}.
Here, we refer to such a SPMP for given $\mathfrak{a}$ as a sparse SPMP.
It is now clear that the optimal solution for problem  $\mathcal{P}_{\sf{FCPM}}$ can be determined as 
follows. First, we enumerate all possible $\mathfrak{a}$ that satisfy the fronthaul capacity constraints (\ref{Ch5_eq:C_cons}).
Second, for each such feasible $\mathfrak{a}$, we solve the corresponding problem $(\mathcal{P}_{\mathfrak{a}})$ and obtain the optimal precoding vectors and objective value.
Finally, the feasible $\mathfrak{a}$ that achieves the minimum total transmission power together with the corresponding
optimal precoding vectors are the optimal solution of problem $\mathcal{P}_{\sf{FCPM}}$.
\nomenclature{SPMP}{Sum-Power Minimization Problem}

Such exhaustive search method, however, has exponentially high complexity. This motivates us to develop low-complexity algorithms to 
solve problem $\mathcal{P}_{\sf{FCPM}}$, which is the focus of the following section.

\section{Low-Complexity Algorithms}
\label{Ch5_solving}
Here, we describe two low-complexity algorithms with different levels of complexity to solve problem $\mathcal{P}_{\sf{FCPM}}$.
For the performance evaluation purposes, we also present two existing algorithms in \cite{Quek2013} and \cite{dai14}.

\subsection{Pricing-based Algorithm}
\label{Ch5_Pricing}
The first low-complexity algorithm is developed by employing the penalty method to deal with the step-function fronthaul capacity constraint.
Specifically, we consider the so-called \textit{pricing fronthaul capacity and power minimization} (PFCPM) problem, which is given in the following:
\begin{align}
(\mathcal{P}_{\sf{PFCPM}}) \;\;\;\; \min \limits_{\lbrace \mathbf{v}_u^k \rbrace, \mathbf{p} } & \Vert \mathbf{p} \Vert_{\mathbf{1}} + q G(\mathbf{p})   \nonumber \\
\;\;\;\;\;\;\;\; \text{s. t. } & \; \text{ constraints (\ref{Ch5_eq:pv}), (\ref{Ch5_eq:SINR_constraint}), (\ref{Ch5_cnt:noL}), (\ref{Ch5_eq:pwc}).}   \label{Ch5_obj_2}
\end{align}
In this problem, the consumed fronthaul capacity $G(\mathbf{p})$ scaled by a pricing parameter $q$ is added to the objective function of problem $\mathcal{P}_{\sf{FCPM}}$,
and we have removed the fronthaul capacity constraint (\ref{Ch5_eq:C_cons}). We can obtain a good feasible solution for problem $\mathcal{P}_{\sf{FCPM}}$
by iteratively solving problem  $\mathcal{P}_{\sf{PFCPM}}$ while adaptively adjusting the pricing parameter $q$ to satisfy the fronthaul capacity constraint (\ref{Ch5_eq:C_cons}).
We describe this procedure in details in the following.

\subsubsection{Relationship between FCPM and PFCPM Problems}
We now establish the relationship between problems $\mathcal{P}_{\sf{FCPM}}$ and $\mathcal{P}_{\sf{PFCPM}}$, based on which we can develop an efficient algorithm to solve problem
 $\mathcal{P}_{\sf{FCPM}}$.
Let $G_{\sf{PFCPM}}(q)$ be the consumed fronthaul capacity resulted from solving problem  $\mathcal{P}_{\sf{PFCPM}}$ for a given price parameter $q$. 
Then, it can be verified that the optimal precoding vectors and the transmission powers of problems $\mathcal{P}_{\sf{FCPM}}$ and $\mathcal{P}_{\sf{PFCPM}}$ are the same if 
 $G_{\sf{PFCPM}}(q)=C$.
This fact enables us to develop an iterative algorithm to solve problem $\mathcal{P}_{\sf{FCPM}}$ through tackling problem $\mathcal{P}_{\sf{PFCPM}}$ 
while adjusting $q$ iteratively in attempting to attain $G_{\sf{PFCPM}}(q)=C$.  
Our proposed algorithm is developed based on the assumption that problem $\mathcal{P}_{\sf{PFCPM}}$ can be solved, which will be addressed in the next subsection.
In the following, we establish some theoretical results based on which we develop the mechanism to update the pricing parameter.

\begin{proposition} \label{Ch5_lm1} Let $\sigma_{\sf{min}}$ be the smallest nonzero value of $\vert G(\mathfrak{a})-G(\mathfrak{a}^{\prime})\vert$, where ${\lbrace\mathfrak{a},\mathfrak{a}^{\prime}\rbrace \subset \mathcal{S}_\mathfrak{a}}$, and $\bar{q} = \sum \limits_{k \in \mathcal{K}} P_k/\sigma_{\sf{min}}$. We have the following.
\begin{enumerate} 
\item $G_{\sf{PFCPM}}(q)$ is a decreasing function of $q$.
\item If we increase $q \geq \bar{q}$ then $G_{\sf{PFCPM}}(q)$ cannot be further decreased.
\item When $q \geq \bar{q}$, if $G_{\sf{PFCPM}}(q) > C$, problem $\mathcal{P}_{\sf{FCPM}}$  is infeasible.
\end{enumerate}
\end{proposition}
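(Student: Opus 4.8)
The plan is to regard $\mathcal{P}_{\sf{PFCPM}}$ as a parametric family of problems indexed by the price $q$, and to exploit the structural fact that the consumed fronthaul capacity $G(\mathbf{p})$ in (\ref{Ch5_eq:Ck}) depends on a precoding solution only through which links $(k,u)$ are active, i.e., through the support pattern $\mathfrak{a}(\mathbf{p}) \in \mathcal{S}_{\mathfrak{a}}$. Hence $G(\mathbf{p})$ takes values in the finite set $\{G(\mathfrak{a}) : \mathfrak{a} \in \mathcal{S}_{\mathfrak{a}}\}$, and $\sigma_{\sf{min}}$ is the minimum spacing between distinct elements of this set. I will also use repeatedly that the per-RRH power constraints (\ref{Ch5_eq:pwc}) force $\Vert \mathbf{p} \Vert_{\mathbf{1}} \in [0,\sum_{k \in \mathcal{K}} P_k]$ for every feasible solution.

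For the first claim I would use the standard interchange argument for the value of a parametric $\argmin$. Fix $q_1 < q_2$, let $(\lbrace\mathbf{v}_u^k\rbrace^{(i)},\mathbf{p}^{(i)})$ be optimal for $\mathcal{P}_{\sf{PFCPM}}$ at $q_i$, and abbreviate $P^{(i)} = \Vert \mathbf{p}^{(i)}\Vert_{\mathbf{1}}$, $G^{(i)} = G(\mathbf{p}^{(i)})$. Writing the two optimality inequalities $P^{(1)} + q_1 G^{(1)} \le P^{(2)} + q_1 G^{(2)}$ and $P^{(2)} + q_2 G^{(2)} \le P^{(1)} + q_2 G^{(1)}$ and adding them yields $(q_1 - q_2)(G^{(1)} - G^{(2)}) \le 0$, hence $G^{(1)} \ge G^{(2)}$. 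Thus $G_{\sf{PFCPM}}(q)$ is non-increasing in $q$.

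For the second claim the idea is that beyond $\bar q$ the fronthaul penalty already outweighs any conceivable reduction of the power term. Let $\mathfrak{a}^{\sf{min}}$ be a support pattern that is feasible (meaning the sparse problem $\mathcal{P}_{\mathfrak{a}}$ with $\mathfrak{a} = \mathfrak{a}^{\sf{min}}$ admits a precoding solution satisfying the SINR constraints (\ref{Ch5_eq:SINR_constraint}) and power constraints (\ref{Ch5_eq:pwc})) and that attains the smallest such value $G_{\sf{min}} := G(\mathfrak{a}^{\sf{min}})$; note $G_{\sf{min}}$ equals the minimum of $G(\mathbf{p})$ over all precoding solutions meeting (\ref{Ch5_eq:SINR_constraint})–(\ref{Ch5_eq:pwc}). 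Suppose, for some $q \ge \bar q$, an optimal solution of $\mathcal{P}_{\sf{PFCPM}}$ has support $\mathfrak{a}^{\star}$ with $G(\mathfrak{a}^{\star}) > G_{\sf{min}}$, hence $G(\mathfrak{a}^{\star}) - G_{\sf{min}} \ge \sigma_{\sf{min}}$. Comparing its objective value with that of the optimal solution of $\mathcal{P}_{\mathfrak{a}^{\sf{min}}}$: switching to $\mathfrak{a}^{\sf{min}}$ raises the power term by at most $\sum_{k} P_k$ but lowers the weighted fronthaul term by at least $q\,\sigma_{\sf{min}} \ge \bar q\, \sigma_{\sf{min}} = \sum_{k} P_k$, so the objective does not increase. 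Therefore for every $q \ge \bar q$ there is an optimal solution of $\mathcal{P}_{\sf{PFCPM}}$ with consumed capacity exactly $G_{\sf{min}}$, which is the least value attainable; $G_{\sf{PFCPM}}(q)$ cannot be decreased further. The third claim is then immediate: for $q \ge \bar q$ we have $G_{\sf{PFCPM}}(q) = G_{\sf{min}}$, the minimum consumed fronthaul capacity compatible with (\ref{Ch5_eq:SINR_constraint})–(\ref{Ch5_eq:pwc}); were $\mathcal{P}_{\sf{FCPM}}$ feasible there would exist such a solution also satisfying (\ref{Ch5_eq:C_cons}), forcing $G_{\sf{min}} \le C$, so the contrapositive gives that $G_{\sf{PFCPM}}(q) > C$ implies $\mathcal{P}_{\sf{FCPM}}$ is infeasible.

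I expect the delicate point to be the second step: one must argue cleanly that restricting $\mathcal{P}_{\mathfrak{a}}$ to a strictly smaller support still yields a feasible point available for comparison, verify that $\sigma_{\sf{min}}$ is well defined (i.e., $G$ is non-constant on $\mathcal{S}_{\mathfrak{a}}$, which holds as soon as some $R_u^{k,\sf{fh}} > 0$ and at least one link may be toggled), and handle the boundary case $q\,\sigma_{\sf{min}} = \sum_k P_k$, in which ties occur and $G_{\sf{PFCPM}}(q)$ must be read as the least consumed capacity among optimal solutions (consistent with how Algorithms~\ref{Ch5_Pricing}–style iterations select the returned solution). The monotonicity step and the infeasibility conclusion are then routine consequences.
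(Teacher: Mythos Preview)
Your proposal is correct and follows essentially the same approach as the paper: the interchange argument for Statement~1 is identical, and for Statement~2 both you and the paper exploit that $\Vert\mathbf{p}\Vert_1 \le \sum_k P_k$ together with the minimum gap $\sigma_{\sf min}$ in the range of $G$ to conclude that for $q\ge\bar q$ the optimal PFCPM solution attains the least feasible consumed capacity (the paper phrases this as $G_{\sf PFCPM}(q)-G(\mathbf{p})<\sigma_{\sf min}$ for every feasible $\mathbf{p}$, you phrase it via a direct swap to $\mathfrak{a}^{\sf min}$, but the content is the same). Your remarks on tie-breaking at $q=\bar q$ and on the well-definedness of $\sigma_{\sf min}$ are more explicit than the paper's treatment, which silently relies on $P^{\sf opt}(q)>0$ (forced by the SINR constraints with $\sigma^2>0$) to obtain its strict inequality.
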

\begin{proof} 
The proof is given in Appendix~\ref{prf_Ch5_lm1}. 
\end{proof}

These results form the foundation based on which we can develop an iterative algorithm presented in Algorithm~\ref{Ch5_alg:gms2}.
In fact, we have described how the pricing parameter $q$ is updated over iterations, which can be summarized as follows.
At the first iteration, $q^{(0)}$ is set equal to $\bar{q}$ to verify the feasibility of problem $\mathcal{P}_{\sf{FCPM}}$.
If $G_{\sf{PFCPM}}(\bar{q}) > C$, we can conclude that problem  $\mathcal{P}_{\sf{FCPM}}$ is infeasible.
Otherwise, we apply the bisection search method to update $q$ until $G_{\sf{PFCPM}}(\bar{q}) = C$.

\begin{algorithm}[t]
\caption{\textsc{Pricing-based Algorithm for FCPM Problem}}
\label{Ch5_alg:gms2}
\begin{algorithmic}[1]
\STATE Solve PFCPM problem using Alg.~\ref{Ch5_alg:gms3} with $q^{(0)}=\bar{q}$.
\IF{$G_{\sf{PFCPM}}(\bar{q}) > C$} \STATE Stop, the FCPM problem is infeasible.
\ELSIF{$G_{\sf{PFCPM}}(\bar{q}) = C$} \STATE Stop, the solution is achieved.
\ELSIF{$G_{\sf{PFCPM}}(\bar{q}) < C$} 
\STATE Set $l=0$, $q_{\sf{U}}^{(l)}=\bar{q}$ and $q_{\sf{L}}^{(l)}=0$.
\REPEAT 
\STATE Set $l=l+1$ and $q^{(l)}=\left(q_{\sf{U}}^{(l-1)} + q_{\sf{L}}^{(l-1)} \right)/2$.
\STATE Solve PFCPM problem using Alg.~\ref{Ch5_alg:gms3} with $q^{(l)}$.
\IF{$G_{\sf{PFCPM}}(q^{(l)}) > C$} \STATE Set $q_{\sf{U}}^{(l)}=q_{\sf{U}}^{(l-1)}$ and $q_{\sf{L}}^{(l)}=q^{(l)}$.
\ELSIF{$G_{\sf{PFCPM}}(q^{(l)}) < C$} \STATE Set $q_{\sf{U}}^{(l)}=q^{(l)}$ and $q_{\sf{L}}^{(l)}=q_{\sf{L}}^{(l-1)}$.
\ENDIF
\UNTIL{$G_{\sf{PFCPM}}(q^{(l)}) = C$ or $q_{\sf{U}}^{(l)} - q_{\sf{L}}^{(l)}$ is too small}.
\ENDIF
\end{algorithmic}
\end{algorithm}

\subsubsection{PFCPM Problem Solution}
We now develop an efficient algorithm to solve problem $\mathcal{P}_{\sf{PFCPM}}$  based on concave approximation of step function.
Note that the step function makes the objective function of problem $\mathcal{P}_{\sf{PFCPM}}$ nonsmooth, which is difficult to solve. To overcome this challenge, 
 the step function $\delta(x)$ for $x \geq 0$ can be approximated by a suitable concave function.
Denote $f_{\mathsf{apx}}^{(k,u)}(p^k_u)$ as the concave penalty function that approximates the step function $\delta(p^k_u)$ corresponding to link $(k,u)$. Then,
problem  $\mathcal{P}_{\sf{PFCPM}}$ can be approximated by the following problem:
\begin{align}
 \min \limits_{\lbrace \mathbf{v}_u^k \rbrace,\mathbf{p}} & \sum \limits_{k \in \mathcal{K}} \sum \limits_{u \in \mathcal{U}} p_u^k + q \sum \limits_{k \in \mathcal{K}} \sum 
\limits_{u \in \mathcal{U}} f_{\mathsf{apx}}^{(k,u)}\left( p_u^k \right) R_u^{k,\sf{fh}} \nonumber \\
 \text{s. t. } & \; \text{ constraints (\ref{Ch5_eq:pv}), (\ref{Ch5_eq:SINR_constraint}), (\ref{Ch5_cnt:noL}), (\ref{Ch5_eq:pwc}).}  \label{Ch5_objfun2} 
\end{align}
The objective function of this approximated problem is concave and the feasible region determined by all the constraints is convex.
This problem is of the following general form:
\beq \label{Ch5_eq:grlp}
\min \limits_{\mathbf{x}} g(\mathbf{x}) \; \text{s.t.} \; \mathbf{x} \in \mathcal{F}
\eeq
where $g(\mathbf{x})$ is concave with respect to $\mathbf{x}$, and $\mathcal{F}$ is the corresponding feasible region.
This problem can be solved by the standard gradient method as follows. 
We start with an initial solution $\mathbf{x}^{(0)} \in \mathcal{F}$ and then iteratively update its solution as
\beq \label{Ch5_eq:grm}
\mathbf{x}^{(n+1)}=\text{arg}\min \limits_{\mathbf{x}} g(\mathbf{x}^{(n)}) + \nabla g(\mathbf{x}^{(n)})(\mathbf{x}-\mathbf{x}^{(n)}) \text{ s.t. } \mathbf{x} \in \mathcal{F}.
\eeq
In particular, $\mathbf{x}^{(n+1)}$ needs to be determined from the following problem:
\beq \label{Ch5_eq:grm2}
\min \limits_{\mathbf{x}}  \nabla g(\mathbf{x}^{(n)})\mathbf{x} \text{ s.t. } \mathbf{x} \in \mathcal{F}.
\eeq
Return to our problem, and let $\mathbf{x}$ represent the precoding vectors and powers, and $g(\mathbf{x})$ denote
 the objective function of problem (\ref{Ch5_objfun2}). Then, we have $\nabla g(\mathbf{x}) \mathbf{x} = \sum_{(k,u) \in \mathcal{L} }\left( 1 + q \nabla f_{\mathsf{apx}}\left( p_u^k \right) R_u^{k,\sf{fh}}\right)p_u^k$. 
By applying the gradient method, we can solve problem (\ref{Ch5_objfun2}) by iteratively solving the following problem until convergence:
\beq
 \min \limits_{\lbrace \mathbf{v}_u^k \rbrace} \; \sum \limits_{k \in \mathcal{K}} \sum \limits_{u \in \mathcal{U}} \alpha_u^{k(n)} \mathbf{v}_u^{kH}\mathbf{v}_u^{k} \;
 \text{s. t.} \; \text{ constraints (\ref{Ch5_eq:SINR_constraint}), (\ref{Ch5_cnt:noL}), (\ref{Ch5_eq:pwc})},  \label{Ch5_objfun3} 
\eeq
where 
\beq \label{Ch5_eq:alp}
\alpha_u^{k(n)}=  1 + q \nabla f_{\mathsf{apx}}^{(k,u)}\left( p_u^k \right)R_u^{k,\sf{fh}}.
\eeq
Problem (\ref{Ch5_objfun3}) is a weighted SPMP, which can be transformed into the convex SDP as presented in Appendix~\ref{Ch5_sdp_app}.

\subsubsection{Algorithm Design}
\begin{algorithm}[!t]
\caption{\textsc{SDP-Based Algorithm for PFCPM Problem}}
\label{Ch5_alg:gms3}
\begin{algorithmic}[1]

\STATE Initialization: Set $n=0$, and $\alpha_u^{k(0)}=1$ for all RRH-user links $(k,u)$.

\STATE Iteration $n$:
\begin{description}
\item[a.] Solve problem (\ref{Ch5_objfun3}) with $\left\lbrace \alpha_u^{k(n-1)} \right\rbrace $ to obtain $(\mathbf{p}^{(n)},\lbrace \mathbf{v}_u^k \rbrace^{(n)})$.
\item[b.] Update $\left\lbrace \alpha_u^{k(n)} \right\rbrace $ as in (\ref{Ch5_eq:alp}).
\end{description} 
\STATE Set $n:=n+1$, and go back to Step 2 until convergence.
\end{algorithmic}
\end{algorithm}
The algorithm to solve (\ref{Ch5_objfun2}) is given in Algorithm~\ref{Ch5_alg:gms3}. 
We have the following proposition which states the convergence property of Algorithm~\ref{Ch5_alg:gms3}.
\begin{proposition} \label{Ch5_cvg_alg2}
Algorithm~\ref{Ch5_alg:gms3} converges to a local optimal solution of problem (\ref{Ch5_objfun2}). 
\end{proposition}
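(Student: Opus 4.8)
\textbf{Proof proposal for Proposition~\ref{Ch5_cvg_alg2}.}

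The plan is to recognize Algorithm~\ref{Ch5_alg:gms3} as an instance of the successive convex (upper-bound) minimization method applied to the concave-objective problem~(\ref{Ch5_objfun2}), and to invoke the classical monotone-convergence argument for such schemes. First I would establish that the feasible region $\mathcal{F}$ defined by the constraints (\ref{Ch5_eq:pv}), (\ref{Ch5_eq:SINR_constraint}), (\ref{Ch5_cnt:noL}), (\ref{Ch5_eq:pwc}) (equivalently, the SDP feasible set obtained via the transformation in Appendix~\ref{Ch5_sdp_app}) is a compact convex set; compactness follows from the per-RRH power constraints (\ref{Ch5_eq:pwc}), which bound all transmission powers, hence all precoding vectors, and closedness follows since all constraints are non-strict. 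Next I would note that the objective $g(\mathbf{x}) = \sum_{(k,u)} p_u^k + q\sum_{(k,u)} f_{\mathsf{apx}}^{(k,u)}(p_u^k) R_u^{k,\sf{fh}}$ of~(\ref{Ch5_objfun2}) is concave and continuously differentiable in $\mathbf{p}$ (hence bounded below on the compact $\mathcal{F}$), provided we select $f_{\mathsf{apx}}^{(k,u)}$ to be a smooth concave approximant of the step function on $[0,\infty)$.

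The core of the argument is the descent property of the linearization step. By concavity of $g$, for any two feasible points $\mathbf{x}^{(n)}$ and $\mathbf{x}$ we have $g(\mathbf{x}) \le g(\mathbf{x}^{(n)}) + \nabla g(\mathbf{x}^{(n)})^T (\mathbf{x} - \mathbf{x}^{(n)})$; that is, the linear function minimized in~(\ref{Ch5_eq:grm})–(\ref{Ch5_eq:grm2}) (equivalently, the weighted SPMP~(\ref{Ch5_objfun3}) with weights $\alpha_u^{k(n)}$ from~(\ref{Ch5_eq:alp})) is a global upper bound on $g$ that is tight at $\mathbf{x}^{(n)}$. Let $\mathbf{x}^{(n+1)}$ be a minimizer of this linear surrogate over $\mathcal{F}$ (it exists and can be computed via the SDP of Appendix~\ref{Ch5_sdp_app}). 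Then
\begin{equation}
g(\mathbf{x}^{(n+1)}) \le g(\mathbf{x}^{(n)}) + \nabla g(\mathbf{x}^{(n)})^T (\mathbf{x}^{(n+1)} - \mathbf{x}^{(n)}) \le g(\mathbf{x}^{(n)}) + \nabla g(\mathbf{x}^{(n)})^T (\mathbf{x}^{(n)} - \mathbf{x}^{(n)}) = g(\mathbf{x}^{(n)}),
\end{equation}
where the middle inequality uses that $\mathbf{x}^{(n)}$ is feasible but $\mathbf{x}^{(n+1)}$ is optimal for the surrogate. Hence $\{g(\mathbf{x}^{(n)})\}$ is nonincreasing; being bounded below, it converges. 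I would then pass to a convergent subsequence $\mathbf{x}^{(n_j)} \to \mathbf{x}^\star$ (possible by compactness) and show that any accumulation point $\mathbf{x}^\star$ is a fixed point of the map, i.e.\ $\mathbf{x}^\star$ minimizes the surrogate $\nabla g(\mathbf{x}^\star)^T \mathbf{x}$ over $\mathcal{F}$; this yields the first-order stationarity condition $\nabla g(\mathbf{x}^\star)^T(\mathbf{x} - \mathbf{x}^\star) \ge 0$ for all $\mathbf{x} \in \mathcal{F}$, which (together with concavity) characterizes a local optimum of~(\ref{Ch5_objfun2}).

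The main obstacle I anticipate is the last step: making rigorous the claim that subsequential limits are stationary. The subtlety is that the surrogate weights $\alpha_u^{k(n)}$ depend continuously on $\mathbf{p}^{(n)}$ through $\nabla f_{\mathsf{apx}}^{(k,u)}$, so one must invoke continuity of the solution map of~(\ref{Ch5_objfun3}) (or argue via the optimal-value function) to conclude that the limit of the surrogate minimizers solves the limiting surrogate; uniqueness or a careful selection argument for the SDP solution may be needed, paralleling the rank-one discussion in Appendix~\ref{Ch5_sdp_app}. A clean way around this is to note $g(\mathbf{x}^{(n+1)}) - g(\mathbf{x}^{(n)}) \to 0$ and combine it with the upper-bound inequality to force $\nabla g(\mathbf{x}^{(n_j)})^T(\mathbf{x}^{(n_j+1)} - \mathbf{x}^{(n_j)}) \to 0$, then use the optimality of $\mathbf{x}^{(n_j+1)}$ for the $n_j$-th surrogate to transfer stationarity to $\mathbf{x}^\star$. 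I would also remark that, since $g$ lower-bounds the true objective $\Vert \mathbf{p}\Vert_1 + qG(\mathbf{p})$ of $\mathcal{P}_{\sf{PFCPM}}$ and agrees with it whenever the $p_u^k$ are either zero or large enough that $f_{\mathsf{apx}}^{(k,u)}(p_u^k)\approx 1$, the converged point is a good feasible solution of $\mathcal{P}_{\sf{PFCPM}}$, which is what Algorithm~\ref{Ch5_alg:gms2} requires.
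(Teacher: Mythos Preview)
Your proposal is correct and follows essentially the same approach as the paper: both use the concavity inequality $g(\mathbf{x}) \le g(\mathbf{x}^{(n)}) + \nabla g(\mathbf{x}^{(n)})^T(\mathbf{x}-\mathbf{x}^{(n)})$ together with optimality of $\mathbf{x}^{(n+1)}$ for the linear surrogate to obtain monotone decrease of $g(\mathbf{x}^{(n)})$. In fact, you go further than the paper, which simply asserts convergence to a local optimum after establishing monotone decrease, whereas you supply the compactness and subsequential-stationarity arguments needed to make that final step rigorous.
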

\begin{proof} 
The proof is given in Appendix~\ref{prf_Ch5_cvg_alg2}. 
\end{proof}
Algorithm~\ref{Ch5_alg:gms2}, which is proposed to solve problem $\mathcal{P}_{\sf{FCPM}}$ is based on the solution of problem $\mathcal{P}_{\sf{PFCPM}}$, which can be 
obtained by using Algorithm~\ref{Ch5_alg:gms3}. 

\subsection{Iterative Linear-Relaxed Algorithm}
\label{Ch5_IterRelax}
Here, we propose an iterative linear-relaxed algorithm to directly deal with the step-function fronthaul capacity constraint.
Toward this end, we first propose to approximate the step-function by an approximation function. 
By doing so, the noncontinuous constraint with step functions becomes a continuous but still nonconvex one.
To convexify the obtained problem, the nonconvex approximated constraint function is further relaxed to a linear form by using
 the concave duality method \cite{rockafellar70}. The linear-relaxed constraint function can be made sufficiently close to the 
original nonconvex function by iteratively updating its parameters.
First, problem $\mathcal{P}_{\sf{FCPM}}$ can be approximated by the following problem:
\begin{align}
 \min \limits_{\lbrace \mathbf{v}_u^k \rbrace, \mathbf{p}} & \;\;\; \Vert \mathbf{p} \Vert_{\mathbf{1}}  \label{Ch5_obj_4} \\
 \text{s. t. } & \; \text{ constraints (\ref{Ch5_eq:pv}), (\ref{Ch5_eq:SINR_constraint}), (\ref{Ch5_cnt:noL}), (\ref{Ch5_eq:pwc}),}  \nonumber  \\
 {} & \; \sum \limits_{k \in \mathcal{K}} \sum \limits_{u \in \mathcal{U}} f_{\mathsf{apx}}^{(k,u)}( p_u^k ) R_u^{k,\sf{fh}} \leq C. \label{Ch5_const4}
\end{align}
As mentioned earlier, the function $f_{\mathsf{apx}}^{(k,u)}(p_u^k)$ is concave with respect to $p_u^k$; hence, the constraint (\ref{Ch5_const4}) is indeed nonconvex.
Hence, we approximate it by the corresponding linear form based on the duality properties of conjugate of convex functions \cite{rockafellar70} as follows.
First, it can be verified that the function $f_{\mathsf{apx}}^{(k,u)}( p_u^k )$ can be rewritten by using its concave duality as
\beq \label{Ch5_eq:g2}
f_{\mathsf{apx}}^{(k,u)}( p_u^k )= \inf \limits_{z_u^k} \left[ z_u^k p_u^k - f_{\mathsf{apx}}^{(k,u)\ast}(z_u^k) \right],
\eeq
where $f_{\mathsf{apx}}^{(k,u)\ast}(z)$ is the conjugate function of $f_{\mathsf{apx}}^{(k,u)}( w )$, which can be expressed as
\beq \label{Ch5_f_ast}
f_{\mathsf{apx}}^{(k,u)\ast}(z)= \inf \limits_{w} \left[ z w -f_{\mathsf{apx}}^{(k,u)}(w) \right].  \\
\eeq
According to \cite{rockafellar70}, function $f_{\mathsf{apx}}^{(k,u)\ast}(z)$ can be determined by the optimal value of $w$ obtained from the right-hand side of (\ref{Ch5_f_ast}) for a given $z$. 
After substituting the results of (\ref{Ch5_f_ast}) into (\ref{Ch5_eq:g2}), it can be verified that the optimization problem in the right-hand side of (\ref{Ch5_eq:g2}) achieves its minimum at
\beq \label{Ch5_eq:z}
\hat{z}_u^k=\nabla f_{\mathsf{apx}}^{(k,u)}( w )\vert_{w=p_u^k}.
\eeq
With the representation of $f_{\mathsf{apx}}^{(k,u)}(p_u^k)$ as in (\ref{Ch5_eq:g2}), the constraints (\ref{Ch5_eq:C_cons}) can be rewritten in a linear form for a given $\left\lbrace \hat{z}_u^k \right\rbrace $ as 
\beq \label{Ch5_eq:C_cons3}
\sum \limits_{k \in \mathcal{K}}\sum \limits_{u \in \mathcal{U}}  \hat{z}_u^k R_u^{k,\sf{fh}} p_u^k \leq C + \sum \limits_{k \in \mathcal{K}} \sum \limits_{u \in \mathcal{U}} R_u^{k,\sf{fh}} f_{\mathsf{apx}}^{(k,u)\ast}(\hat{z}_u^k).
\eeq
In summary, for a given value of $\left\lbrace \hat{z}_u^k \right\rbrace$, the problem (\ref{Ch5_obj_4})-(\ref{Ch5_const4}) can be reformulated to
\begin{align}
 {} & \min \limits_{\lbrace \mathbf{v}_u^k \rbrace} \:\:\: \sum \limits_{k \in \mathcal{K}} \sum \limits_{u \in \mathcal{U}} \mathbf{v}_u^{kH}\mathbf{v}_u^{k}  \label{Ch5_obj_1sdp} \\
 \text{s. t. } & \; \text{ constraints (\ref{Ch5_eq:SINR_constraint}), (\ref{Ch5_cnt:noL}), (\ref{Ch5_eq:pwc}),}  \nonumber  \\
  {} & \!\!\!\!\! \sum \limits_{k \in \mathcal{K}} \sum \limits_{u \in \mathcal{U}} \! \hat{z}_u^k  R_u^{k,\sf{fh}}  \mathbf{v}_u^{kH}\mathbf{v}_u^{k} \leq   C \!  + \! \sum \limits_{k \in \mathcal{K}} \sum \limits_{u \in \mathcal{U}} \! R_u^{k,\sf{fh}} f_{\mathsf{apx}}^{(k,u)\ast}(\hat{z}_u^k).  \label{Ch5_const5}
\end{align}

Now problem (\ref{Ch5_obj_1sdp})-(\ref{Ch5_const5}) is the well-known SPMP, which can be solved by transforming it
into the SDP, which is described in Appendix~\ref{Ch5_sdp_app} where the additional constraint (\ref{Ch5_const5}) can be rewritten in (\ref{Ch5_eq:zc_mtx}), shown in Appendix~\ref{Ch5_sdp_app} below.
In summary, we can fulfill our design objectives by updating $\left\lbrace \hat{z}_u^k \right\rbrace$ iteratively, based 
on which we repeatedly solve the precoding problem (\ref{Ch5_obj_1sdp})-(\ref{Ch5_const5}).
This complete procedure is described in Algorithm~\ref{Ch5_alg:gms4} whose properties are stated in the following proposition.

\begin{algorithm}[!t]
\caption{\textsc{Iterative Linear-Relaxed Algorithm}}
\label{Ch5_alg:gms4}
\begin{algorithmic}[1]
\STATE Start with a feasible solution.
\STATE Set $l=0$.
\REPEAT 
\STATE Calculate $\left\lbrace \hat{z}_u^{k,(l)} \right\rbrace $ as in (\ref{Ch5_eq:z}) for all $(k,u)$.
\STATE Solve problem (\ref{Ch5_obj_1sdp})-(\ref{Ch5_const5}) with $\left\lbrace \hat{z}_u^{k,(l)}\right\rbrace $.
\STATE Update $l=l+1$.
\UNTIL Convergence.
\end{algorithmic}
\end{algorithm}

\begin{proposition} \label{Ch5_lm2}
Algorithm~\ref{Ch5_alg:gms4} has the following properties.
\begin{enumerate}
\item If the FCPM problem is feasible, Algorithm~\ref{Ch5_alg:gms4} converges.
\item The feasible solution achieved by Algorithm~\ref{Ch5_alg:gms4} at convergence satisfies all constraints of problem (\ref{Ch5_obj_4})-(\ref{Ch5_const4}).
\end{enumerate}
\end{proposition}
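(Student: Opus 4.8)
<br>

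The plan is to prove each of the two claimed properties of Algorithm~\ref{Ch5_alg:gms4} in turn, leaning on the concave-duality machinery already set up in equations (\ref{Ch5_eq:g2})--(\ref{Ch5_eq:z}). For the convergence part, I would first observe that each iteration of the algorithm solves the convex SPMP (\ref{Ch5_obj_1sdp})--(\ref{Ch5_const5}) to global optimality (via the SDP relaxation in Appendix~\ref{Ch5_sdp_app}), and then recomputes $\hat{z}_u^{k}$ by (\ref{Ch5_eq:z}). The key structural fact is the tangency/majorization property of the conjugate representation: for any fixed $\{\hat{z}_u^{k,(l)}\}$ obtained from the power allocation $\mathbf{p}^{(l)}$, the linear function $z_u^k p_u^k - f_{\mathsf{apx}}^{(k,u)\ast}(z_u^k)$ upper-bounds $f_{\mathsf{apx}}^{(k,u)}(p_u^k)$ everywhere and coincides with it at $p_u^k = p_u^{k,(l)}$. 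Hence the feasible region of (\ref{Ch5_obj_1sdp})--(\ref{Ch5_const5}) at iteration $l$ is an inner convex approximation of the feasible set of (\ref{Ch5_obj_4})--(\ref{Ch5_const4}) that contains the previous iterate $\mathbf{p}^{(l)}$. Therefore $\mathbf{p}^{(l)}$ is feasible for the $l$-th subproblem, so the optimal objective $\Vert \mathbf{p}^{(l+1)}\Vert_{\mathbf{1}} \le \Vert \mathbf{p}^{(l)}\Vert_{\mathbf{1}}$. This produces a nonincreasing sequence of total powers, which is bounded below by zero, hence convergent; feasibility of the FCPM problem guarantees the starting point in Step~1 exists.

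For the second property, the argument is essentially a corollary of the inner-approximation observation: at every iteration the iterate $\mathbf{p}^{(l+1)}$ satisfies the linearized constraint (\ref{Ch5_const5}), and since that linear form everywhere lower-bounds $C + \sum_{k,u} R_u^{k,\sf{fh}} f_{\mathsf{apx}}^{(k,u)\ast}(\hat z_u^k)$ relative to... more precisely, since $f_{\mathsf{apx}}^{(k,u)}(p_u^k) \le \hat z_u^{k,(l)} p_u^k - f_{\mathsf{apx}}^{(k,u)\ast}(\hat z_u^{k,(l)})$ for all $p_u^k$, any point feasible for (\ref{Ch5_const5}) automatically satisfies $\sum_{k,u} f_{\mathsf{apx}}^{(k,u)}(p_u^k) R_u^{k,\sf{fh}} \le C$, which is exactly (\ref{Ch5_const4}). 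The other constraints (\ref{Ch5_eq:SINR_constraint}), (\ref{Ch5_cnt:noL}), (\ref{Ch5_eq:pwc}) are imposed directly in each subproblem, so they hold for every iterate, and in particular for the limit point. Thus the solution returned at convergence is feasible for problem (\ref{Ch5_obj_4})--(\ref{Ch5_const4}).

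The step I expect to require the most care is establishing that $\mathbf{p}^{(l)}$ is actually feasible for the $(l)$-th linearized subproblem --- i.e., that $\hat z_u^{k,(l)}$ computed from $\mathbf{p}^{(l)}$ via (\ref{Ch5_eq:z}) makes the tangent line touch $f_{\mathsf{apx}}^{(k,u)}$ exactly at $p_u^{k,(l)}$. This is where concavity of $f_{\mathsf{apx}}^{(k,u)}$ and the definition of the conjugate must be invoked carefully: one needs $f_{\mathsf{apx}}^{(k,u)}(p_u^{k,(l)}) = \hat z_u^{k,(l)} p_u^{k,(l)} - f_{\mathsf{apx}}^{(k,u)\ast}(\hat z_u^{k,(l)})$, which follows because the infimum in (\ref{Ch5_eq:g2}) is attained at precisely $z = \nabla f_{\mathsf{apx}}^{(k,u)}(p_u^{k,(l)})$. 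A secondary subtlety is that convergence of the power sequence does not by itself guarantee convergence of the precoding vectors $\{\mathbf{v}_u^k\}$; if needed, I would argue along a convergent subsequence and note that any limit point is a fixed point of the linearization map, hence a stationary (KKT) point of (\ref{Ch5_obj_4})--(\ref{Ch5_const4}), though the proposition as stated only claims convergence of the algorithm in the objective-value sense and feasibility of the returned solution, which the monotonicity argument already delivers.
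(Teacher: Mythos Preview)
Your proposal is correct and follows essentially the same approach as the paper's proof: for Statement~1 you use the tangency property of the conjugate representation to show that the previous iterate lies in the feasible set of the current linearized subproblem, yielding a monotonically nonincreasing (and bounded-below) objective sequence; for Statement~2 you use the majorization inequality $f_{\mathsf{apx}}^{(k,u)}(p_u^k) \le \hat z_u^{k} p_u^k - f_{\mathsf{apx}}^{(k,u)\ast}(\hat z_u^{k})$ to conclude that every linearized feasible set $\mathcal{F}_l$ is contained in the feasible set $\mathcal{F}$ of (\ref{Ch5_obj_4})--(\ref{Ch5_const4}). Your write-up is in fact more explicit than the paper's about \emph{why} tangency holds at $p_u^{k,(l)}$ (namely, because the infimum in (\ref{Ch5_eq:g2}) is attained at the gradient), and your remark about objective-value convergence versus iterate convergence is a valid caveat that the paper glosses over.
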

\begin{proof} The proof is given in Appendix~\ref{prf_Ch5_lm2}. \end{proof}

\begin{remark}
Some careful study of $\hat{z}_u^k$ given in (\ref{Ch5_eq:z}) yields $f_{\mathsf{apx}}^{(k,u)}( p_u^k )= \hat{z}_u^k p_u^k - f_{\mathsf{apx}}^{(k,u)\ast}(\hat{z}_u^k) $,
which can be considered as the first-order approximation (first-order Taylor expansion) of $f_{\mathsf{apx}}^{(k,u)}( w )$ at $w = p_u^k$. 
In our paper, we prefer using the conjugate function concept in approximating the concave function. This is because the property of the conjugate function, i.e., $f_{\mathsf{apx}}^{(k,u)}( w ) \leq z w - f_{\mathsf{apx}}^{(k,u)\ast}(z) $ for all $z$, enables us to prove the convergence of Algorithm~\ref{Ch5_alg:gms4}, which is
stated in Proposition~\ref{Ch5_lm2}.
\end{remark}

\subsection{Adjusting Precoding and Power Solution}

After running Algorithms \ref{Ch5_alg:gms2} and \ref{Ch5_alg:gms4},
the obtained solution may have several small but nonzero power elements in the power vector $\mathbf{p}^{\ast}$.
Since the approximation function is smooth, there may be significant difference between $C(\mathbf{p}^{\ast})$ and
$\sum \limits_{k \in \mathcal{K}} \sum \limits_{u \in \mathcal{M}} R_u^{k,\sf{fh}} f_{\mathsf{apx}}^{(k,u)}( p_u^{k\ast})$. To address
this problem, we force the power value and precoding vector to zero for any link $(k,u)$ as follows:
\beq \label{Ch5_eq_rnd0}
p^{k \ast}_u = 0 \text{ if } f_{\mathsf{apx}}^{(k,u)}(p^{k \ast}_u)  < {1}/{2}.
\eeq
We will show how to design the approximation function so that this adjustment only results in tolerable
performance degradation in the next subsection. 

\subsection{Design of Approximation Function}
Here, we discuss the design of an approximation function that is employed to approximate the step function in the fronthaul
capacity constraint.
Let us denote $\mathbf{p}^{\ast}$ as the transmission vector achieved by any of our proposed algorithms at convergence. 
It can be verified that we have $\sum \limits_{k \in \mathcal{K}} 
\sum \limits_{u \in \mathcal{M}} R_u^{k,\sf{fh}} f_{\mathsf{apx}}^{(k,u)}( p_u^{k\ast}) = C$ and the SINR
of any user $u$ satisfies $\Gamma_u^{\ast} = \bar{\gamma}_u$.

First, we want the value of the approximation function to be close to one for large power values so that it well approximates
the step function. Specifically, it is required that
\beq \label{Ch5_eq_rnd1}
f_{\mathsf{apx}}^{(k,u)}(P_k) = 1.
\eeq
In addition, we would like the adjusting procedure described earlier to
result in the deviation of at most $\epsilon$ for the achieved SINR of any user.
Let $\Gamma_u^{\ast}\vert_{p^{k \ast}_u=0}$ denote the SINR achieved by user $u$ after we apply the adjustment
procedure given in (\ref{Ch5_eq_rnd0}) for the obtained power vector $\mathbf{p}^{\ast}$. 
Then, the results in the following proposition provide the guideline to achieve this design goal.

\begin{proposition} \label{Ch5_tlr_f}
If the approximation concave function satisfies the following condition:
\beq \label{Ch5_fapx_cdt}
f_{\mathsf{apx}}^{(k,u)}\left( \epsilon \beta^k_u \right) \geq {1}/{2}
\eeq
where $\beta^k_u=\bar{\gamma}_u \sigma^2/\vert h^k_u\vert^2$, then we have 
\beq
{\bar{\gamma}_u - \Gamma_u^{\ast}\vert_{p^{k \ast}_u=0}}/{\bar{\gamma}_u} < \epsilon.
\eeq
\end{proposition}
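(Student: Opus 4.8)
\textbf{Proof proposal for Proposition~\ref{Ch5_tlr_f}.}

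The plan is to bound the SINR loss that results from zeroing out a single link power $p^{k\ast}_u$ and show that this loss is small provided the residual power that was zeroed out is itself small, with the latter guaranteed by the condition \eqref{Ch5_fapx_cdt}. First I would recall that at convergence every user meets its target with equality, $\Gamma_u^{\ast} = \bar{\gamma}_u$, so
\[
\left|\sum_{l \in \mathcal{K}} \mathbf{h}_u^{lH}\mathbf{v}_u^{l}\right|^2 = \bar{\gamma}_u\left(\sum_{i \neq u}\left|\sum_{l}\mathbf{h}_u^{lH}\mathbf{v}_i^{l}\right|^2 + \sigma^2\right) \geq \bar{\gamma}_u\sigma^2.
\]
When we set $p^{k\ast}_u = \mathbf{v}_u^{kH}\mathbf{v}_u^{k} = 0$, only the numerator of $\Gamma_u$ changes (the interference terms seen by $u$ are unaffected, since removing $\mathbf{v}_u^k$ does not touch the other users' precoders), so the new numerator drops by at most the contribution that the discarded component made. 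By Cauchy--Schwarz, the magnitude of that discarded contribution is at most $\|\mathbf{h}_u^k\| \cdot \|\mathbf{v}_u^k\| = |h_u^k|\sqrt{p^{k\ast}_u}$ in the single-antenna-channel-norm notation used in the proposition. Hence the new numerator is at least $\bigl(\sqrt{\bar{\gamma}_u\sigma^2} - |h_u^k|\sqrt{p^{k\ast}_u}\bigr)^2$ whenever $p^{k\ast}_u$ is small enough that this is nonnegative, and dividing by the unchanged denominator $\geq \sigma^2$ gives a lower bound on $\Gamma_u^{\ast}\vert_{p^{k\ast}_u = 0}$ in terms of $\bar{\gamma}_u$ and the ratio $p^{k\ast}_u/\beta^k_u$ with $\beta^k_u = \bar{\gamma}_u\sigma^2/|h^k_u|^2$.

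The second step is to control $p^{k\ast}_u$ itself. By the adjustment rule \eqref{Ch5_eq_rnd0}, a link is zeroed only if $f_{\mathsf{apx}}^{(k,u)}(p^{k\ast}_u) < 1/2$. Since $f_{\mathsf{apx}}^{(k,u)}$ is concave and increasing on $[0,\infty)$ with $f_{\mathsf{apx}}^{(k,u)}(0)=0$, it is invertible there, so $f_{\mathsf{apx}}^{(k,u)}(p^{k\ast}_u) < 1/2$ forces $p^{k\ast}_u < (f_{\mathsf{apx}}^{(k,u)})^{-1}(1/2)$. The hypothesis \eqref{Ch5_fapx_cdt}, namely $f_{\mathsf{apx}}^{(k,u)}(\epsilon\beta^k_u) \geq 1/2$, together with monotonicity, yields $(f_{\mathsf{apx}}^{(k,u)})^{-1}(1/2) \leq \epsilon\beta^k_u$, hence $p^{k\ast}_u < \epsilon\beta^k_u$ for every zeroed link. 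Substituting $\sqrt{p^{k\ast}_u} < \sqrt{\epsilon\beta^k_u} = \sqrt{\epsilon}\,\sqrt{\bar{\gamma}_u\sigma^2}/|h^k_u|$ into the numerator bound from the first step collapses everything to
\[
\Gamma_u^{\ast}\vert_{p^{k\ast}_u=0} \;\geq\; \frac{\bigl(\sqrt{\bar{\gamma}_u\sigma^2} - \sqrt{\epsilon}\sqrt{\bar{\gamma}_u\sigma^2}\bigr)^2}{\sigma^2} \;=\; \bar{\gamma}_u(1-\sqrt{\epsilon})^2 \;\geq\; \bar{\gamma}_u(1-\epsilon),
\]
which rearranges to $(\bar{\gamma}_u - \Gamma_u^{\ast}\vert_{p^{k\ast}_u=0})/\bar{\gamma}_u < \epsilon$, the claimed bound. (One should double-check whether the intended bound is $\sqrt{\epsilon}$ or $\epsilon$ in the definition of $\beta^k_u$; the argument above is robust to that choice up to replacing $\epsilon$ by $\sqrt\epsilon$ in one place, and the paper's looser statement with strict inequality $<\epsilon$ follows either way.)

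The main obstacle I anticipate is the bookkeeping when \emph{several} links to the same user $u$ are zeroed simultaneously, rather than just one. In that case the numerator loses a sum of components and the Cauchy--Schwarz step must be applied to the aggregate vector $\sum_{k \in \mathcal{Z}_u}\mathbf{v}_u^k$ over the zeroed set $\mathcal{Z}_u$; the total discarded power $\sum_{k \in \mathcal{Z}_u}p^{k\ast}_u$ must then be bounded, and since each term is below $\epsilon\beta^k_u$ one gets a bound involving $\sum_{k}\epsilon\beta^k_u$, which is larger than a single $\epsilon\beta^k_u$. The cleanest fix is to argue that the adjustment is applied link-by-link in sequence, re-solving or at least re-evaluating $\Gamma_u$ after each removal so that the per-step loss telescopes, or alternatively to interpret the proposition as a per-link statement; I would follow whichever convention the surrounding text of the paper adopts. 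Modulo that interpretive point, the remaining steps are the routine inequality manipulations sketched above.
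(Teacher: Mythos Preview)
Your second half—using monotonicity of $f_{\mathsf{apx}}^{(k,u)}$ together with the rounding rule \eqref{Ch5_eq_rnd0} to force $p_u^{k\ast}<\epsilon\beta_u^k$—matches the paper exactly. The gap is in the first half.

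Two concrete errors. First, the final inequality $(1-\sqrt{\epsilon})^2 \geq 1-\epsilon$ is false for every $\epsilon\in(0,1)$: expanding gives $1-2\sqrt{\epsilon}+\epsilon$, and $1-2\sqrt{\epsilon}+\epsilon \geq 1-\epsilon$ is equivalent to $\epsilon\geq 1$. So your displayed chain does not deliver the $\epsilon$-bound. Second, when you pass from the numerator bound to a lower bound on $\Gamma_u^{\ast}\vert_{p_u^{k\ast}=0}=\text{(new num)}/D$ with $D=\sum_{i\neq u}|\sum_l\mathbf{h}_u^{lH}\mathbf{v}_i^l|^2+\sigma^2$, you replace $D$ by $\sigma^2$; since $D\geq\sigma^2$, this substitution goes the wrong way. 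If you keep $D$ and use $|\sum_j\mathbf{h}_u^{jH}\mathbf{v}_u^{j}|=\sqrt{\bar\gamma_u D}$ exactly (not the weaker $\geq\sqrt{\bar\gamma_u\sigma^2}$), the reverse-triangle route yields
\[
\frac{\bar\gamma_u-\Gamma_u^{\ast}\vert_{p_u^{k\ast}=0}}{\bar\gamma_u}\;\leq\;\frac{2|h_u^k|\sqrt{p_u^{k\ast}}}{\sqrt{\bar\gamma_u D}}\;\leq\;2\sqrt{p_u^{k\ast}/\beta_u^k}\;<\;2\sqrt{\epsilon},
\]
an $O(\sqrt{\epsilon})$ bound, not $O(\epsilon)$. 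Your parenthetical hedge (``robust up to replacing $\epsilon$ by $\sqrt{\epsilon}$'') does not save the stated conclusion.

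The paper avoids this $\sqrt{\epsilon}$ loss by bounding the \emph{difference of squared magnitudes} rather than squaring an amplitude difference. It writes the SINR drop as $\bigl(|\sum_{j}\mathbf{h}_u^{jH}\mathbf{v}_u^{j\ast}|^2-|\sum_{j\neq k}\mathbf{h}_u^{jH}\mathbf{v}_u^{j\ast}|^2\bigr)/D$ and upper-bounds this by $|\mathbf{h}_u^{kH}\mathbf{v}_u^{k\ast}|^2/D$; then Cauchy--Schwarz and $D\geq\sigma^2$ give $\leq |h_u^k|^2 p_u^{k\ast}/\sigma^2$, which is \emph{linear} in $p_u^{k\ast}$. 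Dividing by $\bar\gamma_u$ and substituting $p_u^{k\ast}<\epsilon\beta_u^k$ gives $<\epsilon$ directly. The structural point you are missing is that bounding $|\text{full}|^2-|\text{residual}|^2$ produces a factor $p_u^{k\ast}$, whereas bounding $|\text{full}|-|\text{residual}|$ and squaring produces $\sqrt{p_u^{k\ast}}$.
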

\begin{proof} The proof is given in Appendix~\ref{prf_Ch5_tlr_f}. \end{proof}

We provide an example of an exponential approximation function and the required conditions on its parameters in Table.~\ref{Ch5_tb:cv_fct}.
Note that other choice of the approximation functions is possible as long as it satisfies the conditions stated in Proposition~\ref{Ch5_tlr_f}.

\begin{table*}[!t]
\caption{Step Function Approximation}
\centering
\begin{tabular}{| p{30mm} | c | c | c | }
\hline 
$f_{\mathsf{apx}}^{(k,u)}(p^k_u)$ - \textbf{Exponential function} & \textbf{Conjugate function} & \textbf{Condition in (\ref{Ch5_eq_rnd1})} & \textbf{Condition in (\ref{Ch5_fapx_cdt})} \\
\hline
$\lambda^k_u \left( 1-e^{-\Psi^k_u p^k_u}\right)$;  $(\Psi^k_u \gg 1)$ & $\dfrac{z^k_u}{\Psi^k_u}\left[ 1 - \log(\dfrac{z^k_u}{\lambda^k_u\Psi^k_u}) \right] -\lambda^k_u$ & $ \lambda_u^k = \left( 1-e^{-\Psi^k_u P_k}\right)^{-1}$ & $1 + e^{-\Psi^k_u P_k} \geq 2 e^{-\Psi^k_u \epsilon \beta^k_u}$  
 \\ 
 \hline 
\end{tabular}
\label{Ch5_tb:cv_fct}
\end{table*}

\subsection{Complexity Analysis}
The complexity of the exhaustive search and our proposed
algorithms are now investigated based on the number of required
computations. As can be observed, all these algorithms
require us to solve SPMPs several times (i.e., $\mathcal{P}_{\mathfrak{a}}$ for exhaustive search algorithm,
 (\ref{Ch5_objfun3}) for the pricing-based algorithm, and (\ref{Ch5_obj_1sdp}) for the linear-relaxed algorithm) by transferring them into the SDP problems as described in Appendix~\ref{Ch5_sdp_app}.
Hence, we
first study the complexity of the SDP program corresponding
to each algorithm. We assume that all RRHs are equipped
with the same number of antennas $N$,  and we consider the
worst case where each UE can be served by all RRHs in the
network. Then, the numbers of variables of all SDP problems
are the same, which is equal to $M N K$.
On other hand, the numbers of constraints of ($\mathcal{P}_{\mathfrak{a}}$) and (\ref{Ch5_objfun3}) are $M+K$, whereas the corresponding number of (\ref{Ch5_obj_1sdp}) is $M+K+1$.
As given in \cite{lou10}, the computational complexity involved in solving the SDP is $\mathcal{O}(\max(m,n)^4n^{1/2}\log(\zeta_{\sf{SDP}}^{-1}))$ where $n$ is the number of variables, $m$ is the number of constraints, and $\zeta_{\sf{SDP}}$ represents the solution accuracy.
In practice, the number of users is greater than that of RRHs, and they are larger than 1, i.e., $M > K > 1$; hence, we have $M N K > M + K + 1$.
Thus, we can express the complexity of solving problem ($\mathcal{P}_{\mathfrak{a}}$), (\ref{Ch5_objfun3}) and (\ref{Ch5_obj_1sdp}) as
\beq
X_{\sf{SDP}} = \mathcal{O}\left( \Pi^{4.5} \log(\zeta_{\sf{SDP}}^{-1}) \right),
\eeq
where $\Pi=MNK$. The complexity of the exhaustive search algorithm and our proposed algorithms can be calculated based on the number of iterations and $X_{\sf{SDP}}$ as follows.
First, let us define $R_{\sf{min}}^{\sf{fh}} = \min_{u \in \mathcal{U}} R_u^{k,\sf{fh}}$ and $\bar{C}=C/R_{\sf{min}}^{\sf{fh}}$. Then, the number of possible $\mathfrak{a}$ (i.e., the number 
of elements of set $\mathcal{S}_{\mathfrak{a}}$) is upper bounded by ${MK \choose \bar{C}}$. Hence, the complexity of the exhaustive search method can be expressed as
\beq
X_{\sf{exh}}= {MK \choose \bar{C}} \times X_{\sf{SDP}} = \mathcal{O}\left( {MK \choose \bar{C}}\Pi^{4.5} \log(\zeta_{\sf{SDP}}^{-1}) \right).
\eeq
We now quantify the complexity of solving FCPM problem using Algorithm~\ref{Ch5_alg:gms3}, which is adopted in Algorithm~\ref{Ch5_alg:gms2}. 
Note that, in each iteration of Algorithm~\ref{Ch5_alg:gms3}, problem (\ref{Ch5_objfun3}) is solved with certain values of $\lbrace \alpha_u^k \rbrace$ and $q$.
For a given value of $q$, the number of iterations for solving PFCPM problem is $\mathcal{O}\left( \zeta_2^{-2}\right)$ because the gradient method is applied \cite{Cartis10}, where $\zeta_2$ 
represents the solution accuracy of Algorithm~\ref{Ch5_alg:gms3}.
In addition, using the bisection searching method, the value of $q$ is typically determined  
after $\mathcal{O}\left( \log(\bar{q}\zeta_1^{-1})\right)$ searching steps where $\zeta_1$ denotes the solution accuracy of Algorithm~\ref{Ch5_alg:gms2}.
Hence, the complexity of Algorithm~\ref{Ch5_alg:gms2} can be calculated as
\beq
X_{\sf{PBA}}  = \mathcal{O}\left( \zeta_2^{-2}\right) \times \mathcal{O}\left( \log(\bar{q}/\zeta_1)\right) \times X_{\sf{SDP}} 
 = \mathcal{O}\left(  \Pi^{4.5} \zeta_2^{-2} \log(\zeta_{\sf{SDP}}^{-1}) \log(\bar{q}\zeta_1^{-1})\right).
\eeq
Finally, if we let $I_{\sf{LRA}}$ be the number of iterations required in Algorithm~\ref{Ch5_alg:gms4} to solve the FCPM problem, the complexity of the Algorithm~\ref{Ch5_alg:gms4} can be given as
\beq
X_{\sf{LRA}} = I_{\sf{LRA}} \times X_{\sf{SDP}} = \mathcal{O}\left(  I_{LRA} \Pi^{4.5} \log(\zeta_{\sf{SDP}}^{-1}) \right).
\eeq 
As illustrated in the numerical results, Algorithm~\ref{Ch5_alg:gms4} converges very fast after around 10 iterations for the small system and 30 iterations for the large system. Hence, 
Algorithm~\ref{Ch5_alg:gms4} is less complex than Algorithm~\ref{Ch5_alg:gms2}. Moreover, this complexity study also shows that the complexities of our proposed algorithms are much 
lower than that of the exhaustive searching method.

\subsection{Existing Algorithms}
\label{Ch5_ex_Alg}
We present two existing algorithms in \cite{Quek2013} and \cite{dai14} to evaluate the relative performance of our proposed algorithms in Section~\ref{Ch5_results}. 
These works also attempt to determine the set
of RRH–user links for each user, although their considered
problems are different from ours. Therefore, these existing
algorithms are modified so that they can be applied to solve
our considered problem. This is described in the following.

\subsubsection{Zhao, Quek, and Lei (ZQL) algorithm}

We first describe the greedy algorithm proposed by Zhao, Quek, and Lei (ZQL), which is called \textit{iterative link removal algorithm} in \cite{Quek2013}.
In this paper, the authors aim to minimize the number of active BS-user links (i.e., fronthaul capacity minimization) subject to the constraints on BSs' maximum transmission power 
and users' target SINRs. This greedy iterative algorithm works as follows.
In each iteration, the power-minimization precoding problem
with a specific set of active BS–user links is solved where
the authors allow all possible BS–user links initially. If this
problem is feasible (i.e., all power and QoS constraints can
be supported), then the active BS–user links are sorted in the
increasing order of their required transmission power. Then,
the set of removed BS–user links is updated before the whole
procedure is repeated in the next iteration.

If the power-minimization precoding problem is not feasible
in a particular iteration, then the number of removed BS–user
links is reduced compared to that in the previous iteration.
Otherwise, we increase the number of removed BS–user links.
These updates for the number of removed BS–user links follow
the bisection method. This iterative procedure is repeated until the largest number of removed BS–user links can be determined
while all constraints can still be maintained.

This greedy link removal principle can be applied to solve
our problem as follows. We initially solve the precoding problem
for power minimization where all possible RRH–user links
are allowed. Then, the required power values of all RRH–user
links are sorted in increasing order based on which we remove
the minimum number of links with the lowest power to maintain
the fronthaul capacity constraint. Finally, we solve the precoding
problem again to determine the precoding and power-allocation
solution. 

\subsubsection{Dai and Yu (DY) algorithm}
The second algorithm is proposed by Dai and Yu (DY) \cite{dai14}, where 
the authors consider the sum-rate maximization problem subject to the constraints on RRHs' transmission power and fronthaul capacity similar to (\ref{Ch5_eq:ivd_C_cons}).
The main idea of this algorithm is to approximate the
step function in the fronthaul constraint by a simple function
and solve the corresponding optimization problem iteratively.
Specifically, the fronthaul capacity constraint is approximated
as follows \cite{dai14}:
\beq \label{Ch5_fappx}
\sum \limits_{k \in \mathcal{K}}\sum \limits_{u \in \mathcal{U}}\beta_u^k  R_u^{k,\sf{fh}}  p_u^k \leq C,
\eeq
where $\beta_u^k$ is a parameter associated with the link between RRH $k$ and user $u$, which
is updated iteratively as
\beq 
\beta_u^k = \dfrac{1}{p_u^k + \tau},
\eeq
where $\tau$ is a small number. 

This approximation can be employed to solve our problem by iteratively solving the problem (\ref{Ch5_obj_1sdp})-(\ref{Ch5_const5}) 
where the fronthaul capacity constraint is approximated by (\ref{Ch5_fappx}).
The limitation of this approximation is that convergence cannot be established.
In addition, we will show later in Section~\ref{Ch5_results} that our proposed algorithms
achieve better performance than this algorithm in all investigated simulation settings.

\section{Further extension} \label{Ch5_sec:furex}
\subsection{Individual Fronthaul Capacity Constraints}
Here, we discuss the extension of the considered CoMP transmission design where there are multiple
individual fronthaul capacity constraints. In particular, each of these constraints captures the limited capacity of one particular fronthaul link between the
 CPC and the corresponding RRH. In this scenario, we have
to replace the sum fronthaul capacity constraint (\ref{Ch5_eq:C_cons}) with the following set of constraints:
\beq \label{Ch5_eq:ivd_C_cons}
G_k(\mathbf{p}^k) = \sum \limits_{u \in \mathcal{M}} R_u^{k,\sf{fh}} \delta( p_u^{k\ast}) \leq C_k, \;\;\; \forall k \in \mathcal{K},
\eeq
where $C_k$ denotes the capacity limit corresponding to the fronthaul link of RRH $k$.
Then, we have the following individual fronthaul constraint power minimization (IFCPM) problem
\beq
(\mathcal{P}_{\sf{IFCPM}}) \;\;\;\; \min \limits_{\lbrace \mathbf{v}_u^k \rbrace, \mathbf{p} } \;\;\; \Vert \mathbf{p} \Vert_{\mathbf{1}}  \;\;\; \text{s. t. }  \; \text{ constraints (\ref{Ch5_eq:pv}), (\ref{Ch5_eq:SINR_constraint}), (\ref{Ch5_cnt:noL}), (\ref{Ch5_eq:pwc}), and (\ref{Ch5_eq:ivd_C_cons})}. \label{Ch5_obj_IFCPM}  
\eeq

Obviously, problem $\mathcal{P}_{\sf{IFCPM}}$  is more challenging to address than problem $\mathcal{P}_{\sf{FCPM}}$.
However, we show in the following that the same design principles adopted in the previous section can be employed to develop 
low-complexity algorithms in this scenario as well.

\subsubsection{Pricing-Based Algorithm}
To address multiple fronthaul capacity constraints in this case, we have to consider the pricing problem with multiple corresponding
pricing parameters $q_k$ whose objective is given as follows:
\beq
 \min \limits_{\lbrace \mathbf{v}_u^k \rbrace, \mathbf{p} }  \Vert \mathbf{p} \Vert_{\mathbf{1}} + \sum_{k \in \mathcal{K} } q_k G_k(\mathbf{p}^k).
\eeq
For given pricing parameters $q_k, \: k \in \mathcal{K}$, the pricing problem can also be transformed into the convex SDP form, based on
which we can determine the precoding vectors. The key challenge is how to iteratively adjust the pricing
parameters $q_k$ to satisfy all fronthaul capacity constraints. This procedure has
been presented in our previous conference work \cite{vuha_ciss_2014}, which is omitted here due to the space constraint.

\subsubsection{Iterative Linear-Relaxed Algorithm} 
We can apply the same linearization technique to the sum fronthaul capacity constraint in Section~\ref{Ch5_IterRelax}
to deal with each fronthaul capacity constraint in (\ref{Ch5_eq:ivd_C_cons}) for problem $\mathcal{P}_{\sf{IFCPM}}$. 
In particular, these constraints can be relaxed into the following form:
\beq \label{Ch5_mulcon}
\sum \limits_{u \in \mathcal{U}} \!  \hat{z}_u^k  R_u^{k,\sf{fh}} \mathbf{v}_u^{kH}\mathbf{v}_u^{k} \leq   C_k \!  + \! \sum \limits_{u \in \mathcal{U}} \!   R_u^{k,\sf{fh}} {\mathsf{apx}}^{(k,u)\ast}(\hat{z}_u^k), \;\; \forall k \in \mathcal{K}.
\eeq
Then, these constraints can be transformed into the convex SDP form as in (\ref{Ch5_eq:zc_mtx_IFCPM}), shown in Appendix~\ref{Ch5_sdp_app} below.

\subsection{MIMO Systems with Multi-Stream Communications}

We now discuss how our proposed algorithms for multiple-input–single-output
(MISO) systems can be extended for the MIMO systems where multiple data streams can be transmitted to each user.
Let us consider a multi-stream communication setting where user $u$ is equipped with $T_u$ antennas and $D_u$ data streams are transmitted to each user $u$, where $D_u 
\leq T_u \leq \min_{k \in \mathcal{K}} N_k$.
Denote $\mathbf{x}_u = [x_{u,1} x_{u,2} ... x_{u,D_u}]^T \in \mathbb{C}^{D_u \times 1}$ as the vector describing $D_u$ data streams for user $u$, where $\mathbb{E}\lbrace \mathbf{x}_u \mathbf{x}_u^H\rbrace = \mathbf{I}_{D_u \times D_u}$. Moreover,
let $\mathbf{V}^k_u=[\mathbf{v}^k_{u,1} \mathbf{v}^k_{u,2} ... \mathbf{v}^k_{u,D_u}] \in \mathbb{C}^{N_k \times D_u}$ denote the precoding matrix at RRH $k$ corresponding to the signal transmitted to user $u$ as in \cite{Choi03}, $\mathbf{U}_u=[\mathbf{u}_{u,1} \mathbf{u}_{u,2} ... \mathbf{u}_{u,D_u}] \in \mathbb{C}^{T_u \times D_u}$ be the decoding matrix of user $u$, and $\mathbf{H}^k_u \in \mathbb{C}^{T_u \times N_k}$ be the channel matrix between RRH $k$ and user $u$.
Then, the baseband signal $\mathbf{y}_u$ received at user $u$ can be written as
\nomenclature{MISO}{Multiple-Input–Single-Output}
\beq
\label{Ch5_eq:rx_sig_mimo}
\mathbf{y}_u = \mathbf{U}_u^H \left( \sum \limits_{i \in \mathcal{U}} \sum \limits_{l \in \mathcal{K}}  \mathbf{H}_u^{l} \mathbf{V}_i^{l} \mathbf{x}_i  +  \Xi_u \right),  
\eeq
where $\Xi_u \in \mathbb{C}^{T_u \times 1}$ denotes the additive noise vector and $\mathbb{E}\lbrace \Xi_u \Xi_u^H\rbrace = \sigma^2 \mathbf{I}_{D_u \times D_u}$. Then, the received baseband signal for stream $m$ of user $u$ can be expressed as
\beq
y_{u,m}  = \sum \limits_{k \in \mathcal{K}} \sum \limits_{t = 1}^{D_u} \mathbf{u}_{u,m}^{H} \mathbf{H}_u^{k} \mathbf{v}_{u,t}^k x_{u,t} + \sum \limits_{i =1, \neq u}^{M} \sum \limits_{l \in \mathcal{K}} \sum \limits_{t=1}^{M_i} \mathbf{u}_{u,m}^{H} \mathbf{H}_u^{l} \mathbf{v}_{i,t}^{l} x_{i,t} + \mathbf{u}_{u,m}^H \Xi_u. 
\eeq
Moreover, the SINR of stream $m$ of user $u$ can be given as
\beq \label{Ch5_eq:SINR2}
\Gamma_{u,m} = \dfrac{ \left|  \sum \limits_{k \in \mathcal{K}} \mathbf{u}_{u,m}^{H} \mathbf{H}_u^{k} \mathbf{v}_{u,m}^{k} \right| ^2 }
{\sum \limits_{(i,t) \neq (u,m)} \left| \sum \limits_{l \in \mathcal{K}} \mathbf{u}_{u,m}^{H} \mathbf{H}_u^{l} \mathbf{v}_{i,t}^{l}\right|^2 + \sigma^2 \left| \mathbf{u}_{u,m} \right|^2 }.
\eeq
Assume that each data stream $m$ of every user $u$ has a QoS requirement, which is represented by target SINR $\gamma_{u,m}$. Our design problem for the MIMO system
 with multi-stream communications $(\mathcal{P}_{\sf{MIMO}})$ can be written as
\begin{align}
  \min \limits_{ \lbrace \mathbf{V}_u^k \rbrace, \lbrace \mathbf{U}_u \rbrace } & \;\;\; \sum \limits_{\forall(k,u)} \mathsf{Tr} \left( \mathbf{V}_u^{kH} \mathbf{V}_u^k \right)  \label{Ch5_obj_MIMO} \\
 \text{s. t. } & \sum \limits_{u \in \mathcal{U}} \mathsf{Tr} \left( \mathbf{V}_u^{kH} \mathbf{V}_u^k \right) \leq P_k, \;\;\; \forall k, \\
& \Gamma_{u,m} \geq \bar{\gamma}_{u,m}, \;\;\; \forall (u,m), \\
& \mathsf{Tr} \left( \mathbf{V}_u^{kH} \mathbf{V}_u^k \right) = 0 \text{ if } (k,u) \notin \mathcal{L},\\ 
& \sum \limits_{\forall (k,u,m)} \delta(\mathbf{v}^{kH}_{u,m}\mathbf{v}^k_{u,m}) R_{u,m}^{k,\sf{fh}} \leq C,
\end{align}
where these constraints capture the power limitation of RRHs, the QoS requirement of each data stream, the clustering constraint, and the limited fronthaul capacity, respectively.
Here, $R_{u,m}^{k,\sf{fh}}$ represents the required fronthaul capacity of RRH $k$ for transmitting the precoding vector and data stream $m$ of user $u$. 
As described in \cite{Heath11}, for a given fixed precoding matrix, the optimal decoding matrix based on the MMSE criterion can be expressed as follows:
\nomenclature{MMSE}{Minimum Mean Square Error}
\beq \label{Ch5_decoding}
 \mathbf{U}_u^{\star} = \left( \sum \limits_{i \in \mathcal{U}} \sum \limits_{l \in \mathcal{K}} \mathbf{H}_u^{l} \mathbf{V}_i^{l} \mathbf{V}_i^{lH}\mathbf{H}_u^{lH}  + 
\sigma^2 \mathbf{I}_{D_u \times D_u} \right)^{-1} \mathbf{H}_u^{k} \mathbf{V}_u^{k}.
\eeq
Then, if $\mathbf{U}_u$ is given, we can treat each user $u$ as $D_u$ virtual users $\lbrace u_1, u_2, ..., u_{D_u}\rbrace$. Specifically, each virtual user $u_m$ ($m \in \lbrace 1, 2, ..., D_u \rbrace$) receives a single data stream represented by data symbol $x_{u,m}$, the virtual channel vector between RRH $k$ and user $u_m$ can be expressed as $\mathbf{h}^k_{u,m} =  \mathbf{H}_u^{kH} \mathbf{u}_{u,m}$, and the power noise $\sigma_{u,m}^2 = \sigma^2 \left| \mathbf{u}_{u,m} \right|^2$. Therefore, the multi-stream design problem can be reformulated by the corresponding virtual single-stream one. Consequently, our proposed algorithms for the
single-stream MISO setting can be applied to determine the precoding matrices for this multi-stream MIMO system.
Based on these results, we can easily develop an iterative algorithm, which is similar to the one in \cite{Heath11}, to solve the multi-stream problem where we
alternatively determine the precoding matrices and decoding matrices in each iteration.

\begin{algorithm}[!t]
\caption{\textsc{MIMO Fronthaul Constraint Algorithm}}
\label{Ch5_alg:gms5}
\begin{algorithmic}[1]
\STATE Start with any value of $\mathbf{U}_u^{(0)}$ for all $u \in \mathcal{U}$.
\STATE Set $l=0$.
\REPEAT 
\STATE Use our proposed algorithm (Algorithm~\ref{Ch5_alg:gms2} or \ref{Ch5_alg:gms4}) to find the precoding matrices $\mathbf{V}_u^{k,(l)}$ by determining each precoding vector $\mathbf{v}_{u,m}^{k,(l)}$.
\STATE Update the decoding matrices $\mathbf{U}_u^{(l+1)}$ as in (\ref{Ch5_decoding}) for all $u \in \mathcal{U}$.
\STATE Update $l=l+1$.
\UNTIL Convergence.
\end{algorithmic}
\end{algorithm}

\section{Numerical Results}
\label{Ch5_results}

\begin{figure}[!t]
        \centering
        \begin{subfigure}[c]{0.34\textwidth}
                \begin{center}
\includegraphics[height=50mm]{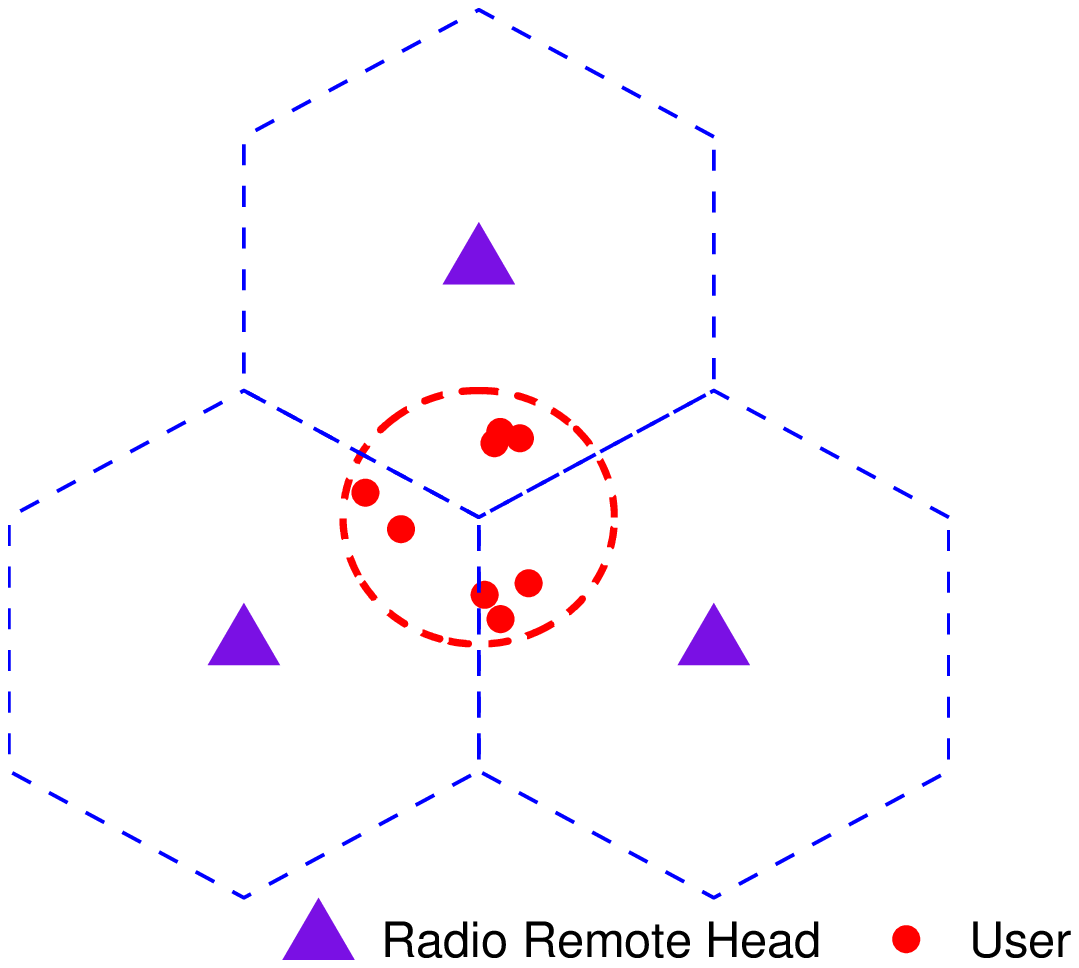}
\end{center}
\caption{Small network} \label{Ch5_small_system}
        \end{subfigure}%
        \qquad  
        ~ 
        \begin{subfigure}[c]{0.46\textwidth}
\begin{center}
\includegraphics[height=50mm]{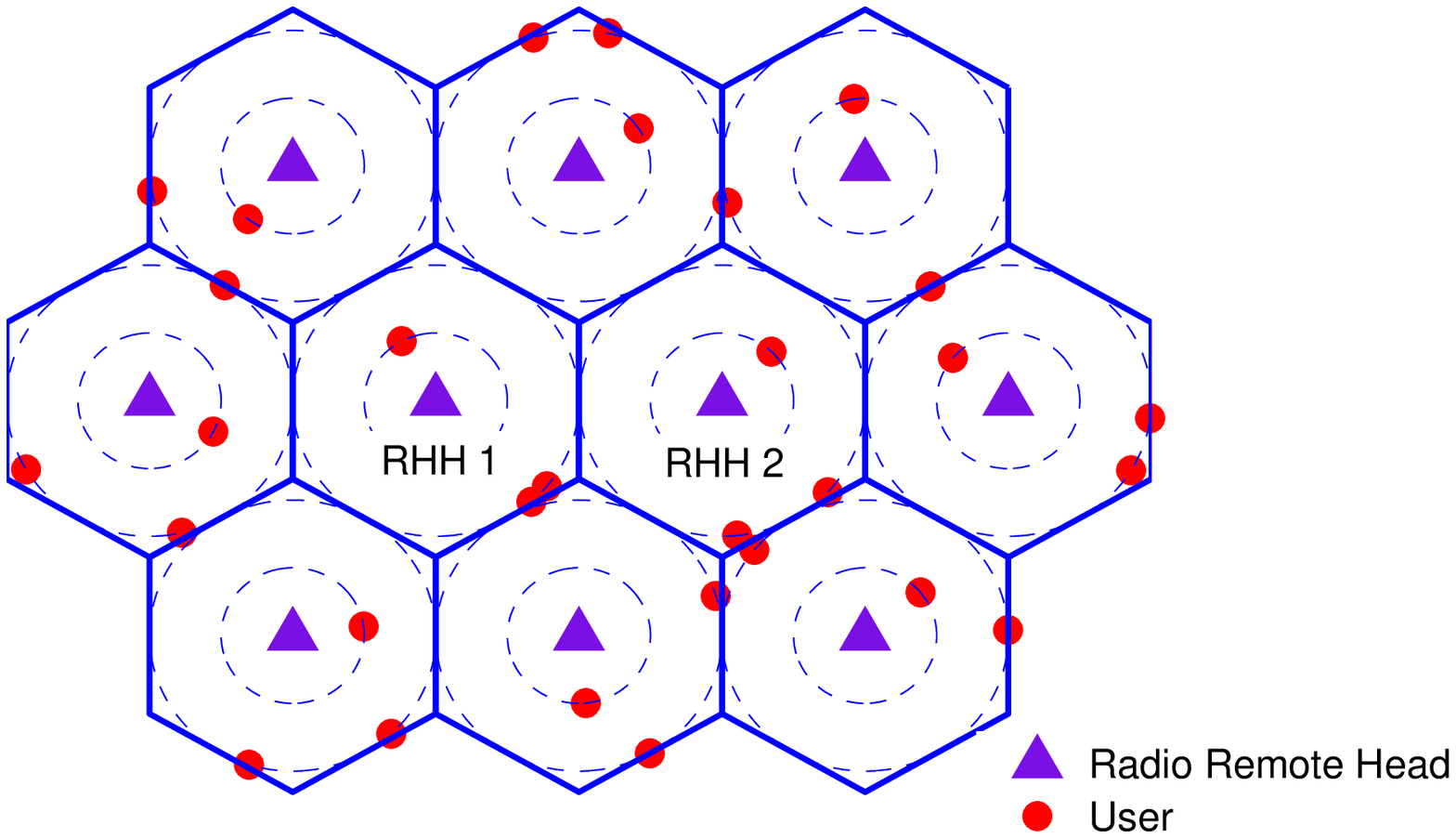}
\end{center}
\caption{Large network}
\label{Ch5_system_model_10BS}
        \end{subfigure}
         \caption{The network simulation setting}
\end{figure}


We present numerical results to illustrate the performance achieved by our proposed algorithms. 
To obtain all numerical results for our proposed algorithms, we employ
the exponential approximation function given in Table.~\ref{Ch5_tb:cv_fct}.
To demonstrate the effectiveness of our design, we consider a small network simulation setting as illustrated in Fig.~\ref{Ch5_small_system},
which allows us to obtain the global optimal solution based on exhaustive search. We then further study the network performance
for a larger network setting shown in Fig.~\ref{Ch5_system_model_10BS}. For ease of exposition, we will present
all results using normalized fronthaul capacity, which is defined as $\bar{C}=C/R_{\sf{min}}^{\sf{fh}}$, where $R_{\sf{min}}^{\sf{fh}} = \min_{u \in \mathcal{U}} R_u^{k,\sf{fh}}$. 
 Furthermore, the initial solution for our iterative linear-relaxed algorithm and DY algorithm \cite{dai14} is obtained by running Algorithm~\ref{Ch5_alg:gms3} with the pricing parameter $\bar{q}$.

Unless stated otherwise, the following parameter and simulation setup are adopted for both network settings.
The channel gains are generated by considering both Rayleigh fading and path loss, which is modeled 
as $L_u^{k}=36.8 \mathsf{log}_{10}(d_{u}^k)+43.8+20\mathsf{log}_{10}(\frac{f_c}{5})$,
where $d_u^k$ is the distance from user $u$ to RRH $k$; $f_c=2.5\:GHz$. 
The noise power is set equal to $\sigma^2=10^{-13} \; W$.  We set other parameters as follows: $\epsilon=10^{-6}$, $\tau=10^{-8}$, RRH power
 $P_k=3 \; W$ for all $k \in \mathcal{K}$, the number of antennas of each RRH equal to 4, except 
for the results in Fig.~\ref{Ch5_3BS_PvsN}. We denote our proposed algorithms
and two existing algorithms as ``Pricing-Based Alg.'', ``Iterative Linear-Relaxed Alg.'', 
``ZQL Alg.'' \cite{Quek2013}, and ``DY Alg.'' \cite{dai14}, respectively, in relevant figures.

\subsection{Small Network Simulation Setting}

\begin{figure}[!t]
\begin{center}
\includegraphics[width=0.7 \textwidth]{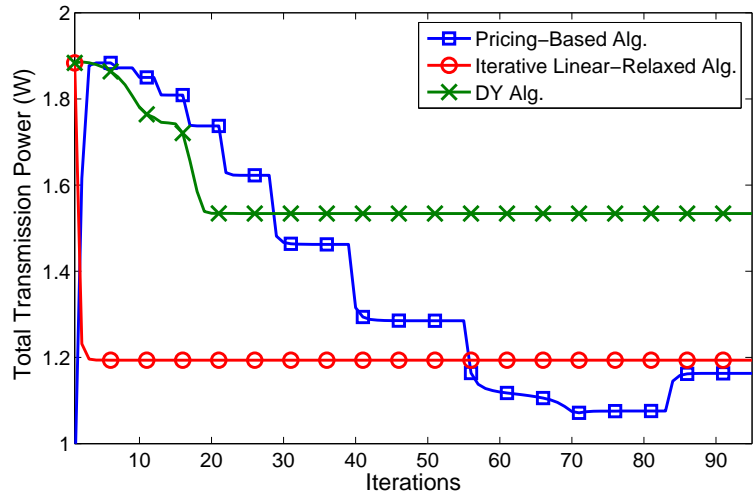}
\end{center}
\caption{Variations of total power under three algorithms.}
\label{Ch5_3BS_3func}
\end{figure}
We consider three RRHs in this setting where the distance between their centers is equal to $500 \; m$.
Users are randomly placed inside a circle at the center of three RRHs whose radius is $125 \; m$. 
First, we examine the convergence of our proposed algorithms and DY algorithm in Fig.~\ref{Ch5_3BS_3func}.
Three different curves represent the variations of total transmission powers due to the pricing-based algorithm, the iterative linear-relaxed
algorithm, and DY algorithm, respectively.
To obtain these simulation results, we set the number of users equal to 7 ($M=7$), and user target SINR $\bar{\gamma}_u=0\;dB$ for all $u \in \mathcal{M}$.
It can be observed that
all algorithms converge, although the iterative linear-relaxed
algorithm is the fastest and the pricing-based algorithm is
the slowest but achieves the lowest transmission power. This
figure also illustrates that the DY algorithm converges after
20 iterations; however, the transmission power achieved by this
algorithm is much higher than those achieved by ours.

\begin{figure}[!t]
\begin{center}
\includegraphics[width=0.7 \textwidth]{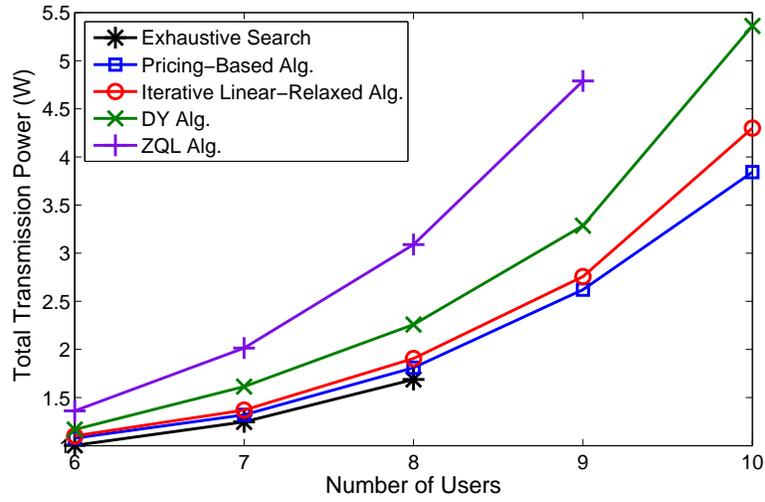}
\end{center}
\caption{Total power versus number of users in a small network.}
\label{Ch5_3BS_PvsM}
\end{figure}

In Fig.~\ref{Ch5_3BS_PvsM}, we show total transmission powers of all RRHs achieved by the
exhaustive searching method, our proposed algorithms, and two reference algorithms, versus the number of users where $\bar{\gamma}_u = 0 \; dB$ $\forall u$.
As can be
seen, our proposed algorithms result in lower total transmission
power compared to DY and ZQL algorithms. Moreover, the
pricing-based algorithm is slightly better than the iterative
linear-relaxed algorithm, and both proposed algorithms require
marginally higher total transmission power than that due to
the optimal exhaustive search algorithm. In addition, the DY
algorithm outperforms the ZQL algorithm, but these existing
algorithms demand considerably higher power compared with
our algorithms. Moreover, this figure shows that the total transmission power increases as the number of users becomes
larger under all algorithms, as expected.

\begin{figure}[!t]
\begin{center}
\includegraphics[width=0.7 \textwidth]{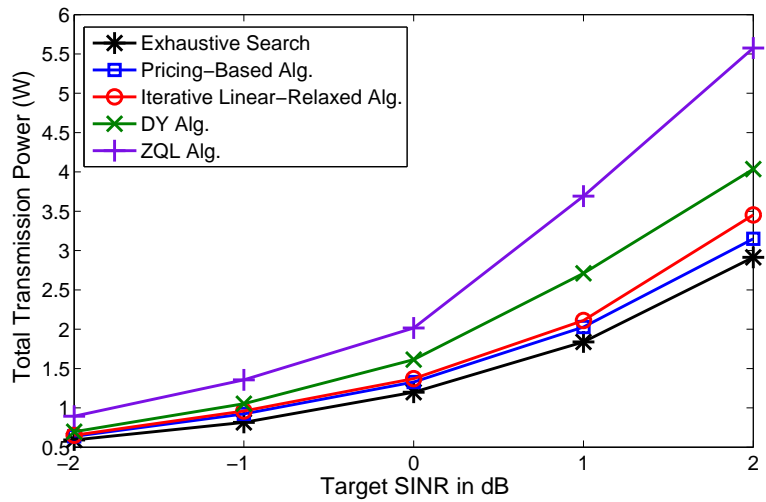}
\end{center}
\caption{Total power versus user target SINR in a small network.}
\label{Ch5_3BS_PvsSINR}
\end{figure}

\begin{figure}[!t]
\begin{center}
\includegraphics[width=0.7 \textwidth]{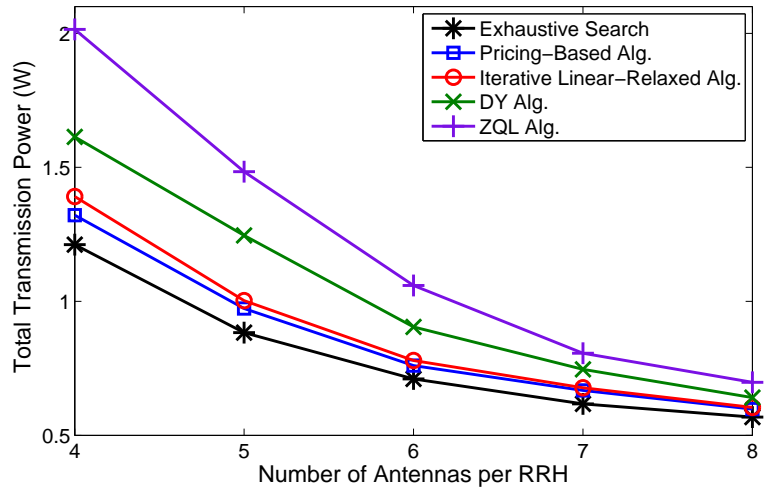}
\end{center}
\caption{Total power versus number of antennas per RRH in a small network.}
\label{Ch5_3BS_PvsN}
\end{figure}

Fig.~\ref{Ch5_3BS_PvsSINR} and Fig.~\ref{Ch5_3BS_PvsN} shows the variations of total transmission power versus user target SINR and  the number of antennas per RRH for $M = 7$, respectively. The target SINR of all users for the scheme in Fig.~\ref{Ch5_3BS_PvsN} is set as $\bar{\gamma}_u = 0 \; dB$.
 As expected, larger
power is needed as users require a higher target SINR, and a
larger number of equipped antennas result in reduction of total
transmission power due to the increasing network degrees of
freedom. In addition, our proposed algorithms can achieve the
total transmission power that is close to that obtained by the
exhaustive search method. Moreover, as can be observed, these
figures also confirm the superiority of the proposed algorithms
compared to the reference algorithms. Once again, the pricing-based
algorithm slightly outperforms the iterative linear-relaxed
algorithm.

\subsection{Large Network Simulation Setting}

We now study the performance of the proposed algorithms for the large network simulation setting where there are ten hexagonal cells 
 ($K=10$) and the distance between the centers of two nearest RRHs is $500 \, m$.
In this setting, three users are randomly placed inside each cell so that the distance between each of them and their nearest RRH is either $250 \, m$ or $125 \, m$. Overall, we 
generate 30 users ($M = 30$) in this network, and we set $\bar{\gamma}_u=0\;dB$ for all users, except the simulation in Fig.~\ref{Ch5_10BS_PvsSINR}. Fig.~\ref{Ch5_C_cv_10BS_Cb70} shows the convergence of two proposed algorithms and DY algorithm where it confirms that our proposed algorithms also perform as expected 
in this large network setting.
Moreover, the pricing-based algorithm requires more
convergence time compared with the iterative linear-relaxed
algorithm and DY algorithm, although the former achieves
smaller transmission power than the latter. Once again, the
iterative linear-relaxed algorithm converges faster than the DY
algorithm.

\begin{figure}[!t]
\begin{center}
\includegraphics[width=0.7 \textwidth]{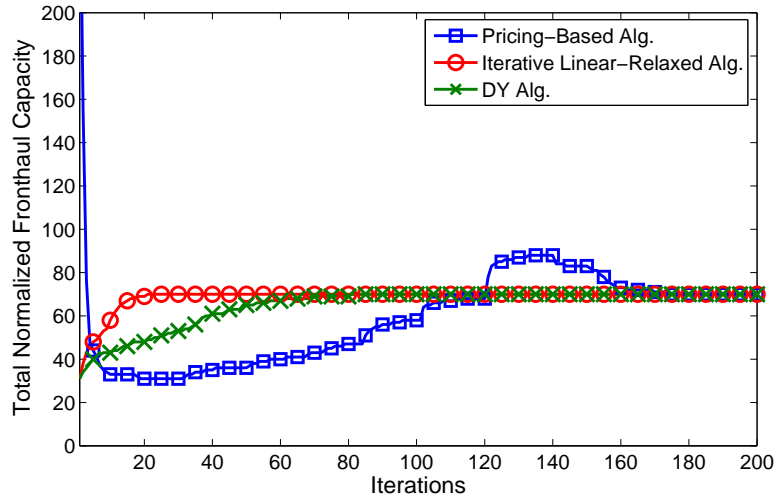}
\end{center}
\caption{Variations of utilized fronthaul capacity due to Algorithms~\ref{Ch5_alg:gms2}, Algorithms~\ref{Ch5_alg:gms4}, and DY Algorithm over iterations.}
\label{Ch5_C_cv_10BS_Cb70}
\end{figure}

\begin{figure}[!t]
\begin{center}
\includegraphics[width=0.7 \textwidth]{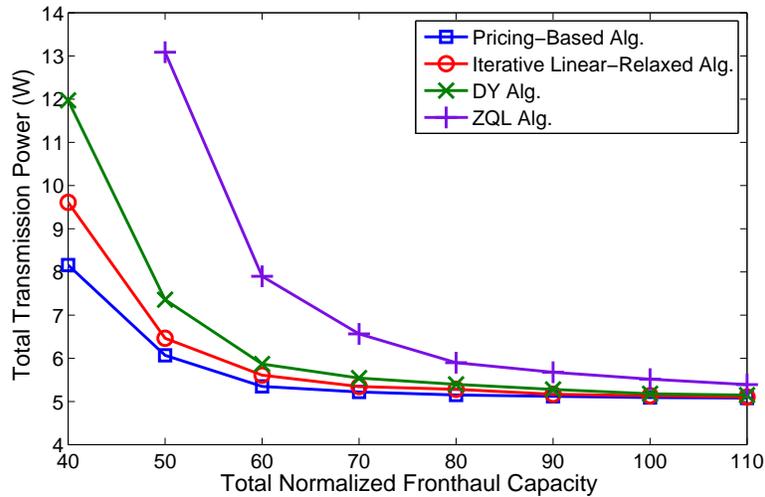}
\end{center}
\caption{Total power versus total fronthaul capacity.}
\label{Ch5_10BS_PvsC}
\end{figure}

\begin{figure}[!t]
\begin{center}
\includegraphics[width=0.7 \textwidth]{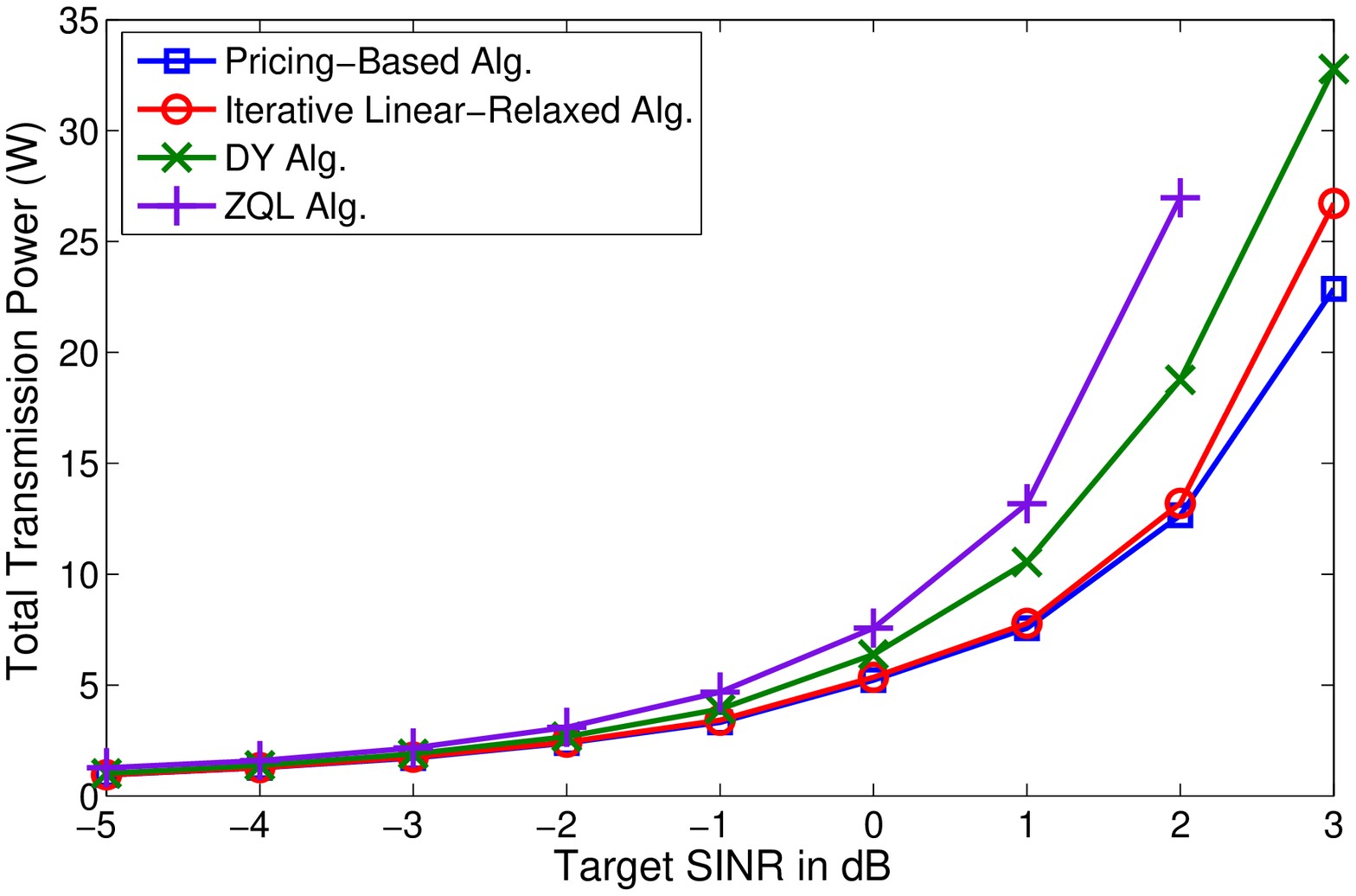}
\end{center}
\caption{Total power versus user target SINR.}
\label{Ch5_10BS_PvsSINR}
\end{figure}

Figs.~\ref{Ch5_10BS_PvsC},~\ref{Ch5_10BS_PvsSINR} show the total transmission powers versus the normalized fronthaul capacity and users' target SINR, respectively.
Our proposed algorithms again outperform the other existing algorithms.
The iterative linear-relaxed algorithm results in the total transmission power which is slightly higher than that due to the
pricing-based algorithm.
Moreover, as shown in Fig.~\ref{Ch5_10BS_PvsC}, the performance gap between these algorithms are more significant
if the available fronthaul capacity is smaller. 
Furthermore, the three algorithms except ZQL algorithm achieve similar performance for
sufficiently large fronthaul capacity or small target SINR values.

In Fig.~\ref{Ch5_10BS_PvsClusterSize}, we present the variations of total transmission power versus the cluster size.
Here, the cluster size is equal to the number of nearest RRHs, which are allowed to serve each user. 
Specifically, the cluster size of $m$ means that each user can be served by at most $m$ nearest RRHs.
In this simulation, we set $\bar{C} = 70$ and the target SINR $\bar{\gamma} = 0 \; dB$.
As can be seen, the larger cluster size results in the lower total transmission power.
However, total transmission powers achieved by all algorithms become flat as the cluster size is sufficiently large. 
These results imply that activating weak links between users and RRHs does not significantly improve the network performance.
This figure again confirms that our proposed algorithms significantly outperform other existing algorithms, and the pricing-based algorithm is slightly better than iterative linear-relaxed algorithm.

\begin{figure}[!t]
\begin{center}
\includegraphics[width=0.7 \textwidth]{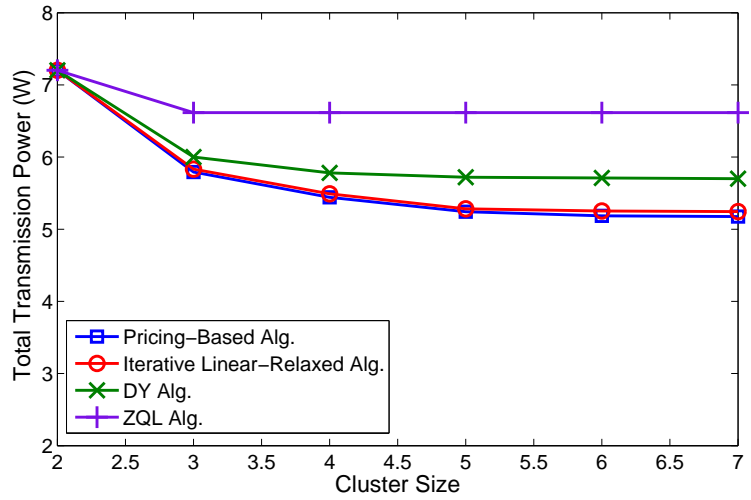}
\end{center}
\caption{Total power versus cluster size.}
\label{Ch5_10BS_PvsClusterSize}
\end{figure}

\subsection{Individual Fronthaul Capacity Constraints}

\begin{figure}[!t]
\begin{center}
\includegraphics[width=0.7 \textwidth]{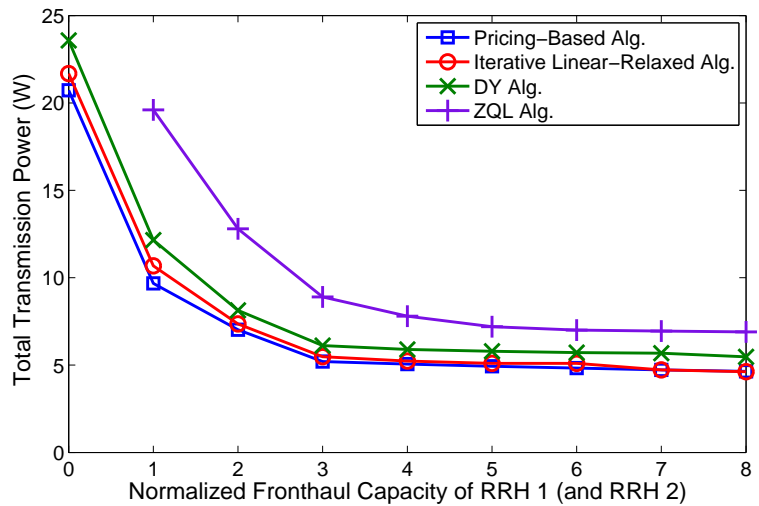}
\end{center}
\caption{Total power versus fronthaul capacity per RRH 1/2.}
\label{Ch5_10BS_PvsC1_C2}
\end{figure}

Fig.~\ref{Ch5_10BS_PvsC1_C2} shows variations of total transmission powers achieved by our proposed algorithms and two reference algorithms as there are multiple individual 
fronthaul capacity constraints. In this simulation, we keep the normalized fronthaul capacity $\bar{C}_k = 8$ for all $ k > 2$ while varying the 
normalized capacity of fronthaul links corresponding RRHs 1 and 2, which is denoted as $\bar{C}_{1-2}$ (i.e., $\bar{C}_1=\bar{C}_2=\bar{C}_{1-2}$).
Again, the individual normalized fronthaul capacity is defined as $\bar{C}_k=C_k/R_{\sf{min}}^{\sf{fh}}$.
In fact, if we set $\bar{C}_{1-2} = 0$, then RRHs 1 and 2 are turned off, which cannot serve any user in the network.
As can be observed, the total transmission power decreases with the increase of $\bar{C}_{1-2}$, which
shows a similar trend as in Fig.~\ref{Ch5_10BS_PvsC}.
Interestingly, all users in cells 1 and 2 can still be supported at their target SINR by running pricing-based, iterative linear-relaxed and DY algorithms
even when RRHs 1 and 2 are turned off, which is not the case with the ZQL algorithm.    
Overall, our proposed algorithms still outperform the two existing algorithms.

\subsection{MIMO Systems with Multi-stream Communications}
 
\begin{figure}[!t]
\begin{center}
\includegraphics[width=0.7 \textwidth]{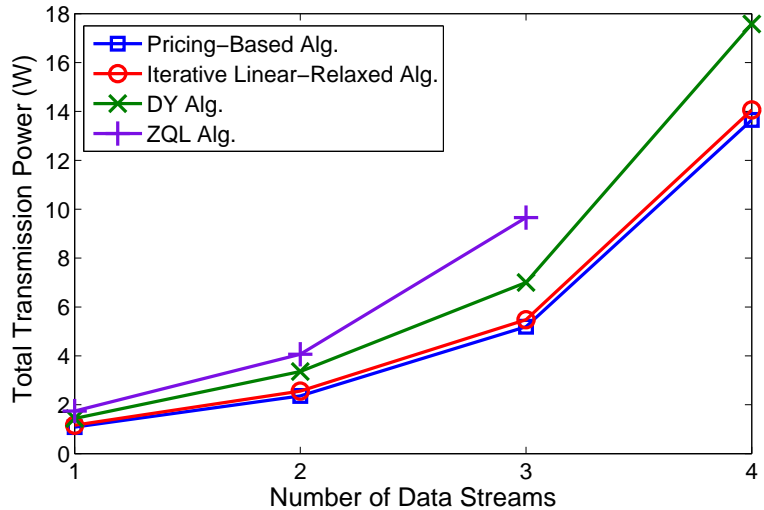}
\end{center}
\caption{Total power versus the number of data streams per user.}
\label{Ch5_mul_streams}
\end{figure}

For this simulation, we consider the large network, but we allocate only two users in each cell to ensure the feasibility of the network. Each RRH is equipped with eight antennas and each user has four antennas.
The total normalized fronthaul capacity is set at $100$, and the target SINR for each stream is $0 \; dB$.
Fig.~\ref{Ch5_mul_streams} shows the total transmission power of all RRHs versus the number of data streams for each users.
As shown, the proposed algorithms still work well for the MIMO stream and also achieve better solution than that of the algorithms in literature.
Interestingly, the higher number of streams per user requires the higher transmission power of all RRHs.

\section{Conclusion} \label{Ch5_sec:cls}

We have proposed efficient and low-complexity algorithms to solve the downlink joint transmission problem in Cloud-RAN that aims 
to minimize the total transmission power subject to constraints on transmission powers, fronthaul capacity, and  users' QoS. 
We have considered both scenarios with sum and individual fronthaul capacity constraints.
Numerical results have illustrated the efficacy of our proposed algorithms and the impacts of different parameters on the network performance.
In particular, the pricing-based and iterative linear-relaxed algorithms achieve the performance very close to that due to the optimal exhaustive search.
In addition, our proposed algorithms significantly outperform two existing algorithms proposed in \cite{Quek2013} and \cite{dai14} in all investigated scenarios.
There are still several interesting directions for future study. First, joint design and optimization of CoMP techniques and fronthaul signal
compression is an interesting problem. Furthermore, consideration of CoMP design, CSI estimation, and feedback deserves further efforts.

\section{Appendices}

\subsection{Weighted Sum-Power Minimization Solution}
\label{Ch5_sdp_app}
Denote $\mathcal{K}_u$ as the set of RRHs allowed to serve user $u$, i.e., $\mathcal{K}_u=\left\lbrace k \vert (k,u) \in \mathcal{L} \right\rbrace$.
Let $\mathbf{v}_u$ denote the precoding solution over all RRHs in $\mathcal{K}_u$,
which is defined as $\mathbf{v}_u=[\mathbf{v}_u^{u_1T}, \mathbf{v}_u^{u_2T} ... \mathbf{v}_u^{u_{a_u}T}]^T$ where $\{u_1,...,u_{a_u}\} = \mathcal{K}_u$ and $a_u=|\mathcal{K}_u|$.
Therefore, we have $\mathbf{v}_u \in \mathbb{C}^{N_u \times 1}$ where $N_u=\sum_{k \in \mathcal{K}_u} N_k$.
Here, we are interested in determining $\mathbf{v}_u$ because we have $\mathbf{v}_u^{i}=\mathbf{0}$ for all $i \notin \mathcal{K}_u$
(i.e., the precoding vector of any RRH that does not serve the underlying user is equal to zero).

Let us define $\mathbf{W}_u=\mathbf{v}_u\mathbf{v}_u^H$, we have $\mathbf{W}_u \in \mathbb{C}^{N_u \times N_u}$. 
It is positive semi-definite ($\mathbf{W}_u \succeq \mathbf{0}$) and has rank one because it is generated from vector $\mathbf{v}_u$.
We also define the channel vector $\mathbf{h}_{u,i}=[\mathbf{h}_u^{i_1T}, \mathbf{h}_u^{i_2T}, ... \mathbf{h}_u^{i_{a_i}T}]^T$ and $\mathbf{H}_{u,i}=\mathbf{h}_{u,i}\mathbf{h}_{u,i}^{H}$ for all ${u,i} \in \mathcal{U}$. Then, the SINR constraint for user $u$ in (\ref{Ch5_eq:SINR_constraint}), and the power constraint for RRHs can be rewritten as 
\beqn
\mathsf{Tr}(\mathbf{H}_{u,u}\mathbf{W}_u) - \bar{\gamma}_u \sum \limits_{i \in \mathcal{U}/u} \mathsf{Tr}(\mathbf{H}_{u,i}\mathbf{W}_i) \geq \bar{\gamma}_u \sigma^2, \;\; \forall u \in \mathcal{U}, \label{Ch5_eq:SINR_mtx} \\
\sum \limits_{u \in \mathcal{U}} \mathsf{Tr}(\mathbf{E}_u^k \mathbf{W}_u) \leq P_k, \;\; \forall k \in \mathcal{K}, \label{Ch5_eq:pw_mtx}
\eeqn
where $\mathbf{E}_u^k=\mathsf{diag}(\mathbf{0}_{N_{u_1} \times 1},..., \mathbf{1}_{N_{u_i} \times 1},...,\mathbf{0}_{N_{u_{a_u}} \times 1})$ if $u_i=k$.
Therefore, the weighted sum-power minimization can be formulated as the following SDP problem:
\begin{align}
 \min \limits_{\left\lbrace \mathbf{W}_u\right\rbrace_{u=1}^{M} } & \sum \limits_{u=1}^M \mathsf{Tr}(\mathbf{F}_u \mathbf{W}_u) \label{Ch5_SDP} \\
 \text{s.t. } & \text{ constraints (\ref{Ch5_eq:SINR_mtx}), (\ref{Ch5_eq:pw_mtx}),} \nonumber\\
 {} & \mathbf{W}_u \succeq \mathbf{0}, \mathsf{rank}(\mathbf{W}_u)=1, \; \forall u, \label{Ch5_W_rk1}
\end{align}
where $\mathbf{F}_u=\mathsf{diag}(\alpha_u^{u_1(n)} \mathbf{1}_{N_{u_1} \times 1}, ...,\alpha_u^{u_{a_u}(n)} \mathbf{1}_{N_{u_{a_u}} \times 1})$.
This transformation reveals a special structure of the precoding design problem. Specifically, if we remove the rank-one constraints in (\ref{Ch5_W_rk1}) from (\ref{Ch5_SDP}) then the new 
problem is convex. In fact, this relaxed problem is the convex semi-definite program (SDP), which can be therefore solved easily by using  standard tools such as CVX solver \cite{Boyd2009}.
As given in \textit{Theorem 3.1} of \cite{Bengtsson99} and \textit{Lemma 2} of \cite{Ottersten01}, if (\ref{Ch5_SDP}) is feasible, then it has at least one solution, where $\mathsf{rank}(\mathbf{W}_u)=1$, for all $u \in \mathcal{U}$.

Because the relaxed SDP problem is convex, $\mathbf{v}_u$ can be calculated as the eigenvector of $\mathbf{W}_u$ 
if the optimum solution is unique. Such unique optimum solution satisfies the rank-one constraint. 
As discussed in \cite{Bengtsson99,Ottersten01}, the relaxed SDP may have more than one optimum solution, which
means that the CVX solver cannot ensure to return the rank-one solution in general.
However, this situation almost surely never happens in practice, except for cases where the channels from two groups are exactly symmetric. 
Nevertheless, if the algorithm does produce one of such solutions where $\mathbf{W}_u$ does not have rank one, we can still obtain rank-one optimum solution from that solution by
using the method described in \textit{Lemma 5} of \cite{Ottersten01}, or by applying the solution method for solving the rank constrained problem presented in \textit{Algorithm~2} of
 \cite{Huang10}, or by utilizing the best rank-one approximation based on the largest eigenvalue and the corresponding eigenvector as discussed in Section~II of \cite{lou10}.

In addition, this SDP-based transformation can be applied to solve the 
problems (\ref{Ch5_obj_1sdp})-(\ref{Ch5_const5}) and (\ref{Ch5_obj_IFCPM}) where the additional linear-relaxed fronthaul constraints (\ref{Ch5_const5}) and 
(\ref{Ch5_mulcon}) can also be rewritten respectively into the SDP-form as follows:
\beq
\sum \limits_{u \in \mathcal{U}} R_u^{k,\sf{fh}} \mathsf{Tr}(\mathbf{Z}_u \mathbf{W}_u) \leq C + \! \sum \limits_{k \in \mathcal{K}} \sum \limits_{u \in \mathcal{U}} \! R_u^{k,\sf{fh}} f_{\mathsf{apx}}^{(k,u)\ast}(\hat{z}_u^k), \label{Ch5_eq:zc_mtx} 
\eeq
and
\beq
\sum \limits_{u \in \mathcal{U}} R_u^{k,\sf{fh}} \mathsf{Tr}(\mathbf{Z}^k_u \mathbf{W}_u) \leq C_k + \! \sum \limits_{u \in \mathcal{U}} \!  R_u^{k,\sf{fh}} f_{\mathsf{apx}}^{(k,u)\ast}(\hat{z}_u^k), \label{Ch5_eq:zc_mtx_IFCPM} 
\eeq
where $\mathbf{Z}_u =\mathsf{diag}(z_u^{u_1}\mathbf{1}_{N_{u_1} \times 1},...,z_u^{u_{a_u}}\mathbf{1}_{N_{u_{a_u}} \times 1})$ and $\mathbf{Z}^k_u =\mathsf{diag}(\mathbf{0}_{N_{u_1} \times 1},..., z_u^{u_i}\mathbf{1}_{N_{u_i} \times 1}, ..., \mathbf{0}_{N_{u_{a_u}} \times 1})$ if $u_i=k$.

\subsection{Proof of Proposition~\ref{Ch5_lm1}}
\label{prf_Ch5_lm1}
\subsubsection{Proof of Statement 1}

Let us consider two values of the pricing parameter $q$ and $q^{\prime}$ where $q^{\prime} > q$.
Let $(\mathbf{p}, \left\{ \mathbf{v}_u^k \right\} )$ and $(\mathbf{p}^{\prime}, \left\{ \mathbf{v}_u^{k \prime} \right\})$ be the solutions of PFCPM problem with $q$ and $q^{\prime}$, respectively.
Then, we have
\beqn
\sum \limits_{\forall (u,j)} p_u^j + q G_{\sf{PFCPM}}(q) \leq \sum \limits_{\forall (u,j)} p_u^{j\prime} + q G_{\sf{PFCPM}}(q^{\prime}),  \\ 
\sum \limits_{\forall (u,j)} p_u^{j \prime} + q^{\prime} G_{\sf{PFCPM}}(q^{\prime}) \leq \sum \limits_{\forall (u,j)} p_u^{j} + q^{\prime} G_{\sf{PFCPM}}(q).
\eeqn
After combining and simplifying these two inequalities, we have
$(q^{\prime}-q)G_{\sf{PFCPM}}(q^{\prime}) \leq (q^{\prime}-q)G_{\sf{PFCPM}}(q)$. Therefore, we have finished the proof for the first statement.

\subsubsection{Proof of Statement 2}

We prove that, for any $(\mathbf{p}, \left\{ \mathbf{v}_u^k \right\})$ satisfying constraints (\ref{Ch5_eq:pv}), (\ref{Ch5_eq:SINR_constraint}), (\ref{Ch5_eq:pwc}), we always
 have $G(\mathbf{p}) \geq G_{\sf{PFCPM}}(q)$ if $q \geq \bar{q}$. 
Let $P^{\sf{opt}}(q)$ be the total power of PFCPM problem corresponding to $q$.
Because $(\mathbf{p}, \left\{ \mathbf{v}_u^k \right\} )$ satisfy constraints (\ref{Ch5_eq:pv}), (\ref{Ch5_eq:SINR_constraint}), (\ref{Ch5_eq:pwc}), we have
\beqn
{} & P^{\sf{opt}}(q)+ q G_{\sf{PFCPM}}(q) \leq \sum \limits_{(k,u) \in \mathcal{L}} p_u^k + q G(\mathbf{p}),  \\
\rightarrow  & G_{\sf{PFCPM}}(q) - G(\mathbf{p}) \leq  \frac{ \sum \limits_{(k,u) \in \mathcal{L}} p_u^k - P^{\sf{opt}}(q)}{q} < \dfrac{\sum \limits_{k \in \mathcal{K}} P_k}{\bar{q}}. 
\eeqn

Thus, we have $G_{\sf{PFCPM}}(q)-G(\mathbf{p})<\sigma_{\sf{min}}$ if $q \geq \bar{q}$. 
Based on the definition of $\sigma_{\sf{min}}$, we must have that $G(\mathbf{p})$ cannot be smaller than $G_{\sf{PFCPM}}(q)$.

\subsubsection{Proof of Statement 3}
From the proof of statement 2, we can see that there exists no $(\mathbf{p}, \left\{ \mathbf{v}_u^k \right\})$ satisfying constraints (\ref{Ch5_eq:pv}), (\ref{Ch5_eq:pwc}), (\ref{Ch5_eq:SINR_constraint}) so that 
$G(\mathbf{p}) \leq C <G_{\sf{PFCPM}}(q)$. Hence, the last statement of Proposition~\ref{Ch5_lm1} is proved.

\subsection{Proof of Proposition~\ref{Ch5_cvg_alg2}}
\label{prf_Ch5_cvg_alg2}

To prove the convergence of Algorithm~\ref{Ch5_alg:gms3}, we will prove that the value of the objective function of problem (\ref{Ch5_objfun2}) (which is represented by $g(\mathbf{x})$) decreases after each step of Algorithm~\ref{Ch5_alg:gms3}.
For any vector $\mathbf{x}$ representing all precoding vectors and powers, the following inequality holds because of the concavity of function $g(\mathbf{x})$
\beq \label{Ch5_inqlt1}
g(\mathbf{x}) \leq g(\mathbf{x}^{(n)}) + \nabla g(\mathbf{x}^{(n)}) \left( \mathbf{x} - \mathbf{x}^{(n)}  \right). 
\eeq
In addition, the solution achieved in the $(n+1)$th iteration is 
\beq
\mathbf{x}^{(n+1)} = \arg \min \nabla g(\mathbf{x}^{(n)}) \mathbf{x} \text{ s.t. } \: \text{constraints (\ref{Ch5_eq:SINR_constraint}), (\ref{Ch5_cnt:noL}), (\ref{Ch5_eq:pwc})}. \nonumber
\eeq
Hence, we have $\nabla g(\mathbf{x}^{(n)}) \mathbf{x}^{(n+1)} \leq  \nabla g(\mathbf{x}^{(n)}) \mathbf{x}^{(n)}$. 
Substituting $\mathbf{x}^{(n+1)}$ into the inequality (\ref{Ch5_inqlt1}) yields
\beq
g(\mathbf{x}^{(n+1)}) \leq g(\mathbf{x}^{(n)}) + \nabla g(\mathbf{x}^{(n)}) \left( \mathbf{x}^{(n+1)} - \mathbf{x}^{(n)}  \right)   \leq g(\mathbf{x}^{(n)}). 
\eeq
Thus, the objective function of problem (\ref{Ch5_objfun2}) corresponding to the solution in each iteration of Algorithm~\ref{Ch5_alg:gms3} is monotonically decreasing. 
Therefore, Algorithm~\ref{Ch5_alg:gms3} must converge to a local optimum point. 

\subsection{Proof of Proposition~\ref{Ch5_lm2}}
\label{prf_Ch5_lm2}

\subsubsection{Proof of Statement 1}

Denote $\Omega_l$ as the optimum objective value of problem (\ref{Ch5_obj_1sdp})-(\ref{Ch5_const5}) in iteration $l$,
 which is obtained after performing steps 8-12. We will prove that $\Omega_l$ decreases over each iteration; hence, our proposed algorithm converges. 
Denote $\mathcal{F}_l$ as the feasible set of (\ref{Ch5_obj_1sdp})-(\ref{Ch5_const5}) in iteration $l$, which corresponds to the value of $\left\lbrace \hat{z}_u^{k,(l)} \right\rbrace$.
Let us define $\left(\mathbf{p}^{(l)}, \left\lbrace \mathbf{v}_u^{k,(l)} \right\rbrace  \right)$ as the optimal solution of problem (\ref{Ch5_obj_1sdp})-(\ref{Ch5_const5}) corresponding to
 $\left\lbrace \hat{z}_u^{k,(l)} \right\rbrace $ as we run steps 8-12 in Algorithm~\ref{Ch5_alg:gms4} . 
By setting $\left\lbrace \hat{z}_u^{k,(l+1)} \right\rbrace $ as in (\ref{Ch5_eq:z}), we always have $\left(\mathbf{p}^{(l)}, \left\lbrace \mathbf{v}_u^{k,(l)}\right\rbrace \right) \in \mathcal{F}_{l+1}$.
Hence, we must have
\beq
\Omega_l \geq \Omega_{l+1} \;\; \forall l>0
\eeq
which completes the proof.

\subsubsection{Proof of Statement 2}
Denote $\mathcal{F}$ as the feasible set of problems (\ref{Ch5_obj_4})-(\ref{Ch5_const4}). Because of (\ref{Ch5_eq:g2}), if $\sum \limits_{k \in \mathcal{K}} \sum \limits_{u \in \mathcal{U}}   \hat{z}_u^k R_u^{k,\sf{fh}}p_u^{k} \leq   C   +  \sum \limits_{k \in \mathcal{K}} \sum \limits_{u \in \mathcal{U}} R_u^{k,\sf{fh}} f_{\mathsf{apx}}^{(k,u)\ast}(\hat{z}_u^k)$, we have $\sum \limits_{k \in \mathcal{K}} \sum \limits_{u \in \mathcal{U}} R_u^{k,\sf{fh}} f_{\mathsf{apx}}^{(k,u)}(p_u^k) \leq C$ for any value of $\left\lbrace \hat{z}_u^{k,(l+1)} \right\rbrace $.
Therefore, we have $\mathcal{F}_l \subseteq \mathcal{F}$ for any iteration $l$, which means the optimal solution of problem (\ref{Ch5_obj_1sdp})-(\ref{Ch5_const5}) in any iteration $l$ 
satisfies all constraints of problem (\ref{Ch5_obj_4})-(\ref{Ch5_const4}). 
Hence, Algorithm~\ref{Ch5_alg:gms4} returns the solution that satisfies all constraints of problem (\ref{Ch5_obj_4})-(\ref{Ch5_const4}).

\subsection{Proof of Proposition~\ref{Ch5_tlr_f}}
\label{prf_Ch5_tlr_f}

Let $\mathbf{p}^{\ast}$ and $\left\lbrace \mathbf{v}_u^{k \ast}\right\rbrace $ be the power vector and precoding vectors achieved after running our proposed algorithms. Because these results are obtained by solving a weighted SPMP with SINR constraint for each user, we have 
\beq \label{Ch5_prof41}
\bar{\gamma}_u = \dfrac{ \left|  \sum \limits_{j \in \mathcal{K}} \mathbf{h}_u^{jH} \mathbf{v}_u^{j \ast}\right| ^2 }
{\sum \limits_{i =1, \neq u}^{M} \left| \sum \limits_{l \in \mathcal{K}}\mathbf{h}_u^{lH} \mathbf{v}_i^{l \ast}\right|^2 + \sigma^2 }.
\eeq 
Applying the triangle inequality yields
\beq \label{Ch5_prof42}
\dfrac{ \left|  \sum \limits_{j \in \mathcal{K}} \mathbf{h}_u^{jH} \mathbf{v}_u^{j \ast}\right| ^2 }
{\sum \limits_{i =1, \neq u}^{M} \left| \sum \limits_{l \in \mathcal{K}}\mathbf{h}_u^{lH} \mathbf{v}_i^{l \ast}\right|^2 + \sigma^2 } \leq \dfrac{ \left|  \sum \limits_{j \in \mathcal{K}/k} \mathbf{h}_u^{jH} \mathbf{v}_u^{j \ast}\right| ^2 + \left| \mathbf{h}_u^{kH} \mathbf{v}_u^{k \ast}\right| ^2}
{\sum \limits_{i =1, \neq u}^{M} \left| \sum \limits_{l \in \mathcal{K}}\mathbf{h}_u^{lH} \mathbf{v}_i^{l \ast}\right|^2 + \sigma^2 }.
\eeq
Now, if we suppose that the power of user $u$ and RRH $k$ is forced to be zero (i.e., $p^{k \ast}_u=0$), then its
 updated SINR becomes
\beq \label{Ch5_prof43}
\Gamma_u^{\ast}\vert_{p^{k \ast}_u=0} = \dfrac{ \left|  \sum \limits_{j \in \mathcal{K}/k} \mathbf{h}_u^{jH} \mathbf{v}_u^{j \ast}\right| ^2}
{\sum \limits_{i =1, \neq u}^{M} \left| \sum \limits_{l \in \mathcal{K}}\mathbf{h}_u^{lH} \mathbf{v}_i^{l \ast}\right|^2 + \sigma^2 }.
\eeq
Consider the contribution of the underlying power $p^{k \ast}_u$, we have
\beq \label{Ch5_prof44}
\dfrac{ \left| \mathbf{h}_u^{kH} \mathbf{v}_u^{k \ast}\right| ^2}
{\sum \limits_{i =1, \neq u}^{M} \left| \sum \limits_{l \in \mathcal{K}}\mathbf{h}_u^{lH} \mathbf{v}_i^{l \ast}\right|^2 + \sigma^2 } 
 \leq \dfrac{ \left| \mathbf{h}_u^{k} \right| ^2 \left| \mathbf{v}_u^{k \ast} \right| ^2}{\sigma^2 } = \dfrac{ \left| \mathbf{h}_u^{k} \right| ^2 p^{k \ast}_u}{\sigma^2 }
\eeq
Combining all these results in (\ref{Ch5_prof41})--(\ref{Ch5_prof44}) yields
\beq
\dfrac{\bar{\gamma}_u - \Gamma_u^{\ast}\vert_{p^{k \ast}_u=0}}{\bar{\gamma}_u} \leq  \dfrac{ \left| \mathbf{h}_u^{k} \right| ^2 p^{k \ast}_u}{\bar{\gamma}_u \sigma^2 }.
\eeq

Because we have assumed that $p^{k \ast}_u$ is forced to be zero as in (\ref{Ch5_eq_rnd0}) and $f^{(k,u)}_{\mathsf{apx}}(x)$ is an increasing function, 
 we have $p^{k \ast}_u < f^{(k,u)-1}_{\mathsf{apx}}(1/2)$, where $f^{(k,u)-1}_{\mathsf{apx}}(x)$ is the inverse function of $f^{(k,u)}_{\mathsf{apx}}(x)$.
Therefore, if $f^{(k,u)}_{\mathsf{apx}}(x)$ satisfies (\ref{Ch5_fapx_cdt}), which 
implies $f^{(k,u)-1}_{\mathsf{apx}}(1/2) \leq \epsilon \beta^k_u = \dfrac{\epsilon \bar{\gamma}_u \sigma^2   }{ \left| \mathbf{h}_u^{k} \right| ^2 }$., we have
\beq
\dfrac{\bar{\gamma}_u - \Gamma_u^{\ast}\vert_{p^{k \ast}_u=0}}{\bar{\gamma}_u} < \dfrac{ \left| \mathbf{h}_u^{k} \right| ^2}{\bar{\gamma}_u \sigma^2 } \dfrac{\epsilon \bar{\gamma}_u \sigma^2   }{ \left| \mathbf{h}_u^{k} \right| ^2 } = \epsilon.
\eeq
This completes the proof of Proposition~\ref{Ch5_tlr_f}.
\chapter{Resource Allocation for Wireless Virtualization of OFDMA-Based Cloud Radio Access Networks} 
\renewcommand{\rightmark}{Chapter 8.  RA for Wireless Virtualization of OFDMA-Based Cloud Radio Access Networks}
\label{Ch8}
The content of this chapter was submitted in IEEE Transactions on Vehicular Technology in the following paper:

Vu N. Ha and Long B. Le, ``Resource allocation for wireless virtualization of OFDMA-based Cloud Radio Access Networks,'' {\em IEEE Trans. Veh. Tech.,} 2016 (Under review).

\section{Abstract}
We consider the resource allocation for the virtualized OFDMA uplink Cloud Radio Access Network (C-RAN) where 
multiple wireless operators (OPs) share the C-RAN infrastructure and resources owned by an infrastructure provider (InP). 
\nomenclature{InP}{Infrastructure Provider}
The resource allocation is designed through studying tightly coupled problems at two different levels. 
The upper-level problem aims at slicing the fronthaul capacity and cloud computing
resources for all OPs to maximize the weighted profits of OPs and InP considering practical constraints on
the fronthaul capacity and cloud computation resources. Moreover, the lower-level problems maximize individual OPs'
 sum rates by optimizing users' transmission rates and quantization bit allocation for the compressed I/Q baseband
signals. We develop a two-stage algorithmic framework to address this two-level resource allocation design. In the first stage, we transform
both upper-level and lower-level problems into corresponding problems
by relaxing underlying discrete variables to the continuous ones. We show that these
relaxed problems are convex and we develop fast algorithms to attain their optimal solutions. In the second stage,
we propose two methods to round the optimal solution of the relaxed problems and
achieve a final feasible solution for the original problem.
Numerical studies confirm that the proposed algorithms outperform a greedy resource allocation
algorithm and their achieved sum rates are very close to sum rate upper-bound obtained by solving relaxed problems.
Moreover, we study the impacts of different parameters on the system sum rate, various performance tradeoffs,
and illustrate insights on a potential system operating point and resource provisioning issues.

\section{Introduction}

Next-generation wireless cellular systems are expected to provide significantly
higher capacity in a cost-efficient manner to support the tremendous growth of wireless traffic and services  \cite{cisco14, mob_rep_13}.
Some recent studies have indicated that the traditional model of single ownership of network architecture can be inefficient because 
the average load demand is usually much lower than the designed peak demand \cite{cisco14,mob_rep_13,chinamobile2011, Costa-Perez_CM13, Liang_CST15}.
Therefore, advanced access techniques for C-RAN and wireless virtualization to support multiple OPs (also called
\textit{``wireless network virtualization (WNV)''}) have attracted a lot of attention from both industry and academia recently 
\cite{cisco14,mob_rep_13,chinamobile2011,Checko15,Wubben14,Suryaprakash15,Costa-Perez_CM13,Liang_CST15,Wang_JSAC}.
\nomenclature{WNV}{Wireless Network Virtualization}

By realizing various communications and processing functions in the cloud, C-RAN enables more efficient utilization of 
network resources, which results in better network throughput and reduced network deployment and operation costs.
With WNV, multiple OPs can efficiently share various network resources such as radio spectrum, computation resources, backhaul/fronthaul capacity; 
hence, the capital expenditures (CAPEX) and operating expenses (OPEX) can be reduced significantly \cite{Liang_CST15}, \cite{Costa-Perez_CM13}. 
\nomenclature{CAPEX}{Capital Expenditures}
\nomenclature{OPEX}{Operating Expenditures}
To attain the potential benefits of the C-RAN and WNV technologies, one has to address many technical challenges \cite{Checko15,Liang_CST15}.
Resource allocation, which determines the allocation of centralized computation resource \cite{Wubben14,Suryaprakash15}, fronthaul capacity \cite{Peng14}, 
radio spectrum and power allocation in C-RAN \cite{Checko15}, and the slicing/allocation of infrastructure resources for 
different OPs to optimize the desired design objectives in WNV are among the major research challenges \cite{Liang_CST15}.

\subsection{Related Works}

Recent literature on C-RAN and WNV has tackled some of these technical problems which are described in the following. 
In particular, authors in \cite{Luoto_TWC15} consider the spectrum sharing problem in a heterogeneous wireless 
network where small-cell base stations are optimally matched with OPs. 
In \cite{Kokku_ACM10}, the virtual resource slicing problem, which aims at maximizing the system utility as a function of
achieved cumulative rates and assigned resource slots while meeting the requirement of each slice is studied.
The slicing problem for different OPs is also considered in \cite{Kamel_VTC14} where a fairness-based dynamic resource allocation scheme 
is proposed. Recently, a novel energy-efficient resource allocation strategy is proposed for C-RAN virtualization with optical fronthaul \cite{Wang_JSAC}
where the matching problem between cells, users, baseband units (BBUs), and a number of wavelengths in optical links is investigated. 
Two heuristic methods, namely static and dynamic ones, are presented to solve this problem.
However, these works do not study potential strategies to share computation resources among OPs and efficiently utilize
the fronthaul transport network.

Resource allocation for efficient utilization of the C-RAN fronthaul capacity has been studied 
in \cite{letaief14,luo_arxiv,VuHa_TVT16,shamai13b,rui_zhang_Tcom15,XRao15}.
The works \cite{VuHa_TVT16,letaief14,luo_arxiv} address the precoding problem for remote radio heads (RRHs) to minimize the total 
power consumption where \cite{VuHa_TVT16,letaief14} considers the downlink while \cite{luo_arxiv} addresses both downlink and uplink communications. 
In \cite{letaief14,luo_arxiv}, the authors optimize the utilization of fronthaul links accounting for the power consumption of fronthaul links 
while we consider the downlink joint transmission design for RRHs to minimize the total power consumption under the fronthaul capacity constraint 
in \cite{VuHa_TVT16}. Moreover, the data compression issue for fronthaul capacity reduction
has been addressed in \cite{shamai13b,rui_zhang_Tcom15} where
\cite{shamai13b} focuses on minimizing the amount of data transmitted over the fronthaul transport network while
 \cite{rui_zhang_Tcom15} jointly designs signal quantization and power control to maximize the system sum rate.
The authors in \cite{XRao15} study the joint fronthaul signal compression and signal recovery 
in the uplink C-RAN; however, this work does not consider transmission design aspects. 

A few existing works have considered the cloud computation complexity for processing users' data, which is a 
major challenge in large-scale C-RAN deployment. 
The works \cite{Werthmann13,sabella14,Rost_TWC15,Rost_GBC15} study the optimization of C-RAN computational resources.
Specifically, \cite{Werthmann13} models the computation complexity in downlink communication considering computational requirement
 of electrical circuits for processing the base-band signals where the computation complexity is a non-linear function of different parameters including the 
number of antennas, modulation bits corresponding to FFT blocks, coding rate, and the number of data streams.

The work \cite{sabella14} then applies this model to quantify the C-RAN energy-efficiency benefits.
The authors in \cite{Rost_TWC15} propose a different computation complexity model for the uplink C-RAN system, which accounts mainly for 
the decoding process of turbo-encoded uplink data streams of all users.
This work is motivated by the fact that the power utilized in the decoding process is much higher than that in the encoding process.
Based on this model, the authors address the rate allocation for uplink transmissions to maximize the system sum rate considering
the cloud computational capacity constraint \cite{Rost_GBC15}. This work, however, assumes unlimited fronthaul capacity and does not 
address the resource allocation.

\subsection{Research Contributions}

To our best knowledge, uplink C-RAN design considering constraints on limited fronthaul capacity and cloud computation resources has not
been studied except our previous work  \cite{VuHa_WCNC14}, which, however, relies on an over-simplified computational complexity model
and does not consider the C-RAN virtualization issue.
This paper aims to fill this gap in the existing C-RAN literature where the virtualized OFDMA-based uplink C-RAN design
is addressed. In particular, we make the following contributions.

\begin{itemize}

\item
We consider the virtualized resource allocation design for the uplink OFDMA-based C-RAN where an InP
leases its resources to different OPs to support their mobile users. This design boils down to 
solving two-level coupled problems where the upper-level problem aims to determine the resource slicing solution for the computation resource and fronthaul capacity
to maximize the weighted profits of the InP and OPs while the lower-level problems model the resource allocation of individual 
OPs for their users. Specifically, each lower-level problem must be solved by the corresponding OP to maximize 
the sum rate of its users by determining the optimal rate and quantization bit allocations for the resource slicing solution 
given by the upper-level problem. 

\item
We develop a two-stage solution framework to solve the tightly coupled two-level problems. 
In stage one, we study the relaxed problems of the corresponding upper-level and lower-level problems to deal with
 the discrete rate and quantization bit allocation variables. We show that the relaxed problems in 
both levels are convex and we describe how to solve these problems optimally. 
Specifically, by employing the dual-based approach, we derive the optimal rate and quantization bit allocation solution 
for a given dual point, which enables us to develop a fast algorithm to solve the relaxed lower-level (RLL) problem optimally.
Importantly, the optimal solution of the RLL problem is employed to tackle the relaxed upper-level (RUL) problem.
In the second stage, we propose two rounding methods which are applied to the optimal solutions of the relaxed problems to
attain a feasible solution for the original problem.  


\item 
For performance evaluation of the developed algorithms, we also describe a greedy resource allocation algorithm. 
Extensive numerical studies are conducted where we examine the convergence and efficiency of the proposed algorithms 
as well as the impacts of different system parameters on the system sum rate. In addition, we also study different 
tradeoffs, which characterize the relations of available resources, resource provisioning, and the corresponding 
 benefits achieved by the InP and OPs.
\end{itemize}

The remaining of this paper is organized as follows. We describe the system model and formulations of two-level problems
 in Section~\ref{Ch8_secII}. In Section~\ref{Ch8_secIII}, we characterize the convexity of these relaxed problems.
Then, we develop the optimal algorithms to solve these relaxed problems and present the proposed rounding methods in Section~\ref{Ch8_secIV}. 
The greedy resource allocation algorithm is presented in Section~\ref{Ch8_secV}. Numerical results are presented in Section~\ref{Ch8_secVI} followed by conclusion in Section~\ref{Ch8_secVII}.

\section{System Model}
\label{Ch8_secII}
We consider the C-RAN which consists of BBUs in the cloud, $K$ RRHs, and the fronthaul transport network 
connecting RRHs to the cloud. The C-RAN is owned by an InP which leases network resources to $O$ OPs to serve their own users.
The uplink OFDMA transmission based on the Third Generation Partnership Project (3GPP) LTE standard with full frequency reuse (frequency reuse factor of one) is assumed. 
\nomenclature{3GPP}{Third Generation Partnership Project}
Specifically,
 each cell utilizes the whole spectrum comprising $S$ physical resource blocks (PRBs) where
each PRB corresponds to $12$ sub-carriers (180kHz) in the frequency domain and a slot duration of $t_s=0.5~ms$, which
is equivalent to $7$ OFDM symbols \cite{Suryaprakash15}. We denote the set of all PRBs as $\mathcal{S}$.

Based on certain long-term contracts between the InP and OPs, we assume that the PRB allocations to individual OPs in each cell have been predetermined. 
Moreover, let $\mathcal{S}_k^o$ denote the set of PRBs assigned to OP $o$ in cell $k$. 
We assume that $\cup_{o \in \mathcal{O}} \mathcal{S}_k^o = \mathcal{S}$ and $\mathcal{S}_k^o \cap \mathcal{S}_k^m = \varnothing$ for any 
two different OPs $o$ and $m$. Therefore, there is only inter-cell interference on a specific PRB if concurrent transmissions 
from different cells occur on the underlying PRB. We assume that  each RRH upon receiving users' baseband signals of different OPs
quantizes these signals and forwards them to the cloud for decoding. 
In the following, we refer to RRH $k$ and its corresponding coverage area as cell $k$.
We further assume that both RRHs and users are equipped with single antenna.

Let $x_{k}^{(s)} \in \mathbb{C}$ represent the baseband signal transmitted on PRB $s$ in cell $k$ and we assume that
the signal $x_{k}^{(s)}$ has the unit power. Then, the signal received at RRH $k$ on PRB $s$ can be written as
\beq
y^{(s)}_{k} = \sum_{j \in \mathcal{K}} h_{k,j}^{(s)} \sqrt{p_{j}^{(s)}} x_{j}^{(s)} + \eta_k^{(s)},
\eeq
where $\mathcal{K}$ denotes the set of cells, $p_{j}^{(s)}$ represents the transmission power corresponding to $x_{j}^{(s)}$, 
$h_{k,j}^{(s)}$ is the channel gain from the user assigned PRB $s$ in cell $j$ to RRH $k$, and $\eta_k^{(s)} \sim \mathcal{CN}\left( 0,\sigma_k^{(s)2}\right)$ 
denotes the Gaussian thermal noise.

\subsection{Signal Quantization and Processing}

We assume that all baseband signals $y^{(s)}_{k}$ must be quantized and then forwarded to
the cloud for further processing and decoding, which are performed by the BBU.
Moreover, RRH $k$ uses $b_{k}^{(s)}$ bits to quantize the real and imaginary parts of the received symbol $y^{(s)}_{k}$.
Then, according to the results in \cite{Bucklew79}, the quantization noise power can be approximated as
\beq
q^{(s)}_{k}(b^{(s)}_{k}) \simeq  {2 Q(y^{(s)}_{k})}/{2^{b_{k}^{(s)}}},
\eeq
where $ Q(y) = {\left( \int_{-\infty}^{\infty} f(y)^{1/3} dy \right)^3 }/{12}$, and $f(y^{(s)}_{k})$ is the probability density function of both the real and imaginary parts of $y^{(s)}_{k,u}$. Assuming a Gaussian distribution of the signal to be quantized, we have \cite{Baracca14}
\beq \label{eq:q_kmn}
q^{(s)}_{k}(b^{(s)}_{k}) \simeq   \dfrac{\sqrt{3} \pi}{2^{2 b_{k}^{(s)} +1}}  Y^{(s)}_{k},
\eeq
where $ Y^{(s)}_{k}$ is the power of received signal $y^{(s)}_{k}$, which is equal to 
\beq
Y^{(s)}_{k} = \sum \limits_{j \in \mathcal{K}} \vert h_{k,j}^{(s)} \vert ^2 p_{j}^{(s)} + \sigma_k^{(s)2} = D_k^{(s)} + I_k^{(s)},
\eeq
where $D_k^{(s)} = \vert h_{k,k}^{(s)} \vert ^2 p_{k}^{(s)}$ and $I_k^{(s)} = \sum_{j \in \mathcal{K}/k} \vert h_{k,j}^{(s)} \vert ^2 p_{j}^{(s)} + \sigma_k^{(s)2}$.
Then, the total number of quantization bits required by all users of OP $o$ which are forwarded from RRH $k$ to the cloud in
 one second (measured in bit-per-second (bps)) can be expressed as
\beq
B_k^o = 2N_{\sf{RE}} \sum_{s \in \mathcal{S}_k^o} b_{k}^{(s)},
\eeq 
where $N_{\sf{RE}}$ is the number of resource elements (REs) in one second and one subchannel of a PRB (corresponding to 12 subcarriers).  
Then, $N_{\sf{RE}}$ can be calculated as $N_{\sf{RE}}= 12 \times 7 / t_s$ \cite{Suryaprakash15} since each LTE slot
spans an interval $t_s$ = 0.5ms with 7 symbols. 
Let $\mb{b}$ denote the vector whose elements represent the number of quantization bits allocated to all users in the network.
For a given $\mb{b}$, the quantized version of $y^{(s)}_{k}$ can be written as 
\beq
\tilde{y}^{(s)}_{k} = y^{(s)}_{k} + e^{(s)}_{k},
\eeq
where $e^{(s)}_{k}$ represents the quantization error for $y^{(s)}_{k}$, which has zero mean and variance $q^{(s)}_{k}(b^{(s)}_{k})$. The SINR of 
the signal corresponding to PRB $s$ in cell $k$ can be expressed as
\beq \label{eq:SINR}
\gamma^{(s)}_{k}(b^{(s)}_{k}) = \dfrac{D_k^{(s)}}{I_k^{(s)}+ q^{(s)}_{k}(b^{(s)}_{k})} \simeq \dfrac{D_k^{(s)}}{I_k^{(s)} + \frac{\sqrt{3} 
\pi Y^{(s)}_{k}}{2^{2 b_{k}^{(s)} +1}} }.
\eeq

\subsection{Decoding Computational Complexity}

We assume that a data rate $r^{(s)}_{k}$ (in \textit{``bits per channel use (bits pcu)''}) on 
PRB $s$ is chosen from a given discrete set with $M_R$ different rates (i.e., different predetermined modulation and coding schemes) 
$\mathcal{R} = \lbrace R_1, R_2, ..., R_{M_R} \rbrace$. 
Moreover, the chosen rate is set smaller than the link capacity to ensure satisfactory communication reliability and manageable decoding complexity, i.e., 
$r^{(s)}_{k} \leq \log_2(1 + \gamma^{(s)}_{k}(b^{(s)}_{k}))$. We assume that the capacity-achieving turbo code is
employed, then the computation effort required to successfully decode information bits depends on the number of turbo-iterations. 
According to the results \cite{Rost_TWC15}, the required computation effort expressed in \textit{``bit-iterations (bi)''} for decoding the 
transmitted  signal on PRB $s$ in cell $k$ is a function of $\gamma^{(s)}_{k}(b^{(s)}_{k})$ and $r^{(s)}_{k}$ can be expressed as 
\beq \label{eq:cpt}
C^{(s)}_{k} = \chi^{(s)}_{k}(r^{(s)}_{k},b^{(s)}_{k})  =  A r^{(s)}_{k} \!\! \left[B - 2 \log_2 \! \left( \log_2 \! \left( \! 1 +\gamma^{(s)}_{k}(b^{(s)}_{k}) \! \right) \! - r^{(s)}_{k} \! \right) \! \right], 
\eeq
where $A = 1/\log_2(\zeta - 1)$, $B = \log_2\left( {(\zeta -2)}/{(\zeta T(\epsilon_{\sf{ch}}))}\right)$, $\zeta$ is a parameter 
related to the connectivity of the decoder, $T(\epsilon_{\sf{ch}})  =  -{T'}/{\log_{10}(\epsilon_{\sf{ch}})}$, $T'$ is another model parameter,
 and $\epsilon_{\sf{ch}}$ is the target computational outage probability. Note that the computational outage probability occurs when there is not
 sufficient computational resources to correctly decode the received signal. 
The set $\left\lbrace T', \zeta \right\rbrace $ can be selected by calibrating (\ref{eq:cpt}) with an actual turbo-decoder 
implementation or a message-passing decoder.
Let $\mb{r}^o$ and $\mb{b}^o$ denote the vectors whose elements represent
the data rates and numbers of quantization bits selected for all users from OP $o$, respectively.
Then, the total computation effort required by the cloud to successfully decode the signals for all users from OP $o$ (calculated
in \textit{``bi per second (bips)''}) can be expressed as \cite{Suryaprakash15}
\beq
C_{\sf{tt}}^o(\mb{r}^o,\mb{b}^o) = N_{\sf{RE}} \sum_{k \in \mathcal{K}} \sum_{s \in \mathcal{S}^o_k} C^{(s)}_{k}.
\eeq

\subsection{Bi-level Resource Allocation  Formulation}

We now present the bi-level problem formulation that models the interactions among
 the C-RAN InP, the OPs, and mobile users. Specifically, the OPs 
must pay the InP to rent network resources, which are then utilized to provide services to the mobile users.
Moreover, such problem formulation must account for limited cloud computational effort and fronthaul capacity. 

In the considered bi-level problem formulation, the performance measure of interest is the profits achieved by the InP and OPs
which are modeled as follows. Let $C^o$ and $B^o_k$ denote the computational effort and fronthaul capacity corresponding to RRH $k$ that OP $o$ requires 
from the C-RAN InP. Moreover, the prices corresponding to one unit of computational effort and fronthaul capacity are denoted as 
$\psi^o$ (\textcent$/bips$) and $\beta^o_k$ (\textcent$/bps$), respectively. We assume that
the profit obtained by the InP  is equal to the total payment from all OPs (i.e., we omit the operation cost since it simply adds
a constant term to underlying optimization objective), which can be calculated as
\beq
G^{\sf{InP}} = \sum \limits_{o \in \mathcal{O}} G^{\sf{InP}}_o = \sum \limits_{o \in \mathcal{O}} \left( \psi^o C^o + \sum \limits_{k \in \mathcal{K}} \beta^o_k B^o_k\right),
\eeq
where $G^{\sf{InP}}_o$ is the payment from OP $o$ to the InP for using the amount of cloud computational resource $C^o$ and the amount of fronthaul capacity
 $B^o_k$ for RRH $k$, i.e., $G^{\sf{InP}}_o = \psi^o C^o + \sum \limits_{k \in \mathcal{K}} \beta^o_k B^o_k$. For convenience, let us define $\mb{B}^o = [B^o_1, ..., B^o_K]$.

Let us define $R_o(C^o,\mb{B}^o) = \sum_{k \in \mathcal{K}} \sum_{s \in \mathcal{S}^o_k} r^{(s)}_{k}$ as the total rate of 
all users from OP $o$, which is achieved by using the amount of
 computational resource $C^o$ and the amount of fronthaul  capacity $\mb{B}^o$.
Then, the profit achieved by OP $o$, which is equal to revenue minus cost, can be expressed as
\beq
G^{\sf{OP}}_o= \rho^o N_{\sf{RE}} R_o(C^o,\mb{B}^o) - G^{\sf{InP}}_o
\eeq   
where  $\rho^o$ (\textcent$/bps$) is the price per rate unit that OP $o$ obtains by providing services to its users.
The upper-level problem aims to maximize the weighted sum profit of the InP and OPs, which can be mathematically stated as
\begin{subequations} \label{obj_U}
\begin{align}
 \max \limits_{\lbrace C^o, \mb{B}^o \rbrace} & \;\;\;\;\; \upsilon^{\sf{InP}} G^{\sf{InP}} + \sum \limits_{o \in \mathcal{O}} \upsilon^o G^{\sf{OP}}_o, \label{obj_Ua} \\
\text{s. t. } & \;\;\; \sum_{o \in \mathcal{O}} C^{o} \leq \bar{C}_{\sf{cloud}}, \label{obj_Ub} \\
{} & \;\;\; \sum_{o \in \mathcal{O}} B^{o}_k \leq \bar{B}_k, \forall k \in \mathcal{K}, \label{obj_Uc} 
\end{align}
\end{subequations}
where $\upsilon^{\sf{InP}}$ and $\upsilon^o$ represent the weights corresponding to the InP and OP $o$, respectively,
which can be used to control the desirable profit sharing between the InP and OPs. Moreover, $\bar{C}_{\sf{cloud}}$
describes the available computational resource in the cloud and $\bar{B}_k$ denotes the capacity of the fronthaul link
connecting RRH $k$ with the cloud. 


In the lower level, each OP is interested in maximizing the sum rate through optimizing transmission rates 
and the number of quantization bits allocated to individual users. The SINR $\gamma^{(s)}_{k}(b^{(s)}_{k})$
is a complicated function of quantization bits $b^{(s)}_{k}$. In the following Proposition, we present
a tight lower bound of the SINR for a good signal quantization design, which is incorporated in the lower-level problems. Moreover,
 the SINR $\gamma^{(s)}_{k}(b^{(s)}_{k})$ can be upper-bounded by $\bar{\gamma}^{(s)}_{k}={D_k^{(s)}}/{I_k^{(s)}}$,
 which can be obtained by setting the quantization noise $q^{(s)}_{k}(b^{(s)}_{k}) = 0$ in the SINR expression.
\begin{proposition} \label{R8_prt1}
If the number of quantization bits $b^{(s)}_{k}$ satisfies $q^{(s)}_{k}(b^{(s)}_{k}) \leq \sqrt{Y^{(s)}_{k} I^{(s)}_{k}}$, then
we have the followig lower-bound for the SINR $\gamma^{(s)}_{k}(b^{(s)}_{k})$ 
\beq  \label{eq:qrq1}
\gamma^{(s)}_{k}(b^{(s)}_{k}) \geq \underline{\gamma}^{(s)}_{k} = \sqrt{\bar{\gamma}^{(s)}_{k} + 1} - 1.
\eeq
In addition, we have the following relations between the SINR upper and lower bounds
\beqn
\underline{\gamma}^{(s)}_{k} {\simeq} \sqrt{\bar{\gamma}^{(s)}_{k}} \text{ when } \bar{\gamma}^{(s)}_{k} \gg 1, \\
\underline{\gamma}^{(s)}_{k} {\simeq} \bar{\gamma}^{(s)}_{k}/2 \text{ when } \bar{\gamma}^{(s)}_{k} \ll 1.
\eeqn
\end{proposition}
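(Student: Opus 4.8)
The plan is to establish the lower bound \eqref{eq:qrq1} directly from the SINR expression in \eqref{eq:SINR} by exploiting the quantization quality assumption $q^{(s)}_{k}(b^{(s)}_{k}) \leq \sqrt{Y^{(s)}_{k} I^{(s)}_{k}}$. First I would substitute $Y^{(s)}_{k} = D_k^{(s)} + I_k^{(s)}$ into the bound on the quantization noise, so that $q^{(s)}_{k} \leq \sqrt{(D_k^{(s)}+I_k^{(s)})I_k^{(s)}}$. Plugging this worst-case quantization noise into the denominator of $\gamma^{(s)}_{k}(b^{(s)}_{k}) = D_k^{(s)}/(I_k^{(s)} + q^{(s)}_{k})$ yields
\begin{equation}
\gamma^{(s)}_{k}(b^{(s)}_{k}) \geq \frac{D_k^{(s)}}{I_k^{(s)} + \sqrt{(D_k^{(s)}+I_k^{(s)})I_k^{(s)}}}.
\end{equation}
The next step is to simplify the right-hand side. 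Dividing numerator and denominator by $I_k^{(s)}$ and writing $\bar{\gamma}^{(s)}_{k} = D_k^{(s)}/I_k^{(s)}$, the bound becomes $\bar{\gamma}^{(s)}_{k}/(1 + \sqrt{\bar{\gamma}^{(s)}_{k}+1})$. I would then rationalize this expression by multiplying numerator and denominator by $\sqrt{\bar{\gamma}^{(s)}_{k}+1} - 1$; since $(\sqrt{\bar{\gamma}^{(s)}_{k}+1}+1)(\sqrt{\bar{\gamma}^{(s)}_{k}+1}-1) = \bar{\gamma}^{(s)}_{k}$, the $\bar{\gamma}^{(s)}_{k}$ factors cancel and we are left exactly with $\sqrt{\bar{\gamma}^{(s)}_{k}+1}-1 = \underline{\gamma}^{(s)}_{k}$, which proves the main inequality. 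I should also check that this bound is consistent, i.e.\ $\underline{\gamma}^{(s)}_{k} \leq \bar{\gamma}^{(s)}_{k}$ and $\underline{\gamma}^{(s)}_{k} \geq 0$, which follows from elementary properties of the square root function.

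For the two asymptotic relations I would perform Taylor/binomial expansions of $\sqrt{1+x}$. When $\bar{\gamma}^{(s)}_{k} \gg 1$, I write $\sqrt{\bar{\gamma}^{(s)}_{k}+1} = \sqrt{\bar{\gamma}^{(s)}_{k}}\sqrt{1 + 1/\bar{\gamma}^{(s)}_{k}} \approx \sqrt{\bar{\gamma}^{(s)}_{k}}(1 + 1/(2\bar{\gamma}^{(s)}_{k}))$, so $\underline{\gamma}^{(s)}_{k} = \sqrt{\bar{\gamma}^{(s)}_{k}+1} - 1 \approx \sqrt{\bar{\gamma}^{(s)}_{k}} + 1/(2\sqrt{\bar{\gamma}^{(s)}_{k}}) - 1 \simeq \sqrt{\bar{\gamma}^{(s)}_{k}}$, since the lower-order terms are negligible compared to $\sqrt{\bar{\gamma}^{(s)}_{k}}$. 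When $\bar{\gamma}^{(s)}_{k} \ll 1$, I use $\sqrt{1+\bar{\gamma}^{(s)}_{k}} \approx 1 + \bar{\gamma}^{(s)}_{k}/2 - (\bar{\gamma}^{(s)}_{k})^2/8 + \cdots$, hence $\underline{\gamma}^{(s)}_{k} \approx \bar{\gamma}^{(s)}_{k}/2 + O((\bar{\gamma}^{(s)}_{k})^2) \simeq \bar{\gamma}^{(s)}_{k}/2$.

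I do not expect any substantial obstacle here; the result is essentially an algebraic manipulation combined with the quantization-quality hypothesis, and the hardest part is simply handling the rationalization cleanly and stating the asymptotic orders precisely. One subtle point worth a sentence of care is verifying that the hypothesis $q^{(s)}_{k}(b^{(s)}_{k}) \leq \sqrt{Y^{(s)}_{k}I^{(s)}_{k}}$ is actually achievable for a feasible (finite, and in the discrete case integer) choice of $b^{(s)}_{k}$, which ties back to the bit-requirement constraint $b^{(s)}_{k} \geq \lceil \underline{b}^{(s)}_{k} \rceil$ already derived in the system model; I would note that the monotonic decrease of $q^{(s)}_{k}$ in $b^{(s)}_{k}$ from \eqref{eq:q_kmn} guarantees this, so the bound \eqref{eq:qrq1} indeed holds for all quantization bit allocations satisfying the design requirement.
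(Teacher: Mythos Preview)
Your proposal is correct and follows essentially the same route as the paper's proof: bound the denominator using the hypothesis $q^{(s)}_{k} \leq \sqrt{Y^{(s)}_{k}I^{(s)}_{k}}$, divide through by $I_k^{(s)}$ to express everything in terms of $\bar{\gamma}^{(s)}_{k}$, and rationalize $\bar{\gamma}^{(s)}_{k}/(1+\sqrt{\bar{\gamma}^{(s)}_{k}+1})$ to obtain $\sqrt{\bar{\gamma}^{(s)}_{k}+1}-1$; the asymptotics are handled identically via first-order expansions of $\sqrt{1+x}$. Your extra remarks on achievability of the bit constraint and the consistency check $0 \leq \underline{\gamma}^{(s)}_{k} \leq \bar{\gamma}^{(s)}_{k}$ are not in the paper but are harmless additions.
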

\begin{proof} 
The proof is given in Appendix \ref{prf_R8_prt1}. 
\end{proof}

The requirement $q^{(s)}_{k}(b^{(s)}_{k}) \leq \sqrt{Y^{(s)}_{k} I^{(s)}_{k}}$ in Proposition~\ref{R8_prt1} is indeed
equivalent to $b^{(s)}_{k} \geq \lceil \underline{b}^{(s)}_{k} \rceil$ where 
\beq \label{blbound}
\underline{b}^{(s)}_{k} = \dfrac{1}{2}\left( \log_2 \left( \sqrt{3}\pi \sqrt{\dfrac{ Y^{(s)}_{k}}{I^{(s)}_{k}}} \right) -1\right),
\eeq
and $\lceil * \rceil$ stands for the ceiling operation.

The lower-level problem for OP $o$ $(\mathcal{P}^o)$ can be formally stated as 
\begin{subequations} \label{obj_1}
\begin{align}
 \max \limits_{\mb{r}^o, \mb{b}^o} & \;\;\;\;\;  \sum_{k \in \mathcal{K}} \sum_{s \in \mathcal{S}^o_k} r^{(s)}_{k}  \label{obj_1a} \\
\text{s. t. } & \; \sum_{k \in \mathcal{K}} \sum_{s \in \mathcal{S}^o_k} C^{(s)}_{k} \leq C^o/N_{\sf{RE}}, \label{obj_1b} \\
 {} & \;  r^{(s)}_{k} \leq \log_2 \left(1 + \gamma^{(s)}_{k}(b^{(s)}_{k}) \right), \forall k \in \mathcal{K}, \forall s \in \mathcal{S}^o_k, \label{obj_1c}\\
 {} & \; \sum_{s \in \mathcal{S}^o_k} b_{k}^{(s)} \leq B^o_k/(2N_{\sf{RE}}), \forall k \in \mathcal{K}, \label{obj_1d} \\
  {} & \; b^{(s)}_{k} \geq \lceil \underline{b}^{(s)}_{k} \rceil, \forall k \in \mathcal{K}, \forall s \in \mathcal{S}^o_k, \label{obj_1e} \\
 {} & \; \text{$b^{(s)}_{k}$ is integer}, \forall k \in \mathcal{K}, \forall s \in \mathcal{S}^o_k, \label{obj_1f} \\
 {} & \; r^{(s)}_{k} \in \mathcal{M}_R, \forall k \in \mathcal{K}, \forall s \in \mathcal{S}^o_k. \label{obj_1g}
\end{align}
\end{subequations}
Constraint (\ref{obj_1b}) requires that the total computation effort required by all users of OP $o$ should not exceed the sliced computational resource 
for this OP, $C^o$, which is determined from the upper-level problem.
The second constraint (\ref{obj_1c}) is the standard capacity constraint for PRB $s$ while
constraint (\ref{obj_1d}) ensures that the amount of fronthaul capacity allocated for RRH $k$ of OP $o$ is upper bounded by $B^o_k$,
which is determined from the upper-level problem. Constraints (\ref{obj_1e}) capture the good quantization regime with
the lower bound of quantization bits $\underline{b}^{(s)}_{k}$ given in (\ref{blbound}).

This two-level resource allocation design is difficult to tackle because we have to optimize 
the discrete variables related to the rate and quantization bit allocation  $\mb{r}^o, \mb{b}^o$ in the lower-level problem
as well as the continuous variables  $C^o, \mb{B}^o$ in the upper-level problem. Moreover, 
computational complexity $C^{(s)}_{k} = \lbrace\chi^{(s)}_{k}(r^{(s)}_{k},b^{(s)}_{k})\rbrace$ in the lower-level problems
is a complex function of the optimization variables $r^{(s)}_{k},b^{(s)}_{k}$. Finally, the
lower-level and upper-level problems are tightly coupled since the variables $C^o, \mb{B}^o$ in the later are
the parameters in the former.

\section{Problem Transformation and Convexity Characterization}
\label{Ch8_secIII}
We propose a two-stage solution framework to solve the bi-level resource allocation formulation where
 we solve the relaxed problems in stage one and develop rounding methods to find an efficient
and feasible solution for the original design problems in the second stage. 

\subsection{Problem Relaxation}

The lower-level problem with discrete optimization variables $\mb{b}^o$ and  $\mb{r}^o$
plays an important role in our design. Since optimization discrete variables are highly complex,
we adopt the natural relaxation approach to tackle the lower-level problems where the discrete variables are relaxed into the continuous ones. 
Specifically, the constraint (\ref{obj_1f}) is relaxed to
\beq \label{obj_1f2}
R_{\sf{min}} \leq r^{(s)}_{k} \leq R_{\sf{max}}, \forall k \in \mathcal{K}, \forall s \in \mathcal{S},
\eeq
where $R_{\sf{min}}$ and $R_{\sf{min}}$ represent the lowest and highest rates in the rate set $\mathcal{M}_R$, respectively.
With this relaxation, we study the following relaxed lower-level (RLL) problem
\begin{align}
 \max \limits_{\mb{r}^o, \mb{b}^o} \;\;\;  \sum_{k \in \mathcal{K}} \sum_{s \in \mathcal{S}} r^{(s)}_{k} \;\;\; \text{s. t. } \text{ (\ref{obj_1b})-(\ref{obj_1e}) 
and (\ref{obj_1f2})}. \label{obj_2}
\end{align}
Let $\bar{R}_o(C^o,\mb{B}^o)$ denote the optimal total rate of all users from OP $o$ obtained
by solving this RLL problem. 
It is clear that $\bar{R}_o(C^o,\mb{B}^o)$ is the upper bound of $R_o(C^o,\mb{B}^o)$. 
Based on the obtained rates $\bar{R}_o(C^o,\mb{B}^o)$, we consider the following relaxed upper-level (RUL) problem 
\begin{align}
 \max \limits_{\lbrace C^o, \mb{B}^o \rbrace} \;\;  \sum \limits_{o \in \mathcal{O}} \Psi^o(C^o, \mb{B}^o) = \upsilon^{\sf{InP}} G^{\sf{InP}} + \sum \limits_{o \in \mathcal{O}} \upsilon^o\left( \rho^o N_{\sf{RE}} \bar{R}_o(C^o,\mb{B}^o) -  G^{\sf{InP}}_o \right) \;\;
\text{s. t. } \;\;\; \text{(\ref{obj_Ub}), (\ref{obj_Uc})}, \label{obj_U2}
\end{align}
where $\Psi^o(C^o, \mb{B}^o)=(\upsilon^{\sf{InP}}-\upsilon^o) G^{\sf{InP}}_o + \upsilon^o \rho^o N_{\sf{RE}} \bar{R}_o(C^o,\mb{B}^o)$.

\subsection{Convexity Characterization}

\subsubsection{Convexity of RLL Problem}

To solve the RLL problem, we first characterize the convexity of the computational complexity function $C^{(s)}_{k} = \lbrace\chi^{(s)}_{k}(r^{(s)}_{k},b^{(s)}_{k})\rbrace$ in the following theorem.
\begin{theorem} \label{R8_thr1} $\chi^{(s)}_{k,u}(r^{(s)}_{k},b^{(s)}_{k})$ is a jointly convex function with respect to variables $(r^{(s)}_{k},b^{(s)}_{k})$ if 
\beq \label{eq:qrq2}
q^{(s)}_{k}(b^{(s)}_{k}) \leq \sqrt{Y^{(s)}_{k} I^{(s)}_{k}}.
\eeq
\end{theorem}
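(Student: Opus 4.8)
\textbf{Proof plan for Theorem \ref{R8_thr1}.}

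The plan is to show joint convexity of $\chi^{(s)}_{k}(r,b) = A r\left[B - 2\log_2\left(\log_2(1+\gamma^{(s)}_{k}(b)) - r\right)\right]$ in the region where the constraint $q^{(s)}_{k}(b^{(s)}_{k}) \leq \sqrt{Y^{(s)}_{k} I^{(s)}_{k}}$ holds. First, I would simplify the structure: write $t(b) := \log_2(1+\gamma^{(s)}_{k}(b))$ so that $\chi = Ar[B - 2\log_2(t(b) - r)]$. The term $ArB$ is linear, hence convex; the remaining work is to establish convexity of $\phi(r,b) := -2Ar\log_2(t(b)-r)$. A natural first step is to understand the function $t(b)$. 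Plugging in the SINR formula from (\ref{eq:SINR}), $1 + \gamma^{(s)}_{k}(b) = \frac{D_k^{(s)} + I_k^{(s)} + q^{(s)}_{k}(b)}{I_k^{(s)} + q^{(s)}_{k}(b)} = \frac{Y^{(s)}_{k} + q^{(s)}_{k}(b)}{I_k^{(s)} + q^{(s)}_{k}(b)}$, and with $q^{(s)}_{k}(b) = c\, 2^{-2b}$ (where $c = \sqrt{3}\pi Y^{(s)}_{k}/2$ is a positive constant for fixed power allocation), $t(b)$ becomes an explicit function of $2^{-2b}$. I expect to show that $t(b)$ is convex and decreasing in $b$ over the relevant range, and — crucially — I would need a sharper property, namely that $t(b)$ together with the composition with $\log_2(\cdot - r)$ behaves well.

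The key technical step I foresee is to analyze $\phi(r,b) = -2Ar\log_2(t(b)-r)$ via its Hessian, or more cleverly, by recognizing a perspective-function / composition structure. One promising route: fix the change of variable $u = t(b) - r > 0$ (positivity is guaranteed by the capacity constraint (\ref{obj_1c})), and observe that $r\log_2(t(b)-r)$ can be related to the function $g(x,y) = x\log(y/x)$ or $x\log x$, which are known to be convex/concave (perspective of $\log$). The obstacle is that $b$ and $r$ enter $u$ in a coupled way through the nonlinear $t(b)$, so one cannot directly invoke a standard perspective-function result; instead I would likely compute $\nabla^2\chi$ in the $(r,b)$ variables and show it is positive semidefinite, using the chain rule for the $t(b)$ composition. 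This computation will require $t'(b)$, $t''(b)$, and the signs thereof, and this is exactly where the hypothesis (\ref{eq:qrq2}) must be invoked — equivalently $b \geq \underline{b}^{(s)}_{k}$ by (\ref{blbound}) — to pin down the sign of the relevant terms. By Proposition \ref{R8_prt1} this hypothesis guarantees $\gamma^{(s)}_{k}(b) \geq \sqrt{\bar\gamma^{(s)}_{k}+1}-1$, which bounds how small $t(b)$ can get and controls the curvature of $t$.

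I anticipate the main obstacle being the sign analysis of the mixed and pure second-order terms of $\chi$: the product $r \cdot [\text{something involving } t''(b), (t'(b))^2]$ does not have an obvious sign without the constraint. The plan is to reduce the Hessian positive-semidefiniteness to a scalar inequality (e.g. via the $2\times 2$ determinant condition together with a positive diagonal entry) and then show that this scalar inequality is implied by $q^{(s)}_{k}(b) \leq \sqrt{Y^{(s)}_{k}I^{(s)}_{k}}$. A cleaner alternative I would try first, to avoid the messy Hessian, is to exhibit $\chi$ as a sum/composition of manifestly convex pieces: write $-\log_2(t(b)-r) = -\log_2\left((t(b)-r)\right)$ and split off the convexity of $b \mapsto -\log_2(t(b) - r)$ for each fixed $r$ (using that $t$ is concave-or-convex as needed and $-\log$ is convex decreasing), then handle the multiplication by the nonnegative affine factor $r$ via a perspective-type argument — but since multiplying a convex function by a variable generally destroys convexity, this shortcut will probably still force me back to the direct Hessian route for rigor. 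Either way, the crux is the same inequality, and once it is established, joint convexity over the feasible region follows; since the feasible region defined by (\ref{obj_1b})--(\ref{obj_1e}) and (\ref{obj_1f2}) is then seen to be convex, Proposition \ref{R8_prt2} (convexity of RLL) is an immediate corollary.
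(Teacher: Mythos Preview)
Your proposal is correct and lands on essentially the same approach as the paper: compute the $2\times 2$ Hessian of $\chi$ in $(r,b)$, check a positive diagonal entry together with $\det(\nabla^2\chi)\geq 0$, and observe that the determinant condition reduces precisely to $q^{(s)}_{k}(b)^2 \leq Y^{(s)}_{k}I^{(s)}_{k}$, i.e.\ the hypothesis (\ref{eq:qrq2}). One small slip: $t(b)=\log_2(1+\gamma^{(s)}_{k}(b))$ is \emph{increasing} in $b$ (more quantization bits means smaller quantization noise and larger SINR), not decreasing as you speculated; this does not affect your main argument, and your instinct that the perspective/composition shortcuts will not close the proof without reverting to the Hessian is right.
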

\begin{proof} 
The proof is given in Appendix \ref{prf_R8_thr1}. 
\end{proof}
Based on the result in this theorem, we state the convexity of the RLL problem in
 the following proposition.
\begin{proposition} \label{R8_prt2}
The RLL problem (\ref{obj_2}) is convex.
\end{proposition}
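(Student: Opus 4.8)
The plan is to establish convexity of the RLL problem (\ref{obj_2}) by checking that its objective is concave (it is linear, hence trivially so) and that each of the constraints (\ref{obj_1b})--(\ref{obj_1e}) and (\ref{obj_1f2}) defines a convex feasible set. The constraints (\ref{obj_1d}), (\ref{obj_1e}) and (\ref{obj_1f2}) are all linear in the variables $(\mb{r}^o,\mb{b}^o)$, so they pose no difficulty. Thus the real work reduces to two things: showing that the computational budget constraint (\ref{obj_1b}) defines a convex set, and showing that the capacity constraint (\ref{obj_1c}) does as well.

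For the computational budget constraint (\ref{obj_1b}), I would invoke Theorem~\ref{R8_thr1}: each summand $C^{(s)}_{k}=\chi^{(s)}_{k}(r^{(s)}_{k},b^{(s)}_{k})$ is jointly convex in $(r^{(s)}_{k},b^{(s)}_{k})$ provided $q^{(s)}_{k}(b^{(s)}_{k})\le\sqrt{Y^{(s)}_{k}I^{(s)}_{k}}$. The key observation that makes this applicable is that the constraint (\ref{obj_1e}), namely $b^{(s)}_{k}\ge\lceil\underline{b}^{(s)}_{k}\rceil$, is exactly equivalent to the condition (\ref{eq:qrq2}) required by Theorem~\ref{R8_thr1} (this equivalence is already stated in the text just above (\ref{blbound})). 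Hence, on the feasible region cut out by (\ref{obj_1e}), every term $C^{(s)}_{k}$ is convex, so their sum $\sum_{k}\sum_{s}C^{(s)}_{k}$ is convex, and the sublevel set $\{\sum_{k}\sum_{s}C^{(s)}_{k}\le C^o/N_{\sf{RE}}\}$ is convex. I would be careful to note that the different variables $(r^{(s)}_{k},b^{(s)}_{k})$ for different $(k,s)$ are separate coordinates, so joint convexity of each term in its own pair of variables gives convexity of the sum as a function of the full vector $(\mb{r}^o,\mb{b}^o)$.

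For the capacity constraint (\ref{obj_1c}), I would rewrite it as $r^{(s)}_{k}-\log_2(1+\gamma^{(s)}_{k}(b^{(s)}_{k}))\le 0$ and show the left-hand side is convex in $(r^{(s)}_{k},b^{(s)}_{k})$; since $r^{(s)}_{k}$ appears linearly, it suffices to show $b\mapsto -\log_2(1+\gamma^{(s)}_{k}(b))$ is convex, i.e. $b\mapsto\log_2(1+\gamma^{(s)}_{k}(b))$ is concave. Using (\ref{eq:SINR}), $1+\gamma^{(s)}_{k}(b)=1+D_k^{(s)}/(I_k^{(s)}+\tfrac{\sqrt3\pi Y^{(s)}_{k}}{2^{2b+1}})$; writing $t=2^{-2b}$ (so $t$ is a log-convex, decreasing function of $b$) one can express $1+\gamma^{(s)}_{k}$ as a ratio of two affine functions of $t$ and check that its composition with $\log_2$, as a function of $b$, is concave. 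Most likely this fact is already implicit in the analysis leading to Theorem~\ref{R8_thr1} or Proposition~\ref{R8_prt1}, so I would cite or lift the relevant computation; at worst it is an elementary one-variable calculus check that the second derivative in $b$ is nonpositive.

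The main obstacle I anticipate is the capacity constraint (\ref{obj_1c}): unlike the other constraints it is not linear, and its convexity is not an immediate corollary of Theorem~\ref{R8_thr1}. The argument hinges on the concavity of $b\mapsto\log_2(1+\gamma^{(s)}_{k}(b))$, which requires a short but nontrivial substitution-and-differentiation argument exploiting the specific quantization-noise form $q^{(s)}_{k}(b^{(s)}_{k})\propto 2^{-2b^{(s)}_{k}}$ in (\ref{eq:q_kmn}); I would present this as a lemma or fold it into the proof. Once both (\ref{obj_1b}) and (\ref{obj_1c}) are shown to define convex sets, intersecting them with the polyhedral constraints (\ref{obj_1d})--(\ref{obj_1f2}) keeps the feasible region convex, and maximizing a linear objective over it is a convex program, which proves the proposition.
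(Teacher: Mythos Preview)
Your proposal is correct and follows essentially the same approach as the paper: invoke Theorem~\ref{R8_thr1} (via the equivalence between constraint (\ref{obj_1e}) and condition (\ref{eq:qrq2})) to handle the computational constraint (\ref{obj_1b}), use the concavity of $b\mapsto\log_2(1+\gamma^{(s)}_{k}(b))$ to handle (\ref{obj_1c}), and note that the remaining constraints and the objective are linear. The paper's proof is terser---it simply asserts the concavity of $\log_2(1+\gamma^{(s)}_{k}(b^{(s)}_{k}))$ without the calculus check you outline---but the logical structure is identical.
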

\begin{proof} 
Note that the condition required to have the SINR lower bound (\ref{eq:qrq1}) in Proposition~\ref{R8_prt1}, which is captured
in constraint (\ref{obj_1e}), is exactly the requirement in (\ref{eq:qrq2}). Hence, the constraint (\ref{obj_1b}) is convex if $b^{(s)}_k$ satisfies the constraint (\ref{obj_1e}). In addition, the constraint function in (\ref{obj_1c}) is convex due to the fact that $\log_2 \left(1 + \gamma^{(s)}_{k}(b^{(s)}_{k}) \right)$ 
is a concave function with respect to $b^{(s)}_k$. Moreover, the objective function and other constraints of the RLL problem (\ref{obj_2}) are in linear form. 
Therefore, the RLL problem (\ref{obj_2}) is convex.
\end{proof}

\subsubsection{Convexity of RUL problem}

We characterize the convexity of the RUL problem in the following theorem and proposition.
\begin{theorem} \label{R8_thr2}
$\bar{R}_o(C^o,\mb{B}^o)$ is a concave function with respect to $C^o$ and $\mb{B}^o$.
\end{theorem}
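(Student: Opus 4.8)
\textbf{Proof proposal for Theorem~\ref{R8_thr2}.}

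The plan is to show that $\bar{R}_o(C^o,\mb{B}^o)$, being the optimal value of the convex optimization problem RLL (\ref{obj_2}) parametrized by $(C^o,\mb{B}^o)$, is concave by appealing to the standard perturbation-function framework for convex programs. The key observation is that RLL is a maximization of a linear objective over a convex feasible region (Proposition~\ref{R8_prt2}), and the parameters $C^o,\mb{B}^o$ enter only on the right-hand sides of the constraints (\ref{obj_1b}) and (\ref{obj_1d}) in an affine (in fact linear) fashion. Thus $\bar{R}_o$ is precisely the optimal-value function of a convex program as a function of affine right-hand-side perturbations, which is a classical setting where the value function is concave (see, e.g., the perturbation theory in Rockafellar, or Boyd--Vandenberghe §5.6).

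First I would rewrite RLL in the generic form
\begin{align}
\bar{R}_o(C^o,\mb{B}^o) = \max \limits_{\mb{r}^o, \mb{b}^o} \; \mathbf{1}^T \mb{r}^o \;\; \text{s.t.} \;\; g_0(\mb{r}^o,\mb{b}^o) \leq C^o/N_{\sf{RE}}, \;\; g_k(\mb{b}^o) \leq B^o_k/(2N_{\sf{RE}}) \;\; \forall k, \;\; (\mb{r}^o,\mb{b}^o) \in \mathcal{D}, \nonumber
\end{align}
where $g_0(\mb{r}^o,\mb{b}^o) = \sum_{k,s} \chi^{(s)}_k(r^{(s)}_k,b^{(s)}_k)$ is jointly convex by Theorem~\ref{R8_thr1} (valid because constraint (\ref{obj_1e}), included in $\mathcal{D}$, enforces (\ref{eq:qrq2})), $g_k(\mb{b}^o) = \sum_{s \in \mathcal{S}^o_k} b^{(s)}_k$ is linear, and $\mathcal{D}$ is the convex set carved out by (\ref{obj_1c}), (\ref{obj_1e}), (\ref{obj_1f2}). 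Then I would take two parameter points $(C^o_1,\mb{B}^o_1)$ and $(C^o_2,\mb{B}^o_2)$ with respective optimal solutions $(\mb{r}^o_1,\mb{b}^o_1)$ and $(\mb{r}^o_2,\mb{b}^o_2)$, fix $\theta \in [0,1]$, and set $(\mb{r}^o_\theta,\mb{b}^o_\theta) = \theta(\mb{r}^o_1,\mb{b}^o_1) + (1-\theta)(\mb{r}^o_2,\mb{b}^o_2)$. Convexity of $\mathcal{D}$ gives $(\mb{r}^o_\theta,\mb{b}^o_\theta) \in \mathcal{D}$; convexity of $g_0$ gives $g_0(\mb{r}^o_\theta,\mb{b}^o_\theta) \leq \theta g_0(\mb{r}^o_1,\mb{b}^o_1) + (1-\theta) g_0(\mb{r}^o_2,\mb{b}^o_2) \leq (\theta C^o_1 + (1-\theta)C^o_2)/N_{\sf{RE}}$; linearity of $g_k$ gives the analogous inequality for the fronthaul constraints. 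Hence $(\mb{r}^o_\theta,\mb{b}^o_\theta)$ is feasible for the parameter point $\theta(C^o_1,\mb{B}^o_1)+(1-\theta)(C^o_2,\mb{B}^o_2)$, so
\begin{align}
\bar{R}_o\big(\theta(C^o_1,\mb{B}^o_1)+(1-\theta)(C^o_2,\mb{B}^o_2)\big) \geq \mathbf{1}^T \mb{r}^o_\theta = \theta \bar{R}_o(C^o_1,\mb{B}^o_1) + (1-\theta)\bar{R}_o(C^o_2,\mb{B}^o_2), \nonumber
\end{align}
which is exactly concavity.

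The one point that needs care — and which I expect to be the main obstacle — is ensuring that the feasible region of RLL is nonempty (so $\bar{R}_o$ is finite-valued and the argument above is not vacuous) over the relevant range of $(C^o,\mb{B}^o)$, and more subtly that the convexity condition (\ref{eq:qrq2}) underpinning Theorem~\ref{R8_thr1} genuinely holds on the whole feasible set. The latter is handled because constraint (\ref{obj_1e}) is imposed in RLL and is equivalent to (\ref{eq:qrq2}); I would state this equivalence explicitly and cite Proposition~\ref{R8_prt1}. For the domain/feasibility issue, I would either restrict attention to the effective domain $\{(C^o,\mb{B}^o): \text{RLL feasible}\}$ — which is itself convex, again because the constraints are jointly convex in the decision and parameter variables — or note that within the region of interest determined by (\ref{obj_Ub})--(\ref{obj_Uc}) the equal-split point $b^{(s)}_k = \lceil \underline{b}^{(s)}_k\rceil$, $r^{(s)}_k = R_{\sf{min}}$ is feasible for sufficiently provisioned slices, so $\bar{R}_o > -\infty$ there. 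With these caveats dispatched, concavity follows immediately from the perturbation argument, and Proposition~\ref{R8_prt5} (convexity of RUL) then follows since $\Psi^o(C^o,\mb{B}^o)$ is an affine function of $C^o,\mb{B}^o$ plus a nonnegative multiple $\upsilon^o \rho^o N_{\sf{RE}}$ of the concave function $\bar{R}_o$, maximized over the polyhedron (\ref{obj_Ub})--(\ref{obj_Uc}).
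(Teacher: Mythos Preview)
Your proof is correct and follows essentially the same route as the paper: take optimal solutions at two parameter points, form their convex combination, use the convexity of all constraint functions (relying on Theorem~\ref{R8_thr1} for (\ref{obj_1b})) to show the combined point is feasible at the combined parameter, and read off the concavity inequality. Your additional care about the effective domain and feasibility is a welcome refinement, but the paper's argument is otherwise identical in structure.
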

\begin{proof} 
The proof is given in Appendix \ref{prf_R8_thr2}. 
\end{proof}
Based on the result in this theorem, we have the following proposition.
\begin{proposition} \label{R8_prt5}
The RUL problem (\ref{obj_U2}) is convex.
\end{proposition}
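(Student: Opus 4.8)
The plan is to verify the two ingredients of a convex program: (i) the feasible region is a convex set, and (ii) the objective $\sum_{o\in\mathcal{O}}\Psi^o(C^o,\mb{B}^o)$ is a concave function of the optimization variables $\lbrace C^o,\mb{B}^o\rbrace$. For (i), I would simply observe that the constraints (\ref{obj_Ub}) and (\ref{obj_Uc}), together with the implicit nonnegativity requirements $C^o\geq 0$ and $B^o_k\geq 0$, are all linear inequalities in $\lbrace C^o,\mb{B}^o\rbrace$; hence the feasible set is a polyhedron and therefore convex. This part is immediate and requires no further work.

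For (ii), I would decompose each summand $\Psi^o(C^o,\mb{B}^o)=(\upsilon^{\sf{InP}}-\upsilon^o)\,G^{\sf{InP}}_o+\upsilon^o\rho^o N_{\sf{RE}}\,\bar{R}_o(C^o,\mb{B}^o)$ into two pieces. The first piece is affine in $(C^o,\mb{B}^o)$, since $G^{\sf{InP}}_o=\psi^o C^o+\sum_{k\in\mathcal{K}}\beta^o_k B^o_k$ is linear; an affine function is concave regardless of the sign of the scalar $(\upsilon^{\sf{InP}}-\upsilon^o)$, so no case analysis on that sign is needed. The second piece is a nonnegative scaling of $\bar{R}_o(C^o,\mb{B}^o)$, which is concave in $(C^o,\mb{B}^o)$ by Theorem~\ref{R8_thr2}; here I would note that $\upsilon^o$, $\rho^o$, and $N_{\sf{RE}}$ are all nonnegative constants, so multiplying $\bar{R}_o$ by $\upsilon^o\rho^o N_{\sf{RE}}$ preserves concavity. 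The sum of an affine function and a concave function is concave, so each $\Psi^o$ is concave, and summing over $o\in\mathcal{O}$ preserves concavity. Consequently, maximizing the concave objective $\sum_{o}\Psi^o$ over the convex polyhedral feasible set defined by (\ref{obj_Ub})--(\ref{obj_Uc}) is a convex optimization problem, which establishes the claim.

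\textbf{Main obstacle.} Essentially all of the difficulty has already been discharged in Theorem~\ref{R8_thr2}, whose proof (via a sensitivity/perturbation argument on the parametric optimal value $\bar{R}_o(C^o,\mb{B}^o)$ of the RLL problem) is the nontrivial part; conditional on that result, the present proposition is a short bookkeeping argument. The only minor point worth stating explicitly is that the coefficient $(\upsilon^{\sf{InP}}-\upsilon^o)$ on the linear term may be negative, but since that term is affine this causes no issue; the concavity hinges only on $\upsilon^o\rho^o N_{\sf{RE}}\geq 0$, which holds by the nonnegativity of weights and prices.
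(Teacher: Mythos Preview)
Your proposal is correct and follows essentially the same approach as the paper: invoke Theorem~\ref{R8_thr2} for concavity of $\bar{R}_o$ and note that the constraints (\ref{obj_Ub})--(\ref{obj_Uc}) are linear. Your write-up is actually more careful than the paper's one-line proof, in that you explicitly separate the affine piece $(\upsilon^{\sf{InP}}-\upsilon^o)G^{\sf{InP}}_o$ from the concave piece and observe that the sign of $(\upsilon^{\sf{InP}}-\upsilon^o)$ is immaterial because that term is affine.
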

\begin{proof} 
Due to the result in Theorem~\ref{R8_thr2}, the objective function of the RUL problem is concave with respect to variables 
$c^o$ and $\mb{b}^o$. In addition, all the constraint functions are in linear form. Therefore, the RUL problem (\ref{obj_U2}) is convex.
\end{proof}


\section{Resource Allocation Algorithms}
\label{Ch8_secIV}
\subsection{Proposed Algorithm to Solve RLL Problems}

According to the result in Proposition~\ref{R8_prt2}, the RLL problem is convex; 
hence, it can be solved optimally by tackling the corresponding dual problem. 
Specifically, the dual function $g(\lambda)$ of the  RLL problem can be defined as
\begin{align}
g^o(\lambda^o) \! = \! \max \limits_{\mb{r}^o, \mb{b}^o} \Phi^o \! ( \! \lambda^o \! ,\mb{r}^o \!, \mb{b}^o \! ) \text{ s. t. (\ref{obj_1c})-(\ref{obj_1e}) and (\ref{obj_1f2}),} \label{obj_3}
\end{align}
where $\Phi^o \! ( \! \lambda^o \! ,\mb{r}^o \!, \mb{b}^o \! )$ is the Lagrangian obtained by relaxing the constraint (\ref{obj_1b}),
which can be expressed as
\beq \label{LargRLL}
\Phi^o(\lambda^o,\mb{r}^o, \mb{b}^o) \! = \! \sum_{k \in \mathcal{K}} \! \sum_{s \in \mathcal{S}^o_k} \! r^{(s)}_{k} \! - \! \lambda^o \!\! \left( \! \sum_{k \in \mathcal{K}} \! \sum_{s \in \mathcal{S}^o_k} \! C^{(s)}_{k} \! - \! \dfrac{C^o}{N_{\sf{RE}}} \! \right)
\eeq
where $\lambda_o$ denotes the Lagrange multiplier. Then, the dual problem can be written as
\begin{align} \label{obj_4}
\min \limits_{\lambda^o} \: g^o(\lambda^o) \text{ s. t. } \lambda^o \geq 0. 
\end{align}
Since the dual problem is always convex, $g^o(\lambda^o)$ can be minimized by using the standard sub-gradient method
where the dual variable $\lambda^o$ can be iteratively updated as follows:
\beq \label{eq:updld}
\lambda^o_{(l+1)} = \left[\lambda^o_{(l)} + \delta^o_{(l)}\left( \sum_{k \in \mathcal{K}} \sum_{s \in \mathcal{S}^o_k} C^{(s)}_{k} - \dfrac{C^o}{N_{\sf{RE}}} \right)  \right]^{+},
\eeq
where $l$ denotes the iteration index, $\delta^o_{(l)}$ represents the step size, and $[x]^+$ is defined as $\max(0, x)$. This
sub-gradient update guarantees to converge to the optimal value of $\lambda^o$ for given primal point $(\mb{r}^o, \mb{b}^o)$ if the 
step-size $\delta^o_{(l)}$ is chosen appropriately so that $\delta^o_{(l)} \rightarrow 0$ when $l \rightarrow \infty$ 
such as $\delta^o_{(l)} = 1/\sqrt{l} $ \cite{Bertsekas99}.
%
%
%
%

To solve the RLL problem optimally, one can iteratively solve problem (\ref{obj_3}) for a given dual point $\lambda_o$
and employ the sub-gradient method to update $\lambda_o$ as in (\ref{eq:updld}). Therefore,
the remaining step is to solve the optimization problem in the right-hand-side of (\ref{obj_3}).
We will show that this can be accomplished by decoupling this problem into $K$ sub-problems corresponding to $K$ cells 
and iteratively solving these sub-problems optimally.
It can be verified that the Lagrangian function (\ref{LargRLL}) can be rewritten as
\beq
\Phi^o(\lambda^o,\mb{r}^o, \mb{b}^o) = \sum_{k \in \mathcal{K}}\Phi^o_k(\lambda^o,\mb{r}^o_k, \mb{b}^o_k) + \lambda^o \dfrac{C^o}{N_{\sf{RE}}},
\eeq
where $\Phi^o_k(\lambda^o,\mb{r}^o_k, \mb{b}^o_k)$ is expressed in
\beqn
\label{eq:Phik}
\Phi^o_k(\lambda^o,\mb{r}^o_k, \mb{b}^o_k) & = & \sum_{s \in \mathcal{S}^o_k} r^{(s)}_{k} - \lambda^o \sum_{s \in \mathcal{S}^o_k} C^{(s)}_{k} \nonumber \\
& = & \sum_{s \in \mathcal{S}^o_k} \left[ (1-\lambda^o AB) r^{(s)}_{k} + 2\lambda^o A r^{(s)}_{k} \log_2\left( \log_2 \left( 1 +\gamma^{(s)}_{k}(b^{(s)}_{k}) \right) - r^{(s)}_{k} \right)\right],
\eeqn
and $\mb{r}^o_k, \mb{b}^o_k$ represent the vectors of all rates and quantization bits corresponding to cell $k$ and OP $o$. 
Therefore, we can decouple the RLL problem into $K$ independent sub-problems, $(\mathcal{P}^o_{k})$'s, which are given as
\begin{subequations}
\begin{align}
(\mathcal{P}^o_{k}) \;\;\;\; \max \limits_{\mb{r}^o_k, \mb{b}^o_k} & \;\;\;\; \Phi^o_k(\lambda^o,\mb{r}^o_k, \mb{b}^o_k)  \label{obj_4a} \\
\;\;\;\;\;\;\;\; \text{s. t. } & \;\;\;  r^{(s)}_{k} \leq \log_2 \left(1 + \gamma^{(s)}_{k}(b^{(s)}_{k}) \right), \forall s \in \mathcal{S}^o_k, \label{obj_4c}\\
 {} & \;\;\; \sum_{s \in \mathcal{S}^o_k} b_{k}^{(s)} \leq B^o_k/(2N_{\sf{RE}}),  \label{obj_4d} \\
 {} & \;\;\; b^{(s)}_{k} \geq \lceil \underline{b}^{(s)}_{k} \rceil, \forall s \in \mathcal{S}^o_k, \label{obj_4e} \\
 {} & \;\;\; R_{\sf{min}} \leq r^{(s)}_{k} \leq R_{\sf{max}}, \forall s \in \mathcal{S}^o_k. \label{obj_4f}
\end{align}
\end{subequations}
This problem is still convex due to the result in Proposition~\ref{R8_prt2}.
In the following, we solve problem $(\mathcal{P}^o_{k})$ optimally by alternately optimizing over one variable in $\mb{r}^o_k$ and $\mb{b}^o_k$ while 
keeping the other fixed. 

\subsubsection{Solving $(\mathcal{P}^o_{k})$ for given $\mb{b}^o_k$}

For a given $\mb{b}^o_k$, problem $(\mathcal{P}^o_{k})$ becomes
\begin{subequations} \label{obj_5}
\begin{align}  
 \max \limits_{\mb{r}^o_k} & \sum_{s \in \mathcal{S}^o_k} \! \left[ E^{(s)}_{k} r^{(s)}_{k} + 2\lambda^o A r^{(s)}_{k} \log_2 \left( 1 - \dfrac{r^{(s)}_{k}}{t(b^{(s)}_{k})} \right)\right]   \label{obj_5a} \\
 \text{s. t. } &  \;  R_{\sf{min}} \leq r^{(s)}_{k} \leq \min\left(t(b^{(s)}_{k}),R_{\sf{max}} \right), \forall s \in \mathcal{S}^o_k, \label{obj_5c}
\end{align}
\end{subequations}
where $E^{(s)}_{k} = \left( 1-\lambda^o AB + 2\lambda^o A \log_2t(b^{(s)}_{k})\right) $ and $t(b^{(s)}_{k})=\log_2 \left(1 + \gamma^{(s)}_{k}(b^{(s)}_{k}) \right)$.

The optimal solution of this problem is described in the following Proposition.
\begin{proposition} \label{R8_prt3} The optimal solution to problem (\ref{obj_5}) can be expressed as
\beq \label{eq:r_opt}
r^{(s)\star}_{k} \!\! = \max \! \left[ \! R_{\sf{min}},  \min \! \left( \!\! t(b^{(s)}_{k}),R_{\sf{max}}, r\vert_{\frac{\partial w(r)}{\partial r} = - E^{(s)}_{k}} \! \right) \! \right],
\eeq
where $w(r)= 2\lambda^o A r \log_2 \left( 1 - r/{t(b^{(s)}_{k})} \right)$.
\end{proposition}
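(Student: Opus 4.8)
\textbf{Proof proposal for Proposition~\ref{R8_prt3}.}

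The plan is to solve problem~(\ref{obj_5}) by exploiting its separability and strict concavity. First I would observe that the objective in (\ref{obj_5a}) decouples completely across the PRBs $s \in \mathcal{S}^o_k$, since both the summand and the box constraint (\ref{obj_5c}) involve only the single variable $r^{(s)}_{k}$; hence the problem reduces to $|\mathcal{S}^o_k|$ independent scalar maximizations. For each such scalar problem, define $\phi^{(s)}_k(r) = E^{(s)}_{k} r + w(r)$ where $w(r) = 2\lambda^o A r \log_2\!\left(1 - r/t(b^{(s)}_{k})\right)$, to be maximized over the interval $[R_{\sf min}, \min(t(b^{(s)}_{k}), R_{\sf max})]$. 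Note that the upper endpoint $t(b^{(s)}_{k})$ is exactly where the argument of the logarithm vanishes, so $w(r) \to -\infty$ there; the feasible interval is therefore the natural domain on which $\phi^{(s)}_k$ is finite and well-behaved.

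The key step is to verify that $\phi^{(s)}_k$ is strictly concave on its domain. This follows from $\frac{d^2}{dr^2} w(r) < 0$ on $(0, t(b^{(s)}_{k}))$, which is a routine computation: differentiating $r\log_2(1-r/t)$ twice gives a negative expression for $r$ in this range (this is essentially the concavity already established in the proof of Proposition~\ref{R8_prt2}, restricted to the rate variable). Given strict concavity, the unconstrained stationarity condition $\phi^{(s)\prime}_k(r) = 0$, i.e. $E^{(s)}_{k} + \frac{\partial w(r)}{\partial r} = 0$, or equivalently $\frac{\partial w(r)}{\partial r} = -E^{(s)}_{k}$, has at most one root, and when it lies in the feasible interval it is the global maximizer. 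A standard projection argument for concave scalar maximization over an interval then yields the optimal solution as the clamp of this unconstrained root to the interval $[R_{\sf min}, \min(t(b^{(s)}_{k}), R_{\sf max})]$, which is precisely the nested $\max$/$\min$ expression in (\ref{eq:r_opt}). I would also remark that $\frac{\partial w(r)}{\partial r}$ is strictly decreasing in $r$ (again by strict concavity of $w$), so the equation $\frac{\partial w(r)}{\partial r} = -E^{(s)}_{k}$ is easily solved numerically by bisection, making the formula (\ref{eq:r_opt}) operationally meaningful.

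The main obstacle I anticipate is the careful handling of boundary behavior: one must confirm that the stationary point, when it exists in the interior, indeed beats both endpoints, and that when no interior stationary point exists the optimum sits at the appropriate endpoint determined by the sign of $\phi^{(s)\prime}_k$ there. Because $\phi^{(s)\prime}_k(r) \to +\infty$ as $r \to 0^+$ is not automatic here (the derivative of $w$ behaves like $2\lambda^o A\log_2(1-r/t)$ near $0$, which is finite), one should check the sign of $\phi^{(s)\prime}_k(R_{\sf min})$ and $\phi^{(s)\prime}_k$ at the upper endpoint explicitly; strict concavity guarantees the clamp formula captures all these cases uniformly, but writing this out cleanly is the only delicate part. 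Everything else—separability, concavity, and the projection formula—is standard convex-analysis bookkeeping.
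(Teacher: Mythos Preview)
Your proposal is correct and follows essentially the same route as the paper: separability across PRBs, strict concavity of the per-PRB objective established via the second derivative of $w(r)$, the first-order stationarity condition $\partial w(r)/\partial r = -E^{(s)}_{k}$, and then projection onto the box constraint (\ref{obj_5c}) to obtain the clamp formula. The paper's proof is terser---it simply writes out the explicit second-derivative expression and invokes KKT---whereas you are more explicit about separability and more careful with the endpoint discussion, but the underlying argument is identical.
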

\begin{proof} 
The proof is given in Appendix~\ref{prf_R8_prt3}.
\end{proof}

\subsubsection{Solving $(\mathcal{P}^o_{k})$ for given $\mb{r}^o_k$}

For given $\mb{r}^o_k$, problem $(\mathcal{P}^o_{k})$ becomes equivalent to
\begin{subequations} \label{obj_6}
\begin{align}
 \max \limits_{\mb{b}^o_k} & \;\;\;\; \sum_{s \in \mathcal{S}^o_k} z\left( b^{(s)}_k \right)  \label{obj_6a} \\
 \text{s. t. } &  \;\;\;\;  b^{(s)}_{k} \geq \max \left(\lceil \underline{b}^{(s)}_{k} \rceil, t^{-1}\left( r^{(s)}_{k}\right) \right)  , \forall s \in \mathcal{S}^o_k, \label{obj_6c}\\
 {} & \;\;\;\; \sum_{s \in \mathcal{S}^o_k} b_{k}^{(s)} \leq B^o_k/(2N_{\sf{RE}}),  \label{obj_6d} 
\end{align}
\end{subequations}
where $z(b^{(s)}_{k})=G^{(s)}_{k}  \log_2  \left(  \log_2 \left(  1 + \gamma^{(s)}_{k}(b^{(s)}_{k}) \right) \!\! - r^{(s)}_{k} \!\! \right)$, 
$G^{(s)}_{k} = 2 \lambda A r^{(s)}_{k}$, and $t^{-1}(r)$ is the inverse function of $t(b)$.
The objective function of this problem is concave with respect to ${\mb{b}^o_k}$. Hence, this problem is a convex one whose optimal
solution can be obtained by studying the Karush-Kuhn-Tucker optimality conditions. The optimal solution of this problem is 
summarized in the following proposition.

\begin{proposition} \label{R8_prt4} The optimal solution of problem (\ref{obj_6}) can be expressed as
\beq \label{eq:b_opt}
b^{(s)\star}_k =  \max \left(\lceil \underline{b}^{(s)}_{k} \rceil, t^{-1}\left( r^{(s)}_{k}\right), b\vert_{\frac{\partial z(b)}{\partial b} = \mu} \right), 
\eeq
where $\mu$ is a constant so that $\sum_{s \in \mathcal{S}^o_k} b_{k}^{(s)\star} = B^o_k/(2N_{\sf{RE}})$.
\end{proposition}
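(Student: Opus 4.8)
The plan is to show that problem~(\ref{obj_6}) is a convex program whose Karush--Kuhn--Tucker (KKT) conditions reduce to the claimed water-filling form. First I would record that the objective $\sum_{s \in \mathcal{S}^o_k} z(b^{(s)}_k)$ is concave in $\mb{b}^o_k$ and the constraint set defined by (\ref{obj_6c})--(\ref{obj_6d}) is polyhedral; concavity of $z(\cdot)$ follows from Theorem~\ref{R8_thr1} and the proof of Proposition~\ref{R8_prt2}, where $\log_2\bigl(1+\gamma^{(s)}_{k}(b^{(s)}_{k})\bigr)$ is shown to be concave and increasing in $b^{(s)}_{k}$ on the region where constraint (\ref{obj_6c}) holds, so that its composition with the increasing concave $\log_2(\cdot)$ remains concave. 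I would also note that $z(\cdot)$ is strictly increasing in its argument: a larger $b^{(s)}_{k}$ lowers the quantization noise $q^{(s)}_{k}(b^{(s)}_{k})$ in (\ref{eq:q_kmn}), raises $\gamma^{(s)}_{k}(b^{(s)}_{k})$, and hence raises $t(b^{(s)}_{k})$, while the bracketed argument $t(b^{(s)}_{k}) - r^{(s)}_{k}$ stays strictly positive in the interior of the feasible set. Consequently, if the fronthaul constraint (\ref{obj_6d}) were slack at a candidate optimum, one could increase a single $b^{(s)}_{k}$ and strictly improve the objective while remaining feasible; therefore (\ref{obj_6d}) is active, i.e. $\sum_{s \in \mathcal{S}^o_k} b^{(s)\star}_k = B^o_k/(2N_{\sf{RE}})$.

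Second I would form the partial Lagrangian, attaching a multiplier $\mu \geq 0$ to the active fronthaul constraint (\ref{obj_6d}) and multipliers $\nu_s \geq 0$ to the lower bounds $b^{(s)}_{k} \geq b^{(s)}_{\min} := \max\bigl(\lceil \underline{b}^{(s)}_{k} \rceil,\, t^{-1}(r^{(s)}_{k})\bigr)$ coming from (\ref{obj_6c}). Since the problem is convex with affine constraints, Slater's condition holds whenever it is feasible (which presumes $\sum_s b^{(s)}_{\min} \leq B^o_k/(2N_{\sf{RE}})$), so the KKT conditions are necessary and sufficient. Stationarity gives $\partial z(b^{(s)}_{k})/\partial b^{(s)}_{k} = \mu - \nu_s$ for each $s$, and complementary slackness yields two cases: if $b^{(s)\star}_k > b^{(s)}_{\min}$ then $\nu_s = 0$ and $\partial z(b^{(s)}_{k})/\partial b^{(s)}_{k} = \mu$; if $b^{(s)\star}_k = b^{(s)}_{\min}$ then $\nu_s \geq 0$ forces $\partial z(b^{(s)}_{k})/\partial b^{(s)}_{k} \leq \mu$. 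Because $z(\cdot)$ is concave, $\partial z(b)/\partial b$ is nonincreasing in $b$, so both cases are captured uniformly by $b^{(s)\star}_k = \max\bigl(b^{(s)}_{\min},\, b\vert_{\partial z(b)/\partial b = \mu}\bigr)$, which is precisely (\ref{eq:b_opt}).

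Finally I would pin down the common level $\mu$. Using the active fronthaul constraint from the first step, $\mu$ must satisfy $\sum_{s \in \mathcal{S}^o_k} b^{(s)\star}_k(\mu) = B^o_k/(2N_{\sf{RE}})$. I would observe that each $b^{(s)\star}_k(\mu)$ is continuous and nonincreasing in $\mu$, and that as $\mu \to 0^+$ one has $b^{(s)\star}_k(\mu) \to \infty$ (since $\partial z(b)/\partial b \downarrow 0$ as $b\to\infty$, because $\gamma^{(s)}_{k}(b^{(s)}_{k})$ saturates at $\bar{\gamma}^{(s)}_{k}$), while for $\mu$ sufficiently large every $b^{(s)\star}_k(\mu) = b^{(s)}_{\min}$; hence the left-hand side sweeps monotonically over $[\sum_s b^{(s)}_{\min},\infty)$ and a unique feasible $\mu > 0$ exists and is obtainable by bisection. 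I expect the main delicacy to be the boundary behavior of $z(\cdot)$: at $b^{(s)}_{k} = t^{-1}(r^{(s)}_{k})$ the argument $t(b^{(s)}_{k}) - r^{(s)}_{k}$ vanishes and $z \to -\infty$, so one must first argue that the optimum lies in the open region where $z$ is finite and differentiable (guaranteed exactly because $b^{(s)}_{\min}$ already incorporates the threshold $t^{-1}(r^{(s)}_{k})$ and $z$ is increasing) before invoking the KKT machinery; the remainder is a standard water-filling argument.
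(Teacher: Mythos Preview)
Your proposal is correct and follows essentially the same approach as the paper: relax the fronthaul constraint (\ref{obj_6d}) via a Lagrange multiplier $\mu$, decouple across PRBs, and read off the water-filling form from the KKT stationarity condition $\partial z/\partial b = \mu$ combined with the lower-bound constraints. Your treatment is in fact more careful than the paper's terse proof---you explicitly introduce multipliers $\nu_s$ for the box constraints and handle complementary slackness, verify Slater's condition, and justify existence and monotonicity of $\mu$---whereas the paper simply asserts the KKT condition and then invokes constraint (\ref{obj_6c}) to obtain the $\max$ form.
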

\begin{proof} 
The proof is given in Appendix~\ref{prf_R8_prt4}. 
\end{proof}

\subsubsection{Proposed Algorithm}

We summarize how to solve the RLL problem in Algorithm~\ref{R8_alg:gms1}.
In this iterative algorithm, we alternatively update one of the two variables $\mb{b}^o_k$ and $\mb{r}^o_k$ while keeping the other fixed
until convergence. Because the optimal solution for each variable can be obtained, the objective value increases over iterations
which ensures fast convergence for this algorithm. We then update the dual variable $\lambda^o$ as in (\ref{eq:updld}) in the
outer loop. 

\begin{algorithm}[!t]
\caption{\textsc{Algorithm to Solve RLL Problem}}
\label{R8_alg:gms1}
\begin{algorithmic}[1]
\STATE Initialization: Set $r_k^{(s)} = R_{\sf{min}}$ for all $(k,s) \in \mathcal{K} \times \mathcal{S}$, $\lambda^o_{(0)}=0$ and $l=0$. Choose a tolerance parameter $\varepsilon$ for convergence.
\REPEAT
\FOR{$k \in \mathcal{K}$} 
\REPEAT
\STATE Fix $\mb{r}^o_k$ and update $\mb{b}^o_k$ as in (\ref{eq:b_opt}) with $\lambda^o_{(l)}$.
\STATE Fix $\mb{b}^o_k$ and update $\mb{r}^o_k$ as in (\ref{eq:r_opt}) with $\lambda^o_{(l)}$.
\UNTIL Convergence.
\ENDFOR 
\STATE Calculate all $C_k^{(s)}$ with new $\mb{r}^o_k$ and $\mb{b}^o_k$.
\STATE Update $\lambda^o_{(l+1)}$ as in (\ref{eq:updld}).
\STATE Set $l=l+1$.
\UNTIL $\vert \lambda^o_{(l)} - \lambda^o_{(l-1)} \vert < \varepsilon$.
\end{algorithmic}
\end{algorithm}

\subsection{Proposed Algorithm to Solve RUL Problem}

Since the RUL problem is convex according to Proposition~\ref{R8_prt5}, its optimal solution can be found efficiently by standard convex optimization techniques.  
It appears non-tractable to derive the closed-form optimal solution of the RLL problem. 
Therefore, we employ the sub-gradient method to solve this problem based on the sub-gradient $\nabla \bar{R}_o(C^o,\mb{B}^o)$  of $\bar{R}_o(C^o,\mb{B}^o)$.
Specifically, the sub-gradient method to iteratively update $C^o,\mb{B}^o$ can be performed as follows:
\beq \label{eq:updcb}
[C^o, \mb{B}^o]_{(l+1)} = \mathcal{P} \left[[C^o,\mb{B}^o]_{(l)} + \tau^o_{(l)} \nabla \Psi^o(C^o,\mb{B}^o)  \right]
\eeq
where $[C^o,\mb{B}^o]_{(l)}$ denotes the vector formed from the optimization variables $C^o$ and $\mb{B}^o$, $\tau^o_{(l)}$ represents
step size in the iteration $l$, $\nabla \Psi^o(C^o,\mb{B}^o)=\left[ \dfrac{\partial\Psi^o(C^o,\mb{B}^o)}{\partial C^o} \dfrac{\partial \Psi^o(C^o,\mb{B}^o)}{\partial B^o_1}  ... \right.$ $\left. \dfrac{\partial \Psi^o(C^o,\mb{B}^o)}{\partial B^o_K} \right] ^T$. Moreover, $\mathcal{P}\left[  C^o,\mb{B}^o \right]$ represents
 the projection of $C^o,\mb{B}^o$  to the feasible region, which is achieved by solving the following quadratic problem
\begin{align}
	\min_{[C^o,\mb{B}^o]} \;\; \Vert[C^o,\mb{B}^o] - [\hat{C}^o,\hat{\mb{B}}^o] \Vert^2 \;\;	
	\text{s. t. } \text{(\ref{obj_Ub}), (\ref{obj_Uc})}, \label{quadprob}
\end{align}
where $\hat{C}^o = C^o_{(l)} + \delta^o_{(l)} {\partial \Psi^o(C^o,\mb{B}^o)}/{\partial C^o}$ and $\hat{B}^o_k = B^o_{k,(l)} + \delta^o_{(l)} {\partial \Psi^o(C^o,\mb{B}^o)}/{\partial B^o_k}$, $\forall k \in \mathcal{K}$ are updated $[C^o,\mb{B}^o]$ given by (\ref{eq:updcb}).

The sub-gradient based updates can guarantee to converge to the optimal values of $C^o,\mb{B}^o$ if the step-size $\tau^o_{(l)}$ is chosen 
appropriately to satisfy $\tau^o_{(l)} \rightarrow 0$ when $l \rightarrow \infty$ such as $\tau^o_{(l)} = 1/\sqrt{l}$ \cite{Bertsekas99}.

The remaining issue is to determine the value of $\nabla \Psi^o(C^o,\mb{B}^o)$, which can be expressed as
\begin{align} \label{eq:nlPsi}
\nabla \Psi^o(C^o,\mb{B}^o) = \nu^o \rho^o N_{\sf{RE}} \! \left[ \! \begin{matrix} {\partial \bar{R}_o(C^o,\mb{B}^o)}/{\partial C^o} \\ {\partial \bar{R}_o(C^o,\mb{B}^o)}/{\partial B^o_1} \\ ... \\ {\partial \bar{R}_o(C^o,\mb{B}^o)}/{\partial B^o_K} \end{matrix} \! \right] \! + \! (\upsilon^{\sf{InP}} \! - \! \upsilon^o) \!\! \left[\begin{matrix} \psi^o \\ \beta^o_1 \\ ... \\ \beta^o_K \end{matrix} \right] \! ,
\end{align}
where ${\partial \bar{R}_o(C^o,\mb{B}^o)}/{\partial C^o}$ and ${\partial \bar{R}_o(C^o,\mb{B}^o)}/{\partial B^o_k}$, $\forall k \in \mathcal{K}$, can be approximated as
\begin{subequations} \label{eq:nlR}
\begin{align} 
\dfrac{\partial \bar{R}_o(C^o,\mb{B}^o)}{\partial C^o} \simeq \dfrac{\bar{R}_o(C^o+\Delta C^o,\mb{B}^o)-\bar{R}_o(C^o,\mb{B}^o)}{\Delta C^o} ,\\
\dfrac{\partial \bar{R}_o(C^o,\mb{B}^o)}{\partial B^o_k} \simeq \dfrac{\bar{R}_o(C^o,\mb{B}^o+\Delta \mb{B}^o_k)-\bar{R}_o(C^o,\mb{B}^o)}{\Delta B^o_k},
\end{align}
\end{subequations}
where $\Delta \mb{B}^o_k$ is the vector of size $K \times 1$ whose elements are zero except that the $k^{th}$ element equals to $\Delta B^o_k$. 
In (\ref{eq:nlR}), the values of $\Delta C^o$ and $\Delta B^o_k$, $\forall k \in \mathcal{K}$ are chosen sufficiently small. We summarize 
the procedure to update $[C^o,\mb{B}^o]$'s in Algorithm~\ref{R8_alg:gms2} which is employed to solve the RUL problem.

\begin{algorithm}[!t]
\caption{\textsc{Algorithm To Solve RUL Problem}}
\label{R8_alg:gms2}
\begin{algorithmic}[1]
\STATE Initialization: Set $C^o_{(0)} = \bar{C}_{\sf{cloud}}/O$, and $B^o_{k,(0)} = \bar{B}_k/O$ for all $(o,k) \in \mathcal{O} \times \mathcal{K}$, $\nu^o_{(0)}=0$ for all $o \in \mathcal{O}$, and $l=0$. 
\REPEAT
\STATE Run Algorithm~\ref{R8_alg:gms1} to obtain $\lbrace \bar{R}_o(C^o,\mb{B}^o)\rbrace $ with $\lbrace [C^o,\mb{B}^o]_{(l)}\rbrace$ for all $o \in \mathcal{O}$.

\STATE Run Algorithm~\ref{R8_alg:gms1} to obtain $\bar{R}_o(C^o+\Delta C^o,\mb{B}^o)$ and $\bar{R}_o(C^o,\mb{B}^o+\Delta \mb{B}^o_k)$.

\STATE Calculate $\nabla \Psi^o(C^o,\mb{B}^o)$ as in (\ref{eq:nlPsi}) by using $\bar{R}_o(C^o+\Delta C^o,\mb{B}^o)$ and $\bar{R}_o(C^o,\mb{B}^o+\Delta \mb{B}^o_k)$  to determine ${\partial \bar{R}_o(C^o,\mb{B}^o)}/{\partial C^o}$, and ${\partial \bar{R}_o(C^o,\mb{B}^o)}/{\partial B^o_k}$ 
for all $(k,o) \in \mathcal{K} \times \mathcal{O}$ as in (\ref{eq:nlR}).
\STATE Update $[C^o,\mb{B}^o]_{(l+1)}$ for all $o \in \mathcal{O}$ as in (\ref{eq:updcb}).
\STATE Set $l=l+1$.
\UNTIL Convergence.
\end{algorithmic}
\end{algorithm}

\subsection{Rounding Design}

After running Algorithm \ref{R8_alg:gms2},
we obtain a feasible solution $\lbrace C^o,\mb{B}^o \rbrace$ and their corresponding 
$\left\lbrace r^{(s) \bigstar}_k \right\rbrace $ and $\left\lbrace b^{(s) \bigstar}_k \right\rbrace $ of the 
relaxed problems, which take real values. 
To obtain a feasible and discrete solution that satisfies the constraints (\ref{obj_1e}) and (\ref{obj_1f}), the continuous variables must be 
appropriately rounded to the corresponding discrete values.
This rounding design must be conducted carefully because the resulting discrete results may not satisfy the original cloud computation and 
fronthaul constraints. Toward this end, we propose two rounding methods which are described in the following.

\noindent \textit{Iteratively Rounding (IR) Method:} For each OP, we iteratively run Algorithm \ref{R8_alg:gms2} and perform 
the following task in each iteration. We choose one value of $r^{(s) \bigstar}_k$ (or $b^{(s) \bigstar}_k$), which is closest
to one rate value in $\mathcal{M}_R$ (or an integer)
and fix it to that value in following iterations. This one-by-one rounding process is repeated until convergence.

\noindent \textit{One-time Rounding and Adjusting (RA) Method:} This method has two phases for each OP. First, we round all $\left\lbrace r^{(s) \bigstar}_k \right\rbrace $ and $\left\lbrace b^{(s) \bigstar}_k \right\rbrace $ 
to their closest values in $\mathcal{M}_R$ and the set of integers, respectively in the first phase. Then, if any 
constraints are violated, we round down the corresponding variables one-by-one where the variable that affects the violated 
constraints the most is chosen in each rounding-down step in the second phase. 

\section{Greedy Resource Allocation Algorithm}
\label{Ch8_secV}
\begin{algorithm}[!t]
\caption{\textsc{Greedy Resource Allocation Algorithm }}  
\label{R8_alg:gms3}
\begin{algorithmic}[1]
\STATE Define the network resource for each OP as in (\ref{eq:R8_fga_C}) and (\ref{eq:R8_fga_B}).
\FOR{OP $o$}
\STATE Calculate $\lbrace b^{(s)\prime}_k \rbrace_{k \in \mathcal{K}, s \in \mathcal{S}_k^o}$ as in (\ref{eq:b_opt2}).
\STATE Set $b^{(s)}_k = \lfloor b^{(s)\prime}_k\rfloor$, for all $(k,s) \in \mathcal{K} \times \mathcal{S}_k^o$.
\STATE Set $r^{(s)}_k = \max_{r \in \mathcal{M}_R} r \text{ s. t. } r \leq t(b^{(s)}_k)$, for all $(k,s)$.
\WHILE{$C_{\sf{tt}}^o(\mb{r}^o,\mb{b}^o) > C^o_{\sf{ga}}$}
\STATE Find $(k^*,s^*)=\arg\max C^{(s)}_k$.
\STATE Reduce $r^{(s^*)}_{k^*}$ to the nearest value in $\mathcal{M}_R$.
\ENDWHILE
\ENDFOR
\end{algorithmic}
\end{algorithm}

For comparison purposes, we present a two-stage greedy resource allocation algorithm. 
In stage one, we calculate the SINR upper bounds $\bar{\gamma}^{(s)}_{k}={D_k^{(s)}}/{I_k^{(s)}}$
 on every PRB $s$ and cell $k$ and the corresponding upper bound of the rate $r_k^{(s)}$ as $\log_2\left(1 + \bar{\gamma}_k^{(s)}\right) $.
We then allocate the computational and fronthaul capacity resources for different OPs based on the upper bound of the sum rate
of each OP $o$ as follows:
\begin{align}
C^o_{\sf{ga}} &={\hat{R}^o  \bar{C}_{\sf{cloud}}} /{\sum_{o \in \mathcal{O}} \hat{R}^o} , \label{eq:R8_fga_C} \\
B^o_{k,\sf{ga}} &={\hat{R}^o_k \bar{B}_k}/{\sum_{o \in \mathcal{O}} \hat{R}^o_k}, \forall k \in \mathcal{K}, \label{eq:R8_fga_B}
\end{align} 
where $\hat{R}^o_k = \sum_{s \in \mathcal{S}^o_k}\log_2\left(1 + \bar{\gamma}_k^{(s)}\right)$ and $\hat{R}^o=\sum_{k \in \mathcal{K}} \hat{R}^o_k$.
In stage two, we propose a simple method to solve the RLL problem for OP $o$ with $C^o_{\sf{ga}}$ and $\mb{B}^o_{k,\sf{ga}}$. 
Specifically, we optimize the quantization bit allocation to maximize the sum rate of all users
by relaxing the underlying variables to continuous ones and solving the following problem
\begin{align}
 \max \limits_{\mb{b}} & \;\;\;  \sum_{k \in \mathcal{K}} \sum_{s \in \mathcal{S}} \log_2 \left(1 + \gamma^{(s)}_{k}(b^{(s)}_{k}) \right) \nonumber \\
 \text{s. t.} & \;\;\; \sum_{s \in \mathcal{S}^o_k} b_{k}^{(s)} \leq B^o_{k,\sf{ga}}/(2N_{\sf{RE}}), \forall k \in \mathcal{K}.  \label{obj_6ga} 
\end{align}
Similar to problem (\ref{obj_5}), it can be shown that this problem is convex because its objective function is concave. 
By studying the KKT optimality conditions, we can obtain the optimal solution as follows:
\beq \label{eq:b_opt2}
b^{(s)\prime}_k =  \max \left(0, b\vert_{\frac{\partial t(b)}{\partial b} = \nu} \right), 
\eeq
where $\nu$ is a constant which must be set to satisfy $\sum_{s \in \mathcal{S}} b_{k}^{(s)\prime} = B^o_{k,\sf{ga}}/(2N_{\sf{RE}})$.
Then, we can obtain the feasible vector $\mb{b}$ by applying the flooring operation to $\mb{b}^{\prime}$.

The remaining task is to determine the users' rates that satisfy constraints (\ref{obj_1b}) and (\ref{obj_1c}) which  
can be accomplished by applying the ``Complexity Cut-Off'' method \cite{Rost_GBC15}.
Specifically, we start by setting each $r^{(s)}_k$ to the highest value in the rate set $\mathcal{M}_R$ which is smaller 
than $\log_2 \left(1 + \gamma^{(s)}_{k}(b^{(s)}_k) \right)$.
Then, we iteratively reduce the rate variable that requires the highest computation effort if the cloud computation constraint is violated 
(i.e., the required computation effort of OP $o$, $C_{\sf{tt}}^o(\mb{r}^o,\mb{b}^o)$, is greater than the assigned value $C^o_{\sf{ga}}$).
This iterative process is performed until all cloud computation constraints are satisfied. We summarize this two-stage solution in Algorithm~\ref{R8_alg:gms3}.

\section{Numerical Results}
\label{Ch8_secVI}
\begin{figure}[!t]
\begin{center}
\includegraphics[width=0.4 \textwidth]{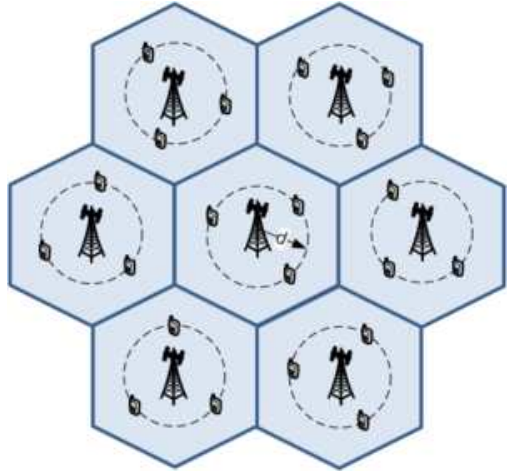}
\end{center}
\caption{Simulation model.}
\label{Fig01}
\end{figure}

We consider the 7-cell network for performance evaluation where the distance between the centers of
 any two nearest RRHs is $400 \, m$ as shown in Fig.~\ref{Fig02}. 
In each cell, we randomly place users so that the distance from the cell center to every user is $d \: (m)$.
The channel gains are generated by considering both Rayleigh fading and path loss. The path loss is modeled as 
$L_{j,u}^{k}=36.8 \mathsf{log}_{10}(d_{j,u}^k)+43.8+20\mathsf{log}_{10}(\frac{f_c}{5})$
where $d_{j,u}^k$ denotes the distance from user $u$ in cell $j$ to RRH $k$ and $f_c=2.5\:GHz$. 
We set the noise power $\sigma^2=10^{-13} \; W$ and the power $p^{(s)}_k=0.1 \;W$.
Moreover, we set $T'=0.2$, $\zeta=6$ and $\epsilon_{\sf{ch}}=10\%$ for the computation complexity model.
The rate set $\mathcal{M}_R$ is chosen corresponding to $27$ distinct MCSs with turbo coding as in the LTE standard \cite{LTE}.
In fact, the data rate corresponding to each MCS can be calculated as $TBS \times 10^3/N_{\sf{RE}}$ where the transport block size ($TBS$) for each MCS can be determined as in \textit{Table 7.1.7.2.1} in \cite{LTE} with $N_{\sf{PRB}}=1$.

We assume that there are three OPs ($O=3$) to obtain results in all simulations.
Except for the results in Fig.~\ref{Fig04}, the numbers of PRBs assigned for these three OPs are set equal to
 $5$, $10$, $15$ randomly in each cell, which means $S=30$.
In all simulations, we set the same fronthaul capacity limit for different cells. 
To obtain the results in Figs.~\ref{Fig02}-\ref{Fig05}, we set $\upsilon^{\sf{InP}}=\upsilon^o$ for all $o \in \mathcal{O}$. 


\begin{figure}[!t]
\begin{center}
\includegraphics[width=0.7 \textwidth]{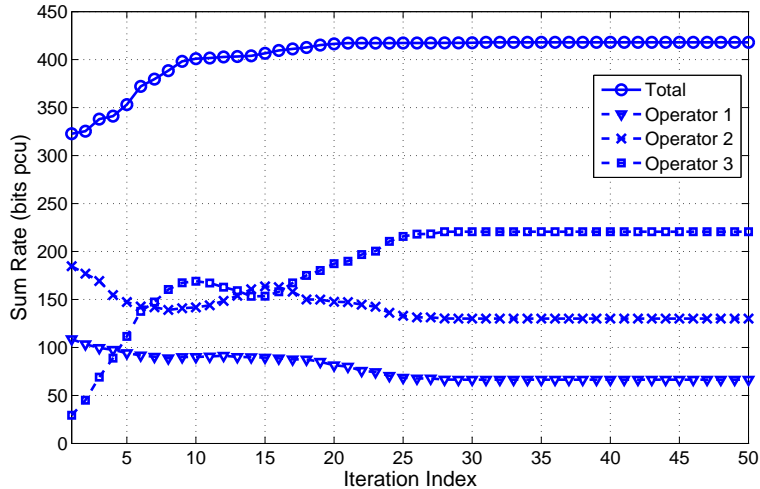}
\end{center}
\caption{Variations of sum rates over iterations}
\label{Fig02}
\end{figure}

We illustrate the convergence of our proposed algorithms in Fig.~\ref{Fig02}
where the variations of system sum rate and the rates of individual OPs over iterations 
by using Algorithm~\ref{R8_alg:gms1} and Algorithm~\ref{R8_alg:gms2} to solve the relaxed problem are shown.
To obtain results in this figure, $\bar{B}_k$ is set equal to $120~Mbps$ for each cell $k \in \mathcal{K}$ and $\bar{C}_{\sf{cloud}} = 90~Mbips$. 
As can be seen, the system sum rate increases over the first 25 iterations before settling down at the maximum value.
The third OP, who is assigned the largest number of PRBs, achieves low sum rate at the beginning
then reaches the higher rate in convergence compared to other OPs. This is because OP 3 is assigned more network resources 
than those for OPs $1$ and $2$.

\begin{figure}[!t]
\begin{center}
\includegraphics[width=0.7 \textwidth]{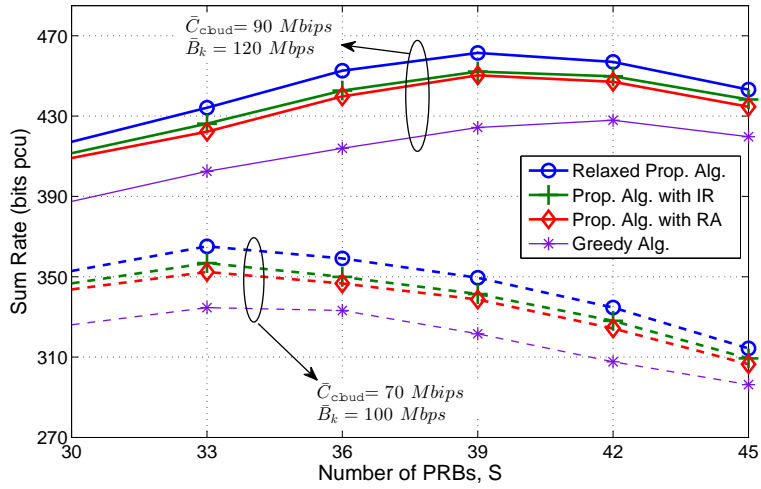}
\end{center}
\caption{Sum rate vs number of PRBs ($S$).}
\label{Fig04}
\end{figure}

In Fig.~\ref{Fig04}, we show the system sum rate obtained by different schemes, namely the our proposed algorithms without rounding (Relaxed Prop. Alg.), 
with IR-rounding and RA-rounding methods (Prop. Alg. with IR and Prop. Alg. with RA), and the Fast Greedy Algorithm (Greedy Alg.), versus the number
 of PRBs in each cell. To obtain these results, we sequentially add one more PRB to each OP in each cell to obtain different points on each curve.
The Relaxed Prop. Alg. gives the upper-bound of the system sum rate of any resource allocation algorithm. The fact that the sum rates
achieved by the Prop. Alg. with IR and Prop. Alg. with RA are very close to that achieved by the Relaxed Prop. Alg. confirms
the efficiency of our proposed two-stage algorithms. Moreover, our proposed algorithm outperforms the greedy algorithm in all studied scenarios.
In addition, the proposed algorithm with IR rounding results in slightly better sum rate than the RA rounding based counterpart.
Interestingly, the system sum rate increases and then decreases as number of PRBs in each cell increases. This means that
limited computation and fronthaul capacity resources can indeed hurt the system performance if the bandwidth provisioning
is not properly provisioned.

\begin{figure}[!t]
\begin{center}
\includegraphics[width=0.7 \textwidth]{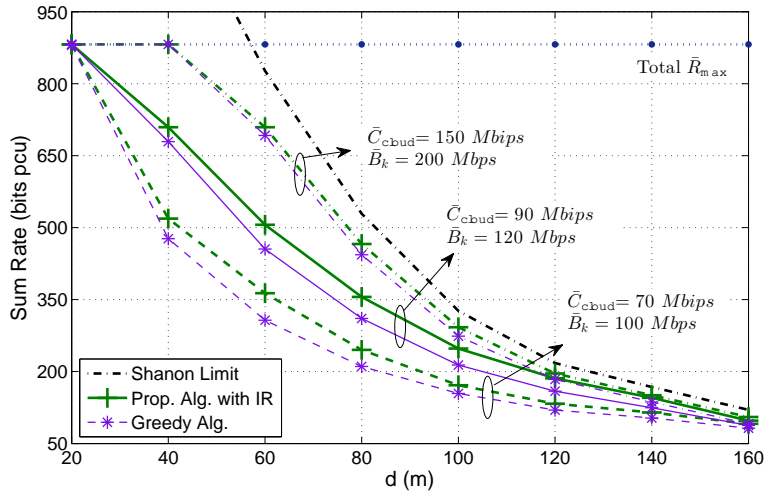}
\end{center}
\caption{Sum rate vs distance from RRHs to their users.}
\label{Fig05}
\end{figure}

Fig.~\ref{Fig05} shows the variations of the system sum rate due to the Prop. Alg. with IR and Greedy Alg. versus the user-RRH distance $d$
under three different parameter settings, namely $\bar{C}_{\sf{cloud}} = 150~Mbips$ and $\bar{B}_k=200~Mbps$;
 $\bar{C}_{\sf{cloud}} = 90~Mbips$ and $\bar{B}_k=120~Mbps$; and $\bar{C}_{\sf{cloud}} = 70~Mbips$ and $\bar{B}_k=100~Mbps$.
We also present the upper bound of the system sum rate, which is obtained by using the SINR upper bound $\bar{\gamma}_k^{(s)}$ by
setting the quantization noise on all PRB $s$ and cell $k$ to zero. This rate upper bound is equal to
 $\sum_{\forall (k,s)} \log_2(1+\bar{\gamma}_k^{(s)})$ and it is denoted as ``Shannon Limit'' in this figure.
For smaller $d$, the received signal becomes stronger in combating the multi-cell interference leading to higher link SINR $\bar{\gamma}_k^{(s)}$,
which explains the higher sum rate for smaller $d$. It can also be observed that the achieved system sum rate tends to the rate upper bound (i.e., the ``Shannon Limit'')
as the cloud computation and fronthaul capacity limits increase. Moreover, the Prop. Alg. with IR outperforms the Greedy Alg. in all studied scenarios,
which confirms the excellent performance of our proposed design.

\begin{figure}[!t]
\begin{center}
\includegraphics[width=0.7 \textwidth]{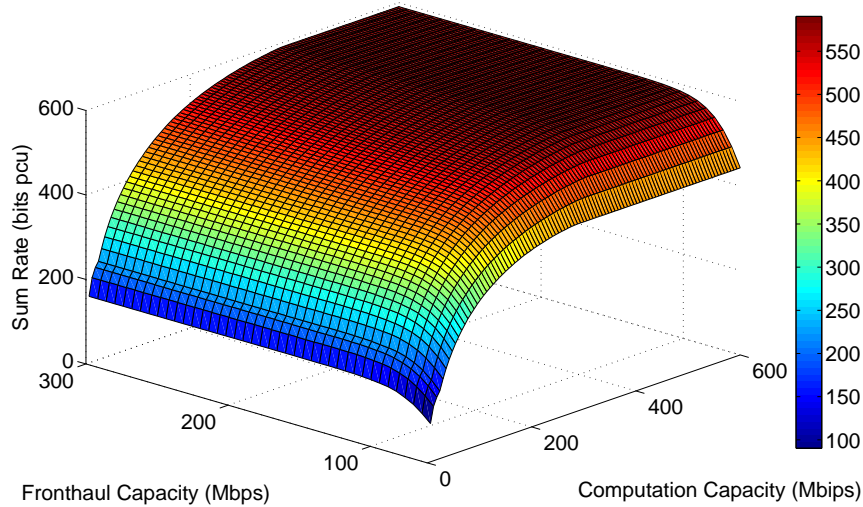}
\end{center}
\caption{Sum rate vs $\bar{C}_{\sf{cloud}}$ and $\bar{B}_k$.}
\label{Fig03}
\end{figure}

To illustrate the impacts of limited network resources, we show the system sum rate upper-bound, which is the outcome of 
Algorithm~\ref{R8_alg:gms2}, versus the computation limit ($\bar{C}_{\sf{cloud}}$) and fronthaul capacity from the cloud to 
each cell $\bar{B}_k$ in Fig.~\ref{Fig03}. 
We can see that higher cloud computation limit and larger fronthaul capacity result in the greater sum rate as expected.
In addition, the sum rate becomes saturated as the cloud computation limit or fronthaul capacity become sufficiently large.
These results imply that the proposed design framework can be employed for provisioning the cloud computation limit or fronthaul capacity
and for analyzing the provisioned network resources and performance tradeoffs.

\begin{figure}[!t]
\begin{center}
\includegraphics[width=0.7 \textwidth]{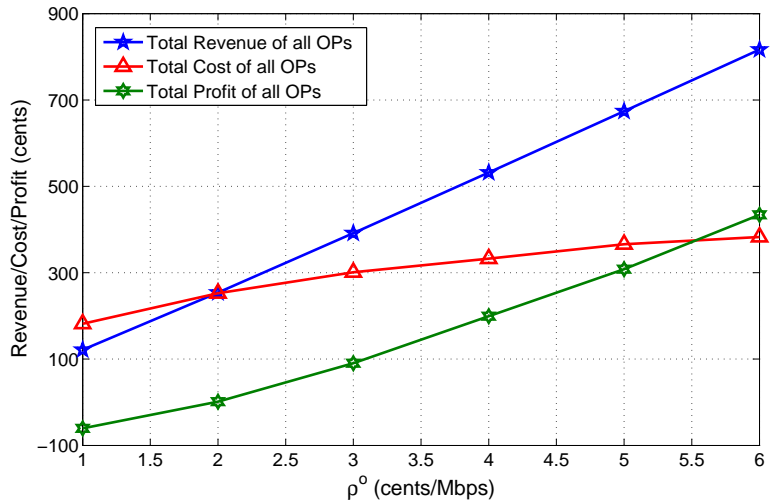}
\end{center}
\caption{OPs' revenue/cost/profit vs service price $\rho^o$ for users where $\upsilon^1=\upsilon^2=\upsilon^3=1$.}
\label{Fig06}
\end{figure}

\begin{figure}[!t]
\begin{center}
\includegraphics[width=0.7 \textwidth]{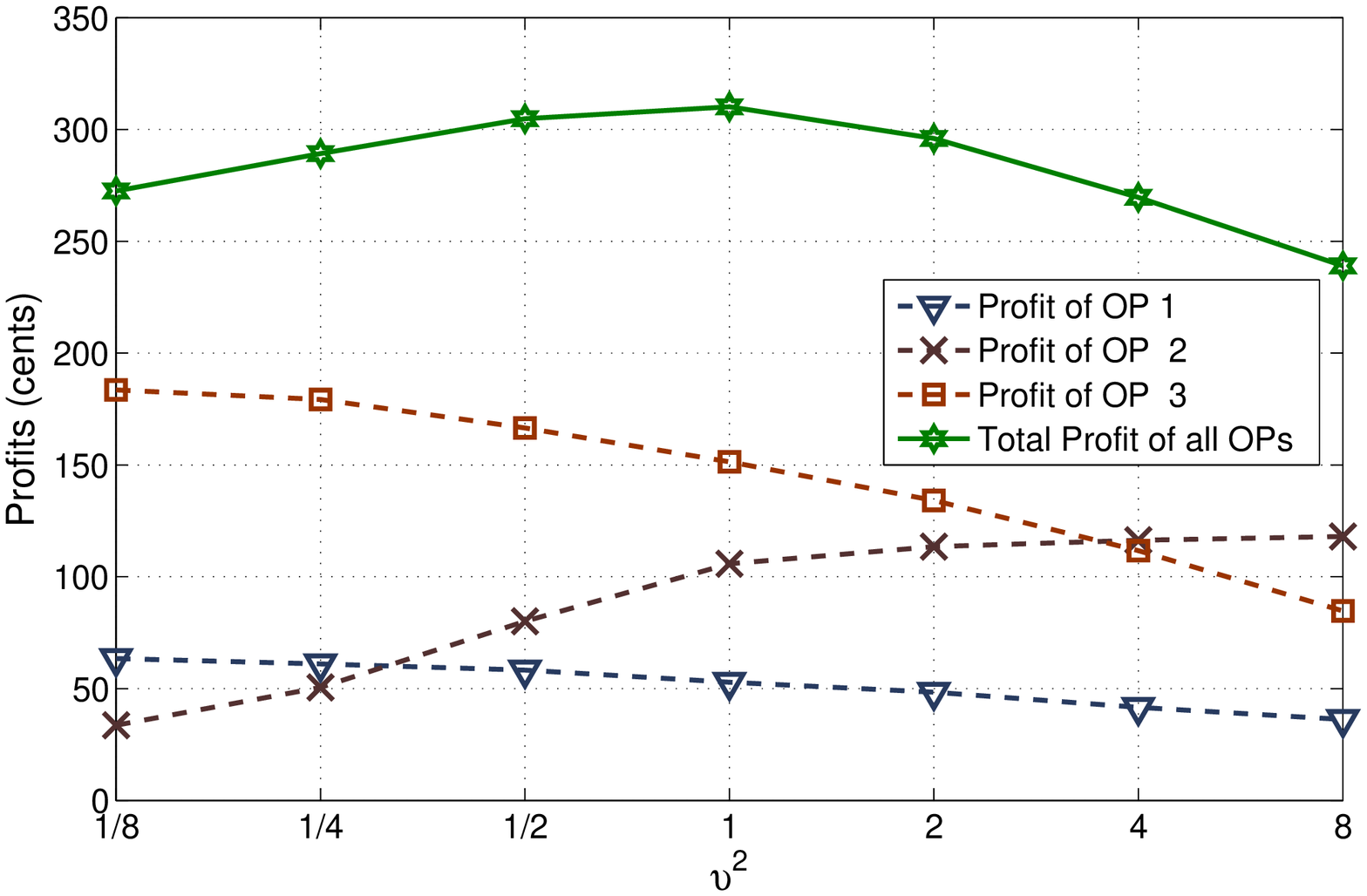}
\end{center}
\caption{OPs' profits vs weighting parameter $\upsilon^2$ for $\upsilon^1=\upsilon^3=1$.}
\label{Fig07}
\end{figure}

In Fig.~\ref{Fig06} and \ref{Fig07}, we study the profits achieved by the OPs by setting $\upsilon^{\sf{InP}}=0$. 
By setting $\upsilon^o=1$ for all $o \in \mathcal{O}$, the upper-level problem becomes the profit maximization problem for
 all OPs whose results are illustrated in Fig.~\ref{Fig06}.
In this figure, the OPs' cost (payment from all OPs to the InP), the OPs' revenue (payment of all users to the OPs), and OPs' profit 
(revenue minus cost) are shown versus $\rho^o$ while $\phi^o$ and $\beta^o$ are set equal to one for all $o \in \mathcal{O}$.
As can be seen, the OPs can attain higher profit as the price per data rate unit $\rho^o$ increases. This is because the OPs' cost tends to
saturate at sufficiently high $\rho^o$ while the revenue scales linearly with the service price.

In Fig.~\ref{Fig07}, we study the impact of the weighting parameters on the OPs' profits.
Specifically, we fix two weighting parameters as $\upsilon^1 = \upsilon^3 = 1$ while
varying the value of $\upsilon^2$ to obtain the curves in this figure.
As expected, the profit of OP $2$ increases while those of remaining OPs decrease with increasing $\upsilon^2$.
In addition, the total profit of all OPs is also presented and this figure indicates that the
maximum profit can be achieved when $\upsilon^1=\upsilon^2=\upsilon^3=1$.

\begin{figure}[!t]
\begin{center}
\includegraphics[width=0.7 \textwidth]{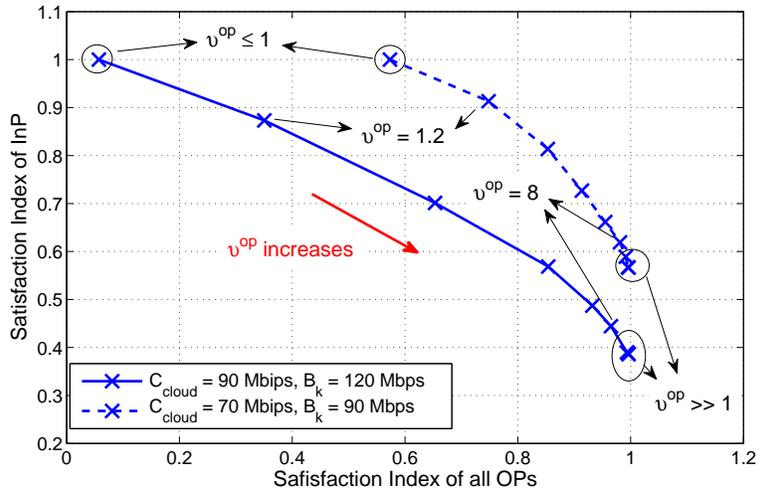}
\end{center}
\caption{Tradeoff between satisfaction indexes of InP and OPs.}
\label{Fig08}
\end{figure}

We now study a satisfaction index for InP and OPs which is defined as the ratio between the achieved profit and the maximum potential profit.
In Fig.~\ref{Fig08}, we plot the satisfaction index  of the InP versus that of all OPs
for $\rho^o=5$ (\textcent/$Mbps$) and $\phi^o=\beta^o=1$ (\textcent/$Mbips$ and \textcent/$Mbps$) for all OPs.
The maximum InP's profit can be determined as $\phi^o \bar{C}_{\sf{cloud}}+\beta^o\sum_{k \in \mathcal{K}} \bar{B}_k$ (\textcent) while the maximum 
profit of all OPs can be obtained by the same method used to obtain results in Fig.~\ref{Fig06}.
 Fig.~\ref{Fig08} illustrates trade-off between the satisfaction indices of the InP and all OPs for
 $\upsilon^{\sf{InP}}=1$ and $\upsilon^1=\upsilon^2=\upsilon^3=\upsilon^{\sf{op}}$ 
as we vary the value of $\upsilon^{\sf{op}}$. 
This figure suggests that as $\upsilon^{\sf{op}} \leq 1$ the OPs utilize all network resources,
which leads to the minimum OPs' satisfaction index. 
The presented tradeoff results indicate that one can determine an operating point of the two-level 
design framework where both the InP and OPs find it satisfactory.

\section{Conclusion}
\label{Ch8_secVII}
We have proposed a novel algorithmic framework for uplink wireless virtualization of the C-RAN supporting multiple OPs 
via joint rate and quantization bit allocation for users served by each OP.
This design aims to maximize the weighted sum profits of the InP and OPs considering practical constraints on the fronthaul capacity and cloud 
computation limits. 
Numerical results have illustrated that our proposed algorithms outperform the greedy resource allocation algorithm and
achieve the sum rate very close to the sum rate upper-bound obtained by solving relaxed problems. We have also studied
the impacts of various parameters on the system sum-rate and relevant performance tradeoffs.


\section{Appendices}
\label{Ch8_secVIII}

\subsection{Proof of Proposition~\ref{R8_prt1}}
\label{prf_R8_prt1}

If $b^{(s)}_{k}$ is selected so that $q^{(s)}_{k}(b^{(s)}_{k}) \leq \sqrt{Y^{(s)}_{k} I^{(s)}_{k}}$, we have,
\beq
\gamma^{(s)}_{k}(b^{(s)}_{k}) = \dfrac{D_k^{(s)}}{I_k^{(s)}+ q^{(s)}_{k}(b^{(s)}_{k})} \geq \dfrac{D_k^{(s)}}{I_k^{(s)}+ \sqrt{Y^{(s)}_{k} I^{(s)}_{k}}}. 
\eeq
Substituting $Y^{(s)}_{k} = D^{(s)}_{k}+I_k^{(s)}$ into this equality, we attain 
\beq
\gamma^{(s)}_{k}(b^{(s)}_{k}) \geq \dfrac{D^{(s)}_{k}/I_k^{(s)}}{1+ \sqrt{D^{(s)}_{k}/I_k^{(s)} +1 }} = \dfrac{\bar{\gamma}^{(s)}_{k}}{\sqrt{\bar{\gamma}^{(s)}_{k}+1}+1} = \sqrt{\bar{\gamma}^{(s)}_{k}+1}-1=\underline{\gamma}^{(s)}_{k}. 
\eeq
In addition, it can be verified that 
\begin{itemize}
\item When $\bar{\gamma}^{(s)}_{k} \gg 1$, we have $\sqrt{\bar{\gamma}^{(s)}_{k}+1} {\simeq} \sqrt{\bar{\gamma}^{(s)}_{k}} \gg 1$. Thus, we have $\underline{\gamma}^{(s)}_{k} {\simeq} \sqrt{\bar{\gamma}^{(s)}_{k}}$.
\item When $\bar{\gamma}^{(s)}_{k} \ll 1$, we have $\sqrt{\bar{\gamma}^{(s)}_{k}+1} {\simeq} 1+\bar{\gamma}^{(s)}_{k}/2$. Hence, it can be 
implied that $\underline{\gamma}^{(s)}_{k} {\simeq} \bar{\gamma}^{(s)}_{k}/2$.
\end{itemize}
This concludes the proof for Proposition~\ref{R8_prt1}.

\subsection{Proof of Theorem~\ref{R8_thr1}}
\label{prf_R8_thr1}

To prove that $\chi^{(s)}_{k}(r,b)$ is a convex function with respect to variables $(r,b)$, we will show that the Hessian matrix of $\chi^{(s)}_{k}(r,b)$ is 
positive definite.
For simplicity, we omit the superscripts and subscripts in all notations, i.e., $\chi(r,b)$, $D$, $I$ and $Y$ stand for $\chi^{(s)}_{k}(r,b)$, $D^{(s)}_{k}$, $I^{(s)}_{k}$ and $Y^{(s)}_{k}$, respectively.
First, we derive the Hessian matrix of $\chi(r,b)$. Let $\mb{H}=[H_{11} H_{12}, H_{21} H_{22}]$ be the Hessian matrix of $\chi(r,b)$, its elements can be
written as
\beqn
H_{11} \!\!\!\! &=& \!\!\!\! \dfrac{\partial^2\chi(r,b)}{\partial r^2}= \! \dfrac{2Ar}{(\ln 2) \left( Z -r\right)^2  },\\
H_{12} \!\!\!\! &=& \!\!\!\! \dfrac{\partial^2\chi(r,b)}{\partial r \partial b} = \! - \dfrac{2\sqrt{3}\pi ADY r}{(\ln2)2^{2b} X(X+D) (Z-r)^2}, \\
H_{21} \!\!\!\! &=& \!\!\!\! \dfrac{\partial^2\chi(r,b)}{\partial b \partial r} = \! - \dfrac{2\sqrt{3}\pi ADY r}{(\ln2)2^{2b} X(X+D) (Z-r)^2}, \\
H_{22} \!\!\!\! &=& \!\!\!\! \dfrac{\partial^2\chi(r,b)}{\partial b^2}  = \! \dfrac{6 \pi^2 A D^2 Y^2 r}{(\ln 2)2^{4b}X^2(X+D)^2 (Z-r)^2} \!\!\!\! + \!\!\!\!\!\! \dfrac{2\sqrt{3} \pi ADY \! r \! \left[ \! 2^{2b+1}\! X \!( \! X \!\! + \!\! D \! ) \! - \!\! \sqrt{3} \pi Y\! (2X \!\! + \!\! D) \! \right]  }{2^{4b}X^2(X+D)^2 (Z-r)}, \label{eq:H22}
\eeqn
where $X = I + \frac{\sqrt{3} \pi Y}{2^{2 b +1}}$ and $Z = \log_2 \left( 1 +\dfrac{D}{I + \frac{\sqrt{3} \pi Y}{2^{2 b +1}}} \right)$. 
Let $q=\frac{\sqrt{3} \pi Y}{2^{2 b +1}}$, then we have $X=I+q$, $X+D=Y+q$, and $\sqrt{3} \pi Y = 2^{2 b +1} q$.
Substituting these results into (\ref{eq:H22}) yields
\beq
H_{22}  =  \dfrac{6 \pi^2 A D^2 Y^2 r}{(\ln 2)2^{4b}X^2(X+D)^2 (Z-r)^2} + \dfrac{4\sqrt{3} \pi ADYr\left(IY - q^2 \right)  }{2^{2b}X^2(X+D)^2 (Z-r)}.
\eeq
Then, if $IY \geq q^2$, we will have $H_{11}, H_{22} > 0$, and 
\beq
\det\left| 	\mb{H} \right| = H_{11}H_{22}-H_{12}H_{21}  =  \dfrac{8\sqrt{3} \pi A^2DYr^2\left(IY - q^2 \right)  }{\ln(2)2^{2b}X^2(X+D)^2 (Z-r)^3} \geq 0. 
\eeq
Hence, the Hessian matrix of $\chi(r,b)$ is positive definite. Thus, we can conclude that $\chi(r,b)$ is jointly convex
with respect to $(r, b)$ if $IY \geq q^2$. 

\subsection{Proof of Theorem~\ref{R8_thr2}}
\label{prf_R8_thr2}

We will prove this theorem by using the definition of a concave function, i.e., $f\left(\phi x_1 +(1-\phi) x_2 \right) \geq \phi f(x_1) + (1-\phi) f(x_2)$ for all $0 \leq \phi \leq 1$.
Let us consider two possible values of the involved variables, $(C^o_1,\mb{B}^o_1)$ and $(C^o_2,\mb{B}^o_2)$.
We assume that there exists the optimum solutions for the lower-level problems (\ref{obj_1}) corresponding to these two values of variables. 
Moreover, the optimal sum rates for these cases are denoted as $\bar{R}(C^o_1,\mb{B}^o_1)$ and $\bar{R}(C^o_2,\mb{B}^o_2)$, respectively
with $\lbrace r_{k,1}^{(s)}, b_{k,1}^{(s)} \rbrace$ and $\lbrace r_{k,2}^{(s)}, b_{k,2}^{(s)} \rbrace$ being the optimal rates and number of
 quantization bits, respectively.
Then, $\lbrace r_{k,i}^{(s)}, b_{k,i}^{(s)} \rbrace$ must satisfy all the constraints of (\ref{obj_1}) corresponding 
to $(C^o_i,\mb{B}^o_i)$ for $i=1 \text{ or } 2$.
For any value of $\phi$ such that $0 \leq \phi \leq 1$, we define $\lbrace r_{k,3}^{(s)}, b_{k,3}^{(s)} \rbrace$ as
\beqn
r_{k,3}^{(s)} = \phi r_{k,1}^{(s)} + (1-\phi) r_{k,2}^{(s)}, \forall (k,s) \in \mathcal{K} \times \mathcal{S}^o_k, \\
b_{k,3}^{(s)} = \phi b_{k,1}^{(s)} + (1-\phi) b_{k,2}^{(s)}, \forall (k,s) \in \mathcal{K} \times \mathcal{S}^o_k.
\eeqn
Since all constraint functions of problem (\ref{obj_1}) are convex, it is easy to see that $\lbrace r_{k,3}^{(s)}, b_{k,3}^{(s)} \rbrace$ satisfy all the constraints of (\ref{obj_1}) corresponding to $\left(C^o_3 ,\mb{B}^o_3\right)$, where $C^o_3 = \phi C^o_1 + (1-\phi)C^o_2 $ and $\mb{B}^o_3 = \phi \mb{B}^o_1 + (1-\phi) \mb{B}^o_2$.
Therefore, $\lbrace r_{k,3}^{(s)}, b_{k,3}^{(s)} \rbrace$ is a feasible solution of problem (\ref{obj_1}) corresponding to $(C^o_3,\mb{B}^o_3)$.
Consequently, we have
\beqn
\bar{R}(C^o_3,\mb{B}^o_3) & \geq & \sum_{k \in \mathcal{K}} \sum_{s \in \mathcal{S}^o_k} r^{(s)}_{k,3} \nonumber \\
{} & = & \sum_{k \in \mathcal{K}} \sum_{s \in \mathcal{S}^o_k} \left( \phi r_{k,1}^{(s)} + (1-\phi) r_{k,2}^{(s)} \right) \nonumber \\
{} & = & \phi \bar{R}(C^o_1,\mb{B}^o_1) + (1-\phi) \bar{R}(C^o_2,\mb{B}^o_2),
\eeqn
for any value of $\phi$ such that $0 \leq \phi \leq 1$.
Hence, $\bar{R}(C^o,\mb{B}^o)$ must be a concave function with respect to $(C^o,\mb{B}^o)$ and we
 have completed the proof for Theorem~\ref{R8_thr2}.

\subsection{Proof of Proposition~\ref{R8_prt3}}
\label{prf_R8_prt3}

It can be verified that the second derivative of the objective function of (\ref{obj_5}) is the same as that of $w(r^{(s)}_{k})$ which 
can be expressed as
\beq
\dfrac{\partial^2 w(r^{(s)}_{k})}{\partial r^{(s)2}_{k} } = -\dfrac{2\lambda A r^{(s)}_{k}+ 4\lambda A\left(t(b^{(s)}_{k}) -r^{(s)}_{k} \right)}{(\ln 2)\left(t(b^{(s)}_{k}) -r^{(s)}_{k} \right)^2 },
\eeq
which is less than zero if $t(b^{(s)}_{k}) \geq r^{(s)}_{k}$.
Hence, this function is concave. 
Therefore, the optimum rate $r^{(s)}_{k}$ can be obtained by studying the KKT conditions. Taking the first derivative of
the objective function and setting it to zero results in
\beq
\frac{\partial w(r)}{\partial r} = - E^{(s)}_{k}.
\eeq
Using the constraint (\ref{obj_5c}), the optimal solution to problem (\ref{obj_5}) can be written as
\beq \label{eq:r_opt_prf}
r^{(s)\star}_{k} \!\! = \max \! \left[ \! R_{\sf{min}},  \min \! \left( \!\! t(b^{(s)}_{k}),R_{\sf{max}}, r\vert_{\frac{\partial w(r)}{\partial r} = - E^{(s)}_{k}} \! \right) \! \right].
\eeq
Therefore, we have completed the proof of Proposition~\ref{R8_prt3}.

\subsection{Proof of Proposition~\ref{R8_prt4}}
\label{prf_R8_prt4}

The Lagrangian of problem (\ref{obj_6}) can be expressed as
\beq
\mathcal{L}\left(\mb{b}^o_k,\mu \right) =  \sum_{s \in \mathcal{S}^o_k} z\left( b^{(s)}_k \right) - \mu\left( \sum_{s \in \mathcal{S}^o_k} b_{k}^{(s)} - B^o_k/(2N_{\sf{RE}}) \right), 
\eeq
where $\mu$ is the Lagrangian multiplier associated with the fronthaul capacity constraint of problem (\ref{obj_6}).
In addition, the dual function of problem (\ref{obj_6}) can be written as
\beq
g(\mu) = \max \limits_{\mb{b}^o_k} \mathcal{L}\left(\mb{b}^o_k,\mu \right) \text{ s. t. }  b^{(s)}_{k} \geq J^{(s)}_{k}, \forall s \in \mathcal{S}^o_k,
\eeq
where $J^{(s)}_{k} = \max \left(\lceil \underline{b}^{(s)}_{k} \rceil, t^{-1}\left( r^{(s)}_{k}\right)\right)$.
This problem can be decoupled into $S$ parallel sub-problems each of which corresponds to one PRB.
In addition, all these sub-problems have the same structure. Since its objective function is concave, each sub-problem can be solved by 
using the KKT condition ${\partial \mathcal{L}\left(\mb{b}^o_k,\mu \right)}/{\partial b^{(s)}_k} =0,$ which is equivalent to
\beq
{\partial z(b^{(s)}_k)}/{\partial b^{(s)}_k} = \mu.
\eeq
Using the constraint (\ref{obj_6c}), the optimal solution of $b^{(s)}_k$ must satisfy (\ref{eq:b_opt}). In addition, the objective function is 
an increasing function with respect to $\mb{b}^o_k$; hence, the fronthaul capacity constraint (\ref{obj_6d}) must be met with equality. Therefore, 
$\mu$ can be determined to satisfy $\sum_{s \in \mathcal{S}^o_k} b^{(s)}_k = B_k/(2N_{\sf{RE}})$. Therefore, we have completed the proof of Proposition~\ref{R8_prt4}.
\chapter{Conclusions and Further Works} 
\renewcommand{\rightmark}{Chapter 9. Conclusions and Future Research Directions}
\label{Ch9}

In this chapter, we summarize our research contributions and discuss some potential directions for further research.

\section{Major Research Contributions}

In the first contribution \cite{VuHa_TVT14_BSA_PC,VuHa_VTC12}, we have studied the design of joint base station association and power control for single-carrier-based HetNets, presented in Chapter \ref{Ch3}. 
In particular, we have developed a generalized BSA and PC algorithm and proved its convergence if the underlying power update function satisfies the so-called two-sided-scalable property. 
In addition, we have proposed an hybrid power control adaptation algorithm that effectively adjusts key design parameters to maintain the SINR requirements of all users 
whenever possible while enhancing the system throughput. This proposed algorithm has been proved to perform better than other state-of-the-art design in the literature.
Finally, we have also described the application of the proposed framework to the design of the two-tier macrocell-femtocell wireless network.

In the second contribution \cite{VuHa_TVT14_PC_SA,VuHa_WCNC13}, we have investigated the fair resource allocation problem for OFDMA-based HetNets, presented in Chapter \ref{Ch4}.
Specifically, we have presented a resource allocation formulation for the two-tier macrocell-femtocell network that aims to maximize the total minimum rate of all femtocells subject to QoS protection constraints for macrocell users. We have proposed both exhaustive optimal search method as well as
 low-complexity distributed joint subchannel and power allocation algorithm to solve the problem. The low-complexity algorithm  
 estimates transmission power on each subchannel based on which FBSs can make subchannel allocations for FUEs in the distributed manner.
The proposed algorithm has been proved to converge. We have also described how to extend the design into various settings 
including downlink scenario, adaptive multiple-rate transmission,
and open access strategy for femtocells.

In the third contribution \cite{VuHa_TVT16,VuHa_WCNC14,vuha_ciss_2014,vuha_globecom_2014,VuHa_WCNC15}, we have considered the joint cooperative 
transmission design for the downlink communications that
aims to minimize the total power consumption of all RRHs subject to various QoS and system constraints in Chapter \ref{Ch5}. 
The proposed formulations capture the fact that fronthaul links connecting RRHs with BBU pool have limited
capacity, which is translated into the new fronthaul capacity constraint involving a non-convex and discontinuous function. 
To deal with this difficult problem, we have proposed two low-complexity algorithms.
The first one, so-called pricing-based algorithm, has solved the underlying problem through iteratively tackling a related pricing
problem while appropriately updating the pricing parameter. 
In the second one, namely iterative linear-relaxed algorithm, we have directly addressed the fronthaul constraint function by iteratively approximating 
it in a suitable linear form using a conjugate function and solving the corresponding convex problem.

In the fourth contribution \cite{VuHa_sTWC16,VuHa_ICC16}, 
we have considered the resource allocation for virtualized uplink C-RAN in Chapter \ref{Ch8}.
In this work, multiple OPs are assumed to share the C-RAN infrastructure and resources to serve their users. 
We have developed a novel slicing strategy for OPs to maximize the weighted sum profit of both infrastructure provider and OPs under 
the limited fronthaul capacity and cloud computation. The design requires to solve the upper-level and lower-level problems. 
The upper-level problem focuses on slicing the fronthaul capacity and cloud computing resources for all OPs while
the lower-level problems maximize the OPs' sum rates by optimizing users' transmission rates and quantization bit allocation 
for the compressed I/Q baseband signals. A two-stage algorithmic framework has been proposed to solve these problems. 
In the first stage, we have relaxed the underlying discrete variables 
to the continuous variables to transform both problems into
the corresponding convex optimization problems which can be solved optimally.
In the next stage, we propose two methods to round the solution obtained by solving the relaxed problems
to achieve a final feasible solution for the original problem.

\section{Further Research Directions}

Our research work in this dissertation focuses on the resource allocation for wireless HetNets and C-RANs for performance enhancements
 of the future wireless cellular networks, i.e., enhancing the networks throughput and reducing the network transmission power.
The following research direction are of importance and deserve further investigation.

\subsection{CSI Processing and Feedback Strategy for HetNets and C-RANs}

In general, the performance of the wireless network depends on the quality of CSI available at 
the transmitter and receiver sides. 
Moreover, estimation and exchanges of CSI and user information over the wireless channel (e.g., CSI feedback) in HetNets
or fronthaul links consume non-trivial network resources. 
Therefore, there is tradeoff between the achievable network performance and network resources utilized for estimation, exchanges of CSI and user information. 
We plan to consider the  CSI design and data processing issues for both downlink and uplink communications in HetNets and C-RANs. 
Specifically, we seek to develop efficient quantization strategies for CSI and baseband signals, and study the underlying performance tradeoffs.

\subsection{Heterogeneous C-RANs with Multiple Local Clouds}

The wireless HetNet architecture can bring many performance benefits in enhancing the network coverage and improving networks performance. 
On another hand, the advantages of C-RANs come from the centralized processing and joint transmission/reception designs.
Both solutions can, therefore, improve the network capacity significantly and reduce the deployment and operation costs.
However, interference management in HetNets may require non-negligible network resources for signalling and information exchanges. 
Additionally, centralized processing for many cell sites in a large network deployment area in C-RAN may require significant computation resources and 
high deployment cost of the fronthaul transport network. 
Therefore, appropriate design of the heterogeneous C-RANs to leverage benefits of both HetNets and C-RANs can be a good option,
which will be studied in our future works.

\section{List of Publications}
\subsection{Journals} 
\renewcommand{\theenumi}{[J\arabic{enumi}]}
\begin{etaremune}
\item Vu N. Ha and Long B. Le, ``Resource allocation for wireless virtualization of OFDMA-based cloud radio access networks,'' submitted to {\em IEEE Transactions on Vehicular Technology}.
\item Vu N. Ha, Long B. Le, and Ng\d{o}c-D\~{u}ng \DH\`{a}o, ``Coordinated multipoint transmission design for Cloud-RANs considering limited fronthaul capacity constraints,'' {\em IEEE Transactions on Vehicular Technology}, to appear.
\item Tri Nguyen, Vu N. Ha, and Long B. Le, ``Resource allocation optimization in multi-user multi-cell massive MIMO networks considering pilot contamination,'' {\em IEEE Access} 3, 2015.
\item Vu N. Ha and Long B. Le, ``Fair resource allocation for OFDMA femtocell networks with macrocell protection,'' {\em IEEE Transactions on Vehicular Technology,} vol. 63, no. 3, Mar. 2014.
\item Vu N. Ha and Long B. Le, ``Distributed base station association and power control for heterogeneous cellular networks,'' {\em IEEE Transactions on Vehicular Technology,} vol. 63, no. 1, Jan. 2014.
\end{etaremune}

\subsection{Conferences}
\renewcommand{\theenumi}{[C\arabic{enumi}]}
\begin{etaremune}
\item Vu N. Ha and Long B. Le, ``Resource allocation for uplink OFDMA C-RANs with limited computation and fronthaul capacity,'' in Proceeding of {\em IEEE ICC 2016}, May. 2016.
\item Vu N. Ha and Long B. Le, ``Computation capacity constrained joint transmission design for C-RANs,'' in Proceeding of {\em IEEE WCNC 2016}, Apr. 2016.
\item Vu N. Ha, Duy H. N. Nguyen, and Long B. Le, ``Sparse precoding designs for Cloud-RANs sum-rate maximization,'' in Proceeding of {\em IEEE WCNC 2015}, Mar. 2015.
\item Vu N. Ha and Long B. Le, ``Joint coordinated beamforming and admission control for fronthaul constrained Cloud-RANs,'' in Proceeding of {\em IEEE GLOBECOM 2014}, Dec. 2014. (in Top 50 Best Paper of Globecom'2014)
\item Vu N. Ha, Long B. Le, and Ng\d{o}c-D\~{u}ng \DH\`{a}o, ``Energy-efficient coordinated transmission for Cloud-RANs: Algorithm design and tradeoff,'' in Proceeding of {\em IEEE CISS 2014}, Mar. 2014.
\item Vu N. Ha, Long B. Le, and Ng\d{o}c-D\~{u}ng \DH\`{a}o, ``Cooperative transmission in Cloud-RAN considering fronthaul capacity and cloud processing constraints,'' in Proceeding of {\em IEEE WCNC 2014}, Apr. 2014.
\item Vu N. Ha and Long B. Le, ``Resource management for two-tier femtocell networks using interference alignment,'' in Proceeding of {\em Workshop - Heterogeneous and Small Cell Networks (HetSNets), IEEE GLOBECOM 2013}, Dec. 2013. 
\item Vu N. Ha and Long B. Le, ``Distributed resource allocation for OFDMA femtocell networks with macrocell protection,'' in Proceeding of {\em IEEE WCNC 2013}, Apr. 2013. 
\item Vu N. Ha and Long B. Le, ``Hybrid access design for femtocell networks with dynamic user association and power control,'' in Proceeding of {\em IEEE VTC2012-Fall}, Sept. 2012.
\end{etaremune}


\singlespacing
\cleardoublepage   
\phantomsection     
\renewcommand{\bibname}{Références} 
\addcontentsline{toc}{chapter}{\bibname} 
\bibliographystyle{ieeetr}
\renewcommand{\rightmark}{References}
\bibliography{biblio/Ha_Ref} 


\end{document}